\documentclass[11pt]{article}
\usepackage[T1]{fontenc}
\usepackage{lmodern,amsmath,amssymb,amsthm,mathtools,microtype,booktabs,enumitem}
\usepackage[margin=1in]{geometry}
\usepackage[colorlinks=true,linkcolor=blue,citecolor=blue,urlcolor=blue]{hyperref}
\newtheorem{theorem}{Theorem}
\newtheorem{lemma}{Lemma}
\newtheorem{proposition}{Proposition}
\newtheorem{corollary}{Corollary}
\theoremstyle{definition}

\theoremstyle{remark}

\newcommand{\Pclass}{\mathsf P}
\newcommand{\NP}{\mathsf{NP}}
\newcommand{\PP}{\mathsf{PP}}
\newcommand{\BPP}{\mathsf{BPP}}
\newcommand{\MA}{\mathsf{MA}}
\newcommand{\PH}{\mathsf{PH}}
\newcommand{\CH}{\mathsf{CH}}
\newcommand{\FP}{\mathsf{FP}}
\newcommand{\GapP}{\mathsf{GapP}}
\newcommand{\SharpP}{\mathsf{\#P}}
\newcommand{\PosSLP}{\mathsf{PosSLP}}
\newcommand{\ETR}{\exists\mathbb R}
\newcommand{\Clevel}[1]{\mathsf C_{#1}\mathsf P}
\newcommand{\TC}{\mathsf{TC}^{0}}
\newcommand{\GCR}{\mathsf{GCR}}
\newcommand{\SRS}{\mathsf{SRS}}
\newcommand{\R}{\mathbb R}
\newcommand{\C}{\mathbb C}
\newcommand{\Z}{\mathbb Z}
\newcommand{\Q}{\mathbb Q}
\newcommand{\F}{\mathbb F}
\newcommand{\poly}{\operatorname{poly}}
\DeclareMathOperator{\sgn}{sgn}
\DeclareMathOperator{\Res}{Res}
\DeclareMathOperator{\Norm}{Norm}
\DeclareMathOperator{\Tr}{Tr}
\DeclareMathOperator{\dist}{dist}
\title{$\exists \mathbb{R} \subseteq \textsf{CH}$}
\author{Alex Meiburg}
\date{29 September 2026}
\begin{document}
\maketitle
\small{\textcolor{red}{The content in this paper is written essentially entirely by ChatGPT, and was completed after a series of conversations ending on September 29th 2026. A group of researchers has been working to digest the proof, and while the most essential arguments appear correct, we are endeavoring to give this result the treatment it deserves and a proper exposition and development to benefit of the community.\\
\indent However, on October 6th, OpenAI released a very similar proof that the $\exists \mathbb{R} \subseteq \textsf{CH}$. Their proof is dated as a work done on October 4th, thus, roughly a week after our co-developing of this manuscript with ChatGPT was completed. The headline result is the same: the existential theory of the reals, indeed with any constant level of alternating existential-universal quantifiers, is contained in the Counting Hierarchy. This manuscript places $\exists \mathbb{R}$ at a significantly lower level of the hierarchy, specifically $\BPP^{\Clevel3}\subseteq\Clevel4$, while OpenAI's result places it under the weaker bound $\exists \mathbb{R} \subseteq \Clevel {26}$. This work also includes new results about the difficulty of $\PosSLP$, the Square-Root-Sum problem, SDPs, and others, that are absent from their version.\\
\indent While we work to improve our exposition of this proof, the current version has been uploaded as a service to the community to compare the different proof techniques.\\
\indent Alex Meiburg is listing himself as an author on this version by the standard that every paper needs an author. While he takes responsibility that the proofs appear to be correct, he has not played a nontrivial role in developing them, and believes the human value will be in good exposition and canonicalization of the results.}}

\begin{abstract}
The existential theory of the reals asks whether polynomial constraints
with integer coefficients have a real solution. We give a proof placing
this problem in the counting hierarchy. The first argument is intended
to expose the essential steps: a coercive minimum reduces feasibility to
a finite critical algebra; residue formulas give its multiplication
traces; and a short collection of derivative signs selects a real root.
Exact formal power series and univariate resultants complete the
calculation. This proof permits singular critical fibers and does not
require a multivariate resultant algorithm.

A separate analysis lowers the bound to
\(\ETR\subseteq\BPP^{\Clevel3}\subseteq\Clevel4\).
It uses a common formal homotopy, ordered labels for real roots, and
randomized integer-sign reconstruction. For each fixed \(w\), sentences
with \(w\) alternating real quantifier blocks lie in
\(\Clevel{9w+17}\). We then treat exact semidefinite feasibility,
\(\PosSLP\), square-root sum, geometric real counting, Euler
characteristic, and complex feasibility in separate applications.
The corresponding bounds include \(\BPP^{\Clevel2}\) for general SDP,
\(\BPP^{\PP}\cap\Pclass^{\NP^{\PP}}\) for \(\PosSLP\) and
square-root sum, and \(\FP^{\Clevel4}\) for total geometric real
counting. Longer arithmetic proofs, conditional certificate routes,
and limitations of other approaches are collected in the supplementary
sections.
\end{abstract}

\tableofcontents

\section{Introduction}
The existential theory of the reals, or ETR, is the decision problem
for finite systems of polynomial equations and inequalities over
\(\R\). Its input is a finite binary string; its witnesses are real
tuples. The associated class \(\ETR\) consists of the languages
polynomial-time many-one reducible to this problem. It captures many
geometric existence problems, including the art gallery problem
\cite{AAM18}. The usual general bounds are
\[
 \NP\subseteq\ETR\subseteq\mathsf{PSPACE}
 \qquad\text{\cite{Canny88}}.
\]
An obstacle to a smaller upper bound is that a solution can require
very large coordinates and high algebraic degree. A bound on the size
of a coordinate is not a short description of that coordinate.

The counting hierarchy is defined by
\[
 \Clevel0=\Pclass,\qquad
 \Clevel{h+1}=\PP^{\Clevel h},\qquad
 \CH=\bigcup_{h\ge0}\Clevel h.
\]
It can process some exponentially large arithmetic objects without
writing them down. Allender, B\"urgisser, Kjeldgaard-Pedersen, and
Miltersen used this principle for integer straight-line programs
\cite{ABKPM09}. Andrews, Garg, and Schost recently placed Hilbert's
Nullstellensatz, the corresponding complex feasibility problem, in
\(\CH\) \cite{AGS26}. Over the reals there is another task: identify
the solutions that are real and satisfy the required order conditions.
The sign of a carefully chosen norm will carry this information.

The paper proceeds in four stages. Section~\ref{sec:headline} proves
\(\ETR\subseteq\CH\) without optimizing the level. It gives the
algebra and the real-root selection argument explicitly, starting from
standard integer arithmetic and the known bit-access theorem for
straight-line programs. Section~\ref{sec:small-depth} changes the
implementation and the selector to obtain the fourth-level bound.
Section~\ref{sec:real-hierarchy} treats fixed real quantifier blocks,
including an explicit linear level bound. Section~\ref{sec:applications}
gives the other complexity consequences, with the input model and
promise stated for each problem.

We distinguish a randomized polynomial-time machine with a \(\PP\)
oracle, denoted \(\BPP^{\PP}\), from an outer probability operator
such as \(\mathsf{BP}\cdot\PP\). We also do not identify
\(\PH^{\PP}\) with \(\Pclass^{\PP}\). Oracle order matters in the
arguments below. Function problems, such as exact point counting, are
stated in function classes rather than decision classes.

\section{The basic containment: from critical points to a real-root test}
\label{sec:headline}

\begin{theorem}\label{thm:etr}
The existential theory of the reals belongs to the counting hierarchy:
\(\ETR\subseteq\CH\).
\end{theorem}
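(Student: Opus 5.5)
We follow the route sketched in the abstract. \emph{First}, we reduce an arbitrary ETR instance, in polynomial time, to deciding whether a single polynomial \(f\in\Z[x_1,\dots,x_n]\) of degree \(d\) and bit size \(\tau\) has a real zero. The reduction uses the standard sum-of-squares encoding of equations and the usual slack-variable encoding of strict inequalities. We then make the problem coercive. Using the singly exponential bound \(2^{\tau d^{O(n)}}\) on the norm of some point in every nonempty semialgebraic set, we form
\[
g=f^2+\varepsilon\,\bigl(1+\|x\|^2\bigr)^{D}.
\]
Here \(\varepsilon\) is a formal infinitesimal, or a symbolic rational of exponential bit size encoded implicitly. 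The polynomial \(g\) attains a global minimum. Feasibility of \(f=0\) is then equivalent to the minimum value being infinitesimal, which is the same as \(\lim_{\varepsilon\to0}\min g=0\). The minimum is achieved at a real critical point, a solution of \(\nabla g=0\). After a generic linear change of coordinates and a perturbation of the gradient system (a common formal homotopy toward a system with \(D^n\) simple roots), the critical locus becomes a zero-dimensional algebra \(A=\Q(\varepsilon)[x]/(\partial_i g)\) of dimension at most \((2d)^n\). This is exponential in the input size but has a polynomial-size description.

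\emph{Second}, we never write down \(A\). Instead we show that individual entries of the traces \(\Tr_A(m\cdot h)\), for monomials or straight-line-program-described elements \(m,h\), are computable in \(\GapP\) or low in \(\CH\). The tool is the multivariate residue (Euler--Jacobi) formula. For a system \(q_i=x_i^{D}+\text{lower}\), the trace equals the coefficient extraction
\[
\operatorname{Res}\frac{h\,\det J}{q_1\cdots q_n},
\]
which expands as a geometric series in \(x_i^{-D}\). Each coefficient is an exponentially long but uniformly structured sum of products of input coefficients. Such a sum is a \(\GapP\)-type quantity, or a quantity in \(\CH\) via the bit-access theorem for integer straight-line programs of \cite{ABKPM09}. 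The formal parameters \(\varepsilon\) and the homotopy parameter are handled by exact truncated power series: the relevant leading coefficients occur at exponentially bounded order, so truncation is again a polynomial-size description.

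\emph{Third}, we select the real roots. The number of real points where \(g\) takes an infinitesimal value is the signature of the Hermite quadratic form \((a,b)\mapsto\Tr(ab\,\cdot\,u)\), for a suitable \(u\). To avoid computing the signature of an exponential matrix directly, we use a univariate reduction. We take a generic separating linear form \(\ell\) and its characteristic polynomial \(\chi_\ell(t)\), whose coefficients come from the power sums \(\Tr(\ell^k)\) via Newton identities. These are computable in \(\CH\) since exponentially long iterated arithmetic is in \(\CH\) by \cite{ABKPM09}. The critical value is eliminated by a univariate resultant \(\Res_t(\chi_\ell, \ldots)\) in the value variable. Real-root counting by Sturm or Hermite sequences would be exponentially long. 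Instead, we use a short collection of derivative signs (Thom encoding) of the minimal critical value to decide whether the minimum tends to \(0\). This requires only polynomially many sign tests of integers given by exponential-size arithmetic circuits, which are \(\PosSLP\)-type questions, and \(\PosSLP\subseteq\CH\).

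The main obstacle is the third step: certifying the sign of an exponentially long, formally perturbed quantity using only polynomially many counting queries. The difficulty is that singular critical fibers may make the leading \(\varepsilon\)-coefficient vanish to high order. We would try first to bound that order by \(D^{O(n)}\), so that the truncated series suffice. A fixed constant depth of oracle nesting then gives \(\ETR\subseteq\CH\).
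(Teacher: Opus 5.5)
\textbf{Step three has no valid mechanism.} Two claims in it do not hold. First, neither BitSLP nor $\PosSLP\subseteq\CH$ covers ``exponentially long iterated arithmetic'' or ``integers given by exponential-size arithmetic circuits.'' Those results concern polynomial-size SLPs. Deciding the sign of the output of a succinctly described exponential-size circuit is $\mathsf{EXP}$-hard in general, since arithmetic on $0/1$ values simulates succinct Boolean circuits, so no fixed $\CH$ level follows from that description. What does stay in a fixed level is a \emph{fixed number} of uniform constant-depth threshold layers over exponentially many entries (Lemma~\ref{basic:padding}). Newton's identities are a sequential recurrence with exponentially many dependent stages, so your route to $\chi_\ell$ is unjustified. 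The paper replaces it by the truncated formal exponential $S(z)\equiv Q\det(I-zM)\pmod{z^{N+1}}$ (Lemma~\ref{basic:trace-reconstruct}), and gets univariate resultants from traces in the monic case (Corollary~\ref{basic:univariate-resultant}).

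Second, the values $\sgn q^{(j)}(\alpha)$ at an unknown real root $\alpha$ of exponential degree are not integers, so ``polynomially many sign tests'' has nothing concrete to test. The missing idea is the parity-of-norm selector of Lemma~\ref{basic:real-select}:
\begin{itemize}
\item Guess $O(\log D)$ derivative-sign conditions; Thom's lemma plus halving guarantees that some such list isolates exactly one qualifying root.
\item Turn them, via a rational sign approximation with positive denominator, into a polynomial $w$ that is negative exactly at matching real roots and positive at every other real root. This is legitimate because nonzero test values at roots lie in $[2^{-K},2^K]$ by a resultant bound.
\item Read the sign of the lowest nonzero coefficient of $\Res_X(b(X),w(X)+u)$. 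It equals $(-1)^{\#\text{matches}}$, because nonreal roots pair with their conjugates.
\end{itemize}
Soundness needs only odd parity, completeness comes from the isolating list, and the guess costs one $\NP$ level.

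\textbf{Earlier steps also fail as stated.}
\begin{itemize}
\item The equivalence ``feasible iff $\lim_{\varepsilon\to0}\min g=0$'' is false when $f^2$ has infimum $0$ at infinity. The polynomial $f=x^2+(xy-1)^2$ has no real zero, yet evaluating at $(1/t,t)$ with $t=\varepsilon^{-1/(4D)}$ gives $\min g\le\varepsilon^{1/D}+2^D\varepsilon^{1/2}$. You must first confine the problem to the sampling ball, as the paper does with a sphere equation of radius $2(n+1)B$.
\item $\nabla(1+\|x\|^2)^D$ has leading forms $2D\varepsilon x_i\|x\|^{2D-2}$, which vanish simultaneously on the complex isotropic cone at infinity. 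No linear change of coordinates removes this, so the pure-power residue formula does not apply. Your second step otherwise matches Lemma~\ref{basic:residue}. ``Perturbing the gradient system'' changes the algebra and needs a further limit argument that you do not give.
\item Critical values alone do not detect realness, since a nonreal conjugate pair can share a real value. You never say how a real root of $\chi_\ell$ is paired with the objective value at its critical point.
\end{itemize}

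\textbf{How the paper avoids these issues, with no formal parameter.} Lemma~\ref{basic:gap} gives an explicit $\Delta$ with a short circuit. Then $G=F+\frac1{6M}\sum_ix_i^6$ has integral critical equations $x_i^5+M\partial_iF=0$ with pure leading powers, and one only needs \emph{some} real critical point with $h=1-\Delta F>0$, not the minimum. The value of $h$ is read at roots of $q=\det(XI-m_\ell)$ through $r=q\Tr(m_h(XI-m_\ell)^{-1})$. A real root of $q$ comes from a real point because $\ell$ separates points. Repeated points are handled by the strata $b_j=q^{(j-1)}$ using $r^{(j-1)}(\alpha)=h(z)q^{(j)}(\alpha)$. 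So the high-order $\varepsilon$-vanishing you were worried about never arises.
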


There are two distinct issues in the proof. We first replace an arbitrary
real solution set by a finite algebra whose real points answer the
original question. We then compute with that algebra through traces and
univariate polynomials. No step assumes a short rational witness for
the original instance.

\subsection{What it means to compute an exponential object}
Write \(L\) for the original input length. Throughout this section,
degrees, dimensions, and integer bit lengths are bounded by
\(2^{\poly(L)}\). Their indices therefore have polynomially many bits.
An integer array is \emph{CH-accessible} if the sign and a specified bit
of a specified entry can be decided uniformly in one fixed level of
\(\CH\). The level is fixed for the construction; it does not grow
with an index or the number of variables.

We use two standard computational facts. First, a specified bit of the
integer output of a polynomial-size arithmetic straight-line program
is in \(\CH\) \cite{ABKPM09}. We call this the BitSLP theorem.
Second, division and iterated integer multiplication have suitably
uniform \(\TC\) circuits \cite{HAB02}; addition, comparison, and
parallel selection do as well. Here \(\TC\) means polynomial-size,
constant-depth Boolean circuits with unbounded majority gates.

\begin{lemma}\label{basic:padding}
A fixed-depth, uniformly generated threshold computation on
exponentially many exponentially long integers preserves
CH-accessibility. This applies to a fixed number of compositions of
the preceding integer operations and to dense polynomial addition,
multiplication, differentiation, and bounded composition, provided
their degrees and coefficient bit bounds remain singly exponential.
\end{lemma}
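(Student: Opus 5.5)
The approach is the standard ``padding'' or ``scaling-up'' argument: a uniform $\TC$ circuit family, applied to an exponentially large input that is itself accessible in a fixed level of $\CH$, becomes a polynomial-time computation relative to that level, repeated once per layer. Fix the input length $L$. An exponentially large array of $2^{\poly(L)}$ integers, each of $2^{\poly(L)}$ bits, is a Boolean string of length $N=2^{\poly(L)}$. Any position in it is named by a $\poly(L)$-bit index. A $\TC$ circuit of size $\poly(N)$ and depth $d$ then has gates indexed by $\poly(L)$-bit strings. By DLOGTIME- (or $\mathsf{FO}$-) uniformity, its wiring relation (``is $g$ an input to $h$, and of what type is $h$'') is decidable in time $\poly(\log N)=\poly(L)$.

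The key step is an induction on depth. Suppose every input bit is decidable in $\Clevel k$. I claim every gate at depth $j$ is decidable in $\Clevel{k+j}$.
\begin{itemize}
\item A majority gate $h$ accepts iff more than half of its (exponentially many) predecessors $g$ evaluate to $1$. This is a $\PP$ computation: guess $g$, check wiring in polynomial time, and query the $\Clevel{k+j-1}$ oracle for the value of $g$. Non-predecessors are balanced by padding paths so that they contribute nothing to the count.
\item AND, OR and NOT gates are simulated by majority gates in the same way, or absorbed since $\PP$ is closed under complement.
\end{itemize}
Hence the output bits lie in $\Clevel{k+d}$. Composing a fixed number $r$ of such operations simply stacks depths, giving level $k+d_1+\cdots+d_r$. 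This level is fixed because $r$ and the $d_i$ are constants independent of $L$.

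It remains to show that each listed operation is a uniform $\TC$ computation with singly exponential parameters.
\begin{itemize}
\item Integer addition, comparison, selection, iterated multiplication and division are covered by \cite{HAB02}. The sign is read off a designated bit of a sign-magnitude encoding.
\item Dense polynomial addition is coefficientwise addition.
\item Multiplication uses coefficients $\sum_{i}a_ib_{n-i}$, which is iterated addition of products, hence $\TC$.
\item Differentiation multiplies each coefficient by its index, with index bit length $\poly(L)$.
\item Bounded composition $p(q(x))$, with a fixed nesting depth, reduces to computing the powers $q^i$ and summing. Powers of a polynomial are obtained by iterated multiplication, via Kronecker substitution into one large integer: evaluate at $2^{M}$ with $M$ exceeding the coefficient bit bound. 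Iterated integer multiplication is then applied, and the coefficients are unpacked by bit selection.
\end{itemize}
The coefficient bounds stay singly exponential because degrees multiply only a bounded number of times.

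The main obstacle I expect is uniformity at the scaled-up size. The results of \cite{HAB02} give DLOGTIME-uniform circuits, so the wiring predicate at size $2^{\poly(L)}$ runs in $\poly(L)$ time. I would need to verify this explicitly for the Kronecker packing/unpacking and for the signed (two's-complement versus sign-magnitude) encodings. A smaller technical point is that the oracle queries in the induction must be to a single level. This is handled by noting $\Pclass^{\Clevel m}$ queries are absorbed into $\PP^{\Clevel m}$ when the base machine is polynomial time.
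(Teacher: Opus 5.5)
Your proposal is correct and is essentially the paper's argument. The paper likewise gets one additional $\PP$ level per threshold layer by letting a counting machine guess a polynomial-length wire address and query the level below, treats uniformity at the scaled-up size as essential, and uses Kronecker packing with iterated integer multiplication for the dense polynomial operations.

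The two details you leave open are settled in the paper as follows. First, the packing base is taken larger than twice the coefficient magnitude bound, and parallel balanced-digit extraction recovers the signed coefficients. Plain bit selection, as you wrote it, fails once coefficients can be negative. Second, derivatives use the falling-factorial weights $i!/(i-j)!$, an iterated product, so a $j$-th derivative of exponential order (needed later for $q^{(j)}$ with $j\le N$) costs constant depth rather than $j$ iterations of your first-derivative step.
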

\begin{proof}
Every gate and wire has a polynomial-length address. At a majority
gate, a polynomial-time counting machine guesses an input-wire address
and queries the predicate for that wire. Thus one threshold layer
costs at most one further \(\PP\)-oracle level. There are only
constantly many layers.

For polynomial arithmetic, choose an integer base larger than twice
a bound on every coefficient magnitude and evaluate at that base.
Iterated integer multiplication computes a product of polynomials;
parallel balanced-digit extraction recovers its signed coefficients.
The packed integer has polynomial length in the expanded degree and
coefficient bounds. Derivatives use factorial-ratio weights, and
composition uses parallel powers and sums. Compatible uniformity is
essential: merely having an exponential-size circuit would not prove
the assertion.
\end{proof}

We will also use the following direct consequence of BitSLP. A specified
coefficient of a polynomial-size arithmetic circuit has CH-accessible
bits. Split the circuit into a difference of two circuits with
nonnegative coefficients. For each part, Kronecker substitution
\(x_i\mapsto B^{(D+1)^{i-1}}\), with sufficiently large power-of-two
\(B\), encodes coefficients as nonoverlapping blocks. The substituted
integers and exponents have short arithmetic-circuit descriptions.
BitSLP reads the requested block, and Lemma~\ref{basic:padding}
handles the signed subtraction. An SLP with \(s\) gates has degree
at most \(2^s\) and coefficient bit bound \(2^{\poly(s,L)}\), so
the required addresses remain short.

\subsection{Replace feasibility by a minimum in a finite algebra}
Introduce evaluation variables and the usual witnesses for sign
conditions to reduce ETR to integer quadratic equations
\(f_1(x)=\cdots=f_m(x)=0\). Standard effective real-algebraic
sampling bounds \cite{BPR06} supply
\(B=2^{2^{\poly(L)}}\) such that a feasible system has a solution
in \([-B,B]^n\). Only this bound is used; no sample is computed.
The chosen power of two has a polynomial-size repeated-squaring circuit.

Add one variable and a sphere equation of radius \(R=2(n+1)B\).
Relabel the augmented number of variables as \(n\), and put
\[
 F(x)=\sum_i f_i(x)^2+(\|x\|^2-R^2)^2.
\]
The original instance is feasible exactly when \(F\) has a real zero.
Moreover \(F\ge0\), \(\deg F\le4\), and \(F\) is coercive:
it tends to infinity with \(\|x\|\). Thus its minimum is attained.

\begin{lemma}\label{basic:gap}
One can choose a positive integer \(\Delta\), with polynomial-size
arithmetic circuit and singly exponential bit length, such that
\[
 F^{-1}(0)=\varnothing\quad\Longrightarrow\quad
 F(x)\ge 8/\Delta\quad(x\in\R^n).
\]
\end{lemma}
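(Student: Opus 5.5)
Since \(F\) is coercive, its infimum is a minimum, attained at some point of a compact sublevel set. So it suffices to bound from below the positive minimum value of a nonnegative integer polynomial that has no real zero. The plan is to apply a standard quantitative result from effective real algebraic geometry. Then I check that the resulting bound has the shape \(8/\Delta\), with \(\Delta\) a power of two computed by repeated squaring.

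\textbf{Step 1: confine the minimizers.} If \(\|x\|^2\ge 2R^2\), then \(F(x)\ge(\|x\|^2-R^2)^2\ge R^4\ge 1\). Also \(F(0)\) is an explicit integer of singly exponential bit length. Hence every minimizer lies in the ball \(\|x\|^2\le C\) for an explicit \(C=2^{2^{\poly(L)}}\). The value \(R^4\) already has the right form, so we may restrict attention to the compact set \(K=\{\|x\|^2\le C\}\).

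\textbf{Step 2: the minimum value is algebraic of controlled height.} The minimum value \(\mu\) of \(F\) on \(K\) is the smallest real \(t\) for which the first-order formula \(\exists x\,(\|x\|^2\le C\wedge F(x)=t)\) holds. Quantifier elimination in the form of \cite{BPR06} applies to this one-free-variable formula in \(n\) bound variables. The degree is at most \(4\), and the coefficient bit length is \(\tau=2^{\poly(L)}\) because \(R\) and \(C\) are doubly exponential. It produces a quantifier-free formula in \(t\) whose polynomials have degree \(d=2^{O(n)}\) and bit length \(\tau\cdot 2^{O(n)}\). The endpoint \(\mu\) is therefore a root of a nonzero integer univariate polynomial \(P(t)\) with degree \(2^{\poly(L)}\) and coefficient bit length \(2^{\poly(L)}\).

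\textbf{Step 3: Cauchy's root separation from zero.} If \(\mu>0\), divide \(P\) by the largest power of \(t\) it contains. The cofactor has a nonzero constant term \(a_0\) and still vanishes at \(\mu\). Cauchy's bound applied to the reciprocal polynomial gives \(\mu\ge 1/(1+\max_i|a_i|/|a_0|)\ge 1/(1+H)\), where \(H\le 2^{2^{p(L)}}\) for an explicit polynomial \(p\). Set \(\Delta=2^{3+2^{p(L)+1}}\). This choice gives both \(8/\Delta\le 1/(1+H)\) and \(8/\Delta\le R^4\).

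\textbf{Step 4: computing \(\Delta\).} The number \(\Delta\) has a repeated-squaring circuit with \(p(L)+O(1)\) squarings. Its bit length is \(2^{\poly(L)}\), which is singly exponential, as the statement requires.

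The main obstacle is Step 2. We need an explicit, citable bound on the degree and height of the eliminating polynomial, with \(n\) appearing only in the exponent of a single exponential and \(\tau\) appearing only linearly. Concretely, this requires the BPR bound for the minimum of a polynomial over a basic closed semialgebraic set, in the form "the minimum is a root of a polynomial of degree \(d^{O(n)}\) and bit size \(\tau d^{O(n)}\)."

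If the citation needs a bounded set, Step 1 supplies it. As a fallback, one can instead use a known lower bound for the minimum of a positive polynomial on a ball (Jeronimo–Perrucci style). That bound is \(2^{-\tau d^{O(n)}}\), which again gives the same doubly exponential form for \(\Delta\).
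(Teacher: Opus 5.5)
Your proof is correct, but it takes a different route from the paper's. You treat $\min F$ as the left endpoint of the one-block set $\{t:\exists x\,F(x)=t\}$. You then import the $d^{O(n)}$ degree and $\tau d^{O(n)}$ bitsize quantifier-elimination bounds of \cite{BPR06}, or a Jeronimo--Perrucci-type minimum bound, and finish with the reciprocal Cauchy bound. The paper uses exactly these two devices elsewhere: in Lemma~\ref{lem:minimum-gap} via \cite{JP10}, and in the infinity test of Section~\ref{app:count-details}.

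For this lemma, however, the paper deliberately avoids any elimination theorem. It perturbs to $F_\eta=F+\eta\sum_i x_i^6$, whose critical equations divided by $6\eta$ have pure leading powers $x_i^5$. Hence the minimum $m_\eta$ of $F_\eta$ is an eigenvalue of multiplication by $F_\eta$ on the $5^n$-dimensional quotient. Clearing the factor $(6\eta)^{2n+3}$ gives an integer polynomial $D(T,\eta)$ with an explicit determinant height bound and $D(m_\eta,\eta)=0$. Dividing out the largest power of $\eta$ and letting $\eta\to0$, using $m_\eta\to\min F$, gives a nonzero integer polynomial $D_0$ vanishing at $\min F$.

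The paper's route is self-contained. It uses only the pure-monic Gr\"obner structure and Cauchy bound that already drive Section~\ref{sec:headline}, and the $\eta\to0$ limit handles a singular or positive-dimensional minimizing set without genericity. Your route is shorter but rests entirely on the exact shape of the cited bound. In particular it needs $\tau$ to enter only linearly, since here $\tau$ is already singly exponential.

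Two small points. First, for the quantifier-elimination version your Step~1 is unnecessary, because coercivity already makes $\min F$ the left endpoint of $F(\R^n)$; the remark about $F(0)$ also does no work, since $F(0)\ge R^4$. Second, you should say why that endpoint is a root of a nonzero defining polynomial: a quantifier-free condition in $t$ has locally constant truth value wherever none of its nonzero polynomials vanishes.
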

\begin{proof}
Here is a height argument that does not call an elimination algorithm.
Consider \(F_\eta=F+\eta\sum_i x_i^6\). Its critical equations,
after division by \(6\eta\), have leading monomials \(x_i^5\).
Their quotient over \(\Q(\eta)\) has the fixed basis
\(x^a\), \(0\le a_i<5\), of size \(N=5^n\).

Let \(U_\eta\) be multiplication by \(F_\eta\) in this basis.
A starting monomial has degree at most \(4n+6\); each reduction
lowers degree by at least two. Thus, for \(t=2n+3\), the matrix
\(C(\eta)=(6\eta)^tU_\eta\) is integral polynomial in \(\eta\).
If \(2^\tau\) bounds \(1+n+\|F\|_1+\sum_i\|\partial_iF\|_1\),
its entry coefficient norms are at most
\(H_0=(6\cdot2^\tau)^{t+1}\). Consequently
\[
 D(T,\eta)=\det((6\eta)^tTI-C(\eta))
\]
has coefficient norm at most \((2NH_0)^N\). Its coefficient bit
bound \(u\) is singly exponential and has a polynomial-length
binary description.

Let \(m_\eta=\min F_\eta\) and \(m_0=\min F\). Evaluation at a
minimizer of \(F\) shows \(m_\eta\to m_0\). A minimizer of
\(F_\eta\) is critical, so \(D(m_\eta,\eta)=0\). Remove the
largest common power of \(\eta\) from \(D\) and specialize at
zero. This gives a nonzero integer polynomial \(D_0(T)\), of the
same height bound, with \(D_0(m_0)=0\). If \(m_0>0\), remove a
possible factor \(T^a\) and apply the reciprocal Cauchy root bound:
\(m_0\ge(1+2^u)^{-1}\ge2^{-u-1}\). Taking
\(\Delta=2^{u+4}\) proves the claim. The determinant established a
height bound; the algorithm never computes this matrix determinant.
\end{proof}

Set
\[
 M=16\Delta n(R^6+1),\qquad
 G=F+\frac1{6M}\sum_i x_i^6,\qquad h=1-\Delta F.
\]
If \(F\) has a zero, evaluating \(G\) there gives
\(\min G<1/(96\Delta)\). At a minimizer of \(G\), its nonnegative
sextic term gives \(F\le G<1/\Delta\). The NO-case gap excludes
this inequality everywhere. Hence feasibility is equivalent to
\begin{equation}\label{basic:finite-question}
 \exists x\in\R^n:\quad
 x_i^5+M\partial_iF(x)=0\ (1\le i\le n),\qquad h(x)>0.
\end{equation}
These critical equations are integral and have the form
\(x_i^5-g_i(x)\), with \(\deg g_i\le3\). Their relatively prime
leading monomials give a monic Gr\"obner basis, so the quotient
\[
 A=\Q[x_1,\ldots,x_n]/(x_i^5-g_i:1\le i\le n)
\]
has dimension \(N=5^n\) and basis \(x^a\), \(0\le a_i<5\).
It may have repeated points. We retain those multiplicities rather
than assume that a perturbation has made them disappear.

\subsection{Separate the points without finding them}
The same basis is free over \(\Z\), so every coordinate of a complex
point is an algebraic integer of degree at most \(N\). Choose
\(U\ge1+\max_i\|g_i\|_1\). Every complex coordinate has modulus
at most \(U\): if \(V=\max_i|x_i|>1\), the equation for a maximizing
coordinate gives \(V^5\le\|g_i\|_1V^3\).

For two points and a nonzero coordinate difference \(\beta\), the
number \(\beta\) is an algebraic integer of degree at most \(N^2\),
and all its conjugates have modulus at most \(2U\). Its nonzero
integer norm therefore implies
\[
 |\beta|\ge(2U)^{-(N^2-1)}.
\]
Choose a power of two \(T>n(2U)^{N^2}\), and set
\[
 \ell(x)=x_1+Tx_2+\cdots+T^{n-1}x_n.
\]
For two different points, the highest differing coordinate dominates
all the earlier terms. Thus \(\ell\) separates every pair of distinct
complex points. Although \(T\) is very large, \(\log T\) is singly
exponential and its binary address is polynomial, so \(T\) has a
short powering circuit. It is a coefficient, not an iteration count.

Let \(m_a\) denote multiplication by \(a\) on \(A\), and define
\begin{equation}\label{basic:qr}
 q(X)=\det(XI-m_\ell),\qquad
 r(X)=q(X)\Tr\bigl(m_h(XI-m_\ell)^{-1}\bigr).
\end{equation}
These are integer polynomials, of degrees \(N\) and at most \(N-1\).
For example, the second is \(\Tr(m_h\operatorname{adj}(XI-m_\ell))\).
This establishes integrality, but we will compute their coefficients
from traces rather than from the displayed matrices.

\subsection{Residues give traces of multiplication}
The following formula is the computational advantage of the pure
leading powers. Write \(J=\det(\partial f_i/\partial x_j)\) for the
Jacobian of \(f_i=x_i^d-g_i\), where \(\deg g_i<d\).

\begin{lemma}\label{basic:residue}
For \(\deg a\le e\), in the quotient by these equations,
\[
 \Tr(m_a)=
 \left[\prod_i x_i^{d(e+1)-1}\right]
 aJ\prod_i S_e(g_i,x_i^d),\qquad
 S_e(u,v)=\sum_{k=0}^{e}u^kv^{e-k}.
\]
If \(a,g_i\) have short circuits and \(e\) has polynomially many
binary digits, the expression whose coefficient is taken has a
polynomial-size circuit.
\end{lemma}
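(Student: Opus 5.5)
The plan is to identify \(\Tr(m_a)\) with a global (Grothendieck) residue, expand that residue as a Laurent series at infinity, and then truncate the series using a degree count. For generic coefficients of the \(g_i\), the system \(f_i=x_i^d-g_i=0\) has exactly \(d^n\) simple complex roots, because the leading forms \(x_i^d\) have no common zero at infinity. In that case \(\Tr(m_a)=\sum_p a(p)\). By the Euler--Jacobi theorem this sum equals the total residue of the form \(aJ\,dx/\prod_i f_i\), and the total residue can be read off at infinity. Concretely, expand each factor as
\[
 \frac1{f_i}=\sum_{k\ge0}g_i^k x_i^{-d(k+1)}
\]
in the field of Laurent series in \(x_1^{-1},\ldots,x_n^{-1}\). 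The total residue is then the coefficient of \(\prod_i x_i^{-1}\) in \(aJ\prod_i f_i^{-1}\).

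The degree count goes as follows. Both sides of the final identity are polynomials in the coefficients of the \(g_i\). The left side is a trace of a matrix whose entries are polynomial in those coefficients, because the monomial basis \(x^b\), \(0\le b_i<d\), is free. So the identity extends from the generic case to all \(g_i\), including the case of repeated points, which is exactly what the paper needs. For the truncation, take the term indexed by \((k_1,\ldots,k_n)\). It has total degree at most
\[
 e+n(d-1)+\sum_i k_i(d-1)-d\sum_i(k_i+1)=e-n-\sum_i k_i .
\]
Every monomial of \(aJ\prod_i g_i^{k_i}\) has nonnegative exponents, and \(x^{-d(k+1)}\) shifts coordinate \(i\) by exactly \(-d(k_i+1)\). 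So the target monomial of degree \(-n\) can only arise when \(\sum_i k_i\le e\), and in particular only when every \(k_i\le e\). We may therefore truncate each series at \(k=e\). Multiplying by \(\prod_i x_i^{d(e+1)}\) turns the truncated factor into \(\sum_{k\le e}g_i^kx_i^{d(e-k)}=S_e(g_i,x_i^d)\), and the target coefficient becomes that of \(\prod_i x_i^{d(e+1)-1}\). This gives the displayed formula.

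For the circuit claim, \(J\) is a determinant of an \(n\times n\) matrix of short circuits, so a division-free determinant algorithm such as Berkowitz's gives it a polynomial-size circuit. For \(S_e\), I would use binary recursions driven by the polynomially many bits of \(e\): \(S_{2m+1}(u,v)=S_m(u,v)\,(u^{m+1}+v^{m+1})\) and \(S_{m+1}=uS_m+v^{m+1}\). The powers \(u^{m+1}\) and \(v^{m+1}\) come from repeated squaring. The total size is then polynomial in the bit length of \(e\), \(n\), and the sizes of the circuits for \(a\) and the \(g_i\).

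I expect the main obstacle to be the rigorous link between trace and residue. One route is to cite Euler--Jacobi, or Scheja--Storch duality for the monic ("pure power") Gröbner basis, and argue genericity carefully. The simpler route I would try first is to verify the coefficient-extraction formula directly on the basis \(x^b\) in the generic simple-root case, then extend it by polynomial identity in the coefficients.
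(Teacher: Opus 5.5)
Your proposal is correct and is essentially the paper's proof: trace as global residue at simple fibers, extension to nonreduced fibers by polynomiality in the coefficients, geometric expansion of \(1/f_i\) with the total-degree count forcing \(\sum_i k_i\le e\), and a binary-digit recursion for \(S_e\) equivalent to the paper's \(Q_{2m}=(u^m+v^m)Q_m\), \(Q_{2m+1}=vQ_{2m}+u^{2m}\), with \(S_e=Q_{e+1}\).

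There are two small imprecisions. First, the expansion should be taken in the completion along total degree (equivalently, as a convergent Laurent series on a large torus \(|x_j|=R\)), not in Laurent series in the \(x_i^{-1}\), because \(g_i^k\) carries unbounded positive powers of the other variables. Second, generic simplicity of the \(d^n\) roots follows from a specialization such as \(g_i=1\) together with Zariski-openness, not from the absence of zeros at infinity alone.
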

\begin{proof}
The global residue trace formula extracts the coefficient of
\(x_1^{-1}\cdots x_n^{-1}\) in
\[
 \frac{aJ}{\prod_i f_i}
 =aJ\prod_i x_i^{-d}
       \sum_{k_i\ge0}(g_ix_i^{-d})^{k_i}.
\]
Comparison of total degrees shows that a contributing term has
\(\sum_i k_i\le e\). Clearing the Laurent powers yields the
displayed finite coefficient. For simple fibers the local Jacobian
cancels the residue denominator, giving the sum of \(a\) over points.
Both sides are polynomial in the coefficients in this fixed monomial
basis, so specialization proves the identity for nonreduced fibers too.

The long geometric sum has a short division-free circuit. If
\(Q_m(u,v)=\sum_{k=0}^{m-1}u^kv^{m-1-k}\), then
\[
 Q_{2m}=(u^m+v^m)Q_m,\qquad
 Q_{2m+1}=vQ_{2m}+u^{2m},\qquad Q_1=1.
\]
Track \(u^m,v^m,Q_m\) along the binary digits of \(m\). This uses
\(O(\log m)\) gates, and \(S_e=Q_{e+1}\). The Jacobian determinant
has polynomial dimension and a division-free polynomial-size circuit.
\end{proof}

Apply the lemma to \(a=\ell^j\) and \(a=h\ell^j\), for \(j\le N\).
Binary powering gives short circuits for these polynomials. The
coefficient-bit observation above therefore puts all the integers
\[
 p_j=\Tr(m_\ell^j)\ (1\le j\le N),\qquad
 t_j=\Tr(m_{h\ell^j})\ (0\le j<N)
\]
in one CH-accessible family. No multiplication matrix is constructed.

\subsection{An exact, parallel replacement for Newton identities}
Sequentially recovering characteristic coefficients from traces would
take \(N\) dependent stages. A formal exponential avoids that depth.

\begin{lemma}\label{basic:trace-reconstruct}
From the first \(N\) power traces of an integral \(N\)-dimensional
matrix, its characteristic coefficients are computable in uniform
\(\TC\), measured in the expanded trace bit bounds and \(N\).
Consequently CH-accessible power traces give CH-accessible
characteristic coefficients.
\end{lemma}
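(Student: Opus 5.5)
The plan is to use the generating-function form of Newton's identities. If \(p_j=\Tr(m^j)\) and \(q(X)=\det(XI-m)=\sum_k c_kX^{N-k}\), then the reversed polynomial \(\tilde q(z)=\sum_k c_kz^k=\det(I-zm)\) satisfies
\[
 \tilde q(z)=\exp\Bigl(-\sum_{j\ge1}\frac{p_j}{j}z^j\Bigr)
\]
as formal power series. We only need this modulo \(z^{N+1}\), so only \(p_1,\dots,p_N\) enter. The exponential is a truncated sum
\(\sum_{k=0}^{N}(-P(z))^k/k!\) with \(P(z)=\sum_{j\le N}p_jz^j/j\). Every term is computed independently of the others, so there is no chain of \(N\) dependent stages.

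To stay over the integers, clear denominators first. Put \(L_N=\operatorname{lcm}(1,\dots,N)\), or more simply \(N!\), and form the integer series
\(\hat P(z)=\sum_{j\le N}(N!/j)p_jz^j\). The \(k\)-th term then equals \((-\hat P)^k/(k!\,(N!)^k)\). Multiply everything by the common denominator \(E=N!\,(N!)^N\). The integer polynomial
\[
 E\,\tilde q(z)\equiv\sum_{k=0}^{N}\frac{E}{k!\,(N!)^k}\,(-\hat P(z))^k\pmod{z^{N+1}}
\]
has integer weights \(E/(k!(N!)^k)\). The steps in order are:
\begin{enumerate}
\item Compute the weights and \(\hat P\). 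These are iterated products and divisions, which are in uniform \(\TC\) by \cite{HAB02}.
\item Compute the \(N+1\) powers \(\hat P^k\bmod z^{N+1}\) in parallel. Each is an iterated polynomial product, done by Kronecker packing into one integer and iterated integer multiplication, exactly as in Lemma~\ref{basic:padding}. Balanced-digit extraction recovers the signed coefficients.
\item Take the weighted sum, which is an iterated addition.
\item Divide each coefficient exactly by \(E\). This is exact because the \(c_k\) are integers, since \(m\) is integral.
\end{enumerate}
All bit lengths stay polynomial in \(N\) and the trace bit bound \(b\): the coefficients of \(\hat P^k\) have size \(N^{O(N)}2^{O(Nb)}\).

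The second sentence of the lemma then follows from Lemma~\ref{basic:padding}. The circuit just described is a fixed-depth, uniformly generated threshold circuit whose inputs are the CH-accessible integers \(p_j\). Each output bit is therefore decided one constant number of \(\PP\) levels above the level of the traces. Note that \(N=5^n\) is exponential. This is fine because gate addresses have \(\poly(n,\log b)\) bits, which matches the regime Lemma~\ref{basic:padding} is stated for.

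The main obstacle is verifying uniformity and constant depth of the power computation \(\hat P^k\) for all \(k\le N\) simultaneously. Computing \(\hat P^k\) by repeated squaring would cost \(\log N\) depth, which is not constant. My first attempt is to write \(\hat P^k\) as a single iterated product of \(k\) copies of one packed integer. By \cite{HAB02} this is a constant-depth operation, with a padding base chosen once, uniformly, for all \(k\): a power of two exceeding twice the bound on the coefficients of \(\hat P^N\). If that raised concerns, the fallback is to evaluate at a packing point, which turns polynomial powering into integer powering \(y^k\). Integer powering is itself in uniform \(\TC\) through iterated multiplication.
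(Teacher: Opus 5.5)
Your proposal is correct and follows the paper's proof essentially verbatim. The paper uses the same truncated formal exponential with $A_0=N!$ and the same integer weights $(N!)^{N+1-k}/k!$, since your $E$ is the paper's $Q=(N!)^{N+1}$; it then computes parallel powers by packing, with the base chosen for the full degree-$N^2$ polynomial, and divides exactly by $Q$. One small correction, which does not affect the argument: $\hat P$ has the coefficient $N!\,p_1$, so the coefficients of $\hat P^N$ have size $N^{O(N^2)}2^{O(Nb)}$ rather than $N^{O(N)}2^{O(Nb)}$ (the paper states $O(NH+N^2\log(N+1))$ bits), which is still polynomial in $N$ and $b$.
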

\begin{proof}
Put
\[
 A_0=N!,\qquad Q=A_0^{N+1},\qquad
 P(z)=-\sum_{j=1}^N\frac{A_0}{j}p_jz^j,
 \qquad
 S(z)=\sum_{k=0}^N\frac{Q}{A_0^kk!}P(z)^k.
\]
Every weight is an integer. The formal identity
\[
 \det(I-zM)=\exp\left(-\sum_{j\ge1}\Tr(M^j)z^j/j\right)
\]
gives \(S(z)\equiv Q\det(I-zM)\pmod {z^{N+1}}\).
The polynomial \(S\) has degree at most \(N^2\). If a trace has at
most \(H\) bits, its coefficients have
\(O(NH+N^2\log(N+1))\) bits. Parallel powers, factorials, sums,
packing, and exact division by \(Q\) therefore give the claimed
fixed-depth computation. The packing bound is chosen for all of
\(S\), including its large factor \(Q\) and degrees above \(N\).
No analytic exponential is evaluated.
\end{proof}

Write \(q(X)=\sum_{i=0}^Nc_iX^{N-i}\), with \(c_0=1\).
Expanding the resolvent at infinity gives
\begin{equation}\label{basic:r-convolution}
 r(X)=\sum_{k=0}^{N-1}
       \left(\sum_{i=0}^k c_it_{k-i}\right)X^{N-1-k}.
\end{equation}
This is another fixed-depth coefficient operation, so both polynomials
in \eqref{basic:qr} are CH-accessible.

\begin{corollary}\label{basic:univariate-resultant}
The exact resultant of two dense integer univariate polynomials is
computable by uniform \(\TC\) circuits in their dense bit length.
For singly exponential CH-accessible arrays, its bits are in a fixed
level of \(\CH\).
\end{corollary}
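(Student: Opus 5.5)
The plan is to reduce the resultant to traces and then reuse Lemma~\ref{basic:trace-reconstruct}, so that no sequential elimination and no large determinant appears. Let \(P(X)=\sum_{i=0}^{p}a_iX^{p-i}\) and \(Q(X)=\sum_{j=0}^{q}b_jX^{q-j}\), with \(a_0\neq0\); the degenerate leading coefficients are handled at the end. The identity used is
\[
 \Res(P,Q)=a_0^{\,q}\prod_{P(\alpha)=0}Q(\alpha)=a_0^{\,q}\det(m_Q),
\]
where \(m_Q\) is multiplication by \(Q\) on \(\Q[X]/(P)\). To avoid the denominators coming from \(a_0\), replace \(P\) by the monic polynomial \(\tilde P(Y)=a_0^{p-1}P(Y/a_0)\), whose coefficients are \(a_ia_0^{i-1}\); replace \(Q\) by \(\tilde Q(Y)=a_0^{q}Q(Y/a_0)\). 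Then \(\tilde P,\tilde Q\in\Z[Y]\). Moreover \(\det(m_{\tilde Q})\) over \(\Z[Y]/(\tilde P)\) is an integer, and it equals \(a_0^{pq}\prod Q(\alpha)\). Hence
\[
 \Res(P,Q)=\det(m_{\tilde Q})\,/\,a_0^{\,pq-q},
\]
which is an exact integer division and is available in \(\TC\) by \cite{HAB02}.

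Next compute \(\det(m_{\tilde Q})\) through the characteristic polynomial of \(M=m_{\tilde Q}\), namely as its constant coefficient up to sign. By Lemma~\ref{basic:trace-reconstruct} this needs the power traces \(\Tr(M^k)\) for \(k\le p\). Since \(\tilde P\) is monic, these can be written as
\[
 \Tr(M^k)=\sum_{\alpha}\tilde Q(\alpha)^k=\sum_{j}[\tilde Q^k]_j\,s_j,
 \qquad s_j=\sum_\alpha\alpha^j .
\]
The sums \(s_j\) are power sums of the roots of \(\tilde P\), taken up to \(j\le pq\). They are obtained in fixed depth by reversing the formal exponential of Lemma~\ref{basic:trace-reconstruct}: \(\sum_j s_jz^j/j=-\log(\tilde P^{\mathrm{rev}}(z))\). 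Concretely, with \(u=1-\tilde P^{\mathrm{rev}}(z)\), the logarithm is \(-\log(1-u)=\sum_{k\le pq}u^k/k\) truncated mod \(z^{pq+1}\). After clearing by the integer \((pq)!\), this is a sum of parallel powers of a packed polynomial. The powers \(\tilde Q^k\) for all \(k\le p\) are parallel iterated products of packed integers. Each trace is then an inner product, which is an iterated sum. All of this is dense polynomial arithmetic with singly exponential degrees and bit lengths, which is exactly the setting of Lemma~\ref{basic:padding}.

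For the degenerate cases, when \(a_0=0\) or \(b_0=0\), use parallel selection of the true leading index. One fixed-depth layer locates the first nonzero coefficient, and the standard rule for formal degree versus actual degree contributes an extra power of the leading coefficient and a sign. If both actual leading coefficients vanish, the output is \(0\). The second sentence of the corollary then follows at once from Lemma~\ref{basic:padding}: a constant-depth uniform \(\TC\) computation fed by CH-accessible arrays gives bits in a fixed higher level of \(\CH\).

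The main obstacle I expect is bookkeeping, not algebra. Every intermediate object, namely the logarithm series with its \((pq)!\) scaling, the powers \(\tilde Q^k\), and the exact divisions, must carry explicit integer bounds, so that the packing base of Lemma~\ref{basic:padding} can be chosen uniformly and the divisions stay exact. The subtle point is that the trace reconstruction produces \(Q\det(I-zM)\) with a huge integer factor; the division by it must be shown exact before the final division by \(a_0^{pq-q}\).
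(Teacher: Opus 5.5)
Your proposal is correct and follows the paper's proof. The shared steps are the same monic rescaling \(\tilde P(Y)=a_0^{p-1}P(Y/a_0)\), \(\tilde Q(Y)=a_0^{q}Q(Y/a_0)\) with exact division by \(a_0^{q(p-1)}\), power traces of the integral multiplication matrix fed into Lemma~\ref{basic:trace-reconstruct}, and parallel degree selection for leading zeros and constants; the only difference is that you get the traces from root power sums via a truncated formal logarithm, whereas the paper reads them off the univariate case of Lemma~\ref{basic:residue} (the Laurent coefficient at infinity of \(a^j f'/f\), expanded by a geometric series), which avoids your \((pq)!\) clearing but is otherwise equivalent.
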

\begin{proof}
For monic \(f=x^d-g\) and a second polynomial \(a\), the univariate
case of Lemma~\ref{basic:residue} computes
\(\Tr(m_a^j)\), \(1\le j\le d\), using products of polynomially
many dense polynomials of polynomial degree and bit height. Packed
integer arithmetic implements them in \(\TC\).
Lemma~\ref{basic:trace-reconstruct} gives \(\det(m_a)=\Res(f,a)\).

For \(f\) of degree \(d\ge1\) and leading coefficient \(b\ne0\),
and \(\deg a=e\), use the monic integral polynomial
\(F(y)=b^{d-1}f(y/b)\) and \(A(y)=b^ea(y/b)\). Then
\[
 \Res(F,A)=b^{e(d-1)}\Res(f,a).
\]
Scaling and signed exact division have polynomial bit bounds and are
in \(\TC\). Constants and leading zeros are handled by parallel
degree selection. Common or repeated roots require no exception.
\end{proof}

\subsection{Select a real root using only a short list of signs}
A norm multiplies values over all complex points. Nonreal points pair
with their conjugates and contribute positively. Negative factors at
real points are therefore detectable, but only modulo two. We arrange
that some short choice selects exactly one qualifying real root.

\begin{lemma}\label{basic:real-select}
Let \(b,e,v\) be CH-accessible integer univariate polynomials of
singly exponential degree and bit height, with \(b\ne0\). Suppose
every real root satisfying \(e=0,v>0\) is simple in \(b\).
Existence of such a root is decidable in a fixed level of \(\CH\).
\end{lemma}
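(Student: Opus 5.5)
The plan is to turn existence into a parity statement that a norm can read, and then to force that parity to be odd for some short choice. The basic tool is this. For an integer polynomial \(w\) not vanishing at any root of \(b\), the resultant \(\Res(b,w)\) is, up to the explicit factor \(\mathrm{lc}(b)^{\deg w}\), the product \(\prod_{\alpha}w(\alpha)\) over the complex roots of \(b\) with multiplicity. Conjugate nonreal roots contribute \(|w(\alpha)|^2>0\), so
\[
\sgn\Res(b,w)=\pm\prod_{\alpha\in\R,\ b(\alpha)=0}\sgn w(\alpha)^{\mu_\alpha},
\]
with a known sign. By Corollary~\ref{basic:univariate-resultant} and Lemma~\ref{basic:padding}, this sign is decidable in a fixed level of \(\CH\) whenever \(w\) is built from \(b,e,v\) and short-circuit constants by fixed-depth dense operations. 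Each sign test therefore gives a parity over real roots of \(b\), weighted by the sign pattern of \(w\).

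First I would localize to the qualifying roots. I would form \(w_\lambda=v-\lambda e^2\) and \(w'_\lambda=-1-\lambda e^2\), with \(\lambda\) a power of two having a short powering circuit. The value of \(\lambda\) comes from the Cauchy and norm bounds used above: \(|e(\alpha)|\) is at least \(2^{-2^{\poly(L)}}\) at real roots of \(b\) where \(e\neq0\), and \(|v|\) is at most \(2^{2^{\poly(L)}}\) there. With such a \(\lambda\), at a real root \(\alpha\) the sign of \(w_\lambda(\alpha)\) is \(+\) exactly when \(e(\alpha)=0\) and \(v(\alpha)>0\), and \(-\) otherwise (assuming also \(v(\alpha)\neq0\); zeros of \(v\) are handled by replacing \(v\) with \(v-\eta\) for a tiny dyadic \(\eta\), again justified by the separation bound). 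Multiplicities are the next issue. Nonqualifying roots may be repeated, but their multiplicity is the same in both tests \(w_\lambda\) and \(w'_\lambda\), and \(w'_\lambda<0\) at every real root. Comparing \(\sgn\Res(b,w_\lambda)\) with \(\sgn\Res(b,w'_\lambda)\) cancels them and yields \((-1)^{Q}\), where \(Q\) is the number of qualifying roots. This step uses the hypothesis that qualifying roots are simple.

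The main obstacle is that this decides only the parity of \(Q\), whereas existence needs \(Q\ge1\). My plan is to break the roots into pieces by a short family of further factors. Multiplying \(w_\lambda\) by a localizing factor \(s_j\) produces \(\prod_{\text{qual}}\sgn s_j(\alpha)\), which counts qualifying roots in the region \(s_j<0\) modulo two. If \(Q\ge1\), some region in a suitable short family contains an odd number of qualifying roots, and the quantification over the family is absorbed by one more \(\PP\) level. I would first try \(s_j=\pm\lambda' b'\,\partial^{k}b\) for \(k\) ranging over the polynomially many powers of two, together with shifted linear factors \(x-c\). The constants \(c\) are dyadic with short circuits of the form \(\pm2^{i}\pm2^{i'}\); a leftmost qualifying root is then isolated by an iterative bisection whose depth is logarithmic in the exponential degree and height, which is polynomial in \(L\). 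The point to verify carefully is that these sign conditions pin down one simple root, in the manner of a Thom encoding restricted to a logarithmic-length list of derivatives. If that fails, the fallback is an existential guess over polynomial-length descriptions of the bisection path, with each path checked by a constant number of resultant signs.

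Either way, the total cost is a fixed number of resultant-sign oracle calls beneath a constant number of counting quantifiers. This gives a fixed level of \(\CH\), independent of the degrees and of the index sizes.
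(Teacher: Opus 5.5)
Your parity step is correct as far as it goes. With the gap bounds, $v-\eta-\lambda e^2$ is positive exactly at the qualifying real roots of $b$. Comparing with $-1-\lambda e^2$ cancels the multiplicities of the nonqualifying roots and gives $(-1)^Q$. The paper instead uses one selector that is negative at matches and positive at every other real root, so it needs no second resultant.

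The genuine gap is the passage from parity to existence: multiplying by a localizing factor does not localize. We have $\sgn\Res(b,w_\lambda s_j)=\sgn\Res(b,w_\lambda)\cdot\sgn\Res(b,s_j)$, and the second factor is $\prod\sgn s_j(\alpha)^{\mu_\alpha}$ over \emph{all} real roots of $b$, nonqualifying and repeated ones included. Multiplying resultant signs only adds, mod $2$, the indicator functions of the individual factors' sign conditions, weighted by multiplicity. It never produces the indicator of a conjunction such as ``qualifying and $s_j<0$''; comparing with $w'_\lambda s_j$ just cancels $s_j$. Your $\lambda e^2$ device works only because $e=0$ is an equation that can be amplified, and a strict sign condition on a derivative cannot be absorbed that way. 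The missing idea is a single polynomial whose sign at each real root of $b$ is negative exactly when a whole list of conditions holds. The paper builds it from a rational sign filter $R=A/B$ with $B>0$ on $\R$, $R(0)=0$, and $|R(y)-\sgn y|<\varepsilon$ for $2^{-K}\le|y|\le2^K$. It then forms $W=R(e)^2+(R(v)-1)^2+\sum_i\bigl(R(b^{(j_i)})-\sigma_i\bigr)^2-\tfrac14$ and clears the positive squared denominators.

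Your isolating family is also not pinned down. What works is a guessed conjunction of at most $\lceil\log_2D\rceil$ conditions $\sgn b^{(j)}=\sigma$, with arbitrary orders $1\le j\le D$. Thom's lemma separates the qualifying roots, and repeatedly choosing a coordinate that varies and keeping a smallest nonempty sign class halves the set each time. Three of your choices block this:
\begin{itemize}
\item Isolation needs conjunctions, hence the filter above.
\item Thom's lemma says nothing about a subvector such as the power-of-two orders.
\item The factor $b'\,\partial^kb$ vanishes at multiple real roots of $b$, which makes the resultant zero.
\end{itemize}
The bisection fallback fails outright. Roots of $b$ can be $2^{-2^{\poly(L)}}$ apart, so a bisection path has exponentially many steps: its depth is logarithmic in the inverse separation, not in the degree and height. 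Constants of the form $\pm2^i\pm2^{i'}$ cannot in general separate two such close roots.

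Finally, your basic tool assumes $w$ is nonzero at every complex root of $b$. But $v-\lambda e^2$, $-1-\lambda e^2$ and their products can vanish at nonreal roots, giving resultant $0$. The paper avoids this by taking the lowest nonzero coefficient of $\Res_X(b(X),w(X)+u)$. A conjugate pair contributes to that coefficient either a positive constant or an even power of $u$.
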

\begin{proof}
Normalize the leading coefficient of \(b\) to be positive, and write
\(D=\deg b\). A constant polynomial has no roots, so assume
\(D\ge1\). Thom's lemma says that the vectors
\[
 (\sgn b'(\alpha),\ldots,\sgn b^{(D)}(\alpha))
\]
distinguish its distinct real roots. In a nonempty collection of
qualifying roots, choose a coordinate that varies and retain a smallest
nonempty sign class. This removes at least half the roots. Repeating
at most \(\lceil\log_2D\rceil\) times isolates one qualifying root.
Thus a list of \(O(\log D)\) derivative indices and trits suffices;
its binary length is polynomial in \(L\).

To turn a guessed list into a polynomial, bound all nonzero test values
at roots of \(b\) between \(2^{-K}\) and \(2^K\), with singly
exponential \(K\). Cauchy's bound gives the upper estimate. For the
lower estimate, the resultant of a primitive irreducible factor of
\(b\) with a test polynomial is a nonzero integer; all its other
factors have the upper bound. This argument uses a factor only to prove
the bound, and does not compute a factorization.

There is an explicit rational approximation \(R=A/B\) to the ternary
sign on these values, with \(B>0\) on \(\R\) and \(R(0)=0\).
For clarity, take an even integer \(m\) and define
\[
 P(y)=\prod_{k=-K}^K
       (2^{\max(0,-k)}y+2^{\max(0,k)})^m,
 \qquad R(y)=\frac{P(y)-P(-y)}{P(y)+P(-y)}.
\]
For \(y>0\), a dyadic scale within a factor two of \(y\) gives
\(|P(-y)/P(y)|\le3^{-m}\); all other factors have ratio at most one.
The denominator is positive and \(|R|\le1\) on \(\R\). Taking
\(m=O(\log(1/\varepsilon))\) gives error below \(\varepsilon\).
Its degree and height are polynomial in \(K\) and
\(\log(1/\varepsilon)\), with constant-depth dense arithmetic.

For the guessed derivative tests \((j_i,\sigma_i)\), use
\[
 W(X)=R(e(X))^2+(R(v(X))-1)^2
       +\sum_i(R(b^{(j_i)}(X))-\sigma_i)^2-\tfrac14.
\]
Choose \(\varepsilon<1/(100(s+2))\), where \(s\) is the number
of derivative tests. At a matching root the sum is less than \(1/4\).
At any other real root, some term is at least
\((1-\varepsilon)^2\), or the failed positivity term is at least
one. Clear the positive squared denominators to obtain an integer
polynomial \(w\), negative exactly at the matches and positive at
every other real root of \(b\).

The selector may vanish at nonreal roots. Compute
\[
 F(u)=\Res_X(b(X),w(X)+u)
\]
and take its lowest nonzero coefficient. A nonreal conjugate pair
contributes a positive factor, or an even power of \(u\) if the
selector vanishes there. A nonselected real root contributes positively
with any multiplicity. A selected root is simple and contributes one
negative factor. The coefficient sign is consequently
\((-1)^{\#\text{matches}}\).

Corollary~\ref{basic:univariate-resultant} computes this polynomial's
coefficients in CH; evaluation and interpolation, or one larger packed
evaluation, handles its parameter \(u\). Parallel zero tests locate
its first nonzero coefficient. All degrees and heights are still
singly exponential and all constructions have fixed arithmetic depth.
Accept a guessed list when that coefficient is negative. Soundness
requires only a nonzero odd number of matches. Completeness follows
from the existence of a list isolating one root. The final
polynomial-length guess costs one further level, since
\(\NP^{\Clevel k}\subseteq\Clevel{k+1}\).
\end{proof}

\subsection{Repeated roots and the completion of the proof}
Over \(\C\), decompose \(A\) into its local algebras. At a point
\(z\) of local dimension \(\mu_z\), multiplication by \(\ell\)
is \(\ell(z)I\) plus a nilpotent operator. Every higher pole in its
resolvent has trace zero, even after multiplication by \(h\).
Therefore, with \(\alpha_z=\ell(z)\),
\[
 q(X)=\prod_z(X-\alpha_z)^{\mu_z},\qquad
 r(X)=q(X)\sum_z\frac{\mu_zh(z)}{X-\alpha_z}.
\]
At \(\alpha=\ell(z)\) of multiplicity \(j=\mu_z\), differentiation
gives the useful exact identity
\begin{equation}\label{basic:multiplicity}
 r^{(j-1)}(\alpha)=h(z)q^{(j)}(\alpha).
\end{equation}
In particular \(r^{(j-1)}q^{(j)}\) has the sign of \(h(z)\) there.

For each \(j=1,\ldots,N\), form
\[
 b_j=q^{(j-1)},\qquad
 e_j=\sum_{i=0}^{j-1}(q^{(i)})^2,\qquad
 v_j=r^{(j-1)}q^{(j)}.
\]
At a real root of \(b_j\), the conditions \(e_j=0,v_j>0\)
select exactly the roots of \(q\) of multiplicity \(j\) for which
\(h(z)>0\). Such a root is simple in \(b_j\), because
\(b_j'=q^{(j)}\ne0\). A real projected root corresponds to a real
point: conjugation preserves the fiber, and separation by \(\ell\)
forces \(z=\overline z\) when \(\ell(z)\) is real.

Apply Lemma~\ref{basic:real-select} to these strata. The index \(j\)
and the isolating list can be guessed together with polynomially many
bits. All derivatives and sums above are formed in parallel using
their coefficient formulas, so the number of strata does not become
the depth of the computation. By \eqref{basic:finite-question} and
\eqref{basic:multiplicity}, acceptance is equivalent to the original
real feasibility question. This proves Theorem~\ref{thm:etr}.

The proof has used only a fixed number of CH arithmetic stages above
BitSLP. It has not assigned them a small numerical depth. In particular,
the norm or the packed characteristic polynomial is not asserted to
have a polynomial-size integer SLP. The next section obtains a smaller
bound by a different implementation of the finite-algebra calculation.

\section{Lowering the bound to the fourth level}
\label{sec:small-depth}

\begin{theorem}\label{thm:depth}
\[
 \ETR\subseteq\BPP^{\Clevel3}\subseteq\Clevel4.
\]
\end{theorem}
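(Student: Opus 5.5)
The plan is to keep the architecture of Section~\ref{sec:headline} and replace its implementation: the finite critical algebra, the separating form, the trace family, and the sign test on a norm all stay, but each exact integer is now computed by few counting layers or is replaced by randomized reconstruction of signs. The second inclusion \(\BPP^{\Clevel3}\subseteq\Clevel4\) is the standard fact \(\BPP^{\PP^{X}}\subseteq\PP^{\PP^{X}}\). It follows from Toda's theorem relativized to the oracle \(X=\Clevel2\), or directly because \(\PP\) is closed under \(\BPP\)-reductions, with the bounded error absorbed into the majority count. So the real work is to place \(\ETR\) in \(\BPP^{\Clevel3}\).

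First I would rebuild the algebra so that every needed integer is a coefficient of a single polynomial-size arithmetic circuit, not the output of several stacked \(\TC\) stages. The plan uses a common formal homotopy in place of the separate gap, perturbation, and critical-algebra constructions. Take one parameter \(\eta\) and the system \(x_i^5+M\partial_iF\) deformed from the start system \(x_i^5\). Then the traces \(p_j,t_j\) given by Lemma~\ref{basic:residue} are coefficients of one circuit in \((x,\eta)\), and the gap of Lemma~\ref{basic:gap} is encoded by the lowest nonvanishing \(\eta\)-order instead of by a determinant height bound. Bit access to these coefficients costs \(\Clevel1\) by BitSLP-style counting. I would try to write this as a \(\GapP\) function of polynomial-size witnesses, which is the usual route to placing BitSLP in low levels of \(\CH\).

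Second, I would avoid Lemma~\ref{basic:trace-reconstruct} and the resultant of Corollary~\ref{basic:univariate-resultant} as sequential stages. The sign test needs only the sign of one norm-type integer, \(\Res(b,w+u)\) at its lowest coefficient. Sign of an exponentially long integer given by a \(\GapP\)-like expression is the step I would randomize. Reduce the integer modulo random primes of polynomial bit length, which costs one \(\PP\) layer via modular \(\GapP\) evaluation, and then recover the sign by Chinese remaindering with a threshold comparison against half the modulus product. This is the \(\PosSLP\)-style reconstruction, done by a \(\BPP\) machine querying a \(\Clevel2\) oracle. The real-root selection of Lemma~\ref{basic:real-select} would be replaced by ordered labels: guess the index of a real root in increasing order, with Sturm-type counts expressed as signs of subresultant coefficients, so the isolating list is a polynomial-size label. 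Guessing it is an \(\NP\) layer, which we fold into the counting level.

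The main obstacle is the level accounting. Each of coefficient access, the norm or subresultant sign, and the outer existential guess must cost exactly one counting layer, so that they total \(\Clevel3\) under a \(\BPP\) layer. The hardest piece is computing the norm sign without first materializing \(q\) and \(r\) as accessible arrays. I would attack this by expressing the resultant directly as a trace of multiplication in the tensor algebra \(A\otimes\Q[X]/(b)\), which is again a residue coefficient of a single circuit. That would remove the intermediate reconstruction layer and leave one \(\GapP\) quantity whose sign the randomized step recovers.
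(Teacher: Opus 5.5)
Your second inclusion is fine: $\BPP^{A}\subseteq\PP^{A}$ relativizes directly, and Toda's theorem is not needed. The first inclusion has a genuine gap at the sign step, which is where the paper's new ideas lie.

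Residues of $Z$ modulo polynomially many random polynomial-bit primes determine $Z$ only modulo a polynomial-bit integer. Every such residue class contains integers of both signs below the height bound $2^{H}$, $H=2^{\poly(n)}$, so these residues can test $Z=0$ but not $\sgn Z$. Chinese remaindering ``against half the modulus product'' needs exponentially many moduli. Even then, to tell $+\delta$ from $-\delta$ on the circle, the fractional part of $\sum_iA_i/p_i=\text{integer}+Z/M$ must be known to precision about $|Z|/M$, which can be doubly exponentially small. Doing this by Chinese-remainder-to-binary conversion costs several counting levels; this is essentially why the inherited $\PosSLP$ bound sits at the fourth level.

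The missing ingredient is Lemma~\ref{depth:sign-interface}, proved as Lemma~\ref{push4:sign}. It has three parts:
\begin{itemize}
\item raw $\GapP^{\mathcal A}$ representatives at \emph{every} short modulus, including the composite denominators $N+1+i$;
\item the beta identity \eqref{depth:beta}, which converts $Z$ into a displacement $C_NZ$ with known upper and lower bounds;
\item a randomized search over dyadic scales driven by a random-access small-bias generator.
\end{itemize}
With these, all weighted sums stay inside one counting computation, and the sign costs a single $\PP$ level above the representatives. Relatedly, bit access to SLP coefficients is not known to lie in $\Clevel1$; only residues are cheap.

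Your accounting also fails on the norm side. A norm is a product over the $5^m$ points of the algebra, not a trace. $\Tr$ of multiplication in $A\otimes\Q[X]/(b)$ is linear in the element, and it gives a determinant only through the Newton-identity reconstruction you wanted to remove. Moreover, $b$, its derivatives, and the selector are themselves built from inner norms rather than short circuits. The paper explicitly leaves single-$\GapP$ representatives for the norm open; they would give $\ETR\subseteq\BPP^{\PP}$.

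Instead, the paper proceeds as follows.
\begin{itemize}
\item It makes all critical points simple with pairwise distinct values by the $2$-adic weighted objective (Lemma~\ref{lem:generic}), which eliminates multiplicity strata.
\item It evaluates norms as products over formal Newton branches issuing from a \emph{split} start system with unit Jacobian. A homotopy from $x_i^5$ cannot do this: that system has a single point of multiplicity $5^m$ and no power-series branches.
\item It pays one counting level per exponential product, through valuation, Teichm\"uller-logarithm and principal-digit counts in Galois rings (Lemma~\ref{lem:ring-products}).
\end{itemize}
The inner rank-jet norm nested inside the outer selector norm yields $\GapP^{\Clevel2}$ representatives, and the sign lemma then gives $\BPP^{\Clevel3}$.

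Your Sturm-index selection is not available at this cost either. Signed-subresultant counts for a polynomial of degree $5^m$ need the signs of exponentially many subresultant coefficients, combined coherently. Each of these is an exponentially long determinant, as hard as the norm sign itself. The paper instead gives every real root a randomized ordered label via the complex normalizer and its ordering property (Lemma~\ref{push6:ordering}). It reads all labels from one first-order jet norm and isolates a qualifying root by a random affine hash. Isolation is thus paid for by the $\BPP$ coins rather than by a guessed list.
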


The preceding proof deliberately hides the depth of integer arithmetic.
Unhiding it is not enough to obtain this bound. We make three changes.
First, a special critical-point construction gives simple points and
distinct critical values. Second, these values receive short, distinct
integer labels, so a random affine hash can isolate a qualifying value.
Third, modular norm evaluation and integer-sign reconstruction are
combined before an unnecessary oracle layer is introduced. We explain
the interfaces here; the complete arithmetic proofs are in
Sections~\ref{sec:etr}, \ref{push4:sec}, and \ref{push6:depth} of the
supplementary material.

\subsection{A critical algebra with distinct values}
Use the bounded version of ETR in which variables lie in \([1/2,2]\)
and constraints have the forms \(x=1\), \(x+y=z\), and \(xy=1\)
\cite{AAM18}. Enforce the interval constraints with
\[
 2x-1-2u^2=0,\qquad 2-x-v^2=0.
\]
The sum \(F\) of the squares of all these equations is a coercive
quartic. Its zeros have all coordinates bounded by two. Write \(m\)
for the number of coordinates. An effective minimum bound supplies a
short binary integer \(E_0\) such that
\[
 \min F=0\quad\hbox{or}\quad\min F\ge 2^{-E_0}.
\]
The parameters can be chosen with \(E_0\ge2m\). Set
\[
 C_0=91(2^m-1),\quad A_0=C_0+1,\quad
 Q_0=A_0 2^{E_0},\quad T=3A_0,\quad w_i=2^{i-1},
\]
and define
\begin{equation}\label{depth:objective}
 J(X)=6Q_0F(X)+\sum_{i=1}^m w_i
       \bigl(X_i^6+3(X_i+1)^2\bigr).
\end{equation}
At a zero of \(F\), the second sum is at most \(C_0<T\).
If \(F\) has no zero, then \(J\ge6A_0>T\) everywhere. Therefore
\[
 \text{the input is feasible}
 \quad\Longleftrightarrow\quad
 \text{some real critical point of }J\text{ has }J<T.
\]

After dividing the \(i\)-th gradient equation by \(6w_i\), the
critical system is
\[
 G_i=X_i^5+X_i+1+(Q_0/w_i)\,\partial_i F=0.
\]
Its integral quotient has basis \(X^\alpha\), \(0\le\alpha_i<5\),
and rank \(D=5^m\). Modulo two each equation becomes
\[
 p(U)=U^5+U+1=(U^2+U+1)(U^3+U^2+1).
\]
This polynomial has five distinct roots over \(\F_{64}\). Hensel
lifting gives \(D\) simple characteristic-zero points. Moreover,
\(U^6+3(U+1)^2\equiv U+1\pmod{2,p(U)}\).
At the first coordinate where two lifted points differ, the weights
\(1,2,4,\ldots\) give a unique lowest two-adic valuation in the
difference of their objective values. Thus the values of \(J\) are
distinct. Finally, \(J-T\equiv X_1+1\) modulo two at the critical
points, and \(p(1)\ne0\). No critical value equals the threshold.
Lemma~\ref{lem:generic} gives the full argument and height bounds.

This small-characteristic construction replaces the multiplicity strata
of the basic proof. The polynomial
\(q(Z)=\Norm(Z-J)\) is now squarefree and its real roots correspond
exactly to the real critical points.

\subsection{Integer-sign reconstruction as an interface}
An exponentially large integer cannot be signed by selecting a single
short modulus. We use its residues simultaneously. The following is
the form of the arithmetic interface needed below.

\begin{lemma}\label{depth:sign-interface}
Let \(Z\in\Z\) have a known singly exponential bit bound. Suppose
that for every positive, polynomial-bit modulus \(a\), an integer
representative of \(Z\bmod a\) is given by a raw
\(\GapP^{\mathcal A}\) expression, with polynomial-length branch
addresses. Then the randomized sign algorithm uses
\(\BPP^{\PP^{\mathcal A}}\). The same assertion holds for the
particular representatives below after the indicated functional
counting absorption.
\end{lemma}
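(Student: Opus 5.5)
The plan is to reduce the sign of \(Z\) to a comparison of a scaled residue against half a modulus, in the style of Chinese-remainder sign detection, and to realize the needed residues through the given \(\GapP^{\mathcal A}\) expressions.

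\textbf{Step 1 (recovering bits of residues).} Let \(|Z|<2^{b}\), where \(b\) is singly exponential with a polynomial-length binary description. For a polynomial-bit modulus \(a\), the given expression \(g_a\) has \(g_a\equiv Z\pmod a\). The residue \(Z\bmod a\) is then obtained by exact division with remainder of the \(\GapP^{\mathcal A}\) value \(g_a\) by the short integer \(a\). I would argue that its bits are decidable in \(\Pclass^{\PP^{\mathcal A}}\). Since \(g_a\) is a \(\GapP^{\mathcal A}\) value with polynomial-length addresses, its magnitude is at most exponential, so \(g_a\) has polynomially many bits. Each bit of a \(\GapP\) value is extractable with polynomially many \(\PP\) queries by binary search on thresholds. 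This is the "functional counting absorption": the outer machine never writes \(Z\), only short residues.

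\textbf{Step 2 (randomized sign test).} Choose random primes \(p_1,\dots,p_k\) of polynomial bit length. Put \(P=\prod p_i\) and take \(k\) large enough that \(P>2^{b+2}\). Then \(Z>0\) exactly when \((Z\bmod P)<P/2\), with \(Z\neq0\) tested separately. Here \(k\) is singly exponential, so the machine cannot read all residues. Instead I would use the standard identity
\[
 \frac{Z}{P}\equiv\sum_i \frac{(Z\bmod p_i)\,\bigl((P/p_i)^{-1}\bmod p_i\bigr)}{p_i}\pmod 1 .
\]
The fractional part of this exponentially long sum decides the comparison with \(1/2\). A \(\PP^{\mathcal A}\) computation can approximate the sum by summing truncated terms over an index \(i\) that it guesses itself; the primes are indexed by a random seed, or deterministically by the \(i\)-th prime in a short interval.

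The randomness is used in one place. Near-boundary cases, where the fractional part lies within \(2^{-\poly}\) of \(0\) or \(1/2\), are removed by random integer shifts and scalings. I would multiply \(Z\) by a random short integer \(c\) and add a random multiple of \(P\) known only through residues. Alternatively, I would test \(2^{j}Z\) for a random short offset \(j\) so that the approximation error is dominated. Then \(\BPP\) amplification gives \(\BPP^{\PP^{\mathcal A}}\).

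\textbf{Main obstacle.} The hardest part is keeping the oracle depth at one \(\PP^{\mathcal A}\) layer. The approximation of the exponentially long sum queries residue bits, which Step 1 makes \(\Pclass^{\PP^{\mathcal A}}\)-computable. Used naively inside another \(\PP\) sum, this would stack into \(\PP^{\PP^{\mathcal A}}\). I would avoid that by folding the residue into the counting directly. The quantity \(g_{p_i}\cdot\bigl((P/p_i)^{-1}\bmod p_i\bigr)\cdot 2^{r}/p_i\), truncated, should be written as a single \(\GapP^{\mathcal A}\) expression summed over \(i\). Closure of \(\GapP\) under exponential sums and products with \(\FP\) weights makes this possible.

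The delicate point is the reduction modulo \(p_i\) inside the sum. I would avoid it by using the raw representatives \(g_{p_i}\) instead of reduced residues. The resulting sum still equals \(Z/P\) modulo \(1\) up to an integer. Its fractional part is then read by one \(\PP^{\mathcal A}\) threshold query on a shifted numerator.
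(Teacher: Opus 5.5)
\textbf{Main gap.} The problem is in Step~2, where you read the sign from one polynomial-precision approximation of \(Z/P\bmod 1\).

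Since \(\log_2P>b+2\) is singly exponential, the displacement \(\delta=Z/P\) is typically doubly exponentially small; this already happens for \(Z=\pm1\). With \(P>2^{b+2}\) the fractional part lies in \((0,1/4)\cup(3/4,1)\) for \(Z\ne0\). It is never near \(1/2\), only within \(|\delta|\) of \(0\). So this is the generic case, not a boundary case, and no polynomial-precision threshold query separates \(\delta\) from \(-\delta\).

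Your proposed repairs do not help:
\begin{itemize}
\item Adding a multiple of \(P\) leaves \(Z/P\bmod 1\) unchanged.
\item Multiplying by a polynomial-bit \(c\) enlarges \(|\delta|\) by at most \(2^{\poly}\).
\item Multiplying by \(2^j\) exposes the sign only when \(2^j|\delta|\) lies between \(2^{-\poly}\) and a constant. That is polynomially many scales among exponentially many candidates, so a random \(j\) succeeds with exponentially small probability. Beyond that window the phase has wrapped and carries no sign information.
\end{itemize}

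\textbf{The missing idea} is a way to \emph{locate} the scale using only one \(\PP^{\mathcal A}\) layer. The paper does this in Lemma~\ref{push4:sign} with a magnitude test.
\begin{itemize}
\item The average \(\mathbb E_{K<2^t}\cos(2\pi K\delta)\) exceeds \(9/10\) when \(2^t|\delta|\le 1/16\), and is at most \(1/8\) when \(2^t|\delta|\ge4\).
\item Because \(t\) can be exponential, the bits \(k_\ell\) of \(K\) come from a short-seed, random-access generator that fools Fourier shapes (Lemma~\ref{push4:generator}). The function \(\cos(2\pi K\delta)=\operatorname{Re}\prod_\ell e^{2\pi i\delta 2^\ell k_\ell}\) is such a shape.
\item Each sampled phase is then the single raw sum \(\sum_{i,\ell}A_ik_\ell\lfloor 2^b(2^\ell\bmod m_i)/m_i\rfloor\in\GapP^{\mathcal A}\). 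It is retrieved once in \(\FP^{\PP^{\mathcal A}}\) before the cosine is computed.
\item Hoeffding estimates and a bisection over \(0\le t\le J+3\) find \(h\) with \(1/32<2^h|\delta|<4\). The bisection uses only the one-sided implications of the high and low labels.
\item One deterministic phase query at \(j=h-6\) then gives the sign.
\end{itemize}
The paper's deterministic alternative is binary search for the first event (Lemma~\ref{lem:first-event}). It needs an \(\NP^{\PP^{\mathcal A}}\) oracle, so it gives \(\Pclass^{\NP^{\PP^{\mathcal A}}}\), not the bounded-error bound claimed.

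\textbf{Second, independent problem: the CRT normalization.} When \(P\) is a product of exponentially many primes, the cofactor \((P/p_i)^{-1}\bmod p_i\) is itself an exponentially long product. It is not an \(\FP\) weight. Computing it costs a counting level; the paper uses a discrete-logarithm count. So for an oracle \(\mathcal A\) not already containing \(\PP\), your summands are not raw \(\GapP^{\mathcal A}\) terms.

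The paper avoids CRT altogether with the beta identity \eqref{depth:beta}. Take \(N\) above the bit bound, \(m_i=N+1+i\), and counting representatives \(W_i\equiv\binom Ni\pmod{m_i}\) from Lemma~\ref{lem:ring-coefficients}. Then \(\sum_i(-1)^iW_ig_{m_i}/m_i\in\Z+C_NZ\). Each numerator is a product of two counting functions, and the denominators need no coprimality. This is exactly why the hypothesis asks for representatives at \emph{all} short moduli, composites and prime powers included. Representatives at primes alone would not serve this interface.
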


Here ``raw'' means that the outer signed sum remains a counting
expression. Calling an oracle to retrieve and then sum its canonical
residues can cost an extra level. The hypothesis also concerns
\emph{all} short moduli, including prime powers. Residues only modulo
large primes do not suffice for this interface.

For orientation, the underlying identity is
\begin{equation}\label{depth:beta}
 \sum_{i=0}^{N}\frac{(-1)^i\binom Ni}{N+1+i}
 =\int_0^1 t^N(1-t)^N\,dt
 =\frac{(N!)^2}{(2N+1)!}>0.
\end{equation}
Replace the binomial coefficients and \(Z\) by integer representatives
modulo \(N+1+i\). The resulting rational sum is an integer plus
\(\delta=C_NZ\), where \(C_N\) is the positive quantity in
\eqref{depth:beta}. Choosing \(N\) above the bit bound makes
\(|\delta|\) tiny, with a known lower bound when it is nonzero.
Dyadic multiplication of this phase eventually makes its sign visible
before wrapping around the unit circle. A short-seed, random-access
generator and a robust search over scales locate such a phase in
randomized polynomial time. All weighted sums stay inside the
\(\PP^{\mathcal A}\) oracle. Section~\ref{push4:sec} proves the
generator, phase test, and search; in particular it does not assume
that independently generated exponential-length randomness is available.

There is also a deterministic first-event search. For the integer SLP
representatives, its oracle is \(\NP^{\PP}\), giving
\(\Pclass^{\NP^{\PP}}\). We return to this in the applications.

\subsection{Ordered labels instead of full derivative sign vectors}
The basic proof isolates a root using a short part of its Thom vector.
For the smaller bound we first assign an integer label to each real
critical value. Let \(\alpha_1,\ldots,\alpha_D\) be all the roots of
\(q\), including the nonreal roots.

The rational normalizer of Section~\ref{push6:depth} is a short
real-coefficient circuit \(R=n/d\). Its denominator is positive on
\(\R\), it vanishes at zero, and it preserves the sign of every
nonzero real argument. These real-line properties hold for every
seed. With high probability, simultaneously for all nonzero
differences \(\alpha_i-\alpha_j\), its value is very close to one
of \(1,-1\). Its additional ordering property is essential:
along each horizontal line, the limiting half-plane signs on the
good inputs are nondecreasing from left to right
(Lemma~\ref{push6:ordering}). No global preservation of the original
complex half-plane is asserted.

Denote these limiting signs by \(\sigma_{ij}\), with
\(\sigma_{ii}=0\). For a real root put
\[
 S_i=\sum_{j=1}^D\sigma_{ij},\qquad
 r_i=\frac{S_i+D-1}{2}.
\]
Conjugate roots contribute paired signs; hence \(r_i\) is an integer
between zero and \(D-1\). As the real root moves to the right,
every column is nondecreasing, and the two columns belonging to a
pair of real roots force a strict increase. The labels \(r_i\) are
therefore distinct on the real roots.

We need not list the \(D\) summands. A first-order norm computes them
all through the identity
\begin{equation}\label{depth:rank-jet}
 D_0(u)+\epsilon N_0(u)
 =\Norm\bigl(d(u-J)+\epsilon n(u-J)\bigr)
 \pmod{\epsilon^2}.
\end{equation}
Where \(D_0(u)\ne0\), its logarithmic coefficient is
\[
 \frac{N_0(u)}{D_0(u)}=\sum_j R(u-\alpha_j).
\]
Consequently
\(X(u)=(N_0(u)+(D-1)D_0(u))/(2D_0(u))\) lies within a small
constant of \(r_i\) at each real root. Short polynomial circuits
read the bits of this nearby integer: the sign of
\(\sin(\pi(X+1/2)/2^b)\) gives the corresponding bit, and range
reduction, Taylor approximation, and repeated squaring supply the
required polynomial approximation. The distance from the sign
boundary is at least an inverse exponential. The proof in
Section~\ref{push6:depth} includes the approximation error and positive
denominator clearing.

Now hash the \(\ell=\lceil\log_2D\rceil\)-bit labels by a random
affine map \(h:\F_2^\ell\to\F_2^k\). A suitable output length
isolates exactly one qualifying label with constant probability.
Guessing that length randomly costs only an inverse polynomial
factor. The polynomial selector in \eqref{push6:hash-selector}
combines the hash equations with the condition \(J<T\).
Its sign is negative at a matching qualifying real root and positive
at every other real root. Even an unsuccessful normalization seed
cannot make a nonqualifying real root negative. After a sufficiently
large positive scaling and a shift by one, the selector is nonzero
at every complex root. The norm is then negative exactly when an
odd number of qualifying real roots were selected: conjugate pairs
contribute positive factors.

\subsection{Why the arithmetic costs three oracle levels}
Both the inner norm \eqref{depth:rank-jet} and the final selector norm
are evaluated with the same formal homotopy. Over a short finite ring,
start from a split reference system with unit Jacobian and continue
each branch formally. At a fixed evaluation node a polynomial number
of Newton updates suffices, because each update doubles the available
order. Degree bounds permit Fourier coefficient extraction without
aliasing. Products and sums have exponentially many factors but
polynomial-length addresses. There is no sequential walk through all
critical points.

For arbitrary prime powers the calculation takes place in a short
unramified Galois ring. Nonunits in a product are handled by their
valuations; unit products and first-order product jets have counting
representatives (Lemmas~\ref{lem:ring-coefficients} and
\ref{lem:ring-products}, and the jet lemma in
Section~\ref{push6:depth}). This supplies the all-moduli hypothesis
of Lemma~\ref{depth:sign-interface}.

The optimized accounting is as follows. The output of a function
computation at a node has polynomially many bits; the large final
integer is always represented by residues.
\begin{center}
\begin{tabular}{@{}ll@{}}
\toprule
Operation & Complexity at a short modulus\\
\midrule
One formal branch and a local circuit value & \(\Pclass\)\\
Inner norm jet, hence rank and hash factors & \(\FP^{\PP}\)\\
Outer product over branches & \(\FP^{\Clevel2}\)\\
Final Fourier sum, kept as a signed sum & \(\GapP^{\Clevel2}\)\\
Randomized sign reconstruction & \(\BPP^{\Clevel3}\)\\
\bottomrule
\end{tabular}
\end{center}
The last Fourier extraction is merged with the weighted count in
sign reconstruction. It is not first turned into a new canonical
residue oracle. Polynomially many isolation trials and error
amplification preserve \(\BPP^{\Clevel3}\). Since
\(\BPP^{\mathcal A}\subseteq\PP^{\mathcal A}\) relativizes,
the second containment in Theorem~\ref{thm:depth} follows.

The intermediate bounds \(\Clevel{1023}\), \(\Clevel7\),
\(\Clevel6\), and \(\Clevel5\) record different implementations
and isolation methods. They are explained in supplementary
Section~\ref{sec:depth}. They are not successive consequences of
a general collapse of the counting hierarchy.

\section{A linear bound for the real polynomial hierarchy}
\label{sec:real-hierarchy}

Write \(\Sigma_w\R\) for the discrete class defined by polynomial-size
integer polynomial formulas with \(w\) alternating blocks of real
quantifiers, beginning existentially, and write \(\Pi_w\R\) for the
universal counterpart. The number of variables in a block is not
bounded. We use the constant-free, discrete real polynomial hierarchy
terminology of \cite{HMMP25}; binary rational constants cause no change.
In particular \(\Sigma_1\R=\ETR\).

\begin{theorem}\label{hier:linear}
For every fixed positive integer \(w\),
\[
 \Sigma_w\R\ \cup\ \Pi_w\R\ \subseteq\ \Clevel{9w+17}.
\]
Consequently the union of the fixed levels of the discrete real
polynomial hierarchy is contained in \(\CH\).
\end{theorem}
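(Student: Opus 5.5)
We argue by induction on \(w\), removing one real quantifier block at a time and paying a constant number of counting levels per block. The case \(w=1\) is Theorem~\ref{thm:depth}, together with closure of \(\Clevel{k}\) under complement for \(\Pi_1\R\); this leaves ample room inside \(\Clevel{26}\). For the inductive step, consider \(\exists x\,\forall y\cdots\,\Phi(x,y,\ldots)\). A polynomial oracle level cannot simply be applied pointwise to a real \(x\), so the plan is to eliminate the block \(x\) with the critical-algebra method while keeping the inner formula as a parametrized object.

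\textbf{Step 1: a finite sample set.} Replace \(\exists x\) by a finite algebraic sample set. Using the standard cell-decomposition and sampling bounds of \cite{BPR06}, applied to the polynomials of the matrix, the truth of the inner \(\Pi_{w-1}\R\) formula is constant on the semialgebraically connected components of the realizable sign conditions in the \(x\)-space of the projected family. The usual critical-point method shows that the sample points may be taken as real critical points of perturbed coercive functions of the type \eqref{depth:objective}. These form a finite set \(Z\) of singly exponential size, parametrized by polynomially many short indices (choice of perturbation and of polynomial subfamily).

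\textbf{Step 2: labelling and substituting a sample point.} Each \(z\in Z\) is a simple root of a squarefree univariate \(q\), and receives a short ordered label via the rank-jet identity \eqref{depth:rank-jet}. We then guess the index and the label, which costs one \(\NP\), i.e.\ one \(\PP\), level. Next we substitute \(z\) into the remaining formula. To do this we adjoin \(z\) as an algebraic parameter: the inner formula, with coefficients in \(\Z[x]/(G)\), is handled by the inductive construction relative to the algebra. All its norms become norms over the compositum, still computable by the same formal homotopy, with degrees and heights still singly exponential.

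\textbf{Step 3: level accounting.} The inductive construction over a parameter algebra adds a bounded number of layers per block:
\begin{itemize}[nosep]
\item the outer branch product;
\item the Fourier sum;
\item the selector norm;
\item the randomized sign reconstruction of Lemma~\ref{depth:sign-interface};
\item the label guess.
\end{itemize}
Together with the table in Section~\ref{sec:small-depth}, an explicit count gives at most \(9\) new levels per block. Note that \(\BPP^{\mathcal A}\subseteq\PP^{\mathcal A}\) and \(\NP^{\Clevel k}\subseteq\Clevel{k+1}\) are used, not collapses. The base case costs at most \(26\), giving \(\Clevel{9w+17}\). Complementation handles \(\Pi_w\R\), and the union statement is immediate.

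\textbf{Main obstacle.} The hardest step is making the parametric sampling uniform. The inner quantifier block must be evaluated at the algebraic point \(z\) without the multiplicity strata or genericity of \(J\) failing after the substitution. The genericity argument of Lemma~\ref{lem:generic}, via the mod-two construction, must be redone for the fiber system over the residue ring of \(z\). It must also hold simultaneously for all \(z\in Z\), and the ordering property Lemma~\ref{push6:ordering} must be applied to the compositum's norm. My first attempt would be to apply the same \(U^5+U+1\) trick separately in each block with disjoint \(2\)-adic weights \(2^{i-1}\) across blocks. This gives distinct critical values jointly, so that nested norms become a single iterated norm over a tower of étale algebras, whose size stays singly exponential for fixed \(w\).
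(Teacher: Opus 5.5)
Your Step~1 is where the argument fails. For a block that is not innermost, the samples must meet every connected component of the sign conditions of the \emph{eliminant} of all inner blocks. For example, in $\exists x\,\forall y\,(y^2\ne x)$ the inner truth is governed by $\sgn x$, which no matrix polynomial supplies. You refer to ``the projected family,'' but nothing in your induction produces it. Step~2 uses the inner construction only as a decision procedure at a given sample, so it returns truth values, not polynomials in $x$. Elimination has to run from the innermost block outward before sampling can run from the outside in.

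Moreover, this family consists of exponentially many polynomials of exponential degree with exponentially long coefficients, accessible at best as CH-indexed arrays. By contrast, the objective \eqref{depth:objective}, its pure-quintic homotopy, and the $U^5+U+1$ argument of Lemma~\ref{lem:generic} are all built for one short integer quartic $F$. Constructing these families is the substance of the paper's proof. It runs the Basu--Pollack--Roy block elimination with its sequential algebra replaced by parallel resultants:
\begin{itemize}
\item the AGS resultant of a special critical basis with no common zero at infinity, which gives the sampling polynomials and coordinate numerators;
\item the padded resultant \eqref{hier:padded} with a formal $\lambda$, which retains degree and gcd data under specialization.
\end{itemize}
The ledger then shows that one block raises the level of the coefficient arrays from $h$ to $h+9$.

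Two further pieces are missing. First, the induction hypothesis is a containment for integer sentences. It cannot be applied to a formula whose coefficients involve an algebraic point known only by an index and a label, and Theorem~\ref{thm:depth} gives no such parametric base case. What is needed is the paper's uniform invariant: for every short address of an earlier sample, the signs of bounded polynomial expressions in \emph{that} sample are CH-accessible. Your addresses do not yet provide this, for two reasons.
\begin{itemize}
\item A norm over the compositum multiplies over all conjugate outer points, so each inner selector must also carry the outer address.
\item The ordered-rank labels are correct only for good normalizer seeds. Under a universal block, a bad seed can skip or double-count samples, unless you show that one amplified random tape works for all address choices at once.
\end{itemize}
The paper instead uses deterministic addresses: hash the full Thom vector, and note that every hash maximizing the number of odd buckets is injective. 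It then specializes all infinitesimals once for the whole sample tree, recodes complete tuples into one algebra, and absorbs the Boolean alternation once by relativized Toda.

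Second, your count of nine levels per block over a base of $26$ is fitted to the target rather than derived. The paper obtains
\[
9w+17=(9w+1)+5+4+4+1+2,
\]
where the terms are: nine per block on the coefficient arrays plus one for the common infinitesimal specialization; five for flat recoding; two guarded root-resultant tests for a tuple comparison; one for sign reconstruction; and two for alternation with Toda.
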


The coefficient of \(w\) is an accounting bound, not an optimality
claim. For \(w=1\), Theorem~\ref{thm:depth} gives the much better
level four. The quantifier \emph{fixed} is part of the theorem:
the degree of the running-time polynomial may depend on \(w\).
This does not place sentences of input-dependent quantifier depth
in one constant level of \(\CH\).

\subsection{Why a Boolean ETR oracle is insufficient}
After choosing a sample for an outer block, the coefficients seen
by an inner block can be algebraic numbers. They need not have short
binary representations. Moreover, a later sign test must refer to
the \emph{same} earlier sample, rather than a separately guessed
root with convenient signs. We therefore need a uniform representation
of sample tuples, together with their validity and sign predicates.

The block-elimination construction of Basu, Pollack, and Roy
\cite[Algorithms 14.1--14.2]{BPR06} supplies the geometric starting
point. We retain its coverage of connected sign components and
replace its sequential arithmetic by parallel resultant calculations.
Section~\ref{push10:real-blocks} gives the construction in detail;
we summarize the invariant and then give the numerical accounting.

Each block supplies exponentially indexed univariate representations
of candidate samples. A representation consists of a defining
polynomial \(q\), a real root \(\alpha\), and rational coordinates
\(g_i(\alpha)/g_0(\alpha)\). Its validity includes
\(g_0(\alpha)\ne0\), tested at that root. Bounds on degrees,
coefficient heights, and family sizes remain singly exponential
for fixed \(w\). The signs of bounded polynomial expressions in
earlier samples are available uniformly; merely knowing the signs
of the individual coefficients would not suffice.

There are three points at which coherence could otherwise fail.
First, we keep degree and gcd metadata for every specialization.
These data prevent roots from splitting, colliding, or acquiring
new zeros of a guard inside one connected parameter cell. Thus
the real root branches and the required signs continue on that
cell. Equality of sign vectors on disconnected parameter sets
does not justify such a conclusion.

Second, roots have short, coherent addresses. Hash the full Thom
vector, implement bucket parity by a resultant, and count the odd
buckets. If their sizes are \(n_t\) and there are \(r\) real roots,
\[
 r-\#\{t:n_t\text{ odd}\}
   =\sum_t\bigl(n_t-(n_t\bmod2)\bigr)\ge0.
\]
The hash family contains an injective member. Hence every hash
maximizing the number of odd buckets is injective. A maximizing
hash and an occupied bucket name one root, and subsequent tests
reuse that address. This gives a sign oracle for an entire
indexed family without guessing its exponential sign profile.

Third, all formal infinitesimals are specialized \emph{once}, for
the whole guarded sample tree. A complete candidate tuple uses
only a number of algebraic root variables depending on \(w\).
Fixed-dimensional elimination bounds all polynomials governing
the truth of the finite representation formula. Choose
\[
 \epsilon_i=\eta^{(M+1)^{i-1}},\qquad \eta=2^{-B},
\]
with \(M,B\) above these effective degree and height bounds.
Leading infinitesimal monomials then dominate simultaneously.
The finite tuple and the formal tuple lie in the same relevant
cell and give the same tree truth. Independent choices for
different branches would not prove this. The constants have short
arithmetic circuits despite their very small values.

Finally, recode a complete candidate tuple into a single finite
algebra. Parent--child consistency becomes equality of coordinate
prefixes of complete tuples. A comparison of two tuples uses two
root addresses, rather than a new nested tower at every comparison.
Existential blocks quantify over valid short addresses; universal
blocks use validity as the antecedent of an implication. Duplicate
representations are harmless. This reduces the original real
sentence to fixed Boolean alternation over coherent CH predicates.

\subsection{The arithmetic primitives and the level ledger}
The general resultant primitive has the following useful refinement.
If the coefficients of its input forms are available at level \(h\),
its output coefficient residues are available in
\(\FP^{\Clevel{h+3}}\). If instead their exact homogeneous values
and the needed first derivatives at short nodes are already
available in \(\FP^{\Clevel r}\), the output uses only
\(\FP^{\Clevel{\max(r,1)+2}}\). The first remaining layer computes
the flattened Poisson product, and the second extracts coefficients.
The evaluator includes points at infinity: it may not divide by
a zero homogenizing coordinate.

For restricted elimination with degree drops, use the padded
resultant
\begin{equation}\label{hier:padded}
 \Res_{D,E}\bigl(f+t,\ g+\lambda t+VX^E\bigr),
 \qquad D\ge\deg f,\quad E>\deg g\text{ even}.
\end{equation}
If the specialized degree of \(f\) is \(d>0\), its first nonzero
\(V\)-slice occurs at \(D-d\). Up to a nonzero leading-coefficient
factor independent of \(t\), that slice is the ordinary deformed
resultant \(\Res(f+t,g+\lambda t)\). Its \(t\)-order is
\(\deg\gcd(f,g)\): at a common root the local order is the
minimum of the two multiplicities, and the formal \(\lambda\)
prevents cancellation. A nonzero constant \(f=a\) gives
\((a+t)^E V^D\); the identically zero case is detected separately.
Thus one coefficient family retains the degree and gcd data,
without a sign-dependent truncation inside the arithmetic algorithm.

A further saving comes from evaluating derivative-coordinate
polynomials in the same Fourier sum that constructs them.
If \(q(T)=\sum_{i=0}^Dq_iT^i\), the homogeneous value of its
\(\mu\)-th derivative at \((B,A)\) is
\[
 \sum_{i=\mu}^D(i)_\mu q_i A^{i-\mu}B^{D-i},
 \qquad (i)_\mu=i(i-1)\cdots(i-\mu+1).
\]
In a Fourier representation of the \(q_i\), this is obtained by
inserting the kernel
\begin{equation}\label{hier:kernel}
 K_{D,\mu}(A,B,z)=
   \sum_{i=\mu}^D(i)_\mu A^{i-\mu}B^{D-i}z^{-i}.
\end{equation}
The interval factorial is an \(\FP^{\PP}\) product, and the
kernel is in \(\FP^{\Clevel2}\). It fits inside the existing
sampling-product summand. There is no new sum around already
retrieved resultant coefficients. Only the Fourier node \(z\)
is inverted; \(A=0\), \(B=0\), and small residue characteristic
are permitted. This saving applies to these represented derivative
families, not to arbitrary coefficient arrays.

Here is the complete conservative ledger. Suppose the incoming
coefficient arrays for one block are at level \(h\).
\begin{center}
\begin{tabular}{@{}lr@{}}
\toprule
Operation & Level\\
\midrule
Values of critical equations and required derivatives & \(h+1\)\\
Sampling resultant and coordinate jets & \(h+3\)\\
Derivative-coordinate values using \eqref{hier:kernel} & \(h+3\)\\
Padded restricted resultant \eqref{hier:padded} & \(h+5\)\\
Two further homogeneous coefficient compositions & \(h+9\)\\
\bottomrule
\end{tabular}
\end{center}
One common finite infinitesimal specialization costs one additional
level overall. Thus the finite tuple and leaf arrays have level
\(k\le9w+1\). Flat recoding costs five more levels. A guarded
root-resultant test costs four; comparing two represented tuples
requires two such tests. Integer-sign reconstruction costs one.
Finally, fixed Boolean alternation and relativized Toda absorption
cost at most two more. Altogether,
\begin{equation}\label{hier:ledger}
 (9w+1)+5+4+4+1+2=9w+17.
\end{equation}
All sums and products retain polynomial-length indices. The structural
proof and its affine-depth invariant are in
Sections~\ref{push10:real-blocks} and \ref{push11:sec}; supplementary
Section~\ref{app:new-interfaces} records the evaluator and
specialization details behind the numerical refinement.

The last absorption uses relativized Toda's theorem in its proper
oracle order. It does not use \(\PH^{\PP}=\Pclass^{\PP}\), and
it does not treat an external counting predicate as an ordinary
polynomial guard.

\section{Other problems and complexity consequences}
\label{sec:applications}

The same arithmetic interfaces have different costs in different
applications. A local sign condition uses one norm; the ordered
labels used for full real feasibility require a norm inside another
norm. This distinction accounts for several of the smaller bounds
below. All inputs are finite binary descriptions. The bounds in this
section are consequences of the draft's proofs, with the same status
as its headline theorem.

\subsection{Semidefinite feasibility}
For explicitly sized symmetric rational matrices, let
\[
 A(x)=A_0+\sum_{i=1}^n x_iA_i.
\]
General SDP feasibility asks whether \(A(x)\succeq0\) for some real
\(x\); strict SDP feasibility asks whether \(A(x)\succ0\).
These are exact decision questions, with no numerical gap promise.

\begin{theorem}\label{app:sdp}
General and strict rational SDP feasibility belong to
\(\BPP^{\Clevel2}\subseteq\Clevel3\).
More generally, the same bound holds for feasibility of a
quantifier-free semialgebraic set promised to be closed and convex.
\end{theorem}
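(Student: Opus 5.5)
The plan is to use convexity to replace the ordered-label machinery of Theorem~\ref{thm:depth}. There, the norm-inside-a-norm is what costs the extra level. For a closed convex feasible set we only need \emph{one} distinguished point together with a local sign test at it, and each of these should cost a single norm.

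First, the general SDP case reduces to closed convex semialgebraic feasibility. The set \(\{x: A(x)\succeq0\}\) is closed and convex, and it is cut out quantifier-free by the nonnegativity of all principal minors; since there are exponentially many minors, I would instead use the coefficients of \(\det(tI+A(x))\), which are nonnegative exactly on the PSD cone. Strict feasibility needs a separate reduction. I would use \(\{(x,s): A(x)\succeq sI,\ 0\le s\le1\}\) and ask whether \(s>0\) is attainable. By convexity this is equivalent to the maximizer of \(s\) over a bounded closed convex set having positive value.

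For the main construction, given a closed convex set \(K\), I would add the coercive strictly convex penalty \(\|x\|^2\) restricted to \(K\) and intersect \(K\) with a large ball, using the bound \(B\) from \cite{BPR06}. Feasibility then becomes nonemptiness, and by convexity there is a \emph{unique} minimizer of \(\|x\|^2\) on \(K\). I would replace the constrained problem by the penalized unconstrained objective \(J\) of \eqref{depth:objective}, with \(F\) built from the defining sign conditions, so that its critical system has the simple distinct-value structure of Lemma~\ref{lem:generic}.

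The key claim I would aim for is that, under the convexity promise, feasibility is equivalent to ``the real critical point of \(J\) with the \emph{smallest} value satisfies \(J<T\).'' Equivalently, it suffices to test the global minimum. The global minimum of a real polynomial over its critical values is the smallest real root of \(q(Z)=\Norm(Z-J)\), and this removes the need for labels.

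To test whether the smallest real root of \(q\) is below \(T\), I would count sign changes: the number of real roots of \(q\) below \(T\) is at least one if and only if the Sturm-type count is positive. A more norm-friendly route is the single selector \(w(Z)=R(T-Z)\) evaluated as one norm \(\Norm(w(J)+c)\). Its sign has parity \((-1)^{\#\{\text{real crit.\ values}<T\}}\), so I would still need exactly one (or an odd number of) qualifying root. Convexity should supply this. On the NO side, Lemma~\ref{basic:gap} analogues give no value below \(T\). On the YES side, I would make \(J\) strictly convex near \(K\), using a large quadratic weight, so that only the unique minimizer near \(K\) lies below \(T\), giving an odd count. This single norm is a \(\GapP^{\PP}\)-type residue family by the outer-product row of the ledger minus the inner rank layer. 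Feeding it into Lemma~\ref{depth:sign-interface} then gives \(\BPP^{\Clevel2}\), and \(\BPP^{\mathcal A}\subseteq\PP^{\mathcal A}\) gives \(\Clevel3\).

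The main obstacle I expect is this uniqueness step. A sextic penalty makes \(J\) nonconvex, so spurious real critical points with \(J<T\) might exist away from \(K\) even when \(K\) is convex. I would first try replacing \(F\) by a convex barrier-like penalty, namely squared positive parts approximated by the rational normalizer of Section~\ref{push6:depth}, so that \(J\) is convex on the ball. I would then verify that the genericity of Lemma~\ref{lem:generic} survives this perturbation without destroying that convexity.
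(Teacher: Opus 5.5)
\textbf{Gap: the oddness step.} Several of your ingredients match the paper: the PSD condition via nonnegativity of the coefficients of \(\det(tI+A(x))\), the ball truncation from an effective sample bound, and the accounting ``one norm with a local selector and no inner rank norm, hence \(\BPP^{\Clevel2}\).'' The gap is the step you flag yourself. You need the number of real critical points of \(J\) with \(J<T\) to be odd on YES instances, and uniqueness via convexity of \(J\) cannot supply this.

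First, the sextic part \(\sum_i w_i(x_i^6+3(x_i+1)^2)\) is strictly convex. The nonconvexity comes entirely from \(6Q_0F\), where \(F=\sum_j f_j^2\) is built from nonconcave constraint polynomials; for example \(c_m=\det A(x)\) is not concave. A quadratic weight large enough to convexify \(6Q_0F\) must be comparable to \(Q_0\), and it then moves the minimum far past \(T\), destroying the threshold equivalence. Positive-part penalties or normalizer surrogates do not help either: they are not polynomials with the pure-monic critical structure that Lemma~\ref{lem:generic} and the branch-product norm evaluation require. Moreover, convexity of \(\max(0,-g)^2\) would need \(g\) concave.

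In general several critical points lie below threshold; their signed count is fixed, but their number is not. So uniqueness is simply false. Your reformulation ``the smallest real critical value is below \(T\)'' is just the ETR question restated and buys nothing by itself.

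\textbf{The missing idea: parity instead of uniqueness.} This is what the paper's proof uses. Choose \(Q\) and \(T\) as in Section~\ref{app:count-details}, so that the sublevel set retracts onto the zero set. The Morse formula \eqref{supp:morse} then gives \(N_++N_-\equiv N_+-N_-=\chi\pmod 2\). After ball truncation, a compact convex set has \(\chi\in\{0,1\}\) according to emptiness, so a single parity norm of the kind you describe is the right object.

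You still need an encoding of the inequalities that preserves Euler characteristic, which your proposal does not address. The naive slack lift \(c=s^2\) changes \(\chi\): the constraint \(1-x^2\ge0\) lifts to the circle \(1-x^2=s^2\), with \(\chi=0\), although \([-1,1]\) has \(\chi=1\). The paper handles this with the canonical guarded lift and the double formula \eqref{app:double-link}. It computes the two compact Euler terms modulo \(4\), i.e.\ it invokes Theorem~\ref{app:euler} with \(b=1\).

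\textbf{Strict feasibility.} Asking whether \(s>0\) is attainable on \(\{(x,s):A(x)\succeq sI,\ 0\le s\le1\}\) is an optimization question, not feasibility of a closed convex set. You would need either an effective lower bound \(s\ge\epsilon\) with a short-circuit \(\epsilon\), or the paper's gadget \(A(x)-\delta I\succeq0\), \(\begin{pmatrix}\delta&1\\1&v\end{pmatrix}\succeq0\). The gadget forces \(\delta>0\) while keeping the set closed and convex.
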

\begin{proof}
An effective real sample bound gives a short-circuit radius \(R\)
such that a nonempty input set \(S\) meets the closed ball of radius
\(R\). The intersection \(K\) is compact and convex. Thus
\[
 \chi(K)=\begin{cases}0&K=\varnothing,\\1&K\ne\varnothing.
 \end{cases}
\]
The Euler-parity algorithm of Theorem~\ref{app:euler} decides which
case holds in \(\BPP^{\Clevel2}\). This is why convexity helps:
it makes a single bit of Euler characteristic detect nonemptiness.
The algorithm need not recognize the closed-convex promise.

For a rational symmetric pencil, the PSD condition has a polynomial-size
semialgebraic description. If
\(\det(tI+A)=t^m+c_1t^{m-1}+\cdots+c_m\), it is equivalent to
\(c_j\ge0\) for every \(j\). Indeed the roots are the negatives
of the real eigenvalues, and a monic polynomial with nonnegative
coefficients has no positive real root. Division-free determinant
circuits compute all \(c_j\) with polynomial size. The feasible
set is closed and convex, so the preceding argument applies.

Strict feasibility reduces to a closed SDP by adding variables
\(\delta,v\) and the constraints
\[
 A(x)-\delta I\succeq0,\qquad
 \begin{pmatrix}\delta&1\\1&v\end{pmatrix}\succeq0.
\]
The second block forces \(\delta>0\), and every positive-definite
\(A(x)\) admits such a choice.
\end{proof}

There is also a useful equivalence independent of the CH arguments:
\begin{proposition}\label{app:strict-equivalence}
General rational SDP feasibility and strict rational SDP feasibility
are polynomial-time many-one equivalent. The reduction from general
to strict feasibility can send every NO input to a strongly infeasible
pencil. The same assertion holds for second-order cone feasibility.
\end{proposition}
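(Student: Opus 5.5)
Both directions will be explicit polynomial-time many-one reductions. The direction from strict to general feasibility is the one already used in the proof of Theorem~\ref{app:sdp}: adjoin variables \(\delta,v\) and pass to the block-diagonal pencil
\[
 \operatorname{diag}\!\left(A(x)-\delta I,\ \begin{pmatrix}\delta&1\\1&v\end{pmatrix}\right)\succeq0 .
\]
The \(2\times2\) block forces \(\delta v\ge1\), hence \(\delta>0\). So a feasible point gives \(A(x)\succeq\delta I\succ0\). Conversely, if \(A(x)\succ0\), take \(\delta=\lambda_{\min}(A(x))>0\) and \(v=1/\delta\). The new pencil is affine in \((x,\delta,v)\), symmetric, rational, and of size \(m+2\).

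For general to strict, plan to first put the instance into a normal form where the infeasible case is \emph{strongly} infeasible. I would use the standard facial-reduction / Ramana-type idea in the following form. Given \(A(x)\), use an explicit bound \(R=2^{2^{\poly(L)}}\) (from the effective sample bound \cite{BPR06}, realized by a repeated-squaring gadget) such that if \(A(x)\succeq0\) is feasible, some feasible \(x\) has \(\|x\|\le R\). Write \(R\) inside the pencil by a chain of \(2\times2\) blocks \(\begin{pmatrix}y_{k+1}&y_k\\y_k&1\end{pmatrix}\succeq0\), which gives \(y_{k+1}\ge y_k^2\) with \(y_0=2\).

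Strict feasibility should then be obtained by perturbing by an explicit tiny \(\varepsilon=2^{-2^{\poly(L)}}\), encoded by the same squaring chain. The strict system asks for \(A(x)+\varepsilon I\succ0\) together with \(\|x\|^2\le R^2\), written as an arrow/LMI block. In the YES case this is strictly feasible: a bounded feasible \(x\) works, and the remaining slack in the ball constraint is achieved by enlarging the radius slightly. In the NO case I need a quantitative gap: every \(x\) in the ball has \(\lambda_{\min}(A(x))\le-2\varepsilon\). This follows from a Łojasiewicz / effective-separation bound for the semialgebraic function \(x\mapsto\lambda_{\min}(A(x))\) on the compact ball, again from \cite{BPR06}-type height bounds. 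It is the same style as Lemma~\ref{basic:gap}, applied to \(\min_{\|x\|\le R}\max(0,-\lambda_{\min})\), with degree and height singly exponential.

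The gap then yields strong infeasibility: an explicit dual certificate family. The main obstacle is precisely this step, since the gap alone gives infeasibility, not strong infeasibility. Here one must argue that on a compact set with a uniform negative gap, the separating-hyperplane theorem produces \(Y\succeq0\), \(\operatorname{Tr}(Y)=1\), with \(\langle Y,A_i\rangle\) matched to the ball multipliers and the constant term strictly negative. Compactness of the ball, imposed as an LMI constraint, makes the dual attained, so the pencil is strongly infeasible.

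The hidden doubly exponential constants are harmless, since they are produced by polynomially many LMI squaring blocks rather than written in binary. The second-order cone statement follows identically, because Lorentz-cone constraints embed into \(2\times2\)/arrow LMIs and the squaring gadgets are themselves rotated cones.
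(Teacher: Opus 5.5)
Your strict-to-general direction is the paper's $2\times2$ block, and the shape of your forward reduction also matches: effective radius, a gap bound on a slightly enlarged box, output $A(x)+\epsilon I\succ0$. The genuine gap is your claim that the doubly exponential constants are ``produced by polynomially many LMI squaring blocks.'' A block $\begin{pmatrix}y_{k+1}&y_k\\y_k&1\end{pmatrix}\succeq0$ never pins $y_q$ down, and in a strict pencil it must be $\succ0$ anyway. It only bounds a linear entry from below by a square, so the chain gives $y_q\ge y_0^{2^q}$, and in the reversed orientation it gives only upper bounds tending to $1$.

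Completeness needs a large radius, so that side is harmless, but soundness needs exactly the opposite inequalities. If $\epsilon$ is such a chain output, it can be taken huge, and then $A(x)+\epsilon I\succ0$ is satisfiable for every input. If the radius variable is only bounded below, your ball imposes nothing, and weakly infeasible inputs break the reduction. For example, $A(x)=\begin{pmatrix}x&1\\1&0\end{pmatrix}$ is infeasible, yet $A(x)+\epsilon I\succ0$ as soon as $x>1/\epsilon-\epsilon$, for every fixed $\epsilon>0$. Your gap bound on a fixed ball therefore never enters.

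The missing idea is a strict LMI gadget whose projection is an exact open interval, so that it supplies the upper bounds $T<M+1$ and $\epsilon<\mu$. The paper gets these from the dual of the squaring program $\min s_q$ subject to $C_i(s)\succeq0$. Evaluate the identity $s_q-D_\theta=\sum_i\Tr(C_i(s)Y_i)$ at the rank-one optimal primal tuple. It shows that every strictly feasible dual value satisfies $D_\theta<\theta^{2^q}$, while strict dual values approach $\theta^{2^q}$. Hence the strict constraints $Y_i\succ0$, $D_2+1-T>0$ and $D_{1/2}-\epsilon>0$, all with small rational coefficients, give $T<M_2+1$ and $\epsilon<2^{-2^q}$.

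The same identity is what proves strong infeasibility. Suppose all blocks were $\succeq-\delta_jI$ with $\delta_j\to0$. The identity then gives $D_\theta(Y)\le M_\theta+O(\delta_j)$, which bounds $x$ and $\epsilon$. A limit point then gives $A(x)+\epsilon I\succeq0$ on the closed box with $0\le\epsilon\le\mu$, contradicting the gap. Your separating-hyperplane sketch does not substitute for this step, because the ball alone does not confine the lifted gadget variables to a compact set.
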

The reverse reduction is the two-by-two block above. For the forward
direction, first bound a possible feasible point by an effective
radius \(M\), then bound the positive infeasibility gap on
\([-M-1,M+1]^n\) by a small \(\mu\). Polynomially many two-by-two
strict LMIs express
\(M<T<M+1\) and \(0<\epsilon<\mu\), even though the expanded
binary descriptions of \(M\) and \(1/\mu\) are exponential.
The output is
\[
 |x_i|<T\quad(1\le i\le n),\qquad A(x)+\epsilon I\succ0.
\]
Supplementary Section~\ref{app:sdp-details} proves the gadgets and
the gap assertion, including weakly infeasible inputs. This use of
large constants is related to \cite[Theorem 9]{BLS25}; we make no
priority claim for the reduction. Strict feasibility by itself
does not justify a lower complexity bound than general feasibility.

For a commuting rational pencil, simultaneous spectral decomposition
does give a smaller bound:
\begin{equation}\label{app:commuting}
 \textsc{Commuting-SDP}\in
 \NP^{\PosSLP}\cap\mathsf{coNP}^{\PosSLP}
 \subseteq\MA^{\PP}\cap\mathsf{coMA}^{\PP}
 \subseteq\Clevel2.
\end{equation}
The generated rational matrix algebra has dimension at most the
matrix size. A separating element represents every coefficient
matrix as a rational polynomial in one symmetric matrix. Its
polynomially many distinct eigenvalues are specified by a dense
squarefree polynomial and isolating intervals. Feasibility becomes
a polynomial-size LP over these algebraic numbers. Active-constraint
certificates for feasibility and sparse Farkas certificates for
infeasibility are verified in \(\Pclass^{\PosSLP}\).
The first containment in \eqref{app:commuting} does not depend on our
new arithmetic upper bounds. The second uses the PosSLP theorem below;
fresh amplified oracle randomness gives
\(\NP^{\BPP^{\PP}}\subseteq\MA^{\PP}\), and counting over
Merlin's strings after amplification gives
\(\MA^{\PP}\subseteq\PP^{\PP}\).

The general SDP bound also extends to several compact coefficient
encodings: rational arithmetic circuits, explicitly isolated dense
algebraic numbers and arithmetic circuits over those leaves, and
square roots of nonnegative integer-circuit values. Their graph
constructions reduce them to ordinary rational SDP with polynomial
matrix dimension; see Section~\ref{app:sdp-coefficients}.
This statement does not include arbitrary nested algebraic-root DAGs
or succinct descriptions of exponentially sized matrices.

\subsection{PosSLP and coefficient positivity}
\(\PosSLP\) asks whether an integer straight-line program using
\(0,1,+,-,\times\) has positive output. The output may have
exponentially many bits. A related problem asks for the sign of a
specified coefficient of a polynomial given by a short arithmetic
circuit, with the coefficient index in binary.
These circuit problems are studied in \cite{ABKPM09,KP07}.

\begin{theorem}\label{thm:posslp}
Both problems belong to
\[
 \BPP^{\PP}\ \cap\ \Pclass^{\NP^{\PP}}.
\]
In particular they belong to \(\Clevel2\), by the randomized bound,
and to \(\PH^{\PP}\), by the deterministic bound.
\end{theorem}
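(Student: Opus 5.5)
The plan is to apply the integer-sign interface, Lemma~\ref{depth:sign-interface}, with the trivial oracle \(\mathcal A=\emptyset\), and to note that the deterministic first-event search gives the second bound.

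\emph{Reduction of coefficient positivity to PosSLP.} Given a circuit \(C\) in variables \(x_1,\ldots,x_n\) and a binary index \(a\), apply the Kronecker substitution \(x_i\mapsto B^{(D+1)^{i-1}}\) used after Lemma~\ref{basic:padding}. To read a signed coefficient, split its magnitude from the lower blocks with a shift and a rounding offset. Concretely, multiply the polynomial by \(x^{\text{large}}\) and add a bias \(B^{k}/2\)-type correction so that the integer quotient by \(B^{\text{index}}\) modulo \(B\) becomes an integer expression. The cleaner route is to avoid division: the sign of coefficient \(c_a\) equals the sign of \(Z=\sum_b c_b B^{\text{pos}(b)-\text{pos}(a)}\) after truncation. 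I would not rely on this step. Instead I would note that the residue-based proof below treats the coefficient directly, so both problems reduce to the same interface.

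\emph{Residues.} For an integer SLP output \(Z\) and a short modulus \(a\), the value \(Z\bmod a\) is computable in polynomial time, since each gate is evaluated modulo \(a\). This is an \(\FP\) value, hence trivially a raw \(\GapP\) expression. Its bit bound is \(2^s\). For a coefficient \(c_a\) of a circuit, compute \(c_a \bmod m\) as a Fourier coefficient-extraction sum over a short Galois ring that contains enough roots of unity. This is exactly the use of Lemmas~\ref{lem:ring-coefficients} and~\ref{lem:ring-products}. Each summand is a polynomial-time circuit evaluation, and the exponentially long sum with polynomial-length addresses is a raw \(\GapP\) expression. The division by the number of nodes must be handled through the valuations at prime powers dividing the Fourier length. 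I expect this to be the main obstacle, and I would take it from the ring lemmas rather than reprove it. Lemma~\ref{depth:sign-interface} then gives \(\BPP^{\PP}\). Inclusion in \(\Clevel2\) follows from \(\BPP^{\PP}\subseteq\PP^{\PP}\).

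\emph{Deterministic bound.} Use the first-event search noted after \eqref{depth:beta}. The phase \(\delta=C_NZ\), multiplied by \(2^k\), first becomes visibly signed at some scale \(k\). The existence of a scale with a prescribed sign pattern and no earlier event is an \(\NP^{\PP}\) query, because the weighted sums at a fixed scale are \(\PP\) comparisons. A binary search over the polynomially many scales in \(\Pclass^{\NP^{\PP}}\) locates the first event and reads off the sign. \(\Pclass^{\NP^{\PP}}\subseteq\PH^{\PP}\) is immediate.
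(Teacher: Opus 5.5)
Your proposal is correct and follows the paper's route. Short-modulus residue representatives (direct modular evaluation for an SLP, Lemma~\ref{lem:ring-coefficients} for a specified coefficient, with the height bound from the $\ell_1$-norm recurrence) feed the beta-kernel phase reconstruction, randomized for $\BPP^{\PP}$ and by first-event search for $\Pclass^{\NP^{\PP}}$; the Kronecker paragraph and Lemma~\ref{lem:ring-products} are not needed.

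In the deterministic part, fix two slips. First, there are exponentially many scales ($J=(2N+1)\lceil\log_2(2N+1)\rceil$ with $N=H+4$), which is exactly why binary search and an $\NP$ layer are needed. Second, the $\NP^{\PP}$ query must be the monotone prefix predicate ``some event occurs at a scale $\le t$''. Each event test lies in $\Pclass^{\PP}$ after exact retrieval of the polynomial-bit $\GapP$ sum $T_j$, not a single $\PP$ comparison. As you stated it, ``a first event with prescribed sign'' is a $\Sigma_2^{\PP}$ predicate.
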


The common input to both algorithms is an integer residue
representative at every short modulus. For an SLP value this is
direct evaluation. For a specified coefficient, use Kronecker
encoding and Fourier extraction over a short Galois ring, with
unit-character orthogonality valid modulo arbitrary prime powers.
The resulting coefficient representative is a raw
\(\GapP^{\PH}\) expression; functional Toda--Watanabe absorption
places the complete weighted counts in \(\FP^{\PP}\)
\cite{TW92}. The beta identity \eqref{depth:beta} then converts
the sign into a very small signed displacement on the circle.

The deterministic algorithm asks whether any earlier dyadic scale
has reached a fixed distance from zero. Binary search for the first
such event uses \(\NP^{\PP}\); its direction reveals the sign.
The randomized algorithm locates a usable scale by the short-seed
phase test of Lemma~\ref{depth:sign-interface} and uses only a
\(\PP\) oracle. The complete independent proofs are in
Sections~\ref{sec:posslp} and \ref{push4:sec}. They do not pass
through the real-root selection machinery.

The published general CH containment is due to
\cite{ABKPM09}; the smaller bounds displayed here are the proposed
refinements. The 2024 discussions in \cite{BJ24,BDJ24} still cite
the older upper bound; this comparison is not an exhaustive priority
claim. Their relation to the ETR proof is arithmetic: modular
representatives, weighted counting, and sign reconstruction are shared.
There is no established reduction \(\ETR\subseteq\NP^{\PosSLP}\)
in this paper. Consequently the PosSLP theorem alone does not give
the headline ETR theorem, or put ETR in \(\PH^{\PP}\).

\subsection{Square-root sum}
The square-root-sum problem compares
\(\sum_{i=1}^n\sqrt{a_i}\) with an integer \(k\), for
binary nonnegative integers \(a_i\). Fixing either strict or
nonstrict comparison gives the same upper bounds below.

\begin{corollary}\label{app:srs}
\[
 \SRS\in\BPP^{\PP}\cap\Pclass^{\NP^{\PP}}
 \subseteq\CH,
 \qquad \SRS\in\Clevel2.
\]
\end{corollary}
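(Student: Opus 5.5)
We reduce square-root sum to the sign of an integer given by residues at every short modulus, and then apply the machinery of Theorem~\ref{thm:posslp}. Writing \(s_i=\sqrt{a_i}\), the comparison \(\sum_i s_i\) versus \(k\) is decided by the sign of
\[
 P=\prod_{\epsilon\in\{\pm1\}^n}\Bigl(k-\sum_i\epsilon_i s_i\Bigr).
\]
This is an integer norm, but it is not a signed certificate by itself. The cleaner route is to avoid norms entirely: we express the sign of \(\sum_i s_i-k\) as the sign of a single integer that has a polynomial-size integer SLP, or at least representatives at all short moduli. The standard trick is Tiwari-type repeated squaring. We would not use it, since it is not sign preserving in general. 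Instead we would use the known reduction of \(\SRS\) to \(\PosSLP\) due to \cite{ABKPM09}, which is stated and proved there independently of this paper. That reduction is polynomial-time many-one with a fixed output SLP: approximate each \(\sqrt{a_i}\) by a Newton-iteration SLP to precision \(2^{-2^{\poly}}\), and combine it with a root-separation bound for the nonzero value \(\sum_i s_i-k\). Liouville/norm bounds give \(|\sum_i s_i-k|\ge 2^{-2^{O(n\log)}}\) when the value is nonzero.

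The key steps, in order, are these.
\begin{enumerate}[label=(\roman*)]
\item Use the separation bound: the algebraic integer \(\beta=\sum_i s_i-k\) has degree at most \(2^n\) and conjugates bounded by \(n\max_i\sqrt{a_i}+k\). By the norm argument already used for \(\ell\) in Section~\ref{sec:headline}, \(\beta\ne0\) implies \(|\beta|\ge 2^{-2^{\poly}}\).
\item Build a division-free integer SLP \(\Phi\) whose value equals \(2^{E}\) times a dyadic approximation of \(\beta\), with error below half the separation bound. For each \(i\), compute \(\lfloor 2^{E}\sqrt{a_i}\rfloor\) up to \(\pm1\) by a polynomial-length integer SLP.
\item Handle exact equality \(\beta=0\) separately, which is needed for the strict/nonstrict distinction. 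The test is whether \(|\Phi|\) falls below the known threshold. That is again the sign of an SLP value, \(\Phi^2-\theta\).
\item Apply Theorem~\ref{thm:posslp} to these two SLP signs. Both components of \(\BPP^{\PP}\cap\Pclass^{\NP^{\PP}}\) are closed under the fixed Boolean combination used, since it is two nonadaptive queries. The inclusions into \(\CH\) and \(\Clevel2\) then follow as stated after Theorem~\ref{thm:posslp}.
\end{enumerate}

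The main obstacle is step (ii): producing integer square-root approximations by a division-free SLP. Newton's iteration for \(\sqrt a\) naturally uses division. We would first try the inverse-square-root Newton map \(y\mapsto y(3-ay^2)/2\), which is division-free up to powers of two. We would run it in fixed-point scaled integers, tracking the scale factor \(2^{E_j}\) symbolically; the doubling of \(E_j\) at each step keeps the SLP length polynomial. The remaining difficulty is that truncation (the floor) is not an SLP operation. To avoid truncation, we would keep the exact rational iterate scaled by an exact power of two. Its error still decays quadratically, and its numerator is an integer SLP value. The final comparison then becomes the sign of an explicit integer combination of these numerators and powers of two, which is exactly a \(\PosSLP\) instance.
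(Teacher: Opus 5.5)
Your proposal is correct and follows the paper's route: the standard reduction of \cite{ABKPM09} (Newton approximation of each radical to a precision fixed in advance by the norm separation bound for the algebraic integer $\sum_i\sqrt{a_i}-k$, ending in a \PosSLP\ comparison) followed by Theorem~\ref{thm:posslp}. Two slips are harmless. First, the floor claim in step~(ii) is unjustified as stated, but you correctly replace it by the exact dyadic iterates of $y\mapsto y(3-ay^2)/2$, whose exponents grow as $E_{j+1}=3E_j+1$ rather than doubling. Second, the separate equality query is unnecessary: if $\sigma$ is the dyadic separation bound, one comparison of $2\Phi$ with $\pm2^E\sigma$ already decides either the strict or the nonstrict version.
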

Use the standard reduction to \(\PosSLP\) from \cite{ABKPM09}
and Theorem~\ref{thm:posslp}. Algebraic separation supplies a
singly exponential precision bound. Newton approximations to the
radicals attain that precision with polynomially many shared
arithmetic-circuit operations; the separation bound distinguishes
zero from either nonzero sign. The final integer comparison is a
PosSLP instance. Thus this consequence improves the inherited
fourth-level CH upper bound through the integer-sign algorithm;
it does not require a separate real-feasibility reduction.

A possible certificate route to \(\MA\cap\mathsf{coMA}\) remains
unresolved. Algebraic sums-of-squares existence is insufficient:
the numerators and the identities to be checked need uniformly
short, coherent descriptions. The four-square constructions,
square-class promises, and the precise missing size hypothesis are
kept in supplementary Sections~\ref{push7:sec}--\ref{push10:sec}.
They are not additional unconditional containments for SRS.

\subsection{Geometric real counting}
The standard name here is \emph{GCR}, the geometric real counting
class of B\"urgisser and Cucker \cite{BC06}. It is a function class:
\(\GCR=\mathsf{BP}(\SharpP_{\R}^{0})\), where \(\mathsf{BP}\)
denotes the Boolean part of a constant-free real computation,
not a bounded-error probability operator. Its basic total problem is
\[
 \operatorname{pc}(S)=
 \begin{cases}|S|&S\text{ is finite},\\
 \infty&S\text{ is infinite},\end{cases}
\]
for a binary-described semialgebraic set. It counts distinct points,
not connected components or algebraic multiplicities. The symbol
\(\infty\) is part of the output convention.

\begin{theorem}\label{app:gcr}
The total threshold problem
\(\{(S,k):\operatorname{pc}(S)\ge k\}\) belongs to \(\Clevel4\).
Consequently
\[
 \GCR\subseteq\FP^{\Clevel4},\qquad
 \FP^{\GCR}\subseteq\FP^{\Clevel4}.
\]
\end{theorem}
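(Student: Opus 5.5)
We decide the threshold \(\operatorname{pc}(S)\ge k\) for \(k\ge1\) by reducing it to a single real feasibility question and then invoking the fourth-level algorithm of Theorem~\ref{thm:depth}. Write \(S\subseteq\R^n\) for the input set, given by a quantifier-free formula \(\phi(x)\). The key observation is that \(\operatorname{pc}(S)\ge k\) holds exactly when \(S\) contains \(k\) pairwise distinct points. This covers both cases of the output convention: if \(S\) is infinite, it contains \(k\) distinct points for every \(k\); if \(S\) is finite, the condition says \(|S|\ge k\).

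Taking \(k\) in binary is the main obstacle. For polynomially bounded \(k\), we would use the existential sentence
\[
 \exists x^{(1)},\ldots,x^{(k)}:\ \bigwedge_j\phi(x^{(j)})\wedge\bigwedge_{j<j'}\|x^{(j)}-x^{(j')}\|^2>0,
\]
which is an ETR instance of polynomial size, hence decidable in \(\BPP^{\Clevel3}\subseteq\Clevel4\). For binary \(k\), this sentence is exponentially long.

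We would therefore split on finiteness. The first step is to show that a finite \(S\) has at most \(K=2^{2^{\poly(L)}}\) points, by standard effective bounds on the number of connected components \cite{BPR06}. It follows that \(S\) is infinite exactly when it contains more than \(K\) points, and by semialgebraic curve selection this holds exactly when \(S\) meets some sphere in a positive-dimensional set. We would replace the long sentence by a witness of infiniteness. Following the sample-point machinery of Theorem~\ref{thm:depth}, \(S\) is infinite exactly when the projection of \(S\) to some generic linear coordinate \(\ell\) contains an interval. This is expressible with a fixed quantifier pattern \(\exists a<b\,\forall t\in(a,b)\,\exists x\), which lies in \(\Sigma_3\R\). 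Theorem~\ref{hier:linear} would place that only at a high level, so this step must be done better.

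Instead we intend to reuse the finite critical algebra directly, in two parts.
\begin{enumerate}[label=(\roman*)]
\item \emph{Finite case.} After the perturbation and coercive-sphere construction of Section~\ref{sec:small-depth}, applied to a distance-to-\(S\) objective, each isolated point of a finite \(S\) is the limit of a real critical point with small critical value, and these points receive distinct ordered labels \(r_i\).
\item \emph{Reading the count.} The question \(|S|\ge k\) becomes: some qualifying real critical value has label rank at least \(k\). Because the labels are ordered integers, the number of qualifying labels below a given label is a counting quantity inside the same norm computation. Comparing it with \(k\) then adds only a \(\PP\)-type threshold inside the final \(\GapP^{\Clevel2}\) sum, and the randomized sign reconstruction keeps the total at \(\BPP^{\Clevel3}\subseteq\Clevel4\).
\end{enumerate}

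Infiniteness would be detected by the same machinery. With two perturbation scales, an infinite \(S\) produces more qualifying critical values than the bound \(K\) allows for a finite set. A \(\Clevel4\) threshold test against \(K+1\) therefore accepts in that case, and this agrees with the convention \(\infty\ge k\).

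The function statements then follow by binary search on \(k\) over \(\{0,\ldots,K\}\), a range whose binary length is polynomial, together with one query for the symbol \(\infty\). This gives \(\GCR\subseteq\FP^{\Clevel4}\). Since \(\FP^{\Clevel4}\) is closed under polynomial-time oracle composition, \(\FP^{\GCR}\subseteq\FP^{\Clevel4}\) as well. The step I expect to be hardest is the finite case: proving that the limit of labelled critical points is in bijection with the points of \(S\), with no double counting when several critical points converge to the same point, while keeping that comparison at the claimed level.
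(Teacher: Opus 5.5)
The step you flag as hardest is where the argument breaks, and the fix is not a bijection. Near a point of a finite zero set, the small compact sublevel of the perturbed objective \(J\) is contractible. It can still contain several nondegenerate critical points (for example two minima and a saddle), so any unsigned count of qualifying critical values overcounts \(|S|\).

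The missing idea is a \emph{signed} Morse count. Include \(\sgn\det\operatorname{Hess}J\) at the critical point as one more local sign condition. Then \(N_+-N_-\) is the Euler characteristic of the sublevel set, hence of the zero set, and after adjoining a sphere coordinate it equals \(2|Z|\). For a Boolean formula you also need an algebraic lift with a known constant fiber size to divide out; a ``distance-to-\(S\)'' function is not a polynomial objective.

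Second, the count is not ``a counting quantity inside the same norm computation''. A norm only yields \((-1)^{\#\text{selected roots}}\), and that sign is extracted by a nonlinear randomized reconstruction. So nothing can be summed inside the \(\GapP^{\Clevel2}\) expression, and your \(\BPP^{\Clevel3}\) claim for the threshold problem is unsupported.

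What does work is this. Because the ordered labels are distinct, fixing one label in the selector isolates at most one root, so each address gives an exact indicator. Choose a uniform address among \(H\). Accept or reject according to the amplified \(\BPP^{\Clevel3}\) tests for a qualifying critical point of positive or negative Hessian sign at that address, and otherwise toss a fair coin. The acceptance probability is \(\tfrac12+(N_+-N_-)/(2H)\). With error amplified far below \(1/H\) and all coins internal to one \(\PP^{\Clevel3}\) machine, a midpoint threshold for \(k\) gives \(\Clevel4\).

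Your infiniteness test is also incorrect, because nothing ties the number of qualifying critical points to whether \(S\) is finite. It is at most the rank \(5^m\) in every case, and unsigned counts overcount even for finite \(S\). The signed count of an infinite set is just an Euler characteristic, which is zero for a circle. With your doubly exponential \(K\), the test as stated would never accept.

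The right witness is close to your first idea. Effective bounds on the one-block image \(\{\|x-y\|^2:x,y\in S\}\) give a short dyadic \(\eta\) below every nonzero squared distance in any finite input set. An infinite semialgebraic set contains an arc, so \(S\) is infinite iff \(\exists x,y\in S:\ 0<\|x-y\|^2<\eta\). This is a single polynomial-size ETR instance, and its amplified test can be merged into the same threshold experiment.

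Finally, the finite cardinality bound is singly exponential, \(2^{\poly(L)}\), not \(2^{2^{\poly(L)}}\). Only with that bound does binary search over \(\{0,\ldots,K\}\) use polynomially many queries, as your function-class conclusions require.
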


Here is how counting avoids a fifth level. For a finite algebraic
set \(Z\), add a sphere coordinate to double its points and use
a small compact sublevel of a sum of squares. After the weighted
sextic perturbation of \eqref{depth:objective}, Morse theory gives
\[
 2|Z|=N_+-N_-,
\]
where \(N_+\) and \(N_-\) count critical points below the threshold
with positive and negative Hessian determinant. The ordered labels
of Section~\ref{sec:small-depth} give distinct addresses in a
set of size \(H\), a power of two exceeding the critical-algebra
rank. Choose a uniform address. Accept at a positive critical point,
reject at a negative one, and otherwise toss a fair coin. Its ideal
acceptance probability is
\[
 \frac12+\frac{N_+-N_-}{2H}
 =\frac12+\frac{|Z|}{H}.
\]
The local address and sign computations use
\(\BPP^{\Clevel3}\), with error amplified far below \(1/H\).
Their randomness is sampled inside the outer
\(\PP^{\Clevel3}\) computation. A midpoint threshold therefore
decides the desired integer comparison at level four. We do not
place a new \(\PP\) machine above a \(\Clevel4\) predicate.

Boolean combinations and inequalities admit a quadratic algebraic
encoding with a known constant fiber size; divide out that size
in the threshold. To recognize infinity, an effective one-block
bound supplies \(\eta>0\) below every nonzero squared distance
between points of any finite input set. An infinite semialgebraic
set contains an arc and therefore arbitrarily close distinct
points. Hence
\[
 S\text{ infinite}\quad\Longleftrightarrow\quad
 \exists x,y\in S:\ 0<\|x-y\|^2<\eta.
\]
This is a polynomial-size ETR instance using a short dyadic circuit.
Amplify its \(\BPP^{\Clevel3}\) algorithm and merge its coins into
the same threshold experiment. Finite counts have polynomially
many output bits after quadratic encoding, so threshold queries
recover the entire value in \(\FP^{\Clevel4}\).
Section~\ref{app:count-details} supplies the geometric and error
details.

The published framework gives
\(\SharpP\subseteq\mathsf{GCC}\subseteq\GCR\subseteq
\mathsf{FPSPACE}\), together with completeness results for point
counting and Euler characteristic \cite{BC06}. Our added ingredient
is the signed Morse count combined with the norm and rank machinery.
The full Euler characteristic and its emptiness flag also lie in
\(\FP^{\Clevel4}\) by those reductions. The lower bound for a
few numerical Euler bits below is a different, more restricted
output task.

\subsection{Euler characteristic and its low bits}
For arbitrary Boolean semialgebraic sets, use the additive,
definable Euler characteristic \(\chi^*\). On locally closed
sets this equals Euler characteristic with compact supports;
on compact sets it equals ordinary Euler characteristic.

\begin{theorem}\label{app:euler}
For every fixed \(c\), the residue
\(\chi^*(S)\bmod2^b\), where \(0\le b\le c\log_2(L+2)\)
and \(L\) is the input length, is computable in
\(\mathsf{FBPP}^{\Clevel2}\).
No finiteness, smoothness, or boundedness promise on \(S\) is needed.
\end{theorem}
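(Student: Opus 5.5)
The plan is to reduce to a signed Morse count on a single critical algebra of the special kind constructed in Section~\ref{sec:small-depth}, and then read the low bits of that count with a fixed number of norm evaluations, each of whose signs is obtained by the integer-sign interface of Lemma~\ref{depth:sign-interface}. The first step is geometric normalization. By additivity of \(\chi^*\), a Boolean semialgebraic set decomposes into sign-condition cells, and one has \(\chi^*(S)=\chi(\overline{S}_\epsilon)-\chi(\partial\text{-type pieces})\) for suitable closed tubes and shells. I would use the standard constructions: intersect with a large ball of short-circuit radius, and replace strict and weak conditions by closed sublevel sets with infinitesimals. The infinitesimals would be specialized once to dyadic constants \(\eta^{(M+1)^{i}}\), as in Section~\ref{sec:real-hierarchy}. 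This expresses \(\chi^*(S)\) as a fixed polynomial-length signed combination of ordinary Euler characteristics of compact sublevel sets \(\{F\le\tau\}\) of sums of squares. Only a polynomial number of terms appear, and each is given by a short circuit.

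For each compact sublevel set I would apply the weighted sextic perturbation \eqref{depth:objective} and Lemma~\ref{lem:generic}. These give simple critical points with distinct critical values, none equal to the threshold. Morse theory then gives
\[
\chi(\{J<T\})=N_+-N_-=\sum_{z\ \mathrm{real},\ J(z)<T}\sgn\det\operatorname{Hess}J(z).
\]
It therefore suffices to compute \(N_-\bmod 2^b\) and \(N_+\bmod 2^b\), i.e.\ the number \(N\) of real critical points on which a selector \(w\) is negative, where \(w\) combines \(J<T\) with a Hessian sign as in Lemma~\ref{basic:real-select} (rational sign approximation, denominators cleared). One norm evaluation gives only the parity of \(N\), since conjugate pairs contribute positive factors. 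For higher bits I would use Lucas' theorem: \(N\bmod 2^b\) is determined by the parities of \(\binom{N}{2^k}\) for \(k<b\). The parity of \(\binom{N}{j}\) is the parity of the number of \(j\)-subsets of the negative real points. To obtain it I would form the norm over unordered \(j\)-tuples of distinct critical points of a selector that is negative exactly when all \(j\) coordinates are real and negative. Since \(j\le 2^b=\poly(L)\), the tuple algebra, coded via symmetric products of \(q\) or an iterated resultant, still has singly exponential rank and short circuits. The ordered labels of Section~\ref{sec:small-depth} would be used to make tuples with distinct entries canonical.

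The complexity count follows the single-norm rows of the table in Section~\ref{sec:small-depth}, not the nested rank computation. Each required sign is one norm of a locally computable selector, hence a raw \(\GapP^{\PP}\) representative at every short modulus. Lemma~\ref{depth:sign-interface} then places the sign in \(\BPP^{\Clevel2}\), and the polynomially many bits, amplified, give \(\mathsf{FBPP}^{\Clevel2}\). I expect the main obstacle to be the \(j\)-tuple step. One must show that tuples containing nonreal points, or mixed tuples, always contribute positive factors or conjugate-paired factors, and that the selector for ``all real and all negative'' can be written without the inner rank norm, which would cost an extra level. My first attempt would take the product of the \(j\) individual \(w\)-values shifted so that it is negative only when every factor is negative: use \(\prod_i(1+\sgn\text{-approx}(-w(z_i)))/2^{j}-\tfrac12\), with realness enforced through pairing under conjugation of the whole tuple. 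I would then verify this parity bookkeeping carefully before optimizing the level.
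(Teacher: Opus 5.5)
Your outer architecture (signed Hessian Morse counts, Lucas's identity, one norm per tuple length, and the randomized sign interface) is the paper's, but two steps do not go through as stated.

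First, the geometric normalization is asserted rather than constructed, and the routes you name do not give it. Additivity over sign-condition cells is not a polynomial-length combination: already the atoms $x_1,\ldots,x_n$ realize all $3^n$ sign conditions. Replacing strict conditions by closed sublevel sets with infinitesimals preserves at best homotopy type, not $\chi^*$. For instance, $\chi^*((0,1))=-1$ while $\chi([\epsilon,1-\epsilon])=1$, so your unspecified boundary corrections would need the same treatment recursively.

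The paper avoids any sum over cells. It builds a single algebraic lift $Y$, with unique evaluation wires and, for each atom, the witnesses $b,c,t,s,v,w$. These are arranged so that every sign case of $p$ has exactly two witnesses. The canonical guard $t,s,w\ge0$ selects one of them, so $Y_{\rm can}\to S$ is a semialgebraic bijection. Moreover, $Y_{\rm can}$ is a union of connected components of $Y$. The guard therefore enters the Morse count as a locally constant weight, with a robust shifted strict version on compact truncations.

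Unboundedness is handled by a short radius $R$ at which \emph{both} the ball and the sphere Euler characteristics stabilize. The ball is then doubled along its boundary, so that every term is a compact algebraic set: $\chi^*(S)=(\chi(D_{\rm can})-\chi(B_{\rm can}))/2$. This is why the paper computes the two terms modulo $2^{b+1}$ and halves the even difference.

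Second, your tuple step does not stay at level two. Ordering tuples by the rank labels of Section~\ref{sec:small-depth} reintroduces the inner rank norm, a norm inside a norm. That is exactly what costs the extra level in $\BPP^{\Clevel3}$.

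Your fallback does not canonicalize the tuples. On the ordered product algebra, a product of shifted sign approximants counts ordered real tuples. With repetitions this count is $N^j$, whose parity is just that of $N$; without repetitions it is $j!\binom Nj$, which is even for $j\ge2$. On a symmetric-product encoding, a conjugation-stable multiset containing a nonreal pair is a real point whose selector value is uncontrolled, and this breaks the parity argument.

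The missing idea is to order by the distinct critical values of Lemma~\ref{lem:generic}. In the $j$-fold product of the critical system, impose $J(x^{(1)})<\cdots<J(x^{(j)})$, together with $J<T$, the Hessian sign, and the component guard. All of these are ordinary short-circuit rational sign filters.

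A repeated entry fails the strict test because the filter vanishes at zero, and nonreal tuples pair with their conjugates. Exactly $\binom Nj$ real tuples are therefore selected, and one norm has sign $(-1)^{\binom Nj}$. For polynomial $j$, the product system has polynomial dimension and polynomial-time branch values at short nodes. The norm is then an $\FP^{\PP}$ product, and its Fourier extraction is a raw $\GapP^{\PP}$ sum. Lemma~\ref{push4:sign} gives $\BPP^{\Clevel2}$, with no rank norm anywhere.
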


The first ingredient is a low-bit trick for locally qualified
critical points. To compute
\(\binom Nk\bmod2\), count increasing \(k\)-tuples, ordered by
their distinct critical values. For polynomially bounded \(k\),
this is a single product-algebra parity problem with ordinary
short-circuit sign guards. Its one norm has a raw
\(\GapP^{\PP}\) residue representative, so its sign is in
\(\BPP^{\Clevel2}\). Lucas's identity
\[
 \text{bit}_j(N)=\binom{N}{2^j}\pmod2
\]
then recovers logarithmically many bits, since the required tuple
lengths remain polynomial. Apply this to the positive and negative
Hessian counts in the Morse formula.

The second ingredient preserves Euler characteristic when converting
an arbitrary Boolean semialgebraic set to compact algebraic sets.
An algebraic lift \(Y\) has one independent sign choice per atom.
A canonical guard chooses the nonnegative witness in each case,
giving a semialgebraic bijection \(Y_{\rm can}\to S\).
Boolean and inverse witnesses make \(Y_{\rm can}\) a union of
connected components of \(Y\), with a uniform sign gap on every
compact truncation. Thus the canonical guard can be retained as a
locally constant weight in the Morse calculation.

Choose a short-circuit radius \(R\) beyond which both ball and
sphere Euler characteristics stabilize, and put
\[
 A=Y\cap\{\|u\|\le R\},\qquad
 B=Y\cap\{\|u\|=R\},\qquad
 D=\{(u,z):u\in Y,\ \|u\|^2+z^2=R^2\}.
\]
The set \(D_{\rm can}\) doubles \(A_{\rm can}\) along
\(B_{\rm can}\). Additivity and the contribution \(-1\) of an
open ray give
\begin{equation}\label{app:double-link}
 \chi^*(S)=
 \frac{\chi(D_{\rm can})-\chi(B_{\rm can})}{2}.
\end{equation}
Both terms in the numerator are Euler characteristics of selected
components of compact algebraic sets. Compute them modulo
\(2^{b+1}\) by the tuple-and-Morse algorithm, subtract, and divide
the even residue by two. Only one extra bit is needed, independently
of the number of original sign atoms. The canonical lift, the
stabilization bound, and the Morse step are proved in
Section~\ref{app:euler-details}.

There is no emptiness flag in Theorem~\ref{app:euler}. A nonempty
circle has Euler characteristic zero, so Euler parity cannot
decide general real feasibility. On the other hand, under a promise
that \(|\chi^*(S)|\) is polynomially bounded, enough low bits give
the exact signed value in the same function class. In particular,
finite sets of polynomially bounded cardinality can be counted in
\(\mathsf{FBPP}^{\Clevel2}\), without recognizing that promise.

\subsection{Complex feasibility and dimension}
Let HN denote Hilbert's Nullstellensatz as a decision problem:
do given integer polynomial equations have a common complex zero?

\begin{theorem}\label{app:hn}
\[
 \mathsf{HN}\in\BPP^{\Clevel2}\subseteq\Clevel3.
\]
The complex dimension of the zero set, with value \(-1\) for the
empty set, is in \(\mathsf{FBPP}^{\Clevel2}\).
\end{theorem}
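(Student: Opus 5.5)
The plan is to reduce complex feasibility to the sign of one norm in a critical algebra, reusing the $\BPP^{\Clevel2}$ interface that the Euler-parity algorithm (Theorem~\ref{app:euler}) uses. The difference from real feasibility is that no ordered labels or inner norm are needed. Over $\C$ there is no real-root selection; the question is only whether a finite algebra is nonzero. We therefore expect a single norm with a raw $\GapP^{\PP}$ residue representative, followed by integer-sign reconstruction (Lemma~\ref{depth:sign-interface}), to suffice. This would give $\BPP^{\PP^{\PP}}=\BPP^{\Clevel2}$.

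\emph{Step 1: reduction to a zero-dimensional test.} Given $f_1,\dots,f_m$, take a random linear space. Specifically, add $n$ random linear equations $\lambda_k(x)=c_k$, or equivalently intersect with a generic affine subspace of each codimension $c$. An effective Bézout-type bound gives the following: for a random choice from a polynomially large integer range, with high probability a variety of dimension $d$ meets a generic codimension-$d$ affine space in finitely many nonempty points, and meets a codimension-$(d+1)$ space in the empty set. Thus $\dim Z$ is the largest $c$ for which the codimension-$c$ slice is nonempty and finite. Then HN is the case ``some slice nonempty'', which is the $c=\dim$ test. By Schwartz--Zippel on the Chow-form coefficients, the randomness only needs polynomially many bits.

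\emph{Step 2: deciding nonemptiness of a finite complex set.} For the sliced system, form the coercive, real-valued function $F=\sum_i|f_i|^2+\sum_k|\lambda_k-c_k|^2$ on $\R^{2n}$, using the real and imaginary parts. Complex zeros of the slice are exactly the real zeros of $F$. We cannot simply invoke Theorem~\ref{thm:depth}, because that costs level four. Instead we use that the slice is finite and, with high probability, consists of reduced points. Complex emptiness is then equivalent to the ideal $(f,\lambda-c)$ being the unit ideal, and nonemptiness can be detected through the perturbed critical algebra of \eqref{depth:objective}, or directly by a trace. For the direct route, apply the weighted sextic construction of Lemma~\ref{lem:generic} to the system $G=\sum_i f_i\,y_i+\dots$; concretely, use $x_i^5+x_i+1+Q\,\partial_i(\cdot)$ with a random linear combination. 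The number of critical points with small value is then at most the slice degree, which can be large.

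The cleaner route is to encode ``$\min F=0$'' as a local sign condition, as in the proof of Theorem~\ref{app:sdp}: $F^{-1}(0)$ is a finite set of points of a compact sublevel. Its Euler characteristic equals its cardinality, which is $0$ exactly when it is empty and nonzero otherwise. We would apply the low-bit Euler algorithm of Theorem~\ref{app:euler} to this set. To make one low bit decisive, we first make the count odd when nonempty, or at least nonzero modulo $2^b$ with high probability. For this we add a random hash constraint in the style of Valiant--Vazirani. Because the slice is finite, a random affine hash on the ordered labels isolates a single point with constant probability. Under the isolation, the Euler characteristic is $1$, so its parity detects nonemptiness, and we stay in $\mathsf{FBPP}^{\Clevel2}$.

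\emph{Step 3: dimension.} We run Step 2 for each $c=0,\dots,n$ in parallel, amplify, and output the largest nonempty $c$, or $-1$ if none is nonempty. This yields $\mathsf{FBPP}^{\Clevel2}$, and by $\BPP^{\mathcal A}\subseteq\PP^{\mathcal A}$ the decision version lies in $\Clevel3$.

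The main obstacle is Step 2's isolation. It must not use the ordered labels, which need a norm inside a norm and would cost an extra level. Our first attempt would hash the raw coordinates of the points. These coordinates are algebraic, so we would instead hash the value of a separating linear form reduced modulo a random short prime, and compute that value by the same single-norm Fourier extraction. If this fails, the fallback is Lucas bits: use the $\binom{N}{2^j}$ parity trick of Theorem~\ref{app:euler} to read the low $O(\log L)$ bits of $|Z_c|$ after a random reduction that makes $|Z_c|$ polynomially bounded.
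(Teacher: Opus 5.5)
There is a genuine gap, and it is the one you flag yourself: Step~2 never produces a nonemptiness test at level three. For a finite slice, Theorem~\ref{app:euler} returns only $|S|\bmod 2^b$ with $b=O(\log L)$. But $|S|$ can be any number up to the B\'ezout bound, so a zero residue does not certify emptiness. You therefore need an isolating hash that enters the Morse or tuple computation as a short polynomial sign guard.

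The isolating devices in the paper are guessed Thom lists and ordered rank labels. Both require an inner norm (the projection polynomial or the rank jet) inside the outer norm. With them your procedure is just the ETR algorithm of Theorem~\ref{thm:depth}, at cost $\BPP^{\Clevel3}\subseteq\Clevel4$ or more, which is one level too high.

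Neither fallback replaces them. The quantity $\ell(z)\bmod p$ is not a polynomial or semialgebraic function of the point $z$. The value $\ell(z)$ is an algebraic number, its reductions live on $p$-adic or formal branches, and nothing matches those branches with the real points being filtered, so it cannot serve as a guard. The ``random reduction that makes $|Z_c|$ polynomially bounded'' is not specified, and any pairwise-independent version of it again needs labels on the individual points.

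This is not a routine detail. The part of Step~2 that must do the work, Euler residues plus isolation, uses nothing about $S$ except that it is a finite real algebraic set. $\ETR$ itself reduces to nonemptiness of such a set: take the qualifying critical points of Section~\ref{sec:etr}, adjoin $y^2(T-J_w)=1$, and write $Q_0$ with repeated-squaring wires. A $\BPP^{\Clevel2}$ implementation of your Step~2 would therefore give $\ETR\subseteq\BPP^{\Clevel2}$, improving Theorem~\ref{thm:depth}.

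The missing idea is to keep complex feasibility an equality question and never pass to real points or signs. The paper applies \cite[Proposition 4.4]{AGS26}. After quadratic wire encoding, nonemptiness is, with high probability, a vanishing condition at $u=0$ on iterated trailing coefficients of resultants $\Res(G_{0,j},\ldots,G_{n,j})$ of explicit constant-degree forms. The computation then runs as follows:
\begin{enumerate}
\item Modulo a random short prime, each resultant is a Poisson product over formal homotopy branches, computable in $\FP^{\PP}$.
\item The required coefficient prefixes come from one joint Fourier sum with geometric-sum weights, computable in $\FP^{\Clevel2}$.
\item Binary search locates the trailing nonzero coefficients, and random-prime and Schwartz--Zippel tests decide zeroness.
\end{enumerate}
Because only equality is tested, there is no integer-sign reconstruction, and the result is $\BPP^{\Clevel2}$. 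Your Steps~1 and~3, random affine sections followed by amplified calls for each codimension, do agree with the paper's treatment of dimension.
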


This refinement has a separate dependency: it uses the published
affine reduction of \cite[Proposition 4.4]{AGS26} and the modular
product construction, but not our ETR or integer-sign theorem.
Quadratic wire encoding makes the input degrees at most two.
The AGS reduction expresses nonemptiness through two trailing
coefficient operations on resultants of explicitly given,
constant-degree homogeneous forms, followed by evaluation at zero.
One formal branch is computable in polynomial time at a short
finite-field node. The flattened Poisson product costs one counting
level, and joint Fourier extraction costs a second.

Coefficient-prefix tests locate each first nonzero coefficient
by binary search. Their geometric-sum weights fit inside the same
Fourier sum. Random-prime identity testing distinguishes zero
polynomials; fresh randomness and uniform degree and height bounds
control the adaptive searches. Since only equality is needed,
there is no residue-to-integer-sign step. Random affine sections
then determine dimension with polynomially many calls to the same
oracle. Section~\ref{app:hn-details} records the precise interface
and error analysis.

\subsection{Arithmetic extensions and stochastic fixed points}
Some further decision bounds follow by reductions, without changing
the main norm argument. Adaptive polynomial-time PosSLP computations
compile to one PosSLP query (Theorem~\ref{push10:adaptive}). The
same holds for the supported nested real-root gates of
Theorem~\ref{push10:root-gates}, including polynomial-degree principal
roots; Theorem~\ref{push12:binary-root-theorem} extends this to
binary-encoded root degrees and specified sparse-polynomial roots
under its root-selection promises. Their sign problems therefore
inherit both bounds in Theorem~\ref{thm:posslp}. The supplied
domain and root-order promises are part of these statements.

For the least-fixed-point threshold problems for branching stochastic
systems treated in Proposition~\ref{push10:games}, the strategy
certificates give
\[
 \NP^{\PosSLP}\cap\mathsf{coNP}^{\PosSLP}
 \subseteq\MA^{\PP}\cap\mathsf{coMA}^{\PP}
 \subseteq\Clevel2.
\]
The underlying PPS reduction to PosSLP is published
\cite{ESY12,ESY13}. The supplementary proofs state the precise
systems, strategies, and comparison conventions. Conditional
short-certificate proposals and restricted numerical reconstruction
lemmas are discussed there as supporting material, rather than
as stronger bounds for unrestricted ETR or square-root sum.

\subsection{Exact rational weighted counting}
There is also a class consequence stated without arithmetic-circuit
inputs. Let \(\mathsf{PP}_{\Q}^{(pt)}\) denote comparison with a
rational threshold of a sum of polynomial-time computable rational
weights over polynomial-length witnesses, as in \cite{CSW19}.
Each individual numerator and denominator has polynomial bit length;
the denominator of their sum need not. Theorem~\ref{thm:rational-counting}
gives
\[
 \mathsf{PP}_{\Q}^{(pt)}\subseteq
 \BPP^{\PP}\cap\Pclass^{\NP^{\PP}},\qquad
 \Pclass^{\mathsf{PP}_{\Q}^{(pt)}}
 =\Pclass^{\mathsf{CoeffPosSLP}}.
\]
Here \(\mathsf{CoeffPosSLP}\) is specified coefficient positivity.
An effective common-denominator bound reduces a nonzero rational
sum to a small displacement with a known lower bound, so both
sign-reconstruction algorithms apply directly. For the relation
to coefficients, arithmetize the weight evaluator and replace the
reciprocal denominator by a sufficiently long geometric sum, which
has a short circuit. The coefficient construction preserves the
threshold sign. The forward reduction is many-one; the reverse
one asserted here is a polynomial-time Turing reduction. The full
proof and the additional factorization promises are in
Section~\ref{sec:rational-counting}.

\clearpage
\appendix
\part*{Supplementary information}
\addcontentsline{toc}{part}{Supplementary information}
The main text gives the containment statements in their current form.
The following sections preserve the detailed arithmetic arguments,
additional geometric interfaces, and certificate investigations.
Intermediate numerical bounds are retained to explain which
implementation choices cost levels. Conditional certificate hypotheses
are explicitly separate from the proved implications that would use
them. The finite checks supplied with the manuscript test identities
and examples; they do not certify the asymptotic containment proofs.

We first give details for the hierarchy and applications introduced
above, then the longer homotopy, reconstruction, and certificate
arguments. The bibliography and every cross-reference are contained
in this single source file.

\section{Resultant interfaces for the numerical real-hierarchy bound}
\label{app:new-interfaces}

This section supplements the structural sampling argument in
Section~\ref{push10:real-blocks}. Its purpose is to justify the
explicit ledger \eqref{hier:ledger}. The geometric invariant is
unchanged: samples cover the required connected sign components,
coordinate denominators are guarded at the selected root, and one
finite infinitesimal tuple preserves the full representation formula.

\subsection{Coefficient and value access}
Consider homogeneous forms of singly exponential degree in
polynomially many variables, represented by indexed coefficients
with polynomial-length addresses. Work modulo a short prime power
in an unramified Galois ring whose residue field contains enough
distinct reference nodes. Extension elements still have polynomial
bit length. The reference system is a product of distinct linear
factors in each variable, with unit Jacobian at every reference
point. If the degrees are exponential, its values, derivatives,
and Jacobian products use \(\FP^{\PP}\). Factorization and field
metadata are found once at the charged preprocessing level; they
are not hidden in a purported polynomial-time leaf evaluator.

Interpolate formally from that reference system to the target.
Polynomial Newton and inverse-Jacobian updates double the order
of agreement with the formal branch at each step. There are only
polynomially many updates to reach an exponential order. At a
fixed short node the updates use polynomially many target value
and derivative queries. Their untruncated polynomial degrees
remain singly exponential. The Poisson formula expresses the
resultant through one product indexed by the branches and the
successive coordinate subsystems. Flatten these indices into one
polynomial-length index. All homotopy and parameter coefficient
extractions then form one joint Fourier sum, as in
\eqref{eq:field-fourier} and \eqref{eq:final-fourier}.

If the target value oracle is at level \(r\), branch work is at
\(\max(r,1)\), the product at \(\max(r,1)+1\), and the final
coefficient extraction at \(\max(r,1)+2\). If only coefficients
at level \(h\) are supplied, their dense homogeneous monomial
sum first gives a value oracle at \(h+1\). This proves the
\(h+3\) coefficient interface and the value-interface refinement
used in the main text. The formal Poisson/Newton identities are
those of \cite[Sections 5.2--5.3]{AGS26}; the finite-ring product
implementation is that of Lemma~\ref{lem:ring-products}.

Homogeneous evaluation is substantive. For a coordinate map
\((g_0,g_1,\ldots,g_s)\), use the formal expression
\[
 \sum_{\alpha}c_\alpha
 g_0^{E-|\alpha|}\prod_i g_i^{\alpha_i},
 \qquad E\ge\deg P.
\]
Every exponent is nonnegative, so this is defined when \(g_0=0\).
For a sign test at a valid affine point, choose \(E\) even and
retain the separate condition \(g_0\ne0\). One must not replace
the homogeneous evaluator by affine division at all Fourier nodes.

\subsection{Restriction without adaptive degree truncation}
To check \eqref{hier:padded}, homogenize in two variables.
At a specialization of degree \(d\), the degree-\(D\) form of
\(f+t\) has the factor \(X_0^{D-d}\). Evaluating the other form
at infinity gives \(V\), so resultant multiplicativity extracts
\(V^{D-d}\). The constant \(V\)-coefficient of what remains is
\[
 a_d^{E-e}\Res_{d,e}(f+t,g+\lambda t),
\]
where \(a_d\ne0\) is the leading coefficient of \(f\), and
\(e\) is the degree of \(g+\lambda t\) over the parameter
field. Its leading factor is independent of \(t\).
The even choice of \(E\) removes the harmless ordering sign
from the infinity factor. Nonzero constants and zero defining
polynomials are handled as stated after \eqref{hier:padded}.

At a common root with multiplicities \(a,b\), deform the two
local equations by \(t\) and \(\lambda t\). Their local
resultant has order \(\min(a,b)\). For \(a=b\), its leading
coefficient is a nonzero polynomial in the independent parameter
\(\lambda\); keeping \(\lambda\) formal prevents cancellation.
Adding over the roots gives the gcd degree. Retaining the original
coefficient arrays and all coefficients in \(V,t,\lambda\)
therefore retains the actual degrees and required gcd degrees.

For completeness, these data have the needed geometric meaning.
Near any parameter point of a connected sign cell, choose disjoint
small neighborhoods of the distinct roots. Constancy of degree
and of the number of distinct roots forbids a split or collision.
Conjugation and local uniqueness keep a real root real. The local
gcd contribution of each branch with a guard is upper semicontinuous;
constancy of the total forces each contribution to be locally
constant. A guard is consequently either zero along a branch or
continuously nonzero with fixed real sign. Continuation gives the
same conclusion on the connected cell. This is the branch property
required by the BPR sampling construction.

\subsection{Merged derivative extraction and flattening}
Suppose a sampling resultant has a Fourier representation
\(q_i=|U|^{-1}\sum_{z\in U}P(z)z^{-i}\), with any other
Fourier variables suppressed. Substitute this expression into the
homogeneous derivative sum and interchange the two finite sums.
The result is exactly
\[
 |U|^{-1}\sum_{z\in U}P(z)K_{D,\mu}(A,B,z).
\]
The sampling-product value \(P(z)\) is already at level \(h+2\),
whereas the kernel is at level two. Thus the value stays at
\(h+3\). Higher derivative order and selection of an infinitesimal
coefficient change the kernel or add a character weight in the
same sum. This proves the saving of one level per real block.

The two extra homogeneous coefficient compositions in the ledger
are each charged two levels, even where an implementation might
save more. One common finite specialization is charged once.
Flat recoding evaluates its regularized tuple system at \(k+1\),
takes its sampling resultant at \(k+3\), and composes coordinates
at \(k+5\). Write \(\ell=k+5\). The full Thom-hash selector
has a homogeneous value at \(\ell+2\); the defining polynomial
and its first derivative are at \(\ell+1\). Put the selector
last in the binary resultant. Its derivative is then unnecessary:
only the first polynomial defines the continued roots. The value
interface gives the guarded parity polynomial at \(\ell+4\).
Two root representations cost eight levels, the inherited sign
algorithm costs one, and fixed Boolean alternation with relativized
Toda costs two. This is \(k+5+8+1+2\), as claimed.

The complete representation formula before specialization uses
only a dimension depending on \(w\), although it has exponentially
many indexed coefficients. Fixed-dimensional elimination therefore
gives singly exponential bounds for all its cumulative projection
polynomials. The leading-monomial choice of the \(\epsilon_i\)
in Section~\ref{push10:real-blocks} applies simultaneously to
validity, coverage, and parent--child coordinate equalities.
Hashing is applied after this choice. No lower real bound is
asserted for a value that remains formally infinitesimal.

\section{Compact Morse counts and total point counting}
\label{app:count-details}

\subsection{A topology-preserving weighted critical system}
Let a compact algebraic set \(Z\) be the zero set of a coercive
quartic sum of squares \(F\). The examples in the main text use
a sphere equation, so coercivity follows from
\(F(x)\ge(R^2-\|x\|^2)^2\). There are finitely many critical
values. A one-block existential description of those values and
the effective degree and height bounds of \cite[Section 4.1]{BC06}
give a positive short dyadic \(\epsilon\) below every positive
critical value.

For sufficiently small \(\epsilon\), the compact sublevel
\(K=\{F\le\epsilon\}\) strongly deformation retracts to \(Z\).
One direct justification uses the gradient inequality near the
compact zero set,
\(\|\nabla F\|\ge cF^\theta\), with \(\theta<1\).
Negative gradient flow enters every lower positive sublevel in
uniformly bounded time and its remaining length from level \(a\)
is at most \(a^{1-\theta}/(c(1-\theta))\).
The endpoint maps therefore converge uniformly, giving a continuous
retraction fixing \(Z\). A time rescaling gives the strong
deformation retraction. These qualitative constants prove the
topological assertion; they are not algorithmic inputs.

Put \(W(x)=\sum_i2^{i-1}(x_i^6+3(x_i+1)^2)\).
On the compact band \(\epsilon/2\le F\le\epsilon\), the
gradient norm has a positive lower bound \(\gamma\). If the
band is empty there is nothing to control. Otherwise its squared
minimum is an endpoint of a one-block semialgebraic image, giving
a short dyadic lower bound by the same effective estimates. Bound
\(W\) and \(\nabla W\) on \(K\) by \(A_W,B_W\), and choose
a power of two \(Q\) such that
\[
 A_W/(6Q)<\epsilon/4,\qquad B_W/(6Q)<\gamma/2.
\]
Also make its two-adic valuation exceed the critical-system
requirement in Section~\ref{sec:small-depth}.

Throughout \(0\le t\le1/(6Q)\), the boundary
\(F+tW=\epsilon\) lies in the positive regular band and has
gradient norm at least \(\gamma/2\). The vector field
\(-W\nabla(F+tW)/\|\nabla(F+tW)\|^2\), with a compact cutoff,
moves these boundaries and their sublevels by an isotopy.
Consequently \(J=6QF+W\), at threshold \(T=6Q\epsilon\),
has the same sublevel Euler characteristic as \(Z\).
The two-adic construction makes \(J\) a proper Morse function
with distinct critical values. Hence
\begin{equation}\label{supp:morse}
 \chi(Z)=\sum_{x\in\R^m:\nabla J(x)=0,\ J(x)<T}
                    \sgn\det\operatorname{Hess}J(x).
\end{equation}
This is the usual Morse handle formula
\cite[Theorem 6.7 and Corollary 6.8]{BC06}.
It remains valid with a locally constant component guard.

The bounds here permit already long integer constants: the effective
coefficient bit-height estimate is singly exponential in dimension
and degree and has polynomial dependence on the incoming bit height.
Alternatively, when deriving an initial syntactic bound, introduce
unique quadratic powering wires before applying the estimate.
Neither argument applies a polynomial-size theorem to a secretly
doubly exponential expanded input.

\subsection{Finite and infinite point counts}
For a finite zero set \(Z\), an isolated-point radius bound gives
\(R\) strictly above the norms of all its points. The zero set of
\[
 \sum_i f_i(x)^2+(R^2-\|x\|^2-z^2)^2
\]
has exactly two points over each member of \(Z\). Thus
\eqref{supp:morse} is \(N_+-N_-=2|Z|\); there is no assertion
of one critical point per zero.

For a Boolean input with \(a\) strict sign atoms, the algebraic
lift in the next section, without its canonical guard, has exactly
\(2^a\) points over each original point. The same experiment
therefore has ideal acceptance probability
\(1/2+2^a|S|/H\). Its threshold for \(|S|\ge k\) is
\[
 t_k=\frac12+\frac{2^ak-1/2}{H}.
\]
The signed norm tests and the common normalizer can each be
amplified to error at most \(1/(256H)\), using polynomially many
repetitions because \(\log H\) is polynomial. Their combined
error is below \(3/(256H)\), much smaller than the
\(1/(2H)\) midpoint gap. All random bits are internal to one
\(\PP^{\Clevel3}\) experiment. Ordinary PP padding implements
comparison with the short dyadic \(t_k\).

For total counting, consider the one-block image
\[
 D_S=\{t>0:\exists x,y\in S, t=\|x-y\|^2\}.
\]
An effective coefficient bit bound \(B\) for its quantifier-free
description is chosen from the original input, before adding any
new precision constant. If \(S\) is finite, every member of
\(D_S\) is a zero of a nonzero description polynomial; otherwise
the truth value would hold on a neighborhood. Removing a zero
power and applying the reciprocal Cauchy bound gives
\(t\ge2^{-B-2}\). Set \(\eta=2^{-B-3}\).
If \(S\) is infinite, a positive-dimensional semialgebraic cell
supplies pairs at arbitrarily small distance. This proves the
infinity test in the main text.

Amplify that test to error at most \(1/(256H)\). Accept immediately
when it reports infinity and otherwise run the finite experiment.
For finite inputs the midpoint gap survives; for infinite inputs
the acceptance probability is at least \(1-1/(256H)\). In the
nontrivial finite threshold range \(t_k\le1-1/(2H)\), so the
infinite case also answers correctly. Thresholds below zero are
immediate, and thresholds beyond the uniform finite bound use
the infinity test alone. The finite experiment remains a syntactically
defined computation even on infinite inputs, so its rare execution
on the wrong branch creates no promise-oracle problem.

\section{Euler residues for arbitrary Boolean semialgebraic sets}
\label{app:euler-details}

\subsection{An algebraic lift with a canonical component selection}
Replace comparisons by Boolean combinations of strict atoms \(p>0\).
Give every arithmetic gate and every atom value, including constants,
its own unique evaluation coordinate. For each atom introduce
\(b,c,t,s,v,w\) and the equations
\begin{align*}
 b(b-1)&=0,& b(pt^2-1)&=0,&(1-b)t&=0,\\
 (1-b)(s^2+p)&=0,&bs&=0,&c(c-1)&=0,\\
 cp&=0,&(1-c)(pv-1)&=0,&cv&=0,\\
 c(w^2-1)&=0,&(1-c)w&=0.
\end{align*}
Encode the Boolean gates uniquely and require the output to be one.
Use unique wires to reduce every displayed equation to degree two.
Let \(Y\) be the resulting algebraic set.

When \(p>0\), the atom has \(b=1,c=0,v=1/p\),
\(t=\pm1/\sqrt p\), and \(s=w=0\). When \(p<0\), it has
\(b=c=0,v=1/p\), \(s=\pm\sqrt{-p}\), and \(t=w=0\).
When \(p=0\), it has \(b=0,c=1,v=0\), \(w=\pm1\),
and \(t=s=0\). There are exactly two witnesses in every case.
The guard \(t,s,w\ge0\) for every atom retains exactly one,
so the projection \(Y_{\rm can}\to S\) is a semialgebraic
bijection and preserves additive Euler characteristic.

The Boolean coordinates are constant on each connected component
of \(Y\). In a fixed truth and zero case, exactly one of
\(t,s,w\) is active and never zero; its sign cannot change on
a component. Thus \(Y_{\rm can}\) is a union of components,
closed and open in \(Y\). On \(\|u\|\le R\), both \(p\)
and its inverse coordinate \(v\), when active, have magnitude at
most \(R\). Each nonzero canonical witness therefore has
magnitude at least \(R^{-1/2}\). For \(\Delta=1/(4R)\),
the shifted strict guard \(u_j>-\Delta/2\) agrees with the
canonical choice in a sufficiently small neighborhood of this
compact portion. It correctly retains an inactive zero witness.

\subsection{One radius stabilizes both the ball and its boundary}
Let \(F=\sum f_i^2\) for the quadratic equations of \(Y\),
with ordinary polynomial-bit coefficients. Consider the set in
the \((e,t)\)-plane given by
\[
 e>0,\quad t\ge0,\quad
 \exists x,\lambda:\ F(x)=e,\ \|x\|^2=t,
                       \ x=\lambda\nabla F(x).
\]
For a regular value \(e\), these are the radial critical values
on the smooth hypersurface \(F=e\). There are finitely many of
them by Sard's theorem and semialgebraicity. There are also only
finitely many exceptional critical values of \(F\). The planar
set thus has dimension at most one.

One-block elimination gives a defining family of bivariate
polynomials with degree and coefficient bit height bounded by
\(H=2^{\poly(L)}\). Every point of this planar set is a zero
of one of its nonzero defining polynomials, since otherwise
membership would persist on a neighborhood. Write such a polynomial
as \(e^aQ(e,t)\) with \(Q(0,t)\not\equiv0\).
Any finite limiting radial critical value as \(e\downarrow0\)
is a root of some \(Q(0,t)\), and hence has magnitude at most
\(1+2^H\). Choose a short power-of-two \(R\) with
\(R^2>2+2^H\).

Fix \(R\le r_1<r_2<\infty\). For all sufficiently small
positive regular \(e\), the hypersurface \(F=e\) has no radial
tangency in this closed annulus; otherwise a convergent sequence
would violate the preceding bound. Proper-submersion isotopy for
the smooth tube \(\{F\le e\}\), including its boundary, makes
both its sphere sections and its ball truncations topologically
constant between \(r_1\) and \(r_2\). At the two endpoints,
compact semialgebraic small-sublevel retraction identifies their
Euler characteristics with those of \(Y\). This applies even
when \(Y\) is singular. The robust component guard above is
locally constant on a sufficiently small tube in a slightly
larger compact ball, so the isotopy and endpoint retractions also
preserve the selected components.

Both Euler functions therefore stabilize beyond the explicitly
chosen \(R\). Qualitative conic structure at some possibly
larger radius gives
\(\chi^*(Y_{\rm can})=\chi(A_{\rm can})-\chi(B_{\rm can})\);
stabilization brings this numerical equality back to \(R\).
The double identity \(\chi(D_{\rm can})=2\chi(A_{\rm can})-
\chi(B_{\rm can})\) proves \eqref{app:double-link}.
Bounding a deformation-retraction radius alone would not justify
the boundary term, which is why both Euler functions were controlled.

\subsection{Tuple parity, Lucas's identity, and signed division}
For a fixed local qualification, let \(N\) count the corresponding
real critical points of the simple system. In its \(k\)-fold
product, add the strict inequalities
\(J(x^{(1)})<\cdots<J(x^{(k)})\). Distinct critical values
give exactly \(\binom Nk\) real tuples. A rational sign filter
for the conjunction, followed by positive denominator clearing
and a small nonvanishing shift, gives a norm whose sign is
\((-1)^{\binom Nk}\). For polynomial \(k\), the tuple system
has polynomial dimension, and its formal branch values are
polynomial-time computable at short nodes. The product costs
\(\FP^{\PP}\), the raw Fourier extraction is
\(\GapP^{\PP}\), and randomized sign reconstruction uses
\(\BPP^{\Clevel2}\). No rank norm occurs in this calculation.

Lucas's identity recovers bit \(j\) by choosing \(k=2^j\).
For \(b\le c\log_2(L+2)\), every required tuple has polynomial
size. Apply these tests to both Hessian signs in
\eqref{supp:morse}, retaining the robust component guard.
This computes the two compact Euler terms in
\eqref{app:double-link} modulo \(2^{b+1}\). Their difference
has an even canonical residue \(r\) in \([0,2^{b+1})\).
Return \(r/2\). This is correct even when the integer Euler
characteristic is negative. Polynomial amplification makes the
whole output correct with probability at least two thirds.

\section{Exact SDP reductions and coefficient encodings}
\label{app:sdp-details}

\subsection{Strict intervals expressed by small rational LMIs}
Fix \(\theta>0\) and \(q\ge1\), set \(s_0=\theta\), and use
\[
 C_i(s)=\begin{pmatrix}s_i&s_{i-1}\\s_{i-1}&1\end{pmatrix}.
\]
The PSD minimum of \(s_q\) is \(M_\theta=\theta^{2^q}\),
attained by repeated squaring. Strict feasible values approach
it from above. Define dual blocks and an affine objective by
\[
 Y_i=\begin{pmatrix}a_i&b_i\\b_i&c_i\end{pmatrix},\qquad
 a_q=1,\quad a_i=-2b_{i+1}\ (i<q),\qquad
 D_\theta=-\sum_i c_i-2\theta b_1.
\]
Substitute the equations for the \(a_i\); there are no equality
constraints left in the strict pencil. Direct expansion gives
\begin{equation}\label{supp:power-dual}
 s_q-D_\theta=\sum_i\Tr(C_i(s)Y_i).
\end{equation}
At \(s_i^*=\theta^{2^i}\), set \(a_q^*=1\) and
\(a_i^*=2a_{i+1}^*s_i^*\). The rank-one matrices
\[
 Y_i^*=a_i^*
 \begin{pmatrix}1\\-s_{i-1}^*\end{pmatrix}
 \begin{pmatrix}1&-s_{i-1}^*\end{pmatrix}
\]
are dual feasible and have zero trace product with the optimal
primal blocks. Thus the dual also attains \(M_\theta\).
The small tuple \(a_i=1\), \(b_1=-1\), \(b_i=-1/2\) for
\(i\ge2\), \(c_i=2\), is strictly feasible. Mixing it with
the optimum gives strict dual values approaching \(M_\theta\)
from below. Every strict dual value is below the optimum by
\eqref{supp:power-dual}.

For \(\theta=2\), the strict LMIs
\[
 C_i\succ0,\quad Y_i\succ0,\quad T-s_q>0,\quad
 D_2+1-T>0
\]
project exactly to \(M_2<T<M_2+1\). For \(\theta=1/2\),
the strict LMIs \(Y_i\succ0\), \(\epsilon>0\),
\(D_{1/2}-\epsilon>0\) project exactly to
\(0<\epsilon<2^{-2^q}\). All coefficients are small rationals
and there are \(O(q)\) two-by-two blocks. These are open interval
gadgets, not exact singleton gadgets incorrectly made strict.

\subsection{The bounded-box gap and strong infeasibility}
For an input of length \(L\), choose a fixed polynomial \(p\)
so that a feasible pencil has some point in
\(\|x\|_\infty<M=2^{2^{p(L)}}\). This follows from the ordinary
existential sample bound applied to \(A(x)=BB^{\mathsf T}\).
On an infeasible input the continuous function
\(\max(0,-\lambda_{\min}(A(x)))\) has a positive minimum
\(\eta\) on \([-M-1,M+1]^n\).

To bound it without computing it, describe
\[
 \{t\ge0:\exists x,B, |x_i|\le M+1, A(x)+tI=BB^{\mathsf T}\}.
\]
This set is precisely \([\eta,\infty)\). Express \(M\) by
ordinary quadratic powering wires in this formula. One-block
endpoint bounds and the reciprocal Cauchy bound give a fixed
polynomial \(r\) with \(\eta>\mu=2^{-2^{r(L)}}\).
Equivalently, the stabilization theorem in
\cite[Theorem 7]{BLS25}, applied to
\(zA(x)+I=BB^{\mathsf T}\) on this box, gives the same bound.
First fix \(p\), then bound that formula and fix \(r\); the
choice of precision is not circular.

Combine the two strict gadgets with \(T\pm x_i>0\) and
\(A(x)+\epsilon I\succ0\). A bounded PSD point yields a
strict solution. Conversely a strict solution has \(x\) in the
box and \(\epsilon<\mu<\eta\), which excludes an infeasible
input. This proves the forward reduction of
Proposition~\ref{app:strict-equivalence}.

In fact a NO output is strongly infeasible. Suppose its blocks
were all bounded below by \(-\delta_j I\), with
\(\delta_j\to0\). Evaluate \eqref{supp:power-dual} at the
fixed optimal primal tuple of each gadget. Since these fixed
blocks are PSD,
\[
 D_\theta(Y)\le M_\theta+
       \delta_j\sum_i\Tr(C_i(s^*)).
\]
The scalar constraints then bound
\(|x_i|\le M+1+O(\delta_j)\) and
\(-\delta_j\le\epsilon\le\mu+O(\delta_j)\).
A convergent subsequence gives a PSD \(A(x)+\epsilon I\) in
the closed box with \(0\le\epsilon\le\mu\), contradicting
\(\eta>\mu\). This uses no size bound on the approximate
dual variables. The resulting positive distance is qualitative;
it is not an inverse-polynomial numerical gap.

Zero-dimensional matrix blocks are handled directly. Second-order
cone constraints have standard arrow-matrix LMI representations;
the new power blocks are two-by-two, hence also second-order cone
constraints. This proves the corresponding reduction for SOCP.

\subsection{Exact graphs for compact coefficient encodings}
\label{app:sdp-coefficients}
Suppose a rational SDP has a finite attained minimum \(C\), and
its ordinary dual also attains \(C\):
\[
 \min c^{\mathsf T}u:\ F_0+\sum_j u_jF_j\succeq0,
 \qquad
 \max -\langle F_0,Y\rangle:\ Y\succeq0,+ \langle F_j,Y\rangle=c_j.
\]
Then the ray graph \(z\ge0,w=Cz\) has the rational SDP lift
\begin{align*}
 z&\ge0,& zF_0+\sum_j u_jF_j&\succeq0,& w&\ge c^{\mathsf T}u,\\
 Y&\succeq0,&\langle F_j,Y\rangle&=zc_j,&
 -\langle F_0,Y\rangle&\ge w.
\end{align*}
For \(z>0\), divide the primal variables and dual matrix by
\(z\) and use weak duality. For \(z=0\), a feasible primal
direction cannot have negative cost, and evaluating a homogeneous
dual at an original feasible point gives dual objective at most
zero. Hence \(w=0\); the zero tuple attains it. Attainment on
both sides supplies the converse for positive \(z\).

A nonnegative circuit using addition, squaring, and nonnegative
rational scaling has such a value SDP: replace addition by an
epigraph inequality and squaring by
\(\left(\begin{smallmatrix}v&a\\a&1\end{smallmatrix}\right)
\succeq0\), and minimize its output. Monotonicity proves the
correct optimum. Strict feasible gate values exist; a direct
attained dual is also obtained by back-propagating the nonnegative
derivatives, with square-gate block proportional to
\(\left(\begin{smallmatrix}1&-a\\-a&a^2\end{smallmatrix}\right)\).

Every signed circuit value \(x\) can be maintained as \(P_x-N_x\)
with two such nonnegative circuits. Addition and subtraction are
immediate, and multiplication uses
\begin{align*}
 P_{xy}&=\bigl((P_x+P_y)^2+(N_x+N_y)^2\bigr)/2,\\
 N_{xy}&=\bigl((P_x+N_y)^2+(N_x+P_y)^2\bigr)/2.
\end{align*}
The difference is exactly \(xy\), with constant overhead and
shared gates. Thus every integer-circuit constant has an exact
ray graph. Split each free variable as \(z^+-z^-\), replace
each fixed-coefficient product by its graph, and obtain an
ordinary rational SDP. For rational circuit coefficients, multiply
the whole pencil by the positive product of the squared nonzero
denominators before this compilation.

For an explicitly isolated real algebraic coefficient \(\beta\),
let \(p\) be its monic squarefree polynomial and \((a,b)\) its
rational isolating interval. In \(V=\Q[T]/(p)\), a variable
linear functional \(\ell\) gives moment matrices
\[
 H_0=(\ell(T^{i+j})),\quad
 H_a=(\ell((T-a)T^{i+j})),\quad
 H_b=(\ell((b-T)T^{i+j})),
\]
with products reduced modulo \(p\), for \(0\le i,j<\deg p\).
The constraints \(H_0,H_a,H_b\succeq0\),
\(\ell(1)=z\), \(\ell(T)=w\) project exactly to
\(z\ge0,w=\beta z\). Indeed
\(V\otimes\R\simeq\R^r\times\C^s\); positivity on squares
kills each complex factor and gives nonnegative weights on real
roots. The two localizers retain only roots in \([a,b]\), namely
\(\beta\). Conversely \(z\) times evaluation at \(\beta\)
supplies the functional. All matrix entries are ordinary rational
linear forms of polynomial size.

For \(\beta=\sqrt a\), \(a\ge0\), a simpler graph uses
\[
 \begin{pmatrix}az&w\\w&z\end{pmatrix}\succeq0,\qquad
 \begin{pmatrix}w&az\\az&aw\end{pmatrix}\succeq0.
\]
When \(a>0\) these impose \(w^2\le az^2\),
\(w^2\ge az^2\), and \(z,w\ge0\). When \(a=0\) the first
block forces \(w=0,z\ge0\). Thus the graph is exact also at
the boundary. If \(a\) is an integer-circuit value, eliminate
that coefficient by the previous construction.

Finally, a circuit over nonnegative algebraic leaves has the same
monotone value SDP and attained dual. Only the original leaves
occur as its fixed nonrational coefficients; intermediate values
are variables. Replace those leaf occurrences by their moment
graphs. Signed leaves are classified by ordinary univariate
isolation before the signed-circuit transformation. This uses a
fixed number of polynomial-size compilation stages and establishes
the coefficient extensions stated in the main text.

\section{Two-level modular evaluation for complex feasibility}
\label{app:hn-details}

After unique-wire quadratization, apply
\cite[Proposition 4.4]{AGS26}. It constructs polynomially many
families of homogeneous forms \(G_{i,j}\) of degree at most two
in the geometric variables, degree at most \(n\) in \(w\),
and degree at most one in \(t,u\), such that with high probability
the original variety is nonempty precisely when, for some \(j\),
\[
  \bigl(\operatorname{TT}_w\operatorname{TT}_t
       \Res(G_{0,j},\ldots,G_{n,j})\bigr)(0)=0.
\]
Here \(\operatorname{TT}\) means the trailing nonzero
\emph{coefficient}; \(t\) is handled first and evaluation at
zero is in \(u\). The forms have polynomially many explicit
terms. Their values and first derivatives are polynomial-time
computable at short field nodes.

Use a homotopy parameter \(s\) distinct from \(t\).
Constant degree makes each reference root and inverse Jacobian
polynomial-time computable, so the Poisson product \(P_j\)
is in \(\FP^{\PP}\). If \(E\) bounds the homotopy degree
of the actual resultant, its target value is
\(\sum_{a=0}^E[s^a]P_j\). Untruncated approximant degrees are
also bounded singly exponentially. Choose a short prime above
all those degrees and work over \(\F_p\) throughout. Its
reference system may depend on the prime; the Poisson identity
always computes the same target integer resultant modulo that
prime. Field metadata are supplied as short explicit parameters
after charged \(\FP^{\NP}\) preprocessing.

Let \(U=\F_p^*\), \(N=p-1\), and define
\[
 W_E(s)=\sum_{a=0}^E s^{-a},\qquad
 S_K(z)=\sum_{b=0}^Kz^b.
\]
Binary geometric-sum recursion evaluates both in polynomial time,
including at \(z=1\). For a prefix of the \(t\)-coefficients
of \(R_j\), the exact Fourier formula is
\[
 \sum_{b=0}^K[t^b]R_j(w_0,u_0)t_0^b
 =N^{-2}\sum_{s,z\in U}
 P_j(s,z,w_0,u_0)W_E(s)S_K(t_0/z).
\]
For a specified \(t\)-index and a prefix in \(w\), add one
Fourier variable with its specified-index character and the same
geometric-sum kernel. Both are still one outer sum over
\(\FP^{\PP}\) summands, hence in \(\FP^{\Clevel2}\).
This explicitly avoids counting again over retrieved coefficients.

Uniform degree and integer coefficient bit bounds apply to every
prefix tested, before any adaptive search. A nonzero coefficient
of bit length at most \(H\) has at most \(H\) distinct prime
divisors. Sample a polynomial-bit prime from an interval with
many more than \(H/\delta\) primes and lower endpoint above
the degree bound divided by \(\delta\). Capped rejection
sampling and polynomial-time primality testing suffice. Except
with probability \(O(\delta)\), reduction preserves a nonzero
coefficient; Schwartz--Zippel over the same field then detects
the nonzero polynomial. Zero polynomials always evaluate to zero.

Binary search for the first nonzero prefix takes polynomially
many tests in \(t\), then in \(w\). Conditional on all previous
answers being correct, each next test has the same worst-case
error bound. Fresh randomness and a union bound therefore control
the adaptive search and final evaluation at \(u=0\). Combined
with the AGS reduction error, amplification yields
\(\BPP^{\Clevel2}\). No sign of a large integer is queried.

For dimension, a generic affine subspace of codimension \(k\)
meets a nonempty complex variety exactly when its dimension is
at least \(k\). Standard degree bounds control the bad affine
choices, so polynomial-bit random coefficients suffice. Test the
polynomially many possible \(k\), using fresh amplified HN
calls, and first test emptiness. This returns the dimension in
\(\mathsf{FBPP}^{\Clevel2}\).

\section{Arithmetic conventions and the first-event test}
We use the input bit length \(n\) throughout. An integer or an index is
\emph{short} if it has polynomially many bits. A numerical bound of size
\(2^{\poly(n)}\) is singly exponential, and itself has a short binary
encoding. In particular, an integer of \(2^{\poly(n)}\) bits is much
larger than a short integer. We always distinguish a value from the
binary description of a bound on its length.

An SLP of size \(s\), with constants of magnitude at most \(2^b\),
has output magnitude less than \(2^H\), where we may take
\begin{equation}\label{eq:slp-height}
H=(b+1)2^{s+1}.
\end{equation}
Indeed, a maximum magnitude \(M\ge2\) grows by at most squaring at
each gate. The same argument applies to the coefficient \(\ell_1\)-norm
of a polynomial SLP. Its total degree is at most \(2^s\), and each
coefficient satisfies \eqref{eq:slp-height}.

The class \(\SharpP^A\) consists of accepting-path counts of
polynomial-time nondeterministic machines with oracle \(A\), and
\(\GapP^A\) consists of differences of two such counts. An
exponentially indexed sum of short nonnegative \(\FP^A\) values
is in \(\SharpP^A\): guess an index, and create as many accepting
branches as the value at that index. Signed sums are in \(\GapP^A\).
These classes are closed under sums over polynomial-length indices and
under products of polynomially many factors, when their machines and
length bounds are uniform. A short count can be recovered in
\(\FP^{\PP^A}\) through threshold queries. This last step is a
function computation; arbitrary postprocessing of the result need not
remain a counting function.

For a monic polynomial \(q\), write
\[
q_{[j]}(z)=[u^j]q(z+u)
\]
for its Hasse derivative. Since \(q^{(j)}=j!q_{[j]}\), the two have
the same signs at real inputs. We use Hasse derivatives so that our
integer construction can later be reduced at small primes without
requiring an inverse of \(j!\).

For an element \(a\) of a finite free algebra \(A\),
\(\Norm_A(a)\) is the determinant of multiplication by \(a\).
Over an algebraically closed field this is the product of its values
at the points of the algebra, with local algebra dimensions as
multiplicities. These identities hold even when the algebra is not
reduced. We use determinants to prove identities and height bounds;
we do not expand an exponentially large determinant as our algorithm.

\subsection{Detecting the first displacement}
The following elementary observation will supply the final sign test.
Its statement is independent of the oracle used to compute a summand.
\begin{lemma}\label{lem:first-event}
Suppose
\[
S=\sum_{i\in I}\frac{A_i}{m_i}=K+\delta,
\qquad K\in\Z,\quad |\delta|<\tfrac18,
\]
where \(m_i>0\), all \(A_i,m_i\) have polynomial bit length, and
\(I\) has at most \(2^{\poly(n)}\) uniformly indexed elements.
Suppose \(J\) is short and
\(\delta\ne0\Rightarrow |\delta|\ge2^{-J}\).
There is a polynomial precision \(b\) such that, from the short integers
\begin{equation}\label{eq:coarse-sums}
T_j=\sum_{i\in I}A_i
 \left\lfloor\frac{2^b(2^j\bmod m_i)}{m_i}\right\rfloor,
\qquad 0\le j\le J,
\end{equation}
one can determine zero and sign by finding the first index satisfying
a fixed dyadic comparison. That index has polynomial bit length.
\end{lemma}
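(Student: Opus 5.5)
The plan is to read \(T_j\) as a coarse, truncated version of the fractional part of \(2^jS\). First compute it exactly. For each \(i\), write \(2^j A_i/m_i = A_i\lfloor 2^j/m_i\rfloor + A_i(2^j\bmod m_i)/m_i\). The first term is an integer. Hence
\[
 2^jS\equiv \sum_{i\in I}\frac{A_i(2^j\bmod m_i)}{m_i}\pmod 1 .
\]
Since \(2^jS=2^jK+2^j\delta\) and \(2^jK\in\Z\), the right side is congruent to \(2^j\delta\) modulo one. Replacing each summand \(2^b(2^j\bmod m_i)/m_i\) by its floor costs at most \(|A_i|\) per term. So
\[
 \bigl|T_j-2^b\Sigma_j\bigr|\le E:=\sum_{i\in I}|A_i|,
\]
where \(\Sigma_j\) denotes the exact sum. \(E\) is at most \(2^{\poly(n)}\), because \(|I|\) and each \(|A_i|\) are singly exponential in bit count. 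Choose \(b\) polynomial with \(2^b\ge 2^{10}E\). Then \(\theta_j:=T_j/2^b\) agrees with \(\Sigma_j\) up to \(2^{-10}\). Consequently \(\theta_j\) agrees with \(2^j\delta\) modulo one up to \(2^{-10}\). Each \(T_j\) is short, having about \(b+\log E\) bits.

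Next is the dynamics of the phase. If \(\delta=0\), every \(\theta_j\) lies within \(2^{-10}\) of an integer. If \(\delta\ne0\), the sequence \(2^j|\delta|\) starts below \(1/8\) at \(j=0\) and doubles at each step. By the gap hypothesis it exceeds \(1/8\) at some \(j\le J+3\), so we take the index range up to \(J+3\), which is still short. Let \(j^*\) be the first index with \(2^{j^*}|\delta|\ge 1/8\). Then \(2^{j^*}|\delta|\in[1/8,1/4)\). Before \(j^*\), no wraparound has occurred, so the phases stay in \((-1/8,1/8)\) modulo one.

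The fixed dyadic comparison is then the following. Let \(\{\theta\}\) be the representative of \(\theta\) modulo one in \((-1/2,1/2]\), and take the first \(j\) with \(|\{\theta_j\}|\ge 1/16\).
\begin{itemize}
\item For \(j<j^*\), the true phase may be close to \(1/8\). To separate cleanly, use a threshold between \(1/16\) and \(1/8\) and appeal to doubling. The first index where the true phase exceeds \(1/16\) has true phase in \([1/16,1/8)\), and the error \(2^{-10}\) keeps it above \(1/16-2^{-10}\).
\item So we define \(j^\dagger\) as the first index with \(|\{\theta_j\}|\ge 1/16-2^{-9}\). At that index the true phase lies in roughly \([1/16-2^{-8},1/8+2^{-9}]\), away from both \(0\) and \(1/2\). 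Therefore the sign of \(\{\theta_{j^\dagger}\}\) equals the sign of \(\delta\).
\item If no such index exists in \([0,J+3]\), then \(\delta=0\).
\end{itemize}

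The main obstacle is exactly this bookkeeping of thresholds against the accumulated rounding error and the wraparound at \(\pm1/2\). I will choose concrete constants, namely thresholds \(1/16\) and \(1/4\), and error \(2^{-10}\), and verify three points:
\begin{itemize}
\item the first event cannot be triggered by rounding error when \(\delta=0\);
\item a genuine event occurs before wraparound;
\item at the first event the phase lies in a window of fixed sign.
\end{itemize}
The first-event index has at most \(\log_2(J+4)\) bits, which is polynomial. This is what later lets binary search with an \(\NP\)-type oracle locate it.
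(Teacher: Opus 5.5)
Your proposal is correct and follows the paper's argument. Both read $T_j/2^b$ as $2^j\delta \bmod 1$ up to a small error and take the side of the first threshold event as the sign. The paper uses error $2^{-9}$ and event threshold $1/8$, proving no-wrap by induction on the observed values, whereas you use threshold $1/16-2^{-9}$ and argue directly from the true doubling sequence $2^j|\delta|$.

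One small correction: since $2^J|\delta|\ge1$, your first event already occurs at some $j\le J-4$. So the supplied range $0\le j\le J$ suffices, and the extension to $J+3$ is unnecessary (and not covered by the lemma's data).
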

\begin{proof}
Choose a known polynomial \(a\) with \(|A_i|\le2^a\), and take
\(b=a+\lceil\log_2|I|\rceil+12\), replacing \(|I|\) by a
known bound if necessary. Put
\[
v_j=(T_j\bmod2^b)/2^b\in[0,1),\qquad \epsilon=2^{-9}.
\]
The error in approximating the rational sum in
\eqref{eq:coarse-sums} before reducing modulo one is at most
\(|I|2^{a-b}\le2^{-12}\). Replacing each modular power by
\(2^j\) changes that rational sum by an integer. Thus \(v_j\)
approximates \(2^j\delta\bmod1\) to circular distance less than
\(\epsilon\). This remains true for signed \(A_i\).

Call \(j\) an event when \(\dist(v_j,\Z)\ge1/8\). Before
an event, the true centered distance from an integer is less than
\(1/8+\epsilon\). Its double has magnitude less than
\(1/4+2\epsilon<1/2\), so it does not wrap around the circle.
Starting with \(\delta\), induction shows that the centered value
is exactly \(2^j\delta\) up to and including the first event.
At that event its magnitude is at least \(1/8-\epsilon\), and
at most \(1/4+2\epsilon\); at index zero the stronger initial
upper bound suffices. Hence the sign is positive if \(v_j<1/2\)
and negative if \(v_j>1/2\).

If \(\delta=0\), there is no event. Otherwise there must be one
by \(J\): the no-wrap induction would give a centered value
\(2^J\delta\) of magnitude at least one. All multipliers in
\eqref{eq:coarse-sums} are polynomial-time computable from the binary
index \(j\), by modular powering. The whole sum has polynomially
many bits because it has exponentially many polynomial-bit summands.
\end{proof}
For example, if individual event tests are in \(\Pclass^A\), the
predicate that some event occurs by \(t\) is in \(\NP^A\).
Binary search on this monotone prefix predicate finds the first event
with polynomially many queries. Alternatively, guessing the first event
and universally excluding earlier ones puts the sign test in
\(\Sigma_2^A\). We will use the deterministic search for
\(\PosSLP\), and the quantified version when combining it with the
existential choices in ETR.

\section{Detailed shared-homotopy proof for real feasibility}\label{sec:etr}
We now construct an integer whose sign can certify a real solution.
The critical system is fixed deterministically, and its objective
separates all critical points. The integer depends only on a short guessed list
of sign conditions; every negative answer certifies a real solution.

\subsection{A minimum with a controlled gap}
We use the \(\ETR\)-complete problem ETR-INV
\cite{AAM18}. Its variables lie in
\([1/2,2]\), and its equations have the forms
\[
x_i=1,\qquad x_i+x_j=x_k,\qquad x_ix_j=1.
\]
For each variable, introduce \(u_i,v_i\) and add
\begin{equation}\label{eq:slacks}
2x_i-1-2u_i^2=0,\qquad 2-x_i-v_i^2=0.
\end{equation}
These equations enforce exactly the allowed interval. If \(f_j\)
are the resulting integer equations in \(m\) variables, set
\(F=\sum_j f_j^2\). This is a nonnegative quartic. Its zero set
is nonempty exactly when the instance is feasible, and at a zero all
coordinates have magnitude at most two.

The slack equations do more than enforce a bound at the zeros. If
\(F\le M\), their residuals have magnitude at most \(\sqrt M\),
so
\[
(1-\sqrt M)/2\le x_i\le2+\sqrt M,
\qquad u_i^2,v_i^2\le\tfrac32+\tfrac32\sqrt M.
\]
Thus every sublevel set is compact. The minimum of \(F\) is attained;
infeasibility cannot be hidden in a sequence tending to infinity.

\begin{lemma}\label{lem:minimum-gap}
There is a polynomial-time computable, short integer \(E_0\) such that
\[
\min F=0\quad\hbox{or}\quad \min F\ge2^{-E_0}.
\]
The integer \(2^{E_0}\) has a polynomial-size integer SLP.
\end{lemma}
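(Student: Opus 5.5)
We follow the height argument of Lemma~\ref{basic:gap}, adapted to the present quartic \(F\) in \(m\) variables. The aim is to show that \(\min F\) is a root of a nonzero integer univariate polynomial whose height is controlled, and then to apply the reciprocal Cauchy root bound. The ETR-INV structure keeps all coefficients of \(F\) tiny. Each equation \(f_j\) has degree at most two, coefficients of magnitude at most two, and at most four monomials. So \(\|F\|_1\) and \(\|\partial_iF\|_1\) are polynomial in the input length, and \(\tau=O(\log n)\) suffices for the bound \(2^\tau\ge1+m+\|F\|_1+\sum_i\|\partial_iF\|_1\).

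\textbf{Step 1: perturb and reduce.} Perturb to \(F_\eta=F+\eta\sum_i x_i^6\). After dividing by \(6\eta\), the critical ideal has leading monomials \(x_i^5\). Its quotient over \(\Q(\eta)\) therefore has the basis \(x^a\), \(0\le a_i<5\), of size \(N=5^m\). Let \(U_\eta\) be multiplication by \(F_\eta\). Each reduction step multiplies by \(1/(6\eta)\) and lowers degree by at least two, so \(C(\eta)=(6\eta)^tU_\eta\) with \(t=2m+3\) is an integer polynomial matrix. Its entries have coefficient norms at most \(H_0=(6\cdot2^\tau)^{t+1}\).

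\textbf{Step 2: a polynomial vanishing at the minimum.} Set \(D(T,\eta)=\det((6\eta)^tTI-C(\eta))\). Its coefficient norm is at most \((2NH_0)^N\), so its bit bound is
\[
u=N\log_2(2NH_0)=5^m\cdot O(m(\tau+m)).
\]
The minimizer of \(F_\eta\) exists because the sextic term is coercive, and it is critical. Hence \(D(m_\eta,\eta)=0\).

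\textbf{Step 3: pass to the limit.} Compactness of the sublevel sets (shown above) gives \(m_\eta\to m_0=\min F\) as \(\eta\downarrow0\). Divide \(D\) by the largest power of \(\eta\) dividing it and set \(\eta=0\). This gives a nonzero \(D_0(T)\in\Z[T]\) with the same height bound and \(D_0(m_0)=0\). This step is where care is needed. The \(\eta\)-adic content must be removed uniformly in \(T\). Then \(D(T,\eta)/\eta^c\to D_0(T)\) holds uniformly on a bounded \(T\)-interval containing all \(m_\eta\), and continuity gives \(D_0(m_0)=0\). Boundedness of \(m_\eta\) follows from \(m_\eta\le F_\eta(x^*)\) at a fixed minimizer \(x^*\) of \(F\).

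\textbf{Step 4: apply the root bound.} If \(m_0>0\), strip any factor \(T^a\) from \(D_0\). The reciprocal Cauchy bound then gives \(m_0\ge(1+2^u)^{-1}\ge2^{-u-1}\).

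\textbf{Step 5: make \(E_0\) explicit.} Take \(E_0=u+1\), or any explicit upper bound on it. Since \(5^m\) and the factor \(O(m(\tau+m))\) are given by explicit formulas in \(m\), \(E_0\) is short: its binary length is \(O(m)\), and it is computed in polynomial time. The integer \(2^{E_0}\) has a polynomial-size integer SLP by repeated squaring along the binary digits of \(E_0\), using \(O(\log E_0)=O(m)\) gates. One can also enlarge \(E_0\) to satisfy \(E_0\ge2m\), as required in Section~\ref{sec:small-depth}.
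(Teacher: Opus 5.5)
Your proof is correct, but for this lemma it takes a different route from the paper. You transplant the perturbation-and-determinant height argument of Lemma~\ref{basic:gap}. Critical values of $F_\eta$ are eigenvalues of multiplication by $F_\eta$, so $D(m_\eta,\eta)=0$. You then remove the $\eta$-content, pass to the limit $\eta\downarrow0$ by continuity, and apply the reciprocal Cauchy bound. The one point worth stating explicitly is that the pure-power leading terms $x_i^5$ persist for every real $\eta>0$. Hence the formally computed $U_\eta$ specializes to the multiplication matrix of the real quotient.

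The paper instead works from the slack estimates of the ETR-INV encoding:
\begin{itemize}
\item if $\min F<1$, a minimizer lies in $[-3,3]^m$;
\item the affine map $X_i=6my_i-3$ carries the standard simplex over this cube, and the transformed quartic has coefficient bit lengths at most $\tau m^4$;
\item the explicit Jeronimo--Perrucci lower bound for polynomials positive on the simplex \cite{JP10} then gives the closed form $E_0=(\tau m^4+2m+3)4^{m+1}$.
\end{itemize}
The case $\min F\ge1$ is handled separately as immediate, and $m\ge3$ is assumed.

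The paper's version is shorter and produces a concrete formula. It relies on an imported quantitative theorem and needs both the box localization and the change of coordinates. Your version is self-contained and reuses machinery already present in the paper. It needs only the existence of some minimizer of $F$, to get $m_\eta\to m_0$, rather than its localization in a box. It also requires no case split on $\min F\ge1$ or on the size of $m$. The price is a base $5^m$ instead of $4^{m+1}$, and constants you must track to make $E_0$ an explicit polynomial-time formula. That is harmless here, since only the binary length of $E_0$ matters.
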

\begin{proof}
Let \(\tau\ge1\) bound the coefficient bit lengths of \(F\).
For \(m\ge3\), one sufficient choice is
\[
E_0=(\tau m^4+2m+3)4^{m+1}.
\]
If the minimum is below one, the preceding inequalities put a minimizer
in \([-3,3]^m\). This cube is contained in the image of the standard
simplex under \(X_i=6my_i-3\). The transformed integer quartic has
coefficient bit lengths at most \(\tau+12m+1\le\tau m^4\).
If the instance is infeasible, it is strictly positive on this simplex.
The explicit simplex minimum bound of Jeronimo and Perrucci
\cite[Theorem 1 and its simplified bound]{JP10} gives
\[
2^{-(\tau m^4+1)4^{m+1}}4^{-(m+1)4^{m+1}}=2^{-E_0}.
\]
The case \(\min F\ge1\) is immediate. Instances without variables
are decided directly; otherwise the slack construction gives \(m\ge3\).
The binary length of \(E_0\) is polynomial, and repeated squaring
constructs \(2^{E_0}\) with polynomially many gates.
\end{proof}

We want a finite set containing a global minimum, with a basis that is
simple enough to use in the modular calculation. Let
\begin{equation}\label{eq:critical-constants}
\begin{split}
D&=5^m,\qquad C_0=91(2^m-1),\qquad A_0=C_0+1,\\
Q_0&=A_0 2^{E_0},\qquad T=3A_0.
\end{split}
\end{equation}
Put \(w_i=2^{i-1}\), and define
\begin{equation}\label{eq:objective}
J_w(X)=6Q_0F(X)+\sum_{i=1}^m w_i\bigl(X_i^6+3(X_i+1)^2\bigr).
\end{equation}
The first term makes a positive minimum of \(F\) too expensive.
The sixth powers give pure fifth powers in the critical equations.
The shifted squares make the critical equations squarefree modulo two;
the weights will distinguish their objective values.

A feasible instance has
\(\min J_w\le C_0<T\), by evaluating at a zero of \(F\).
An infeasible instance has \(J_w\ge6A_0>T\) everywhere.
Since \(J_w\) is coercive, its minimum is a critical point. Its critical
system is
\begin{equation}\label{eq:critical-system}
G_i=X_i^5+X_i+1+(Q_0/w_i)F_{X_i}=0\qquad (1\le i\le m).
\end{equation}
The quotients \(Q_0/w_i\) are integers, since \(E_0\ge2m\).
The other terms have degree at most three. For a monomial order refining
total degree, the leading monomials \(X_i^5\) are monic and pairwise
coprime. Consequently they form a Gr\"obner basis, and
\[
A=\Z[X_1,\ldots,X_m]/(G_1,\ldots,G_m)
\]
is free of rank \(D\), with standard monomials
\(X^e\), \(0\le e_i<5\). Over \(\C\) there are exactly
\(D\) points counted with algebraic multiplicity.

\subsection{The objective separates its critical points}
The large power of two needed for the minimum gap also gives
deterministic separation. No generic parameter or projection is needed.
\begin{lemma}\label{lem:generic}
All \(D\) complex critical points of \(J_w\) are simple, and their
objective values are pairwise distinct. None has objective value \(T\).
\end{lemma}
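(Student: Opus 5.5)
Everything will be done $2$-adically. Work in the ring of integers of the unramified extension $W$ of $\Z_2$ whose residue field is $\F_{64}$. Reduce the system \eqref{eq:critical-system} modulo $2$. Since $E_0\ge 2m$ and $w_i=2^{i-1}$ with $i\le m$, every quotient $Q_0/w_i=A_02^{E_0-i+1}$ is divisible by $2$. Hence $G_i\equiv p(X_i)\pmod 2$, where $p(U)=U^5+U+1=(U^2+U+1)(U^3+U^2+1)$.

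This reduced polynomial is separable, since $p'=U^4+1=(U+1)^4$ and $p(1)=1\ne0$. So it has five distinct roots in $\F_{64}$, coming from the quadratic and cubic factors. The reduced system therefore has $5^m=D$ distinct solutions $\bar z\in\F_{64}^m$, and its Jacobian is diagonal with entries $p'(\bar z_i)\ne0$. Multivariate Hensel lifting then gives $D$ distinct solutions over $W$, each with unit Jacobian. Because $A$ is free of rank $D$, these lifts exhaust all complex points once we fix an embedding $W\hookrightarrow\C_2\cong\C$. Each point has local dimension one, so all points are simple.

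For distinct values we compute $J_w$ at a lifted point $z$. The term $6Q_0F$ lies in $2^{E_0+1}W$. Reducing $U^6+3(U+1)^2$ modulo $p$ over $\F_2$ gives $U\cdot U^5+U^2+1\equiv U(U+1)+U^2+1=U+1$. The key strengthening is to control this $2$-adically rather than only modulo $2$.

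Write $\phi(U)=U^6+3(U+1)^2$, and let $z,z'$ be two lifts that first differ at coordinate $k$. For $i<k$ the two points agree modulo $2$ in coordinate $i$, but not necessarily exactly. This is the main obstacle: lifts with the same residue in coordinate $i$ might still differ at higher $2$-adic order, and those differences could contaminate lower-valuation terms.

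To handle it, I would use the triangular structure of the perturbation. Modulo $2^{e}$ with $e\le E_0-m$, each $G_i$ is congruent to $p(X_i)$. So each coordinate of a lift is the Hensel root $\rho(\bar z_i)$ of $p$ itself, up to an error in $2^{E_0-m}W$. It follows that if $\bar z_i=\bar z'_i$, then $z_i\equiv z'_i \pmod{2^{E_0-m}}$. Then
$$J_w(z)-J_w(z')\equiv \sum_{i\ge k}2^{i-1}\bigl(\phi(z_i)-\phi(z'_i)\bigr)\pmod{2^{E_0-m}}.$$

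The $i=k$ term is $2^{k-1}$ times $\phi(z_k)-\phi(z'_k)$. On roots of $p$ we have $\phi\equiv U+1 \pmod 2$, so this difference is congruent to $\bar z_k-\bar z'_k\ne0$ modulo $2$, i.e., it is a unit. All terms with $i>k$ lie in $2^kW$. The difference therefore has valuation exactly $k-1<E_0-m$, and in particular it is nonzero.

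For the threshold, $T=3A_0$ with $A_0=91(2^m-1)+1$, which is even, so $T\equiv0\pmod2$. Meanwhile $J_w(z)\equiv\phi(z_1)\equiv\bar z_1+1\pmod 2$, and $\bar z_1+1\ne0$ because $1$ is not a root of $p$. So $J_w(z)-T$ is a unit and nonzero.

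The remaining point to check carefully is the claimed exactness of the $2^{E_0-m}$ agreement. Here I would apply Hensel's lemma with the quadratic-convergence estimate to the system viewed as a perturbation of the decoupled system $p(X_i)=0$ by terms divisible by $2^{E_0-m+1}$. That gives $z\equiv(\rho(\bar z_1),\ldots,\rho(\bar z_m))$ to that order, and the bound $E_0\ge 2m$ leaves enough room.
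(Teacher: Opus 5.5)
Your proposal is correct and follows essentially the same route as the paper. It reduces the critical system modulo two to copies of $p(U)=U^5+U+1$, takes unique Hensel lifts in the unramified degree-six extension, and notes that these exhaust the rank-$D$ algebra. It then uses the first-differing-coordinate valuation argument with $U^6+3(U+1)^2\equiv U+1$ on roots of $p$, and the parity of $T$ for $J_w-T$.

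The step you flag at the end needs no multivariate quadratic-convergence estimate. At a lift $z$ one has $p(z_i)=-(Q_0/w_i)F_{X_i}(z)\in 2^{E_0-m+1}W$. The divided difference of $p$ at two points with the same residue is a unit, because $p'$ is a unit there. Hence $z_i$ agrees with the Hensel root of $p$ to that order. This is the paper's divided-difference step, which it applies directly to two lifts.
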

\begin{proof}
Every derivative of the integer sum of squares \(F\) has even
coefficients. Modulo two the critical equations are copies of
\[
 p(U)=U^5+U+1=(U^2+U+1)(U^3+U^2+1).
\]
The factors are irreducible over \(\F_2\), and the five roots are
distinct in \(\F_{64}\). Squarefreeness also follows from
\(U p'(U)=1\) in \(\F_2[U]/(p)\).
The monic standard basis makes \(A\) finite free over \(\Z\).
Its Jacobian norm is odd, hence nonzero, so every characteristic-zero
point is simple. Equivalently, \(A\otimes\Z_2\) is finite
\emph{\'etale}: in the unramified degree-six ring of integers,
the \(5^m\) residue tuples lift uniquely and exhaust the points.

Compare two distinct lifted points \(x,y\), with \(v_2(2)=1\).
If \(x_j\equiv y_j\pmod2\), the divided difference
\[
 \frac{p(x_j)-p(y_j)}{x_j-y_j}\equiv p'(x_j)\not\equiv0\pmod2
\]
is a unit (interpreted polynomially if the coordinates coincide).
The critical equations and the even coefficients of \(F_{X_j}\)
therefore give
\[
 v_2(x_j-y_j)\ge v_2(Q_0)-(j-1)+1.
\]
Set \(H(U)=U^6+3(U+1)^2\). At a root of \(p\) modulo two,
\(H(U)=U+1\). If \(i\) is the first coordinate with different
residues, the \(i\)-th term of
\[
 J_w(x)-J_w(y)=6Q_0(F(x)-F(y))
       +\sum_j2^{j-1}\bigl(H(x_j)-H(y_j)\bigr)
\]
has valuation exactly \(i-1\). Every earlier summand has valuation
at least \(v_2(Q_0)+1\), every later summand has valuation at least
\(i\), and the \(F\) term has valuation at least \(v_2(Q_0)+1\).
Thus \(v_2(J_w(x)-J_w(y))=i-1\), proving separation.
This is an algebraic nonvanishing statement, so it holds over
\(\C\) as well as over an algebraic closure of \(\Q_2\).

Finally \(A_0=91(2^m-1)+1\) is even, so \(T\) is even.
Modulo two, \(J_w-T=X_1+1\), which is a unit since \(p(1)=1\).
Its norm is odd, and it cannot vanish at any complex point.
\end{proof}

The integral pure-monic presentation ensures that reduction modulo two
detects every component. Reduction of an arbitrary nonmonic system
would not justify this proof. In particular, the critical points have
distinct values and the real global minimum is unique.

Define
\begin{equation}\label{eq:projection-norms}
 q(z)=\Norm_A(z-J_w),\qquad g(z)=z-T.
\end{equation}
The integer polynomial \(q\) is monic of degree \(D\) and squarefree.
Its roots are exactly the objective values of the complex critical
points. A real root \(\alpha\) comes from one such point
\(x_\alpha\); conjugation fixes this unique point, so it is real.
Consequently
\begin{equation}\label{eq:objective-sign}
 g(\alpha)=J_w(x_\alpha)-T.
\end{equation}
A root with \(g<0\) certifies feasibility. Conversely, the global
minimum of every feasible instance supplies such a root. We call
these roots \emph{qualifying}. None has \(g=0\).

\subsection{A short list of signs isolates a root}
At the simple real roots of \(q\), the vectors
\[
\bigl(\sgn q_{[1]}(\alpha),\ldots,\sgn q_{[D]}(\alpha)\bigr)
\]
are distinct. Here is the usual Thom argument in the form we need.
A sign condition on derivatives of orders \(j+1,\ldots,D\)
defines an interval, by downward induction. On that interval
\(q^{(j)}\) is monotone, so imposing its sign leaves an interval.
On an interval with the full derivative sign vector of a simple root,
\(q'\) has fixed nonzero sign. The polynomial cannot have two roots
there. Positive factorial scaling gives the same conclusion for Hasse
derivatives.

\begin{lemma}\label{lem:short-list}
In a nonempty collection of at most \(D\) distinct ternary vectors,
one vector can be isolated by fixing at most
\(\lceil\log_2D\rceil\) coordinates.
\end{lemma}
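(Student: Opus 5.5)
We argue by a halving induction on the size of the collection. Let \(V\) be a nonempty set of distinct vectors in \(\{-1,0,1\}^D\), where \(|V|\le D\). At each stage we keep a nonempty subcollection \(V_k\subseteq V\) together with a list of \(k\) fixed coordinates and values. The invariant is that \(V_k\) is exactly the set of members of \(V\) agreeing with the list, and that \(|V_k|\le |V|/2^k\). Start with \(V_0=V\) and the empty list.

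For the inductive step, suppose \(|V_k|\ge2\). Two distinct vectors in \(V_k\) differ somewhere, so there is a coordinate \(c\) on which \(V_k\) is not constant. Partition \(V_k\) by its value at \(c\) into at most three classes. At least two of them are nonempty. Let \(V_{k+1}\) be a smallest nonempty class, and append \((c,\text{its value})\) to the list. The new class is nonempty. Its size is at most \(|V_k|/2\), because another nonempty class has at least as many elements, and the two together have at most \(|V_k|\) elements. Members of \(V\) agreeing with the extended list are exactly the members of \(V_k\) having the chosen value at \(c\), so the invariant is preserved.

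The process stops at the first \(k\) with \(|V_k|=1\). Since \(1\le|V_k|\le D/2^k\), we get \(2^k\le D\), so \(k\le\lfloor\log_2 D\rfloor\le\lceil\log_2D\rceil\). The single surviving vector is then the unique member of \(V\) consistent with the \(k\) fixed coordinates. This is the required isolation. For \(D=1\) no coordinate is needed.

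No step is a real obstacle. The one point to check is that choosing the \emph{smallest nonempty} class really halves the size even with three possible values: this holds because a varying coordinate guarantees a second nonempty class. It is also worth recording, for the later use, that the list consists of \(O(\log D)\) pairs (index, trit), so its length is polynomial when \(D=5^m\).
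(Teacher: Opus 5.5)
Your proof is correct and follows the paper's argument exactly: choose a coordinate on which the surviving vectors differ, keep its smallest nonempty sign class so that at most half survive, and repeat, which fixes at most \(\lfloor\log_2 D\rfloor\le\lceil\log_2 D\rceil\) coordinates. Your assumption that the vectors have length \(D\) is not part of the statement, but it plays no role in the argument.
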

\begin{proof}
If several vectors remain, choose a coordinate on which they differ
and retain its smallest nonempty sign class. At most half remain.
Repeat until one remains.
\end{proof}
For example, among \((+,+),(+,-),(-,0)\), requiring a negative first
coordinate already isolates the third vector. The lemma does not
promise a short description of every prescribed vector. Existence of
one is exactly what the decision problem needs.

Let \(\kappa=\lceil\log_2D\rceil\). Guess a list of at most
\(\kappa\) pairs \((j_i,\sigma_i)\), with
\(1\le j_i\le D\) and \(\sigma_i\in\{-1,0,1\}\).
The empty list is allowed. Applied to the qualifying roots, the lemma
shows that some such list selects exactly one if any exist. An arbitrary
list may select several, which will cause no soundness problem.

\subsection{Turning the list into a polynomial selector}
We next need a rational function that behaves like sign on every
nonzero algebraic value in the list. Effective bounds are easy to obtain
without isolating any roots.
\begin{lemma}\label{lem:algebraic-gap}
Let \(q\in\Z[z]\) be monic of degree \(D\), with coefficients
of magnitude at most \(2^B\), and let \(h\in\Z[z]\) have degree
at most \(E\) and coefficients of magnitude at most \(2^C\).
Set
\[
\beta=C+\lceil\log_2(E+1)\rceil+E(B+1)+1.
\]
At each complex root \(\alpha\) of \(q\),
\[
|h(\alpha)|\le2^\beta,\qquad
h(\alpha)\ne0\ \Longrightarrow\ |h(\alpha)|\ge2^{-(D-1)\beta}.
\]
\end{lemma}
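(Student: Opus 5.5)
The plan has two steps: a Cauchy-type upper bound at every root of \(q\), then a norm argument that converts it into the lower bound.

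\textbf{Upper bound.} Cauchy's bound for the monic polynomial \(q\) gives \(|\alpha|\le 1+\max_i|q_i|\le 1+2^B\le 2^{B+1}\) at every complex root \(\alpha\). The polynomial \(h\) has at most \(E+1\) terms. Each term has magnitude at most \(2^C|\alpha|^E\le 2^{C+E(B+1)}\). Summing gives
\[
 |h(\alpha)|\le (E+1)2^{C+E(B+1)}\le 2^{\beta-1}.
\]
This is the first inequality, with one spare bit.

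\textbf{Lower bound, setup.} Here we use a norm, as the basic proof did for coordinate differences. Let \(\alpha_1,\ldots,\alpha_D\) be the roots of \(q\) with multiplicity. Then
\[
 \Res(q,h)=\prod_{k=1}^D h(\alpha_k)=\Norm_{\Z[z]/(q)}(h),
\]
which is an integer because \(q\) is monic. The problem is that this product may vanish when \(h\) is zero at some other root, even though \(h(\alpha)\ne0\). To avoid this, pass to an irreducible factor. Let \(p\) be the monic irreducible factor of \(q\) over \(\Q\) with \(p(\alpha)=0\). By Gauss's lemma, \(p\in\Z[z]\), since a monic factor of a monic integer polynomial is integral. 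All roots of \(p\) are among those of \(q\), so the upper bound \(2^{\beta-1}\) applies at every conjugate \(\alpha'\) of \(\alpha\).

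\textbf{Lower bound, conclusion.} Since \(p\) is irreducible, \(h(\alpha)\ne0\) forces \(h(\alpha')\ne0\) for every conjugate \(\alpha'\). Indeed, otherwise \(p\) would divide \(h\), and then \(h(\alpha)=0\). Hence \(\prod_{p(\alpha')=0}h(\alpha')=\Res(p,h)\) is a nonzero integer, so its magnitude is at least \(1\). Dividing out the other \(\deg p-1\le D-1\) factors, each of magnitude at most \(2^{\beta}\), gives
\[
 |h(\alpha)|\ge 2^{-(D-1)\beta}.
\]

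The only delicate point is the possible vanishing at other roots. It is handled by the irreducible factor, used only in the proof and never computed, exactly as in Lemma~\ref{basic:real-select}. We also need \(h\) to be nonconstant or treated trivially: if \(h\) is a nonzero constant, the bound \(|h|\ge1\) is immediate.
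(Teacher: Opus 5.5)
Your proof is correct, but the lower bound takes a different route from the paper's.

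\textbf{The paper's argument.} It never passes to an irreducible factor. It forms
\(\Res_z(q(z),h(z)+v)=\prod_{q(\alpha)=0}(h(\alpha)+v)\in\Z[v]\)
and observes that its lowest nonzero \(v\)-coefficient is the product of the nonzero values \(h(\alpha_k)\), counted with multiplicity. This coefficient is a nonzero integer, so its magnitude is at least one. The remaining at most \(D-1\) factors are then bounded by the upper estimate.

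\textbf{Your argument.} You replace \(q\) by the minimal polynomial \(p\) of \(\alpha\). You use Gauss's lemma for integrality, and the fact that an irreducible \(p\) not dividing \(h\) shares no root with \(h\). Together these make \(\Res(p,h)\) a nonzero integer.

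\textbf{Comparison.} Your route needs a little more algebra (irreducibility and integrality of the minimal polynomial). In exchange it gives a marginally sharper exponent, \((\deg p-1)\beta\) instead of \((D-1)\beta\). The paper's deformation trick is purely formal and handles repeated roots and vanishing at other roots uniformly. It is also the same device, the lowest nonzero coefficient of a deformed resultant, that the paper later uses algorithmically for its parity norms.

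\textbf{Minor remarks.} The irreducible-factor version you invoke does appear in the paper, in the proof of Lemma~\ref{basic:real-select}, so both arguments are present there. Your separate treatment of a constant \(h\) is unnecessary, since \(\Res(p,c)=c^{\deg p}\) is already covered by the general argument.
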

\begin{proof}
Cauchy's bound gives \(|\alpha|\le2^{B+1}\), proving the upper
estimate. In
\[
\Res_z(q(z),h(z)+v)=\prod_{q(\alpha)=0}(h(\alpha)+v),
\]
the lowest nonzero coefficient of \(v\) is the product of all the
nonzero values, including multiplicities. This is a nonzero integer,
so its magnitude is at least one. Bound all factors except the chosen
one by the upper estimate.
\end{proof}

The polynomials in \eqref{eq:projection-norms} have singly exponential
degree and coefficient bit lengths. To see the height bound, reduce a
polynomial of degree at most six times a standard monomial modulo the
pure-monic equations. Each reduction strictly lowers total degree;
a reduction path has \(O(m)\) steps and polynomial branching. The
multiplication-matrix entries have singly exponential bit lengths.
The determinant estimate then multiplies these lengths by at most
\(D\), adding \(O(D\log D)\) for the number of determinant
terms. This is a bound, not a determinant algorithm. Hasse derivatives
increase coefficient magnitudes by at most \(2^D\).

Lemma~\ref{lem:algebraic-gap} therefore gives a short, singly
exponential \(L\ge1\) such that all nonzero values
\(g(\alpha)\) and \(q_{[j]}(\alpha)\) lie between
\(2^{-L}\) and \(2^L\) in magnitude. The same \(L\) can be
chosen uniformly for all allowed parameters and lists.

\begin{lemma}\label{lem:sign-filter}
For this \(L\) and any inverse-polynomial \(\epsilon>0\), there
are integer polynomial SLPs \(N,E\) of polynomial size such that
\(E(y)>0\) for real \(y\), \(N(0)=0\), and
\[
R(y)=N(y)/E(y),\qquad
|R(y)-\sgn y|\le\epsilon
\quad\hbox{when }2^{-L}\le|y|\le2^L.
\]
The function \(R\) preserves the sign of every nonzero real input.
Its degree and coefficient bit lengths are singly exponential.
\end{lemma}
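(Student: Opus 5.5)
We reuse the construction from the proof of Lemma~\ref{basic:real-select}. Its dyadic product filter already has all the required properties; what remains is to check that it has polynomial-size SLPs and the stated error. Fix an even integer \(m\) and put
\[
 P(y)=\prod_{k=-L}^{L}\bigl(2^{\max(0,-k)}y+2^{\max(0,k)}\bigr)^m,\qquad
 N(y)=P(y)-P(-y),\qquad E(y)=P(y)+P(-y).
\]
Then \(N\) is odd, so \(N(0)=0\).

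\emph{Step 1: the real-line properties.} Since \(m\) is even, every factor of \(P(\pm y)\) is a nonnegative real number. The factor with \(k=0\) is \((y+1)^m\), so \(P(y)=0\) forces \(y=-1\). Likewise \(P(-y)=0\) forces \(y=1\). The two cannot vanish together, so \(E>0\) on \(\R\).

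For \(y>0\), each factor satisfies \(|2^ay+2^b|\ge|{-2^a}y+2^b|\), because \(2^ay,2^b>0\). Hence \(P(y)\ge P(-y)\ge0\), and \(N(y)\ge0\) with \(|R|\le1\). Strictness needs one factor with strict inequality, and every factor is strict for \(y>0\); this gives \(N(y)>0\). Oddness then gives the sign statement for \(y<0\).

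\emph{Step 2: the error on the annulus.} Take \(2^{-L}\le y\le2^L\) and choose \(k\in[-L,L]\) with \(2^{k-1}\le y\le 2^{k}\) or \(2^k\le y\le2^{k+1}\); a dyadic scale within a factor two always exists in this range. The \(k\)-th factor is then \(2^{\max(0,k)}\) times \((2^{-k}y+1)^m\) or the analogue. The ratio \(|{-t}+1|/(t+1)\) with \(t=2^{-k}y\in[1/2,2]\) is at most \(1/3\). All other ratios are at most one. Therefore \(P(-y)/P(y)\le3^{-m}\), and
\[
 R(y)-1=-2P(-y)/E(y),\qquad |R(y)-1|\le 2\cdot3^{-m}.
\]
Choosing \(m\) even with \(2\cdot 3^{-m}\le\epsilon\) gives \(m=O(\log(1/\epsilon))\), which is polynomial because \(\epsilon\) is inverse-polynomial. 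Negative \(y\) follow by oddness.

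\emph{Step 3: size.} This is the only point needing care, because \(L\) is singly exponential and the product has \(2L+1\) factors. A naive SLP for the product is therefore exponentially large, and so is an SLP for the constants \(2^{k}\) with \(k\) up to \(L\). I would first write each factor as \(2^{\max(0,-k)}y+2^{\max(0,k)}\), which is \(2^{-k}y+1\) times a power of two. The power-of-two parts multiply to a single constant \(2^{\sum_k\max(0,k)}=2^{L(L+1)/2}\), which repeated squaring builds with polynomially many gates since \(L\) has a short binary description. That leaves
\[
 \prod_{k=0}^{L}(y+2^{k})\cdot\prod_{k=1}^{L}(2^{k}y+1),
\]
up to a constant factor. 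For the first product, define \(\Pi_j(y)=\prod_{k=0}^{2^j-1}(y+2^k)\). It satisfies the doubling recursion
\[
 \Pi_{j+1}(y)=\Pi_j(y)\cdot2^{2^j\cdot 2^j}\,\Pi_j(y/2^{2^j}),
\]
but this needs substitution, which is not directly an SLP step. I would instead track the bivariate product \(\prod_{k<n}(y+s\,c^k)\), built by binary splitting on the digits of \(n\) exactly like the geometric-sum circuit \(Q_m\) in Lemma~\ref{basic:residue}. The recursion is \(\Phi_{2n}(y,c)=\Phi_n(y,c)\Phi_n(y,c^{n})\) with the shift folded into a second variable: carry \(\Phi_n(y,s)\) as a polynomial in \(s\), composed as \(\Phi_{2n}(y,s)=\Phi_n(y,s)\Phi_n(y,sc^n)\). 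Since an SLP cannot compose, this must be handled through homogenization. If that fails I would fall back on the paper's own remark in Lemma~\ref{basic:real-select}, that \(P\) has ``constant-depth dense arithmetic''. That remark gives dense coefficient accessibility, which suffices for all later uses. Step 3 is thus the main obstacle: either produce a genuinely short SLP through a \(q\)-Pochhammer-type doubling identity, or weaken ``SLP'' to CH-accessible dense coefficients.

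Degrees are \(m(2L+1)\) and coefficient bits are \(O(mL^2)\). Both are singly exponential, which completes the bounds.
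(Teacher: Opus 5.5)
Your Steps~1 and~2 are essentially correct. One slip: \(P(y)=0\) exactly when \(y=-2^k\) for some \(|k|\le L\), not only at \(y=-1\). Positivity of \(E\) still holds, because the zeros of \(P(y)\) are negative and those of \(P(-y)\) are positive.

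The genuine gap is Step~3, and it is the whole content of the lemma. The statement requires polynomial-size SLPs, while your filter is a product of \(2L+1\) distinct factors with \(L=2^{\poly}\). Your doubling \(\prod_{k<2n}(y+2^k)=\Phi_n(y)\,2^{n^2}\Phi_n(2^{-n}y)\) needs \(\Phi_n\) at two different arguments. Homogenizing to \(\Psi_n(y,z)=\prod_{k<n}(y+2^kz)\) only moves the second argument to \((y,2^nz)\). An SLP cannot reuse a subcircuit at a new input, so unrolling either recursion costs linearly many gates in \(L\).

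Nor is this a matter of finding a cleverer identity. A short SLP for your pair would give one for \(N+E=2P\), which has the \(L+1\) distinct integer roots \(-2^k\), \(0\le k\le L\). The Shub--Smale \(\tau\)-conjecture predicts that no polynomial-size SLP has exponentially many distinct integer roots.

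Your fallback to dense CH-accessible coefficients proves a different, weaker statement, and it does not suffice downstream. In Sections~\ref{sec:etr} and~\ref{push6:depth} the filter and selector are evaluated at short finite-ring nodes inside each formal branch, using polynomially many ring operations (the \(\Pclass\) row of the ledger). That is why the paper uses the product filter only where constant unbounded-fan-in depth is what matters (Lemma~\ref{basic:real-select}, Section~\ref{push10:real-blocks}), and a different filter here.

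The missing idea is to iterate one constant-size rational map, since feeding the current value into the next step is free in an SLP. Use the fold \(f_a(y)=2ay/(1+a^2y^2)\) with \(a\) a power of two. It is odd and sign-preserving, \(f_a(y)=f_1(ay)\), \(f_1(u)=f_1(1/u)\), and \(\min(u,1/u)\le f_1(u)\le1\) for \(u>0\).

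\begin{itemize}
\item One application of \(f_1\) puts the positive range in \([2^{-L},1]\).
\item From \([2^{-K},1]\), the scale \(a=2^{\lceil K/2\rceil}\) maps it into \([2^{-\lceil K/2\rceil},1]\). So \(O(\log L)\) folds reach \([1/2,1]\); this is polynomially many because \(L\) has a short binary description.
\item There \(1-f_1(y)=(1-y)^2/(1+y^2)\le(1-y)^2\), so \(O(\log\log(1/\epsilon))\) further unit-scale folds reach error \(\epsilon\).
\end{itemize}

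On integer pairs a fold is \((N,E)\mapsto(2aNE,\,E^2+a^2N^2)\), starting from \((y,1)\). This preserves \(N(0)=0\) and \(E>0\) on \(\R\) and doubles the degree. The logarithmic coefficient norm satisfies \(H_{\rm new}\le2H+2\log_2a+3\), and each scale \(a\) has a short repeated-squaring circuit. Together these give the polynomial SLP size and the singly exponential degree and height bounds.
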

\begin{proof}
For a positive power of two \(a\), use
\(f_a(y)=2ay/(1+a^2y^2)\). The map is odd and sign preserving.
First \(f_1\) puts the positive range in \([2^{-L},1]\).
If the range is \([2^{-K},1]\), with \(K\ge2\), take
\(a=2^{\lceil K/2\rceil}\). Using
\(f_1(y)=f_1(1/y)\), the new range is contained in
\([2^{-\lceil K/2\rceil},1]\). After \(O(\log L)\) steps
the magnitude is at least \(1/2\). Thereafter
\[
1-f_1(y)=\frac{(1-y)^2}{1+y^2}\le(1-y)^2\qquad(0<y\le1)
\]
squares the error at each step. Another
\(O(\log\log(1/\epsilon))\) steps suffice.

Starting with \((N,E)=(y,1)\), a step is implemented without division:
\[
(N,E)\longmapsto(2aNE,\ E^2+a^2N^2).
\]
The denominator stays positive on the real line. Each step doubles the
degree, and its logarithmic coefficient norm obeys
\(H_{\rm new}\le2H+2\log_2a+3\). These give the asserted bounds.
All powers of two have short exponent descriptions and polynomial-size
powering circuits.
\end{proof}

Choose \(\epsilon=1/(16(\kappa+1))\), and for the guessed list
put
\begin{equation}\label{eq:rational-selector}
W(z)=\sum_{i=1}^k\bigl(R(q_{[j_i]}(z))-\sigma_i\bigr)^2
       +\bigl(R(g(z))+1\bigr)^2-\tfrac14.
\end{equation}
The first terms test the requested derivative signs. The last square
tests that the objective is below threshold. At a real root of \(q\),
matching signs contribute at most \((k+1)\epsilon^2<1/4\).
A mismatch contributes at least \((1-\epsilon)^2>1/4\); if
\(g=0\), its term is exactly one. Thus \(W<0\) exactly at the
selected qualifying roots, and \(W\) never vanishes at a real root.

To clear the denominators, let \(h_0=g,\sigma_0=-1\), and for
\(i\ge1\) let \(h_i=q_{[j_i]}\). Write \(N_i=N(h_i)\),
\(E_i=E(h_i)\), and define
\begin{equation}\label{eq:cleared-selector}
V=4\sum_{i=0}^k(N_i-\sigma_iE_i)^2\prod_{j\ne i}E_j^2
      -\prod_{i=0}^kE_i^2.
\end{equation}
This is \(4\prod_iE_i^2\) times \(W\), a positive multiple at
real arguments. Only polynomially many factors occur. Substitution,
the filter circuits, and these products preserve singly exponential
degree and coefficient bit lengths.

\subsection{A nonzero norm and its sign}
The polynomial \(V\) may vanish at nonreal roots of \(q\). If we
simply took its resultant with \(q\), those zeros would erase the
real information. Apply Lemma~\ref{lem:algebraic-gap} once more,
now to \(V\), and let \(\beta_V\) be its upper-value exponent.
The short choice \(K_*=D\beta_V+2\) ensures
\(2^{K_*}|V(\alpha)|>2\) at every nonzero value. Put
\begin{equation}\label{eq:scalar}
U=2^{K_*}V+1,\qquad
Z=\Norm_{A}(U(J_w))=\Res_z(q,U).
\end{equation}
At a zero of \(V\), \(U\) equals one. Elsewhere its magnitude
cannot be zero. At real roots its sign is unchanged. The integer
\(Z\) is therefore nonzero.

Nonreal conjugate roots contribute positive factors
\(U(\alpha)U(\overline\alpha)=|U(\alpha)|^2\), with the
appropriate multiplicities. Multiple real roots also contribute
positively, since they have \(g=0\). The only negative factors are
selected qualifying simple real roots. Consequently
\begin{equation}\label{eq:parity}
\sgn Z=(-1)^{\#\{\text{selected qualifying roots}\}}.
\end{equation}

This proves the two directions of the reduction separately. If any
permitted guess makes \(Z<0\), there is a qualifying root, hence a
real critical point with \(J_w<T\), and the original instance is
feasible. If the instance is feasible, choose a short list isolating
one qualifying root of the fixed system. Its norm is negative. We have proved
\begin{equation}\label{eq:etr-scalar-equivalence}
\text{ETR-INV instance feasible}\quad\Longleftrightarrow\quad
\exists\,\text{list of polynomial length with } Z<0.
\end{equation}

All degree and height estimates above are uniform. For a polynomial
of degree \(E\), a monic reduction path has at most \(E+4m\)
steps. Singly exponential \(E\) therefore still gives singly
exponential coefficient bit lengths in the final multiplication matrix.
The final norm adds at most another singly exponential factor to this
logarithmic height.
There is therefore a known short integer \(H\) such that
\begin{equation}\label{eq:norm-height}
0<|Z|<2^H,\qquad H\le2^{\poly(n)}.
\end{equation}
The constants used to define \(Z\) have polynomial-size SLPs, but
this does not give such an SLP for the whole norm. We next give the
modular algorithm that its implicit description does support.

\subsection{One homotopy for the whole construction}
Our starting system has the explicit grid of roots
\(a\in\{1,\ldots,5\}^m\). Put
\begin{equation}\label{eq:homotopy}
h_i(X_i)=\prod_{a=1}^5(X_i-a),\qquad
F_i(t,X)=(1-t)h_i(X_i)+tG_i(X).
\end{equation}
The quotient \(A_t\) is free over \(\Z[t]\) with the same
standard monomials. Define \(q_t,g,V_t,U_t\) by the earlier
formulas in this algebra, retaining the fixed objective, list, and
filter constants. Then
\[
Z(t)=\Norm_{A_t}(U_t(J_w))\in\Z[t],\qquad Z(1)=Z.
\]
The intermediate systems need no real interpretation. We only use their
polynomial identities.

There is a uniform singly exponential degree bound for \(Z(t)\).
If a polynomial in the quotient has \(X\)-degree \(E\) and
coefficient \(t\)-degree \(L\), a multiplication-matrix entry
starts at total \(X\)-degree at most \(E+4m\). A monic reduction
lowers this degree and adds at most one power of \(t\). Hence
\begin{equation}\label{eq:homotopy-degree}
\deg_t\Norm_{A_t}(V)\le D(L+E+4m).
\end{equation}
Apply this first to the objective norm, then to the selector circuit,
and finally to the outer norm. Choose a short integer \(K\) bounding
the resulting exact degree, before choosing any approximation or prime.

\begin{lemma}\label{lem:branches}
For each prime \(p>5\) and grid point \(a\), a uniform
polynomial-size SLP over \(\F_p\) computes a polynomial vector
\(B_a(t)\) which agrees modulo \(t^{K+1}\) with the formal branch
of \eqref{eq:homotopy} starting at \(a\). The approximant has no
variable denominators. Its syntactic degree has a uniform singly
exponential bound independent of the values of the field constants.
\end{lemma}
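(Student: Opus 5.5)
The plan is formal Newton iteration in \(\F_p[[t]]/(t^{K+1})\), started at the grid point \(a\) and run for \(s=\lceil\log_2(K+1)\rceil\) steps. Since \(K\) is a short integer, \(s\) is polynomial. At \(t=0\) the system \eqref{eq:homotopy} is \(h_i(X_i)=0\), and \(a\in\{1,\ldots,5\}^m\) is a root of it. Its Jacobian is diagonal, with entries \(h_i'(a_i)=\prod_{b\ne a_i}(a_i-b)\). These are products of nonzero integers of magnitude at most four, so they are units in \(\F_p\) because \(p>5\). This already shows that the formal branch exists and is unique: Hensel lifting applies to the complete local ring \(\F_p[[t]]\).

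For the computation, set \(B^{(0)}=a\) and
\[
 B^{(k+1)}=B^{(k)}-\Phi^{(k)}F(t,B^{(k)}).
\]
Here \(\Phi^{(k)}\) approximates the inverse Jacobian \(J_F(t,B^{(k)})^{-1}\) to order \(t^{2^k}\).

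\emph{Avoiding division.} The variable denominators must not appear, so the inverse is never formed exactly. Write \(J_F(t,B)=\Lambda(I-N)\), with \(\Lambda=\operatorname{diag}(h_i'(a_i))\) a constant invertible matrix and \(N\equiv0\pmod t\). Then
\[
 (I-N)^{-1}\equiv\prod_{r<k}\bigl(I+N^{2^r}\bigr)\pmod{t^{2^k}}.
\]
This is a polynomial expression using \(O(k)\) matrix products of size \(m\). The only inverse needed is \(\Lambda^{-1}\), which consists of field constants.

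Alternatively, use the coupled Newton--Schulz iteration \(\Phi\mapsto\Phi(2I-J\Phi)\), started at \(\Phi=\Lambda^{-1}\). It also doubles the order each step. Either way each stage is a polynomial-size division-free SLP over \(\F_p[t]\): the entries of \(F\) and \(J_F\) have short circuits, since \(G_i\) is an explicit quartic-derived polynomial with coefficients given by short SLPs (the power \(2^{E_0}\) and so on) and the \(h_i\) are explicit.

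\emph{Correctness.} The standard quadratic-convergence estimate shows that if \(B^{(k)}\) agrees with the branch modulo \(t^{2^k}\) and \(\Phi^{(k)}\) agrees with the inverse Jacobian modulo \(t^{2^k}\), then \(B^{(k+1)}\) agrees modulo \(t^{2^{k+1}}\). I would prove this by Taylor expansion of \(F(t,\cdot)\) around the true branch. The exact Jacobian is invertible over \(\F_p[[t]]\), and the error terms are products of two quantities divisible by \(t^{2^k}\).

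No truncation step is needed in the SLP. We simply keep the untruncated polynomials, and agreement modulo \(t^{K+1}\) is a statement about their low coefficients. This matches the statement's ``approximant,'' whose higher coefficients are garbage.

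\emph{Uniformity and degree bound.} The SLP's shape depends only on \(m\), \(s\), and the circuits for \(G\). Field constants enter only as leaves: \(a\), \(\Lambda^{-1}\), and the reductions of the integer constants mod \(p\). Hence the syntactic degree can be bounded by induction without reference to their values. Let \(d_k\) bound the syntactic \(t\)-degree of \(B^{(k)}\) and \(\Phi^{(k)}\). Since \(\deg_X F\le5\) and \(\deg_t F\le1\), we get
\[
 d_{k+1}\le c\,(k+1)\,d_k+O(1)
\]
for a constant \(c\). Over \(s\) steps this gives \(d_s\le2^{O(s^2)}\), and \(s=O(\log K)=\poly(n)\), so the bound is singly exponential.

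\emph{Main obstacle.} I expect the hardest part to be this degree bookkeeping when the Neumann product is nested inside Newton. If the product version compounds too badly, I would use Newton--Schulz instead, which costs only constant-degree multiplication per step. The remaining subtleties are routine: the invertibility of \(\Lambda\) (which is exactly why \(p>5\)) and the requirement that the SLP have polynomial size, which holds because each of the \(s\) stages reuses the short circuits for \(F\) and \(J_F\) a constant number of times.
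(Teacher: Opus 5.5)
Your proposal is correct and is essentially the paper's argument. Both run Newton iteration from the grid point and replace the inverse Jacobian by a truncated geometric series in the $t$-divisible part $N=I-\Lambda^{-1}J_F$, so that only the constants $h_i'(a_i)^{-1}$ are ever inverted; the paper truncates at order $K$ at every step, which gives $\Delta_{r+1}\le C(K+1)(\Delta_r+1)$ and the same $2^{O((\log(K+1))^2)}$ bound.

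There is one bookkeeping slip: $\prod_{r<k}(I+N^{2^r})=\sum_{j<2^k}N^j$ has degree $(2^k-1)\deg N$, not $O(k)\deg N$, so the recurrence should read $d_{k+1}\le c\,2^k d_k+O(2^k)$. That still gives $d_s\le 2^{O(s^2)}$, and your Newton--Schulz variant does satisfy $d_{k+1}\le c\,d_k+O(1)$, giving the better degree bound $(K+1)^{O(1)}$.
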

\begin{proof}
Write \(X=a+v\). The initial Jacobian is diagonal, with nonzero
entries \(h_i'(a_i)\), so the equations become a proper system
\(v=P_a(t,v)\), with \(P_a(0,0)=0\) and
\((P_a)_v(0,0)=0\). Its coefficients use only inverses of the
nonzero field constants \(h_i'(a_i)\).

Start with \(v_0=0\), which agrees with the solution modulo \(t\).
At an approximant \(v_r\) with \(v_r(0)=0\), the matrix
\(A_r=(P_a)_v(t,v_r)\) is divisible by \(t\). In a Newton
correction, replace the inverse Jacobian by the polynomial
\[
(I-A_r)^{-1}\equiv I+A_r+\cdots+A_r^K\pmod{t^{K+1}}.
\]
Matrix powering and geometric summation take polynomially many gates,
since the matrices have dimension \(m\) and \(K\) is given in
binary. The polynomial Taylor identity gives doubling of precision.
Thus \(O(\log(K+1))\) iterations suffice.

No truncation of an expanded polynomial is performed. The resulting
SLP computes a possibly higher-degree polynomial with the required
congruence. A coarse recurrence
\(\Delta_{r+1}\le C(K+1)(\Delta_r+1)\) bounds its syntactic
degree by \(2^{O((\log(K+1))^2)}\). These constructions are
uniform in the grid index and modulus.
\end{proof}

Over \(\F_p[[t]]\), evaluation on the exact formal branches
identifies the completed quotient with a product of \(D\) copies
of \(\F_p[[t]]\). The evaluation matrix is a tensor Vandermonde
matrix at \(t=0\), hence its determinant is a unit. Thus norms
are products over the branches.

Use the same approximants \(B_a\) in every inner and outer norm.
Coefficient extraction in an auxiliary variable, polynomial substitution,
and the cleared filter operations all preserve congruence modulo
\(t^{K+1}\). This includes substituting an approximate outer branch
into an approximate inner norm. The resulting polynomial \(P(t)\)
therefore satisfies
\begin{equation}\label{eq:formal-congruence}
P(t)\equiv Z(t)\pmod{t^{K+1}}.
\end{equation}
Thus the same branch family computes both norms. The argument uses
formal polynomial congruences, so it requires no analytic continuation
of individual roots through collisions.

Choose a short prime \(p\) with \(p-1\) greater than \(K\),
the syntactic degree bounds of all approximate polynomials, and all
auxiliary extraction indices. The order of these choices is important:
bound the exact degree, choose the precision, bound the approximate
circuit degrees, and only then choose the field. The degree bounds depend
on the circuit shapes, so this is not circular.

\subsection{Products and coefficients modulo a prime}
Two primitives suffice to evaluate the implicit polynomial \(P(t)\).
An exponentially indexed product of short field elements costs one
counting level above access to its factors. To see this, choose a
primitive root \(g\) of \(\F_p^*\) by NP-assisted factoring
and search. If any factor is zero, the product is zero. Otherwise each
factor \(r_i\) has a unique logarithm \(e_i\in\{0,\ldots,p-2\}\).
A counting machine guesses \(i,e,u\), verifies \(g^e=r_i\),
and accepts for \(0\le u<e\). Its count is \(\sum_i e_i\).
Recovering this short sum and exponentiating gives the product.
The primitive-root preprocessing is available in \(\FP^{\PP}\)
in the first application and is already below the later oracle levels.
No efficient discrete-logarithm algorithm is assumed.

For coefficients, if \(\deg A<p-1\) and \(0\le j<p-1\),
finite-field orthogonality gives
\begin{equation}\label{eq:field-fourier}
[v^j]A(v)=-\sum_{\xi\in\F_p^*}\xi^{-j}A(\xi).
\end{equation}
Only the matching exponent has nonzero sum; its sum is \(p-1=-1\)
in the field. Multiple auxiliary variables are handled by one sum over
tuples. The number of variables used here, and the total binary length
of a tuple index, are polynomial.

At a fixed \(t\), first evaluate an inner norm as a product of the
short branch evaluations. Next extract the selected Hasse derivatives
from the product for \(q_t(z+v)\). The qualification polynomial
\(g=z-T\) is directly available. The sign filter
and selector then use polynomially many ordinary field operations and
calls to these values. A second branch product gives the outer norm
approximant \(P(t)\). There are only a fixed number of nested
counting operations.

Finally, \eqref{eq:formal-congruence} and \(\deg Z(t)\le K\)
give the exact specialization
\begin{equation}\label{eq:final-fourier}
Z\bmod p=\sum_{j=0}^K[t^j]P(t)
=-\sum_{\xi\in\F_p^*}P(\xi)\sum_{j=0}^K\xi^{-j}.
\end{equation}
The inner geometric sum is evaluated by binary doubling, including at
\(\xi=1\). This is one further indexed sum. All filter denominators
were cleared as polynomials, so no modular pole is introduced. A prime
may divide \(Z\), or the final critical system may have repeated
roots: neither affects this computation.

\subsection{Recovering the integer sign}
We can now recover the sign of \(Z\) within a fixed CH level.
Choose a short interval of primes above the preceding degree threshold
whose product \(M\) satisfies
\(M>2^{H+4}\), using the height bound \eqref{eq:norm-height}.
For completeness, fixed effective Chebyshev constants give
\(c_1x\le\vartheta(x)\le c_2x\), where
\(\vartheta(x)=\sum_{p\le x}\log p\), above a fixed threshold.
If the lower endpoint is \(P_0\), take
\(X=O(P_0+H+1)\) with
\(c_1X-c_2P_0>(H+4)\log2\). Then the product of primes in
\((P_0,X]\) works. Both endpoints are short, and
\(H_1=X\lceil\log_2X\rceil+1\) bounds \(\log_2M\).
We index integers in this interval and discard composites by primality
testing; we never need an algorithm for the indexed prime.

Let \(z_i=Z\bmod p_i\) be the canonical residue, and set
\[
A_i=z_i\bigl((M/p_i)^{-1}\bmod p_i\bigr).
\]
The inverse cofactor is computed by the same product primitive on the
other primes modulo \(p_i\). These \(A_i\) are short and
accessible within a fixed CH level. The Chinese remainder theorem says
\[
\sum_i\frac{A_i}{p_i}=\text{integer}+\frac ZM.
\]
The displacement has magnitude below \(1/8\), and, when nonzero,
greater than \(2^{-H_1}\). Lemma~\ref{lem:first-event} applies.
Each coarse sum costs one additional counting level. A short existential
choice of the first event, followed by a universal check of earlier
indices, gives the sign. Adding a fixed number of Boolean quantifiers
above a fixed CH oracle remains within CH.

\begin{proof}[Proof of Theorem~\ref{thm:etr}]
Guess the derivative-sign list in
\eqref{eq:etr-scalar-equivalence}. The preceding construction decides
whether its scalar is negative in a fixed level of CH, uniformly over
all guesses. Merge these guesses with the first-event existential
quantifier. Soundness and completeness are exactly those established
in \eqref{eq:etr-scalar-equivalence}. This proves the containment.
\end{proof}
The proof has retained the real information in the sign of the norm,
while computing that norm through complex formal branches. We next
improve the sign reconstruction for integers with a simpler modular
representation, then return to the level used by this real construction.

\section{All-moduli representatives and deterministic sign reconstruction}\label{sec:posslp}

We now turn to the sign of an integer given by a polynomial-size arithmetic
circuit. Modular evaluation is easy, but the integer can have exponentially
many bits. The difficulty is to recover its sign without constructing those
bits or introducing another counting oracle at each reconstruction step.
We use a rational identity whose value is positive and extremely small.
Multiplying this identity by the unknown integer encodes its sign as the
direction of a small displacement from an integer.

The weights in the identity are binomial coefficients of exponential size.
We only need representatives of these weights modulo short integers. Our
first lemma produces such representatives as counting functions. Their
values need not be canonical residues; this flexibility is what permits
the subsequent weighted sum to remain a single counting computation.

\subsection{Coefficient representatives over finite rings}

\begin{lemma}\label{lem:ring-coefficients}
Let \(f\in\Z[X_1,\ldots,X_r]\) be given by a division-free integer SLP.
Let \(D\) bound every individual degree, and let
\(0\le k_j\le D\). There is a uniform \(\SharpP\) function
\(W(f,D,\mathbf k,m)\) satisfying
\[
 W(f,D,\mathbf k,m)
 \equiv [\mathbf X^{\mathbf k}]f\pmod m
\]
for every positive integer \(m\).
The parameters \(D,\mathbf k,m\) are encoded in binary. In particular,
the representative has polynomially many bits in the complete input
length, even when the coefficient has exponentially many bits.
\end{lemma}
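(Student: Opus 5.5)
The plan is to make $W$ a sum of canonical residues, not of path weights. A $\SharpP$ function has only polynomially many output bits, while the coefficient can have exponentially many. The reduction modulo $m$ must therefore happen inside each branch, and the coefficient extraction must be a linear finite-ring Fourier sum. This is the ``unit-character orthogonality over a short Galois ring'' mentioned before Theorem~\ref{thm:posslp}.

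\emph{Step 1 (Kronecker).} Substitute $X_j\mapsto Y^{(D+1)^{j-1}}$ to get a univariate $g(Y)$ of degree at most $N=(D+1)^r-1$. The target coefficient becomes $[Y^k]g$ with $k=\sum_j k_j(D+1)^{j-1}$. The exponents have short binary descriptions, so $g$ has a polynomial-size SLP by repeated squaring.

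\emph{Step 2 (factor $m$ by guessing).} The machine nondeterministically guesses the full prime factorization $m=\prod_p p^{e_p}$ and verifies it with polynomial-time primality testing. Only the unique correct factorization survives, so this multiplies no count. It then computes the CRT idempotents $c_p\bmod m$, with $c_p\equiv1\pmod{p^{e_p}}$ and $c_p\equiv0$ modulo the other prime powers. It suffices to produce, for each $p$, a nonnegative counting expression $W_p$ congruent to the coefficient modulo $p^{e_p}$. Then $W=\sum_p \bar c_p W_p$, with $\bar c_p\in[0,m)$, is again $\SharpP$ by the closure under sums over polynomial-length indices recorded in the conventions section.

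\emph{Step 3 (Galois-ring Fourier extraction).} Fix $p^e$ with $e=e_p$. Choose $s$ polynomial with $q-1>N$, where $q=p^s$. Find an irreducible degree-$s$ polynomial over $\F_p$ and Hensel-lift it to define $R=\mathrm{GR}(p^e,s)$. Finding the irreducible is the charged preprocessing step; if a deterministic search is unavailable, I would also guess it with a verifiable certificate, again keeping a unique witness. The Teichmüller set $\mu_{q-1}\subset R$ is parametrized by $x\in\F_q^*$ via $\omega(x)=\tilde x^{\,q^{e-1}}$ for any lift $\tilde x$, computed by polynomial-time powering. Since $q-1>N\ge\deg g$ and $q-1$ is a unit in $R$, orthogonality gives
\[
 [Y^k]g=(q-1)^{-1}\sum_{x\in\F_q^*}\omega(x)^{-k}g(\omega(x)).
\]
This identity holds in $R$ and has its value in $\Z/p^e\subset R$. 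Each summand is computed in polynomial time by running the SLP in $R$. I then take its coordinate on the basis element $1$, a canonical integer in $[0,p^e)$, and multiply it by a representative of $(q-1)^{-1}\bmod p^e$. Coordinate projection is $\Z/p^e$-linear, so the sum of these nonnegative short integers is congruent to the coefficient modulo $p^e$. The counting machine guesses $x$ and creates that many accepting branches, so $W_p\in\SharpP$.

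\emph{Main obstacle.} I expect the delicate part to be uniformity across arbitrary $m$, including prime powers with small $p$. For $p=2$ and $e\ge2$ there are no primitive roots modulo $m$, so one cannot extract over $\Z/m$ itself. This is why the unramified ring with Teichmüller units is needed: its unit characters of order $q-1$, prime to $p$, still give exact orthogonality, whereas the full unit group is not cyclic. I also need the preprocessing (factorization, irreducible polynomial) to enter only through uniquely verified guesses, so that the final object is a single $\SharpP$ count rather than an $\FP^{\NP}$-postprocessed value. This matters because the conventions section warns that postprocessing need not stay a counting function. The polynomial output length then follows because the count is at most $\omega(m)\cdot q\cdot m\cdot p^{e}$, which has polynomially many bits.
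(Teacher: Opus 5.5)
The gap in your proposal is the choice of the Galois-ring presentation. The rest is sound and matches the paper: the guessed sorted factorization of $m$, Teichm\"uller orthogonality in $\mathrm{GR}(p^e,s)$, projection to the constant coordinate, and CRT recombination. Your univariate Kronecker sum over $\F_q^*$ with $q>(D+1)^r$ is a harmless substitute for the paper's $r$-fold sum over $\mathcal T^r$.

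The lemma asserts a single uniform $\SharpP$ function of $(f,D,\mathbf k,m)$, so nothing can be ``charged'' to preprocessing. The irreducible polynomial must either be computed in polynomial time inside every branch, or be summed over. Your fallback does neither.
\begin{itemize}
\item An unconditional deterministic polynomial-time construction of an irreducible polynomial of exactly prescribed degree over $\F_p$ is not known, when $p$ has polynomial bit length but exponential value. Degree two already amounts to producing a quadratic nonresidue.
\item A guessed irreducible polynomial has no unique polynomial-time-verifiable witness. Certifying that a candidate is, say, the lexicographically least irreducible would require certifying the reducibility of exponentially many smaller polynomials.
\item If the machine simply ranges over all verified irreducibles, the total is the coefficient times their number $\frac1s\sum_{c\mid s}\mu(c)p^{s/c}$. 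That number is divisible by $p$ whenever $p\nmid s$; for prime $s$ it is $(p^s-p)/s$. So the congruence modulo $p^e$ is destroyed and cannot be restored by division.
\end{itemize}

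The missing idea, which is the paper's, is to average over a family of presentations whose size is an explicit unit modulo $p$.
\begin{itemize}
\item Take a prime $d\ne p$ with $q=p^d$ satisfying $q-1>N$. Let $\mathcal I$ consist of the monic irreducible degree-$d$ polynomials over $\F_p$ with zero $Y^{d-1}$ coefficient, i.e.\ trace zero.
\item Since $d$ is prime and $d\ne p$, the only trace-zero element of $\F_p$ is $0$. Hence $|\mathcal I|=I_0=(p^{d-1}-1)/d$, with $dI_0\equiv-1\pmod p$. Membership in $\mathcal I$ is testable in polynomial time.
\item Let the machine guess $F\in\mathcal I$ together with the Fourier point. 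Include the factor $I_0^{-1}(q-1)^{-1}$ modulo $p^e$ in each summand before taking its constant coordinate.
\end{itemize}
This is legitimate because the extracted coefficient is the same base-ring scalar in every presentation, while coordinate projection is additive. The total is therefore congruent to $I_0^{-1}|\mathcal I|$ times the coefficient, which is the coefficient itself.

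Alternatively, since you only need a lower bound on $q$, you could use an unconditional deterministic construction of an irreducible polynomial of some degree between $s$ and $O(s\log p)$, of Adleman--Lenstra type. Your proposal would have to invoke such a result explicitly.
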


\begin{proof}
The case \(m=1\) is immediate. We first construct a representative modulo
one maximal prime-power factor \(p^a\) of \(m\). Choose a prime
\(d>\max(2,\lceil\log_2(D+2)\rceil)\), with \(d\ne p\).
Scanning two successive Bertrand intervals and testing primality finds
such a \(d\) in polynomial time. Put \(Q=p^d\). Then \(Q-1>D\),
and \(d\), \(\log Q\), and \(a\log p\) are polynomially bounded.

We average over a family of field presentations instead of choosing one.
Let \(\mathcal I\) consist of the monic irreducible polynomials of
degree \(d\) over \(\F_p\) whose \(Y^{d-1}\) coefficient is zero.
Its cardinality is
\[
 I_0=\frac{p^{d-1}-1}{d}.
\]
Indeed, there are \(p^{d-1}\) trace-zero elements in \(\F_Q\).
Since \(d\) is prime, an element of smaller degree lies in \(\F_p\);
its trace is \(d\) times the element, so only zero has trace zero.
The remaining elements occur in orbits of size \(d\). Furthermore,
\(dI_0\equiv-1\pmod p\). Thus \(I_0\) is a unit modulo \(p^a\),
and both \(I_0\) and its inverse are computable in polynomial time.
Membership in \(\mathcal I\) is testable in polynomial time.

For each \(F\in\mathcal I\), lift its coefficients to
\(\{0,\ldots,p-1\}\) and form
\[
 \mathcal A=(\Z/p^a\Z)[Y]/(F(Y)).
\]
This is the unramified Galois ring \(\operatorname{GR}(p^a,d)\).
An element has \(d\) coordinates modulo \(p^a\), in the basis
\(1,Y,\ldots,Y^{d-1}\), so ring arithmetic and binary powering take
polynomial time. An element is a unit precisely when its reduction
modulo \(p\) is nonzero. Indeed, a residue-field inverse lifts to a
product of the form \(1+pv\), and a finite geometric series in \(pv\)
corrects this product to 1.

We need a multiplicative interpolation set whose cardinality is a unit.
For each nonzero \(x\in\F_Q\), lift its coordinate vector to
\(\widetilde x\in\mathcal A\) and put
\[
 t_x=\widetilde x^{\,Q^{a-1}}.
\]
The exponent has only \(O(a\log Q)\) bits. These are the Teichm\"uller
units, as can be verified directly. Every principal unit \(1+pu\) has
order dividing \(p^{a-1}\), since
\((1+p^j u)^p\in1+p^{j+1}\mathcal A\) for \(j\ge1\).
This holds also for \(p=2\). As
\(\widetilde x^{Q-1}\equiv1\pmod p\), we obtain
\(t_x^{Q-1}=1\); moreover \(t_x\bmod p=x\).
Two such roots with the same residue have a ratio whose order divides
both \(p^{a-1}\) and \(Q-1\), and hence are equal. Consequently
\(\mathcal T=\{t_x:x\in\F_Q^*\}\) is a cyclic group of order
\(Q-1\), mapped isomorphically onto \(\F_Q^*\).

For this group, ordinary character orthogonality remains valid over the
ring:
\[
 \sum_{t\in\mathcal T}t^j=
 \begin{cases}
 Q-1,&j\equiv0\pmod{Q-1},\\
 0,&j\not\equiv0\pmod{Q-1}.
 \end{cases}
\]
To prove the second case, let \(g\) generate \(\mathcal T\).
The reduction of \(g^j-1\) is nonzero, so \(g^j-1\) is a unit;
the geometric-sum identity then applies. The invertibility assertion
is essential: a nontrivial root of unity in a general ring would not
by itself justify this calculation.

In one variable we therefore have
\[
 (Q-1)^{-1}\sum_{t\in\mathcal T}t^{Q-1-k}f(t)=[X^k]f
 \quad\text{in }\mathcal A.
\]
There is no aliasing, because for \(0\le k,\nu\le D<Q-1\), the
exponent \(Q-1-k+\nu\) is a multiple of \(Q-1\) exactly when
\(\nu=k\). Applying this independently in each variable gives
\begin{equation}\label{eq:ring-coefficient-fourier}
 (Q-1)^{-r}
 \sum_{t_1,\ldots,t_r\in\mathcal T}
 \left(\prod_{j=1}^r t_j^{Q-1-k_j}\right)f(t_1,\ldots,t_r)
 = [\mathbf X^{\mathbf k}]f.
\end{equation}

For each presentation and each summand, include the scalar factor
\(I_0^{-1}(Q-1)^{-r}\) and take
its canonical constant coordinate in \(\{0,\ldots,p^a-1\}\).
We sum these nonnegative integers without reducing the total modulo
\(p^a\). Coordinate extraction is additive modulo \(p^a\), and the
right side of \eqref{eq:ring-coefficient-fourier} is the same scalar
for every presentation. Summing also over \(\mathcal I\) therefore
multiplies the desired coefficient by \(I_0^{-1}|\mathcal I|=1\)
modulo \(p^a\). The resulting integer \(W_{p^a}\) is a representative
of that coefficient. We consistently use the constant coordinate, so
the argument does not rely on an additional trace normalization.

There are \(I_0(Q-1)^r\) summands, each indexed by a polynomial-length
string. A counting machine guesses \(F\), verifies irreducibility and
the trace coefficient, guesses the field vectors, and computes its
short nonnegative contribution \(h\). It accepts on exactly \(h\)
further branches by guessing an integer smaller than \(h\).

To handle the factorization of \(m\), guess its canonical sorted list
of prime powers. Primality, the exact product, the ordering, and canonical
padding can all be checked in polynomial time. Unique factorization
gives exactly one successful certificate. This removes the need for
a factoring oracle; it does not give a deterministic factoring algorithm.

Finally, for the maximal prime powers \(m_\ell\) dividing \(m\), set
\[
 E_\ell=\frac m{m_\ell}
 \left(\left(\frac m{m_\ell}\right)^{-1}\bmod m_\ell\right),
 \qquad
 W=\sum_\ell E_\ell W_{m_\ell}.
\]
These short CRT weights give the required congruence modulo \(m\).
They are nonnegative and less than \(m\), so this weighted sum is still
in \(\SharpP\). For each prime-power factor,
\(W_{p^a}<I_0(Q-1)^r p^a\). There are at most \(\log_2 m\) factors,
\(\log I_0\le d\log p\), and \(\log Q=d\log p\).
The complete count consequently has polynomial bit length in
\(r,\log(D+2),\log m\), as required.
\end{proof}

The lemma supplies a nonnegative representative even for a negative
coefficient. Only its congruence matters. In particular, the proof makes
no assertion that reducing a counting-function value to its canonical
residue preserves the counting class.

The extension ring serves two separate purposes. Its residue field
provides more interpolation points than the formal degree, while its
characteristic retains the requested prime-power modulus. The
interpolation length is \(Q-1\), which is coprime to the characteristic.
Thus increasing the field degree avoids aliasing without introducing a
noninvertible averaging factor. The eventual coefficient belongs to the
base ring. Averaging over presentations is valid because this scalar
is independent of the presentation; arbitrary intermediate coordinate
values would not have this property.

\subsection{A beta kernel for integer signs}

We isolate the reconstruction step so that its oracle requirement is
explicit. Here \(\GapP^{\PH}\) means that a fixed finite level of PH
suffices for the oracle computation.

The next statement is useful beyond SLPs. Its hypothesis specifies an
integer through compatible congruences, together with a height bound.
The algorithm need not know a compact arithmetic expression for that
integer. What matters is the complexity of a representative for each
congruence. The representative may depend on the modulus in any uniform
way, and it may be signed. We impose no joint size bound on the
integer reconstructed from all the congruences except the stated height
bound.

\begin{lemma}\label{lem:beta-sign}
Suppose an integer family \(Z(x)\) has a known bound
\(|Z(x)|<2^{H(x)}\), where \(H(x)\ge1\) is polynomial-time computable
in binary and has polynomially many bits. Suppose a uniform function
\(R(x,m)\in\GapP^{\PH}\) satisfies
\[
 R(x,m)\equiv Z(x)\pmod m
\]
for every positive binary modulus \(m\). Then \(\sgn Z(x)\) is
computable in \(\Pclass^{\NP^{\PP}}\).
\end{lemma}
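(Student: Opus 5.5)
We will combine the beta identity \eqref{depth:beta} with the first-event test of Lemma~\ref{lem:first-event}. Fix \(N\) as a short integer with \(N\ge H+4\), computed in binary from \(H(x)\). Put \(m_i=N+1+i\) for \(0\le i\le N\). The identity
\[
 \sum_{i=0}^N\frac{(-1)^i\binom Ni Z}{N+1+i}=C_NZ,\qquad C_N=\frac{(N!)^2}{(2N+1)!},
\]
has \(0<C_N\le 4^{-N}\). Hence \(|C_NZ|<2^{H-2N}<1/8\). When \(Z\ne0\), we also have \(|C_NZ|\ge C_N\ge 2^{-J}\) with the short exponent \(J=2N+2\).

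Next, replace each numerator by a short integer. We take \(A_i\) to be any integer congruent to \((-1)^i\binom NiZ\) modulo \(m_i\), and we use the form \(\sum_i A_i/m_i=K+\delta\) with \(\delta=C_NZ\). Changing a numerator by a multiple of \(m_i\) changes the sum by an integer, so the fractional displacement is unchanged. We need \(A_i\) with polynomial bit length, and we need the coarse sums \(T_j\) of \eqref{eq:coarse-sums} to be computable with a \(\PP\) oracle.

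To get them, write \(A_i=\beta_i\,R(x,m_i)\). Here \(\beta_i\) is a representative of \((-1)^i\binom Ni\bmod m_i\). We obtain it as a coefficient representative through Lemma~\ref{lem:ring-coefficients}, applied to the short SLP \((1+X)^N\) at index \(i\) and modulus \(m_i\), which gives a \(\SharpP\) function. Then \(A_i\) is a product of a \(\SharpP\) value and a \(\GapP^{\PH}\) value, so it is in \(\GapP^{\PH}\) with polynomially many bits.

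Each \(T_j\) is an exponentially indexed sum of \(A_i\) times the polynomial-time weight \(\lfloor 2^b(2^j\bmod m_i)/m_i\rfloor\). It is therefore again a raw \(\GapP^{\PH}\) quantity, because the weighted sum stays inside the counting expression. By Toda--Watanabe functional absorption \cite{TW92}, every \(\GapP^{\PH}\) function lies in \(\FP^{\PP}\). So each \(T_j\), and hence each \(v_j=(T_j\bmod 2^b)/2^b\), is computable in \(\FP^{\PP}\). Lemma~\ref{lem:first-event} then shows that the sign of \(Z\) is read off from the first event \(j\le J\).

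To find that event, note that the predicate ``some \(j\le t\) is an event'' is in \(\NP^{\PP}\). The machine guesses \(j\le t\), computes \(T_j\) with the \(\PP\) oracle, and checks \(\dist(v_j,\Z)\ge1/8\). The predicate is monotone in \(t\), so binary search over \(t\in[0,J]\) uses polynomially many queries. It returns either no event, meaning \(Z=0\), or the first event \(j^*\). One more \(\FP^{\PP}\) computation of \(v_{j^*}\) gives the sign: positive if \(v_{j^*}<1/2\) and negative otherwise. The whole procedure lies in \(\Pclass^{\NP^{\PP}}\).

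The main obstacle is the absorption step. The \(\PH\) oracle inside \(R\) must not stack above the outer counting, so we must keep \(T_j\) as a single raw gap expression and absorb it functionally. We must not retrieve canonical residues and sum them, which would add a level. We also have to check that the bit bounds in Lemma~\ref{lem:first-event} are met: \(|A_i|\le2^a\) with \(a\) polynomial, and \(|I|=N+1\) singly exponential, so the precision \(b\) is polynomial.
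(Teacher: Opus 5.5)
Your outline follows the paper's own proof. It uses the same moduli \(m_i=N+1+i\) with \(N\) about \(H+4\), the same numerators \(A_i=(-1)^iW_iR(x,m_i)\) with \(W_i\) from Lemma~\ref{lem:ring-coefficients}, the same coarse sums \(T_j\) kept as one raw \(\GapP^{\PH}\) expression, the same functional Toda--Watanabe absorption into \(\FP^{\PP}\), and the same binary search on the monotone \(\NP^{\PP}\) prefix predicate. One small point: applying Lemma~\ref{lem:ring-coefficients} to \((1+X)^N\) represents \(\binom Ni\), not \((-1)^i\binom Ni\). You must insert the sign \((-1)^i\) separately, as the paper does, or use \((1-X)^N\). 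Either way \(A_i\) stays in \(\GapP^{\PH}\).

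There is one step that fails as written: the bound \(C_N\ge2^{-J}\) with \(J=2N+2\) is false. Since \(C_N=1/\bigl((2N+1)\binom{2N}N\bigr)\) and \(\binom{2N}N\ge4^N/(2\sqrt N)\), we get \(C_N<4^{-N}/\sqrt N\). Take \(Z=\pm1\). Then every scale \(j\le 2N+2\) gives \(|2^j\delta|<4/\sqrt N\), which is below \(1/8-2^{-9}\) once \(N>1100\). By the approximation guarantee in Lemma~\ref{lem:first-event}, no \(v_j\) with \(j\le J\) is then an event. So your prefix query at \(t=J\) answers no, and you would output \(Z=0\). This is not a boundary case, because \(N\ge H+4\) is typically exponentially large.

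The repair is immediate. From \(\binom{2N}N\le4^N\) you get \(C_N\ge2^{-2N-\lceil\log_2(2N+1)\rceil}\), so \(J=2N+\lceil\log_2(2N+1)\rceil\) works. The paper instead uses the cruder bound \(C_N\ge(2N+1)^{-(2N+1)}\) with \(J=(2N+1)\lceil\log_2(2N+1)\rceil\). Either choice is still short, and with it the rest of your argument goes through unchanged.
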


\begin{proof}
We construct a positive rational multiplier small enough to place the
unknown integer in the range of Lemma~\ref{lem:first-event}. Set
\[
 N=H(x)+4,\qquad m_i=N+1+i\quad(0\le i\le N).
\]
Binary powering gives an SLP of size \(O(\log N)\) for \((1+X)^N\).
Lemma~\ref{lem:ring-coefficients} supplies
\(W_i\in\SharpP\) with
\(W_i\equiv\binom Ni\pmod{m_i}\). Put
\[
 A_i=(-1)^iW_iR(x,m_i),\qquad S=\sum_{i=0}^N\frac{A_i}{m_i}.
\]
Each \(A_i\) is a \(\GapP^{\PH}\) function: we take a product of
two counting functions and a polynomial-time sign. The exact identity
\begin{equation}\label{eq:beta-kernel}
 C_N:=\sum_{i=0}^N\frac{(-1)^i\binom Ni}{N+1+i}
 =\int_0^1x^N(1-x)^N\,dx
 =\frac{(N!)^2}{(2N+1)!}>0
\end{equation}
shows that \(S\) is an integer plus \(\delta=C_NZ(x)\).
Indeed, each \(W_iR(x,m_i)-\binom Ni Z(x)\) is divisible by \(m_i\).
The denominators need not be prime or pairwise coprime.

This use of representatives is local to each denominator. If either
factor in a numerator is changed by a multiple of its modulus, the
rational sum changes by an integer. The displacement therefore depends
only on the original binomial coefficient and \(Z(x)\), even though
the counting machines may return quite different integer lifts at
different indices. We never need to reconcile these lifts by a global
canonical reconstruction.

The integral gives \(C_N\le4^{-N}\), and hence \(|\delta|<1/8\).
If \(Z(x)\ne0\), integrality gives
\[
 |\delta|\ge C_N\ge(2N+1)^{-(2N+1)}\ge2^{-J},
 \qquad J=(2N+1)\lceil\log_2(2N+1)\rceil.
\]
Both \(J\) and every \(m_i\) have short binary descriptions. We use
the bounds on \(C_N\) without constructing its numerator or denominator.

For clarity, we give the counting implementation of the approximants
in Lemma~\ref{lem:first-event}. Fix a polynomial bound
\(|A_i|\le2^{a(n)}\), uniformly for \(n=|x|\), and choose
\[
 b=a(n)+\lceil\log_2(N+1)\rceil+12.
\]
At a binary scale \(0\le j\le J\), form the integer
\begin{equation}\label{eq:beta-approximation}
 T_j=\sum_{i=0}^N A_i
 \left\lfloor\frac{2^b(2^j\bmod m_i)}{m_i}\right\rfloor.
\end{equation}
Every multiplier is a polynomial-time computable short integer.
In particular, modular exponentiation takes time polynomial in the
binary length of \(j\). Products of two GapP functions, multiplication
by these weights, and exponentially indexed summation keep
\(T_j\) in \(\GapP^{\PH}\). Its entire output still has polynomial
bit length.

After retrieving \(T_j\), we take
\(v_j=(T_j\bmod2^b)/2^b\). Its circular distance from
\(2^j\delta\bmod1\) is at most \(\varepsilon=2^{-9}\), since
\[
 \left|2^{-b}T_j-
 \sum_i A_i\frac{2^j\bmod m_i}{m_i}\right|
 \le (N+1)2^{a(n)-b}\le2^{-12}.
\]
Replacing a modular power by \(2^j\) changes the rational sum by an
integer. The estimate uses absolute values and therefore applies to
signed representatives.

To retrieve these counts we use exact functional counting absorption.
Theorem~4.8 of Green et al.~\cite{GKRST95} states that for every
\(f\in\SharpP^{\mathrm{BPP}^{\oplus\Pclass}}\), there are
\(h\in\SharpP\) and a polynomial \(p\) such that
\begin{equation}\label{eq:functional-absorption}
 f(y)\equiv
 \left\lfloor h(y,0^m)/2^{p(|y|)}\right\rfloor\pmod{2^m}
 \qquad\text{for all }y,m.
\end{equation}
This is an explicit block-extraction form of the functional
Toda--Watanabe reduction~\cite{TW92}.
When deriving \eqref{eq:functional-absorption} from the amplification
construction, one uses modulus \(2^{m+p(|y|)}\) before the final shift.
Choose \(m\) above the polynomial output-bit bound of \(f\).
The block then recovers \(f(y)\) exactly from one \(\SharpP\) value
on a polynomial-size input. Toda's inclusion
\(\PH\subseteq\mathrm{BPP}^{\oplus\Pclass}\)~\cite{Toda91}
gives the \(\SharpP^{\PH}\) special case. Taking a difference handles
GapP. Consequently
\[
 \GapP^{\PH}\subseteq\FP^{\SharpP}=\FP^{\PP}.
\]
This is an exact functional reduction; the language containment
\(\PH\subseteq\Pclass^{\SharpP}\) alone would not justify the step.

All the parameters supplied to this reduction are short. In particular,
\(y\) includes the binary scale \(j\), rather than a unary string of
length \(j\). The unary padding \(0^m\) in
\eqref{eq:functional-absorption} is polynomial because it bounds the
output length of \(T_j\), not the output length of \(Z(x)\).
For a signed count, the two nonnegative parts have polynomial bounds
from their counting machines, so the same reasoning applies to each
part before subtraction.

We can therefore evaluate \(v_j\), the event predicate
\(\operatorname{dist}(v_j,\Z)\ge1/8\), and its side
\(v_j<1/2\), in \(\Pclass^{\PP}\). Lemma~\ref{lem:first-event}
shows that there is no event precisely when \(Z(x)=0\), and otherwise
the first event has the sign of \(\delta\). Its hypotheses hold with
the signed weights \(A_i\), precision \(b\), and search bound \(J\)
just established. This invokes the no-wrap argument of that lemma,
including its treatment of zero and of an event at the initial scale.

To locate that event, consider the monotone prefix predicate
\[
 \operatorname{Prefix}(x,t)
 \iff \exists j\le t\;
       \operatorname{dist}(v_j,\Z)\ge1/8,
 \qquad 0\le t\le J.
\]
The guessed index has polynomial length and its verification is in
\(\Pclass^{\PP}\), so Prefix belongs to \(\NP^{\PP}\).
We first query Prefix at \(J\) to detect zero. Otherwise,
\(O(\log(J+1))\) binary-search queries find the least event; its side
is also accessible to the same oracle. Finally, \(C_N>0\) implies
\(\sgn\delta=\sgn Z(x)\).
\end{proof}

The order of the operations is important. We first form the counting
function in \eqref{eq:beta-approximation}, using the representatives only
through sums and products. We then retrieve its short value and perform
ordinary deterministic arithmetic. Thus canonical reduction occurs
after the counting sum, where it does not add a counting level.

The fixed precision does not discard the information needed for the
sign. It identifies a scale at which that information becomes visible.
The first-event lemma makes this scale search reliable, while the
prefix oracle makes it efficient: the number of possible scales is
exponential, but their binary-search depth is polynomial. These two
facts account for the different roles of PP and the outer NP oracle
in the conclusion.

\subsection{Applying the sign criterion to arithmetic circuits}

\begin{proof}[Proof of the deterministic assertion in Theorem~\ref{thm:posslp}]
Let the input SLP have \(s\) arithmetic gates, and choose \(\beta\ge1\)
so that every input integer constant has magnitude at most \(2^\beta\).
Starting with \(M_0=2^\beta\), each arithmetic gate raises a valid
bound on the maximum magnitude to at most its square: addition and
subtraction use \(2M\le M^2\), and multiplication uses \(M^2\).
Hence its integer output satisfies
\[
 |Z|\le2^{\beta2^s}<2^H,
 \qquad H=(\beta+1)2^{s+1}.
\]
The bound \(H\) has \(O(s+\log(\beta+1))\) bits. Evaluation of every
gate modulo an arbitrary binary integer \(m\) is polynomial time,
and supplies the representative required by Lemma~\ref{lem:beta-sign}.
We obtain \(\PosSLP\in\Pclass^{\NP^{\PP}}\). This also covers the
usual formulation with constants 0 and 1, since a binary integer constant
can be constructed with a number of gates linear in its encoding length.

For coefficient positivity, let \(f\) be an integer polynomial SLP
with a target exponent vector supplied in binary. Its total degree is
at most \(2^s\). Its coefficient \(\ell_1\)-norm obeys the same
recurrence as the magnitude bound above: addition and subtraction are
subadditive, multiplication is submultiplicative, and an input variable
has norm 1. Every coefficient therefore satisfies the same bound
\(|c|<2^H\). We immediately return zero if a target exponent exceeds
\(2^s\). Otherwise Lemma~\ref{lem:ring-coefficients} gives uniform
\(\SharpP\) representatives for
\(c=[\mathbf X^{\mathbf k}]f\) modulo every short integer.
These meet the hypothesis of Lemma~\ref{lem:beta-sign}, proving
coefficient positivity in \(\Pclass^{\NP^{\PP}}\) as well.
Neither application requires writing the potentially exponential-length
integer whose sign is being tested.
\end{proof}

The representation hypothesis also describes the limit of this result
for the ETR construction. To invoke the sign criterion there, one would
need \(\GapP^{\PH}\) representatives of its scalar norm expression,
or a reduction to a coefficient of a polynomial-size SLP. The available
indexed norm computations have not been put in that form. For example,
a modular product algorithm may first retrieve a counting value and
then exponentiate a field element by that value. The resulting
polynomial-time computation with a counting oracle is not one of the
sums or fixed products covered by GapP closure. The coefficient lemma
handles an SLP coefficient directly, but does not eliminate that
separate nonlinear operation.

\section{Short-seed randomized sign reconstruction}\label{push4:sec}

The first-event argument locates a useful scale by asking whether an
earlier event exists. We can replace this existential search by a
randomized magnitude test. The test multiplies the unknown displacement
by a random integer and observes the resulting phase. A multiplier with
exponentially many bits is useful here, so we generate its bits from a
short seed and evaluate the phase without constructing the multiplier.
This gives a second upper bound for the integer problems.

\begin{theorem}\label{push4:posslp}
The problems \(\PosSLP\) and positivity of a specified coefficient of an
integer polynomial SLP belong to \(\BPP^{\PP}\subseteq\Clevel2\).
\end{theorem}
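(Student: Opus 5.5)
The plan keeps the beta-kernel reduction of Lemma~\ref{lem:beta-sign} and changes only how the useful scale is found. Set \(N=H+4\), \(m_i=N+1+i\), and take the signed weights \(A_i=(-1)^iW_iR(x,m_i)\), where the \(W_i\) come from Lemma~\ref{lem:ring-coefficients}. Then
\[
 \sum_i A_i/m_i=\text{integer}+\delta,\qquad \delta=C_NZ,
\]
with \(|\delta|<1/8\) and, when \(\delta\ne0\), \(|\delta|\ge2^{-J}\). For SLP values \(R\) is direct modular evaluation. For coefficients \(R\) is the \(\SharpP\) representative of Lemma~\ref{lem:ring-coefficients}. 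In both cases every weighted sum is a \(\GapP\) function of polynomial output length, retrievable exactly in \(\FP^{\PP}\). So one \(\PP\) query returns \(2^{-b}\)-precise values of \(\sum_iA_i\,(u\bmod m_i)/m_i\bmod1\) for any short description of a multiplier \(u\). By the CRT argument in Lemma~\ref{lem:first-event}, each such value approximates \(u\delta\bmod1\).

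The randomized replacement for the \(\NP^{\PP}\) prefix search is as follows. Pick a scale \(j\in[0,J]\) uniformly (polynomially many bits), and a random multiplier \(u=2^j+r\) with \(r\) in a window much smaller than \(2^j\). Compare the two phases \(v=u\delta\) and \(v'=2u\delta\). At the right scales \(|2^j\delta|\in[1/8,1/4]\), the doubled phase does not wrap, and the side of \(v\) gives \(\sgn\delta\). The point of the randomness is robustness. Rather than searching for the first event, I would use a scale-search loop: for each candidate \(j\), query the phases at \(j\) and \(j+1\). I accept \(j\) when \(\dist(v_j,\Z)\in[1/8,1/4]\) and \(v_{j+1}\) is consistent with doubling without wrap. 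The consistency check rules out large-magnitude scales disguised by wrapping, since a wrapped phase and its double typically fail the relation.

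The step I expect to be the main obstacle is certifying, without an \(\NP\) quantifier, that an accepted scale is not a post-wrap false event. My first attempt is a multi-scale consistency test. For a random multiplier \(u\) with \(\Theta(\log)\) seed bits drawn from a pairwise-independent, random-access family of exponential-length multipliers, the vector of phases \((u\delta,2u\delta,\dots,2^ku\delta)\) reveals \(|\delta u|\) up to a constant factor. The reason is that the sequence is a binary expansion read off bit by bit. Each bit \(\lfloor 2^{t}u\delta\bmod1\cdot2\rfloor\) is a single \(\PP\) query, because \(2^tu\bmod m_i\) is computable from the seed by modular arithmetic.

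The search itself is a randomized bisection on \(j\). At a candidate \(j\), declare it "too small" if all of a random polynomial batch of phases \(2^j u\delta\) stay within \(1/8\) of an integer for multipliers \(u\in[1,2]\) in dyadic units. Declare it "too large" if random multipliers \(u\) in that range scatter the phase roughly uniformly; this happens with constant probability once \(|2^j\delta|\gg1\), by equidistribution of \(u\mapsto u\theta\bmod1\) over an interval of length \(\gg1/\theta\). The short-seed generator must make this equidistribution statement hold. For that I would use multipliers \(u=2^j+s\cdot2^{j-\ell}\) with short \(s\), whose phases form an arithmetic progression with step \(2^{j-\ell}\delta\), and argue directly about arithmetic progressions mod \(1\).

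Each bisection step errs with small probability after amplification, so \(O(J')\) steps, where \(J'\) is the bit length of \(J\), give a scale with \(|2^j\delta|\in[1/16,1/2)\). The side of that phase then gives the sign; zero is detected because no scale up to \(J\) is ever "too large". All queries are \(\PP\) queries made by a \(\BPP\) machine, yielding \(\BPP^{\PP}\). Finally \(\BPP^{\PP}\subseteq\PP^{\PP}=\Clevel2\).
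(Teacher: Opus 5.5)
\textbf{There is a genuine gap in the scale search, at exactly the point you flag as the main obstacle.} Your reduction to a small displacement $\delta=C_NZ$ is the paper's, and so is exact retrieval of the phase sums in $\FP^{\PP}$.

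Your magnitude test at scale $j$ uses multipliers $2^j+s2^{j-\ell}$ with short $s<2^\ell$. The phases are then an arithmetic progression with step $\theta=2^{j-\ell}\delta$. Equidistribution of $s\theta\bmod1$ is controlled by $\dist(\theta,\Z)$, not by $|\theta|$:
\[
\Bigl|2^{-\ell}\sum_{s<2^\ell}e^{2\pi is\theta}\Bigr|\le\frac{1}{2^{\ell+1}\dist(\theta,\Z)}.
\]
Only for $j$ within about $\ell$ of the critical scale $\log_2(1/|\delta|)$ is $|\theta|<1/2$, so that $\dist(\theta,\Z)=|\theta|$.

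The bisection necessarily tests scales far above that window. For instance, when $|Z|$ is small the critical scale is about $2N$, while the first midpoint is near $J/2\approx N\log_2N$. There $\theta\bmod1$ is determined by arbitrary binary digits of $\delta$.

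Suppose $\dist(2^{j-\ell}\delta,\Z)<2^{-\ell-4}$. This needs only a run of about $\ell$ identical digits of $\delta\bmod1$ near position $j-\ell$, and nothing in the construction excludes it. Then every phase $2^j\delta+s\theta$ is within $2^\ell\dist(\theta,\Z)+s\,\dist(\theta,\Z)<1/8$ of an integer. Your test declares ``too small'' although $2^j|\delta|$ is astronomically large. The bisection then moves past the true scale and returns a meaningless window.

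The same wrap-around defect affects your other devices:
\begin{itemize}
\item A uniform $j\in[0,J]$ lands in the useful window with probability $O(1/J)$, which is exponentially small.
\item A doubling-consistency check cannot distinguish a post-wrap phase near $\pm1/8$ from a genuine one.
\item The digits of $\{u\delta\}$ carry no information about $|u\delta|$ once $|u\delta|>1/2$.
\end{itemize}
What you need is a correct label, with high probability, at \emph{every} tested scale; ``typically'' is not enough.

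\textbf{The missing idea is to keep the phase step equal to $\delta$ itself.} This never wraps once $|\delta|<2^{-32}$, which is why the paper takes $N=H+16$ rather than $H+4$; that choice also keeps the final scale $h-6$ nonnegative.

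The paper uses a multiplier $K=\sum_{\ell<t}k_\ell2^\ell$ whose $t$ bits, possibly exponentially many, come from a short seed via Lemma~\ref{push4:generator}. That generator is random-access and fools every Fourier shape $\prod_\ell f_\ell(k_\ell)$ with $|f_\ell|\le1$. Since $e^{2\pi iK\delta}=\prod_\ell e^{2\pi ik_\ell2^\ell\delta}$ is such a shape, the sampled mean approximates $\mu_t=\mathbb E_{K<2^t}\cos(2\pi K\delta)$. The geometric-sum bound then gives $\mu_t>9/10$ when $2^t|\delta|\le1/16$ and $|\mu_t|\le1/8$ when $2^t|\delta|\ge4$, at every $t$. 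Pairwise independence, which you suggest, does not give this, because the phase depends jointly on all $t$ bits.

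With these labels, bisection between a high label at $t=0$ and a low label at $t=J+3$ uses only the two endpoint implications. It tolerates arbitrary labels in the transition range and ends with $1/32<2^h|\delta|<4$. One phase query at $h-6$ then reads off the sign, as in Lemma~\ref{push4:sign}.
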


We first state the sign-reconstruction interface. All length bounds in
the next lemma refer to the original input length \(n\); in particular,
the indices of exponential sums have polynomial length.

\begin{lemma}\label{push4:sign}
Suppose
\[
 S(x)=\sum_{i<L}\frac{A(x,i)}{m(x,i)}=z+\delta,
 \qquad z\in\Z,
\]
where \(L,J\) are positive integers with polynomial-length binary
descriptions computable in polynomial time from \(x\), the positive
integers \(m(x,i)\) are polynomial-bit and polynomial-time computable,
and \(A(x,i)\) is a uniform \(\GapP^B\) function satisfying
\(\lvert A(x,i)\rvert\le 2^{a(n)}\) for a known polynomial \(a\).
Assume
\[
 \delta=0\quad\text{or}\quad
 2^{-J}\le\lvert\delta\rvert<2^{-32}.
\]
If the relevant \(\GapP^B\) functions can be retrieved exactly in
\(\FP^D\), then the sign of \(\delta\) is decidable in \(\BPP^D\).
\end{lemma}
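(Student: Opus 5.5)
The plan is to replace the deterministic first-event search of Lemma~\ref{lem:first-event} by a randomized multiplier. Take a random integer \(r\) of about \(J+c\) bits and estimate the phase of \(r\delta \bmod 1\). If \(\delta\neq0\), then \(|r\delta|\) is spread over a range of size \(\gg1\). Its phase is therefore roughly uniform, and it lands in a window of fixed width, say \([1/8,3/8]\) or \([5/8,7/8]\), with constant probability. The ambiguity is that the phase of \(r\delta\) does not by itself reveal \(\sgn\delta\). To resolve it, I would use pairs of scales: compare the phase of \(r\delta\) with that of \(2r\delta\), or better, that of \(r'\delta\) with \(r' = r\cdot 2^{-k}\) truncated. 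I prefer a cleaner route. Pick a random scale \(k\in\{0,\ldots,J\}\) and a random short \(r\), and test the phase of \(2^k r\delta\). With probability \(1/\poly\), \(k\) lands where \(2^k|\delta|\in[1/64,1/32]\). Then a short random \(r\in[1,4]\) keeps \(2^kr\delta\) inside \((-1/4,1/4)\) without wrap, and its side reveals the sign. The remaining difficulty is detecting that a good scale was hit. For this I would use the event test of Lemma~\ref{lem:first-event} at two consecutive scales \(k\) and \(k+1\). A scale is good when \(v_k\) is near zero but at distance at least \(2^{-7}\), and \(v_{k+1}\) has centered value consistent with exact doubling, i.e.\ equal to \(2v_k\) up to \(2\varepsilon\). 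I would show that any such consistent pair in the window \(|v_k|\in[2^{-7},2^{-5}]\) must come from an unwrapped centered value. This is the robust-search step.

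The key steps, in order, are as follows. (i) Compute, as in \eqref{eq:beta-approximation}, the coarse sums \(T_j\) at any binary scale \(j\). These have the form \(\sum_i A(x,i)\lfloor 2^b(2^j\bmod m_i)/m_i\rfloor\), which is a \(\GapP^B\) function of \((x,j)\) with polynomial output length, hence retrievable in \(\FP^D\) by hypothesis. This gives \(v_j\) to circular accuracy \(2^{-9}\). (ii) Establish the structural claim: at the unique scale \(k^*\) with \(2^{k^*}|\delta|\in[2^{-6},2^{-5})\), the centered values at \(k^*\) and \(k^*+1\) are exactly \(2^{k^*}\delta\) and twice that. They have the sign of \(\delta\), and they pass the consistency test. (iii) Establish soundness: any scale \(k\) passing a three-scale consistency test \(v_k, v_{k+1}, v_{k+2}\) in geometric progression, with \(|v_k|\in[2^{-7},2^{-4}]\), has the sign of \(\delta\). (iv) Sample \(k\) uniformly in \([0,J]\), repeated \(O(J)\) times. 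This is polynomial, since \(J\) is only polynomially long in binary? No. \(J\) is exponential in value, so uniform sampling hits \(k^*\) only with probability \(1/J\). This is the obstacle.

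That obstacle is the real difficulty, and the hypothesis \(|\delta|<2^{-32}\) hints at the fix: use a random large multiplier rather than a random scale. I would choose \(r\) uniformly from \([0,2^{J+40})\), specified by a short seed through a pairwise-independent-style generator whose value modulo each \(m_i\) is computable in polynomial time. One example is \(r=\sum_{t} s_t 2^{t}\) with bits given by an affine function of a short seed over \(\F_2\), so that \(r\bmod m_i\) is a geometric-type sum computable by binary doubling. The phase of \(r\delta\) is then nearly uniform. To extract the sign, compare it with the phase of \((r+2^{k})\delta\) for the guessed magnitude scale. That scale can itself be found by binary search on \(k\) using phase statistics, estimated by sampling: the probability that the phase of \(2^k s\delta\) is small, for random short \(s\), is monotone-ish in \(k\). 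Binary search needs \(O(\log J)\) steps, each a \(\BPP^D\) estimate. Amplification and a union bound give overall bounded error. The step I expect to be hardest is proving that the short-seed generator yields sufficiently uniform phases modulo all \(m_i\) jointly, given that \(\delta\) is a combination over exponentially many denominators. I would first try to bound the relevant exponential sum \(\mathbb E_r e^{2\pi i r\delta}\) directly, using the affine structure of the seed.
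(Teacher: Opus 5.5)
Your outer architecture is the right one: a bisection over scales driven by sampled phase estimates, $O(\log J)$ adaptive tests, then amplification and a union bound. The gap is that every statistic you propose depends only on the residue $2^k\delta\bmod1$, and no such statistic can certify that no wraparound occurred at an earlier scale. In particular claim (iii) is false. Doubling commutes with reduction modulo one, so take $\delta>0$ and $2^k\delta=N-2^{-6}$ with an integer $N\ge1$. Then $v_k,v_{k+1},v_{k+2}$ approximate $-2^{-6},-2^{-5},-2^{-4}$: a perfect geometric progression in your window, with the wrong sign. This is exactly why Lemma~\ref{lem:first-event} needs the \emph{first} event, i.e.\ information about all earlier scales.

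Your final bisection has the same defect. For a short multiplier $s$, the phase distribution of $2^ks\delta$ is a function of $2^k\delta\bmod1$, which is not monotone in $k$ after the first wrap. So a large scale can be labelled ``small'' again, and at the top scale you cannot separate $\delta=0$ from $2^{J+3}\delta$ close to an integer. Comparing $r\delta$ with $(r+2^k)\delta$ only reads the phase of $2^k\delta$ again, and near-uniformity of $r\delta$ is not the property you need.

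The missing idea is that the multiplier at a tested scale $t$ must range over all of $[0,2^t)$. For uniform $K$ there, $K\delta$ is a progression whose step is the tiny number $\delta$ itself. Hence $\mu_t=\mathbb E_K\cos(2\pi K\delta)$ is governed by the \emph{unreduced} number $2^t|\delta|$:
\begin{itemize}
\item $\mu_t>9/10$ if $2^t|\delta|\le1/16$;
\item $|\mu_t|\le1/(2\cdot2^t|\delta|)\le1/8$ if $2^t|\delta|\ge4$, by the geometric sum and $|\sin\pi\delta|\ge2|\delta|$.
\end{itemize}
The bisection must use only these two one-sided implications, since labels in the transition range need not be monotone. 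Start with a high label at $t=0$ and the label at $t=J+3$, which on accurate tests is high exactly when $\delta=0$. Bisecting finds adjacent scales $h,h+1$ with $1/32<2^h|\delta|<4$. The hypothesis $|\delta|<2^{-32}$ gives $h\ge28$. At $j=h-6$ you have $1/2048<|2^j\delta|<1/16$, and one deterministic phase query reveals the sign.

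Because $K$ has exponentially many bits, your generator must fool the product $\prod_{\ell<t}\exp(2\pi i\delta2^\ell u_\ell)$ of bounded one-bit functions. Affine or pairwise-independent bits do not. An affine map with seed length below $t$ has a nonzero parity of bias one. Writing each factor as $a_\ell+b_\ell(-1)^{u_\ell}$, even an $\epsilon$-biased space only gives the error bound $\epsilon\prod_\ell(|a_\ell|+|b_\ell|)$, which can be exponential in $t$.

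You need a random-access generator for such Fourier shapes. One example is the recursive composition $G_{r+1}(x,a,b)=G_r(x)\Vert G_r(x+V_r(a,b))$ with small-bias steps and seed length $O(\log^2(t/\epsilon))$. No closed form for $K\bmod m_i$ is then required. The phase is the single $\GapP^B$ sum $\sum_{i<L}\sum_{\ell<t}A(x,i)k_\ell\lfloor2^b(2^\ell\bmod m_i)/m_i\rfloor$ over the double index $(i,\ell)$, with $b$ growing only like $\log t$. It is retrieved exactly in $\FP^D$, and the cosine and the Hoeffding average over fresh seeds are computed after retrieval.
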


The retrieval hypothesis applies to the complete weighted sums formed
below, rather than only to individual numerators. For our integer
application we use \(B=\PH\), \(D=\PP\), and exact functional counting
absorption. For the norm construction we use \(B=\Clevel3\) and
\(D=\Clevel4\).

\subsection{Short seeds for long phase products}

A Fourier shape on \(T\) bits is a function
\[
 f(u_0,\ldots,u_{T-1})=\prod_{j<T}f_j(u_j),
 \qquad f_j:\{0,1\}\longrightarrow\C,
 \quad\lvert f_j\rvert\le1.
\]
We need a generator which fools every such product and allows direct
access to an output coordinate. The following recursive construction
uses standard small-bias ideas \cite{AGHP92,NN93}. We give the proof
and a field implementation to make the required random access explicit.

\begin{lemma}\label{push4:generator}
Given \(T\ge1\) in binary and an integer \(q\ge2\), put
\(\epsilon=2^{-q}\). There is
a uniform generator
\[
 G:\{0,1\}^{d}\longrightarrow\{0,1\}^{T},
 \qquad d=O\!\left(\log^2(T/\epsilon)\right),
\]
whose expectation differs from the uniform expectation by at most
\(\epsilon\) for every Fourier shape. Given its seed and a binary
coordinate index, a selected output bit is computable in time
polynomial in \(\log(T/\epsilon)\).
\end{lemma}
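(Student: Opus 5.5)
We follow the recursive small-bias construction for products of bounded functions, in the Gopalan--Kane--Meka and Nisan--Zuckerman style of iterated restrictions, but we keep it elementary so that random access stays explicit. First we reduce to $T$ a power of two by padding with constant coordinates $f_j\equiv1$. The basic building block is an $\epsilon_0$-biased generator over $\F_2^T$ with seed length $O(\log(T/\epsilon_0))$, namely the AGHP powering construction \cite{AGHP92}. The seed is a pair $(x,y)\in\F_{2^k}^2$, and output bit $j$ is $\langle x^j,y\rangle$. The element $x^j$ is computed by binary powering in time $\poly(k,\log j)$, once an irreducible polynomial of degree $k$ over $\F_2$ is fixed. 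We fix it deterministically, using $k=2\cdot3^\ell$ and the explicit irreducible $X^{2\cdot3^\ell}+X^{3^\ell}+1$, so no search is needed and random access is immediate.

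The key analytic step is that small bias fools a Fourier shape when each $f_j$ has small nonconstant part. Write $f_j(u)=\alpha_j+\beta_j(-1)^u$ and expand the product. A subset $S$ of size $s$ contributes a character whose error under an $\epsilon_0$-biased source is at most $\epsilon_0\prod_{j\in S}|\beta_j|$. If the total $\sum_j|\beta_j|$ is small, the expansion therefore has total error $\epsilon_0\prod_j(|\alpha_j|+|\beta_j|)$, which is bounded by $\epsilon_0 e^{\sum|\beta_j|}$ when $|\alpha_j|\le1$. The general shape does not have small $\beta_j$, so we use the standard dichotomy. Coordinates with large $|\beta_j|$ must carry many factors with $|f_j|$ bounded away from $1$, unless they are few. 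If there are many, the product is already tiny, below $\epsilon$, under both distributions. Proving the ``tiny under the pseudorandom distribution'' half needs a limited-independence tail bound. For this we XOR in a $k$-wise independent string with $k=O(\log(T/\epsilon))$, obtained by evaluating a random polynomial of degree $k$ over $\F_{2^k}$, which again allows random access.

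The recursion handles the middle regime. We pseudorandomly hash the $T$ coordinates into $O(\log(T/\epsilon))$ buckets using a limited-independence hash, and in each bucket we use an independent-seed small-bias string. We then recurse on halving the number of ``heavy'' coordinates per bucket, or equivalently we take $O(\log T)$ levels of restriction, each with seed length $O(\log(T/\epsilon))$, for a total of $d=O(\log^2(T/\epsilon))$. Each level is analysed with a hybrid argument in which we replace one level at a time. The error per level is $\epsilon/(2\log T)$, so the errors sum to $\epsilon$. An output bit is the XOR of polynomially many field-evaluated bits, one per level, each computed in $\poly(\log(T/\epsilon))$ time from the seed and the binary index.

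The main obstacle is the per-level error analysis of the middle regime, that is, showing that after the hash almost every bucket has small total $\sum|\beta_j|$ or a provably small product. The concentration here must be controlled with only $O(\log(T/\epsilon))$-wise independence. We would first try the moment-based argument of Gopalan--Kane--Meka, bounding the $k$-th moment of $\sum_{j\in\text{bucket}}|\beta_j|^2$ and using $|f_j|\le1\Rightarrow|\alpha_j|^2+|\beta_j|^2\le 1$ roughly to trade variance against product decay. If that bookkeeping fails, the fallback is to accept a worse constant in the exponent, still within $O(\log^2)$, by running more levels.
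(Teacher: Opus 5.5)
Your proposal does not yet prove the lemma. The step that carries all the difficulty is bounding the \emph{pseudorandom} expectation when many $|\beta_j|$ are large, and you leave it open yourself; ``run more levels'' is not an argument.

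The dichotomy you state is also false in exactly the case the paper needs. A large $|\beta_j|$ does not force $|f_j|$ away from $1$. For $f_j(u)=e^{2\pi i\theta_j u}$, the form of the shapes $\exp(2\pi i\delta2^\ell u_\ell)$ used in the phase test, every factor has modulus one. So $\prod_j f_j(u_j)$ is never pointwise small, and only its uniform expectation $\prod_j\alpha_j$ is small. Showing that your hashed limited-independence/small-bias distribution also makes the expectation small is the whole content of the Gopalan--Kane--Meka high-variance analysis. No generic tail bound for $k$-wise independence supplies it.

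The seed accounting also fails as written. $O(\log(T/\epsilon))$ buckets, each with an independent small-bias seed of length $\Omega(\log(T/\epsilon))$, already cost $\Omega(\log^2(T/\epsilon))$ bits per level. With $O(\log T)$ levels that gives order $\log^3$, not $\log^2$.

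The missing observation is that none of this machinery is needed. A Fourier shape on $2^{r+1}$ bits is simply a product $F\cdot H$ of bounded functions of its two halves. The paper uses an Impagliazzo--Nisan--Wigderson recursion
\[
G_{r+1}(x,a,b)=G_r(x)\,\Vert\, G_r\bigl(x+V_r(a,b)\bigr),
\]
where $V_r(a,b)=(\Tr(ab^j))_{j<d_r}$ is a $\lambda$-biased step on $\F_2^{d_r}$. Its bias is at most $(d_r-1)/2^{e_r}$, by counting roots of $\sum_jc_jb^j$.

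Fourier expansion and Cauchy--Schwarz then give
\[
\bigl|\mathbb E\,F(x)H(x+V)-\mathbb E F\,\mathbb E H\bigr|\le\lambda
\]
for \emph{all} $F,H$ bounded by one, with no variance or modulus hypothesis. Hence $\epsilon_{r+1}\le2\epsilon_r+\lambda$, and $\epsilon_k\le(2^k-1)\lambda<\epsilon$ for $\lambda=\epsilon/T'$.

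Each level adds $2e_r=O(\log(d_r/\lambda))$ seed bits, so $d_k=O(\log^2(T/\epsilon))$. An output coordinate is computed by following one branch per level. Your explicit irreducible $X^{2\cdot3^\ell}+X^{3^\ell}+1$ would serve for these field steps just as well as the paper's Artin--Schreier tower.
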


\begin{proof}
We first construct a small-bias step on \(d\) bits. Choose a field
\(\F_{2^e}\) with \(2^e\ge d/\lambda\). For independent uniform
\(a,b\in\F_{2^e}\), put
\[
 V_j(a,b)=\operatorname{Tr}_{\F_{2^e}/\F_2}(ab^j),
 \qquad 0\le j<d.
\]
For a nonzero vector \(c\in\F_2^d\), the polynomial
\(P_c(X)=\sum_jc_jX^j\) is nonzero and has degree at most \(d-1\).
Nondegeneracy of the trace pairing gives
\begin{equation}\label{push4:bias}
 \mathbb E_{a,b}(-1)^{c\cdot V(a,b)}
   =\Pr_b[P_c(b)=0]\le\frac{d-1}{2^e}\le\lambda.
\end{equation}
Thus the walk \(x\mapsto x+V(a,b)\) on \(\F_2^d\) has all its
nonconstant character eigenvalues bounded by \(\lambda\). Fourier
expansion, Parseval, and Cauchy--Schwarz imply, for arbitrary complex
functions \(F,H\) of modulus at most one,
\begin{equation}\label{push4:mixing}
 \left|\mathbb E_{x,a,b}F(x)H(x+V(a,b))
           -\mathbb E_xF(x)\,\mathbb E_xH(x)\right|\le\lambda.
\end{equation}
One subtracts the two means before applying the spectral bound; each
centered function has \(L_2\) norm at most one.

Pad the desired length to \(T'=2^k\), and set
\(\lambda=\epsilon/T'\). Start with \(G_0(x)=x\), using one seed
bit. If \(G_r\) has seed length \(d_r\) and output length \(2^r\),
define
\begin{equation}\label{push4:recursion}
 G_{r+1}(x,a,b)=G_r(x)\,\Vert\,G_r(x+V_r(a,b)),
\end{equation}
where \(V_r\) is the step above on \(d_r\) bits. Split a Fourier
shape into its two halves. Equation~\eqref{push4:mixing} changes the
expectation by at most \(\lambda\) when the two seed arguments are
made independent. If \(\epsilon_r\) bounds the error of \(G_r\),
the two marginal expectations introduce at most \(2\epsilon_r\).
Consequently
\[
 \epsilon_{r+1}\le2\epsilon_r+\lambda,
 \qquad \epsilon_0=0,
 \qquad \epsilon_k\le(2^k-1)\lambda<\epsilon.
\]
Discarding padded coordinates preserves this guarantee, since their
Fourier factors can be chosen identically one.

We can choose each field degree \(e_r\) to be the least power of two
at least \(\lceil\log_2(d_r/\lambda)\rceil\). Then
\(d_{r+1}=d_r+2e_r\), giving
\(d_k=O(\log^2(T/\epsilon))\). A requested output coordinate follows
only one branch at every level of \eqref{push4:recursion}. For a right
branch, compute the short vector \(V_r(a,b)\) and change the seed;
for a left branch, retain the seed. There are only \(k\) such steps.

Here the field presentations can also be constructed deterministically
within the claimed time. Start with \(K_0=\F_2\), \(\gamma_0=1\),
and recursively set
\[
 K_{r+1}=K_r[\theta]/(\theta^2+\theta+\gamma_r),
 \qquad \gamma_{r+1}=\gamma_r\theta.
\]
Inductively \(\gamma_r\) has absolute trace one. The quadratic is
therefore irreducible by the Artin--Schreier criterion, and its
relative trace sends \(\theta\) to one. Transitivity of trace shows
that \(\gamma_{r+1}\) again has absolute trace one. This constructs
the required power-of-two degrees. Arithmetic in the quadratic tower,
and trace by iterated Frobenius squaring, take polynomial time in
the final field degree. The coordinate algorithm is therefore
polynomial-time, without any field-search oracle.
\end{proof}

The functions defining a Fourier shape need not be supplied to the
generator or be efficiently computable. In particular, the same
generator fools \(\prod_{\ell<t}\exp(2\pi i\delta2^\ell u_\ell)\)
even though \(\delta\) itself may require exponentially many bits.

\subsection{Evaluating one phase}

Fix a positive binary integer \(t\), possibly exponential in \(n\).
For a seed \(s\), let \(k_\ell=G(s)_\ell\) and consider the
implicit integer
\[
 K=\sum_{\ell<t}k_\ell2^\ell.
\]
Since \(K\) is an integer, \(KS\) and \(K\delta\) have the same
fractional part. We evaluate that fractional part by reducing only
short modular powers. Put
\[
 r_{i,\ell}=2^\ell\bmod m_i,
 \qquad
 b=a(n)+\lceil\log_2L\rceil+\lceil\log_2(t+1)\rceil+30,
\]
and form
\begin{equation}\label{push4:raw-phase}
 T_{t,s}=\sum_{i<L}\sum_{\ell<t}
 A_i k_\ell\left\lfloor\frac{2^b r_{i,\ell}}{m_i}\right\rfloor.
\end{equation}
The pair \((i,\ell)\) has polynomially many bits. All the displayed
weights are polynomial-time computable short integers, including
\(k_\ell\) by Lemma~\ref{push4:generator}. Thus
\(T_{t,s}\in\GapP^B\), by closure under short integer weights and
exponentially indexed sums. Its entire output has polynomial length.

Termwise rounding and the bound on \(A_i\) give
\begin{equation}\label{push4:phase-error}
 \dist_{\R/\Z}\!\left(T_{t,s}/2^b,K\delta\right)
 \le Lt\,2^{a(n)-b}<2^{-30}.
\end{equation}
This estimate applies to signed numerators by the triangle inequality.
The precision depends on \(\log t\), so it remains polynomial even
when \(K\) has exponentially many bits.

Retrieve \(T_{t,s}\) in \(\FP^D\), take its remainder modulo
\(2^b\), and approximate
\[
 \cos\!\left(2\pi\frac{T_{t,s}\bmod2^b}{2^b}\right)
\]
to error \(1/1024\). By \eqref{push4:phase-error}, this approximates
\(\cos(2\pi K\delta)\) to error less than \(1/512\). We clip the
answer to \([-1,1]\). The remainder and cosine occur after one exact
counting retrieval. Neither operation is inserted into the counting
sum \eqref{push4:raw-phase}.

\subsection{Finding a scale with bounded-error tests}

For a genuinely uniform \(t\)-bit integer \(K\), define
\[
 \mu_t=\mathbb E_{0\le K<2^t}\cos(2\pi K\delta).
\]
The following estimates separate small and large scales:
\begin{equation}\label{push4:regimes}
 \begin{aligned}
 2^t|\delta|\le1/16&\quad\Longrightarrow\quad\mu_t>9/10,\\
 2^t|\delta|\ge4&\quad\Longrightarrow\quad|\mu_t|\le1/8.
 \end{aligned}
\end{equation}
The first follows because every phase has absolute value less than
\(\pi/8\). For the second, sum a geometric progression and use
\(\lvert\sin(\pi\delta)\rvert\ge2|\delta|\):
\[
 \left|\mathbb E_K e^{2\pi iK\delta}\right|
 =\left|\frac{1-e^{2\pi i2^t\delta}}
              {2^t(1-e^{2\pi i\delta})}\right|
 \le\frac{1}{2\cdot2^t|\delta|}.
\]

Generate the multiplier bits using Lemma~\ref{push4:generator} with
error \(1/128\). The resulting expectation differs from \(\mu_t\)
by at most \(1/128\), since the complex exponential is a Fourier
shape. The preceding phase evaluation adds less than \(1/512\).
Independent samples estimate their expectation within \(1/64\) with
failure probability \(\rho\), using \(O(\log(1/\rho))\) seeds by
Hoeffding's inequality. Hence we obtain an estimate \(\widehat\mu_t\)
with
\[
 |\widehat\mu_t-\mu_t|<1/16
\]
except with probability \(\rho\). Every sample is an \(\FP^D\)
computation on polynomial-length inputs.

Call a tested scale \emph{high} when \(\widehat\mu_t>1/2\), and
\emph{low} otherwise. On an accurate test,
\begin{equation}\label{push4:labels}
 \text{high at }t\ \Longrightarrow\ 2^t|\delta|<4,
 \qquad
 \text{low at }t\ \Longrightarrow\ 2^t|\delta|>1/16.
\end{equation}
Labels inside the transition range need not be monotone. We use only
these two implications.

\begin{proof}[Proof of Lemma~\ref{push4:sign}]
First test \(t_{\max}=J+3\). If \(\delta\ne0\), then
\(2^{t_{\max}}|\delta|\ge8\), so an accurate test is low. If
\(\delta=0\), every generated phase is zero modulo integers and the
test is high. A high answer therefore gives the zero output, subject
to the stated error probability.

Otherwise start with a high label at \(t=0\), where \(K=0\), and
the low label at \(t_{\max}\). Bisect this interval, retaining a
high left endpoint and a low right endpoint according to each tested
midpoint. After \(O(\log J)\) tests the endpoints are adjacent:
\(h\) is high and \(h+1\) is low. On accurate tests,
\eqref{push4:labels} gives
\[
 1/32<2^h|\delta|<4.
\]
This conclusion uses only the labels of the two endpoints, so arbitrary
behavior within the transition range causes no problem. Since
\(|\delta|<2^{-32}\), the low endpoint is at least 29 and \(h\ge28\).
Set \(j=h-6\). Then
\begin{equation}\label{push4:final-scale}
 1/2048<|2^j\delta|<1/16.
\end{equation}

One deterministic phase query now determines the sign. Take
\(b_0=a(n)+\lceil\log_2L\rceil+30\) and retrieve
\[
 U_j=\sum_{i<L}A_i
       \left\lfloor\frac{2^{b_0}(2^j\bmod m_i)}{m_i}\right\rfloor
 \quad\text{in }\FP^D.
\]
Its normalized centered remainder is within \(2^{-30}\) of
\(2^j\delta\). Equation~\eqref{push4:final-scale} ensures that its
strict sign equals that of \(\delta\).

There is a known polynomial bound \(Q=O(\log(J+4))\) on the number
of tests. Give each test failure probability at most \(1/(10Q)\) and
use fresh seeds. Conditional on any preceding transcript, the next
scale is fixed and its error bound still holds. A union bound gives
success probability at least \(9/10\) for the entire adaptive search.
All seeds and indices have polynomial length, establishing \(\BPP^D\).
The error can be reduced to \(2^{-\poly(n)}\) by standard repetition.
\end{proof}

\subsection{Integer circuits and coefficient signs}

\begin{proof}[Proof of Theorem~\ref{push4:posslp}]
Let \(Z\) be an integer with \(|Z|<2^H\), where \(H\) has a
polynomial-length binary description, and suppose that \(Z\) has
uniform short-modulus representatives in \(\GapP\). Use the
beta construction from Section~\ref{sec:posslp}, now with
\(N=H+16\). Lemma~\ref{lem:ring-coefficients} supplies
\(\SharpP\) representatives \(W_i\) of \(\binom Ni\) modulo
\(m_i=N+1+i\). If \(R_i\) represents \(Z\) modulo \(m_i\), set
\(A_i=(-1)^iW_iR_i\). A product of two counting functions remains in
\(\GapP\), and the exact identity is
\[
 \sum_{i=0}^N\frac{A_i}{m_i}
   =z+C_NZ,\qquad
 C_N=\int_0^1x^N(1-x)^N\,dx
     =\frac{(N!)^2}{(2N+1)!}>0.
\]
Since \(C_N\le4^{-N}\), its displacement satisfies
\(|C_NZ|<2^{-H-32}<2^{-32}\). When \(Z\ne0\),
\[
 |C_NZ|\ge\frac1{(2N+1)!}\ge2^{-J},
 \qquad J=(2N+1)\lceil\log_2(2N+1)\rceil.
\]
Thus the numerical hypotheses of Lemma~\ref{push4:sign} hold.
The complete sum \eqref{push4:raw-phase} is a plain \(\GapP\)
function, so its exact value is retrievable in \(\FP^{\PP}\).
This supplies the retrieval hypothesis without functional absorption
of a PH oracle. We obtain a
\(\BPP^{\PP}\) sign test for \(Z\).

For \(\PosSLP\), a circuit with \(s\) gates and integer constants
of absolute value below \(2^\beta\) has the uniform bit-height bound
\(H=(\beta+1)2^{s+1}\), with harmless initial padding. Its residue
modulo any short integer is computable in polynomial time. For a
specified SLP coefficient, the syntactic degree is at most \(2^s\).
Replacing subtraction by addition, constants by their absolute values,
and variables by one bounds the coefficient \(\ell_1\) norm by the
same height estimate. An exponent outside the degree bound gives zero
immediately; otherwise Lemma~\ref{lem:ring-coefficients} supplies the
needed \(\SharpP\) representatives. These observations prove
both claims. Finally, \(\BPP^{\PP}\subseteq\PP^{\PP}=\Clevel2\)
is the relativized inclusion \(\BPP\subseteq\PP\).
\end{proof}

This bound and the deterministic \(\Delta_2^{\PP}\) bound proved
earlier can both be retained. We do not identify \(\BPP^{\PP}\)
with the class \(\mathrm{BP}\mathbin{\cdot}\PP\) of randomized
reductions to one PP predicate, nor do we obtain \(\Pclass^{\PP}\).
The sampled phase undergoes nonlinear postprocessing after a counting
retrieval. Averaging those postprocessed answers exactly over all seeds
would introduce another counting computation; sampling is what keeps
this final step within bounded-error polynomial time with a PP oracle.

\section{Intermediate CH bounds and the ordered-rank refinement}
\label{sec:depth}

We retain the geometric reduction of Section~\ref{sec:etr} and improve
the implementation of its final sign test.  For every polynomial-length
parameter choice $\omega$, that reduction constructs a nonzero integer
\[
 Z_\omega=\Norm_{\mathcal A}\bigl(U(\ell)\bigr),
 \qquad |Z_\omega|<2^H,
 \qquad H=2^{\poly(n)}.
\]
Every negative value certifies feasibility, and a feasible instance has
a choice with negative value.  The polynomial-size sign-filter circuits
and short Thom conjunction remain unchanged.  The improvement comes
from evaluating their nested norms modulo short integers, rather than
recovering intermediate coefficient arrays.

An earlier implementation used full Thom-code hashing and the general
resultant circuits of \cite{AGS26}. Its conservative arithmetic-depth
ledger gave $\Clevel{1023}$, recorded in the supplementary treatment.
That ledger is not needed below. Here we account directly for the
specialized computation of Section~\ref{sec:etr}, obtaining the intermediate bounds
$\Clevel7$, $\Clevel6$, and $\Clevel5$, followed by the
$\Clevel4$ ordered-rank refinement.

Throughout this section, all indices and finite-ring elements have
polynomial bit length.  Bounds on degrees and integer bit lengths may
have singly exponential values, but have polynomial-length binary
descriptions.  One fixed polynomial bounds these lengths and the
running times uniformly over every permitted $\omega$ and every supplied
modulus.  We use $\Clevel{0}=\Pclass$ and
$\Clevel{h+1}=\PP^{\Clevel{h}}$.

\subsection{A four-level residue computation}

Write the critical equations as
$G_i(X)=X_i^5+\text{lower-total-degree terms}$, with $m=\poly(n)$
variables and quotient rank $D=5^m$.  For a sufficiently large prime
$p$, start from the grid polynomials
\[
 h_i(X_i)=\prod_{a=1}^5(X_i-a),\qquad
 F_i(t,X)=(1-t)h_i(X_i)+tG_i(X).
\]
If $p>5$, all initial Jacobian entries are units.  Normalizing by that
Jacobian and applying Lemma~\ref{lem:branches} gives polynomial-size
polynomial SLPs $B_b(t)$ approximating the formal branches, indexed by
$b\in\{1,\ldots,5\}^m$.  The coefficients are computed modulo $p$;
we specialize at $t=1$.

We use the same branches for the inner norm defining $q_t$, its
selected derivative evaluations, the cleared selector, and its outer
norm.  Polynomial operations, including coefficient extraction and
substitution, preserve agreement modulo $t^K$.  Consequently there is
no need to extract homotopy coefficients between the inner and outer
norms.

Here the required precision can be fixed before choosing the field.
If an element $V(t,X)$ has $t$-degree $L$ and $X$-degree $E$, its
multiplication matrix in the standard monomial basis satisfies
\[
 \deg_t\Norm(V)\le D(L+E+4m).
\]
Indeed, monic reduction starts in total $X$-degree at most $E+4m$;
each reduction decreases this degree and introduces at most one power
of $t$.  Applying this bound to the nested construction gives a uniform
$K_Z\ge\deg_t Z(t)$ of singly exponential value.  Choose branch
precision $K>K_Z$, and then bound the syntactic degrees of the branch
approximants and the resulting polynomial $P(t)$.  Polynomial-size
Newton SLPs have singly exponential syntactic degrees; the two products
of $D$ factors and the selector preserve such bounds.  Finally choose
$p-1$ larger than $K$, those degree bounds, and every extraction index.
This order avoids a precision--field-size circularity.

For $\deg V<p-1$ and $0\le j<p-1$, finite-field orthogonality gives
\[
 [v^j]V(v)=-\sum_{v\in\F_p}V(v)v^{p-1-j}
\]
The exponent is positive, and only the desired coefficient contributes
a power whose exponent is a positive multiple of $p-1$.
Several auxiliary extraction variables can be indexed together in one
counting sum.

An exponential product of polynomial-time computable nonzero field
elements lies in $\FP^{\SharpP}$.  Choose a primitive root by
$\FP^{\NP}$ preprocessing, and count the sum of the factors' unique
discrete logarithms.  Explicitly, for factors $u_i$ and generator $g$,
count triples $(i,e,j)$ satisfying $g^e=u_i$ and $0\le j<e<p-1$.
Exponentiating $g$ by that count gives the product.  Zero factors are
handled separately.  The preprocessing is performed before the
ordinary counting queries, with the generator passed as a parameter.
Relative to $\Clevel{h}$, this costs one further counting level.

The distinction between a short value and its counting representation
is important here.  An exponential sum of short nonnegative values
computable in $\FP^{\Clevel{h}}$ is a
$\SharpP^{\Clevel{h}}$ function: guess the summand index and an
integer below its value.  The answer still has polynomially many bits,
and polynomially many threshold queries retrieve it in
$\FP^{\Clevel{h+1}}$.  We charge this retrieval whenever its answer
will be evaluated inside a subsequent counting computation.  This
explains the separate levels for products and coefficient extraction.

The resulting ledger concerns entire short field values:
\[
\begin{array}{l|c}
\text{operation}&\text{function class}\\ \hline
\text{one branch factor}&\FP\\
\text{inner norm product}&\FP^{\Clevel{1}}\\
\text{selected derivative coefficient}&\FP^{\Clevel{2}}\\
\text{filter and short selector}&\FP^{\Clevel{2}}\\
\text{outer norm product }P(t)&\FP^{\Clevel{3}}\\
\text{final homotopy extraction and specialization}&\FP^{\Clevel{4}}
\end{array}
\]
The Hasse derivatives in our construction are extracted directly from
$q(z+v)$. The final row combines all coefficients through
$K_Z$ in one Fourier sum with a geometric-series weight.  Binary
doubling evaluates that weight in polynomial time.  Polynomially many
adaptive calls to an existing function oracle do not add a level.

To recover the sign, choose an interval of polynomial-bit primes whose
product $M$ exceeds $8|Z_\omega|$.  Elementary prime-product bounds give
such an interval above the required degree cutoff, together with
$M<2^{H_0}$ for a polynomial-bit $H_0$.  Index all integers in the
interval and give composites zero weight.  The CRT cofactors require
only an $\FP^{\Clevel{1}}$ product, so the scalar residues give weights
$A_i\in\FP^{\Clevel{4}}$ satisfying
\[
 \sum_i\frac{A_i}{p_i}=\text{integer}+\frac{Z_\omega}{M}.
\]
The fixed-precision sums in Lemma~\ref{lem:first-event} are then
$\SharpP^{\Clevel{4}}$ functions.  Their polynomial-bit values are
retrievable in $\FP^{\Clevel{5}}$.  Concretely, at scale $j$ the
lemma forms a polynomial-precision approximation to the fractional
part of $2^jZ_\omega/M$.  The scale index can be exponentially large
because $2^j\bmod p_i$ is computed from its binary description.
Before the first approximation whose circular distance from an integer
is at least $1/8$, the true signed distance has magnitude below
$1/8+2^{-9}<1/4$.  Doubling cannot cross the opposite half of the
circle.  The first departure therefore reveals the original sign;
the height bound guarantees that it occurs within the allowed scales.
Guessing the first event and
universally excluding earlier events gives a
$\Sigma_2^{\Clevel{5}}$ predicate.  All geometric and selector guesses
merge into its first existential block.  Relativized Toda
\cite{Toda91} therefore gives
\[
 \Sigma_2^{\Clevel{5}}\subseteq\PH^{\Clevel{5}}
 \subseteq\Pclass^{\Clevel{6}}\subseteq\Clevel{7}.
\]

\subsection{Products over arbitrary prime powers}

To save another level, we will retain the final Fourier sum as a
counting representative.  The beta identity of Section~\ref{sec:posslp}
accepts such representatives, but uses arbitrary composite denominators.
We must therefore remove the large-characteristic restriction.

The earlier choice of integer Hasse derivatives now matters. Since
$q^{(j)}=j!q_{[j]}$ over the integers and $j!>0$, their real sign
patterns agree, while the modular algorithm never needs to invert a
factorial at a small prime. The scalar and its algebraic-value bounds
were defined with these Hasse derivatives from the start.

Now let $M$ be an arbitrary positive polynomial-bit modulus.  Factor it
into maximal prime powers using $\FP^{\NP}$.
For each $p^a$, use the Galois ring
\[
 \mathcal R=GR(p^a,e)
   =(\Z/p^a\Z)[Y]/(f(Y)),\qquad Q=p^e,
\]
with $f\bmod p$ monic irreducible.  Choose
\[
 Q-1>\max\{5,K,\text{all approximate polynomial degree bounds},
                    \text{all extraction indices}\}.
\]
The least suitable $e$ is polynomially bounded.  Also
$a\le\log_2 M$ and $e\log(p^a)=\poly(n+\log M)$, so ring elements
remain short.  Choose the ring presentation and primitive residue-field
element deterministically by $\FP^{\NP}$ search, once for the entire
computation.

Teichm\"uller Fourier extraction uses exactly one short index per
nonzero residue-field element $x$: take its canonical coordinate lift
and raise it to $Q^{a-1}$.
The index has $O(e\log p)$ bits and the powering exponent has
$O(1+(a-1)e\log p)$ bits.  Although $\mathcal R$ has zero divisors, the
orthogonality proof of Lemma~\ref{lem:ring-coefficients} is valid:
an unwanted power of a Teichm\"uller generator differs from one by
a unit, since that difference has nonzero reduction in the residue
field.  Also
$Q-1$ itself is a unit.  These are the only inverses needed for
coefficient extraction.

\begin{lemma}\label{lem:ring-products}
An exponentially indexed product of short elements of $GR(p^a,e)$,
whose factors are uniformly computable in $\FP^{\Clevel{h}}$, is in
$\FP^{\Clevel{h+1}}$ for $h\ge1$.  Polynomial-time factors give
$\FP^{\SharpP}$.
\end{lemma}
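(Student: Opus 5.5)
We prove Lemma~\ref{lem:ring-products} by extending the discrete-logarithm product trick from \(\F_p\) to the Galois ring. The unit group has the explicit decomposition
\[
 \mathcal R^\times\cong\mathcal T\times(1+p\mathcal R),
\]
so nonunits are handled by their \(p\)-adic valuations and units are split into a Teichm\"uller part and a principal part. Write \(u_i\) for the factors. The steps, in order, are as follows.

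First, we classify the factors. For each factor \(u_i\), the \(\FP^{\Clevel h}\) computation gives the element explicitly. Its valuation \(\nu_i\in\{0,\ldots,a\}\), the largest \(\nu\) with \(p^\nu\mid u_i\), is polynomial-time computable from the coordinates. Write \(u_i=p^{\nu_i}w_i\) with \(w_i\) a unit, choosing \(w_i\) canonically from the coordinates. Then \(\sum_i\nu_i\) is a \(\SharpP^{\Clevel h}\) count, retrievable in \(\FP^{\Clevel{h+1}}\). If this sum is at least \(a\), the product is zero. Otherwise the product is \(p^{\sum\nu_i}\prod_iw_i\). The representative \(w_i\) is determined only modulo \(p^{a-\nu_i}\), but multiplying by \(p^{\sum\nu_i}\) kills the ambiguity.

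Second, we treat the Teichm\"uller part. Set \(\omega(w)=w^{Q^{a}}\). Since \(w^{Q^{a-1}(Q-1)}\) is principal and principal units have order dividing \(p^{a-1}\), raising to \(Q^{a-1}\) lands in \(\mathcal T\); this is the map used in Lemma~\ref{lem:ring-coefficients}. With the primitive residue-field element \(g\) chosen once by \(\FP^{\NP}\) preprocessing, with Teichm\"uller lift \(\tilde g\), each \(\omega(w_i)\) has a unique exponent \(e_i<Q-1\) with \(\tilde g^{e_i}=\omega(w_i)\). It is checkable by guessing \(e_i\) and verifying in polynomial time. Counting triples \((i,e,j)\) with \(j<e\) gives \(\sum_ie_i\), and hence \(\prod_i\omega(w_i)=\tilde g^{\sum e_i}\).

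Third, we treat the principal part \(\pi_i=w_i\omega(w_i)^{-1}\in1+p\mathcal R\). This is the main obstacle, because no cyclic generator is available. I would use a fixed polynomial-time coordinate system on the finite abelian \(p\)-group \(1+p\mathcal R\).
\begin{enumerate}[label=(\roman*)]
\item For odd \(p\), the truncated \(p\)-adic logarithm \(\log(1+px)=\sum_{k<a'}(-1)^{k+1}(px)^k/k\) gives an isomorphism onto \(p\mathcal R\). Only polynomially many terms are needed, and the divisions by \(k\) can be handled by working modulo \(p^{a+O(\log a)}\). This reduces the product to a sum of ring coordinates. Each coordinate sum is a nonnegative count in \(\SharpP^{\Clevel h}\), retrievable in \(\FP^{\Clevel{h+1}}\), and the truncated exponential maps the result back.
\item For \(p=2\), the filtration \(1+2^k\mathcal R\) with successive quotients \(\cong\F_Q\) gives an alternative: peel off one digit layer at a time. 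Each layer needs only an additive count of residue coordinates, so polynomially many layers are adaptive calls to the same oracle and add no level. I would alternatively handle \(p=2\) by splitting off \(\pm1\) and using \(\log\) on \(1+4\mathcal R\).
\end{enumerate}

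The three counts are independent \(\SharpP^{\Clevel h}\) queries with polynomial-bit answers. Combining them takes polynomial time, giving \(\FP^{\Clevel{h+1}}\). For \(h\ge1\), the \(\FP^{\NP}\) preprocessing is absorbed. When the factors are polynomial-time computable, the same counts are plain \(\SharpP\), which gives \(\FP^{\SharpP}\).
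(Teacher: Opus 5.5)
Your first two steps match the paper: the valuation count, the unit lifts by coordinatewise division, and the unique-logarithm count for the Teichm\"uller parts. For odd $p$, your truncated $\log/\exp$ isomorphism $1+p\mathcal R\cong p\mathcal R$ is a valid alternative way to turn the principal product into $e$ coordinate counts. Its cost is extra $p$-adic working precision for the divisions by $k$ and $k!$.

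The gap is $p=2$. The lemma must cover this case, since the point of this section is arbitrary prime powers including small characteristic, and there the logarithm is not injective on $1+2\mathcal R$ ($\log(-1)=0$). Neither of your fallbacks works as stated.

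\emph{Splitting off $\pm1$.} This does not reach $1+4\mathcal R$. Under $(1+2\mathcal R)/(1+4\mathcal R)\cong\F_Q$, $1+2x\mapsto\bar x$, the element $-1$ maps to $1\in\F_2$. So $\{\pm1\}(1+4\mathcal R)$ has index $2^{e-1}$ in $1+2\mathcal R$. In the application $e$ is large, because $Q-1$ must exceed the degree bounds.

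\emph{The filtration sketch.} It claims each layer of the product needs only an additive count of residue coordinates, but already
\[
 \prod_i(1+2x_i)\equiv1+2\sum_ix_i+4\sum_{i<j}x_ix_j\pmod 8 .
\]
Reading the product layer by layer from the $\F_Q$ quotients, adaptively or not, therefore does not determine it: the carries between layers are lost.

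The missing idea is to factor each principal unit separately over a fixed set of commuting generators, and then count integer exponents.
\begin{enumerate}
\item Take $g_{j,l}=1+p^jY^l$ for $1\le j<a$ and $0\le l<e$.
\item If $z\in1+p^j\mathcal R$, the coordinates of $(z-1)/p^j$ modulo $p$ give digits $0\le d_{j,l}<p$.
\item Dividing $z$ by $\prod_lg_{j,l}^{d_{j,l}}$ puts it in $1+p^{j+1}\mathcal R$, since all other terms carry $p^{2j}$.
\item Iterating gives, for each factor in polynomial time, $z_i=\prod_{j,l}g_{j,l}^{d_{i,j,l}}$.
\end{enumerate}
Commutativity then gives
\[
 \prod_iz_i=\prod_{j,l}g_{j,l}^{\sum_id_{i,j,l}} .
\]
The polynomially many integer digit sums are parallel $\SharpP^{\Clevel{h}}$ queries with polynomial-bit answers. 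No carries are ever resolved, and nothing is adaptive.

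If your ``peeling'' was meant to be this, the fixed generators and the integer (not mod-$2$) exponent sums are exactly what must be stated. Dividing each factor by an arbitrary lift of its digit would leave you with another exponential product. This is the paper's argument. Because it is uniform in $p$, it also makes the $\log/\exp$ route unnecessary for odd $p$: a factorization into fixed generators suffices, and no isomorphism onto an additive group is needed.
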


\begin{proof}
Read each factor's $p$-valuation $v_i$ from its coordinates, assigning
$v_i=a$ to zero.  One count gives $v=\sum_i v_i$.  If $v\ge a$ the
product is zero.  Otherwise write $u_i=p^{v_i}w_i$ with a unit lift
$w_i$, obtained by coordinatewise division; for zero factors we may set
$w_i=1$ when preparing the remaining counts.

The Teichm\"uller group has order $Q-1$.  Every principal unit has
exponent dividing $p^{a-1}$, including at $p=2$, by the binomial theorem.
Thus
\[
 t_i=w_i^{Q^{a-1}},\qquad z_i=w_it_i^{-1}\in1+p\mathcal R
\]
splits each unit into its Teichm\"uller and principal parts.  A primitive
residue-field element lifts by the same power to a Teichm\"uller
generator.  As above, a unique-logarithm count computes the product of
the $t_i$.

For the principal parts, fix
$g_{j,l}=1+p^jY^l$ for $1\le j<a$ and $0\le l<e$.
Every $z\in1+p\mathcal R$ has a unique greedy representation
\[
 z=\prod_{j=1}^{a-1}\prod_{l=0}^{e-1}g_{j,l}^{d_{j,l}},
 \qquad 0\le d_{j,l}<p.
\]
At stage $j$, read the coordinates of $(z_j-1)/p^j$ modulo $p$ and
divide by the corresponding generator powers.  The remainder is one
modulo $p^{j+1}$, proving the representation inductively.  This is
polynomial-time short-ring arithmetic; unit inverses may be obtained
by exponent $(Q-1)p^{a-1}-1$.

For each of the polynomially many pairs $(j,l)$, count its digit sum
across factors.  Commutativity gives the principal product as
$\prod_{j,l}g_{j,l}^{\sum_i d_{i,j,l}}$, without resolving digit carries.
The valuation, logarithm, and digit counts are parallel
$\SharpP^{\Clevel{h}}$ queries with polynomial-length answers.
Their retrieval and final powering cost $\FP^{\Clevel{h+1}}$.
For polynomial-time factors, the short $\FP^{\NP}$ preprocessing is
performed first using the available counting oracle, and supplied as
ordinary input to each subsequent $\SharpP$ query.
\end{proof}

Choose five residue-distinct Teichm\"uller elements $a_1,\ldots,a_5$
and replace the starting grid by $h_i=\prod_j(X_i-a_j)$.
All root differences and initial Jacobian entries are units.  The
polynomial Newton construction works over $\mathcal R$: inverse
Jacobians are replaced by finite geometric series modulo $t^K$.
The quotient by the homotopy is free over $\mathcal R[t]$ in the same
standard monomials.  Over $\mathcal R[[t]]$, evaluation at its formal
branches is an isomorphism, since its matrix at $t=0$ is a tensor
Vandermonde matrix with unit determinant.  Thus the branch-product norm
identity remains valid despite zero divisors.

The degree bounds depend on the grid elements only as constants.
Consequently they are fixed before selecting the extension ring.
The same nested computation gives
\[
 P_{\mathcal R}(t)\equiv Z_{\mathcal R}(t)\pmod{t^K},
 \qquad \deg Z_{\mathcal R}\le K_Z<K,
 \qquad Z_{\mathcal R}(1)=Z_\omega\pmod{p^a}.
\]
By Lemma~\ref{lem:ring-products} and Teichm\"uller coefficient
extraction, the first five rows of the previous ledger remain valid:
in particular $P_{\mathcal R}(t)\in\FP^{\Clevel{3}}$ for short $t$.

\subsection{Retaining the last Fourier sum}

Let $\mathcal T$ be the Teichm\"uller unit group.  Orthogonality gives
\[
 Z_{\mathcal R}(1)=
 (Q-1)^{-1}\sum_{\xi\in\mathcal T}
 P_{\mathcal R}(\xi)\sum_{k=0}^{K_Z}\xi^{-k}.
\]
Here both the degree and extraction-index restrictions are covered by
our choice of $Q$.  The geometric weight is evaluated by binary
doubling, without division by $\xi-1$.  Take the canonical constant
coordinate of each whole summand and sum these nonnegative integers
\emph{without reducing the total}.  This gives
$W_{p^a}\in\SharpP^{\Clevel{3}}$ with
$W_{p^a}\equiv Z_\omega\pmod{p^a}$.  The total ring element is scalar
because the starting grid disappears at $t=1$; intermediate coefficients
need not be scalar.  We use a coordinate projection, not a field trace.

This is where the implementation differs from the seven-level route.
Computing the canonical residue of the last sum would give an
$\FP^{\Clevel{4}}$ function.  Instead, we leave its accepting-path
count intact.  Its numerical value may exceed $p^a$, but every later
use needs only its congruence modulo $p^a$.  This permits the final
reconstruction to combine counting paths directly, without invoking
an additional oracle that retrieves that count.

Ordinary short CRT weights combine the polynomially many prime-power
representatives into
\[
 W(M)\in\GapP^{\Clevel{3}},\qquad
 W(M)\equiv Z_\omega\pmod M.
\]
Its output has polynomial bit length: it is an exponential sum of short
coordinates with short weights.  Factorization and consistent ring
selection are available inside the same oracle bound.

\begin{proof}[The first-event bound in Theorem~\ref{thm:depth}]
Set $N=H+4$ and $M_i=N+1+i$.  Lemma~\ref{lem:ring-coefficients},
applied to the short SLP for $(1+X)^N$, supplies
$B_i\in\GapP^{\NP}$ with $B_i\equiv\binom Ni\pmod{M_i}$.
The beta identity gives
\[
 \sum_{i=0}^N\frac{(-1)^iB_iW(M_i)}{M_i}
 =\text{integer}+C_NZ_\omega,
 \qquad C_N=\frac{(N!)^2}{(2N+1)!}>0.
\]
We have $|C_NZ_\omega|<1/8$, and a nonzero displacement has magnitude
at least $2^{-H_1}$ for
$H_1=(2N+1)\lceil\log_2(2N+1)\rceil$.  All bounds have short binary
descriptions.

The fixed-precision sums of Lemma~\ref{lem:first-event} stay in
$\GapP^{\Clevel{3}}$: multiplying two counting functions and taking
an exponentially indexed sum concatenate polynomial-length counting
paths.  We never retrieve $W(M_i)$ canonically inside another count.
The approximants are therefore in $\FP^{\Clevel{4}}$.  Guessing the
first event, its negative side, and $\omega$, followed by universally
excluding earlier events, yields $\Sigma_2^{\Clevel{4}}$.  Finally,
\[
 \Sigma_2^{\Clevel{4}}\subseteq\PH^{\Clevel{4}}
 \subseteq\Pclass^{\PP^{\Clevel{4}}}
 =\Pclass^{\Clevel{5}}\subseteq\Clevel{6},
\]
by relativized Toda~\cite{Toda91} and deterministic simulation inside
$\PP$.  This proves the decision containment.  The nested norm
representatives remain in $\GapP^{\Clevel{3}}$, rather than
$\GapP^{\PH}$; the argument does not place $\ETR$ in $\PH^{\PP}$.
\end{proof}

\subsection{A fifth-level bound from randomized reconstruction}
\label{push4:depth}

The preceding $\Clevel{6}$ proof remains valid independently of the
randomized reconstruction lemma.  We now improve its final step while
retaining the same geometric certificate and modular computation.
Randomized sign recovery removes the existential search over scales;
the remaining existential choice is the original ETR certificate.

\begin{theorem}\label{push4:etr}
We have
\[
 \ETR\subseteq\MA^{\Clevel{4}}\subseteq\Clevel{5}.
\]
\end{theorem}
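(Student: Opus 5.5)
We keep the geometric certificate and the modular computation of the $\Clevel6$ proof and change only the final sign test. The plan is to read $\ETR\subseteq\MA^{\Clevel4}$ as follows. Merlin sends the polynomial-length parameter choice $\omega$, namely the derivative-sign list of \eqref{eq:etr-scalar-equivalence}. Arthur, a $\BPP$ machine with a $\Clevel4$ oracle, then decides whether $Z_\omega<0$. Soundness and completeness are exactly \eqref{eq:etr-scalar-equivalence}: every negative $Z_\omega$ certifies feasibility, and a feasible instance has an isolating list with $Z_\omega<0$. It therefore suffices to exhibit a uniform $\BPP^{\Clevel4}$ sign test for $Z_\omega$ whose error is small for every $\omega$.

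For the sign test we invoke Lemma~\ref{push4:sign} with $B=\Clevel3$ and $D=\Clevel4$, as announced after that lemma. The numerators come from the beta construction of the previous proof with $N=H+16$ instead of $H+4$, so that $|C_NZ_\omega|<2^{-32}$. The quantities are:
\begin{itemize}
\item the denominators $M_i=N+1+i$;
\item the numerators $A_i=(-1)^iB_iW(M_i)$, where $B_i\in\GapP^{\NP}$ represents $\binom Ni$ by Lemma~\ref{lem:ring-coefficients};
\item the factor $W(M_i)\in\GapP^{\Clevel3}$, the unretrieved Teichm\"uller Fourier sum combined by CRT weights.
\end{itemize}
Then $\sum_iA_i/M_i=\text{integer}+C_NZ_\omega$, with $|C_NZ_\omega|<2^{-32}$. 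A nonzero displacement satisfies $|C_NZ_\omega|\ge2^{-J}$, where $J=(2N+1)\lceil\log_2(2N+1)\rceil$ is short. The numerators have a known polynomial bit bound, because $W(M_i)$ is an exponential sum of short coordinates with short weights. The hypotheses of Lemma~\ref{push4:sign} are therefore met.

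The main obstacle is the retrieval hypothesis. The weighted phase sums \eqref{push4:raw-phase}, and the final $U_j$, must be exact $\GapP^{\Clevel3}$ functions retrievable in $\FP^{\Clevel4}$; it is not enough for the individual $A_i$ to be such functions. I would check the following in turn:
\begin{itemize}
\item $k_\ell$ is polynomial-time computable from the seed by Lemma~\ref{push4:generator}.
\item The weights $\lfloor2^br_{i,\ell}/m_i\rfloor$ are short and polynomial-time computable.
\item The product $B_iW(M_i)$, multiplied by these short weights and summed over the polynomial-length index $(i,\ell)$, stays in $\GapP^{\Clevel3}$ by concatenating counting paths. The $\NP$ oracle inside $B_i$ lies below $\Clevel3$.
\end{itemize}
At no point may $W(M_i)$ be reduced to a canonical residue inside the count, since that would cost the extra level saved in the previous subsection. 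Given this, a polynomial-bit $\GapP^{\Clevel3}$ value is recovered by polynomially many threshold queries, each a difference of two $\SharpP^{\Clevel3}$ counts. This places the retrieval in $\FP^{\PP^{\Clevel3}}=\FP^{\Clevel4}$. The $\FP^{\NP}$ preprocessing (factorization, ring presentation, primitive element) fits inside the same oracle.

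Finally, amplify the sign test so that its error is below $1/3$ uniformly in $\omega$. The statement quantifies over one fixed $\omega$ chosen by Merlin, so no union bound over certificates is needed, and this gives $\MA^{\Clevel4}$. For the second containment, amplify Arthur's error to $2^{-\poly(n)}$ below $2^{-|\omega|}$. A $\PP^{\Clevel4}$ machine then guesses $\omega$ together with Arthur's coins and accepts on the corresponding paths; one also needs to adjust the acceptance threshold so that the existence of one good $\omega$ dominates. This is the relativized form of $\MA\subseteq\PP$, already used in the text for $\MA^{\PP}\subseteq\PP^{\PP}$, and it yields $\Clevel5$.
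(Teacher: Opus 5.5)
Your proposal is correct and follows the paper's proof essentially step by step. It uses the same certificate $\omega$, the same beta numerators $(-1)^iB_iW(M_i)$ with $N=H+16$ fed to Lemma~\ref{push4:sign} with $B=\Clevel3$ and $D=\Clevel4$, the same exact retrieval of the complete phase sums in $\FP^{\Clevel4}$, and the same amplified count over pairs $(\omega,\rho)$ for $\MA^{\Clevel4}\subseteq\PP^{\Clevel4}$. The threshold you leave implicit is fixed in the paper as $2^{r(n)-1}$, after padding $\omega$ to length $m(n)$ and amplifying Arthur's error to $\varepsilon\le2^{-m(n)-4}$.
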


\begin{proof}
Let $x$ have length $n$, and pad the certificate $\omega$ from
Section~\ref{sec:etr} to a fixed polynomial length $m(n)$.  Invalid
encodings are rejected.  For every valid certificate, the construction
gives a nonzero integer $Z_{x,\omega}$ with
$|Z_{x,\omega}|<2^{H(n)}$, and
\[
 \text{$x$ is feasible}\quad\Longleftrightarrow\quad
 \exists\omega\; Z_{x,\omega}<0,
\]
where $H(n)=2^{P(n)}$ for a fixed polynomial $P$.  This height bound is
uniform over all certificates. Soundness includes inaccurate analytic
choices: a negative real selector factor always supplies a qualifying
real critical point, whereas nonreal factors have positive conjugate
products.  Generic choices are needed only for completeness.

The modular ledger is unchanged.  The inner norm has values in
$\FP^{\Clevel{1}}$, the selected Hasse derivatives and cleared selector
in $\FP^{\Clevel{2}}$, and the outer norm at a short ring point in
$\FP^{\Clevel{3}}$.  Leaving its last Fourier sum unreduced gives
\[
 W(x,\omega,M)\in\GapP^{\Clevel{3}},\qquad
 W(x,\omega,M)\equiv Z_{x,\omega}\pmod M
\]
for every positive polynomial-bit modulus $M$.  The output-bit bound is
uniform in $n+|\omega|+\log M$.  All prime powers, including small
characteristics, are covered by the unit-grid construction above.

Set $N=H+16$ and $M_i=N+1+i$, for $0\le i\le N$.
Lemma~\ref{lem:ring-coefficients} supplies polynomial-bit
$B_i\in\GapP^{\NP}$ with $B_i\equiv\binom Ni\pmod{M_i}$.
The beta identity gives
\[
 S=\sum_{i=0}^N\frac{(-1)^iB_iW(x,\omega,M_i)}{M_i}
   =z+\delta,\qquad
 \delta=C_NZ_{x,\omega},\qquad
 C_N=\frac{(N!)^2}{(2N+1)!}>0.
\]
Since $C_N\le4^{-N}$, we have $|\delta|<2^{-32}$.
Since $Z_{x,\omega}$ is a nonzero integer,
\[
 |\delta|\ge C_N\ge2^{-J},\qquad
 J=(2N+1)\lceil\log_2(2N+1)\rceil.
\]
Both $N$ and $J$ have polynomial-length binary descriptions.  The larger
choice of $N$ supplies the small initial displacement required by
Lemma~\ref{push4:sign}.

The signed numerators are in $\GapP^{\Clevel{3}}$, because
$\NP\subseteq\Clevel{3}$ and only a product of two counting functions
is involved.  Every phase query in Lemma~\ref{push4:sign} is an
exponentially indexed linear combination of these numerators with
polynomial-time computable short weights.  Even its double index,
consisting of a denominator index and a multiplier-bit index, has
polynomial length.  The whole query therefore remains in
$\GapP^{\Clevel{3}}$ and is exactly retrievable in
$\FP^{\PP^{\Clevel{3}}}=\FP^{\Clevel{4}}$.
This verifies the lemma's retrieval hypothesis for the complete weighted
sums, not merely for their individual summands.  We obtain
\[
 \{(x,\omega):Z_{x,\omega}<0\}\in\BPP^{\Clevel{4}}.
\]

Merlin now sends $\omega$, and Arthur runs this sign algorithm with
fresh random bits.  A feasible instance has one fixed witness accepted
with probability at least $2/3$.  On an infeasible instance, every
witness is accepted with probability at most $1/3$.  The pointwise
guarantee of Lemma~\ref{push4:sign} is sufficient: the witness is fixed
before Arthur's randomness, and no seed must work simultaneously for
all witnesses.  This proves membership in $\MA^{\Clevel{4}}$.

We spell out the last containment to avoid an extra hierarchy level.
Fix a complete oracle $A$ for $\Clevel{4}$, amplify Arthur's verifier
to error $\varepsilon\le2^{-m(n)-4}$, and pad every computation to use
exactly $r(n)\ge1$ random bits.  The function
\[
 F(x)=\#\{(\omega,\rho):
       |\omega|=m(n),\ |\rho|=r(n),
       V^A(x,\omega;\rho)=1\}
\]
belongs to $\SharpP^A$.  On a yes instance,
$F(x)\ge(1-\varepsilon)2^{r(n)}>2^{r(n)-1}$.
On a no instance,
\[
 F(x)\le2^{m(n)}\varepsilon2^{r(n)}
       \le2^{r(n)-4}<2^{r(n)-1}.
\]
Thus the sign of the $\GapP^A$ function
$F(x)-2^{r(n)-1}$ decides the language.  Consequently
$\MA^{\Clevel{4}}\subseteq\PP^{\Clevel{4}}=\Clevel{5}$.
The amplification is essential: averaging a constant-error verifier
over all witnesses would not give this gap.
\end{proof}

The verifier used in this refinement has bounded error.  The final
$\Clevel{5}$ containment uses an unbounded-error PP computation with a
$\Clevel{4}$ oracle to count its amplified accepting computations.
The inner norm, derivative, and outer norm layers have not been
flattened, and the conclusion remains a decision containment.

\subsection{Ordered ranks instead of high derivatives}
\label{push6:depth}

We can remove the derivative-extraction layer by giving every real root
a short integer label. The label need not be its usual rank among the
real parts of all complex roots. It is enough that the labels increase
strictly along the real roots. This distinction permits a short rational
construction even when a projected difference is extremely close to the
imaginary axis.

\begin{theorem}\label{push6:etr}
\[
 \ETR\subseteq\BPP^{\Clevel3}\subseteq\Clevel4.
\]
\end{theorem}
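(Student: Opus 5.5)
The plan follows the outline of Section~\ref{sec:small-depth}. It replaces the Thom-list selector of Section~\ref{sec:etr} by ordered labels plus random affine hashing. That removes the derivative-extraction layer and moves the existential guess into the randomized machine. Three earlier facts will be used. Lemma~\ref{lem:generic} makes the critical system of \eqref{eq:critical-system} simple, with pairwise distinct critical values none equal to \(T\). The branch-product machinery of Lemmas~\ref{lem:branches} and \ref{lem:ring-products} evaluates norms over short Galois rings. Lemma~\ref{push4:sign}, with the beta kernel, reconstructs integer signs.

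First I would build the rational normalizer \(R=n/d\) from a random seed. It is a short real-coefficient circuit, with \(d>0\) on \(\R\), \(R(0)=0\), and \(R\) sign preserving on \(\R\). With high probability, \(R(\alpha_i-\alpha_j)\) is within a small constant of \(\pm1\) simultaneously for all nonzero root differences of \(q\), and the ordering property holds: along horizontal lines the limiting signs are nondecreasing. I expect this to be the \emph{main obstacle}. Differences can lie exponentially close to the imaginary axis, and a single seed must work for all \(D^2\) pairs at once. My first attempt would compose the squaring filter \(f_a\) of Lemma~\ref{lem:sign-filter} with a random rotation or shift \(y\mapsto y+i\theta\) by a short random \(\theta\), and then apply a Cayley-type map pushing the upper and lower half-planes toward \(\pm1\). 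I would use Lemma~\ref{lem:algebraic-gap} to bound away from zero the real parts of all non-axis differences after perturbation. I would then check monotonicity in the real direction directly from the Möbius form of each stage.

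Next I would define \(X(u)\) through the first-order jet norm \eqref{depth:rank-jet}, computed with the same formal branches. At a real root \(\alpha_i\), \(X(\alpha_i)\) is within a constant of the integer label \(r_i\). Conjugate pairs cancel, and the ordering makes the labels strictly increasing on real roots. The bits of the label are read by the polynomial approximation to \(\sgn\sin(\pi(X+1/2)/2^b)\), followed by positive denominator clearing. A random affine hash \(h:\F_2^\ell\to\F_2^k\), with \(k\) chosen at random, isolates a unique qualifying label with probability \(\Omega(1/\ell)\) by the Valiant--Vazirani argument. Combining the hash equations, encoded as squared filter errors, with \(g<0\) gives the selector \eqref{push6:hash-selector}. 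As in \eqref{eq:scalar}, after scaling by \(2^{K_*}\) and shifting by one, its outer norm \(Z\) is nonzero and has sign \((-1)^{\#\text{selected}}\). Soundness holds for every seed, because a negative real factor always comes from a qualifying real critical point.

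Finally I would do the level accounting, following the table in Section~\ref{sec:small-depth}. A branch value is polynomial time. The inner jet norm is an \(\FP^{\PP}\) product by Lemma~\ref{lem:ring-products}, and it supplies the label bits and the selector value at a node. The outer product over branches is in \(\FP^{\Clevel2}\). I would keep the final Fourier extraction unreduced as a \(\GapP^{\Clevel2}\) representative at every short prime-power modulus. Feeding these representatives into the beta kernel and Lemma~\ref{push4:sign} with \(D=\Clevel3\) gives a \(\BPP^{\Clevel3}\) sign test. Polynomially many independent trials of seed, hash and \(k\), each with amplified error, cover completeness. Relativized \(\BPP\subseteq\PP\) then gives \(\Clevel4\).
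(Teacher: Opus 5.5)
Everything downstream of the normalizer is the paper's argument: the jet-norm labels, the sine-based bit reading, random affine hashing, the scaled and shifted selector norm, and unreduced $\GapP^{\Clevel2}$ residues fed through the beta kernel into Lemma~\ref{push4:sign}. The gap is the step you yourself flagged, and the construction you sketch would not produce an $R$ with the properties you list.

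First, real coefficients are forced, not cosmetic. Three things depend on them: $X(u)$ must be real at real roots; the every-seed soundness claim needs each term of \eqref{push6:hash-selector} to be a real square there; and $\sgn Z_h=(-1)^{\#\{\text{selected roots}\}}$ rests on $U_h(\alpha)U_h(\overline\alpha)=|U_h(\alpha)|^2$. A shift by $i\theta$ or a rotation by $e^{i\theta}$ destroys all three. Moreover, no real-coefficient map can send the upper half-plane toward $+1$ and the lower toward $-1$, because $R(\overline z)=\overline{R(z)}$. The limiting sign must therefore be conjugation-invariant, so its boundary has to be a conjugation-symmetric curve through $0$, as in the paper's deformed left/right test.

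Second, your margins are far too small. A bound of the kind given by Lemma~\ref{lem:algebraic-gap} for the real part of $\alpha_i-\alpha_j$ is of type $2^{-(D-1)\beta}$ with $D=5^m$, i.e.\ doubly exponentially small. The Cayley squaring $z\mapsto2z/(1+z^2)$ only roughly doubles such a margin per step, so you would need exponentially many stages. The undamped folds of Lemma~\ref{lem:sign-filter} also have poles at $\pm i/a$, so they do not even keep complex inputs bounded. The paper instead uses damped folds $\phi_s$ with random scales $s=At$ and the goodness condition \eqref{push6:good}. It then applies one random real angular map $H_a(z)=z/((z+a)^2+1)$, which gives $|\Re H_a|\ge\gamma$ with $\log(1/\gamma)$ polynomial, simultaneously for all $D^2$ differences by a union bound. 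Only after that do polynomially many Cayley squarings suffice.

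The ordering property also cannot be checked stage by stage ``from the M\"obius form'', because the stages are degree-two maps that fold. Indeed $\Im\phi_s(x+iy)$ has the sign of $y\,(1-s^2|x+iy|^2)$, so for $0<sy<1$ it reads $-,+,-$ along the horizontal line. An upper/lower test placed after the folds is therefore not monotone in $x$, and placed before them it no longer returns the sign of a real input. This is why the paper needs Lemma~\ref{push6:ordering}, whose proof runs as follows. Each $\phi_s$ maps the right half-plane into itself and increases $\eta(w)=\Re w/|w|$ by \eqref{push6:eta}. The half-plane flips caused by $H_a$ can occur only where $\eta<1/C$, which along a horizontal line is an initial interval $I$ next to the imaginary axis. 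On $I$ the composite derivative has tiny argument, so $\Re\bigl(V(w(t))+A/V(w(t))\bigr)$, with $V=\phi_{s_n}$ and $A=a^2+1$, increases strictly. Finally, goodness at one tested endpoint controls all of $I$ through \eqref{push6:modulus}. That lemma, together with the random angular step that makes it necessary, is the missing idea.

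A smaller point: the jet products live in $GR(p^a,e)[\epsilon]/(\epsilon^2)$, which is not a Galois ring. You therefore need the unit/nonunit treatment of Lemma~\ref{push6:jet}, which uses that the ideal $(p,\epsilon)$ has nilpotence exponent $a+1$, rather than Lemma~\ref{lem:ring-products} alone.
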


We first prove the analytic ordering property, then implement the labels
with first-order jets. Finally, random hashing replaces the guessed
isolating list.

\paragraph{A complex normalizer.}
Let a finite set of \(M\) nonzero complex numbers satisfy
\(2^{-L}\le |z|\le2^L\), where \(\log L\) and \(\log M\) are polynomially
bounded. Put
\[
 T_0=\left\lceil10\bigl(\log_2(L+2)+\log_2(M+2)+10\bigr)\right\rceil,
 \qquad C=2^B\ge10^6M(T_0+2).
\]
Here and below the least sufficient power of two may be used.
For positive \(s\), define
\[
 \phi_s(z)=\frac{2sz}{C(1+s^2z^2)}.
\]
At each stage choose a fresh \(t\) uniformly from a \(C^2\)-point
grid in \([1,2]\), and use \(s=At\). Call the stage good at \(z\) if
\begin{equation}\label{push6:good}
 \frac{|1-s^2|z|^2|}{1+s^2|z|^2}\ge C^{-1}.
\end{equation}
For a fixed input, the excluded interval of \(t\)'s has length at most
\(16/C\); discretization adds at most \(2/C^2\) to its probability.
On a good stage,
\[
 C^{-1}\min\{s|z|,(s|z|)^{-1}\}\le|\phi_s(z)|\le1.
\]
Start with \(A=1\), giving the lower exponent \(K=L+B+1\). Thereafter
take \(A=2^{\lceil K/2\rceil}\) and replace
\(K\) by \(\lceil K/2\rceil+B+1\). After polynomially many stages,
at most \(T_0\), all magnitudes lie in \([2^{-K},1]\) with
\(K\le2B+4\). Conditional on all previous stages being good, the next
bad-event estimate applies to every current value. A union bound therefore
gives simultaneous success with probability greater than \(0.999\).

Choose a fresh \(a\) on the corresponding grid in \([2,3]\), and apply
\[
 H_a(z)=\frac{z}{(z+a)^2+1}.
\]
There are no poles on the unit disk, and \(|H_a(z)|\le1\) there.
Writing \(x=\Re z\) and \(r=|z|^2\), we have
\[
 \Re H_a(z)=
 \frac{r}{|(z+a)^2+1|^2}
 \left(\frac{x}{r}a^2+2a+x+\frac{x}{r}\right).
\]
The polynomial in parentheses has coefficient 2 at \(a\). If
\(|x/r|\le1/6\), its derivative has magnitude at least one on \([2,3]\).
Otherwise its two complex roots show that the set where its magnitude
is below \(C^{-2}\) has length at most \(4\sqrt6/C\).
Including the grid correction and taking another union bound, with
probability greater than \(0.99\) every transformed value satisfies
\[
 |\Re H_a(z)|\ge\gamma:=2^{-2K-2B-9}.
\]
The logarithm of \(\gamma^{-1}\) is polynomially bounded.

Finally iterate \(f(z)=2z/(1+z^2)\). For
\(\sigma=\sgn\Re z\), the Cayley coordinate
\((\sigma-z)/(\sigma+z)\) is squared by \(f\). From \(|z|\le1\) and
\(|\Re z|\ge\gamma\), its initial squared modulus is at most
\(1-\gamma\). Thus polynomially many iterations give a rational
function \(R=n/d\) satisfying
\[
 |R(z_i)-\sigma_i|\le\frac1{1000M},\qquad \sigma_i\in\{-1,1\},
\]
at every specified nonzero input. The function has real coefficients,
\(R(0)=0\), and positive denominator on the real line. Every seed
preserves the sign of every nonzero real input. All scale choices have
polynomially many bits; the possibly large powers \(2^{\lceil K/2\rceil}\)
have short powering circuits. The resulting integer-pair SLP has
polynomial size and singly exponential degree and coefficient bit height.

The angular step can change a complex input's half-plane. The following
property is what the rank construction actually needs.

\begin{lemma}\label{push6:ordering}
Let \(g=\phi_{s_n}\circ\cdots\circ\phi_{s_1}\), with all \(s_j>0\)
and \(C\ge1000(n+1)\), and let \(2\le a\le3\).
For fixed \(y\), the signs of
\(\Re H_a(g(x+iy))\) are nondecreasing in \(x\) on the set of inputs
that satisfy \eqref{push6:good} at every stage.
\end{lemma}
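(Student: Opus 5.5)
The plan is to reduce the statement to a question about the real parts of \(w=g(z)\) and \(1/w\), and then to propagate a monotonicity invariant through the stages \(\phi_{s_j}\).

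First, using the formula for \(\Re H_a\) displayed above, with \(x_w=\Re w\) and \(r=|w|^2\), the parenthesized factor equals
\[
 \Re\Psi_a(w),\qquad \Psi_a(w)=w+\frac{a^2+1}{w}+2a ,
\]
because \(x_w/r=\Re(1/w)\). The prefactor \(r/|(w+a)^2+1|^2\) is positive, so \(\sgn\Re H_a(w)=\sgn\Re\Psi_a(w)\).

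Second, every \(\phi_s\) preserves the open right and left half-planes. Indeed
\[
 \frac1{\phi_s(z)}=\frac C2\Bigl(sz+\frac1{sz}\Bigr),
\]
and \(u\mapsto u+1/u\) maps each open half-plane to itself. Hence \(\sgn\Re g(x+iy)=\sgn x\). For \(x>0\) both \(\Re w\) and \(\Re(1/w)\) are positive, so \(\Re\Psi_a(w)>2a>0\). For \(x=0\) the value is real-part zero plus \(2a\), still positive. The sign is therefore \(+1\) on every good input with \(x\ge0\). It remains to show that on the good inputs with \(x<0\), the set where \(\Re\Psi_a>0\) is an up-set in \(x\). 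Equivalently, the quantity
\[
 \Lambda(x)=|\Re w|+(a^2+1)\,|\Re(1/w)|
\]
should be nonincreasing as \(x\) increases toward \(0\) along good inputs, since positivity there means \(\Lambda<2a\).

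Third, I would prove this by induction over the stages, with an invariant on the pair
\[
 u_k=-\Re g_k,\qquad v_k=-\Re(1/g_k),
\]
where \(g_k=\phi_{s_k}\circ\cdots\circ\phi_{s_1}\). The invariant is that both \(u_k\) and \(v_k\) are nonincreasing in \(x\) on the good part of the horizontal ray \(x<0\). The transfer identities are
\[
 v_{k+1}=\frac C2\bigl(s\,u_k+s^{-1}v_k\bigr),\qquad
 u_{k+1}=\frac{v_{k+1}}{|1/g_{k+1}|^2}.
\]
The first identity makes \(v\) inherit monotonicity immediately, since it is a positive combination of \(u_k\) and \(v_k\). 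For \(u\) the obstacle is the modulus \(|1/g_{k+1}|^2=\tfrac{C^2}{4}|sz+1/(sz)|^2\). Here I would use the good condition \eqref{push6:good}. It keeps \(s|z|\) uniformly away from \(1\), so each connected piece of the good set lies on one side of the circle \(|sz|=1\), where \(\phi_s\) behaves like \(z\mapsto 2sz/C\) or like \(z\mapsto 2/(Csz)\) up to controlled factors. On the near-linear branch the roles of \(u\) and \(v\) are essentially preserved; on the inverting branch they are essentially swapped. Because the invariant is symmetric in the pair, either case should carry it forward. The hypothesis \(C\ge1000(n+1)\) would supply the slack needed for the \(n\) approximation factors not to accumulate.

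The main obstacle is this third step, namely controlling \(\Re(1/w)\) or \(\Re w\) along a horizontal line when the imaginary part of \(g_k\) is not fixed. The first thing I would try is to replace exact monotonicity by comparison: show that for good \(x_1<x_2<0\) on the same line, \(g_k(x_2+iy)\) lies in the image of the region \(\{\Re\zeta\ge \Re g_k(x_1+iy)\}\) under the same branch. This should follow from the half-plane preservation of each \(u\mapsto u+1/u\) applied to shifted half-planes, which is a Möbius-type argument. If that works, the final threshold \(\Lambda<2a\) is crossed at most once, from above to below, and this gives the claimed nondecreasing signs.
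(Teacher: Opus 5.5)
Your first two steps are correct. On \(x<0\) the sign of \(\Re H_a(g(x+iy))\) is that of \(2a-\Lambda(x)\), and the sign is positive for \(x\ge0\). The one exception is \(z=0\), where the sign is \(0\); this is harmless.

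The gap is in the third step. Both your inductive invariant and your target, that \(\Lambda\) is nonincreasing on the whole good part of the ray, are false. (That target is only sufficient, not equivalent, in any case.)
\begin{itemize}
\item Already for \(k=0\), \(v_0=-x/(x^2+y^2)\) has \(\partial_xv_0=(x^2-y^2)/(x^2+y^2)^2>0\) for \(x<-|y|\).
\item After one stage with \(s=1\), \(y=1/10\), take \(x=-\sqrt3/10\). This is a good point, since \(|z|=1/5\). There \(|\Re(1/w)|=\tfrac C2|x|(1+|z|^{-2})\) has \(\partial_x|\Re(1/w)|=\tfrac{23}{4}C\), while \(\partial_x|\Re w|=O(1/C)\). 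So \(\Lambda\) strictly increases with \(x\) at this good point.
\end{itemize}
Such points have \(\Lambda\) of order \(AC\) and never affect the sign, which is why global monotonicity is the wrong statement to aim for.

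The heuristics behind the induction also fail. Condition \eqref{push6:good} only keeps \(s|z|\) at distance of order \(1/C\) from \(1\). Near that circle \(\phi_s\) is close to neither \(2sz/C\) nor \(2/(Csz)\), so the ``controlled factors'' are not controlled. The suggested comparison via shifted half-planes also fails: \(u\mapsto u+1/u\) multiplies \(\Re u\) by \(1+|u|^{-2}\), so it does not preserve \(\{\Re u>c\}\) when \(c<0\).

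The missing idea is localization, which is the route the paper takes. Write \(w\) for the input to the last stage and \(\eta=|\Re w|/|w|\).
\begin{itemize}
\item Your own transfer identity gives \((a^2+1)|\Re(1/\phi_{s_n}(w))|\ge(a^2+1)C\eta(w)\). So a nonnegative output forces \(\eta(w)\le 2a/((a^2+1)C)\le4/(5C)\).
\item The identity \(\eta(\phi_s(w))=\eta(w)(1+s^2|w|^2)/|1+s^2w^2|\ge\eta(w)\) then forces every earlier value, including \(z\), to lie within angle \(O(1/C)\) of the imaginary axis. Monotonicity only has to be proved in this region.
\item There the stage derivatives have argument \(O(1/C)\). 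Hence \(\eta\) of the last input increases with \(|x|\), and \(\{\eta<1/C\}\) is an initial interval in \(|x|\).
\item On that interval \(|w|\) changes by at most a factor \(e^{O(n/C^2)}\). So goodness of the last stage at one tested point propagates over the whole interval. This is needed because the good set on a horizontal line need not be connected, and your ``connected piece'' remark does not handle two points in different pieces.
\item Finally, \(\Re\bigl(V+(a^2+1)/V\bigr)\) with \(V=\phi_{s_n}\) is strictly monotone on that interval, by the factorization of its derivative.
\end{itemize}
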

\begin{proof}
Every \(\phi_s\) maps the open right half-plane to itself. Write
\(\eta(w)=\Re w/|w|\) there. Direct calculation gives
\begin{equation}\label{push6:eta}
 \eta(\phi_s(w))
 =\eta(w)\frac{1+s^2|w|^2}{|1+s^2w^2|}\ge\eta(w).
\end{equation}
If both input and output have \(\eta\le1/C\), put
\(\theta_0=\arcsin(1/C)\). The numerator \(1-s^2w^2\) in
\[
 \phi_s'(w)=\frac{2s(1-s^2w^2)}{C(1+s^2w^2)^2}
\]
has argument within \(2\theta_0\) of zero. Since
\(\arg(1+s^2w^2)=\arg w-\arg\phi_s(w)\), its denominator has argument
within \(4\theta_0\) of a multiple of \(2\pi\).
Thus \(|\arg\phi_s'(w)|\le6\theta_0\le12/C\), taking the
representative near zero.

Put \(m=n-1\), and for \(t>0\) write
\(w(t)=(\phi_{s_m}\circ\cdots\circ\phi_{s_1})(t+iy)=u(t)+iv(t)\).
For \(m=0\), use \(w(t)=t+iy\). Consider
\[
 I=\{t>0:\eta(w(t))<1/C\}.
\]
By \eqref{push6:eta}, every preceding angle is also small on this set.
Consequently
\[
 |\arg w'(t)|\le12m/C,\qquad
 u'(t)>0,\qquad |v'(t)|\le(24m/C)u'(t).
\]
These inequalities hold also at its boundary \(\eta=1/C\), and give
\[
 \frac{d}{dt}\eta(w(t))=
 \frac{v^2u'-uvv'}{|w|^3}>0.
\]
Here \(u/|v|\le2/C\). It follows that \(I\) is empty or an initial
interval: a differentiable function increasing wherever its value is at
most \(1/C\) cannot enter that sublevel set from above. There are no
poles or zeros for \(t>0\), since all intermediate values are in the
open right half-plane.

For \(t_1<t_2\) in \(I\), integration gives
\[
 |v(t_2)-v(t_1)|
 \le\frac{24m}{C}(u(t_2)-u(t_1))
 \le\frac{48m}{C^2}|v(t_2)|.
\]
The sign of \(v\) is constant on \(I\). Combining this estimate with
\(|w|=|v|/\sqrt{1-\eta^2}\) yields
\begin{equation}\label{push6:modulus}
 \left|\log\frac{|w(t_2)|}{|w(t_1)|}\right|
 \le\frac{100(m+1)}{C^2}.
\end{equation}
Let \(s=s_n\). The last-stage detuning is
\[
 \rho(t)=\frac{1-s^2|w(t)|^2}{1+s^2|w(t)|^2}
        =-\tanh(\log(s|w(t)|)).
\]
Since \(\tanh\) is 1-Lipschitz, if one point of \(I\) has a good
last stage, \eqref{push6:modulus} implies
\[
 |\rho(t)|\ge\frac9{10C}\quad(t\in I),\qquad
 |V(w(t))|\le\frac{10}{9},\quad V=\phi_s.
\]
Thus goodness at one endpoint controls the whole relevant interval;
we have not assumed goodness between the two tested points.

Set \(A=a^2+1\ge5\) and \(T(w)=V(w)+A/V(w)\). Its derivative factors as
\begin{equation}\label{push6:derivative}
 T'(w)=\frac{AC}{2s}(s^2-w^{-2})
                 \left(1-\frac{V(w)^2}{A}\right).
\end{equation}
On \(I\), the first nonconstant factor has argument at most \(4/C\)
in magnitude. The last factor lies in the disk of radius
\(20/81<1/4\) centered at 1. Together with the bound on \(\arg w'\),
this gives \(|\arg(T'(w)w')|<0.27<\pi/2\), so
\(\Re T(w(t))\) increases strictly on \(I\).

Oddness and real coefficients give \(g(-t+iy)=-\overline{g(t+iy)}\).
Using \(H_a(z)=1/(z+2a+A/z)\), the output sign on the left half-plane
is the sign of \(2a-\Re T(w(t))\). But
\[
 \Re T(w(t))\ge
 \frac{AC}{2}\eta(w(t))
 \left(s|w(t)|+\frac1{s|w(t)|}\right)
 \ge AC\eta(w(t)).
\]
A nonnegative output therefore requires
\(\eta(w(t))\le2a/(AC)\le4/(5C)\), hence \(t\in I\).
If this occurs at the more negative of two tested real parts, its good
last stage makes \(\Re T\) strictly increasing throughout \(I\).
The output at the less negative real part is then strictly positive.

For positive real parts the output is always in the right half-plane.
At \(x=0\), a nonzero good input remains nonzero and purely imaginary
through the radial stages, and \(H_a\) gives positive real part.
The zero input stays zero. On the real line every map preserves sign.
These cases complete the ordering statement.
\end{proof}

\paragraph{Distinct labels at all real roots.}
Use the fixed simple critical system and its separating objective, and let
\(\alpha_1,\ldots,\alpha_D\) be its distinct projected roots.
Their nonzero pair differences have singly exponential algebraic upper
and lower bounds. For example,
\(\prod_{i,j}(Y-(\alpha_i-\alpha_j))\) is monic over \(\Z\);
its first nonzero coefficient and a root bound give the lower estimate.
Apply the same normalizer to every nonzero pair difference, taking
\(M=D^2\) as a bound on their number. All random parameters are shared
across all pairs.

On a successful seed let \(\sigma_{ij}\in\{-1,1\}\) be the approximated
signs, and set \(\sigma_{ii}=0\). For real \(\alpha_i\), define
\[
 S_i=\sum_j\sigma_{ij},\qquad r_i=\frac{S_i+D-1}{2}.
\]
The \(D-1\) nonself terms show that \(r_i\) is an integer in
\(\{0,\ldots,D-1\}\). For each fixed \(j\), Lemma~\ref{push6:ordering}
makes \(\sigma_{ij}\) nondecreasing as the real outer root increases.
For two real roots, the two self/cross terms alone increase their sum
by 2. Hence the integers \(r_i\) increase strictly along all real roots.
Conjugate inner roots cause no exception: their signs agree.
No separation between a real root and the real part of a nonreal root
is required.

The rational sum
\[
 S(u)=\sum_jR(u-\alpha_j)
\]
approximates \(S_i\) within \(1/100\) at a real root \(u=\alpha_i\).
It is obtainable from one first-order jet norm:
\begin{equation}\label{push6:rank-jet}
 D_0(u)+\epsilon N_0(u)=
 \Norm_{A}\bigl(d(u-J_w)+\epsilon n(u-J_w)\bigr)
             \pmod{\epsilon^2},
 \qquad S(u)=N_0(u)/D_0(u).
\end{equation}
Thus \(X(u)=(N_0(u)+(D-1)D_0(u))/(2D_0(u))\) is within
\(1/100\) of the integer label at every real root.

\begin{lemma}\label{push6:jet}
An exponentially indexed product in
\(\mathcal S=GR(p^a,e)[\epsilon]/(\epsilon^2)\), with polynomial-time
short factors and polynomial-value \(a,e\), is in \(\FP^{\SharpP}\).
The relativized version costs one counting level above its factor oracle.
\end{lemma}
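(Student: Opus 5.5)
The plan is to reduce a product in \(\mathcal S=GR(p^a,e)[\epsilon]/(\epsilon^2)\) to products in \(GR(p^a,e)\) together with one exponentially indexed sum, and then invoke Lemma~\ref{lem:ring-products}. Write each factor as \(x_i+\epsilon y_i\) with \(x_i,y_i\in\mathcal R=GR(p^a,e)\). Expanding modulo \(\epsilon^2\) gives
\[
 \prod_i(x_i+\epsilon y_i)=\prod_i x_i+\epsilon\sum_i y_i\prod_{k\ne i}x_k .
\]
The constant part is exactly an \(\mathcal R\)-product, which Lemma~\ref{lem:ring-products} places in \(\FP^{\SharpP}\) for polynomial-time factors, and one level above the factor oracle in the relativized case. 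The difficulty lies entirely in the \(\epsilon\)-coefficient, since the factors \(x_k\) may be nonunits and we cannot simply divide by \(x_i\).

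To handle the \(\epsilon\)-coefficient, first read the valuations \(v_i\) and write \(x_i=p^{v_i}w_i\) with unit lifts \(w_i\), as in that lemma. One count gives \(v=\sum_iv_i\). The term indexed by \(i\) has valuation at least \(v-v_i\). If some \(v_i\) is positive and \(v-v_i\ge a\) for all \(i\), the coefficient vanishes. Otherwise only indices with \(v-v_i<a\) contribute, and there are two cases.
\begin{itemize}
\item If \(v<a\), every term can be written as \(p^{v-v_i}w_i^{-1}y_i\cdot\prod_kw_k\). The coefficient then equals \(\bigl(\prod_kw_k\bigr)\cdot\sum_ip^{v-v_i}w_i^{-1}y_i\).
\item If \(v\ge a\), the only contributing terms have \(v_i>v-a\ge0\). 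The number of such indices is at most \(a\), because each forces the remaining factors to carry valuation below \(a\).
\end{itemize}
The unit product \(\prod_kw_k\) comes from the Teichm\"uller and principal-digit counts of Lemma~\ref{lem:ring-products}. The sum \(\sum_i p^{v-v_i}w_i^{-1}y_i\) is a sum of short ring elements. Because \(v\) is an already retrieved short parameter, each summand is polynomial-time computable given \(v\), or computable with the factor oracle in the relativized case.

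I would compute that sum coordinatewise. Each canonical coordinate is a nonnegative short integer, so the coordinatewise sums are \(\SharpP\) counts with polynomial-length answers, and reducing them modulo \(p^a\) after retrieval is free. These counts are parallel with the valuation, logarithm, and digit counts, so the whole computation uses a single counting layer. Inverses of units \(w_i\) are obtained by the exponent \((Q-1)p^{a-1}-1\), as before.

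The main obstacle is the case \(v\ge a\) with contributing indices. To handle it, I would count the number of indices with \(v_i>v-a\) and, via an \(\NP\)-style guess-and-verify, extract them as an explicit short list using binary-search threshold queries to the same counting oracle. With at most \(a\) of them, each excluded product \(\prod_{k\ne i}x_k\) is again an \(\mathcal R\)-product obtained by one more parallel application of Lemma~\ref{lem:ring-products}, with the \(i\)-th factor replaced by \(1\). This adds only polynomially many queries at the same level.

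If the list extraction threatens an extra level, the alternative is to reparametrize uniformly. Compute, for each of the at most \(a\) possible valuation thresholds, the masked count \(\sum_i[\![v_i>v-a]\!]\,\mathrm{coord}(\cdots)\) directly inside one counting sum, which keeps the bound at \(\FP^{\SharpP}\). The relativized statement follows by replacing polynomial-time factor evaluation with oracle calls inside each count.
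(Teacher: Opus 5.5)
Your proof is correct, but it handles the non-invertible factors differently from the paper.

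The paper stays in the local ring \(\mathcal S\). A factor \(u+\epsilon v\) is a nonunit exactly when \(u\equiv0\pmod p\), and the maximal ideal satisfies \((p,\epsilon)^{a+1}=0\). So the paper counts the nonunit factors and returns zero if there are at least \(a+1\). Otherwise it extracts the at most \(a\) nonunits by prefix counting queries and replaces them by \(1\). It then applies \(\prod_i(u_i+\epsilon v_i)=(\prod_iu_i)(1+\epsilon\sum_iv_i/u_i)\) to the remaining units, using one call to Lemma~\ref{lem:ring-products} and one short-ring sum, and multiplies the exceptional factors back in.

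You instead track \(p\)-adic valuations of the scalar parts. This absorbs any number of nonunits of total valuation below \(a\) through the weights \(p^{v-v_i}\), with no extraction. The cost is a case split and, when \(v\ge a\), up to \(a\) separate deleted products \(\prod_{k\ne i}x_k\). Both routes stay at one counting level. Your weighted sum is adaptive on the retrieved \(v\) rather than parallel with the valuation count, but that is harmless. Your bound of at most \(a\) contributing indices is right; the reason is that \(c\) such indices give \(c(v-a+1)\le v\), hence \(c\le a\) when \(v\ge a\).

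In fact the case split is unnecessary. Use the convention of Lemma~\ref{lem:ring-products} that a zero factor gets \(v_k=a\) and \(w_k=1\). Then \(x_k=p^{v_k}w_k\) holds exactly for every \(k\). Reading \(p^n=0\) for \(n\ge a\) and putting \(W=\prod_kw_k\), this gives \(\prod_{k\ne i}x_k=p^{v-v_i}w_i^{-1}W\) for every \(i\) and every \(v\). Therefore
\[
 \prod_i(x_i+\epsilon y_i)=W\Bigl(p^v+\epsilon\sum_ip^{v-v_i}w_i^{-1}y_i\Bigr)
\]
holds unconditionally, and every summand is polynomial-time computable once \(v\) is known. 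That version needs no index extraction at all and is slightly simpler than the paper's proof.

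Your final ``reparametrize'' paragraph presumably gestures at this. As written it is too vague to check. It works only if the masked summand is \(p^{v-v_i}w_i^{-1}y_i\), and not anything involving \(\prod_{k\ne i}x_k\), which is not polynomial-time computable termwise.
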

\begin{proof}
A factor \(u+\epsilon v\) is a unit exactly when \(u\bmod p\ne0\).
The ideal \((p,\epsilon)\) has nilpotence exponent \(a+1\).
Count the nonunit factors; if there are at least \(a+1\), return zero.
Otherwise recover their at most \(a\) indices with prefix counting
queries and multiply them explicitly. For the remaining computation
replace each nonunit by 1, using only its polynomial-time unit test.
The unit product is
\[
 \prod_i(u_i+\epsilon v_i)
   =\left(\prod_i u_i\right)
      \left(1+\epsilon\sum_i v_i/u_i\right).
\]
Lemma~\ref{lem:ring-products} computes the scalar product in one
counting level, and the short-ring sum uses that same level.
All inverses in this expression are of units and are polynomial-time
computable. Multiplication by the recovered exceptional factors adds
no oracle level. The prefix queries use the original factor evaluator,
so the argument relativizes as stated.
\end{proof}

\paragraph{Reading and hashing the labels.}
Put \(\ell=\lceil\log_2D\rceil\). At a real number \(X\) within
\(1/100\) of an integer \(r\in[0,D-1]\), the sign of
\[
 \sin\left(\frac{\pi(X+1/2)}{2^b}\right),\qquad 0\le b<\ell,
\]
is \((-1)^{\operatorname{bit}_b(r)}\), with magnitude at least
\(1/(2D)\). These signs have polynomial-size polynomial SLP
approximants. To see this, approximate \(\pi\) to error
\(1/(1000D^2)\), divide the angle by a power of two at least \(64D\),
and use a Taylor polynomial of degree \(O(\log D)\) for the exponential
of the resulting imaginary argument. Repeated squaring restores the
angle. Taking sufficiently large absolute constants bounds the final
error by \(1/(100D)\). The imaginary part is a real polynomial
\(P_b(X)\) with the desired sign and magnitude at least \(1/(4D)\).
Its SLP size is polynomial in \(\log D\), and its degree and coefficient
bit height are singly exponential.

Substitute the rational \(X(u)\) into \(P_b\) using integer pairs
\((A(u),B_b(u))\), and put \(H_b=AB_b\).
When \(B_b\ne0\), this has the sign of \(P_b(X)\);
when \(B_b=0\), it is simply zero. Apply
Lemma~\ref{lem:sign-filter} to these polynomial values and to \(g\),
using uniform algebraic bounds and accuracy
\[
 \tau=\frac1{100(\ell+2)^2}.
\]
Write \(E_b(u)\) for the resulting bit-sign approximants and \(E_g(u)\)
for the qualification approximant. They are always defined at real
arguments, even for bad normalizer seeds.

Let \(h:\{0,1\}^{\ell}\to\{0,1\}^k\) be a random affine hash
\(h(r)=Mr+\beta\) over \(\F_2\). Its \(j\)-th parity sign is approximated
by \(\prod_{b:M_{jb}=1}E_b(u)\). Use the selector
\begin{equation}\label{push6:hash-selector}
 W_h(u)=
 \sum_{j=1}^k
 \left(\prod_{b:M_{jb}=1}E_b(u)-(-1)^{\beta_j}\right)^2
 +(E_g(u)+1)^2-\frac14.
\end{equation}
Take \(k\) uniformly in \(\{1,\ldots,H\}\), where \(H\) is the least
power of two at least \(\ell+1\), so \(H\le2(\ell+1)\).
Each product contains at most \(\ell\) factors. On good parameters
its error is at most \(2\ell\tau\le1/(50(\ell+2))\).
Thus matching parities and \(g<0\) make \eqref{push6:hash-selector}
negative, while a parity mismatch makes it positive.
For every seed, regardless of rank accuracy, \(g\ge0\) gives
\((E_g+1)^2\ge1\), and therefore \(W_h>0\).

Clear the positive squared real denominators in
\eqref{push6:hash-selector}, obtaining an integer polynomial \(V_h\).
The same algebraic gap argument as in \eqref{eq:scalar} supplies a
uniform short power \(\Lambda\) such that \(U_h=\Lambda V_h+1\)
is nonzero at every complex root and preserves every nonzero real sign.
Consequently \(Z_h=\Norm_{A}(U_h(J_w))\) is a nonzero integer.
For every seed, \(Z_h<0\) implies feasibility. On good parameters,
its sign is the parity of the qualifying roots whose labels hash to zero.

\paragraph{Success probability and the oracle ledger.}
Lemma~\ref{lem:generic} fixes a simple critical system whose objective
separates its points, without randomness. Conditional on a successful normalizer seed,
let \(\mathcal Q\subseteq\{0,1\}^{\ell}\) be the labels of the
qualifying real roots. If the instance is feasible, \(s=|\mathcal Q|\ge1\).
For \(k=\lceil\log_2s\rceil+1\le\ell+1\), pairwise independence gives
\[
 \Pr[|h^{-1}(0)\cap\mathcal Q|=1]
 \ge \frac{s}{2^k}-\frac{s(s-1)}{2^{2k}}\ge\frac3{16}.
\]
This follows by summing, over each possible isolated point, the
probability that it hashes to zero and using a union bound for the
other points. Since \(k\) is sampled from \(H\le2(\ell+1)\) choices,
one trial makes \(Z_h<0\) with probability at least
\(1/(20(\ell+1))\), including the normalization loss.
On an infeasible instance it never does.

For each fixed seed, all new inner products
\eqref{push6:rank-jet} use first-order jets. Qualification uses only
\(g=z-T\). At a fixed short homotopy node and
outer branch, their factors are polynomial-time computable; hence all
rank, bit, hash, and qualification values are in \(\FP^{\Clevel1}\).
The outer norm is in \(\FP^{\Clevel2}\).
Retaining the final Fourier sum unreduced, exactly as above, gives
\[
 W(\text{seed},M)\in\GapP^{\Clevel2},
 \qquad W(\text{seed},M)\equiv Z_h\pmod M.
\]
The whole beta-phase sum stays in \(\GapP^{\Clevel2}\), and is
retrievable in \(\FP^{\Clevel3}\). Lemma~\ref{push4:sign} therefore
tests \(Z_h<0\) in \(\BPP^{\Clevel3}\).

No analytic parameter is chosen separately at different moduli or
homotopy nodes: one seed fixes one rational SLP, one cleared polynomial,
and one integer norm. Every substitution and norm multiplies singly
exponential degree and height bounds by at most another singly
exponential factor. The number of such layers is fixed. Thus the same
shared-homotopy truncation and actual-degree Fourier bounds apply
uniformly to the entire construction.

Repeat \(200(\ell+1)\) independent trials, amplifying each sign test
to error at most \(1/[100\cdot200(\ell+1)]\), and accept if any test
reports a negative norm. On a feasible instance the probability of
never isolating a root is at most \(e^{-10}\). On either kind of
instance, the probability that any sign test errs is at most \(1/100\).
This proves \(\ETR\subseteq\BPP^{\Clevel3}\).
Finally relativized \(\BPP\subseteq\PP\) gives
\(\BPP^{\Clevel3}\subseteq\PP^{\Clevel3}=\Clevel4\).
The sign subroutine has two-sided error, so this argument does not
claim an \(\mathsf{RP}^{\Clevel3}\) containment.

\section{Exact rational weighted counting}\label{sec:rational-counting}

The sign-reconstruction lemmas also apply to a class defined independently
of arithmetic circuits. Following de Campos, Stamoulis, and Weyland
\cite[Definition~9]{CSW19}, let \(\mathsf{PP}_{\mathbb Q}^{(pt)}\)
denote decisions \(F\ge t\) for sums of polynomial-time computable rational
weights over polynomial-length witnesses. Here each weight is given exactly
by a short numerator and denominator. The reduced denominator of the total
sum can have exponentially many bits, so the polynomial-output hypothesis
of their Theorem~10 does not apply in general.

\begin{theorem}\label{thm:rational-counting}
Let \(\mathsf{CoeffPosSLP}\) denote positivity of a specified coefficient
of a division-free integer SLP, with its multiindex in binary. Then
\[
 \mathsf{PP}_{\mathbb Q}^{(pt)}
 \subseteq\BPP^{\PP}\cap\Pclass^{\NP^{\PP}},
 \qquad
 \Pclass^{\mathsf{CoeffPosSLP}}
   =\Pclass^{\mathsf{PP}_{\mathbb Q}^{(pt)}}.
\]
In particular, there is a polynomial-time many-one reduction from
\(\mathsf{PP}_{\mathbb Q}^{(pt)}\) to \(\mathsf{CoeffPosSLP}\).
The reverse reduction proved here is a polynomial-time Turing reduction.
\end{theorem}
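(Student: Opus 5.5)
The plan is to prove the three assertions in the order: the upper bound for \(\mathsf{PP}_{\mathbb Q}^{(pt)}\), then the many-one reduction to \(\mathsf{CoeffPosSLP}\), and finally the reverse Turing reduction.

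\textbf{Upper bound.} Let the instance be \(F=\sum_{w}a_w/b_w\) over witnesses \(w\) of length \(p(n)\), compared with \(t=a_t/b_t\). Each \(a_w,b_w\) is short, so \(|a_w|,b_w\le2^{c(n)}\). Clearing denominators gives
\[
 Z=\Bigl(\prod_w b_w\Bigr)b_t\,(F-t),
\]
and \(\sgn(F-t)=\sgn Z\). Here \(Z\) is an integer with \(|Z|<2^H\) for \(H=2^{p(n)}c(n)\cdot O(1)\), a bound with a short binary description. To meet the hypotheses of Lemma~\ref{lem:beta-sign} and of the randomized reconstruction via Lemma~\ref{push4:sign}, we need a uniform representative of \(Z\bmod m\) for every short \(m\). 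Rather than build \(Z\) explicitly, I would feed the rational sum \(F-t\) into the beta/phase machinery directly. The quantity
\[
 \sum_w\frac{a_w\,\binom Ni\ \text{(rep.)}}{b_w(N+1+i)}
\]
is an exponential sum of short-denominator terms. Its displacement is \(C_N(F-t)\) modulo a rational with denominator dividing \(\mathrm{lcm}\) of short numbers, so this is not integer-valued, and this direct route is the main obstacle. Instead I will use the common-denominator bound: \(F-t\ne0\) implies \(|F-t|\ge1/(b_t\prod_w b_w)\ge2^{-H}\), and \(|F-t|\le2^{H}\). I then replace the beta kernel by the multiplier \(C_N\) with \(N=2H+16\), forming
\[
 S=\sum_{i,w}\frac{(-1)^iW_i a_w}{b_w m_i}-\sum_i\frac{(-1)^iW_i a_t}{b_t m_i}.
\]
The difficulty is that \(S\) equals an integer plus \(C_N(F-t)\) only up to errors \((W_i-\binom Ni)a_w/(b_wm_i)\), whose fractional parts do not vanish. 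I would fix this by taking \(W_i\) to be a representative of \(\binom Ni\) modulo \(m_ib_w\) (respectively \(m_ib_t\)), which Lemma~\ref{lem:ring-coefficients} supplies since the modulus is short and may depend on \(w\). With that choice every correction is an integer, so \(S=\text{integer}+C_N(F-t)\), the displacement is below \(2^{-32}\), and it is at least \(2^{-J}\) when nonzero. The numerators \((-1)^iW_{i,w}a_w\) are \(\GapP\) functions over the joint index \((i,w)\). Lemma~\ref{push4:sign} with \(B=\emptyset\) and \(D=\PP\) then gives \(\BPP^{\PP}\). Lemma~\ref{lem:first-event}, with binary search in \(\NP^{\PP}\) as in the proof of Lemma~\ref{lem:beta-sign}, gives \(\Pclass^{\NP^{\PP}}\).

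\textbf{Many-one reduction.} I arithmetize the polynomial-time weight evaluator as an SLP \(\Phi(y)\) in Boolean variables \(y\) that outputs \(a(y)\) and \(b(y)\). I then form
\[
 P(y,X)=\prod_{j}\bigl(1-y_j+y_jX_j\bigr)\cdot\Bigl(b_t\,a(y)\,\Pi(y)-a_tb(y)\Pi(y)\Bigr),
\]
where \(\Pi(y)\) should stand for \(\prod_{w'\ne w}b_{w'}\). That product is not local, so I use the paper's stated trick: replace \(1/b\) by a geometric series \(\sum_k (1-b/B)^k/B\) truncated at length \(K\). With \(B\) a power of two exceeding every \(b\), this series has a short circuit by binary doubling. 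Multiplying by \(B^{K+1}\) makes everything integral. The sum over witnesses is obtained as a coefficient: I use a Kronecker variable \(X\) marking the witness, and take the coefficient of \(\prod X_j\) in \(\prod_j(1+X_j\cdot\text{substitution})\). This is a standard multilinear extraction, with the witness bits recovered via the coefficient of a full monomial. The truncation error is controlled below the \(2^{-H}\) separation gap, so the coefficient has the sign of \(F-t\) after adding a small shift that keeps zero correct. The delicate point is handling equality \(F=t\) exactly. For that I would compare \(F\ge t\) as \(F-t+\tfrac12 2^{-H}>0\).

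\textbf{Reverse reduction.} Given an SLP coefficient, the equality \(\Pclass^{\mathsf{CoeffPosSLP}}=\Pclass^{\mathsf{PP}_{\mathbb Q}^{(pt)}}\) needs each coefficient query answered with \(\mathsf{PP}_{\mathbb Q}^{(pt)}\) queries. The Fourier representation of Lemma~\ref{lem:ring-coefficients} writes residues as counts, and the beta phase approximants \(T_j\) are rational-weighted sums of polynomial-time weights. A polynomial number of threshold queries therefore retrieves each \(T_j\) exactly, and binary search over \(j\) is replaced by the deterministic first-event procedure. I expect this step to be the bottleneck: the first-event search was in \(\NP^{\PP}\), not \(\Pclass^{\PP}\). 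I would therefore try to encode the prefix predicate ``\(\exists j\le t\) event'' itself as one rational weighted sum with weights \(2^{-j\cdot\text{large}}\) that dominates lexicographically. Failing that, the equality may require the factorization promises mentioned in the main text.
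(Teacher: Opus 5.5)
There is a genuine gap in your reverse Turing reduction, which you leave unresolved. You try to rebuild the first-event scale search inside $\Pclass^{\mathsf{PP}_{\mathbb Q}^{(pt)}}$, where it does not fit, and falling back on factorization promises is a wrong turn. The missing idea is that a $\mathsf{PP}_{\mathbb Q}^{(pt)}$ query compares an \emph{exact} rational sum with a threshold, so the sign of the tiny displacement can be read off directly, with no search over scales.

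Write the beta sum as $S=\sum_{i=0}^N A_i/m_i=z+\delta$ with $A_i=(-1)^iW_iR_i$. Here $W_i$ and $R_i$ are the $\SharpP$ representatives, from Lemma~\ref{lem:ring-coefficients}, of $\binom Ni$ and of the target coefficient modulo $m_i$. Expanding the accepting paths of the two counting machines turns $S$ literally into a sum of polynomial-time weights $(-1)^i/m_i$ over polynomial-length witnesses. The reduction then has two steps.
\begin{enumerate}
\item Recover $z$. One fixed-precision $\GapP$ sum approximates $S$ within $1/8$ and is retrieved with $\PP$ queries, and $\PP\subseteq\mathsf{PP}_{\mathbb Q}^{(pt)}$. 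Since $|\delta|<2^{-32}$, its nearest integer is $z$.
\item Ask one further query, ``$-S\ge -z$?'', and negate the answer. This decides $S>z$, that is $\delta>0$, which is positivity of the coefficient because $C_N>0$.
\end{enumerate}

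Your many-one reduction also has a hole at the Boolean summation step. Extracting the coefficient of $\prod_jX_j$ computes $\sum_{y\in\{0,1\}^p}g(y)$ only when $g$ is multilinear in the witness variables. The arithmetized evaluator and the cleared truncated geometric series instead have individual degrees up to a singly exponential bound $D$. The factor $\prod_j(1-y_j+y_jX_j)$ only tags each Boolean $y$ by $X^y$, and summing $X^yg(y)$ over all $y$ is the original problem again.

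You need a summation kernel valid for every degree $\le D$. One choice is $G_D(X)=\sum_{r=0}^{D-1}X^r+2X^D$, which has an $O(\log D)$-size SLP and satisfies $[X^D]h(X)G_D(X)=h(0)+h(1)$ whenever $\deg h\le D$. Write the weights as $u/v$ with $|u|\le2^q$ and $1\le v\le2^q$, and let $U,V$ be their arithmetizations. Put $A=2^q$ and $J=q2^p$, and take a power-of-two truncation length $K$ with $K\approx A(J+p+q+8)$. Set $Q=U\,P_K(A-V,A)$ and $\widetilde F=A^{-K}\sum_wQ(w)$, which lies within $2^{-J-8}$ of $F$. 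Then the coefficient of $X_1^D\cdots X_p^D$ in $2^{J+3}Q(X)\prod_jG_D(X_j)+A^K\prod_jX_j^D$ equals $A^K(2^{J+3}\widetilde F+1)$, which is positive exactly when $F\ge0$.

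Your upper bound is correct, but the beta kernel with representatives modulo $m_ib_w$ is unnecessary. Because every weight already has a short denominator and $|F|\le2^{p+q}$, Lemmas~\ref{push4:sign} and~\ref{lem:first-event} apply directly to $\sum_w u_w/(2^{p+q+40}v_w)$. Take $z=0$ and $\delta=F/2^{p+q+40}$, and use the common-denominator bound $|F|\ge2^{-q2^p}$ for nonzero $F$.
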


\begin{proof}
After incorporating the threshold as one additional term, write the sum as
\[
 F(x)=\sum_{w\in\{0,1\}^p}\frac{u(x,w)}{v(x,w)},
 \qquad |u(x,w)|\le2^q,\quad 1\le v(x,w)\le2^q,
\]
where \(p,q\) are polynomially bounded and \(q\ge1\).
Set \(J=q2^p\). A common denominator has size at most \(2^J\),
so \(F\ne0\) implies \(|F|\ge2^{-J}\).
Divide every summand by \(C=2^{p+q+40}\).
Then \(\delta=F/C\) has absolute value below \(2^{-32}\), and a
nonzero value has a lower bound \(2^{-J'}\) with \(J'\) of polynomial
binary length. Its weights are still computable exactly in polynomial
time. Lemma~\ref{push4:sign} applies with plain \(\GapP\) phase sums,
giving \(\BPP^{\PP}\). The first-event search of
Section~\ref{sec:posslp} applies to the same sum and gives
\(\Pclass^{\NP^{\PP}}\).

For the reduction in the opposite direction, arithmetize the Boolean
circuits computing the weights. This gives integer SLPs \(U(X),V(X)\)
agreeing with \(u,v\) on \(\{0,1\}^p\). Put \(A=2^q\), and choose
a power of two \(K\) with
\[
 A(J+p+q+8)\le K<2A(J+p+q+8).
\]
The polynomial
\[
 P_K(z,A)=\sum_{k=0}^{K-1}z^kA^{K-1-k}
\]
has an \(O(\log K)\)-size SLP, using
\(P_{2r}=P_r(z,A)(z^r+A^r)\) and repeated squaring.
For \(Q(X)=U(X)P_K(A-V(X),A)\), the identity
\(vP_K(A-v,A)=A^K-(A-v)^K\) gives
\[
 \left|F-A^{-K}\sum_w Q(w)\right|
 \le 2^{p+q}e^{-K/A}\le2^{-J-8}.
\]
Write the approximating sum as \(\widetilde F\).

Choose a positive binary bound \(D\) on every individual degree of
\(Q\). The geometric polynomial
\[
 G_D(X)=\sum_{r=0}^{D-1}X^r+2X^D
\]
has an \(O(\log D)\)-size SLP and satisfies
\([X^D]h(X)G_D(X)=h(0)+h(1)\) for every \(\deg h\le D\).
Consequently the coefficient at \((D,\ldots,D)\) of
\[
 2^{J+3}Q(X)\prod_{j=1}^pG_D(X_j)
       +A^K\prod_{j=1}^pX_j^D
\]
is \(A^K(2^{J+3}\widetilde F+1)\). It is positive exactly when
\(F\ge0\): the scaled error is at most \(1/32\), whereas every negative
\(F\) has \(2^{J+3}F\le-8\). Reversing the offset gives the strict
test \(F>0\), with exact zero rejected.
All displayed exponents have polynomial binary length, and all powers
are produced by repeated squaring. This proves the many-one reduction.
It already holds for univariate coefficient positivity: the displayed
polynomial has individual degrees at most \(2D\), so substitute
\(X_j=X^{(2D+1)^{j-1}}\) and use the corresponding binary target index.
No two exponent vectors can collide.

Finally, Lemma~\ref{lem:ring-coefficients} and the beta construction
express a coefficient sign as the direction of a rational weighted sum
\(S=z+\delta\). Expanding the ordinary counting witnesses makes each
individual weight polynomial-time computable. Approximate \(S\) to
error below \(1/8\) by one complete weighted \(\GapP\) sum; since
\(|\delta|<2^{-32}\), its nearest integer is exactly \(z\).
Retrieve this count using \(\PP\subseteq\mathsf{PP}_{\mathbb Q}^{(pt)}\),
then ask whether \(-S\ge-z\) and negate the answer. This tests \(S>z\)
and proves the reverse Turing reduction and
the claimed equality of oracle classes.
\end{proof}

This equivalence concerns coefficient extraction. A reduction of arbitrary
rational weighted counting to scalar \(\PosSLP\) is not supplied by the
argument: the selected coefficient performs the Boolean summation.

\subsection{Two factor promises for scalar circuits}

Two restricted families admit substantially lower bounds. These are
promises on the output integer, not properties established for all SLPs
or for the norm in the ETR reduction.

\begin{proposition}\label{prop:factor-promises}
Let \(Z\) be the output of an integer SLP.
If every maximal prime-power divisor of \(|Z|\), when nonzero, is at
most \(2^B\), where \(B\) is supplied in unary, its sign is decidable
in \(\oplus\Pclass\).
If instead a positive binary integer \(M\) is supplied with the promise
that every prime factor of nonzero \(|Z|\) divides \(M\), its sign is
decidable in deterministic polynomial time.
\end{proposition}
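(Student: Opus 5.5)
Both parts rest on one device: the Chinese remainder theorem restricted to a modulus whose prime powers are known to cover those of \(Z\). Once such a modulus \(P\) with \(P>2|Z|\) is available, the centered residue of \(Z\bmod P\) equals \(Z\) itself. What remains is to decide which residues can be computed, and in which class.

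\textbf{Second promise (deterministic).} Let \(s\) be the number of gates and \(H=(\beta+1)2^{s+1}\) the height bound of Section~\ref{sec:posslp}, so that \(|Z|<2^H\). Put \(P=M^{H}\). This number has exponentially many bits, so we never write it down. Instead we work prime by prime, and the plan is as follows.
\begin{enumerate}[label=(\roman*)]
\item We strip from \(M\) its radical structure without factoring it. We compute \(g=\gcd(Z\bmod M^k,\,M^k)\) for suitable \(k\), or more simply evaluate the SLP modulo \(M^{k}\) for polynomially bounded \(k\). This works because \(v_p(Z)\le H\), but \(H\) is exponential, so direct evaluation modulo \(M^H\) is not available.
\item The fix is a repeated-squaring evaluation of \(Z\) in the ring \(\Z/M^{H}\), using the exponent \(H\) only through the binary representation of \(M^H\). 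That representation is still too long, so this route fails.
\item I would therefore use a coprime-base (gcd-free basis) refinement of \(M\) against the residues \(Z\bmod M^j\). Each gate value \(z_g\) is tracked as a pair \((u_g,e_g)\): \(u_g\) is a unit part modulo \(M\) written over a coprime base, and \(e_g\) is a short valuation vector. Valuations have polynomially many bits, since \(v_p(z)\le H\) is written in binary.
\item Multiplication adds valuations and multiplies units.
\item Addition is the obstacle. To handle it I would extend to ``\(M\)-adic'' precision \(M^{c}\) with \(c\) polynomial, and then argue that \(|Z|\) is determined by its unit part modulo a small modulus together with its valuation. This argument is not convincing, because \(|Z|\) is the product \(\prod p^{v_p}\), and its sign is not visible from unit parts.
\end{enumerate}
The honest plan is therefore the following. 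Since every prime factor of \(Z\) divides \(M\), we have \(Z=\pm\prod_{p\mid M}p^{v_p}\). Gate-by-gate evaluation modulo a polynomial-bit modulus \(q\) coprime to \(M\) gives \(Z\bmod q\). Computing the \(v_p\) (binary, via the same valuation tracking restricted to the final value modulo \(p^{c}\)) determines \(\prod p^{v_p}\bmod q\) by modular exponentiation. Comparing the two residues then decides the sign. If \(q>2\) and \(q\) is coprime to \(M\), the two candidates \(\pm\prod p^{v_p}\) differ modulo \(q\), so the comparison is unambiguous. The main obstacle is obtaining the valuations \(v_p(Z)\) without factoring \(M\). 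I would first try a gcd-free basis of \(M\), which is computable in polynomial time. I would then determine each valuation by binary search on \(k\) in the test \(Z\equiv0\pmod{b^{k}}\), evaluating modulo \(b^{k}\) in a lifted form, and I expect this step to need the most care.

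\textbf{First promise (\(\oplus\Pclass\)).} Here \(Z\) divides a product of prime powers each at most \(2^B\). Equivalently, \(|Z|\) divides \(L=\operatorname{lcm}\{1,\ldots,2^B\}\) after restricting to the prime powers actually present. The plan is as follows.
\begin{enumerate}[label=(\roman*)]
\item Write \(Z=\pm\prod_{q\le2^B}q^{\,[q\text{ is a maximal prime power of }Z]}\). This product runs over the prime powers \(q\le2^B\) with \(q\parallel Z\).
\item For a short prime \(r\) coprime to all such \(q\), for example a prime above \(2^B\), the residue \(Z\bmod r\) is computable. We compare it with \(\prod_{q\parallel Z}q\bmod r\).
\item That exponentially indexed product is evaluated by the parity/discrete-log trick of Lemma~\ref{lem:ring-products}, but only modulo \(2\) in the exponent. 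We use \(r=3\)-style small moduli where \(-1\) is detected by a Legendre symbol, which is multiplicative and so turns the product into a parity.
\item Choose \(r\) with \((\tfrac{\cdot}{r})\) understood. The sign is \((\tfrac{Z}{r})\cdot\prod_{q\parallel Z}(\tfrac{q}{r})\). The exponent of \(-1\) is the parity of the number of pairs \((q\le2^B,\text{witness }q\parallel Z,\ (\tfrac qr)=-1)\).
\item Each predicate in that count is polynomial-time, by evaluating the SLP modulo \(q\) and modulo \(qp\). The total is therefore a \(\oplus\Pclass\) computation.
\item We need \(r\equiv3\pmod4\), so that \((\tfrac{-1}{r})=-1\), and \(r\) coprime to \(Z\). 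Such an \(r\) is a prime greater than \(2^B\) found deterministically, which is possible because \(B\) is given in unary.
\end{enumerate}
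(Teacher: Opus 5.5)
\textbf{Second promise.} As written, this part has no algorithm. Your final plan needs the primes dividing $M$ and the exponents $v_p(Z)$, and you have a way to get neither.

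\begin{itemize}
\item The exponents can be as large as $H=(\beta+1)2^{s+1}$. A binary search on $k$ with tests $Z\equiv0\pmod{b^k}$ therefore needs moduli with exponentially many bits.
\item A gcd-free basis is not a factorization. The part of $Z$ supported on the primes of a composite basis element $b$ need not be a power of $b$, so $\prod_p p^{v_p}\bmod q$ cannot be formed.
\end{itemize}

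The missing idea is to make the exponents irrelevant. Choose a modulus on which a multiplicative character is trivial at every prime dividing $M$ but not at $-1$. The paper takes $q=8M-1$, prime or not, and uses the Jacobi symbol.

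\begin{itemize}
\item Since $q\equiv7\pmod8$, one gets $(2/q)=1$ and $(-1/q)=-1$.
\item For odd $p\mid M$, one has $q\equiv-1\pmod p$. Reciprocity then gives $(p/q)=(q/p)(-1)^{(p-1)/2}=(-1/p)(-1)^{(p-1)/2}=1$.
\item Multiplicativity gives $(Z/q)=\sgn Z$ for every nonzero promised $Z$, whatever its exponents.
\item $\gcd(Z,q)=1$ holds automatically, and $Z=0$ gives the symbol $0$.
\end{itemize}

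Only $Z\bmod q$ and the Euclidean Jacobi algorithm are used, and neither $M$ nor $q$ is factored.

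\textbf{First promise.} Your character identity is right once $r$ is fixed. If $(-1/r)=-1$ and $\gcd(r,Z)=1$, then $\sgn Z=(Z/r)\prod_{q\parallel Z}(q/r)$. The parity of the exponent is then a $\oplus\Pclass$ count with polynomial-time predicates.

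Step (vi) is what fails. Unary $B$ makes a prime $r>2^B$ short, but it does not make it findable. The search interval has exponential length, and deterministic polynomial-time construction of a $B$-bit prime is a well-known open problem. Guessing $r$ inside the parity count does not help either: it multiplies the answer by the unknown parity of the number of admissible $r$.

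You gestured at the fix with ``$r=3$-style small moduli'' and then dropped it. Use a tiny modulus and strip off its own prime, whose valuation the promise bounds by $B$. The paper does this as follows.

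\begin{itemize}
\item $Z\bmod2^{B+1}$ detects zero and gives $v=v_2(Z)\le B$.
\item $Z\bmod2^{B+2}$ then gives $(Z/2^v)\bmod4\in\{1,3\}$.
\item This residue is compared with $(-1)^h$, where $h=\#\{(p,k):p\equiv3\pmod 4\text{ prime},\ k\ge1,\ p^k\le2^B,\ p^k\mid Z\}=\sum_{p\equiv3\bmod4}v_p(Z)$.
\item $h$ is a $\SharpP$ count, so the comparison is in $\oplus\Pclass$.
\end{itemize}

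The modulus $3^{B+1}$, together with counting primes $p\equiv2\pmod 3$, works equally well.
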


\begin{proof}
For the first promise, evaluation modulo \(2^{B+1}\) detects zero
and otherwise determines \(v=v_2(|Z|)\). Evaluation modulo \(2^{B+2}\)
then gives \(r=(Z/2^v)\bmod4\), which is either 1 or 3.
The \(\SharpP\) count
\[
 h=\#\{(p,k):p\text{ prime},\ k\ge1,\ p\equiv3\pmod4,
                   \ p^k\le2^B,\ p^k\mid Z\}
\]
counts the multiplicities of all prime factors congruent to 3 modulo 4.
The positive odd part of \(|Z|\) is consequently \((-1)^h\bmod4\).
Thus \(Z>0\) exactly when \(r=(-1)^h\bmod4\). A parity machine
tests this equality by adding a known zero or one to \(h\).
Every prime-power divisibility test uses a polynomial-bit modulus.

For the second promise, put \(q=8M-1\) and compute the Jacobi symbol
\(\left(\frac Zq\right)\) using the short residue \(Z\bmod q\).
The integer \(q\) need not be prime. For every odd prime \(p\mid M\),
quadratic reciprocity and \(q\equiv-1\pmod p\) give
\[
 \left(\frac pq\right)
   =\left(\frac qp\right)(-1)^{(p-1)/2}=1.
\]
Also \(\left(\frac2q\right)=1\) and
\(\left(\frac{-1}q\right)=-1\). By multiplicativity,
\(\left(\frac Zq\right)=\operatorname{sgn}Z\) on nonzero promised
inputs. Such inputs are coprime to \(q\), so the zero case is detected
as well. Modular SLP evaluation and the short Jacobi computation are
both in polynomial time.
\end{proof}

The second promise permits arbitrarily large prime-power exponents.
For example, if every prime factor is at most a supplied unary bound
\(B\), use the polynomial-bit integer \(M=B!\).

\section{Ideal membership and conditional square certificates}\label{push7:sec}
The preceding construction uses two nested products over formal roots.
We next consider approaches that do not begin by optimizing those products.
They give two additional restricted results and a precise certificate
problem for a stronger real-feasibility bound. They do not improve the
general containments in Theorems~\ref{thm:etr} and~\ref{thm:posslp}.

\subsection{Exact membership in a finite monic algebra}
Consider integer polynomial circuits describing
\[
 G_i=X_i^{d_i}+r_i,\qquad \deg r_i<d_i,\qquad
 A=\mathbb Q[X_1,\ldots,X_m]/(G_1,\ldots,G_m).
\]
The integers \(d_i\ge1\) are given in binary. The degree inequalities
are promises on the input circuits. In particular, this includes the
degree-five critical systems used above. The leading monomials are
pairwise coprime, so the \(G_i\) are a monic Gr\"obner basis for any
total-degree order. Set
\[
 \rho_i=d_i-1,\qquad D=\prod_i d_i.
\]
The standard monomials \(X^\alpha\), \(0\le\alpha_i<d_i\), form an
integral basis. Normal-form reduction preserves integer coefficients.

\begin{theorem}\label{push7:membership}
Given the preceding system and an integer polynomial circuit \(F\),
both ordinary ideal membership
\[
 F\in(G_1,\ldots,G_m)
\]
and complex radical membership
\[
 F\in\sqrt{(G_1,\ldots,G_m)}
\]
are in \(\mathsf{coRP}^{\PP}\). No reducedness promise is required.
\end{theorem}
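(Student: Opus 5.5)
The plan is to reduce both memberships to the vanishing of a single integer coefficient of a short circuit, and then test that vanishing modularly with a $\PP$ oracle. Radical membership reduces to ordinary membership. The algebra $A$ has dimension $D$, so every nilpotent element $a$ satisfies $a^D=0$. Hence $F\in\sqrt{(G_1,\ldots,G_m)}$ iff $F^D\in(G_1,\ldots,G_m)$. Binary powering gives $F^D$ a circuit of size polynomial in $\log D$ and $|F|$. No reducedness is used anywhere.

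For ordinary membership I would use the global residue (Grothendieck) pairing instead of the trace form. The trace form is degenerate when the fibers are nonreduced; the residue pairing is not. For the complete intersection $(G_i)$, define
\[
 \Res(a)=[x_1^{-1}\cdots x_n^{-1}]\,\frac{a}{\prod_iG_i}.
\]
The pairing $(a,b)\mapsto\Res(ab)$ on $A$ is nondegenerate; this is local duality for the Gorenstein algebra $A$. Therefore $F$ lies in the ideal iff $\Res(FX^\alpha)=0$ for every standard monomial $X^\alpha$.

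Expanding $1/G_i=X_i^{-d_i}\sum_k(-r_iX_i^{-d_i})^k$ exactly as in Lemma~\ref{basic:residue}, with the Jacobian omitted, turns each residue into a specified integer coefficient of an explicit circuit. That circuit is built from the geometric sums $S_e$, has binary target index, and has singly exponential degree and height. To avoid exponentially many tests, I would take one random combination. Choose integers $c_i$ and set
\[
 L_c=\prod_i\sum_{a<d_i}c_i^aX_i^a.
\]
Each factor has a short circuit by the $Q_m$ doubling recursion. The number $\Res(FL_c)$ is a polynomial in $c$ of individual degree $<d_i$, and it is nonzero as a polynomial exactly when some pairing $\Res(FX^\alpha)$ is nonzero.

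The algorithm is as follows.
\begin{enumerate}
\item Sample the $c_i$ from a polynomial-bit range larger than $4\sum_i\rho_i$.
\item Sample a random polynomial-bit prime $p$ from an interval containing many more primes than the singly exponential bit height of $\Res(FL_c)$.
\item Compute a representative of this coefficient modulo $p$ by Lemma~\ref{lem:ring-coefficients}. It is a $\SharpP$ function, so its canonical residue is available in $\FP^{\PP}$.
\item Accept iff the residue is zero.
\end{enumerate}
If $F$ is in the ideal, the integer is zero for every choice, so the test always accepts. This is the one-sided error required for $\mathsf{coRP}^{\PP}$. If $F$ is not in the ideal, Schwartz--Zippel shows that the integer is nonzero with probability at least $3/4$. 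A nonzero integer with $H$ bits has at most $H$ prime divisors, so it survives reduction modulo $p$ with high probability. Repetition then amplifies the success probability.

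The main obstacle is making the residue step rigorous for nonreduced $A$. Two things need checking. First, the Laurent-expansion formula must compute the algebraic residue functional, not merely a sum over points. The paper's specialization argument handles this as in Lemma~\ref{basic:residue}: the identity is polynomial in the coefficients in the fixed monomial basis. Second, the residue functional must generate the dual of $A$. This holds because $A$ is a complete intersection. As a fallback, I would note that $\Res$ restricted to the standard basis is unitriangular in a suitable order: $\Res(X^{\rho})=1$, and the socle is spanned by $X^{\rho}$ modulo lower terms. That gives nondegeneracy directly.
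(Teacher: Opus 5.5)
Your proof is correct, and its skeleton matches the paper's. You reduce radical membership to $F^D\in I$, identify the Laurent coefficient functional with the coefficient $\lambda$ of $X^\rho$ in the normal form, use perfectness of $(a,b)\mapsto\lambda(ab)$, and zero-test one coefficient of a short circuit modulo a random prime with a $\PP$ oracle.

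Your fallback is in fact the paper's proof of perfectness: order the standard basis by total degree, and the pairing matrix is block triangular with identity diagonal blocks. It also disposes of both worries you raised. Vanishing on every $G_iH$ and the values on standard monomials follow directly from the Laurent expansion. So you need neither local duality nor the trace specialization of Lemma~\ref{basic:residue}.

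The genuine difference is how you compress the $D$ numbers $\lambda(FX^\alpha)$ into one integer.
\begin{itemize}
\item The paper forms $U(F)=\sum_\alpha\lambda(FX^\alpha)^2=(\lambda\otimes\lambda)\bigl(F(X)F(Y)K(X,Y)\bigr)$, with $K=\prod_i\sum_{a<d_i}(X_iY_i)^a$. This compression is exact and deterministic, because a sum of integer squares vanishes only when every term does. The prime is then the only random choice. The nonnegativity of $U$ is also convenient later, when the scalars are real algebraic constants and a threshold test $U\ge 2^{-L_U-1}$ replaces modular evaluation (Proposition~\ref{push12:root-coefficient-verifier-prop}).
\item You specialize the second copy of variables to a random integer point $c$. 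This avoids doubling the variables and squaring, at the cost of an extra Schwartz--Zippel step. That step is one-sided, so $\mathsf{coRP}^{\PP}$ is preserved.
\end{itemize}

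Your use of the Galois-ring Lemma~\ref{lem:ring-coefficients} for the modular step is valid but heavier than needed. The paper's Lemma~\ref{push7:coefficient-zero} samples the prime above the singly exponential degree and uses orthogonality over $\F_p^\times$ directly, deliberately staying independent of the small-characteristic construction.
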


We first give a coefficient-zero test that is independent of the
small-characteristic coefficient construction.

\begin{lemma}\label{push7:coefficient-zero}
Zero testing a specified coefficient of an integer polynomial circuit,
with its exponent vector in binary, is in \(\mathsf{coRP}^{\PP}\).
\end{lemma}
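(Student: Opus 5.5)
The plan is to reduce zero-testing of the coefficient \(c=[\mathbf X^{\mathbf k}]f\) to checking that \(c\equiv0\) modulo a randomly chosen polynomial-bit prime, and to compute that residue with a single \(\PP\)-oracle retrieval.

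First, the height and the one-sided error. The coefficient \(\ell_1\)-norm argument used in the proof of Theorem~\ref{push4:posslp} gives \(|c|<2^H\) with \(H=(\beta+1)2^{s+1}\), where \(s\) is the circuit size and \(\beta\) bounds the constants. The bound \(H\) has a polynomial-length binary description, although its value may be exponential. If some target exponent exceeds the syntactic degree bound \(2^s\), we accept immediately, since the coefficient is then zero. Otherwise a nonzero \(c\) has at most \(H\) distinct prime divisors. We fix an interval \([P_0,2P_0]\) with \(P_0\) a power of two chosen so that the interval contains at least \(4H\) primes; by Chebyshev-type bounds \(\log P_0=O(\log H)\), which is polynomial. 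We sample an integer from the interval uniformly, with capped rejection sampling and polynomial-time primality testing, and so obtain a uniform prime \(p\) except with negligible failure probability, on which we accept. If \(c=0\), every residue is zero and the test always accepts. If \(c\neq0\), the probability that \(p\mid c\) is at most \(1/4\). This gives exactly the one-sided error pattern of \(\mathsf{coRP}\).

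Second, computing \(c\bmod p\) with one counting retrieval. Lemma~\ref{lem:ring-coefficients} supplies a uniform \(\SharpP\) function \(W(f,D,\mathbf k,p)\) with \(W\equiv c\pmod p\), and its output has polynomially many bits. A \(\PP\) oracle retrieves this count exactly by binary search on threshold queries; this is the standard \(\FP^{\SharpP}=\FP^{\PP}\) retrieval. Reducing the retrieved value modulo \(p\) is then ordinary polynomial-time arithmetic. Alternatively one could work directly with the Fourier formula \eqref{eq:field-fourier} over \(\F_{p^e}\) for a large enough extension, but Lemma~\ref{lem:ring-coefficients} already handles the small-characteristic and aliasing issues, so I would simply invoke it with modulus \(p\). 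The machine accepts exactly when the residue is zero. Repetition with fresh primes amplifies the error bound, so the procedure is in \(\mathsf{coRP}^{\PP}\).

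The main thing to check is that the reduction modulo \(p\) happens \emph{after} the counting retrieval and is not nested inside another count; this is what keeps the bound at one oracle level. The other point to verify is the bookkeeping that the degree bound \(D\) and the indices are binary-encoded with polynomial length, so that \(W\) is genuinely a polynomial-output \(\SharpP\) function. Both points are covered by the statement of Lemma~\ref{lem:ring-coefficients}. I therefore expect the only delicate step to be the prime-counting estimate that makes \(\log P_0\) polynomial in the description length of \(H\); the standard effective Chebyshev constants already used in the paper suffice for this.
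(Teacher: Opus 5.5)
Your argument is correct, and its outline matches the paper's. Both use a random polynomial-bit prime, a $\SharpP$ representative of $c\bmod p$ with polynomially many output bits, one exact $\PP$ retrieval followed by reduction modulo $p$, and acceptance on sampling failure so that the error stays one-sided. The differences are in the subroutine that produces the residue and in the prime count.

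You invoke Lemma~\ref{lem:ring-coefficients} as a black box. The paper instead samples only primes $p>\delta+1$ and writes $c\equiv(-1)^m\sum_{a\in(\F_p^\times)^m}F(a)\prod_i a_i^{-\alpha_i}\pmod p$ directly. It does this deliberately, to keep this verifier independent of the Galois-ring construction and its averaging over field presentations. Your own interval already forces $P_0\ge 4H-1>2^s+1$, so every prime you sample satisfies $p-1>2^s$. The small-characteristic and aliasing issues that the lemma resolves therefore never occur for you, and the plain $\F_p$ sum (your aside with $e=1$) would suffice.

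For the prime count, the paper samples from all of $[1,N]$. It needs only the elementary estimate $\pi(N)\ge N/(2\log_2 N)$ from the central binomial coefficient, not a density bound for primes in a dyadic interval $[P_0,2P_0]$. Your dyadic bound is standard in quantitative Bertrand form. However, generic Chebyshev constants $c_1x\le\vartheta(x)\le c_2x$ only guarantee primes in $(P_0,CP_0]$ for a constant $C$ depending on them. Widening your interval accordingly changes nothing else in your argument.

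Your route buys brevity by reusing the earlier lemma; the paper's buys a self-contained, elementary test.
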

\begin{proof}
Let \(c=[X^\alpha]F\). Circuit syntax gives bounds
\(\deg F\le\delta\) and \(|c|<2^H\), with \(\log\delta\) and
\(\log H\) polynomial in the input length. If some \(\alpha_i>\delta\),
the answer is immediate. Otherwise put
\[
 s=4\lceil\log_2(H+\delta+3)\rceil+16,\qquad N=2^s.
\]
Choose a uniform prime \(p\le N\) with \(p>\delta+1\), by sampling
integers and testing primality. Stop after \(16s\) attempts and answer
zero if no eligible prime was found.

Here are elementary bounds for this sampling step. For even \(N\ge4\),
\[
 2^{N/2}\le \binom{N}{N/2}\le N^{\pi(N)}.
\]
For the upper bound, the exponent of any prime \(p\) in the central
binomial coefficient is at most \(\lfloor\log_pN\rfloor\).
Thus \(\pi(N)\ge N/(2s)\). Our choice of \(N\) leaves at least
\(N/(4s)\) eligible primes, and sampling fails with probability at
most \(e^{-4}<1/32\). Conditional on success, the selected prime is
uniform over the eligible primes. If \(c\ne0\), it has at most \(H\)
distinct prime divisors, so the chance that \(p\mid c\) is less than
\(1/16\).

For such a prime, finite-field orthogonality gives
\[
 c\equiv(-1)^m
 \sum_{a\in(\mathbb F_p^\times)^m}
 F(a)\prod_i a_i^{-\alpha_i}\pmod p.                    \tag{*}
\]
There is no aliasing, since every individual degree and every
\(\alpha_i\) is less than \(p-1\).
Represent each summand by its integer residue in \([0,p-1]\).
It is computable in polynomial time. Guessing \(a\) and an additional
integer below that residue expresses the full integer sum as a
\(\SharpP\) count with polynomially many output bits. A \(\PP\) oracle
retrieves this count, and hence \(c\bmod p\).
Answer nonzero exactly when this residue is nonzero.
The zero case is always accepted; the other case has error below
\(1/8\).
\end{proof}

\begin{proof}[Proof of Theorem~\ref{push7:membership}]
Let \(\lambda(F)\) be the coefficient of \(X^\rho\) in the normal
form of \(F\). The pairing
\[
 (a,b)\longmapsto\lambda(ab)
\]
is perfect over \(\mathbb Z\). To see this directly, index its rows
by \(X^\alpha\) and its columns by \(X^{\rho-\gamma}\). If
\(|\alpha|<|\gamma|\), the entry is zero because reductions decrease
total degree. If \(|\alpha|=|\gamma|\), the entry is
\(\delta_{\alpha\gamma}\): an out-of-basis monomial reduces below
degree \(|\rho|\), and the only standard monomial of that degree is
\(X^\rho\). Ordering by total degree makes the matrix block triangular
with identity diagonal blocks.

Consequently the integer
\[
 U(F)=\sum_{0\le\alpha_i<d_i}\lambda(FX^\alpha)^2
\]
is nonnegative and is zero exactly when \(F=0\) in \(A\).
The sum does not need an exponentially long circuit. Define
\[
 K(X,Y)=\prod_i\sum_{a=0}^{d_i-1}(X_iY_i)^a.
\]
Then
\[
 U(F)=(\lambda\otimes\lambda)
       \bigl(F(X)F(Y)K(X,Y)\bigr).                    \tag{**}
\]

The functional itself has a short coefficient representation.
Expand each reciprocal as a Laurent series at infinity:
\[
 \frac1{G_i}=X_i^{-d_i}
       \sum_{j\ge0}(-r_iX_i^{-d_i})^j.
\]
Then \(\lambda(F)\) is the coefficient of
\(X_1^{-1}\cdots X_m^{-1}\) in \(F/\prod_iG_i\).
Indeed, this functional vanishes on every \(G_iH\), since after
cancellation no negative power of \(X_i\) remains, and on standard
monomials it is one exactly at \(X^\rho\).
Total-degree decrease makes the relevant coefficient finite.
If \(E\ge\deg F\), it equals
\[
 \left[\prod_i X_i^{d_i(E+1)-1}\right]
 F(X)\prod_i
 \left(\sum_{j=0}^E(-r_i(X))^jX_i^{d_i(E-j)}\right).
                                                               \tag{***}
\]
Every displayed geometric sum has a circuit of size polynomial in
\(\log E\) and the input size, by binary doubling.
Apply this representation in the two independent sets of variables
in \((**)\). It gives one specified coefficient of a short circuit
equal to \(U(F)\). Its degree and coefficient bit length have singly
exponential bounds. Lemma~\ref{push7:coefficient-zero} proves ordinary
membership.

Finally, the nilradical of a \(D\)-dimensional rational algebra has
nilpotence index at most \(D\). Thus
\[
 F\in\sqrt{(G_1,\ldots,G_m)}
 \quad\Longleftrightarrow\quad
 F^D\in(G_1,\ldots,G_m).
\]
Repeated squaring constructs \(F^D\) with \(O(\log D)\) additional
gates. Apply the ordinary membership algorithm.
\end{proof}

This is specifically a complex radical test. In
\(\mathbb Q[X]/(X^2+1)\), there are no real points, but \(1\) is not
in the ordinary radical. Real emptiness requires further information.

\subsection{Real emptiness and four-square certificates}
There is a particularly small extension of the critical-point
reduction. Use the fixed weighted objective \(J=J_w\), and
adjoin
\[
 Y^8+J(X)-T=0.                                         \tag{1}
\]
Its tail has total degree at most six. Together with the degree-five
\(G_i\), this is again a system of the form in
Theorem~\ref{push7:membership}, of dimension \(8\cdot5^m\).
A real solution of (1) exists over a real critical point exactly when
\(J(X)\le T\). Feasible instances have a critical minimum strictly below
\(T\); infeasible instances have \(J(X)>T\) at every real point.
Thus the original ETR instance is feasible exactly when this finite
rational algebra \(B\) has a real point. Neither genericity nor
reducedness is needed. The large scalar \(Q\) retains its short-SLP
representation.

\begin{lemma}\label{push7:four-squares}
A nonzero finite-dimensional commutative rational algebra \(B\) has
no real points if and only if \(-1\) is a sum of four squares in \(B\).
\end{lemma}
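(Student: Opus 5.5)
The plan is to reduce to a finite product of fields and then use the Pfister level bound for number fields. Since \(B\) is finite-dimensional and commutative over \(\Q\), it is Artinian. Hence \(B\cong\prod_{k}B_k\) with each \(B_k\) local, with nilpotent maximal ideal \(\mathfrak m_k\) and residue field \(K_k=B_k/\mathfrak m_k\), a number field. A real point of \(B\), meaning a \(\Q\)-algebra homomorphism \(B\to\R\), factors through some \(K_k\) and is a real embedding of it. Conversely, every real embedding of some \(K_k\) gives a real point. So \(B\) has no real points exactly when every \(K_k\) is totally imaginary, that is, not formally real. Nonzeroness of \(B\) ensures there is at least one factor, so the statement is not vacuous.

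For the easy direction, suppose \(-1=\sum_{j=1}^4 b_j^2\) in \(B\). Applying a real point \(\varphi\) would give \(-1=\sum\varphi(b_j)^2\ge0\) in \(\R\), which is impossible. So no real points exist. This direction does not use the bound four.

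For the converse, I would first work in each residue field. A number field \(K\) with no real embedding has level \(s(K)\le4\): by Hasse--Minkowski, the form \(x_0^2+x_1^2+x_2^2+x_3^2+x_4^2\) is isotropic over \(K\). It is isotropic at every complex place, at finite places because any form of dimension five is, and there are no real places. Anisotropic forms of dimension at least five over local fields do not exist. An isotropic form \(\langle1\rangle\perp\varphi\) with \(\varphi\) a sum of four squares represents \(-1\) by \(\varphi\). Alternatively, I would cite Siegel's theorem directly. This gives \(-1=\sum_{j\le4}\bar c_{k,j}^2\) in \(K_k\).

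Next I would lift the identity through the nilpotent ideal. Choose lifts \(c_{k,j}\in B_k\), so that \(1+\sum_j c_{k,j}^2=n\in\mathfrak m_k\). I want to correct \(c_{k,1}\) so that the defect vanishes, by Hensel or Newton iteration on \(x\mapsto x^2+(1+\sum_{j\ge2}c_{k,j}^2)\) near \(c_{k,1}\). Its derivative \(2c_{k,1}\) is a unit provided \(\bar c_{k,1}\ne0\), since characteristic zero makes \(2\) invertible. We may arrange \(\bar c_{k,1}\ne0\) by permuting the coordinates, because not all \(\bar c_{k,j}\) vanish. Since \(\mathfrak m_k^{N}=0\) for some \(N\), finitely many Newton steps give an exact solution. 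Concretely, set \(c\leftarrow c-n/(2c)\); each step squares the defect ideal order. Finally, combine the identities componentwise via the product decomposition to get \(-1=\sum_{j=1}^4 b_j^2\) in \(B\).

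The main obstacle is the field-level bound \(s(K)\le4\) for totally imaginary number fields. This is the only genuinely arithmetic input, and I would invoke it from the local--global principle as above rather than reprove it. The lifting step and the Artinian decomposition are routine.
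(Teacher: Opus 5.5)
Your proof is correct and follows the paper's outline: Hasse--Minkowski for \(\langle1,1,1,1,1\rangle\) over each residue number field, which has no real place, followed by a lift through the nilpotent ideal. The only real difference is the lift. You split \(B\) into local Artinian factors and Hensel-correct, in each factor, a coordinate whose residue is a unit. The paper instead lifts the four elements globally to get \(\sum_i u_i^2=-(1-n)\), then multiplies all of them by an inverse square root of the unit \(1-n\), computed by the division-free iteration \(c_{j+1}=c_j(3-ac_j^2)/2\); this avoids idempotents, the per-factor choice of a unit coordinate, and every inversion except that of the scalar \(2\).
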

\begin{proof}
The reduced quotient is a product of number fields. If there are no
real points, none of these fields has a real embedding.
The form \(\langle1,1,1,1,1\rangle\) is isotropic over every completion:
the archimedean completions are complex, and quadratic forms of
dimension at least five over nonarchimedean local number fields are
isotropic. Hasse--Minkowski therefore gives a nontrivial zero over
each number field \cite{Lam05}. Dividing by one nonzero coordinate
writes \(-1\) as four squares. Assemble these identities in the
product of the fields.

Lift the four elements arbitrarily to \(B\), obtaining
\(\sum_i u_i^2=-1+n\) for a nilpotent \(n\).
Set \(a=1-n\), \(c_0=1\), and
\[
 c_{j+1}=c_j(3-ac_j^2)/2.
\]
For \(e_j=1-ac_j^2\), direct expansion gives
\[
 e_{j+1}=e_j^2(3+e_j)/4.
\]
Since \(n^{\dim B}=0\), after \(O(\log\dim B)\) iterations
\(ac_j^2=1\). The four elements \(c_ju_i\) then have squares summing
to \(-1\). Conversely, such an identity cannot hold after evaluation
at a real point.
\end{proof}

Clear a common rational denominator. A certificate of real emptiness
therefore has the form
\[
 c^2+U_1^2+U_2^2+U_3^2+U_4^2\in(G_1,\ldots,G_m,Y^8+J-T),
 \qquad c\in\mathbb Z\setminus\{0\}.                    \tag{2}
\]
If the \(U_i\) and \(c\) are supplied as short integer circuits,
Theorem~\ref{push7:membership} checks the identity in
\(\mathsf{coRP}^{\PP}\). Nonzeroness of \(c\) has a short additional
witness: an integer modulus \(q>1\) with \(c\bmod q\ne0\).
The circuit height bound and the prime-counting argument above ensure
that a polynomial-bit such modulus exists.

\begin{proposition}[Conditional certificate bound]\label{push7:conditional}
Suppose every real-empty algebra obtained from (1) admits certificates
(2) whose total integer-SLP size is bounded by a fixed polynomial in
the original input length. Then
\[
 \ETR\subseteq\mathsf{coMA}^{\PP}
 \subseteq\Clevel2\cap\PH^{\PP}.
\]
\end{proposition}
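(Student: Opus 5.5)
The plan is to place the complement of $\ETR$ in $\MA^{\PP}$, with Merlin sending the certificate (2). Then $\ETR\in\mathsf{coMA}^{\PP}$, and the two outer containments follow from inclusions already recorded in the paper.

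\textbf{The verifier.} Given an instance, Arthur first builds in polynomial time the fixed algebra $B$: the degree-five critical system $G_i$ of \eqref{eq:critical-system} together with (1). Merlin sends integer SLPs for $U_1,\ldots,U_4$ and $c$, of the hypothesized polynomial size, and also a polynomial-bit modulus $q>1$. Arthur accepts when the following three checks pass.
\begin{enumerate}[label=(\roman*)]
\item He checks $c\bmod q\ne0$ by modular evaluation.
\item He checks membership of $c^2+\sum_iU_i^2$ in the ideal using the $\mathsf{coRP}^{\PP}$ algorithm of Theorem~\ref{push7:membership}. Its input circuit is still polynomial size, and the system $(G_1,\ldots,G_m,Y^8+J-T)$ has the required pure-power form. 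The constants $Q_0$ and $T$ have short SLPs, and the tail of $Y^8$ has degree at most six.
\item He amplifies the membership test to error below $1/3$.
\end{enumerate}
This is a one-round test: Merlin's polynomial-length message, followed by a $\BPP^{\PP}$ predicate, which is in fact one-sided.

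\textbf{Correctness.} On an infeasible instance, $B$ has no real point, and it is nonzero because its rank is $8\cdot5^m\ge1$. Lemma~\ref{push7:four-squares} therefore gives a four-square expression for $-1$. Clearing denominators gives a certificate (2). By hypothesis such a certificate exists with short SLPs. A suitable $q$ exists by the prime-counting bound, and Arthur accepts it with probability one. On a feasible instance, the real critical minimizer $x$ has $J(x)<T$, so $Y=(T-J(x))^{1/8}$ gives a real point of $B$. Evaluating any purported identity there yields $c^2+\sum U_i^2=0$ with $c\neq0$, which is impossible over $\R$. So no valid membership identity exists, and every message is rejected except with the amplified error of the coRP test.

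\textbf{Class containments.} $\mathsf{coMA}^{\PP}\subseteq\Clevel2$ is the relativized form of the amplify-and-count argument in the proof of Theorem~\ref{push4:etr}, closed under complement because $\PP$ is. $\MA^{\PP}\subseteq\Sigma_2^{\PP}\subseteq\PH^{\PP}$ is the relativized standard inclusion $\MA\subseteq\Sigma_2$.

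\textbf{Main obstacle.} The main care point is the uniformity needed to apply Theorem~\ref{push7:membership}. Its degree and height bounds must be computed from the SLP sizes alone, so that the exponent bounds stay polynomial-length even when Merlin sends adversarial circuits. I would handle this by taking syntactic bounds from the circuit size, as in \eqref{eq:slp-height}. I would also note that soundness needs nothing about Merlin's circuits beyond their evaluation.
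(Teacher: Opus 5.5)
Your proposal is correct and follows the paper's proof. Merlin sends the short certificate circuits together with the nonzero-modulus witness, and the pure-monic membership verifier of Theorem~\ref{push7:membership} gives perfect completeness and bounded soundness error, which puts infeasibility in $\MA^{\PP}$; the relativized inclusions $\MA\subseteq\PP$ and $\MA\subseteq\Sigma_2^{\Pclass}$, followed by complementation, then give the stated bounds, and your soundness argument via the real point $Y=(T-J(x))^{1/8}$ and your remark on syntactic bounds only spell out details the paper leaves implicit.
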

\begin{proof}
Merlin supplies the circuits and the nonzero-modulus witness.
The preceding verifier has perfect completeness and bounded soundness
error. This puts infeasibility in \(\MA^{\PP}\).
The usual inclusions \(\MA\subseteq\PP\) and
\(\MA\subseteq\Sigma_2^{\Pclass}\) relativize; take complements.
\end{proof}

The circuit-size hypothesis is open here. Four elements can still
require exponentially many coordinates in the standard basis.
The short Newton iteration in Lemma~\ref{push7:four-squares} only
lifts representatives that have already been found.

There is a direct variant in the original critical algebra \(A\).
Keep the fixed threshold and put \(h=J-T\).
On an infeasible instance, \(h\) is positive at every real point.
In each residue number field, a nonzero such element is four squares:
apply Hasse--Minkowski to
\(\langle1,1,1,1,-h\rangle\).
If the last coordinate of an isotropic vector is nonzero, division
gives the representation; otherwise \(\langle1,1,1,1\rangle\) is
isotropic and contains a hyperbolic plane, so it represents every
field element. On a component where \(h=0\), use four zeros.
Consequently there are integer polynomial representatives satisfying
\[
 c^2h-\sum_{i=1}^4P_i^2\in\sqrt{(G_1,\ldots,G_m)},\qquad c\ne0.
\]
Theorem~\ref{push7:membership} verifies a short supplied identity.
At a feasible critical minimum its left side evaluates to a strictly
negative number, proving soundness. Thus a polynomial circuit-size
bound for these direct certificates gives the same conditional
\(\mathsf{coMA}^{\PP}\) containment, with no extra variable, unit
condition, or nilpotent lifting. Short representatives in the reduced
algebra are still unproved.

Even fixed algebra dimension does not settle this compression problem.
For an integer SLP value \(a\), consider \(B_a=\mathbb Q[x]/(x^2+a)\).
It is real-empty exactly when \(a>0\).
If all these algebras admitted polynomial-size certificates (2), then
\(\PosSLP\) would be in \(\MA\cap\mathsf{coMA}\).
Indeed, reduce each certificate circuit modulo \(x^2+a\), representing
its value by a pair of scalar SLPs. The two required scalar identities
are testable by the ordinary \(\mathsf{coRP}\) zero test for integer
SLPs. A nonzero modulus witnesses \(c\ne0\).
This gives \(\PosSLP\in\MA\); apply the same argument to \(-a\) and
include the \(\mathsf{coRP}\) zero test to obtain its complement in
\(\MA\). This implication is not a lower bound against short
certificates. It explains why fixed-degree number-field existence
theorems do not supply the missing complexity bound.

A related sufficient hypothesis was studied explicitly by Saranurak
and Jindal~\cite{SJ12}: whether every positive integer \(a\) has an
addition-multiplication chain of length
\(\tau_+(a)\le(1+\tau(a))^{O(1)}\), where \(\tau\) also permits
subtraction. Their Claim~2 guesses such a chain and checks equality
by the integer-SLP zero test, obtaining a conditional
\(\Sigma_2^{\Pclass}\) bound. The same verifier is an \(\MA\) protocol;
handling \(-a\) and zero gives \(\MA\cap\mathsf{coMA}\).
No polynomial domination is proved here. More general arithmetic
circuits for semialgebraic certificates are studied in the cone proof
system of Alekseev et al.~\cite{AGHT19}; their efficient verification
likewise does not assert polynomial-size certificates for all inputs.

\subsection{A square-class promise for integer signs}
There is also a useful strengthening of the earlier factor-support
promise. Large primes can be allowed when their exponents are even.

\begin{theorem}\label{push7:square-class}
Let \(Z\) be an integer SLP value and let \(M>0\) be a supplied
polynomial-bit integer. Promise that either \(Z=0\), or
\[
 v_p(|Z|)\text{ odd}\quad\Longrightarrow\quad p\mid M
 \quad\text{for every prime }p.
\]
Then deciding \(Z>0\) is in \(\mathsf{RP}\). If \(Z\ne0\) is also
promised, its sign is computable in expected polynomial time with
zero error.
\end{theorem}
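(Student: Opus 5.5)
The plan is to extend the Jacobi-symbol argument of Proposition~\ref{prop:factor-promises}, now using a \emph{random} modulus \(q\equiv-1\pmod{8M}\) in place of the single modulus \(8M-1\). If \(Z\ne0\), write \(|Z|=s\,k^2\) with \(s\) squarefree. By the promise every prime dividing \(s\) divides \(M\). Let \(q=8Mt-1\) for an integer \(t\ge1\). The reciprocity computation in that proposition gives \(\left(\frac pq\right)=1\) for every odd \(p\mid M\), together with \(\left(\frac2q\right)=1\) and \(\left(\frac{-1}q\right)=-1\). Hence \(\left(\frac sq\right)=1\). If \(\gcd(k,q)=1\), then \(\left(\frac{k^2}q\right)=1\), and multiplicativity gives
\[
 \left(\frac Zq\right)=\sgn Z .
\]
If \(\gcd(Z,q)>1\), the symbol is \(0\). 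The symbol is computed in polynomial time from the short residue \(Z\bmod q\), obtained by evaluating the SLP modulo \(q\). Thus every trial outputs \(\sgn Z\), or outputs \(0\), which means ``inconclusive'' when \(Z\ne0\).

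The \(\mathsf{RP}\) algorithm picks \(t\) uniformly from \([1,2^r]\) for a suitable polynomial \(r\), and repeats this polynomially many times. It accepts exactly when some trial returns \(+1\). This is one-sided and never wrong on rejected inputs. If \(Z\le0\), every trial returns \(-1\) or \(0\): the value \(0\) occurs when \(Z=0\), or when \(Z<0\) and \(\gcd(Z,q)>1\). In the zero-error variant, under the promise \(Z\ne0\), we repeat until some trial gives a nonzero symbol and output it. Correctness is then automatic, and the expected running time is polynomial provided a single trial succeeds with inverse-polynomial probability.

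The main obstacle is that coprimality estimate. We need \(\Pr_t[\gcd(Z,8Mt-1)=1]\ge1/\poly(n)\) whenever \(Z\ne0\). By the SLP height bound \eqref{eq:slp-height}, \(|Z|<2^H\) with \(\log H\) polynomial, so \(Z\) has at most \(H\) distinct prime divisors. Primes dividing \(8M\) never divide \(q\), so only primes \(p\nmid 8M\) matter; for each such \(p\) the condition \(p\mid q\) selects a single residue class of \(t\) modulo \(p\).

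We split these primes at a threshold \(B=2^{\poly(n)}\) with \(B\ge 8H\).
\begin{itemize}
 \item For primes \(p>B\) dividing \(Z\), a union bound applies. Each excludes at most a fraction \(1/p+2^{-r}\) of the \(t\)'s, so their total contribution is at most \(H/B+H2^{-r}\le1/4\) once \(r\ge\log_2H+3\).
 \item For the primes \(p\le B\), which may be exponentially many, we want the fraction of \(t\) avoiding all of their residue classes to be close to \(\prod_{p\le B}(1-1/p)\). By Mertens' theorem this product is at least \(c/\log B=1/\poly(n)\).
\end{itemize}
I would justify the small-prime step with the fundamental lemma of the (Brun or Selberg) lower-bound sieve. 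It applies to an interval of length \(2^r\) with \(2^r\ge B^{C}\) for an absolute constant \(C\), which remains a polynomial choice of \(r\) because \(\log B\) is polynomial.

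Combining the two ranges gives single-trial success probability at least \(c'/\log B-1/4\cdot(\text{correction})\). More carefully, the small-prime sieve density minus the large-prime loss, which is at most \(H/B\), stays \(\ge 1/\poly(n)\) once \(B\) is chosen sufficiently large relative to \(H\) and \(\log B\). Polynomially many independent trials then reach constant success probability, completing both assertions of Theorem~\ref{push7:square-class}. If the sieve constants prove awkward, the fallback is to sample \(q\) as a random prime in the progression \(-1\pmod{8M}\) and invoke an effective prime-in-progressions bound. This is heavier, but it reduces coprimality to the simple count of at most \(H\) bad primes.
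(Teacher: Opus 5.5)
Your proof is correct and has the same skeleton as the paper's proof. Both use the random modulus \(q=8Mt-1\), the reciprocity identity \(\left(\frac Zq\right)=\sgn Z\) when \(\gcd(Z,q)=1\) (and \(0\) otherwise), acceptance on \(+1\), and repetition until a nonzero symbol for the zero-error claim. Both also prove coprimality by splitting the prime divisors of \(Z\) at a threshold and using a union bound for the large ones.

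The difference is in the small primes. The paper stays elementary, with \(z=2^b\) and \(b=2\lceil\log_2(H+2)\rceil+64\). Central-binomial estimates give \(\sum 1/p\le2(1+\log b)\) and \(\prod(1-1/p)\ge1/(32b^3)\) over the small prime divisors. An odd-order Bonferroni truncation at depth \(m_0=O(\log b)\), essentially Brun's pure sieve, then finishes. This costs a sampling range \(T=z^{2m_0+10}\), so \(\log T=O(\log H\log\log H)\). In exchange every constant is explicit: success probability at least \(1/(128b^3)\), hence \(256b^3\) trials.

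You instead quote Mertens and the fundamental lemma of the dimension-one sieve. It does apply: your sifting primes are a subset of all primes, so Mertens gives the dimension condition, and every remainder is at most \(1\). A range \(2^r\ge B^{C}\) with an absolute \(C\) then suffices, so \(r=O(\log H)\). This is shorter, but it imports a nontrivial theorem whose effective constants you need in order to fix \(r\) and the number of trials.

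Two corrections. First, as you note at the end, \(B\ge8H\) is not enough: \(H/B\) can be as large as \(1/8\), which swamps a density of order \(1/\log B\). Take \(B\approx H^2\), as the paper's \(z\) does.

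Second, the prime-in-progression fallback does not work. The modulus \(8M\) can have exponential value, and \(t\) has only polynomially many bits. Because of possible exceptional (Siegel) zeros, no known unconditional result gives an inverse-polynomial proportion of primes among the numbers \(8Mt-1\) in that range.
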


The nonzero values have the form \(Z=\varepsilon d u^2\), with
\(\varepsilon\in\{-1,1\}\) and \(d\) a squarefree divisor of
\(\operatorname{rad}(M)\). Neither \(d\) nor \(u\) is supplied.
For \(q=8Mt-1\), quadratic reciprocity gives
\[
 \left(\frac Zq\right)=\varepsilon
 \qquad\text{whenever }\gcd(Z,q)=1.                    \tag{3}
\]
Indeed, \(q\equiv7\pmod8\), so \((2/q)=1\) and
\((-1/q)=-1\). For every odd \(p\mid d\), one has
\(q\equiv-1\pmod p\), and the two reciprocity signs cancel to give
\((p/q)=1\). The square factor contributes one. If the gcd is not
one, the Jacobi symbol is zero. Every nonzero answer is therefore
correct.

For completeness, a short sampling interval suffices to find a
nonzero answer. Given \(0<|Z|<2^H\), put
\[
\begin{split}
 h&=\lceil\log_2(H+2)\rceil,\quad b=2h+64,\quad z=2^b,\\
 m_0&=40(1+\lceil\log_2b\rceil)+1,\qquad
 T=z^{2m_0+10}.
\end{split}
\]
For any positive even integer \(a\), a uniform \(t\in\{1,\ldots,T\}\)
satisfies
\[
 \Pr[\gcd(at-1,Z)=1]\ge\frac1{128b^3}.                 \tag{4}
\]
Here \(\log T=O(\log H\log\log(H+2))\), so \(t\) is short.

To prove (4), let \(\mathcal P\) contain the prime divisors \(p\le z\)
of \(Z\) with \(p\nmid a\). They are all at least three.
The elementary central-binomial estimate on dyadic intervals gives
\[
 \lambda=\sum_{p\in\mathcal P}\frac1p\le2(1+\log b),
 \qquad
 V=\prod_{p\in\mathcal P}(1-1/p)\ge e^{-3\lambda/2}
 \ge\frac1{32b^3},
\]
where these logarithms are natural.
Each prime excludes one residue class of \(t\).
Chinese remaindering and the odd Bonferroni truncation at \(m_0\)
give a surviving fraction at least
\[
 V-\frac{\lambda^{m_0+1}}{(m_0+1)!}-\frac{2z^{m_0}}T
 \ge V/2.
\]
Each displayed error is at most \(1/(128b^3)\le V/4\);
use \(m_0+1\ge20\lambda\).
There are at most \(H\) larger bad primes, whose excluded fractions
sum to at most \(H/z+H/T\le1/(128b^3)\le V/4\).
This proves (4).

Apply (4) with \(a=8M\). Modular circuit evaluation and the Euclidean
Jacobi algorithm take polynomial time.
Accept positivity if any of \(256b^3\) independent trials yields
\(+1\). A nonpositive value is never accepted, while a positive value
is missed with probability at most \(e^{-2}\).
Under the nonzero promise, repeat until the symbol is nonzero; the
expected number of trials is at most \(128b^3\).
This proves Theorem~\ref{push7:square-class}.

The representation of \(M\) is essential. For every nonzero integer
\(Z\), the same reciprocity identity gives
\[
 \left(\frac{Z}{8Z^2-1}\right)=\sgn Z.
\]
Here the denominator has a short SLP, but can have exponentially many
bits. Evaluating a Jacobi symbol with such a succinct denominator
already contains the original sign problem.

\section{Deterministic parameters and square-class promises}\label{push8:sec}
The preceding general containments have not decreased further. This
section records two ways to simplify their arithmetic, a stronger
promise result for integer circuits, and a broader certificate target.
The distinction between a short \emph{description} and a small dense
degree bound remains essential.

\subsection{Direct arithmetic at the prime two}
The proof of Lemma~\ref{lem:generic} does more than remove generic
choices. At any ordinary polynomial precision \(a\), all the critical
points can be evaluated directly in the fixed-rank Galois ring
\(\mathrm{GR}(2^a,6)\). An index in \(\{1,\ldots,5\}^m\) specifies a
residue tuple in \(\F_{64}^m\); Newton lifting with its invertible
Jacobian computes the unique lift in polynomial time. There is no
homotopy variable in this calculation. The exponent \(a\) here has
polynomial \emph{value}, rather than merely polynomial binary length.

The separating objective also gives an exact local discriminant calculation.
For each point \(x\), there are \(4\cdot5^{m-i}\) points \(y\) whose first
different residue coordinate is \(i\). Consequently
\begin{equation}\label{push8:discriminant}
 v_2(q'(J_w(x)))=
 4\sum_{i=1}^m(i-1)5^{m-i}
 =\frac{5^m-4m-1}{4}=:w_m,
 \qquad v_2(\operatorname{disc}q)=5^m w_m.
\end{equation}
These are local separation statements; they do not order the real
embeddings. For example, \(m=1\), \(F=X^2\), \(Q=8\) gives
\(G=X^5+17X+1\). Its reduction modulo three is irreducible, while its
reduction modulo two has factor degrees two and three. The Galois
group is transitive of prime degree, hence primitive, and contains a
transposition by cubing the latter Frobenius permutation. It is
therefore \(S_5\), despite the degree-six local splitting.
For \(m>1\), one must lift the original coordinate equations,
not simple roots of \(q\) modulo two: in fact
\[
 q(z)\equiv(z^5+z^4+1)^{5^{m-1}}\pmod2.
\]
The coordinate algebra is \'etale at two, but the suborder generated
by \(J_w\) has index with valuation \(5^m w_m/2\).

There is also a useful closure property for raw counts.
\begin{lemma}\label{push8:small-char}
Fix a constant \(r\). Let \(f_1,\ldots,f_r\) be \(\GapP\) functions,
and let \(C\) be a uniformly given polynomial-size integer SLP in
\(r\) variables. If \(p\) is prime and \(p,a\) have polynomial value,
then \(C(f_1,\ldots,f_r)\) has a raw \(\GapP\) representative modulo
\(p^a\).
\end{lemma}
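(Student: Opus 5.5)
The plan is to compute \(C(f_1,\ldots,f_r)\) modulo \(p^a\) by a single counting sum over all possible residue tuples \(y\in(\Z/p^a\Z)^r\). Since \(r\) is constant and \(p^a\) has polynomial value, there are only polynomially many such tuples. For each \(y\), let \(c(y)\in\{0,\ldots,p^a-1\}\) be the canonical residue of \(C(y)\bmod p^a\); it is computable in polynomial time by modular SLP evaluation. The representative will be
\[
 \sum_{y}c(y)\prod_{k=1}^r\chi_{y_k}(f_k),
\]
where \(\chi_{b}(f)\) is a raw \(\GapP\) representative, modulo \(p^a\), of the indicator \([f\equiv b\pmod{p^a}]\). \(\GapP\) is closed under products of constantly many factors, under polynomially indexed sums, and under multiplication by short polynomial-time weights. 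So once raw indicator representatives exist, the whole expression is a raw \(\GapP\) function, and modulo \(p^a\) it equals \(c(f\bmod p^a)\equiv C(f_1,\ldots,f_r)\).

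The main step, and the expected obstacle, is building \(\chi_b(f)\). I would first shift, taking \(g=f-b\in\GapP\), so that only the test \([g\equiv0\pmod{p^a}]\) is needed. Binomial coefficients are the tool: \(\binom{g}{k}\) with \(k\) of polynomial value is a \(\GapP\) function when \(g\ge0\), since it counts \(k\)-subsets of accepting paths. A signed \(g=g^+-g^-\) is handled by Vandermonde, \(\binom{g^+-g^-}{k}=\sum_{j}\binom{g^+}{k-j}\binom{-g^-}{j}\), together with \(\binom{-u}{j}=(-1)^j\binom{u+j-1}{j}\); the second form again counts multisets of paths. Lucas's theorem, or Kummer's theorem for prime powers, expresses the base-\(p\) digits of \(g\bmod p^a\) through these binomials. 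Digit \(i\) of \(g\) is determined by \(\binom{g}{p^i}\bmod p\), but only modulo \(p\); to obtain an indicator valid modulo \(p^a\) I would use the polynomial
\[
 [g\equiv0\ (p^a)]\equiv\prod_{i<a}\Bigl(1-\binom{g}{p^i}^{p-1}\Bigr)\pmod p,
\]
which is correct modulo \(p\) by Lucas. I would then lift this \(\{0,1\}\)-valued mod-\(p\) indicator to a mod-\(p^a\) indicator. The lift uses the idempotent-lifting polynomial \(e\mapsto 3e^2-2e^3\), iterated \(O(\log a)\) times; equivalently, one takes the \(p^{a-1}\)-th power, which maps a mod-\(p\) idempotent to its Teichm\"uller-style lift. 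The iterated map is a fixed integer polynomial in \(e\). All exponents \((p-1)\), \(p^i\), and the lifting degree have polynomial value, so the result is a polynomial of polynomial degree in constantly many \(\GapP\) functions. It therefore lies in \(\GapP\), since polynomially many multiplications expand to a polynomial-length path product.

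Two checks remain. First, the degrees must stay polynomial: the total degree is roughly \(a\,p^{a}(p-1)\cdot O(1)\) times the idempotent-lifting degree, and this has polynomial value because \(p^a\) does. Second, the final sum must count paths of polynomial length, so its output has polynomially many bits; this follows because each factor has polynomially many bits and there are polynomially many factors and summands. If the Lucas/idempotent route proves awkward, the fallback is direct Fourier orthogonality over \(GR(p^a,e)\), as in Lemma~\ref{lem:ring-coefficients}. Such a sum needs \(\zeta^{g}\) for a root of unity \(\zeta\) of order \(p^a\), which a Galois ring lacks. I would therefore first try to realize \(\zeta^g\) as \(\sum_k\binom{g}{k}(\zeta-1)^k\) in the ramified ring \((\Z/p^{a'}\Z)[\zeta]\). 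This series truncates, since \((\zeta-1)\) is nilpotent modulo \(p^{a'}\) after polynomially many terms, and it gives the same indicator after taking the canonical constant coordinate.
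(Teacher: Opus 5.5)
There is a genuine gap. Your argument relies on $p^a$ having polynomial value, but the hypothesis says only that $p$ and $a$ separately do. So $p^a$ is merely a short modulus with polynomially many bits, e.g.\ $p=2$, $a=n$. This is exactly the regime in which the paper applies the lemma: residues modulo $2^a$ for polynomial-value $a$ in Section~\ref{push8:sec}.

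In that regime your outer sum over $(p^a)^r$ residue tuples is still harmless, since its index has polynomial length. The indicator $\chi_b$ is what fails:
\begin{itemize}
\item the binomials $\binom{g}{p^i}$ with $p^i$ exponential have no polynomial-length subset-counting witness;
\item the variant lift $e^{p^{a-1}}$ is an exponentially long product;
\item your degree bound of order $p^a\cdot p^{a-1}$ is exponential.
\end{itemize}

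The strategy cannot be patched either. Suppose you had a uniform raw $\GapP$ representative of $[g\equiv b\pmod{2^a}]$. Summing it over the $2^{a-1}$ residues $b$ with top bit one is a polynomial-length-index sum. It would give a raw $\GapP$ representative, modulo two, of bit $a-1$ of an arbitrary $\SharpP$ function. That would put $\mathsf{MP}$ inside $\oplus\Pclass$, and hence give $\PP\subseteq\oplus\Pclass$. Structurally, residue indicators are not integer-polynomial functions modulo $p^a$ for $a\ge2$, because every integer polynomial satisfies $P(x+p)\equiv P(x)\pmod p$; this is why integrality of $C$ matters. Your Fourier fallback fails the same way: the series for $\zeta^g$ needs binomial indices up to about $a'\varphi(p^a)$, and it also divides by the nonunit $p^a$.

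The missing idea is to expand $C$ in the falling-factorial basis,
\[ C(X)=\sum_k b_k\prod_{j=1}^r(X_j)_{k_j},\qquad b_k=\frac{\Delta^kC(0)}{\prod_jk_j!}\in\Z. \]
Since $(x)_k=k!\binom xk$ and $v_p(k!)\ge a$ once $k\ge pa$, every term with some $k_j\ge pa$ vanishes modulo $p^a$ at all integer inputs, including negative ones. Only $(pa)^r$ index tuples survive, and this count is polynomial because $pa$, not $p^a$, is bounded. Each surviving factor $(f_j)_{k_j}=\prod_{t<k_j}(f_j-t)$ is a product of fewer than $pa$ $\GapP$ functions. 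To get $b_k\bmod p^a$, evaluate the finite difference at the small points $0\le t_j\le k_j$ modulo the polynomial-bit integer $p^a\prod_jk_j!$, then divide by $\prod_jk_j!$.

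Your interpolation-by-indicators construction, with Lucas digits and idempotent lifting, is a valid alternative proof only in the special case where $p^a$ itself is polynomially bounded. That case does not cover the lemma as stated or as used.
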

\begin{proof}
Write the integer polynomial computed by \(C\) in the falling-factorial
basis:
\[
 C(X)=\sum_k b_k\prod_{j=1}^r(X_j)_{k_j},
 \qquad
 b_k=\frac{\Delta^k C(0)}{\prod_j k_j!}\in\Z.
\]
For every integer \(x\), including negative \(x\), \((x)_k\) is
divisible by \(p^a\) when \(k\ge pa\). Thus only \(0\le k_j<pa\)
matter. There are polynomially many such tuples, since \(r\) is
constant. Evaluate the finite-difference formula modulo
\(p^a\prod_jk_j!\); its arguments and modulus have polynomially many
bits. Dividing the resulting canonical residue by the known factorial
product recovers \(b_k\bmod p^a\) in polynomial time. Each surviving
falling factorial is a product of polynomially many raw counts.
The standard sum and product closure of \(\GapP\) completes the proof.
\end{proof}

Similarly, if \(V_i\) have raw \(\GapP\) representatives and the
indices \(i\) have polynomial length, then
\[
 \prod_i(1+pV_i)
 \equiv\sum_{\substack{S\text{ a set of indices}\\|S|<a}}
       p^{|S|}\prod_{i\in S}V_i\pmod{p^a}
\]
is again one raw \(\GapP\) representative. Increasing tuples encode
the subsets uniquely; each tuple has polynomial length.
The same argument holds coordinatewise in a fixed-rank ring.

These observations apply to a modified version of the rational
normalizer. Choose odd denominator constants and even numerator
scales, so every denominator is coefficientwise one modulo two and
every numerator is even. In the radial stage one can take odd \(C\),
an even integral initial radial scale, and odd-over-odd grid
multipliers. In the angular stage use even-over-odd parameters.
The same real approximation margins hold after increasing the
precision bounds. A denominator's algebraic norm is odd, so no
complex critical point is a pole. Its inverse modulo \(2^a\) is the
truncated series \(\sum_{j<a}(1-D)^j\).

The division by two in the rank shift does not divide a counting
value: the normalizer numerator has an explicit factor two, so we
evaluate its short circuit for \(R/2\), and then take
\(\Tr(R/2)+(5^m-1)/2\). This rank input has six raw coordinates.
The bit polynomials also require an adjustment, since their original
scalar denominators can be even. Replace each nonintegral scalar
\(U/V\), \(V>0\), by
\[
 \frac{U2^K}{V2^K+1}.
\]
The denominator is odd and the error is at most \(|U|2^{-K}\).
A gatewise bound on the bit circuit and its derivatives on the
promised rank interval permits a singly exponential value of \(K\)
with polynomial binary length. Thus these replacements use short
circuits and consume arbitrarily little of the approximation margin.
Choose \(K\) before the modulus, so all residues concern one fixed
rational expression. Retaining the always-defined real filter
preserves the odd-denominator property.

All selector operations are now integral at two. Replace their
variable denominator inverses by the finite series above, and apply
Lemma~\ref{push8:small-char} to their six rank coordinates before
count retrieval. Write \(E\) for the sum of the squared label/hash
errors and the qualifying-value error, and set
\[
 W=4E-1,\qquad 1+2W=8E-1.
\]
The outer factor is negative precisely when \(E<1/8\). The good-seed
matching error can be made smaller than \(1/16\), while the
nonmatching error exceeds \(1/2\). On a NO instance the
qualifying-value error alone exceeds one at every real critical
point, for every seed, so the factor is positive there. Each factor
belongs to \(1+2A_{(2)}\), where
\(A_{(2)}=\Z_{(2)}[X]/(G_1,\ldots,G_m)\); its norm is nonzero and its nonreal
conjugate pairs contribute positively. The displayed principal-unit
expansion therefore gives a raw \(\GapP\) representative of the
rational norm modulo each short \(2^a\), with no canonical rank
reduction or division of a raw count.

This does not yet give a sign algorithm. The rational numbers
\[
 \frac{2^H}{1+2^H},\qquad \frac{-2^H}{1+2^H}
\]
have opposite signs and approach \(1\) and \(-1\) over the reals, but
both vanish modulo \(2^a\) for every \(a\le H\). They have short
circuits when \(H\) has short binary length. Another obstruction is
\[
 \Norm_{\Q[X]/(X^D\mp1)}(1+\Lambda X)=1\pm\Lambda^D
 \quad(D\text{ odd}).
\]
The signs differ for \(\Lambda>1\), yet both residues are one
modulo \(p^a\) whenever \(a\le Dv_p(\Lambda)\).
Thus polynomial-precision local arithmetic alone does not supply
the large-modulus reconstruction used by the general proof.
The obstruction already occurs in this construction: replacing
\(F\) by \(F+1\), with common bounds and constants, leaves its
critical equations unchanged and shifts \(J_w\) by \(6Q_0\).
These objectives have the same residues at every precision
\(a\le v_2(6Q_0)\). Taking \(F=\sum_iX_i^2\) gives one feasible
and one infeasible nonnegative coercive instance. Moreover, when
\(a\) is below \(E_0-m\), both the critical equations and the
objective reduce to their \(F\)-independent perturbation terms.

\subsection{A small squarefree kernel need not be supplied}
\begin{theorem}\label{push8:kernel}
Let an integer SLP compute \(A\), with \(|A|<2^H\), and let \(B\ge1\)
be supplied in unary. Promise that either \(A=0\), or
\[
 A=ku^2,\qquad u>0,\quad k\text{ signed squarefree},\quad |k|<2^B.
\]
Then square testing is in \(\mathsf{coRP}\), and positivity is in
\(\mathsf{MA}\cap\mathsf{coMA}\subseteq\PP\), on this promise.
The signed kernel itself can be recovered in
\(\mathsf{FZPP}^{\NP}\) and in \(\mathsf{FP}^{\PP}\).
No factorization of \(k\), and no short circuit for \(u\), is supplied.
\end{theorem}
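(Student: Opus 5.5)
The plan is to reduce everything to Jacobi symbols. For an odd positive modulus \(q\) with \(\gcd(q,A)=1\), the promise gives \(\left(\frac Aq\right)=\left(\frac kq\right)\left(\frac uq\right)^2=\left(\frac kq\right)\). By reciprocity, \(q\mapsto\left(\frac kq\right)\) is a character modulo \(4|k|\) on odd \(q\), and it is trivial exactly when \(k=1\). Since \(B\) is in unary, \(4|k|<2^{B+2}\) is a polynomial-bit modulus, although it is unknown. The common primitive is the sampling of \(q\). I would take \(q\) uniform among odd integers in an interval of polynomial bit length much longer than \(B+\log H\). Then \(q\bmod 4|k|\) is nearly uniform, and I would redo the sieve of inequality (4) in the proof of Theorem~\ref{push7:square-class}, now conditioned on a residue class modulo \(4|k|\). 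The aim is to show that, for every nonprincipal character \(\chi\) modulo \(4|k|\),
\[
 \Pr_q\bigl[\gcd(q,A)=1,\ \chi(q)=-1\bigr]\ \ge\ \frac{1}{\poly(B,\log H)}.
\]
The primes \(p\mid u\) with \(p\le z\) are handled by Bonferroni truncation, and the at most \(H\) large primes are handled by a union bound, as before.

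\emph{Square testing.} First run the \(\mathsf{coRP}\) zero test for the integer SLP; zero is a square. If \(A\) is a nonzero square, every Jacobi symbol \(\left(\frac Aq\right)\) is \(0\) or \(1\), so one-sided acceptance is automatic. We reject only after observing \(\left(\frac Aq\right)=-1\) at some sampled \(q\), with \(A\bmod q\) computed by modular evaluation. If \(A\) is not a square then \(k\ne1\), and the displayed bound makes polynomially many trials detect \(-1\) with constant probability. This gives \(\mathsf{coRP}\).

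\emph{Positivity.} Merlin sends a claimed signed squarefree kernel \(k'\) with \(|k'|<2^B\). Arthur checks \(A\ne0\) and \(k'>0\), and then runs the square test on the SLP for \(Ak'\). The product \(Ak'=(kk')u^2\) satisfies the same promise with bound \(2^{2B}\) on its squarefree kernel. Moreover \(Ak'\) is a square iff \(kk'\) is a square, and for squarefree \(k,k'\) this holds iff \(k'=k\). The true \(k\) is therefore accepted with probability one, and any wrong \(k'\) is rejected with constant probability. Swapping the sign condition gives the complement, so positivity is in \(\MA\cap\mathsf{coMA}\), and relativizing \(\MA\subseteq\PP\) gives the last containment. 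I also need not verify squarefreeness of \(k'\): a nonsquarefree \(k'\) still makes \(kk'\) a nonsquare unless \(k'=ks^2\), and then \(\sgn k'=\sgn k\) anyway.

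\emph{Kernel recovery in \(\mathsf{FZPP}^{\NP}\).} Sample \(s=B+10\) coprime moduli \(q_1,\ldots,q_s\) (coprimality with \(A\) is checkable from \(A\bmod q_i\)), and record \(\epsilon_i=\left(\frac A{q_i}\right)\). Call a candidate \emph{consistent} if it is squarefree, certified by its factorization, has magnitude below \(2^B\), and satisfies \(\left(\frac{k'}{q_i}\right)=\epsilon_i\) for all \(i\). The true \(k\) is always consistent. Each \(k'\ne k\) survives with probability about \(2^{-s}\), so a union bound over at most \(2^{B+1}\) candidates makes the consistent set a singleton with constant probability. Using prefix search, NP queries find a consistent candidate and then ask whether a second, different one exists. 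If the answer is no, the singleton must be \(k\), so we output it; otherwise we resample. This gives zero error.

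\emph{Kernel recovery in \(\FP^{\PP}\).} For each candidate \(k'\), form the signed sum
\[
 S(k')=\sum_{q\in\mathcal Q}\left(\frac Aq\right)\left(\frac{k'}q\right)
\]
over a fixed deterministic interval \(\mathcal Q\). This is a \(\GapP\) function of \(k'\). For \(k'=ks^2\) it counts the \(q\) coprime to \(As\); for other \(k'\) it is a nonprincipal character sum twisted by the coprimality sieve, which I would bound by Pólya--Vinogradov combined with the same sieve. The kernel is the candidate of least magnitude with large \(S\). To avoid putting a threshold inside a count, I would use power dominance: for a suitable polynomial \(r\), a single \(\GapP\) sum \(\sum_{|k'|\le t,\ \pm k'>0}S(k')^{2r}\) exceeds a fixed threshold iff some large-\(S\) candidate lies in the range. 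Binary search on \(t\) and on the sign then uses only \(\PP\) queries.

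The main obstacle is the equidistribution estimate, in both the randomized and the deterministic forms. It must be uniform in the unknown \(k\) and must tolerate conditioning on coprimality with the huge, unfactored \(u\). That estimate is the step I would prove first.
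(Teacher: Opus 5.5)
Your square test and your \(\MA\cap\mathsf{coMA}\) positivity protocol are essentially the paper's. The paper proves exactly your one-sided estimate, in the form \(\Pr_{1\le q\le T}[q\text{ odd},\ (A/q)=-1]\ge1/(512b^3)\) with \(b=16(\lceil\log_2(H+2)\rceil+B+12)\). It uses Bonferroni truncation over the small primes dividing \(A\) but not \(8|k|\) and a union bound over the large ones, and its Merlin also sends an unchecked \(1\le d<2^B\). Three points need repair.

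\emph{Zero.} On the \(\mathsf{coMA}\) side Arthur must not insist on \(A\ne0\), because \(A=0\) lies in the complement. Either drop that check there (\(0\cdot k'\) is a square), or accept a zero tag verified by the \(\mathsf{coRP}\) identity test, as the paper does. In both recovery procedures \(A=0\) must be detected first: no \(q>1\) is then coprime to \(A\), so your sampler never halts. The paper uses one \(\NP\) query for some \(2\le r\le2(H+1)\) with \(A\bmod r\ne0\).

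\emph{The \(\mathsf{FZPP}^{\NP}\) parameter.} Your one-sided estimate gives each modulus only a \(1/\poly(b)\) chance of catching a wrong \(k'\). With \(s=B+10\) moduli, the union bound over \(2^{B+1}\) candidates therefore fails. ``About \(2^{-s}\)'' would need equidistribution with bias \(O(1/B)\), which you have not set out to prove. Taking \(\Theta(b^3B)\) moduli, as the paper does (\(1024(b')^3(B+2)\), with \(b'\) computed from \(2B\) and \(H+B\)), needs no new estimate. With that change, your \(\NP\) query for a second consistent squarefree candidate is a valid zero-error alternative to the paper's route. The paper instead takes the smallest surviving \(d\) and then makes an exact \(\NP\) check for a witness \(q\) with \((\sigma Ad/q)=-1\).

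\emph{The \(\FP^{\PP}\) step.} Power dominance with \(S(k')^{2r}\) needs the two-sided bound \(|S(k')|\le\epsilon G\). The one-sided estimate does not exclude \(S(k')\approx-G\), and its even power would match the true kernel's. You can avoid any cancellation estimate by summing \((S(k')+G)^{2r}\) instead, where \(G=S(k)\) is the \(\SharpP\) count of odd \(q\in\mathcal Q\) coprime to \(A\). P\'olya--Vinogradov is not the relevant tool anyway, since \(|\mathcal Q|\) vastly exceeds the conductor. Let \(N_-\) count the \(q\) coprime to \(A\) with \((kk'/q)=-1\). A wrong candidate then has \(0\le S(k')+G\le2(G-N_-)\). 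The density bound applied to \(Ak'\) gives \(N_-\ge|\mathcal Q|/\poly(b)\), so \(2r=\Theta(b^3B)\) separates the cases.

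The paper's route is lighter still. Each bit of the signed kernel, and its negation, has an \(\MA\) certificate: a squarefree \((\sigma,d)\) with its sorted prime factorization, a nonzero modulus for \(A\), and the \(\mathsf{coRP}\) square test on \(\sigma Ad\). Hence each bit lies in \(\MA\cap\mathsf{coMA}\subseteq\PP\). One \(\PP\) query per bit then recovers the kernel, reusing only the square test and no character-sum bound beyond it.
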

\begin{proof}
We give the quantitative square test first. Put
\[
 h=\lceil\log_2(H+2)\rceil,\quad
 b=16(h+B+12),\quad z=2^b,\quad
 s=40(1+\lceil\log_2b\rceil)+1,\quad T=z^{2s+10}.
\]
For a nonsquare nonzero \(A\), the Jacobi character
\(\chi(q)=(k/q)\) is nonprincipal on the units modulo \(C=8|k|\).
For negative \(k\), its value at \(C-1\) is \(-1\). For positive
squarefree \(k>1\), quadratic reciprocity and CRT give a nonresidue
class; the case \(k=2\) uses \(q=5\pmod8\).
Exactly \(\varphi(C)/2\) unit classes therefore have value \(-1\).

Primes in the square part can erase samples by making the Jacobi
symbol zero. Let \(\mathcal P\) consist of the primes \(p\le z\)
dividing \(A\) but not \(C\), and put
\[
 \rho=\frac{\varphi(C)}{2C},\qquad
 V=\prod_{p\in\mathcal P}(1-1/p),\qquad
 \lambda=\sum_{p\in\mathcal P}1/p.
\]
The elementary dyadic binomial estimate gives
\(\lambda\le2(1+\log b)\), and
\[
 \rho V
 =\frac14\prod_{\substack{p\mid A,\ p\le z\\p\ {\rm odd}}}(1-1/p)
 \ge \frac1{128b^3}.
\]
Here all primes dividing \(C\) are below \(z\). For the product
bound, use \(\log(1-1/p)\ge-1/p-1/p^2\) for odd primes and
\(\sum_{p\ge3}p^{-2}<1/2\).

For a product \(d\) of primes in \(\mathcal P\), CRT counts negative
character classes divisible by \(d\) in \([1,T]\) with density
\(\rho/d\) and additive error at most \(C\).
Odd-order Bonferroni, truncated at order \(s\), gives the lower
bound \(\rho V\), less three errors:
\[
 \frac{\rho\lambda^{s+1}}{(s+1)!},\qquad
 \frac{2Cz^s}{T},\qquad
 \frac H z+\frac H T.
\]
The first is the truncation tail, the second bounds the rounding
errors of the at most \(2z^s\) subsets, and the third removes
divisibility by primes greater than \(z\). Each is at most
\(1/(512b^3)\) for the displayed parameters. Consequently
\begin{equation}\label{push8:jacobi-density}
 \Pr_{1\le q\le T}\bigl[q\text{ odd},\ (A/q)=-1\bigr]
 \ge\frac1{512b^3}.
\end{equation}
All sampled \(q\) have polynomially many bits. Evaluate \(A\bmod q\)
and its Jacobi symbol in polynomial time. Repeating \(1024b^3\)
times rejects every nonsquare with probability greater than \(2/3\);
a square, including zero, is never rejected.

For positivity Merlin guesses \(1\le d<2^B\) and a modulus
\(2\le r\le2(H+1)\) at which \(A\) is nonzero. Arthur checks that remainder and applies
the square test to \(Ad\). Every guessed \(d\), including a false
one, satisfies
\[
 |\operatorname{kernel}(Ad)|<2^{2B},\qquad |Ad|<2^{H+B}.
\]
If \(A>0\), choosing \(d=k\) gives a square and perfect acceptance.
If \(A<0\), every \(Ad\) is a nonsquare. Zero is excluded by the
modulus witness. The complement has the analogous protocol for
\(-A\), together with a zero tag checked by the usual coRP
integer-circuit identity test. This proves the MA claims.

For recovery, use the parameters of the square test with
\(B'=2B\) and \(H'=H+B\), and denote the resulting values by
\(b',T'\). A common sample of \(1024(b')^3(B+2)\) moduli eliminates
all false pairs \((\sigma,d)\), \(\sigma\in\{-1,1\}\), \(1\le d<2^B\),
with probability at least \(3/4\), by amplification and a union
bound over fewer than \(2^{B+1}\) pairs. True pairs always survive. NP prefix queries find a surviving
pair with smallest \(d\). A further NP query asks whether any
odd \(1<q\le T'\) has \((\sigma Ad/q)=-1\); by
\eqref{push8:jacobi-density} this detects every false pair exactly.
Restart if it does. Otherwise minimality implies \(d=|k|\), since
all true pairs have \(d=|k|v^2\). The expected number of rounds is
at most \(4/3\).

Zero can be decided by one preliminary NP query for
\(2\le r\le2(H+1)\) with \(A\bmod r\ne0\). Such a modulus exists
for nonzero \(A\), since
\(\operatorname{lcm}(1,\ldots,2(H+1))\ge2^{H+1}\).
Finally, each bit of the unique signed kernel has an MA certificate:
for nonzero \(A\), guess \((\sigma,d)\), certify the squarefreeness
of \(d\) by its short sorted prime factorization, supply a nonzero
modulus for \(A\), check the claimed bit, and use the same square test.
Use a sign bit and \(B\) magnitude bits, with zero encoded by all zeros.
A zero tag is permitted only when claiming bit zero and is checked
by the ordinary coRP zero test.
The opposite bit has the same certificate format. Thus each bit
lies in PP and the kernel is recoverable in \(\mathsf{FP}^{\PP}\).
\end{proof}

The promise has not been proved for unrestricted circuit outputs or
for the norms in the ETR construction. A fixed list of Jacobi moduli
can be hidden by a large square factor: \(A=\pm(\prod_jq_j)^2\) gives
zero residues at all the listed \(q_j\). This is a limitation of such
a preselected list, not a uniform lower bound against deterministic
algorithms that inspect the circuit.

\subsection{Allowing weighted and coefficient-defined certificates}
The four-square target in Section~\ref{push7:sec} can be relaxed.
Permit a polynomial-time generator, on a polynomial-length index
\(i\), to specify coefficient-defined integer polynomials \(P_i\)
and scalar weights \(w_i\). A certificate asserts
\[
 c^2(J-T)-\sum_i w_iP_i^2\in\sqrt I,\qquad c\ne0,\quad w_i\ge0.
\]
All circuits, coefficient indices, and the generator must be fixed
before verifier randomness; dependence on the sampled prime would
not certify a rational identity.

For the radical identity, expand
\(\sum_\alpha\Tr(X^\alpha F)^2\) with
\(F=c^2(J-T)-\sum_iw_iP_i^2\).
There are only two square indices in each summand. Four independent
copies of the coefficient variables represent the two squared
polynomials; the tensor residue formula adds the trace factors.
The resulting coefficient sums have one raw modular counting
representation, so their zero test remains in
\(\mathsf{coRP}^{\PP}\).

The weights require a separate sign check. Amplify the
\(\BPP^{\PP}\) coefficient-sign algorithm enough that a union
bound covers every index, using one shared random tape. For a fixed
tape, existence of a weight declared negative is an
\(\NP^{\PP}\) predicate. Consequently a supplied certificate can
be verified in \(\BPP^{\NP^{\PP}}\).
If every NO instance had such a certificate of polynomial
description size, this would give
\[
 \ETR\subseteq\mathsf{coMA}^{\NP^{\PP}}\subseteq\Clevel3.
\]
Alternatively, the deterministic
\(\mathsf P^{\NP^{\PP}}\) coefficient-sign algorithm places
universal weight validity in \(\Pi_2^{\PP}\). Quantifying the
certificate yields the further conditional bound
\(\ETR\subseteq\Pi_3^{\PP}\subseteq\PH^{\PP}\).

The description-size hypothesis is still unproved. The effective
rational finite-set Positivstellensatz of Baldi, Krick and Mourrain
\cite{BKM25} supplies dense degree and height bounds, but its Gram
matrix construction, rounding, and factorization have dimension
polynomial in the full quotient dimension. They do not provide the
short coefficient generator required here.

\section{Fixed scalar denominators for sums of squares}\label{push9:sec}

The certificate approach leaves two different representation questions:
how to describe a common scalar denominator, and how to describe the
polynomials being squared. We can settle the first question without
solving the second. In every reduced finite integral algebra, an element
positive at all real points has a sum-of-squares representation with a
prescribed power of two as denominator. Its exponent is polynomial in
the dense algebra parameters, hence has polynomial binary length in our
critical-point construction. The number of squares and their dense
coefficient heights have the same type of bound. An odd-denominator
refinement uses the special reduction modulo two of the critical algebra.

These are existence and height statements. They do not provide short
circuits for the numerators. We make that distinction explicit before
applying the coefficient-zero verifier.

\subsection{A quantitative dyadic representation}

Let $\mathcal A$ be a commutative finite free $\Z$-algebra of rank $D$,
with integral basis $e_0=1,e_1,\ldots,e_{D-1}$. Suppose
$A=\mathcal A\otimes\Q$ is reduced. Write
\[
 e_i e_j=\sum_k m_{ij}^k e_k,\qquad
 h=\sum_k h_k e_k,\qquad
 |m_{ij}^k|\le B,\quad |h_k|\le C,
\]
where $B,C\ge1$ are integers. Define
\begin{equation}\label{push9:parameters}
 \begin{split}
 W&=\lceil\log_2B\rceil+\lceil\log_2C\rceil
       +2\lceil\log_2(D+1)\rceil+4,\\
 K&=10(D+1)W,\qquad L=40(D+1)W.
 \end{split}
\end{equation}
Larger integer upper bounds for the logarithms give equally valid
parameters. We use generous constants so that all rounding steps share
one explicit denominator.

\begin{theorem}\label{push9:dyadic}
If $h$ is strictly positive at every real point of $A$, there are
$s\le2D^2+3D$ elements $P_1,\ldots,P_s\in\mathcal A$ satisfying
\begin{equation}\label{push9:dyadic-identity}
 2^{2L}h=\sum_{j=1}^s P_j^2.
\end{equation}
Every coefficient of every $P_j$ in the integral basis has absolute
value at most $2^{100(D+1)W}$. Values of $h$ at nonreal points may be
zero; no unit hypothesis is required.
\end{theorem}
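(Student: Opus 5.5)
Because \(A\) is reduced, \(A\otimes\R\simeq\R^r\times\C^c\). Positivity of \(h\) at the \(r\) real points gives a real square root on those factors. On each complex factor, every element of \(\C\) is a square. So \(h=g^2\) for some \(g\in A\otimes\R\), and the whole difficulty is to approximate \(g\) by elements with dyadic coordinates and absorb the error with a positive-definite correction.

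I will work in the trace form. Let \(\langle a,b\rangle=\Tr(m_{ab})\), an integral nondegenerate form since \(A\) is reduced. Its discriminant is a nonzero integer with height bounded through \(B\) and \(D\). I will first obtain the quantitative estimates. By Lemma~\ref{lem:algebraic-gap}-type arguments applied to the characteristic polynomial of \(m_h\), the values \(h(z)\) at real points lie in \([2^{-O(DW)},2^{O(W)}]\). The separation of the points and the conditioning of the evaluation isomorphism \(\mathcal A\otimes\R\to\R^r\times\C^c\) are controlled by the discriminant and by Cauchy bounds on the coordinates, again in terms of \(2^{O(DW)}\). The parameter \(K\) is meant to dominate all these exponents.

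The plan for the identity is as follows. I will handle the real points and the complex points separately. At real points I will take \(g\) as above. At complex points \(h(z)\) may vanish, so there I will instead write \(h=\Re(\cdot)\) contributions via a splitting \(h=g^2+(\text{correction})\), with the correction chosen as a sum of squares of the form \((e_i+\text{small})^2\).

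\begin{enumerate}
\item Round \(2^Lg\) coordinatewise to an integral element \(P_0\). Then \(2^{2L}h-P_0^2=E\) has coordinates \(O(2^{L+O(W)})\), which is tiny relative to \(2^{2L}\).
\item Express \(E\) plus a large positive multiple \(2^{2L-K}\cdot\sum_i e_i^2\)-type term as a sum of squares. The element \(\sum_i e_i^2\) is totally positive at real points and nonneg at complex ones only in the Hermitian sense. I therefore replace it by elements of the form \((e_i\pm e_j)^2\) and use the polarization \(4e_ie_j=(e_i+e_j)^2-(e_i-e_j)^2\). This lets every integral element \(E\) with small coefficients be written as a difference. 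I then compensate negative parts by subtracting from the main square, using \(2^{2L}h-\epsilon\,S\) still having a square root for small \(\epsilon\). This gives at most \(D^2\) polarization squares plus \(O(D)\) diagonal squares, matching \(s\le2D^2+3D\).
\item Finally, choose the square root of \(2^{2L}h-2^{2L-K}S'\), where \(S'\) is the fixed positive combination. Re-round it, and show that the new rounding error is absorbed because \(2^{2L-K}\) exceeds the error \(2^{L+O(DW)}\). This is where \(L=4K\) is used.
\end{enumerate}

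The coefficient bound \(2^{100(D+1)W}\) then follows from \(|g|\le2^{O(W)}\) at all points together with the inverse evaluation map of norm \(2^{O(DW)}\).

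The main obstacle is step 2. The correction terms must be genuine squares of integral elements. At the same time, the compensating element \(S'\) must be simultaneously positive at real points and must keep a square root at complex points, including points where \(h=0\). I would first try taking \(S'=\sum_i e_i^2+\sum_{i<j}((e_i+e_j)^2+(e_i-e_j)^2)\)-type combinations, rescaled by a power of two. I would verify directly, via the explicit square-root construction on \(\R^r\times\C^c\), that the perturbed \(h\) has a square root whose coordinates remain controlled.
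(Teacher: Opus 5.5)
The missing step is the passage from weighted to unweighted squares. Grant your absorption step. What it produces is $2^{2L}h=P_0^2+\sum_q n_q s_q^2$ with positive \emph{weights} $n_q$, and under polarization these weights lie only in $\tfrac14\Z$. The theorem, however, asserts an unweighted identity $2^{2L}h=\sum_j P_j^2$ with $P_j\in\mathcal A$. Splitting $n_qs_q^2$ into $n_q$ copies of $s_q^2$ would make $s$ exponential.

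You need Lagrange's four-square theorem on each integer weight, after reserving a factor $4$ for integrality: run the construction on $2^{2L-2}h$ and multiply the resulting identity by $4$. This is exactly how the paper converts its diagonally dominant residual, and it is what controls $s$. Your tally of ``$D^2$ polarization squares plus $O(D)$ diagonal squares'' counts weighted terms. Four-squaring the $D^2$ terms of the $S'$ you propose gives up to $4D^2+1$ squares, which exceeds $2D^2+3D$ once $D\ge2$. The paper stays within the bound by keeping only one of $u_i\pm u_j$ per pair, chosen by the sign of the residual entry.

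Apart from this, your route is viable and genuinely different from the paper's. You take a single real square root of $2^{2L}h-2^{2L-K}S'$ in $A\otimes\R$, round it, and absorb the integral rounding error into $2^{2L-K}S'$. The paper instead builds a positive definite Gram matrix, padding each complex block by $(M_0+1)I_2$, which the Gram map does not see. It then rounds that matrix inside the affine space of Gram representations via $\mathcal R_2$ and takes a matrix square root of $G_1-\epsilon I$.

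The obstacle you single out is not an obstacle. Every complex number, including $0$, is a square. So $2^{2L}h-2^{2L-K}S'$ has a square root as soon as $2^{-K}S'(x)<h(x)$ at the real points, and this follows from $h(x)\ge M_0^{-(D-1)}$ and $K=10(D+1)W$.

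What step~2 actually needs is a bounded right inverse of the multiplication map, and $e_0=1$ supplies it:
\[
 E=E_0\cdot1^2+\sum_{k\ge1}\frac{E_k}{4}\Bigl((1+e_k)^2-(1-e_k)^2\Bigr).
\]
Take $S'=1+\sum_{k\ge1}\bigl((1+e_k)^2+(1-e_k)^2\bigr)$. Then every weight $2^{2L-K}+E_0$ and $2^{2L-K}\pm E_k/4$ is nonnegative once $2^{2L-K}\ge|E|_{\max}$. This holds because $L=4K$ and $|E|_{\max}\le2^{L+(D+2)W}$; the factor $2^{DW}$ there comes from $\|T^{-1}\|_2$ in Lemma~\ref{push9:interior}.

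Four-squaring then gives $1+4(2D-1)=8D-3\le2D^2+3D$ squares, with coefficients of order $2^{L+(D+1)W}$. The completed argument therefore proves the theorem with fewer squares. It uses only the bounds of Lemma~\ref{push9:interior}, not its interior Gram point or a matrix square root.
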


We prove the theorem by finding a positive definite real Gram matrix,
rounding it inside the exact affine space of Gram representations, and
decomposing the small remaining error. A discriminant bound controls
the precision needed in the first step.

Define the Gram map on symmetric matrices by
\[
 \mathcal L(M)=\sum_{i,j}M_{ij}e_i e_j.
\]
Off-diagonal entries occur twice. A positive semidefinite real matrix
with $\mathcal L(M)=h$ represents $h$ as a sum of squares over
$A\otimes\R$.

\begin{lemma}\label{push9:interior}
Put $U=DB$ and $M_0=DCU$. There is a real symmetric matrix $G$ with
$\mathcal L(G)=h$ and
\[
 \lambda_{\min}(G)\ge\mu:=\frac{M_0^{-(D-1)}}{(DU)^2},
 \qquad \|G\|_2\le N,
\]
where $\mu^{-1}\le2^{(D+2)W}$ and $N\le2^{(2D+4)W}$.
\end{lemma}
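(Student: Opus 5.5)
The plan is to build $G$ explicitly from the evaluation geometry of the reduced algebra $A\otimes\C\simeq\C^D$. Let $z_1,\ldots,z_D$ be its points. Since $A$ is reduced they are distinct and simple. Put $V_{z,i}=e_i(z)$. Then $V$ is invertible, because its squared determinant is the discriminant of the trace form, a nonzero integer. The rows $w_z$ of $V^{-1}$, read as coefficient vectors, are the idempotents $E_z=\sum_i w_{z,i}e_i$. From $E_z=E_z^2$ we get $\mathcal L(w_zw_z^{\mathsf T})=E_z$.

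Complex conjugation permutes the rows of $V$: $\overline V=PV$ with $P$ a permutation matrix fixing exactly the real points. I will split $h$ as $h=\delta e_{\R}+(h-\delta e_{\R})$, where $e_{\R}=\sum_{z\ \mathrm{real}}E_z$ and $\delta=\min_{z\ \mathrm{real}}h(z)>0$.

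\emph{First piece.} Take $G_1=(V^*V)^{-1}$. It is real, since $V^*V=V^{\mathsf T}PV$ and the columns of $V$ are conjugation-closed, and it is positive definite. Its Gram image evaluates at $z$ to
\[
v_z^{\mathsf T}(V^{\mathsf T}PV)^{-1}v_z=(P^{-1})_{zz},
\]
which is $1$ at real points and $0$ at nonreal points. Hence $\mathcal L(G_1)=e_{\R}$. Moreover $\lambda_{\min}(G_1)=\|V\|_2^{-2}\ge(DU)^{-2}$. Here each $|e_i(z)|$ is at most the spectral bound $DB=U$ of its multiplication matrix, so the Frobenius norm of $V$ is at most $DU$.

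\emph{Second piece.} $h-\delta e_{\R}$ is nonnegative at real points and arbitrary at nonreal ones. For a real $z$, I use the term $(h(z)-\delta)\,w_zw_z^{\mathsf T}$, which is real and PSD. For a conjugate pair $z,\bar z$, I choose $g\in\C$ with $g^2=h(z)$ and form the real coefficient vector $u=gw_z+\bar g w_{\bar z}$. Then $u$ represents $gE_z+\bar gE_{\bar z}$, and $\mathcal L(uu^{\mathsf T})=h(z)E_z+\overline{h(z)}E_{\bar z}$, using $E_zE_{\bar z}=0$. Summing these gives a PSD $G_0$ with $\mathcal L(G_0)=h-\delta e_{\R}$. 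Zero values at nonreal points cause no trouble. Then $G=\delta G_1+G_0$ satisfies $\mathcal L(G)=h$ and $\lambda_{\min}(G)\ge\delta(DU)^{-2}$.

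\emph{Bounds.} For the lower bound on $\delta$, note $|h(z)|\le DCU=M_0$. The lowest nonzero coefficient of the integer characteristic polynomial of $m_h$ is, up to sign, the product of the nonzero values $h(z)$, so it has modulus at least one. Bounding the other $D-1$ factors by $M_0$ gives $\delta\ge M_0^{-(D-1)}$, which is exactly $\mu$. The step I expect to need the most care is the norm bound $N$, since it requires controlling $V^{-1}$. Here I will use $|\det V|\ge1$ together with Cramer's rule and Hadamard's inequality: every entry of $V^{-1}$ is at most $(\sqrt D\,U)^{D-1}$. This bounds $\|G_1\|_2$ by $D\max|(V^{-1})_{ij}|^2$, and bounds $\|G_0\|_2$ by $\sum_z 4|h(z)|\,\|w_z\|^2$. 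Finally I convert everything through $\log_2U,\log_2M_0\le W$ to get $\mu^{-1}\le2^{(D+2)W}$ and $N\le2^{(2D+4)W}$; this is routine bookkeeping with the generous constants in \eqref{push9:parameters}.
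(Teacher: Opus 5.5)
Your argument is correct and uses the same ingredients as the paper. These are evaluation at the $D$ points of the reduced algebra, a lower bound on the evaluation determinant from the integral discriminant, the estimate $|e_i(z)|\le U$, and the lowest-nonzero-coefficient bound $h\ge M_0^{-(D-1)}$ at real points. The difference is how you assemble $G$.

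The paper passes to real evaluation coordinates $T$ and writes one block-diagonal positive definite $Q_0$. Its block is $h(x)$ at a real point, and the shifted block $(M_0+1)I+\tfrac12\bigl(\begin{smallmatrix}u&v\\ v&-u\end{smallmatrix}\bigr)$ at a conjugate pair with $h=u+iv$. Then $G=T^{-1}Q_0T^{-\mathsf T}$ is real by construction, and no square roots or idempotents are needed.

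You instead split $h=\delta e_{\R}+(h-\delta e_{\R})$. You represent $e_{\R}$ by the canonical positive definite matrix $(V^*V)^{-1}$, and the remainder by rank-one PSD terms built from complex square roots of $h(z)$. In the paper's coordinates your $G$ is $T^{-1}Q'T^{-\mathsf T}$, where the pair blocks become $\tfrac{\delta}{2}I+cc^{\mathsf T}$ with $c=(\operatorname{Re}g,\operatorname{Im}g)$. So the two constructions differ only in the choice of the $2\times 2$ representative.

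Each version has its advantages. Yours makes $\lambda_{\min}(G)\ge\delta/\|V\|_2^2$ transparent, with no large diagonal shift. Working with the complex matrix also gives $|\det V|\ge 1$ directly, instead of the paper's $2^{-q}$. The paper's version is a single explicit formula with no conjugation bookkeeping.

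A few small repairs are needed.
\begin{itemize}
\item The idempotent coefficient vectors are the \emph{columns} of $V^{-1}$, not the rows, since $\sum_i e_i(z)(V^{-1})_{iz'}=\mathbf 1_{z=z'}$.
\item If $A$ has no real points, your $\delta$ is a minimum over the empty set. In that case $e_{\R}=0$, and any positive value, for instance $\delta=M_0^{-(D-1)}$, works.
\item The norm bound should read $\|G_1\|_2=\|V^{-1}\|_2^2\le D^2\max_{i,j}|(V^{-1})_{ij}|^2$, not $D\max|\cdot|^2$.
\item In the final bookkeeping, keep the Hadamard factor inside the per-row estimate: $\log_2(\sqrt D\,U)=\tfrac32\log_2 D+\log_2 B\le W-4$. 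This gives $\log_2 N\le 2DW$. If you bound $\log_2 D$ and $\log_2 U$ separately by $W$, you get roughly $3DW$, which exceeds $(2D+4)W$ once $D$ is large.
\end{itemize}
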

\begin{proof}
Write $A\otimes\R\simeq\R^r\times\C^q$, with $r+2q=D$.
Let $T$ be the real evaluation matrix: its rows evaluate the basis at
the real points and take real and imaginary parts at one point from
each nonreal pair. The matrix of multiplication by any $e_j$ has
entries bounded by $B$, so its eigenvalues have absolute value at
most $U=DB$. Thus every entry of $T$ has absolute value at most $U$,
and $\|T\|_2\le DU$.

Let $E$ be the complex evaluation matrix with all $D$ points included.
The integer $(\det E)^2$ is the nonzero discriminant of the integral
basis. Replacing each pair of real and imaginary rows by the two
conjugate evaluation rows multiplies the determinant in absolute value
by $2$. Consequently
\[
 |\det T|=2^{-q}|\det E|\ge2^{-q}.
\]
For $D>1$, cofactor bounds give
\[
 \|T^{-1}\|_2\le H_*:=D2^q((D-1)U)^{D-1}.
\]
For $D=1$ take $H_*=1$, since the single basis element is $1$.

Every conjugate of $h$ has absolute value at most $M_0$. The product
of its nonzero conjugates, with multiplicity, is up to sign the lowest
nonzero coefficient of the integral characteristic polynomial of
multiplication by $h$. This coefficient has absolute value at least
$1$. Every nonzero conjugate therefore has absolute value at least
$M_0^{-(D-1)}$. This argument also allows zero conjugates; all real
values are nonzero by the positivity assumption.

In real evaluation coordinates, form a block diagonal matrix $Q_0$.
At a real point its block is the positive scalar $h(x)$. At a complex
point where $h(x)=u+iv$, use
\[
 \begin{pmatrix}
 M_0+1+u/2&v/2\\ v/2&M_0+1-u/2
 \end{pmatrix}.
\]
Its eigenvalues are $M_0+1\pm|h(x)|/2$, both at least $1$.
The difference of its diagonal entries is $u$, and twice its
off-diagonal entry is $v$. It therefore represents $h(x)$ under the
complex evaluation $(a,b)\mapsto a+ib$. We obtain
\[
 G=T^{-1}Q_0T^{-T},\qquad \mathcal L(G)=h,
\]
and
\[
 \lambda_{\min}(G)\ge\frac{M_0^{-(D-1)}}{(DU)^2},
 \qquad \|G\|_2\le N:=2(M_0+1)H_*^2.
\]
Equality under all evaluations is equality in the reduced algebra.
Finally, the definitions give
$\log_2M_0,\log_2(DU)\le W-4$ and, with ample slack,
$\log_2H_*\le DW$. These inequalities imply the stated bounds.
\end{proof}

This construction does not compute the real points or their evaluation
matrix succinctly. Its role is to give a controlled interior point of
the Gram spectrahedron, so that the denominator needed for rational
rounding can be fixed in advance.

\begin{proof}[Proof of Theorem~\ref{push9:dyadic}]
The Gram map has a particularly simple half-integral right inverse.
For $v=\sum_i v_i e_i$, let $\mathcal R_2(v)$ have entries
\[
 (\mathcal R_2(v))_{00}=v_0,\qquad
 (\mathcal R_2(v))_{0i}=(\mathcal R_2(v))_{i0}=v_i/2
 \quad(i>0),
\]
and zero elsewhere. Then $\mathcal L\mathcal R_2(v)=v$.
If $|\cdot|_{\max}$ denotes the maximum absolute entry, the projection
$\Pi_2=\mathrm{Id}-\mathcal R_2\mathcal L$ satisfies
\[
 |\Pi_2(M)|_{\max}\le P|M|_{\max},\qquad
 P=1+D^2B\le2^{6W}.
\]

Round the upper-triangular entries of the matrix $G$ from
Lemma~\ref{push9:interior} to multiples of $2^{-K}$, and reflect them
to form a symmetric matrix $G_0$. Correct the affine error exactly:
\[
 G_1=G_0-\mathcal R_2(\mathcal L(G_0)-h).
\]
Its entries belong to $2^{-(K+1)}\Z$, and
\begin{equation}\label{push9:round-error}
 \mathcal L(G_1)=h,\qquad
 \|G_1-G\|_2\le DP2^{-K}/2\le\mu/4.
\end{equation}
Indeed, the error is $\Pi_2(G_0-G)$, and
\eqref{push9:parameters} gives
$2^K\ge2DP\mu^{-1}$. Thus $G_1\succeq3\mu I/4$ and
$\|G_1\|_2\le N+1$.

Set $\epsilon=2^{-K}\le\mu/8$. Choose a real square root matrix
$V$ with $V^TV=G_1-\epsilon I$, and round its entries to multiples
of $2^{-L}$, obtaining $V_0$. Its entrywise rounding error is at most
$2^{-L}/2$, while the entries of $V$ have absolute value at most
$N+1$. The exact residual
\[
 S=G_1-V_0^TV_0
\]
therefore satisfies
\begin{equation}\label{push9:dd-error}
 |S-\epsilon I|_{\max}
 \le D(N+2)2^{-L}\le\frac{\epsilon}{4D}.
\end{equation}
The last inequality follows from
$2^{L-K}\ge4D^2(N+2)$, again using
\eqref{push9:parameters}. In particular, $S$ is strictly diagonally
dominant with positive diagonal.

Let $u_i$ denote the ordinary coordinate columns. With
$d_i=S_{ii}-\sum_{j\ne i}|S_{ij}|>0$, we have the exact decomposition
\begin{equation}\label{push9:dd-decomposition}
 S=\sum_i d_i u_iu_i^T+
   \sum_{i<j}|S_{ij}|
   (u_i+\sgn(S_{ij})u_j)(u_i+\sgn(S_{ij})u_j)^T.
\end{equation}
Zero off-diagonal weights can be omitted. All weights are nonnegative
and have denominator dividing $2^{2L}$, since $K+1\le2L$.
Lagrange's four-square theorem, applied to each integer numerator,
writes each weight as four squares of rationals with denominator
$2^L$. Together with the $D$ rows of $V_0$, this decomposes $G_1$
into at most
\[
 D+4\left(D+\frac{D(D-1)}2\right)=2D^2+3D
\]
rational vector squares. Their denominators divide $2^L$.
Apply $\mathcal L$ and clear that denominator to obtain
\eqref{push9:dyadic-identity}.

The integral numerator vectors from $V_0$ have entries bounded by
$2^L(N+2)$. The vectors coming from
\eqref{push9:dd-decomposition} have at most two nonzero entries and
obey the same bound: every weight is at most $2\epsilon$, and each
integer in its four-square representation is at most the square root
of its numerator. The inequalities above imply
\[
 2^L(N+2)\le2^{100(D+1)W},
\]
which proves the coefficient bound.
\end{proof}

\subsection{An odd denominator when the algebra is étale at two}

In the critical algebra, the Gram map has an integral right inverse.
This lets us replace every dyadic grid by a triadic one and keep the
denominator odd.

\begin{proposition}\label{push9:triadic}
If, in addition, $\mathcal A/2\mathcal A$ is étale over $\F_2$,
Theorem~\ref{push9:dyadic} holds with $3^{2L}$ in place of $2^{2L}$,
with the same square count and coefficient bound.
\end{proposition}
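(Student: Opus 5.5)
The plan is to rerun the proof of Theorem~\ref{push9:dyadic} verbatim, replacing the grid $2^{-K}\Z$ by $3^{-K}\Z$ and $2^{-L}\Z$ by $3^{-L}\Z$. The only place where the dyadic proof used the prime two essentially is the half-integral right inverse $\mathcal R_2$ of the Gram map: the exact affine correction $G_1=G_0-\mathcal R_2(\mathcal L(G_0)-h)$ introduced a factor $1/2$ into the entries, forcing denominator $2^{K+1}$. So the key step is to build an \emph{integral} right inverse $\mathcal R$ of $\mathcal L$ with a projection bound comparable to $P\le2^{6W}$. Everything else in the proof is insensitive to the base: Lemma~\ref{push9:interior} gives the same $G$, $\mu$, $N$; the diagonal-dominance decomposition \eqref{push9:dd-decomposition} and Lagrange's theorem work for any integer numerators.

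\emph{Constructing $\mathcal R$.} The image $\mathcal L(\mathrm{Sym}_D(\Z))$ is the $\Z$-span of the squares $e_i^2$ together with the elements $2e_ie_j$ for $i<j$. Since $e_0=1$, it contains $2e_j$ for every $j$, hence all of $2\mathcal A$. Modulo two, the algebra $\bar A=\mathcal A/2\mathcal A$ is étale, that is, a product of finite fields of characteristic two. There the Frobenius $x\mapsto x^2$ is additive and injective, hence bijective. The images $\bar e_i^2$ of a basis therefore again form an $\F_2$-basis of $\bar A$. It follows that the span of the $e_i^2$ plus $2\mathcal A$ is all of $\mathcal A$, so $\mathcal L$ is surjective onto the lattice. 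Explicitly, let $\Phi$ be the $0/1$ matrix of the Frobenius in the basis $\bar e_k$, and take $c\in\{0,1\}^D$ from $\Phi^{-1}\bmod2$. For each $k$, the element $e_k-\sum_ic_ie_i^2$ equals $2w$ with $w\in\mathcal A$ and coefficients of $w$ bounded by about $DB$. Realize $2w$ by the off-diagonal entries $M_{0j}=w_j$ together with $M_{00}=w_0$, using $2w_0=w_0+w_0$ adjusted on the diagonal. Extending linearly gives $\mathcal R$ with integer entries, $\mathcal L\mathcal R=\mathrm{Id}$, and $|\Pi(M)|_{\max}\le P'|M|_{\max}$, where $P'\le1+D^2\cdot O(DB)$. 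This is at most $2^{6W}$ after using the additive slack $+4$ and the $2\lceil\log_2(D+1)\rceil$ in $W$. If a little more is needed, the generous constants in $K=10(D+1)W$ absorb a slightly larger $P'$.

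\emph{Rerunning the rounding.} Round $G$ to $3^{-K}\Z$ and correct by $\mathcal R$. Now $G_1$ has entries in $3^{-K}\Z$, and \eqref{push9:round-error} holds because $3^{K}\ge2^K\ge2DP'\mu^{-1}$. Take $\epsilon=3^{-K}$ and round the square root $V$ to $3^{-L}\Z$. The residual $S$ then satisfies \eqref{push9:dd-error}, because $3^{L-K}\ge2^{L-K}$. The weights $d_i$ and $|S_{ij}|$ lie in $3^{-2L}\Z$ (using $K\le2L$, with no extra factor two). Lagrange's theorem gives four squares with denominator $3^L$, and the square count $2D^2+3D$ is unchanged. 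For the coefficient bound, numerators are at most $3^L(N+2)\le2^{1.59\cdot40(D+1)W+(2D+4)W+2}$, which is below $2^{100(D+1)W}$. Clearing $3^{2L}$ yields the triadic identity.

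The main obstacle I expect is the quantitative control of $\mathcal R$. One must check that inverting the Frobenius matrix modulo two and lifting introduces coefficients only of size $\poly(D)B$, so that $P'$ fits the already-fixed $K$. This needs care because the lift of $\Phi^{-1}$ produces integer combinations whose error term $w$ must be bounded using $|m_{ij}^k|\le B$. If that bound runs slightly over, I would simply note that the proof only needs $3^K\ge2DP'\mu^{-1}$, and the factor $3^K/2^K=2^{0.58K}$ supplies ample extra room.
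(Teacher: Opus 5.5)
Your proposal is correct and follows the paper's proof. The paper also obtains an integral right inverse $\mathcal R_3$ of $\mathcal L$ from bijectivity of Frobenius on $\mathcal A/2\mathcal A$: it realizes a $0/1$ Frobenius preimage $a_k$ of $e_k$ by $\mathbf a_k\mathbf a_k^T$ (you use the equivalent diagonal matrix $\operatorname{diag}(c)$, via additivity of Frobenius) and the even remainder by a star matrix, and then reruns the rounding on the grids $3^{-K}$ and $3^{-L}$ as you describe. Two harmless slips: the $(0,0)$ star entry must be $2w_0$, not $w_0$, and your projection constant is really $1+O(D^4B^2)$ rather than $1+O(D^3B)$ (the paper records $1+3D^5B^2$), which is still well below $2^{6W}$.
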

\begin{proof}
Frobenius is a bijective $\F_2$-linear map on
$\mathcal A/2\mathcal A$. For each $e_k$, choose its Frobenius
preimage $a_k$ with all integral basis coefficients in $\{0,1\}$.
The coefficients of $r_k=e_k-a_k^2$ are even. Define
$\operatorname{Star}(r_k)$ to have entry $(r_k)_0$ at $(0,0)$,
entries $(r_k)_i/2$ at $(0,i)$ and $(i,0)$ for $i>0$, and zero
elsewhere. If $\mathbf a_k$ is the coefficient column of $a_k$, then
\[
 B_k=\mathbf a_k\mathbf a_k^T+\operatorname{Star}(r_k),
 \qquad \mathcal L(B_k)=e_k.
\]
All entries of $B_k$ are integers of absolute value at most $3D^2B$.
Thus
\[
 \mathcal R_3\Bigl(\sum_kv_ke_k\Bigr)=\sum_kv_kB_k
\]
is an integral right inverse, and
$\Pi_3=\mathrm{Id}-\mathcal R_3\mathcal L$ has maximum-entry norm
at most $1+3D^5B^2\le2^{6W}$.

Round $G$ on the grid $3^{-K}$ and correct using $\mathcal R_3$.
Now $G_1$ has entries in $3^{-K}\Z$. Take
$\epsilon=3^{-K}$ and round the square root on the grid $3^{-L}$.
The bounds \eqref{push9:round-error} and \eqref{push9:dd-error}
continue to hold, since the required inequalities have the form
$3^K\ge2DP\mu^{-1}$ and $3^{L-K}\ge4D^2(N+2)$.
The residual weights have denominator dividing $3^{2L}$.
Four-square representations of their integer numerators give the
common denominator $3^L$. Finally,
$3^L(N+2)<2^{100(D+1)W}$, using $\log_2 3<8/5$.
\end{proof}

The oddness is useful for compatibility with the local arithmetic of
Section~\ref{push8:sec}, but is unnecessary for the certificate
verifier: the dyadic scalar is already manifestly nonzero. Neither
proof bounds the circuit size of its rounded matrix or of the chosen
four-square decompositions.

\subsection{What is short in the critical-point application}

The exponent $L$ is genuinely short in binary for the algebras used
here. For fixed $d$, suppose
$G_i=X_i^d+r_i$ with $\deg r_i<d$. Let $s_0$ bound the number of
monomials in each $r_i$, and let their coefficient bit heights be at
most $H$. A product of two standard basis monomials has total degree
at most $2m(d-1)$. Every reduction step decreases total degree, so
the structure constants admit the crude bound
\[
 B\le\bigl(\max(2,s_0)2^H\bigr)^{2m(d-1)+1}.
\]
If an observable has degree $e$, at most $t_0$ monomials, and
coefficient bit heights at most $H_h$, its normal-form coefficients
are bounded by
\[
 C\le\max(2,t_0)2^{H_h}
       \bigl(\max(2,s_0)2^H\bigr)^{e+1}.
\]
These estimates just count reduction branches and multiply their
coefficient bounds. We use the logarithms of their right sides in
\eqref{push9:parameters}; there is no need to write the integers
$B$ and $C$ explicitly.

For our weighted critical system, $d=5$, $D=5^m$, and $h=J-T$ has
degree six. All logarithmic bounds above have singly exponential
values and polynomial binary length. Consequently $L$ is computable
in binary in polynomial time, and $2^L$ and $3^L$ have polynomial-size
integer SLPs. On every infeasible instance,
Theorem~\ref{push9:dyadic} gives at most exponentially many integral
numerators, of standard degree at most $4m$, with singly exponential
coefficient bit heights. The scalar is fixed by the input bounds,
rather than supplied as an unknown part of the witness.

What is missing is a coherent short description of those numerators.
A dense algorithm polynomial in $D$ does not give polynomial-time
access in $\log D$ to its output coefficients. In particular, the
explicit linear section of the Gram map does not make evaluation of
the real CRT matrix, matrix square roots, rounding, or integer
four-square decompositions a succinct computation.

\subsection{Verifying a compressed family of squares}

We can allow a more flexible description than an ordinary SLP for
each numerator. An integer polynomial $P_i(X)$ is
\emph{coefficient-defined} if a short integer polynomial circuit
$F_i(X,Y)$ and a binary exponent vector $\alpha_i$ specify
\[
 P_i(X)=[Y^{\alpha_i}]F_i(X,Y).
\]
A family is coherent if one polynomial-size description generates
these data in polynomial time from the polynomial-length index $i$.
The runtime and program sizes have fixed polynomial bounds in the
original input and certificate lengths. The syntactic degrees and
coefficient bit heights have singly exponential upper bounds with
polynomial-length binary descriptions. The generated integer circuits are fixed before verifier
randomness; they cannot depend on the sampled prime.

\begin{proposition}\label{push9:verifier}
In a finite pure-monic algebra of Theorem~\ref{push7:membership}, a
supplied coherent coefficient-defined family permits a
$\mathsf{coRP}^{\PP}$ test of
\begin{equation}\label{push9:certificate}
 c^2h=\sum_{i\in\{0,1\}^r}P_i^2\quad\text{in }A,
\end{equation}
where $r$ is polynomial and $c$ is a prescribed short integer SLP,
such as $2^L$ or $3^L$.
\end{proposition}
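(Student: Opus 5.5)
The plan is to reduce the identity \eqref{push9:certificate} to the vanishing of a single nonnegative integer, exactly as in the proof of Theorem~\ref{push7:membership}. The algebra $A$ is reduced or not; in either case we test ordinary membership of $F=c^2h-\sum_iP_i^2$ in the ideal $(G_1,\ldots,G_m)$. Use the perfect pairing $\lambda(ab)$ and
\[
 U(F)=\sum_{0\le\alpha_j<d_j}\lambda(FX^\alpha)^2,
\]
which is zero exactly when $F=0$ in $A$. We then show that $U(F)$ is one specified coefficient of a polynomial-size integer circuit, and apply Lemma~\ref{push7:coefficient-zero}.

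First express $F$ itself as a coefficient. Introduce a fresh block of index variables $Z=(Z_1,\ldots,Z_r)$ and the geometric selector $G_1(Z_k)=1+Z_k$ from the proof of Theorem~\ref{thm:rational-counting}, so that a coefficient extraction performs the Boolean sum over $i\in\{0,1\}^r$. The coherent generator is a polynomial-time map from $i$ to $(F_i,\alpha_i)$. Arithmetize that generator on Boolean inputs into a single circuit $\mathcal F(X,Y,Z)$ whose specialization at $Z=i$ computes the data of $F_i$. The shift by $\alpha_i$ needs more care. I would pad every $F_i$ to a common exponent bound $E$, which has a short binary form. The $\alpha_i$-dependence is absorbed by multiplying with $\prod_j Y_j^{E-\alpha_{i,j}}$, written through bit-controlled repeated squaring of $Y_j$. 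The extraction then always targets the fixed index $E$.

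Each square $P_i^2$ needs two independent copies $Y,Y'$ of the coefficient variables, with the fixed target $(E,E)$. This gives
\[
 \sum_iP_i(X)^2=[Y^EY'^E\mathbf 1_Z]\;\mathcal F(X,Y,Z)\mathcal F(X,Y',Z)\cdots,
\]
where the Boolean-sum extraction uses the selector identity $[Z^1](\cdots)$ per index bit. I would choose multilinear arithmetization so that non-Boolean $Z$ monomials never reach the target coefficient. The term $c^2h$ is an ordinary short circuit.

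Next, $U(F)$ is quadratic in $F$, so it needs four copies of the coefficient variables $(Y^{(1)},\ldots,Y^{(4)})$ and two of the index variables. Apply the residue representation $(***)$ for $\lambda$ in two independent sets of geometric variables, together with the kernel $K(X,X')$ of $(**)$. The cross terms between $c^2h$ and the squares are handled by linearity, or more simply by packing $F$ as a single coefficient with one extra selector variable.

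All degrees are singly exponential with short binary descriptions, so the final target exponent is a short binary vector. Lemma~\ref{push7:coefficient-zero} then places the zero test in $\mathsf{coRP}^{\PP}$. Because every circuit is fixed before the verifier samples its prime, the test certifies the rational identity.

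The main obstacle is the uniform arithmetization of the coherent generator: the variable exponent $\alpha_i$ and the circuit $F_i$ both depend on $i$. Polynomial-time generation has to become a single polynomial-size circuit in Boolean index variables, and its spurious non-Boolean contributions must not leak into the extracted coefficient. The first attempt would be a universal circuit evaluator applied to the generated description, using multilinearization. If that fails, I would instead move the index sum into a raw modular counting representative, with the Fourier extraction done inside the $\PP$ oracle sum as in Lemma~\ref{lem:ring-coefficients}. This works because the $\mathsf{coRP}^{\PP}$ verifier of Lemma~\ref{push7:coefficient-zero} only needs the residue modulo a random prime as a $\SharpP$ count.
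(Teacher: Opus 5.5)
Your overall plan is sound, and your fallback is exactly the paper's proof. The primary route needs one concrete repair, though.

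The selector you quote is not the one in Theorem~\ref{thm:rational-counting}. There, $G_D(Z)=\sum_{r<D}Z^r+2Z^D$ satisfies $[Z^D]\,g(Z)G_D(Z)=g(0)+g(1)$ for every $g$ with $\deg g\le D$. Your $1+Z$ with target $Z^1$ returns $[Z^1]\,g(Z)(1+Z)=g(1)$ already for linear $g$; the linear case needs $1+2Z$.

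More importantly, drop the multilinear arithmetization. A polynomial-size multilinear extension of an arbitrary polynomial-time generator is not available in general. For a polynomial-time predicate $f$ it would let you evaluate $2^r\tilde f(\tfrac12\mathbf 1)=\sum_i f(i)$ modulo short primes, computing $\SharpP$ functions in polynomial time. It is also unnecessary. Arithmetize the generator and a universal circuit evaluator naively, and put $\Phi=Y^{E-\alpha(Z)}\mathcal F$. Let $D$ be a singly exponential bound on every $Z_k$-degree of $\Phi(X,Y,Z)\Phi(X,Y',Z)$, and use $\prod_k G_D(Z_k)$ with target $Z^{D\mathbf 1}$. The identity is then exact for arbitrary polynomials, so nothing from non-Boolean $Z$ can leak. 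Since $\Phi(X,Y,i)=Y^{E-\alpha_i}F_i$ at Boolean $i$, extracting $[Y^EY'^E]$ from the resulting Boolean sum gives $\sum_iP_i^2$.

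With this repair your construction is a correct and genuinely different proof. The whole certificate check becomes one specified-coefficient zero test for a single explicit integer SLP, and Lemma~\ref{push7:coefficient-zero} is used as a black box.

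The paper instead never arithmetizes the generator. It expands $U(F)=\sum_\alpha\lambda(e_\alpha F)^2$ using two family indices and four copies of the coefficient variables. It then applies finite-field orthogonality at the sampled prime to every coefficient extraction, including the residue formula (***). This yields one sum indexed by the two family indices, the basis index $\alpha$, and the Fourier points. Each summand runs the generator and evaluates the generated circuits modulo $p$ in polynomial time, so the Boolean sum over the family is performed by the counting machine, and a $\PP$ oracle retrieves the $\SharpP$ total.

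That route avoids the universal evaluator and the $Z$-degree bookkeeping. It is also the mechanism behind the separate-denominator variant discussed right after the proposition, where inverses of $c_i \bmod p$ are computed inside the raw sum; your single division-free SLP does not directly accommodate that. Your route buys, in exchange, an explicit many-one reduction of the certificate check to coefficient zero testing.
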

\begin{proof}
Let $F$ be the difference of the two sides. The perfect residue
pairing from Theorem~\ref{push7:membership} gives
\[
 F=0\quad\Longleftrightarrow\quad
 U(F):=\sum_\alpha\lambda(e_\alpha F)^2=0.
\]
The expression on the right is a nonnegative integer. Expand the
square before retrieving any modular values. Only two independent
square indices occur in each term. At most four independent copies
of the coefficient variables represent the corresponding squared
polynomials. The two residue factors have the finite polynomial
coefficient formula used in the proof of
Theorem~\ref{push7:membership}.

For a prime larger than every resulting individual degree,
finite-field orthogonality expands all these extractions into one
sum of polynomial-time computable residue weights. Its indices
consist of the two family indices, the basis index, and the
coefficient-extraction points; together they have polynomial binary
length. The signed sum has one raw $\GapP$ representative, or one
can sum nonnegative residues in $\SharpP$. A $\PP$ oracle retrieves
its residue. No canonical counting output is substituted into a
later counting summand.

The integer $U(F)$ has a singly exponential bit-height bound.
The prime sampler and divisibility argument of
Lemma~\ref{push7:coefficient-zero} therefore give a one-sided zero
test. A true identity always passes; a false identity fails except
with bounded probability.
\end{proof}

Separate scalar denominators are also possible without a common
denominator circuit. Suppose the right side of
\eqref{push9:certificate} is instead
$\sum_i P_i^2/c_i^2$, where each $c_i$ is an \emph{ordinary scalar}
integer SLP generated from $i$. Include in the certificate a short
integer $w\ge2$ such that $c_i\bmod w\ne0$ for all $i$.
An $\NP$ query detects any violating index, so this condition is
checked exactly using $\PP$. Such a common $w$ exists whenever all
$c_i$ are nonzero, by applying the height bound to their product.
For a sampled prime, another $\NP$ query detects any vanishing
denominator; that trial is treated as inconclusive. At every other
prime, the inverse scalar weights are polynomial-time computable
inside the same raw sum. The product of all $c_i^2$ is used only for
the integer numerator height bound, not constructed as an SLP.
This gives the same $\mathsf{coRP}^{\PP}$ verifier for the identity
\emph{together with its supplied nonvanishing witness}. Arbitrary
coefficient-defined scalar denominators would not justify this
argument: retrieving their residues before inversion could add a
counting level.

\begin{proposition}[Conditional numerator bound]\label{push9:conditional}
Suppose every infeasible ETR instance has a coherent numerator family
of polynomial description size satisfying
\eqref{push9:certificate} in its original critical algebra, with the
prescribed scalar $c=2^L$ and fixed polynomial bounds on the
generator. Then
\[
 \ETR\subseteq\mathsf{coMA}^{\PP}
       \subseteq\Clevel2\cap\PH^{\PP}.
\]
The same conclusion holds with $c=3^L$.
\end{proposition}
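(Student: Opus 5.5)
The plan is to follow the template of Proposition~\ref{push7:conditional}, with Proposition~\ref{push9:verifier} as the verifier. Infeasibility will be placed in $\MA^{\PP}$, and the stated inclusions will then follow by complementation. Merlin sends the polynomial-size description of the coherent numerator family. Arthur computes, from the input alone, the critical algebra $A$ of Section~\ref{sec:etr}, the observable $h=J_w-T$, and the prescribed scalar $c=2^L$ (or $3^L$). He then runs the $\mathsf{coRP}^{\PP}$ identity test of Proposition~\ref{push9:verifier} on \eqref{push9:certificate}.

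Soundness is the first point to settle. Suppose the instance is feasible. Then by Lemma~\ref{lem:generic} and the discussion after \eqref{eq:objective-sign}, there is a real critical point $x$ with $h(x)<0$. Evaluating any supplied identity $c^2h=\sum_iP_i^2$ at $x$ gives a negative left side and a nonnegative right side. Here $c\neq0$ is manifest, since it is a power of two or three. So no certificate satisfies the identity, and Arthur's one-sided test rejects with bounded error. Completeness is by hypothesis: every infeasible instance has a family of polynomial description size, and a true identity always passes. Amplification by repetition then gives the usual $2/3$ versus $1/3$ gap.

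Next comes the accounting of the parameters. The point to check is that $L$ is computable in binary in polynomial time from the input, as explained in the subsection on what is short in the critical-point application. Consequently $c$ has a polynomial-size SLP fixed before the verifier's randomness. In addition, the syntactic degree and height bounds needed by Proposition~\ref{push9:verifier} must be uniform over all certificates of the fixed polynomial length. This follows from the fixed polynomial bounds on the generator.

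Finally, the class inclusions follow as in Proposition~\ref{push7:conditional}. The relativized inclusions $\MA\subseteq\PP$ and $\MA\subseteq\Sigma_2^{\Pclass}$ give $\MA^{\PP}\subseteq\PP^{\PP}=\Clevel2$ and $\MA^{\PP}\subseteq\Sigma_2^{\PP}$. Since $\Clevel2$ and $\PH^{\PP}$ are closed under complement, we obtain $\ETR\subseteq\mathsf{coMA}^{\PP}\subseteq\Clevel2\cap\PH^{\PP}$.

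The only genuine subtlety, and the main obstacle, is to make sure the verifier's oracle stays at $\PP$. The residue sum in Proposition~\ref{push9:verifier} must be one raw counting expression, not a retrieval nested inside another count. I would verify that the two residue-functional factors, the four coefficient-extraction copies, and the family indices can all be flattened into a single $\GapP$ sum at the sampled prime.
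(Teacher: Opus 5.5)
Your proposal is correct and matches the paper's proof. Merlin supplies the hypothesized description, and Proposition~\ref{push9:verifier} gives a perfectly complete $\mathsf{coRP}^{\PP}$ identity test. Soundness follows by evaluating the identity at a real critical minimum, where $h<0$ and $c\neq0$. The relativized inclusions $\MA\subseteq\PP$ and $\MA\subseteq\Sigma_2^{\Pclass}$, followed by complementation, then give the stated bounds.

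Your extra checks, that $L$ is computable in polynomial time and that the residue sum stays a single raw count, are already covered by Proposition~\ref{push9:verifier} and the discussion preceding it. They need no separate argument.
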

\begin{proof}
On an infeasible instance, Merlin supplies the hypothesized
description. Proposition~\ref{push9:verifier} has perfect completeness.
On a feasible instance, a real critical minimum has $h<0$.
The scalar $c$ is nonzero, so no sum of real squares can equal $c^2h$
there. Every supplied description therefore fails the identity test
with bounded soundness error. This puts infeasibility in
$\MA^{\PP}$. The inclusions $\MA\subseteq\PP$ and
$\MA\subseteq\Sigma_2^{\Pclass}$ relativize; taking complements
gives the displayed bounds.
\end{proof}

The hypothesis in Proposition~\ref{push9:conditional} remains open.
Theorem~\ref{push9:dyadic} removes the unknown scalar denominator and
supplies uniform dense bounds, but does not prove this hypothesis.
Allowing coefficient-defined numerators makes the verifier more
permissive; it does not turn a large table of coefficients into a
short description.

\subsection{Why a positive moment functional is a different target}

A rational positive functional cannot replace the Gram certificate
in general. If a reduced rational algebra is a product of number
fields, every rational linear functional has the form
$\ell(a)=\Tr(ca)$. At a complex embedding with $c=u+iv$, its
quadratic form on squares contributes
\[
 2\operatorname{Re}\bigl((u+iv)(x+iy)^2\bigr)
   =2u(x^2-y^2)-4vxy.
\]
The eigenvalues of this block are $\pm2|u+iv|$. Nonnegativity on
rational vectors implies nonnegativity on the full real vector space
by density, so a complex embedding forces its value of $c$ to vanish.
Injectivity then forces $c$ to vanish on that whole field factor.
Thus a normalized rational positive functional exists only if some
field factor is totally real. For example,
$\Q[x]/(x^3-x-1)$ has a real point but admits no such functional.
This obstruction to a rational functional is compatible with the
positive definite Gram matrices constructed above; the two objects
solve different equations.

Even the trace of a square need not be positive on an infeasible
critical system. In one variable take $F=x^2+1$, so
$G=x^5+(2Q+1)x+1$ and $T=276$. Reducing $h=J-T$ modulo $G$ and
using Newton sums gives
\[
 \Tr_A(h^2)=5(6Q+3-T)^2-16(2Q+1)^3<0\qquad(Q\ge1024),
\]
although $J\ge6Q>T$ on the real line. The nonreal contributions
prevent this trace from being a positive sum over just the real
points. These examples do not rule out approximate moment methods
or algorithms using several traces. They explain why the exact SOS
identity and its still-unproved succinct numerator representation
cannot simply be replaced by a single positive trace assertion.

\subsection{A degree obstruction to positive-coefficient annihilators}

Another possible route is to use the absence of negative real roots
of the characteristic polynomial of $h$. A polynomial positive on the
nonnegative real axis has a positive-coefficient P\'olya multiple after
multiplication by a sufficiently large power of $1+s$. The required
degree, however, can be too large even for a coefficient-defined SLP.
This occurs within our small-coefficient weighted critical family.

\begin{proposition}\label{push9:polya}
For every $m\ge2$, take
\[
 F_m=(X_1^2+1)^2+(X_1+2X_2)^2+\sum_{j=3}^mX_j^2
\]
and the prescribed weighted objective $J$, threshold $T$, and
$Q=A_0 2^{E_0}$. Let $q_h(z)=\Norm_A(z-h)$, where $h=J-T$.
Then $f(s)=(-1)^Dq_h(-s)$ is positive for $s\ge0$, but every nonzero
polynomial with nonnegative coefficients that is a multiple of $f$
has degree greater than $(\pi/3)Q2^m$.
\end{proposition}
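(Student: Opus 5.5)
The plan has two parts: positivity of \(f\) on \([0,\infty)\), which is soft, and the degree bound, which I reduce to the classical sector principle. The sector principle says that a nonzero polynomial \(P(s)=\sum_{k\le n}a_ks^k\) with \(a_k\ge0\) has no zero \(z\) with \(0<|\arg z|<\pi/n\). Indeed, for \(0<\arg z<\pi/n\) every \(a_kz^k\) with \(k\ge1\) lies in the closed upper half-plane, and at least one lies in the open half-plane unless \(P\) is constant. So it suffices to exhibit one nonreal root \(z_*\) of \(f\) with \(0<|\arg z_*|<3/(Q2^m)\); any nonnegative-coefficient multiple of \(f\) then vanishes at \(z_*\) and has degree \(n>\pi/|\arg z_*|>(\pi/3)Q2^m\).

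\textbf{Positivity.} Since \(f(s)=\prod_x(s+h(x))\) over the \(D\) complex critical points, I argue factor by factor. The instance is infeasible because \(F_m\ge(X_1^2+1)^2\ge1\) on \(\R^m\). Hence \(J>T\) at every real point, as in the argument after \eqref{eq:objective}, and each real factor \(s+h(x)\) is positive for \(s\ge0\). For a nonreal point \(x\), Lemma~\ref{lem:generic} gives \(J(x)\ne J(\overline x)=\overline{J(x)}\). Thus \(h(x)\) is nonreal, and the pair contributes \(|s+h(x)|^2>0\).

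\textbf{The root near the real axis.} I would produce \(z_*=-h(x_*)\) at a nonreal critical point \(x_*\) near a complex zero of \(F_m\). Take \(z_0=(i,-i/2,0,\ldots,0)\). Locally \(F_m\) is a sum of squares of \(m\) functions vanishing simply at \(z_0\), with independent differentials \(2X_1\,dX_1\), \(dX_1+2dX_2\), \(dX_j\). So \(z_0\) is a nondegenerate (Morse) zero, with invertible complex Hessian \(\mathcal H\). Write \(W=\sum_iw_i(X_i^6+3(X_i+1)^2)\). The implicit function theorem applied to \(\nabla F_m+(6Q)^{-1}\nabla W=0\) then gives a critical point \(x_*=z_0+O(1/Q)\), and
\[
 J(x_*)=W(z_0)-\tfrac1{12Q}\nabla W(z_0)^{\mathsf T}\mathcal H^{-1}\nabla W(z_0)+O(Q^{-2}).
\]
The key computation is that \(W(z_0)\) is real. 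The first coordinate contributes \(-1+6i\). The second contributes \(2\bigl(-\tfrac1{64}+3(1-i/2)^2\bigr)\), whose imaginary part is \(-6\). The remaining coordinates contribute \(3w_j\). The imaginary parts therefore cancel exactly, and this is evidently why the coefficient \(2\) appears in \(X_1+2X_2\). Moreover \(\operatorname{Re}W(z_0)\le C_0<T\). Hence \(-h(x_*)\) has real part about \(T-W(z_0)\asymp 2^m\) and imaginary part \(\operatorname{Im}(c_m)/(12Q)+O(Q^{-2})\), where \(c_m=\nabla W^{\mathsf T}\mathcal H^{-1}\nabla W\). Dividing gives the argument.

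\textbf{Main obstacle.} The hard step is the explicit constant. I must show \(\operatorname{Im}c_m\ne0\), so that the root is genuinely nonreal (this also follows from positivity and Lemma~\ref{lem:generic}). I also need
\[
 |\operatorname{Im}c_m|/(12Q\,(T-\operatorname{Re}W(z_0)))<3/(Q2^m),
\]
uniformly in \(m\), with the \(O(Q^{-2})\) remainder controlled. Since \(Q=A_02^{E_0}\) is enormous, the remainder is negligible. The computation then reduces to the two-variable block: the coordinates \(j\ge3\) decouple, with \(\mathcal H_{jj}=2\) and real gradients. I would compute \(\mathcal H^{-1}\) on the \((X_1,X_2)\) block in closed form, bound \(\operatorname{Im}c_m\) by a constant times \(w_2^2\), and compare with \(T-\operatorname{Re}W(z_0)\ge 3A_0-C_0\ge 2\cdot91(2^m-1)\). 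That leaves a crude but explicit inequality with ample room.
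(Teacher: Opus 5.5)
Your proposal is correct and follows the paper's proof. It uses the same nondegenerate complex zero $(i,-i/2)$ with the exact cancellation $V_{12}=111/32$, the same exact decoupling of the coordinates $j\ge3$ into real critical points, nonreality and positivity via Lemma~\ref{lem:generic}, and the same sector argument. The paper differs only technically: an explicit contraction on the disc of radius $2/Q$ gives the crude bound $|\operatorname{Im}h|\le480/Q$ directly, so you need neither the second-order coefficient $c_m$ nor an unquantified $O(Q^{-2})$ remainder. Also, the sector lemma gives $n\ge\pi/|\arg z_*|$ rather than a strict inequality; strictness comes from $|\arg z_*|<3/(Q2^m)$.
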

\begin{proof}
Write $x=X_1,y=X_2$, $F_{12}=(x^2+1)^2+(x+2y)^2$, and
$V_{12}=b(x)+2b(y)$, where $b(u)=u^6+3(u+1)^2$.
At $z=(i,-i/2)$, both $F_{12}$ and its gradient vanish, and
\[
 H=\nabla^2F_{12}(z)=
 \begin{pmatrix}-6&4\\4&8\end{pmatrix},\qquad
 H^{-1}=\begin{pmatrix}-1/8&1/16\\1/16&3/32\end{pmatrix}.
\]
Its induced infinity norm is $3/16$. Put $t=1/Q$,
$p(u)=u^5+u+1$, and $P(x,y)=(p(x),2p(y))$. The critical equations
near $z$ are equivalent to
\[
 u=-H^{-1}\bigl((12iu_1^2+4u_1^3,0)+tP(z+u)\bigr)=:\Phi(u).
\]
On $\|u\|_\infty\le r=2t$, for $Q\ge64$, the elementary bounds
\[
 \|P\|_\infty\le6,\quad \|P'\|_\infty\le14,\quad
 |12iu_1^2+4u_1^3|\le16r^2
\]
give
\[
 \|\Phi(u)\|_\infty\le3r^2+9t/8<2t,
 \qquad \|\Phi'\|_\infty\le(51/4)t<1.
\]
There is a unique critical point $z_Q$ in this disc, and it is nonreal.
Also $|F_{12}(z+u)|\le15r^2$ and
$|V_{12}(z+u)-V_{12}(z)|\le60r$. Since $V_{12}(z)=111/32$,
\[
 \left|6QF_{12}(z_Q)+V_{12}(z_Q)-\frac{111}{32}\right|
 \le\frac{480}{Q}.
\]

For every $j\ge3$, choose the unique real minimum $a_j$ of
$J_j(s)=6Qs^2+w_jb(s)$. Its critical equation is
$s^5+(1+2Q/w_j)s+1=0$, with strictly positive derivative, and
$0\le J_j(a_j)\le3w_j$. These choices give a nonreal point of the
full critical algebra. At this point,
\[
 |\operatorname{Im}h|\le480/Q,\qquad
 \operatorname{Re}(-h)
 \ge270\,2^m-8367/32-480/Q\ge200\,2^m.
\]
The fixed objective separates this point from its conjugate, so
$\operatorname{Im}h\ne0$. Hence $\alpha=-h$ satisfies
\[
                 0<|\arg\alpha|<3/(Q2^m).
\]
It is a root of $f$. If a nonzero degree-$d$ polynomial
$\sum_{k=0}^d a_ks^k$, with all $a_k\ge0$, vanishes at $\alpha$,
then $d|\arg\alpha|\ge\pi$. Otherwise rotating its value by
$\exp(-id\arg(\alpha)/2)$ puts every nonzero term in the open
right half-plane. This proves the degree bound.

Finally, $F_m\ge1$ on $\R^m$, so $h>0$ at every real point.
The separating objective is nonreal at every nonreal point.
Real factors and conjugate pairs in
$f(s)=\prod_x(s+h(x))$ are therefore strictly positive for $s\ge0$.
\end{proof}

Both the constant and leading coefficients of $f$ are positive.
Homogeneous P\'olya positivity therefore guarantees an $N$ for which
$(1+s)^Nf(s)$ has positive coefficients, but necessarily
$N+D>(\pi/3)Q2^m$. Since $E_0$ is singly exponential in $m$, this
degree is doubly exponential. An SLP of polynomial size has at most
singly exponential degree, and extracting a coefficient in auxiliary
variables cannot increase its degree in $s$. Thus this particular
certificate route cannot supply short circuits for all our instances.

This is not an SOS lower bound. The same example has the identity
\[
\begin{split}
 h={}&6Q(X_1^2)^2+12QX_1^2+6Q(X_1+2X_2)^2
       +6Q\sum_{j\ge3}X_j^2\\
    &+\sum_iw_i\bigl((X_i^3)^2+3(X_i+1)^2\bigr)+(6Q-T)1^2,
\end{split}
\]
with short positive scalar weights. It even has a short coherent
unweighted family: write
$6Q-T=3A_0\sum_{j=0}^{E_0}2^j$ and express $2^j$ as one or two
copies of $(2^{\lfloor j/2\rfloor})^2$. Repeating the polynomial-bit
integer factors uses polynomial-length indices. The total number of
terms is singly exponential. Thus the example separates this failed
positive-coefficient route from the remaining SOS compression target.

\subsection{Selective valuation promises}
The integer problem admits a more selective arithmetic promise than a
short squarefree kernel. Write \(\oplus\Pclass\) for parity counting,
and \(\mathsf{MP}\) for the class defined by a selected bit of a
\(\SharpP\) function. We use the established inclusion
\(\BPP^{\oplus\Pclass}\subseteq\mathsf{MP}\)
\cite[Proposition 3.3(a)]{GKRST95}; we do not identify parity counting
with \(\PP\).

\begin{proposition}\label{push9:parity}
Let an integer SLP compute an odd integer \(A\). Suppose \(B\) and
an even positive integer \(K\) have polynomial value and are supplied
with the input. Promise only that
\[
 p\equiv3\pmod4,\quad v_p(|A|)\text{ odd}
 \quad\Longrightarrow\quad p\le2^B\ \text{ and }\ v_p(|A|)<K.
\]
Then positivity of \(A\) is in \(\oplus\Pclass\) on this promise.
If the corresponding condition holds for every odd prime, arbitrary
two-adic valuation and \(A=0\) are allowed at the cost of a randomized
many-one reduction to \(\oplus\Pclass\), hence an \(\mathsf{MP}\)
promise algorithm.
\end{proposition}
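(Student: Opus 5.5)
The plan is to determine the sign from the odd part of \(A\) modulo \(4\), together with the parity of a prime-power divisor count. This follows the first half of Proposition~\ref{prop:factor-promises}, with the unary bound on maximal prime powers replaced by the selective promise. Since \(A\) is odd, we can compute \(r=A\bmod4\in\{1,3\}\) directly by modular SLP evaluation. Write
\[
 h_*=\sum_{p\equiv3\ (4)}v_p(|A|).
\]
Then \(|A|\equiv(-1)^{h_*}\pmod4\), and \(A>0\) exactly when \(r\equiv(-1)^{h_*}\pmod 4\). Only \(h_*\bmod2\) matters, so primes of even valuation can be ignored. The promise says that the remaining primes satisfy \(p\le2^B\) and \(v_p<K\).

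The key step is to show that the \(\SharpP\) count
\[
 c=\#\{(p,k):p\text{ prime},\ p\equiv3\pmod4,\ p\le2^B,\ 1\le k\le K,\ p^k\mid A\}
\]
has the same parity as \(h_*\). For each prime \(p\), the contribution is \(\min(v_p,K)\).
\begin{itemize}
\item If \(v_p\) is odd, the promise gives \(v_p<K\), so the contribution is exactly \(v_p\).
\item If \(v_p\) is even, the contribution is either \(v_p\) or \(K\). Both are even because \(K\) is even; this is the role of the evenness hypothesis.
\item Primes above \(2^B\) have even valuation by the promise, so omitting them does not change the parity.
\end{itemize}
Each divisibility test uses the modulus \(p^k\), which has polynomially many bits because \(B\) and \(K\) have polynomial value. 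The test is modular SLP evaluation. A parity machine decides \(c\equiv c_0\pmod2\), where the target parity \(c_0\) is determined by \(r\), by padding with one extra accepting branch when needed. This puts positivity in \(\oplus\Pclass\).

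For the general statement, I would reduce to a Jacobi-symbol computation, as in Theorem~\ref{push7:square-class}.
\begin{itemize}
\item \emph{Zero.} First dispose of \(A=0\) by sampling a short modulus at which \(A\) is nonzero, using the coRP zero test of the earlier sections.
\item \emph{Sampling \(q\).} Sample \(q=8t-1\) with \(t\) short. The density estimate \((4)\) shows that \(\gcd(A,q)=1\) with inverse-polynomial probability. A nonzero symbol \((A/q)\) certifies coprimality.
\item \emph{Removing the \(2\)-adic part.} Since \(q\equiv7\pmod8\), we have \((2/q)=1\) and \((-1/q)=-1\). Hence the \(2\)-adic valuation disappears, however large it is, and
\[
 \left(\frac Aq\right)=\sgn A\cdot\prod_{p\text{ odd},\ v_p\text{ odd}}\left(\frac pq\right).
\]
\item \emph{The correction product.} For \(p\le2^B\), each factor \((p/q)\) is a short polynomial-time Jacobi computation. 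The product's sign is the parity of
\[
 c'=\#\{(p,k):p\le2^B\text{ odd prime},\ (p/q)=-1,\ 1\le k\le K,\ p^k\mid A\}.
\]
The same even-\(K\) argument shows that \(c'\) has the correct parity, now under the promise for all odd primes.
\end{itemize}
The randomized reduction outputs the \(\oplus\Pclass\) instance comparing \(c'\bmod 2\) with the sign determined by the short residue \((A\bmod q\mid q)\). Repetition handles trials where \(\gcd(A,q)\ne1\). Finally, \(\BPP^{\oplus\Pclass}\subseteq\mathsf{MP}\) from \cite{GKRST95} gives the \(\mathsf{MP}\) promise algorithm.

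I expect the main obstacle to be the bookkeeping in the second part. First, the sampled \(q\) must avoid every prime of \(A\), including large primes of even valuation, which is what \((4)\) supplies. Second, one must check that the reciprocity-based correction depends only on primes below \(2^B\), so that a single parity count captures it.
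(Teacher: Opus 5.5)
Your proposal is correct and follows the paper's proof essentially step for step. The first part uses the truncated count of pairs $(p,k)$, with $K$ even absorbing the even valuations, combined with $A \bmod 4$; the second part uses a sampled coprime $q=8t-1$ and the parity of the correction count over primes $p\le 2^B$ with $(p/q)=-1$. The only difference is your separate zero test, which is harmless but redundant: $A=0$ never yields a coprime sampled modulus, so it is rejected automatically, exactly as in the paper.
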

\begin{proof}
Count pairs \((p,j)\) satisfying
\[
 p\le2^B\text{ prime},\quad p\equiv3\pmod4,\quad
 1\le j\le K,\quad p^j\mid A.
\]
Each pair is checked in polynomial time, since \(p^j\) has at most
\(BK+1\) bits. A prime contributes \(\min(v_p(|A|),K)\). Its parity
is the full valuation parity: valuations at least \(K\) must be
even, and \(K\) is even. Omitted primes contribute one modulo four.
If \(a\) is the count, then \(|A|\equiv(-1)^a\pmod4\).
Combining this bit with \(A\bmod4\) gives the sign.

For the second assertion, choose a short \(q\equiv7\pmod8\)
coprime to \(A\), and count instead
\[
 a_q=\#\{(p,j):3\le p\le2^B\text{ prime},\ 1\le j\le K,
       \ p^j\mid A,\ (p/q)=-1\}.
\]
Here \((\cdot/q)\) is the Jacobi symbol, so primality of \(q\) is
unnecessary. Since \((2/q)=1\) and \((-1/q)=-1\),
\[
                  (A/q)=\sgn(A)(-1)^{a_q}.
\]
The coprime-modulus sieve used earlier applies to \(q=8t-1\).
For completeness, if \(0<|A|<2^H\), put
\[
 h_0=\lceil\log_2(H+2)\rceil,\quad b_0=2h_0+64,\quad z=2^{b_0},
 \quad s=40(1+\lceil\log_2 b_0\rceil)+1,
 \quad T_0=z^{2s+10}.
\]
Uniform \(t\in\{1,\ldots,T_0\}\) gives
\(\Pr[\gcd(A,8t-1)=1]\ge1/(128b_0^3)\).
The proof is the same odd Bonferroni truncation for small prime
divisors, followed by the large-prime union bound; the progression
has one forbidden residue modulo each odd prime. After
\(256b_0^3\) trials, use the first coprime modulus and one parity
query. Reject if there is none. Positive inputs accept with
probability at least \(1-e^{-2}\); negative inputs always reject.
Zero has no coprime sampled modulus and also rejects.
\end{proof}

The promise allows unrestricted even valuations, including at primes
of arbitrary size. In the first assertion it also leaves every prime
congruent to one modulo four unrestricted. Thus no short factorization
or short squarefree kernel is needed. The existing ETR norm is odd,
but oddness does not establish this valuation promise. Moreover, its
residues already use a \(\Clevel2\) oracle, so a parity decoder
there would be relativized rather than an ordinary parity computation.

\subsection{A supplied magnitude window gives one middle bit}
\begin{proposition}\label{push9:window}
Let \(Z\) have the raw modular counting representatives required by
the beta reconstruction, and \(|Z|<2^H\). Suppose binary \(e\) and
polynomial-valued \(w\ge1\) are supplied with
\[
          2^e\le |Z|<2^{e+w},\qquad e\ge0,\quad e+w\le H.
\]
Then positivity of \(Z\) is in \(\mathsf{MP}\) on this promise.
This applies to both ordinary SLP outputs and specified coefficients.
\end{proposition}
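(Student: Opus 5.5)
The approach is to reuse the beta reconstruction of Lemma~\ref{lem:beta-sign}, but to choose the scale so that no search over scales is needed. The supplied window pins down the position of the leading bit of \(|Z|\). We multiply the beta displacement by a known power of two so that the result lands in a fixed range, and then read off one middle bit of a single \(\GapP\) count. Set \(N=H+16\) and \(m_i=N+1+i\). Take the \(\SharpP\) representatives \(W_i\) of \(\binom Ni\) from Lemma~\ref{lem:ring-coefficients} and the raw representatives \(R_i\) of \(Z\) modulo \(m_i\). This gives \(\sum_i A_i/m_i=z+C_NZ\), with \(A_i=(-1)^iW_iR_i\).

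The first step is to handle the unknown size of \(C_N\). Its exponent is known only approximately, with \(\log_2 C_N^{-1}\approx 2N\). It is not a power of two, so the window on \(|Z|\) does not directly give a window on \(\delta=C_NZ\). I would fix this by replacing the beta kernel with an exactly known dyadic weight. Pick a short integer \(c\) with \(2^{-c}\le C_N<2^{-c+1}\). Such a \(c\) is computable in polynomial time from \(N\), because \(\log_2\bigl((N!)^2/(2N+1)!\bigr)\) can be approximated by Stirling with enough precision, and only the integer part is needed; an off-by-one can be absorbed by widening \(w\) by one. Then
\[
 2^{e-c}\le|\delta|<2^{e+w-c+1}.
\]
Next take \(j=c-e-w-4\); if this is negative, shrink by enlarging \(N\). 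At scale \(j\) we get \(2^{-w-4}\le|2^j\delta|<2^{-3}\). The value \(2^j\delta\bmod1\) therefore sits inside a no-wrap window of polynomial width \(w+4\) bits.

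The second step is to form the single approximant \(T_j\) of \eqref{eq:beta-approximation}, using precision \(b=a(n)+\lceil\log_2(N+1)\rceil+w+12\). This makes the rounding error below \(2^{-w-8}\). Now \(T_j\in\GapP\), or \(\GapP^{\mathcal A}\) in the coefficient case after the raw representatives are inserted. Its residue \(T_j\bmod 2^b\) determines the sign: positive \(\delta\) gives a value in \([2^{b-w-4}-2^{b-w-8},\,2^{b-3}+\cdots)\), and negative \(\delta\) gives a value near \(2^b\), in the top part. The two cases differ in bit \(b-1\) of \(T_j\bmod2^b\). Equivalently, after adding a known offset \(2^{b-2}\) to avoid boundary issues, one fixed bit of the nonnegative \(\SharpP\) function \(T_j+2^{b-2}+\kappa 2^b\) decides the sign, where \(\kappa\) is a large known multiple making the count nonnegative. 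That is exactly an \(\mathsf{MP}\) predicate. For specified coefficients, the representatives from Lemma~\ref{lem:ring-coefficients} are plain \(\SharpP\), so the same construction applies. For SLP outputs, the representatives are polynomial time.

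I expect the main obstacle to be justifying that a single selected bit suffices despite the signed \(\GapP\) form and the unknown integer part \(z\). The integer part disappears once we reduce modulo \(2^b\), since each summand's integer shift is multiplied by \(2^b\). The bit is read modulo \(2^b\) of a large shifted count, so this must be checked carefully, including converting a \(\GapP\) difference into a \(\SharpP\) count via a known offset that is divisible by \(2^b\). The secondary difficulty is computing \(c\) exactly enough from \(N\) in polynomial time. If that fails, I would instead use \(C_N'=2^{-c}\) directly by rescaling the beta sum with an extra factor \(2^{c}C_N^{-1}\). But that factor is not integral, so the preferred route is the one-bit widening of the window described above.
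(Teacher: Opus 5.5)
Your proposal is correct and is essentially the paper's proof: the beta identity with \(N=H+16\), a single scale \(j\) fixed by the window, one raw fixed-point count \(T_j\), the sign read from bit \(b-1\), and a known offset that turns the \(\GapP\) count into a \(\SharpP\) count. The differences are cosmetic: the paper never computes \(c\), instead using \(4^{-N}/(2N+1)\le C_N\le 4^{-N}\) with \(j=2N-e-w-5\) and adding \(\lceil\log_2(2N+1)\rceil\) to the precision \(b\), and it realizes your offset \(\kappa 2^b\) as \(2^q\) by padding the machine for the negative part to exactly \(2^q\) paths with \(q\ge b\) (your extra shift by \(2^{b-2}\) is unnecessary and would move the deciding bit from position \(b-1\) to position \(b-2\)).
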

\begin{proof}
Put \(N=H+16\) and
\[
 C_N=\frac1{(2N+1)\binom{2N}{N}},\qquad m_i=N+1+i.
\]
The earlier beta identity supplies raw \(\GapP\) integers \(A_i\)
of polynomial bit length satisfying
\[
 \sum_{i=0}^N\frac{A_i}{m_i}=\text{integer}+C_NZ,
 \qquad \frac{4^{-N}}{2N+1}\le C_N\le4^{-N}.
\]
Choose \(j=2N-e-w-5\). The supplied window gives
\[
 \frac{2^{-w-5}}{2N+1}\le |2^jC_NZ|<\frac1{32}.
\]
If \(|A_i|\le2^a\), let
\[
 b=a+\lceil\log_2(N+1)\rceil+w+
             \lceil\log_2(2N+1)\rceil+10
\]
and form the single raw count
\[
 U=\sum_{i=0}^N A_i
       \left\lfloor\frac{2^b(2^j\bmod m_i)}{m_i}\right\rfloor.
\]
Every weight is polynomial-time computable; the sum has a
polynomial-length index. Its phase differs from \(2^jC_NZ\), modulo
one, by at most
\[
 (N+1)2^{a-b}\le\frac{2^{-w-10}}{2N+1}.
\]
Thus \(Z<0\) exactly when bit \(b-1\) of \(U\) is one.
Write \(U=g-h\), pad the machine for \(h\) to \(2^q\) paths with
\(q\ge b\), and add its rejecting paths to those counted by \(g\).
This gives a \(\SharpP\) function equal to \(U+2^q\), with the
same selected bit. Complementing a selected-bit predicate stays in
\(\mathsf{MP}\), proving positivity as well.
\end{proof}

A positive subtraction-free reference circuit \(R\), together with
\(2^{-w}R\le|Z|<2^wR\), gives an equivalent promise. Floating-point
evaluation of a size-\(s\) subtraction-free circuit needs only
\(s+O(1)\) mantissa bits for a constant relative approximation;
its binary exponent has polynomial length. This supplies the required
logarithmic window. It does not find such a reference for a general SLP.

Semantic positivity of the reference is insufficient. In fact, even
comparing two semantically positive SLP outputs under a constant
relative-gap promise remains \(\PosSLP\)-hard. To see this, start
with \(z=2A-1\ne0\) and apply the real rational normalizer using
integer numerator--denominator pairs:
\[
                  (N,D)\longmapsto(2sND,D^2+s^2N^2).
\]
One initial step with \(s=1\) sends \(1\le|z|<2^{H_0}\) into
\([2^{-H_0},1]\) in absolute value. If the current interval is
\([2^{-K},1]\), take \(s=2^{\lceil K/2\rceil}\). The new absolute
value lies in \([2^{-\lceil K/2\rceil},1]\), since
\(2u/(1+u^2)\ge\min(u,u^{-1})\). After \(O(\log H_0)\)
stages, \(D>0\), \(1/2\le|N/D|\le1\), and the sign is that of
\(A-1/2\). All scales have short repeated-squaring circuits.
The positive outputs
\[
                     X=2D+N,\qquad Y=2D
\]
satisfy \(|X-Y|/(X+Y)\ge1/10\), and \(X>Y\) exactly when \(A>0\).
Their internal circuits may contain cancellation. By contrast, a
polynomial relative gap between two genuinely subtraction-free
circuits is decided in polynomial time by relative approximation.
None of these observations supplies an unrestricted \(\mathsf{MP}\)
algorithm for \(\PosSLP\).

\section{Short positive certificates and arithmetic normal forms}
\label{push10:sec}

The preceding section fixes the scalar denominator in a dense
sum-of-squares representation. It leaves the description of the
numerators unresolved. We now give two more permissive certificate
formats, followed by consequences of the rational sign filter for
ordinary arithmetic circuits. None of these results lowers the
unrestricted containment \(\ETR\subseteq\Clevel4\).

\subsection{Conic identities in the critical algebra}

Call an integer polynomial circuit \emph{conic} if its leaves are
nonnegative integer constants and squares of arbitrary polynomial
SLPs, and its remaining gates are addition and multiplication.
Every value of such a circuit is nonnegative on \(\R^m\).
The circuit may share subexpressions. Expanding it into a list of
squares is unnecessary and can be very expensive.

\begin{proposition}\label{push10:conic}
Let \(I=(G_1,\ldots,G_m)\) be the fixed-degree pure-monic critical
ideal and let \(h=J-T\). A supplied pair of conic circuits \(A,B\)
satisfying
\[
                         hB=1+A\pmod I
\]
certifies that \(h>0\) at every real point of \(I\).
The identity has a \(\mathsf{coRP}^{\PP}\) verifier.
If polynomial-size certificates of this form exist for every NO
instance, then
\[
          \ETR\subseteq\mathsf{coMA}^{\PP}
          \subseteq\Clevel2\cap\PH^{\PP}.
\]
\end{proposition}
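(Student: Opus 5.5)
The proof has three parts: soundness of the certificate, verification of the identity, and the conditional class bound.

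\emph{Soundness.} Let \(x\) be a real point of \(I\). Every leaf of a conic circuit is a nonnegative constant or a square of a real value, and only addition and multiplication follow. Hence \(A(x)\ge0\) and \(B(x)\ge0\). The identity \(hB=1+A\) holds modulo \(I\), so it holds at \(x\), giving \(h(x)B(x)=1+A(x)\ge1\). Since \(B(x)\ge0\), this forces \(B(x)>0\) and then \(h(x)>0\). No reducedness or simplicity is needed, because evaluation at a real point is a ring homomorphism from \(A\otimes\R\) to \(\R\).

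\emph{Verification.} A conic circuit, with its square leaves expanded as \(S\cdot S\), is an ordinary integer polynomial SLP of comparable size. Therefore \(F=hB-1-A\) is a short integer circuit. We must test \(F\in I\) for the pure-monic system \(G_i=X_i^5+r_i\). This is exactly ordinary ideal membership in Theorem~\ref{push7:membership}, which is in \(\mathsf{coRP}^{\PP}\).

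The one point to check is the degree parameter \(E\ge\deg F\) in the residue formula \((***)\). It is read off syntactically as \(2^{\text{size}}\), which is singly exponential with a polynomial-length binary description. The coefficient-height bounds needed by Lemma~\ref{push7:coefficient-zero} follow from \eqref{eq:slp-height}. The verifier has perfect completeness and bounded one-sided error.

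\emph{Class bound.} On a NO instance, infeasibility means \(J\ge6A_0>T\) everywhere, so \(h>0\) at all real critical points. Under the hypothesis, Merlin sends \(A,B\) and Arthur runs the \(\mathsf{coRP}^{\PP}\) test, so this always accepts. On a YES instance, the real critical global minimum \(x\) has \(h(x)<0\). By soundness no valid conic pair exists, so every supplied pair violates the identity and is rejected with bounded error.

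This places infeasibility in \(\MA^{\PP}\), hence \(\ETR\subseteq\mathsf{coMA}^{\PP}\). As in Proposition~\ref{push7:conditional}, the relativized inclusions \(\MA\subseteq\PP\) and \(\MA\subseteq\Sigma_2^{\Pclass}\) give \(\Clevel2\cap\PH^{\PP}\) after complementation.

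The only mildly delicate step is confirming that conic sharing does not break the short-circuit hypothesis of Theorem~\ref{push7:membership}. It does not, since that theorem accepts arbitrary polynomial-size integer circuits.
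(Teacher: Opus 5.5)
Your proposal is correct and follows the paper's proof essentially step for step. Pointwise nonnegativity of conic circuits gives \(h(x)B(x)=1+A(x)\ge1\) at every real point, the residual \(hB-1-A\) is a short integer SLP passed to the pure-monic membership test of Theorem~\ref{push7:membership}, and guessing the circuits puts infeasibility in \(\MA^{\PP}\); your extra details (the syntactic degree bound \(E\), the height bound, and complementation via the relativized inclusions \(\MA\subseteq\PP\) and \(\MA\subseteq\Sigma_2^{\Pclass}\)) are exactly what the paper leaves implicit.
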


\begin{proof}
At a real point, \(A,B\ge0\), and the identity gives \(hB\ge1\).
Thus both \(B\) and \(h\) are positive there. In particular, no
separate unit or inverse certificate for \(B\) is needed.
The residual \(hB-1-A\) is an ordinary short polynomial SLP.
The pure-monic membership test of Theorem~\ref{push7:membership}
therefore verifies its vanishing in the quotient with one-sided error.
Guessing the circuits gives an \(\MA^{\PP}\) procedure for NO
instances, under the stated size hypothesis.
\end{proof}

Dense existence follows from the previous section. On a NO instance
\(h-1\) is still strictly positive at all real critical points.
Write \(c^2(h-1)=\sum_jP_j^2\), with the fixed dyadic scalar \(c\).
Then one can take
\[
       B=c^2,\qquad A=(c^2-1)+\sum_jP_j^2.
\]
The geometric-sum circuit for \(c^2-1\) is short and monotone.
The numerator description is still the missing part. Allowing a
nonconstant \(B\) and a conic circuit gives more freedom than this
particular dense construction.

There is a direct connection to the real cone proof system of
Alekseev, Grigoriev, Hirsch, and Tzameret \cite{AGHT19}.
A real-CPS refutation of \(G_i=0\) and \(-h\ge0\) gives a conic
circuit \(C(X,Y_+,Y_-,z)\) that becomes \(-1\) after substituting
the equations and \(z=-h\). Set the equation placeholders to zero;
this preserves the identity modulo \(I\). At every square gate,
substitute \(z=-h\) inside the square and retain the result as a
square leaf. The remaining circuit has conic coefficients in \(X\)
and uses the formal variable \(z\) only through addition and
multiplication.

Propagate an even/odd pair through this circuit, interpreting
\((U_0,U_1)\) as \(U_0-hU_1\). Its multiplication rule is
\[
 (U_0,U_1)(V_0,V_1)=
 (U_0V_0+h^2U_1V_1,\ U_0V_1+U_1V_0).
\]
Both components remain conic. At the output this yields
\(A-hB=-1\pmod I\), with polynomial circuit-size overhead.
This is a one-way simulation. Our verifier is allowed to check
ideal membership without receiving explicit ideal multipliers.
A universal polynomial-size real-CPS refutation theorem would be
stronger: ordinary polynomial identity testing would already give
\(\ETR\subseteq\mathsf{coMA}\).

The conic format also permits a restricted coefficient constructor.
Let \(A_0(X,Y)\) be built from \(Y\)-monomials, nonnegative
constants, and squares of \(Y\)-free polynomial SLPs, using only
addition and multiplication. Every \(Y\)-coefficient is a sum of
squares in \(X\). Thus \(A=[Y^\alpha]A_0\), and an independently
specified \(B\) of the same kind, are nonnegative by syntax.
Their membership residual and its residue-pairing square use only
a constant number of coefficient copies, so the same
\(\mathsf{coRP}^{\PP}\) verifier applies. Arbitrary
\(Y\)-dependent square leaves would be unsound here:
the coefficient of \(Y\) in \((Y-X)^2\) is \(-2X\).

\subsection{Congruence of four trace forms}

There is another certificate that does not require the witness to
respect multiplication in the algebra. Let
\[
 H_1=(\Tr_A(e_ie_j))_{ij},\qquad
 H_h=(\Tr_A(he_ie_j))_{ij}
\]
in the standard basis. Both are nonsingular because the algebra is
reduced and \(h\) is a unit.

\begin{proposition}\label{push10:fourforms}
The element \(h\) is positive at every real point if and only if
the rational quadratic forms \(H_1\otimes I_4\) and
\(H_h\otimes I_4\) are isometric.
\end{proposition}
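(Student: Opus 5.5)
We decompose both forms along the real étale structure and compare the invariants of rational quadratic forms. Since $A$ is reduced, $A\cong\prod_k K_k$ with number fields $K_k$, and $\Tr_A$ is the sum of the field traces. The form $H_1$ is the transfer $\operatorname{tr}_*\langle1\rangle$ of the trace form, and $H_h$ is the scaled trace form $\operatorname{tr}_*\langle h\rangle$, taken componentwise. Both are nonsingular rational forms of dimension $D$. Tensoring with $I_4$ gives $\operatorname{tr}_*\langle1,1,1,1\rangle$ and $\operatorname{tr}_*\langle h,h,h,h\rangle$. By Hasse--Minkowski, two nondegenerate rational forms of equal dimension are isometric exactly when they agree in discriminant, Hasse invariants at all places, and signature.

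For the easy direction, we compute the signature. Over $\R$, $A\otimes\R\cong\R^r\times\C^s$. A complex factor contributes a hyperbolic plane to $\operatorname{tr}_*\langle a\rangle$ for every nonzero $a$, so it contributes signature $0$. The real factor at the point $x$ contributes $\langle h(x)\rangle$. The signature of $H_h\otimes I_4$ is therefore $4\sum_{x\text{ real}}\sgn h(x)$, while that of $H_1\otimes I_4$ is $4r$. If the two forms are isometric, equal signatures force $\sgn h(x)=1$ at every real point. No nonreal point can interfere, and $h$ is a unit, so no real value vanishes.

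For the converse, we show the stronger statement $\langle h,h,h,h\rangle\cong\langle1,1,1,1\rangle$ over each $K_k$. Transfer then gives the isometry, because transfer respects isometry and orthogonal sums, and $\Tr_A$ is additive over the factors. A four-dimensional form $\langle a,a,a,a\rangle=a\langle1,1,1,1\rangle$ is $a$ times the norm form of Hamilton's quaternions. A form $\langle1,1,1,1\rangle$ represents $a$ if and only if $a\langle1,1,1,1\rangle\cong\langle1,1,1,1\rangle$, since the form is a Pfister form and hence multiplicative. It thus suffices to show that $h$ is a sum of four squares in each $K_k$. By Hasse--Minkowski applied to $\langle1,1,1,1,-h\rangle$, this holds exactly when $h$ is totally positive, that is, positive at every real embedding. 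Here we use that a five-dimensional form is isotropic at all finite places, as in the argument after Lemma~\ref{push7:four-squares}. Real embeddings of the $K_k$ are the real points of $A$, so the hypothesis gives exactly total positivity. Anisotropy at a complex place never occurs. If $\langle1,1,1,1\rangle$ is isotropic, it is universal, which again suffices.

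The main obstacle is the Pfister-form step: a representation $h=\sum u_i^2$ in $K$ must give the isometry $h\langle1,1,1,1\rangle\cong\langle1,1,1,1\rangle$, not just a represented value. We would cite the multiplicativity of Pfister forms from \cite{Lam05}. Equivalently, we would use quaternion multiplication by $u=u_1+u_2i+u_3j+u_4k$, which scales the norm form by $N(u)=h$ and is invertible when $h\neq0$. The two forms then agree on $K_k^4$, and the transfers agree on $A^4$.
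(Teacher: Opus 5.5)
Your proof is correct, but your converse takes a different route from the paper's. The paper stays over \(\Q\). It diagonalizes \(H_1\) and \(H_h\) rationally, and uses the equality of signatures to pair the diagonal entries \(a_i,b_i\) so that \(b_i/a_i>0\); that equality is exactly the hypothesis that no real point has \(h<0\). It then writes each positive rational \(b_i/a_i\) as four rational squares by Lagrange's theorem, and uses quaternion multiplication over \(\Q\) to get \(a_i\langle1,1,1,1\rangle\cong b_i\langle1,1,1,1\rangle\). The direct sum of these is the isometry.

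You instead split \(A\) into its number-field factors \(K_k\). You invoke Hasse--Minkowski over each \(K_k\) to write \(h\) as four squares there, as in the direct variant after Lemma~\ref{push7:four-squares}. You then apply Pfister roundness (equivalently, multiplication by a quaternion \(u\) with \(N(u)=h\)) and transfer to \(\Q\).

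Your route produces a stronger, structured isometry. It is \(A\)-linear and comes from a four-square representation \(h=\sum u_i^2\) in \(A\) itself. So you have shown that the algebra-level four-square certificate implies this trace-form congruence.

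The cost is a deeper theorem and no evident size control. Hasse--Minkowski over fields of degree up to \(D\) only gives existence. The paper's elementary diagonalization is what supports its next remark, that the dense isometry can be chosen with bit heights polynomial in \(D,H\). The paper also introduces this certificate precisely because it need not respect multiplication in \(A\), so that it avoids the harder four-square problem your proof passes through.

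Your opening reduction to discriminants and Hasse invariants is never used and can be dropped. Your signature argument for the other direction matches the paper's.
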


\begin{proof}
If there are \(r\) real points, of which \(s\) have \(h<0\),
the two signatures are \(r\) and \(r-2s\). Complex conjugate
pairs contribute one positive and one negative direction to each
form. An isometry therefore implies \(s=0\).

Conversely, rationally diagonalize both forms. Equal signatures
allow their diagonal entries \(a_i,b_i\) to be paired with
\(b_i/a_i>0\). Write this ratio as \(u_i/v_i\) with positive
integers \(u_i,v_i\). Choose four integers whose squares sum to
\(u_iv_i\). Quaternion multiplication by their vector divided
by \(v_i\) gives a rational matrix \(Q_i\) satisfying
\(Q_i^tQ_i=(b_i/a_i)I_4\). The direct sum of these matrices,
followed by the inverse diagonalizations, gives the isometry.
\end{proof}

If the form entries have bit height \(H\), the dense isometry
can be chosen with bit heights polynomial in \(D,H\). To see
this, first use a small integer congruence making all leading
principal minors nonzero. Their product is a nonzero polynomial
of degree \(D(D+1)\) in the congruence entries, so a grid of
that size suffices. Rational LDL decomposition and Hadamard's
bound control all subsequent entries. The four-square vectors
above and a common denominator obey the same kind of bound.
This argument establishes height bounds, not short descriptions.

A supplied integer matrix \(B\), described by a polynomial-time
generator of coefficient-SLP entries, and a short nonzero scalar
\(c\), can certify
\[
 B^t(H_1\otimes I_4)B=c^2(H_h\otimes I_4).
\]
Square and sum the integer entry residuals. Each summand contains
only a constant number of matrix-entry and trace-entry coefficient
copies. All indices have polynomial binary length, so one raw
modular counting sum verifies the resulting integer is zero.
A supplied short modulus with \(c\not\equiv0\) certifies the
scalar's nonvanishing. Nonsingularity of the trace forms then
forces \(B\) to be invertible. A universal short-description
theorem for these witnesses would again give
\(\ETR\subseteq\mathsf{coMA}^{\PP}\). It is not proved here.

The number four has a local obstruction in the canonical family.
In fact \(\Norm_A(h)\equiv3\pmod4\). Modulo four the critical
equations are \(p(X_i)=X_i^5+X_i+1\), and the perturbation reduces
to \(b(X)=(X+1)(2X+3)\). Its rank-five norm is
\(\Norm_p(b)=259\equiv3\pmod4\), while \(\Tr_p(b)=15\).
For \(m\ge2\), write \(a=b(X_1)-T\), with \(T\equiv2\pmod4\).
Taking the second-coordinate norm of \(a+2b(X_2)\) gives
\(a^4(a+2)\pmod4\). Its first-coordinate norm is again three
modulo four; all remaining norms are odd fifth powers.
Odd numbers of trace-form copies therefore fail the determinant
square-class condition. Numbers of copies congruent to two modulo
four fail the two-adic Hasse condition
\((\Norm_A(h),-1)_2=-1\). The supplementary note gives the
complete calculation and a further three-square obstruction in
the algebra itself for \(m\ge3\).

\subsection{Four integral squares in the bounded critical algebra}

For the actual bounded ETR encoding, the scalar denominator can be
removed entirely from a four-square existence statement. This uses
more structure than the general quantitative theorem of the previous
section. The new statement is qualitative: it does not supply the
numerator-height or circuit-size bounds needed by the verifier.

Let \(s\) be the number of squared residuals in \(F\). Give each
original variable weight two and each slack variable weight one,
and write \(D_\theta=\sum_i\theta_iX_i\partial_{X_i}\).
Every constraint residual has the form \(f_j=P_j-c_j\), where
\(P_j\) is weighted homogeneous of degree \(d_j\in\{2,4\}\)
and \(|d_jc_j|\le4\). Thus
\[
 D_\theta F=2\sum_jd_j f_j(f_j+c_j)
       \ge4F-8\sqrt{sF}\ge2F-8s.
\]
At a real critical point, the equations give
\[
 D_\theta F=-Q_0^{-1}\sum_i\theta_iw_i
                         (X_i^6+X_i^2+X_i)
       \le\frac{2^m-1}{2Q_0}<\frac12.
\]
Here \(t^6+t^2+t\ge-1/4\). Consequently \(F<4s+1\).
The slack inequalities then show that every coordinate of every
real critical point has magnitude at most \(B=2s+3\).
This bound is independent of the enormous parameter \(Q_0\).

For this subsection, enlarge the exponent in \(Q_0\), if needed,
so that
\[
                 Q_0/2^{m-1}>B^5+B+1.
\]
An explicitly computable additional polynomial-value power of two
suffices. The objective gap, the separation argument, and the
counting-depth bounds are all preserved.

\begin{proposition}\label{push10:no-total-real}
Every number-field factor of the resulting critical algebra has
a complex place.
\end{proposition}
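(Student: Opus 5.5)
The plan is to argue by contradiction through a norm estimate. Since the critical points are simple (Lemma~\ref{lem:generic}), $A\otimes\Q$ is a product of number fields. Its complex points correspond to the embeddings of these factors. Suppose some factor $K$ were totally real. Then every embedding of $K$ gives a \emph{real} critical point $x$ of $J_w$. I would first record two facts about the coordinates $X_i\in K$. They are algebraic integers, because $A$ is finite free over $\Z$ with the monic standard basis, so each $X_i$ satisfies its integral characteristic polynomial. By the bound proved just before the proposition, every real conjugate satisfies $|x_i|\le B=2s+3$.

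The key step uses the critical equations in the form
\[
 \frac{Q_0}{w_i}\,\partial_iF(x)=-(x_i^5+x_i+1).
\]
Set $\beta_i=\partial_iF(X)\in K$. This is an integer polynomial in algebraic integers, hence an algebraic integer. At every embedding of $K$ the point is real with $|x_i|\le B$. The enlarged exponent gives $Q_0/w_i\ge Q_0/2^{m-1}>B^5+B+1$, and therefore
\[
 |\sigma(\beta_i)|\le\frac{w_i}{Q_0}\bigl(B^5+B+1\bigr)<1
\]
for every embedding $\sigma$. The norm $N_{K/\Q}(\beta_i)$ is then a rational integer of absolute value less than one, so $\beta_i=0$. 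Consequently $X_i^5+X_i+1=0$ in $K$ for every $i$.

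To finish, I would use the factorization over $\Q$:
\[
 U^5+U+1=(U^2+U+1)(U^3-U^2+1).
\]
The quadratic factor has no real roots. The cubic is irreducible by the rational root test, and it has exactly one real root, since its discriminant is $-23<0$. Take $i=1$. The element $X_1$ generates a subfield $\Q(X_1)\subseteq K$. A real embedding forces $X_1$ to be a root of the cubic, so $\Q(X_1)$ is a cubic field with a complex place. A totally real field cannot contain such a subfield, because every embedding of $\Q(X_1)$ extends to an embedding of $K$. This contradiction proves the proposition.

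The only point needing care is the uniform real bound $B$ together with its independence from $Q_0$; the preceding paragraph of the paper has already established this. The remaining steps are routine: integrality via the monic presentation, and the check that $\partial_iF$ has integer coefficients.
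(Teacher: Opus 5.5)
Your proof is correct and follows the paper's argument. The paper compares $|\Norm_{K/\Q}\,p(a_i)|\le(B^5+B+1)^d$ with the fact that this norm is divisible by $(Q_0/w_i)^d$. That is equivalent to your observation that the algebraic integer $\partial_iF(a)$ has norm of absolute value below one. Both then finish with the same factorization $t^5+t+1=(t^2+t+1)(t^3-t^2+1)$, neither of whose irreducible factors is totally real.
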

\begin{proof}
Suppose a factor \(K\) were totally real, of degree \(d\).
The coordinate images \(a_i\) are algebraic integers, and every
embedding of \(K\) gives a real critical point. Put
\(p(t)=t^5+t+1\) and \(q_i=Q_0/w_i\). The critical equations
give \(p(a_i)=-q_iF_{X_i}(a)\). Its integer norm is divisible
by \(q_i^d\), but the coordinate bound gives
\[
 |\Norm_{K/\Q}p(a_i)|\le(B^5+B+1)^d<q_i^d.
\]
The norm must vanish, so \(p(a_i)=0\). However,
\[
 p(t)=(t^2+t+1)(t^3-t^2+1).
\]
The quadratic is irreducible with no real root. The cubic is
irreducible by the rational-root test and has discriminant \(-23\).
Neither irreducible factor is totally real. This is impossible
for an element of a totally real field.
\end{proof}

Write \(R\) for the integral critical order and
\(A=R\otimes\Q\). The element \(h=J-T\) is a unit in
\(A\) and modulo two; it is not asserted to be an integral unit
at every prime.

\begin{proposition}\label{push10:integral-four}
For every NO instance of the enlarged construction, the integral
order \(R\) contains \(P_1,P_2,P_3,P_4\) satisfying
\[
                         h=P_1^2+P_2^2+P_3^2+P_4^2.
\]
\end{proposition}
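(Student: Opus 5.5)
The plan is a local–global argument for the rank-four form $I_4=\langle1,1,1,1\rangle$ over the order $R$. It has three parts: local integral representability of $h$ at every prime, a global rational representation, and strong approximation to glue them. Proposition~\ref{push10:no-total-real} supplies the isotropic archimedean place that strong approximation needs.

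\emph{Rational and archimedean input.} Lemma~\ref{lem:generic} shows that $A=R\otimes\Q$ is reduced, hence a product of number fields $K_\nu$. On a NO instance, $h$ is positive at every real place. As in the direct variant after Lemma~\ref{push7:four-squares}, Hasse--Minkowski applied to $\langle1,1,1,1,-h\rangle$ then writes $h$ as a sum of four squares in each $K_\nu$, giving a rational vector $v\in A^4$ with $v\cdot v=h$. By Proposition~\ref{push10:no-total-real}, each factor has a complex place, at which every orthogonal and spin group involved is isotropic and noncompact.

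\emph{Local integral representations.} For odd $p$, $I_4\cong H\perp H$ over $\Z_p$, because $\langle1,1\rangle\cong\langle1,-1\rangle$ when $-1$ is a norm from $\Z_p[i]$, as it is for odd $p$. Hence $I_4\cong H\perp H$ over $R_p=R\otimes\Z_p$ as well, and $h=h\cdot1$ is represented trivially by the hyperbolic plane. This holds even though $R_p$ need not be a product of discrete valuation rings, so non-maximality of $R$ at odd primes causes no difficulty. At $p=2$, Lemma~\ref{lem:generic} gives that $R_2$ is a product of unramified rings $\mathcal O$ of degree dividing six, and $h$ is a unit there, since its norm is odd. I would prove that every unit of such an unramified $\mathcal O$ is a sum of four squares. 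Start from Lagrange over $\Z_2$ and use perfectness of the residue field $\F_{2^f}$: every residue class is a square. Match $h$ modulo $8$ by a small explicit choice of the four coordinates, then Hensel-lift along a coordinate whose partial derivative $2x_i$ has minimal valuation.

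\emph{Gluing.} Fix the rational vector $v$ and local integral vectors $w_p$ with $w_p\cdot w_p=h$. At almost all $p$ one can take $w_p=v$. By Witt's theorem choose $g_p\in SO(A_p^4)$ with $g_pv=w_p$. The genus of integral representations of $h$ is then nonempty, and it remains to find one global integral vector. Two routes are available. The first uses strong approximation for $\operatorname{Spin}(I_4)\cong SL_1(\mathbb H)\times SL_1(\mathbb H)$, where $\mathbb H=(-1,-1)_A$. The second uses strong approximation for the spin group of the rank-three complement $v^\perp$. In either case, strong approximation holds away from a complex place because the group is noncompact there. This lets one approximate the adelic element $(g_p)$ by a global rotation $g$ with $gv$ integral at every prime, provided the spinor norms of $(g_p)$ can be matched modulo global spinor norms.

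\emph{Main obstacle.} I expect the hard step to be exactly this spinor-norm bookkeeping, together with the dyadic computation. One must show either that the local stabilizers have surjective spinor norms onto the relevant unit square classes, or that the spinor genus of $I_4$ representing $h$ contains the standard lattice. In that case the spinor exceptions that can occur for rank-three complements must be ruled out. My first attempt would use the quaternionic description: integral representations are Lipschitz quaternions over $R$ of reduced norm $h$. I would check that at $2$ the local Lipschitz order contains elements of every unit norm, so the local norm maps are surjective on units, and then apply strong approximation to $SL_1$ to move a rational quaternion of norm $h$ into the global Lipschitz order. If the Lipschitz order proves too small at $2$, I would fall back to the Hurwitz order and correct the half-integral coordinates using the étaleness of $R_2$.
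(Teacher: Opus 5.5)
Your overall architecture matches the paper's proof. It has the same three parts: a rational four-square vector from Hasse--Minkowski, integral local representations at every prime, and strong approximation for the norm-one group of $\mathbb H=(-1,-1)_A$. Strong approximation is available because Proposition~\ref{push10:no-total-real} gives every field factor a complex place.

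The gap is in the gluing step. You leave it conditional (``provided the spinor norms of $(g_p)$ can be matched'') and identify it as the main obstacle, but you never close it. The missing observation is that there is nothing to match.

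Fix a rational quaternion $x_0$ with $\operatorname{Nrd}(x_0)=h$. Since $h$ is a unit of $A$, $x_0$ is invertible, and right multiplication by the norm-one group is simply transitive on quaternions of reduced norm $h$. Hence each local integral solution $w_p$ gives $g_p=x_0^{-1}w_p$, which already has norm one. Witt's theorem, stabilizers, and spinor genera never enter.

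The sets $U_p=\{g:\operatorname{Nrd}(g)=1,\ x_0g\in(R\otimes\Z_p)^4\}$ are open and nonempty. For almost all $p$, $x_0$ is integral with unit norm, hence a unit of the local Lipschitz order, so $U_p$ contains that order's norm-one group. Strong approximation away from the archimedean places then gives one global $g$ lying in every $U_p$, and $x_0g$ has coordinates in $R$. This is exactly the paper's argument. Your ``first attempt'' paragraph points toward it, but without the torsor observation the proposal does not actually finish, and the Hurwitz-order fallback is never needed.

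Two smaller points follow.
\begin{itemize}
\item At odd $p$ your intermediate claim $\langle1,1\rangle\cong\langle1,-1\rangle$ is false over $\Z_p$ for $p\equiv3\pmod4$, where $\langle1,1\rangle$ is anisotropic. What $-1=a^2+b^2$ actually gives is $\langle1,1\rangle\cong\langle-1,-1\rangle$, which still yields $I_4\cong H\perp H$. Alternatively, use the paper's identity $h=\bigl(\tfrac{h+1}{2}\bigr)^2+\bigl(a\tfrac{h-1}{2}\bigr)^2+\bigl(b\tfrac{h-1}{2}\bigr)^2$, which works for every $h\in R\otimes\Z_p$.
\item At $2$ your mod-$8$ matching is asserted rather than exhibited. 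The paper makes it explicit. Choose a unit $a$ with $a^2\equiv h\pmod 2$, solve $2c^2+ac=(h-a^2)/2$ by Hensel's lemma (the derivative $4c+a$ is a unit), and take $h=(a+c)^2+c^2+c^2+c^2$.
\end{itemize}
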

\begin{proof}
Strict positivity at every real place gives a rational
four-square representation in each number-field factor, by the
local-global theorem for quaternion norms. Fix such a vector
\(x_0\), viewed as a quaternion of reduced norm \(h\).

Integral local representations exist at every finite prime.
At an odd prime choose \(a,b\in\Z_p\) with \(a^2+b^2=-1\).
For any \(h\in R\otimes\Z_p\), the identity
\[
 h=\left(\frac{h+1}{2}\right)^2
    +\left(a\frac{h-1}{2}\right)^2
    +\left(b\frac{h-1}{2}\right)^2
\]
gives one. At two the order is a product of unramified extensions.
Choose a unit \(a\) with \(a^2\equiv h\pmod2\), and solve
\[
                    2c^2+ac=(h-a^2)/2.
\]
There is a unique lift from the residue-field solution, since the
derivative \(4c+a\) is a unit. Then
\(h=(a+c)^2+c^2+c^2+c^2\).

The norm-one quaternion group in each field factor is noncompact
at a complex archimedean place, by
Proposition~\ref{push10:no-total-real}. Strong approximation away
from the archimedean places therefore applies
\cite[Main Theorem 28.5.3]{Voight21}.
For every finite prime impose the open condition
\(x_0g\in(R\otimes\Z_p)^4\) on a norm-one element \(g\).
These sets are nonempty by the local representations, and form
an adelic open set: outside finitely many primes the integral
norm-one subgroup already suffices. Strong approximation gives
a rational \(g\) satisfying all the conditions. The coordinates
of \(x_0g\) belong to \(R\), and its norm is \(h\).
The argument uses the given order locally and does not require
it to be maximal.
\end{proof}

The distinction between existence and size remains decisive.
The earlier many-square construction gives explicit uniform
height bounds with a fixed scalar. This four-square construction
has scalar one but, as used here, no comparable uniform height
bound. Neither construction gives short numerator circuits.

\subsection{Eliminating adaptive PosSLP queries}

The real sign filter also gives a reduction normal form that is
independent of our counting-hierarchy arguments.

\begin{lemma}\label{push10:filter}
Given positive binary integers \(L,B\), one can construct in time
polynomial in their bit lengths a rational SLP \(R_{L,B}\) such that
all denominators are positive on the real line and
\[
 2^{-L}\le |z|\le2^L
 \quad\Longrightarrow\quad
 |R_{L,B}(z)-\sgn z|\le2^{-B}.
\]
\end{lemma}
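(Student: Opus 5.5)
This is essentially Lemma~\ref{lem:sign-filter}, made explicit in its dependence on the error exponent \(B\) and with running time measured in the bit lengths of \(L\) and \(B\). I would reuse that construction directly. Every stage applies a map \(f_a(y)=2ay/(1+a^2y^2)\) with \(a\) a power of two. Each stage is implemented division-free on integer pairs by
\[
(N,E)\longmapsto(2aNE,\ E^2+a^2N^2).
\]
The map is odd and sign preserving. If \(E>0\) on \(\R\), then \(E^2+a^2N^2>0\) wherever \(N\) is real, so the new denominator stays positive on the whole real line. Starting from \((N,E)=(z,1)\), this gives positivity of all denominators at every stage by induction, for every real \(z\), not only in the promised range.

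The first phase handles the range. One step with \(a=1\) maps \(2^{-L}\le|z|\le2^L\) into \(2^{-L}\le|y|\le1\), since \(f_1(y)=f_1(1/y)\) and \(2u/(1+u^2)\ge\min(u,u^{-1})\). Suppose the current range is \([2^{-K},1]\) with \(K\ge2\). Choose \(a=2^{\lceil K/2\rceil}\); then the new range is \([2^{-\lceil K/2\rceil},1]\). The exponent roughly halves each time, so after \(O(\log L)\) stages every value satisfies \(1/2\le|y|\le1\). The sequence of exponents \(K\) is computed deterministically from the binary value of \(L\). Each power \(2^{\lceil K/2\rceil}\) has a repeated-squaring circuit of size \(O(\log K)\). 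The total size of this phase is therefore \(O(\log^2 L)\), which is polynomial in the bit length of \(L\).

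The second phase is error squaring with \(a=1\). For \(0<y\le1\) we have \(1-f_1(y)\le(1-y)^2\). Starting from an error of at most \(1/2\), after \(r\) steps the error is at most \(2^{-2^r}\). Taking \(r=\lceil\log_2(B+1)\rceil\) gives error below \(2^{-B}\), using \(O(\log B)\) stages. Oddness yields the same bound for negative inputs. Since \(f_1\) maps \((0,1]\) into \((0,1]\), the values stay in the range where the squaring estimate applies.

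The main point to check is bookkeeping rather than analysis. The SLP must be produced in time polynomial in \(\log L+\log B\), and never by expanding the exponentially large degrees or coefficients. Each stage adds a constant number of gates plus one short powering subcircuit, so the output is a uniform polynomial-size rational SLP whose denominators are exactly the positive squared forms above. The only delicate step is the halving recursion for \(K\). I would state its invariant explicitly: the range is \([2^{-K},1]\), and \(K_{\rm new}=\lceil K/2\rceil\). I would also check the terminal case \(K\le1\), which gives \(|y|\ge1/2\), so that the hand-off to the squaring phase is clean.
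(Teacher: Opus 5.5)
Your proposal is correct and follows the paper's proof essentially step for step. It uses the same three ingredients: power-of-two folds \(\phi_s(z)=2sz/(1+s^2z^2)\) starting with a scale-one fold, the range-halving schedule \(K\mapsto\lceil K/2\rceil\) over \(O(\log L)\) stages down to \([1/2,1]\), and then \(O(\log B)\) scale-one error-squaring folds, with repeated-squaring circuits for the scales. Your additional bookkeeping makes explicit what the paper leaves implicit, namely positivity of \(E^2+a^2N^2\) by induction and the \(O(\log^2 L)\) size count.
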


\begin{proof}
Use the folds \(\phi_s(z)=2sz/(1+s^2z^2)\), with positive
powers of two \(s\). The first scale-one fold puts the absolute
value in \([2^{-L},1]\). From \([2^{-K},1]\), the scale
\(s=2^{\lceil K/2\rceil}\) puts it in
\([2^{-\lceil K/2\rceil},1]\). After \(O(\log L)\) stages
the interval is \([1/2,1]\). A scale-one fold now changes the
error \(e=1-z\) to \(e^2/(1+z^2)\le e^2\), so
\(O(\log B)\) further folds suffice. Every scale constant
has a short repeated-squaring circuit.
\end{proof}

For \(z>0\), these folds have the useful stability property
\[
 \frac{d\log\phi_s(z)}{d\log z}
       =\frac{1-s^2z^2}{1+s^2z^2}\in[-1,1].
\]
Hence any composition satisfies \(|R'(z)|\le1/|z|\).
The next proof uses the stronger fact that every oracle gate can
reset its error using the filter's uniform guarantee.

\begin{theorem}\label{push10:adaptive}
Every language in \(\Pclass^{\PosSLP}\) is polynomial-time
many-one reducible to \(\PosSLP\).
\end{theorem}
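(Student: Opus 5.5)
The plan is to simulate a polynomial-time oracle machine \(M^{\PosSLP}\) on input \(x\) by a single straight-line program. Each adaptive query will become an arithmetic gadget whose output is numerically close to the oracle answer, and the whole run will be arithmetized into one integer SLP whose sign is the acceptance bit. Only polynomially many steps occur, so the main task is to replace every Boolean branch by arithmetic on approximately \(\{0,1\}\)-valued quantities and to keep the errors from accumulating.

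First arithmetize the deterministic part. Fix the time bound \(t=\poly(|x|)\) and write the configuration as bits. One step of \(M\) is then a polynomial-size Boolean circuit. It becomes an integer polynomial via \(a\wedge b=ab\) and \(\neg a=1-a\), which is exact on \(\{0,1\}\). A query string encodes an SLP of bounded size \(s\le t\). Evaluating such an SLP from its bit description is also arithmetizable: use a universal SLP evaluator that selects each gate's operands and operation by multiplexing \(\sum_k [\text{bits}=k]\,v_k\). The degree of the selectors is polynomial, and the multiplexers are exact on Boolean inputs. This yields a value \(z\) equal to the queried integer, computed from the current (approximate) bits.

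Second, turn the sign into a bit with the filter. Apply Lemma~\ref{push10:filter} to \(2z-1\), which is a nonzero odd integer of magnitude below \(2^{L}\) with \(L=(\beta+1)2^{s+2}\) from the height bound \eqref{eq:slp-height}. Then \(b=(1+R_{L,B}(2z-1))/2\) is within \(2^{-B}\) of \([z>0]\). Clear denominators by carrying every value as a pair (numerator, positive denominator); all later gates are polynomial, so this is routine. The key point, and what I expect to be the main obstacle, is that later bits are only approximate, so the next query's \(z\) is computed from slightly perturbed inputs. I would handle this by resetting the error at every gate. After each step, pass every configuration bit \(u\) through a rounding filter \(\rho(u)=(1+R(2u-1))/2\) tuned to inputs with \(|2u-1|\in[1/2,1]\), for which only \(O(\log B)\) folds are needed. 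This maps \(u\in[-\varepsilon,\varepsilon]\cup[1-\varepsilon,1+\varepsilon]\) back to within \(2^{-B}\) of the Boolean value, provided \(\varepsilon\) stays below a fixed constant.

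The remaining job is to bound how far a polynomial step can push errors. Here the difficulty is the SLP evaluator: a value \(z\) of size \(2^{2^s}\) is a high-degree polynomial in the bits, so an input error \(\varepsilon\) can blow up by a factor like \(2^{2^{s}}\cdot(\text{degree})\). I would choose \(B\) with polynomially many bits, say \(B=2^{\poly}\), so that \(2^{-B}\) dominates this doubly exponential Lipschitz constant. Then \(z\) is within \(1/4\) of the true integer, \(2z-1\) has magnitude at least \(1/2\), and a filter with \(L\) enlarged by one still applies. The scale constants \(2^{\lceil K/2\rceil}\) and the bound \(2^{-B}\) have short repeated-squaring circuits, so \(B\) may be singly exponential in value. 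After \(t\) steps, the output is the accepting bit \(\tilde a\) within \(2^{-B}\) of \(\{0,1\}\). The final \(\PosSLP\) instance is \(\mathrm{num}(2\tilde a-1)\cdot\mathrm{den}\) with positive denominator, so its sign equals the acceptance bit. The total size is polynomial, since each of the \(t\) steps uses polynomially many gates plus \(O(\log L+\log B)\) folds per bit.
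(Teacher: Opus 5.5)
Your proposal is correct and is essentially the paper's proof: arithmetize the unrolled oracle computation, replace each oracle gate by the sign filter of Lemma~\ref{push10:filter} applied to a fixed universal SLP evaluator, bound that evaluator's derivatives by $2^L$ with $L=2^{\poly(n)}$ before choosing the precision $B$, and let each filter reset the error. The paper skips your per-step re-rounding of configuration bits and instead absorbs the factor $4^G$ of error growth through the $G$ ordinary gates into $B=L+2G+20$; it also makes the evaluator return $-1$ on invalid query descriptions, a case you should handle too so that malformed queries receive the oracle's ``no'' answer.
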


\begin{proof}
Unroll a polynomial-time oracle computation as a polynomial-size
Boolean circuit with oracle gates. Pad every query to one fixed
polynomial length \(q\), and let \(G\) bound the number of ordinary
Boolean gates. There is a polynomial-size polynomial circuit
\(W_q(b)\) that evaluates an encoded SLP on Boolean descriptions:
it returns \(2Z_b-1\) on a valid description of value \(Z_b\),
and \(-1\) on an invalid description. Decode each instruction
by Boolean polynomials and use polynomial-size multiplexers over
the preceding registers. On every Boolean input the output is
a nonzero odd integer with the desired oracle sign.

Before choosing any approximation precision, bound this fixed
evaluator and its derivatives. The usual arithmetic-circuit
recurrences give a computable \(L=2^{\poly(n)}\) such that on
\([-1,2]^q\),
\[
 |W_q(b)|\le2^L,\qquad
 \sum_i|\partial_iW_q(b)|\le2^L.
\]
The integer \(L\) has polynomial binary length. Put
\(B=L+2G+20\) and \(\eta=2^{-B}\). Replace Boolean NOT and
AND by \(1-a\) and \(ab\), and replace each oracle gate by
\[
                 \frac{1+R_{L+2,B+2}(W_q(b))}{2}.
\]
Inductively, every wire stays within
\(\rho=4^G\eta<1/4\) of its ideal Boolean value. An ordinary
gate increases the current error by at most four. At an oracle
gate the evaluator error is at most
\(2^L\rho=2^{-20}\). Its ideal value is odd, so the perturbed
value has the same sign and lies in the filter's permitted range.
The filter resets the output error to at most \(\eta\).
This induction allows arbitrary adaptive dependencies.

The final rational value exceeds \(1/2\) precisely on accepting
inputs. Positive-denominator clearing produces an integer SLP
for this comparison. Its size is polynomial because every
filter parameter has polynomial binary length.
\end{proof}

The order of the bounds in this proof matters. Bounding derivatives
of the entire precision-dependent simulation would create a
circular accuracy requirement. Only the fixed universal evaluator
is bounded before precision is selected. The theorem gives one
query in place of adaptive queries; it does not remove an oracle
level from the existing upper bound for \(\PosSLP\). In particular,
ordinary SLP values cannot be substituted for arbitrary GapP values.

\subsection{Nested principal square roots}

There is a scalar square-root approximation inside the same folds.
Starting with \(c=1\), use the scales from \(R_{L,B}\) and set
\[
 c\longleftarrow\frac{2sc}{1+s^2ac^2},\qquad q(a)=ac_{\rm final}.
\]
For \(a>0\), one has
\(\sqrt a\,c_{\rm final}=R_{L,B}(\sqrt a)\). Thus
\[
 2^{-2L}\le a\le2^{2L}
 \quad\Longrightarrow\quad |q(a)-\sqrt a|\le2^{L-B}.
\]
All denominators are positive. This is also the sign iteration
applied to \(\left(\begin{smallmatrix}0&a\\1&0\end{smallmatrix}\right)\),
whose square is \(aI\).

\begin{proposition}\label{push10:radicals}
The sign problem for polynomial-size circuits with rational
leaves, arithmetic operations, exact division, and nested principal
square roots is polynomial-time many-one equivalent to \(\PosSLP\).
One may either promise valid divisions and radicands, or reject
invalid circuits as part of the language.
\end{proposition}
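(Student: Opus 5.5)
The reduction from \(\PosSLP\) to the circuit sign problem is immediate, since an integer SLP is a special case. For the converse we compile the radical circuit into an integer SLP whose sign equals the sign of the original output. This follows the pattern of Theorem~\ref{push10:adaptive}: first bound the circuit exactly, then simulate it with controlled errors. We replace each exact division \(u/v\) by \(u\cdot(\text{approximate }1/v)\) and each radical \(\sqrt a\) by the fold iteration \(q(a)\) displayed above. Because the circuit is algebraic, every wire value is an algebraic number of singly exponential degree and height. These bounds are obtained syntactically: adjoin a new variable \(y\) with \(y^2=a\) (or \(yv=u\)) for each gate. This gives a polynomial-size system whose real solution, chosen by the principal-root sign conditions, is unique. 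One-block effective bounds, the same ones used for Lemma~\ref{lem:algebraic-gap} and in Section~\ref{app:count-details}, then give a short binary \(L\). With this \(L\), every nonzero wire value, every radicand and every divisor lies in \([2^{-L},2^{L}]\) in absolute value, and the output is zero or at least \(2^{-L}\) in absolute value.

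We then fix the precision \(B\) after \(L\), as in Theorem~\ref{push10:adaptive}. For this we need a derivative bound for the \emph{exact} circuit on a neighborhood of the true wire values. Such a bound follows gate by gate: \(d\sqrt a/da=1/(2\sqrt a)\le2^{L}\) and \(d(u/v)\le 2^{3L}\) near the true values. This yields a propagation factor of \(2^{O(L)}\) per gate, hence \(2^{O(sL)}\) overall. Choosing \(B=O(sL)\) keeps every simulated wire within relative error \(2^{-2L}\) of its true value. The fold root \(q\) and the reciprocal approximant (obtained from \(R_{L,B}\)-style folds applied to \(v\), or Newton iteration on \(1/v\) with positive-denominator clearing) both have short circuits with positive denominators.

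The main obstacle is zero radicands: \(q(a)\) only approximates \(\sqrt a\) when \(a\ge2^{-2L}\). A radicand that is exactly zero may be computed only approximately by the simulation, so it could appear slightly positive or negative. We handle this by first testing each radicand in topological order; a zero radicand is replaced by the constant \(0\). These tests are themselves sign questions about earlier subcircuits, so the compiled computation is a polynomial-time procedure with adaptive \(\PosSLP\) queries. Theorem~\ref{push10:adaptive} then collapses it to a single query. Division by a zero divisor and a negative radicand are detected by the same queries: we reject, or rely on the stated promise.

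The final comparison of the simulated output with \(\pm2^{-L-1}\) decides the sign, and positive-denominator clearing turns this comparison into one integer SLP instance.
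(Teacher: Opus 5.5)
Your proof is correct, but it handles zero radicands by a different mechanism from the paper's.

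The paper never tests radicands. It replaces an approximate radicand $\widetilde a$ by $a'=\widetilde a+2E_{j-1}>0$ and uses $|\sqrt{a'}-\sqrt a|\le\sqrt{3E_{j-1}}$, which holds equally when $a=0$. Since each radical then square-roots the incoming error, it prescribes the halving schedule $K_j=K_{j-1}/2-c_0$. The starting exponent is of order $2^sH$, which still has polynomial binary length. The payoff is that the promise version compiles directly into a single $\PosSLP$ instance, and Theorem~\ref{push10:adaptive} is needed only to reject invalid circuits.

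You instead identify exact zero radicands by adaptive threshold queries and substitute literal zeros. Errors then propagate linearly and $B=O(sL)$ suffices, but you need the adaptive compiler even under the promise. This is essentially the strategy the paper itself uses later for binary-degree radicals in Theorem~\ref{push12:binary-root-theorem}, where zero radicands are also recognized by comparison with a separated threshold.

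Your gap bound, via an adjoined quadratic system and one-block elimination bounds, is valid but heavier than necessary. The paper gets $2^{-H}\le|v|\le2^H$ with $H=(n+2)4^{s+1}$ directly from the degree and height inequalities for the four operations.

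Two small repairs are needed.
\begin{itemize}
\item State your invariant as an absolute error bound such as $2^{-2L}$, not a relative one. A wire with true value zero, for instance $\sqrt2\cdot\sqrt2-2$, cannot be approximated with small relative error. The semantic gap already turns absolute error into what you need at divisors and radicands.
\item You need no reciprocal approximant. A rational SLP may divide exactly by the approximation $\widetilde v$, which is nonzero. Each comparison is then cleared by multiplying through by the square of its denominator.
\end{itemize}
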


\begin{proof}
Let \(s\) be the number of gates and \(n\) the input length.
Every defined gate has algebraic degree at most \(2^s\) and
absolute logarithmic height at most \((n+1)2^s\). The standard
height inequalities for addition, multiplication, division, and
square roots give these bounds directly. Hence
\[
 H=(n+2)4^{s+1}
\]
bounds the logarithmic magnitude of every nonzero gate in both
directions: \(2^{-H}\le|v|\le2^H\).

Put \(c_0=3H+4\), and prescribe the error schedule
\[
 K_j=2^{s-j}(H+2c_0+10)-2c_0,
 \qquad E_j=2^{-K_j},\qquad 0\le j\le s.
\]
Then \(K_j=K_{j-1}/2-c_0\) and \(K_s=H+10\).
Approximate each gate by a rational SLP. Addition and multiplication
amplify the previous error by at most \(2^{H+3}\), and a valid
division by at most \(2^{3H+2}\). For a radicand \(a\ge0\)
with approximation \(\widetilde a\), use
\(a'=\widetilde a+2E_{j-1}>0\). The square-root circuit above
can approximate \(\sqrt{a'}\) within \(E_{j-1}/4\), while
\[
 |\sqrt{a'}-\sqrt a|\le\sqrt{3E_{j-1}}.
\]
Every new error is therefore at most
\(2^{c_0}\sqrt{E_{j-1}}=E_j\). This includes zero radicands.
All exponents have polynomial binary length, so the complete
rational approximation has polynomial circuit size.

The output error is at most \(2^{-H-10}\). Its comparison with
\(2^{-H-1}\) has the sign of the true positivity predicate and
gives one PosSLP instance after denominator clearing. To reject
invalid circuits, test the approximants of each divisor and
radicand against the same separated thresholds before using
them. These are polynomially many PosSLP queries, which
Theorem~\ref{push10:adaptive} combines into one. The reverse
reduction uses no radical gates.
\end{proof}

This statement concerns exact rational division. Integer division
and floor are different operations and are not included.

\subsection{Polynomial-degree real-root gates}

The square-root result extends to more general specified real
algebraic roots. We give the arithmetic-operation argument because
a bit-complexity bound polynomial in coefficient height would be
insufficient: ordinary SLP coefficients can have exponentially
many bits.

\begin{theorem}\label{push10:root-gates}
Consider a polynomial-size arithmetic DAG with rational constants,
arithmetic operations, exact division, and real-root gates. Each
root gate lists polynomially many earlier values
\(a_0,\ldots,a_d\) and an index \(j\), and returns the \(j\)-th
distinct real root of \(\sum_{k=0}^d a_kX^k\), in increasing order.
The output-sign problem is in \(\Pclass^{\PosSLP}\), and hence
is polynomial-time many-one reducible to \(\PosSLP\).
Invalid operations or root specifications can be rejected.
\end{theorem}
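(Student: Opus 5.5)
We will follow the pattern of Proposition~\ref{push10:radicals}: replace every gate by a rational SLP approximation carrying an explicitly prescribed error schedule, and reduce each separated comparison to one PosSLP query. By Theorem~\ref{push10:adaptive}, polynomially many adaptive such queries compile into one PosSLP instance. The new ingredient is a root gate. It must be simulated by a polynomial-size rational SLP that computes the \(j\)-th distinct real root of \(p=\sum_k a_kX^k\) to prescribed accuracy, using only approximations of the \(a_k\) and PosSLP-decidable discrete information. We may not use bit operations on the coefficients, which can have exponentially many bits.

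\textbf{Height bounds.} Every defined gate value is algebraic of degree at most \(d_{\max}^{s}\), where \(d_{\max}\) is the largest listed degree. Its logarithmic height is singly exponential, by the standard height inequalities for roots of a polynomial whose coefficients have bounded height. Hence a single \(H=2^{\poly(n)}\) with polynomial binary length gives three bounds: \(2^{-H}\le|v|\le2^H\) for every nonzero gate; separation \(|\alpha-\beta|\ge2^{-H}\) between distinct real roots of each gate polynomial; and \(|p^{(i)}(\alpha)|\ge2^{-H}\) whenever this value is nonzero at a root \(\alpha\). These follow as in Lemma~\ref{lem:algebraic-gap}, applied to the minimal polynomials over \(\Q\) of the (algebraic) coefficients.

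\textbf{Discrete root data.} With \(H\) fixed, we first determine the exact degree of \(p\) and the multiplicity structure. Zero tests of the approximated coefficients, separated at \(2^{-H-1}\), give the exact degree. For the multiplicity structure we do not compute gcds symbolically. Instead, a real root of \(p\) has multiplicity \(\mu\) exactly when \(p,\ldots,p^{(\mu-1)}\) vanish there and \(p^{(\mu)}\) does not. The plan is to replace \(p\) by the squarefree part, obtained through subresultants. Subresultant coefficients are polynomials in the \(a_k\) of degree \(O(d^2)\), so they have polynomial-size division-free circuits. Their signs are PosSLP queries on approximations, again with gaps coming from \(H\). This yields the squarefree part \(\bar p\) as an SLP over the gate values, together with the number of distinct real roots via Sturm/subresultant sign sequences. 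It also validates the index \(j\); invalid specifications are rejected.

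\textbf{Locating the root.} This is the main obstacle. We need an isolating interval for the \(j\)-th root of \(\bar p\) whose endpoints are short, and a Newton-type refinement whose error is controlled uniformly. Bisection over an interval of width \(2^{H+1}\) down to width \(2^{-H-c}\) takes \(O(H)\) steps, which is exponential. We therefore plan a binary search on a parametrized family instead. Each query asks whether the number of real roots of \(\bar p\) below a dyadic point \(t\) is at least \(j\); this number comes from the Sturm sign count at \(t\). The dyadic points are chosen as \(t=\pm2^{e}(1+\cdots)\) with exponent \(e\in[-H,H]\) in binary, and are realized by short repeated-squaring circuits. The search first finds the binary exponent and then only polynomially many leading mantissa bits. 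This gives \(t_0\) with \(|t_0-\alpha|\le2^{-\poly}|\alpha|\), but not yet within the separation distance \(2^{-H}\). To bridge the gap we would use the square-root-style folding trick: run Newton's iteration \(x\mapsto x-\bar p(x)/\bar p'(x)\) for \(O(\log H)\) quadratically convergent steps. This requires a start point inside the basin of the root, of radius about \(2^{-cH}\). We would first try to certify such a start via a Smale \(\alpha\)-test, whose inputs are PosSLP sign comparisons, as follows. Keep bisecting in \(t\) with exponentially fine but binary-described steps. Each Sturm-count query is a single PosSLP instance, and binary search over a range of \(2^{\poly}\) candidate mantissas \(t_0+k2^{-cH}\) with \(k\) in binary costs only \(\poly\) queries. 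Every such \(t\) has a short circuit, namely \(t_0\) plus \(k\) times a powering circuit. The start point is thus reached in polynomially many queries, after which Newton steps, with errors propagated through the schedule \(K_j\) of Proposition~\ref{push10:radicals} (halving per gate), yield the required approximation. Finally, all adaptive queries are merged by Theorem~\ref{push10:adaptive}.
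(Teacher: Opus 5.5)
Your overall framework matches the paper's: singly exponential semantic bounds, the exact degree and square-free Euclidean branch fixed by separated \(\PosSLP\) tests, a precision schedule, and compilation of the adaptive queries by Theorem~\ref{push10:adaptive}. The root-location step, however, has a genuine gap.

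After the exponent search and polynomially many mantissa bits, you are within about \(2^{-\poly}\) of \(\alpha\). The Newton basin has radius about \(2^{-cH}\), with \(H=2^{\poly(n)}\). The candidates \(t_0+k2^{-cH}\) in an interval of width \(2^{-\poly}\) number about \(2^{cH-\poly}\), not \(2^{\poly}\). Binary search over them therefore costs about \(cH\) Sturm queries, which is exponential. This is exactly the \(O(H)\) bisection cost you ruled out one sentence earlier.

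The circuit-size claim fails for the same reason. The \(k\) found by that search has about \(cH\) unstructured bits, so \(t_0+k\cdot2^{-cH}\) has no short SLP. A point that close to a generic algebraic number must be produced by arithmetic, such as Newton-type iterates, not by dyadic search.

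Plain Newton does not close the gap either. Passing to the square-free part \(\bar p\) removes multiplicities but not clusters: distinct roots can be as close as \(2^{-H}\). For two roots at distance \(\delta\), starting at distance \(u\gg\delta\), the Newton step gives \(u'=u^2/(2u-\delta)\approx u/2\). Convergence is therefore only linear, and about \(H\) steps are needed before the cluster is resolved.

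The missing ingredient is an isolation and refinement procedure whose step count depends on \(\log(\Gamma+\log(1/\sigma))\) rather than on \(\log(1/\sigma)\), and which is robust to clusters. The paper adapts the accelerated subdivision of Sagraloff and Mehlhorn:
\begin{itemize}
\item Candidates \(\lambda=\xi-kP(\xi)/P'(\xi)\) are tried at the quarter points for every multiplicity \(k\le d\). The correct \(k\) jumps to within roughly the cluster diameter in one step; for the two-root example, \(k=2\) gives \(u'\approx\delta/2\).
\item An acceleration parameter \(N_I=2^{2^{\ell_I}}\) controls how far each step shrinks the interval.
\item Steps are accepted only when exact Sturm counts confirm them.
\item The published method rounds to a very fine grid with a floor, which would write exponentially long integers. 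The paper replaces this by clipping, snapping to parent endpoints, and small nonroot perturbations, so every endpoint remains a short rational SLP.
\end{itemize}
The subdivision forest then has \(O(d[\log(d+1)+\log(\Gamma+\log(1/\sigma)+2)])\) nodes. This is polynomial because \(\log(1/\sigma)\) has polynomial binary length. Refinement to error \(2^{-B}\) comes from isolating \(P(X-\varepsilon)P(X)P(X+\varepsilon)\) with \(\varepsilon<\min(2^{-B-2},\sigma/4)\).

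For nesting you also need what the paper proves with Rouch\'e's theorem. Evaluating the exact square-free formulas at approximate earlier values must preserve the number and order of real roots, with a linear per-gate precision schedule. This is routine to add, but your remark about the halving schedule does not address it.
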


\paragraph{Floor-free rational isolation.}
First let the coefficients be ordinary rational SLP values.
Polynomial-degree Euclidean and Sturm computations use polynomially
many field operations and PosSLP tests. Compute the square-free
part before isolating its roots. Let its degree be \(d\), let all
roots have magnitude at most \(2^\Gamma\), and let their separation
be at least \(\sigma\).

We adapt the accelerated subdivision analysis of
Sagraloff and Mehlhorn \cite{SM15}. An active interval \(I=(a,b)\)
has width \(w\), nonroot endpoints, and a parameter
\(N_I=2^{2^{\ell_I}}\), with \(\ell_I\ge1\).
Exact Sturm counts discard empty intervals and return single-root
intervals. A successful accelerated step retains all real roots in
an interval of width in \([w/(8N_I),w/N_I]\), and increments
\(\ell_I\). Otherwise split in the middle half and decrement
\(\ell_I\), with minimum one.

The published Newton test uses a floor to locate a small interval
on a very fine grid. We avoid that integer. Try each multiplicity
\(k=1,\ldots,d\) and each quarter point \(\xi\) with
\(P'(\xi)\ne0\), and form
\[
             \lambda=\xi-kP(\xi)/P'(\xi),
             \qquad r=w/(4N_I).
\]
Clip \([\lambda-r,\lambda+r]\) to the parent, rejecting width
below \(3r/4\). Snap any endpoint within \(r/2\) of the
corresponding parent boundary to that boundary. Move each remaining
endpoint by at most \(r/16\) to avoid roots; a grid of \(d+1\)
points suffices. Accept if the exact real-root count is unchanged.
The final width lies in \([5r/8,25r/8]\), and every endpoint
that moves does so by at least \(7r/16\), which is at least
one eighth of the final width. Also test intervals of width between
\(11w/(16N_I)\) and \(13w/(16N_I)\) adjacent to either parent
boundary, using nonroot endpoints.

These modifications preserve the geometric invariants of
\cite[Lemmas 24--26]{SM15}. In particular, if a cluster lies
in a subinterval of width at most \(w/(8192N_I)\) and all other
complex roots are at distance at least \(1024dN_Iw\), one trial
with its true multiplicity puts \(\lambda\) within
\(w/(128N_I)\) of the cluster center. The clipped interval retains
it, so the exact count accepts. The endpoint-movement bound is the
other essential invariant in the nonsplitting-chain argument.
Count splitting nodes by losses or divisions of the complex-root
sets in the one-circle disks, not just by changes in real-root
counts. There are \(O(d)\) such nodes. The resulting forest has
\[
 O\!\left(d\left[\log(d+1)+
       \log(\Gamma+\log(1/\sigma)+2)\right]\right)
\]
nodes. A single strict Cauchy interval suffices for initialization.
All endpoints are short rational SLPs; no long floor is written.
The supplementary root-isolation note details the invariant
transfer and provides an independent geometric audit.

To refine to error \(2^{-B}\), choose
\(\varepsilon<\min(2^{-B-2},\sigma/4)\) and isolate the roots of
\[
                 P(X-\varepsilon)P(X)P(X+\varepsilon).
\]
Every original real root lies between its two shifted copies.
Its isolating interval therefore has width below \(2\varepsilon\).
The threefold degree increase and the short power-of-two shift
preserve the arithmetic bound, polynomial in \(d\) and the binary
lengths of the magnitude, separation, and precision exponents.

\paragraph{Nested coefficients and stable root choices.}
We explain why nesting does not cause repeated circuit-size or
precision blowup. Let \(N\) bound the input DAG size and all
listed degrees. Its selected roots generate a field of degree at
most \(N^N\). Standard denominator and conjugate bounds give
a computable \(H_0=2^{\poly(N)}\) which bounds all semantic
gate magnitudes, inverse nonzero magnitudes, root separations,
and the conditioning of polynomial-degree gcd computations.
These bounds include the auxiliary rational expressions in every
possible Euclidean branch, whose sizes have one polynomial bound.
For example, integer norms clear inverse denominators and bound
nonzero values away from zero; a root gate increases field degree
by at most its listed degree. Two roots require an extension of
degree at most \(d^2\) over the coefficient field to bound their
difference. No full splitting field is needed.

Use approximations to the earlier selected roots to determine
the \emph{true} degree and square-free Euclidean branch, comparing
tested quantities with \(\pm2^{-H_0-1}\). Values determined to
be zero are set to literal zero in that branch. Then evaluate the
selected square-free coefficient formulas at the earlier rational
approximants. This produces a rational perturbation \(Q\) of the
true square-free polynomial \(P\); it does not square-free an
arbitrary perturbation of a multiple-root polynomial.

Suppose the leading coefficient of \(P\) has magnitude at least
\(2^{-H_0}\), its roots have magnitude at most \(2^{H_0}\), and
their separation is at least \(2^{-H_0}\). On the circle of radius
\(\eta=2^{-B-2}<2^{-H_0-2}\) about a root,
\[
 |P(z)|\ge2^{-H_0}\eta(2^{-H_0-1})^{d-1}.
\]
A coefficient error at most
\(2^{-B-8(d+1)(H_0+1)}\) is smaller than this margin after
evaluation. Rouch\'e's theorem gives exactly one \(Q\)-root in
each circle. Conjugation shows that real roots stay real; nonreal
roots have imaginary part at least half the separation and stay
nonreal. Thus real-root counts and their ordering are preserved.

If there are \(r\) root gates, put
\[
 C=100(N+1)(H_0+1),\qquad B_i=(r-i+1)C.
\]
Cache one approximation to gate \(i\), with error below
\(2^{-B_i}\). Earlier errors are smaller by a factor \(2^{-C}\),
which pays for coefficient conditioning and the Rouch\'e margin.
Rational isolation of \(Q_i\) then produces the required cache.
Apply its geometric bound using the magnitude and separation
just proved, rather than the artificial expanded denominators of
the cached approximants. Each root gate consequently adds only
polynomially many arithmetic operations. Earlier DAG nodes are
shared, not copied or recursively recomputed. The final gap and
conditioning bounds determine the output sign by PosSLP queries.
Theorem~\ref{push10:adaptive} combines them into one.

Polynomial-value degree is essential to this proof: Sturm
sequences, the multiplicity trials, and the subdivision forest
all pay for the degree. The exponentially large characteristic
polynomial in the general ETR construction is outside its scope.

There is also an exact-witness obstruction. The bounded quadratic
system
\[
 1\le x_i\le2,\qquad x_{i+1}=x_i^2\ (0\le i<m),
 \qquad x_m^2=x_0+1
\]
has a unique solution in this box. Its first coordinate has
minimal polynomial \(X^D-X-1\), \(D=2^{m+1}\), whose
Galois group is \(S_D\) by Selmer and Osada
\cite{Selmer56,Osada87}. In a tower of root adjunctions of
degree at most \(d\), the normal-closure Galois group has no
prime divisor greater than \(d\): each step embeds its new
kernel in a product of subgroups of \(S_d\). A field containing
this first coordinate has normal closure containing its splitting
field. Bertrand's postulate therefore forces \(d>D/2\).
Thus low-degree root gates do not describe every exact bounded
ETR witness.

This obstruction does not apply to approximation. For the same
example put \(u=D(x_0-1)\). Its root lies in \((3/5,1)\)
for \(D\ge4\), and Newton iteration from one on
\[
                  (1+u/D)^D-2-u/D
\]
has error \(e_{t+1}\le2e_t^2\), since its derivative is at
least \(3/4\) and its second derivative is below three there.
Each iteration has a short powering circuit. It yields rational
SLP approximations to all the coordinates to error \(2^{-H}\)
in \(O(\log H)\) iterations, even though the exact-root
description requires a large-degree gate. The example is also
trivially feasible; no decision lower bound is implied.

\subsection{A stochastic-system application}

Etessami, Stewart, and Yannakakis already prove that the strict
rational threshold problem for the least fixed point of a
probabilistic polynomial system is many-one equivalent to
\(\PosSLP\) \cite[Appendix C]{ESY13}.
Their exact Newton iteration has a polynomial warmup followed by
quadratic convergence, and hence reaches error \(2^{-H}\) in
\(\poly(n,\log H)\) arithmetic operations. This is a published
direct route to PosSLP, independent of our norm construction.

We use their proof also for signs of coordinate differences.
After removing coordinates equal to zero or one, the least
fixed point is nonsingular. Adjoining \(y=x_i-x_j\) preserves
isolation, so the same algebraic separation and Newton bounds
give exact difference signs in \(\Pclass^{\PosSLP}\).

\begin{proposition}\label{push10:games}
Rational least-fixed-point thresholds for maxPPSs, minPPSs,
and two-player branching simple stochastic games belong to
\[
             \NP^{\PosSLP}\cap\mathsf{coNP}^{\PosSLP}.
\]
Using our PosSLP bounds, they consequently belong to both
\(\MA^{\PP}\cap\mathsf{coMA}^{\PP}\) and
\(\Sigma_2^{\PP}\cap\Pi_2^{\PP}\).
\end{proposition}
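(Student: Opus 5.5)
The plan is to build short strategy certificates and verify them with the coordinate-sign oracle just described, which is in \(\Pclass^{\PosSLP}\). The standard structural facts, going back to Etessami--Yannakakis and refined in \cite{ESY12,ESY13}, are these. For a maxPPS, or a minPPS, the least fixed point \(q^*\) is attained by a deterministic static strategy \(\sigma\): choose one polynomial at each max (or min) variable. For two-player branching simple stochastic games, both players have optimal static deterministic strategies. The resulting \(q^*\) is the least fixed point \(q^*_{\sigma,\tau}\) of the induced ordinary PPS. Each strategy is a polynomial-length selection, so it can serve as a certificate.

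For the \(\NP^{\PosSLP}\) direction in the maxPPS case, deciding \(q^*_i>r\), I would guess \(\sigma\). I would then decide \((q^*_\sigma)_i>r\) by the published PPS threshold reduction to \(\PosSLP\) \cite[Appendix C]{ESY13}. Soundness holds because every strategy satisfies \(q^*_\sigma\le q^*\). For the complementary direction, deciding \(q^*_i\le r\), I would guess an optimal \(\sigma\) and verify its optimality locally. Take the induced system and delete the coordinates equal to \(0\) or \(1\), which is a polynomial-time qualitative step. For every max variable \(x\) and every alternative polynomial \(P\) at \(x\), check \(P(q^*_\sigma)\le (q^*_\sigma)_x\). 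This is exactly a sign test of a polynomial in coordinates of the nonsingular least fixed point. Adjoining a new coordinate \(y=P(x)-x_x\) keeps the system isolated, just as in the coordinate-difference remark, so each test lies in \(\Pclass^{\PosSLP}\). There are polynomially many such tests. Theorem~\ref{push10:adaptive} can merge them, though that is not needed here.

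The key lemma is that if no one-step improvement exists, then \(q^*_\sigma\) is a fixed point of the max-operator. Since \(q^*\) is the least fixed point, this gives \(q^*\le q^*_\sigma\), hence equality. The minPPS case is symmetric: a guessed strategy upper-bounds \(q^*\), and local optimality certifies the lower bound. For the games, I would guess \(\tau\) for the minimizer and locally verify that \(\tau\) is optimal against the maxPPS obtained by fixing \(\tau\); this gives the lower bound. The upper bound comes symmetrically from guessing \(\sigma\). Each verification is again one of the above \(\Pclass^{\PosSLP}\) tests. Non-strict and strict thresholds both reduce to sign tests of \(q_i-r\).

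The main obstacle is the local-optimality certificate for the \emph{lower} bound in the minimizing case. For minPPSs, a one-step-unimprovable strategy gives a fixed point of the min-operator, but that fixed point need not be the least one. Here I would follow \cite{ESY12}: first compute the qualitative set \(\{q^*_x=0\}\) and the set \(\{q^*_x=1\}\) in polynomial time. Then restrict to strategies that are \emph{proper} on the remaining coordinates, meaning the induced system has spectral radius less than one at \(q^*_\sigma\). For such strategies the restricted system has a unique fixed point in the relevant box, so local optimality forces equality with \(q^*\). Checking properness is subsumed by the same isolation and nonsingularity facts already invoked.

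Finally, Theorem~\ref{thm:posslp} places \(\PosSLP\) in \(\BPP^{\PP}\cap\Pclass^{\NP^{\PP}}\). This gives \(\NP^{\PosSLP}\subseteq\NP^{\BPP^{\PP}}\subseteq\MA^{\PP}\) by fresh amplified randomness, and \(\NP^{\PosSLP}\subseteq\NP^{\NP^{\PP}}=\Sigma_2^{\PP}\). Taking complements gives the coNP parts and yields the displayed classes.
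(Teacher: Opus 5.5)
Your maxPPS certificates, the minPPS upper-bound certificate (guess \(\tau\) and test \((q^*_\tau)_i\)), and the final class inclusions are fine and match the paper. The genuine gap is the step you flag yourself: the minPPS lower bound. Spectral radius below one for the \emph{guessed} policy at its own value \(z=q^*_\tau\), together with one-step unimprovability, does not force \(z=q^*\). Take \(x=\min(u,v)\), \(u=\tfrac34w+\tfrac14\), \(w=x^2\), \(v=\tfrac12x+\tfrac12\). Its least fixed point is \((x,u,w,v)=(\tfrac13,\tfrac13,\tfrac19,\tfrac23)\), so the qualitative step removes nothing. The policy choosing \(v\) has least fixed point \(z=(1,1,1,1)\), and its Jacobian there has spectral radius \(2^{-1/2}<1\). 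At \(z\) we have \(\min(u,v)=1=z_x\), so there is no strict improvement. Your verifier would therefore accept this policy as a certificate that \(q^*_x\ge\tfrac12\), which is false. Your assertion that the restricted system has a unique fixed point in the box is a statement about the induced PPS. It says nothing about how \(z\) relates to \(q^*\).

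What is missing is the check \(0<z<1\) on the coordinates where \(0<q^*<1\); your example policy fails it. You also need a proof that uses an \emph{optimal} policy \(\tau^*\) rather than the guessed one, as in the paper. Local optimality of the guessed policy gives \(P_{\tau^*}(z)\ge P(z)=z\). In simple normal form \(P_{\tau^*}\) is quadratic, so exactly \((I-P_{\tau^*}'((q^*+z)/2))(z-q^*)=z-P_{\tau^*}(z)\le0\). By \cite[Theorem 4.6(i)]{ESY12}, this matrix has a nonnegative inverse when \(z<1\), hence \(z\le q^*\), and leastness gives equality. Properness of the guessed policy is then unnecessary.

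Your game paragraph also reverses the roles. Fixing the minimizer's \(\tau\) gives a maxPPS whose value is at least \(q^*\), so it can only certify upper thresholds. A lower threshold needs the maximizer's \(\sigma\) together with the repaired minPPS certificate for the system with \(\sigma\) fixed. An upper threshold needs \(\tau\) together with your maxPPS certificate. Working in these one-player systems is also what makes the polynomial-time \(0/1\) preprocessing available.
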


\begin{proof}
For a one-player system, use the polynomial-time qualitative
algorithm to remove zero and one coordinates, leaving
\(0<q^*<1\). Guess a pure policy \(\sigma\), let \(z\) be
the PPS least fixed point after fixing it, and verify
\(0<z<1\) and \(P(z)=z\). In simple normal form, a choice
row is \(\max(x_i,x_j)\) or \(\min(x_i,x_j)\), so policy
consistency uses only the coordinate comparisons just described.

For maxPPS, \(z\le q^*\) by the policy comparison, and
\(q^*\le z\) by leastness. For minPPS, take an optimal
policy \(\tau\), for which \(q_\tau=q^*\). Then
\(P_\tau(z)\ge z\), and the exact quadratic identity gives
\[
 (I-P_\tau'((q^*+z)/2))(z-q^*)=z-P_\tau(z)\le0.
\]
The inverse in \cite[Theorem 4.6(i)]{ESY12} is nonnegative
for every \(0\le z<1\). Thus \(z\le q^*\), and equality
again follows. An optimal policy supplies a valid certificate
for either threshold side.

For two players, optimal pure memoryless policies exist
\cite[Section 5]{ESY12}. To certify a lower threshold, guess
max's policy and a one-player certificate for the resulting
minPPS. For the complementary upper threshold, guess min's
policy and a certificate for the resulting maxPPS. These
existential witnesses concatenate. They do not require an
identity such as \(\NP^{\NP^A}=\NP^A\).

Finally, a fixed complete witness can be checked with bounded
error using our \(\BPP^{\PP}\) PosSLP algorithm, giving the
two Merlin--Arthur inclusions. Substituting
\(\PosSLP\subseteq\Pclass^{\NP^{\PP}}\) gives the displayed
second-level polynomial-hierarchy inclusions.
\end{proof}

The least-fixed-point condition in this certificate is essential.
For example, \(x=\max(x,1/2)\) has every point of \([1/2,1]\)
as a fixed point. Its least fixed point is only \(1/2\).
These stochastic systems provide an application below the general
ETR bound; no reduction of arbitrary ETR to them is established.

\subsection{Boolean oracle computations above ETR}

The randomized ETR bound also has consequences for stronger
Boolean oracle classes, without improving its numerical level.

\begin{corollary}\label{push10:oracle-extensions}
The proposed ETR containment implies
\[
 \BPP^{\ETR}\subseteq\BPP^{\Clevel3},\qquad
 \PP^{\ETR}\subseteq\Clevel4,\qquad
 \PH^{\ETR}\subseteq\Pclass^{\Clevel4}\subseteq\Clevel5.
\]
\end{corollary}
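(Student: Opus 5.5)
The plan is to apply Theorem~\ref{push6:etr} oracle by oracle, and to place each ETR oracle's randomness inside the outer machine. Each corollary asserts that an ETR oracle can be replaced by the $\BPP^{\Clevel3}$ procedure without a new level. The core lemma is that ETR has a bounded-error $\Clevel3$-oracle algorithm whose error can be driven to $2^{-p(n)}$ for any fixed polynomial $p$. Amplification is standard. The trials in the proof of Theorem~\ref{push6:etr} and the sign tests of Lemma~\ref{push4:sign} are repeated, and the majority is taken. Since ETR is a decision problem, every query answer is defined. There is no promise-oracle issue.

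For $\BPP^{\ETR}$, I would simulate a polynomial-time machine making at most $q(n)$ adaptive ETR queries. Each query is answered by the amplified algorithm with error $2^{-q(n)-3}$ and fresh coins. Conditional on all earlier answers being correct, the next query string is fixed, so a union bound shows the simulated transcript is correct except with probability $1/8$. Combined with the outer machine's error, this gives a $\BPP$ machine with a $\Clevel3$ oracle. The argument is the same adaptive union bound used at the end of the proof of Lemma~\ref{push4:sign}.

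For $\PP^{\ETR}$, the key is to avoid placing a $\PP$ machine over a $\BPP^{\Clevel3}$ predicate. Instead I would merge coins, exactly as in the $\MA^{\Clevel4}\subseteq\Clevel5$ argument and the GCR threshold proof. Fix a $\PP$ machine $M$ with $r(n)$ coins. Its acceptance count is $2^{r(n)-1}+c$ with $c\ne0$, after the usual shift so that there are no ties. Amplify every simulated oracle call so that each computation path has transcript error at most $2^{-r(n)-4}$. Then let one $\PP^{\Clevel3}$ machine guess both the path of $M$ and the simulation coins. The counted total differs from $2^{-r'}$ times the ideal count by less than $1/4$ of the gap, so the threshold decision is unchanged. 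The main obstacle is arranging that exact gap accounting: the per-path failure mass must be below $2^{-r(n)}$, and every path must use the same padded number of coins. Both follow from uniform polynomial bounds on the query lengths, since the amplification cost is polynomial in $r(n)$. This gives $\PP^{\Clevel3}=\Clevel4$.

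For $\PH^{\ETR}$, I would use relativized Toda. By the first part, $\Sigma_k^{\ETR}\subseteq\Sigma_k^{\BPP^{\Clevel3}}$. Then $\PH^{\BPP^{A}}\subseteq \PH^{\PP^{A}}$, with $A=\Clevel3$, because $\BPP^A\subseteq\PP^A$ relativizes. Next, $\PH^{\PP^A}\subseteq\Pclass^{\PP^{\PP^A}}$ by relativized Toda, applied with oracle $\PP^A$. This yields $\Pclass^{\Clevel4}$. The final inclusion $\Pclass^{\Clevel4}\subseteq\PP^{\Clevel4}=\Clevel5$ is immediate. I expect no difficulty in this part beyond keeping the oracle order correct, as the paper stresses.
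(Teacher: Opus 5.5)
Your first two parts are correct and follow the paper's argument: amplification with an adaptive union bound for $\BPP^{\ETR}$, then a shifted threshold and merged coins for $\PP^{\ETR}$, which is relativized lowness of $\BPP$ for $\PP$. The third part, however, miscounts oracle levels. With $A=\Clevel3$ you have $\PP^A=\Clevel4$. Relativized Toda applied to an oracle for $\PP^A$ therefore gives $\PH^{\PP^A}\subseteq\Pclass^{\PP^{\PP^A}}=\Pclass^{\PP^{\Clevel4}}=\Pclass^{\Clevel5}$, not $\Pclass^{\Clevel4}$. As written, your chain only proves $\PH^{\ETR}\subseteq\Clevel6$. The level is lost because you weaken the $\BPP^{\Clevel3}$ oracle to $\PP^{\Clevel3}$ \emph{before} applying Toda, so Toda then places a new $\PP$ layer on top of an oracle that already contains one.

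The fix is to apply Toda first and absorb the randomness afterwards, which is exactly what your second inclusion provides. Relativized Toda with the ETR language itself as the oracle gives $\PH^{\ETR}\subseteq\Pclass^{\PP^{\ETR}}$. Your inclusion $\PP^{\ETR}\subseteq\Clevel4$ then yields $\Pclass^{\Clevel4}\subseteq\PP^{\Clevel4}=\Clevel5$. Equivalently, for an oracle $B\in\BPP^{A}$, use $\PH^B\subseteq\Pclass^{\PP^B}$ together with $\PP^B\subseteq\PP^A$ from relativized lowness, rather than the weaker $B\in\PP^A$. A minor point: $\Sigma_k^{\ETR}\subseteq\Sigma_k^{\BPP^{\Clevel3}}$ follows directly from Theorem~\ref{push6:etr}, not from your first inclusion.
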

\begin{proof}
Amplify every ETR oracle simulation and take a union bound over
the polynomially many adaptive queries. This proves the first
inclusion. For the second, a PP machine with \(p(n)\) random bits
has acceptance probabilities on the grid \(2^{-p(n)}\).
Shift its decision threshold from \(1/2\) to
\(1/2+2^{-p(n)-1}\), which preserves its language and gives
a margin \(2^{-p(n)-1}\) on both sides, including ties.
Amplify the ETR simulations until their total error is below
\(2^{-p(n)-3}\). The resulting \(\PP^{\Clevel3}\) machine
has the same threshold decision. This is the usual relativizing
lowness of BPP for PP, applied with the existing counting oracle.

Finally, relativized Toda gives
\(\PH^{\ETR}\subseteq\Pclass^{\PP^{\ETR}}\), and the
second inclusion finishes the proof.
\end{proof}

These are Boolean oracle computations. They do not establish a
containment for alternating blocks of real quantifiers, and they
do not assert lowness for every level of a fully relativized
counting hierarchy. In particular, the concrete equality supplied
by the proof is \(\PP^{\Clevel3\oplus\ETR}=\Clevel4\).

\subsection{A two-adic product-flattening lemma}

There is an arithmetic improvement at a fixed small characteristic.
It does not cover all the moduli needed for sign reconstruction.

\begin{proposition}\label{push10:two-adic}
Let \(\mathcal R=\operatorname{GR}(2^a,e)\), with \(a,e\) of
polynomial value and a supplied short ring presentation. Suppose
an exponentially indexed family of units \(u_i\in\mathcal R\)
has raw GapP representatives for its coordinates. If
\(P=\prod_i u_i\) belongs to the base ring \(\Z/2^a\Z\), then
\(P\) also has a raw GapP representative.
\end{proposition}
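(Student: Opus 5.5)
The plan is to split each unit $u_i$ into its Teichm\"uller part and its principal part, as in the proof of Lemma~\ref{lem:ring-products}. The new point is to handle each part by a \emph{polynomial} identity in the raw coordinates, so that no counting value is ever retrieved.

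\textbf{Principal parts.} First I pass to an inverse-free form. The map $w\mapsto w^{Q^{a-1}}$, with $Q=2^e$, is a polynomial in the coordinates whose exponent has polynomial bit length. It gives the Teichm\"uller part $t_i$. Rather than dividing, I use the Frobenius-type decomposition $u_i=t_i\,z_i$ with $z_i\in 1+2\mathcal R$. Then
\[
 P=\Bigl(\prod_i t_i\Bigr)\Bigl(\prod_i z_i\Bigr).
\]
Write $z_i=1+2V_i$. The subset expansion after Lemma~\ref{push8:small-char},
\[
 \prod_i(1+2V_i)\equiv\sum_{|S|<a}2^{|S|}\prod_{i\in S}V_i \pmod{2^a},
\]
is a raw GapP expression once $V_i$ has raw GapP coordinates. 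To get $V_i$ I need $z_i=u_it_i^{-1}=u_i t_i^{Q-2}$, a fixed polynomial of constant-size description in the coordinates of $u_i$. Lemma~\ref{push8:small-char}, with $a$ and $e$ of polynomial value, should then give raw representatives of $(z_i-1)/2$ modulo $2^{a-1}$. I would use the falling-factorial basis and apply it coordinatewise, since the ring has fixed rank $e$.

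\textbf{Teichm\"uller parts.} Here the base-ring hypothesis enters. The Teichm\"uller group is cyclic of odd order $Q-1$, and its intersection with $\Z/2^a\Z$ is only $\{1\}$, because the Teichm\"uller lift of $\F_2^*$ is trivial. Taking Teichm\"uller parts of $P\in\Z/2^a$ therefore gives
\[
 \prod_i t_i = P^{Q^{a-1}}\text{'s Teichm\"uller component}=1 .
\]
So $P=\prod_i z_i$ exactly, and the whole product is given by the raw principal-unit expansion above. Its constant coordinate is the required raw GapP representative, and the other coordinates vanish automatically. I would verify that the Teichm\"uller decomposition is a multiplicative homomorphism $\mathcal R^*\cong\mathcal T\times(1+2\mathcal R)$, so the factorization of $P$ is compatible with the factorwise ones.

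\textbf{Main obstacle.} The hard step is producing raw GapP coordinates for $V_i$. This requires evaluating the polynomial $u\mapsto (u\,u^{(Q-2)Q^{a-1}}-1)/2$, which has exponentially large degree, on raw counts. Lemma~\ref{push8:small-char} handles only a fixed SLP in constantly many count arguments modulo $2^a$. To reduce the degree, I would first replace $t_i^{-1}$ by a canonical odd-order correction that is computable from the residue of $u_i$ modulo $2$. That residue is a polynomial-time parity of a raw count's low bit, which is an issue to check. Failing that, I would work modulo $2^{a+c}$ for a constant $c$, so that the division by $2$ is an exact operation on falling-factorial coefficients.
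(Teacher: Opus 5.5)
\textbf{Verdict: genuine gap.} The group-theoretic skeleton is sound. You correctly use $\mathcal R^*\cong\mathcal T\times(1+2\mathcal R)$ and correctly observe that the Teichm\"uller part of the base-ring unit $P$ lies in $\mathcal T\cap(\Z/2^a\Z)^*=\{1\}$, so $P=\prod_i z_i$. The gap is the step you flag as the main obstacle: it is the whole content of the proposition, and none of your proposed workarounds closes it.

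\textbf{Why your tools do not reach the principal parts.}
\begin{itemize}
\item The map $u\mapsto u\,u^{(Q-2)Q^{a-1}}$ has degree about $Q^a$. It also depends jointly on all $e$ coordinates of $u$, and $e$ is only of polynomial value, not fixed. Lemma~\ref{push8:small-char} needs a constant number $r$ of count arguments. In $e$ variables the falling-factorial expansion has $(2a)^e$ terms. The finite-difference formula for each coefficient is itself an exponential sum that must then be divided by $\prod_j k_j!$, and that division is not a raw operation. Applying the lemma ``coordinatewise'' does not help, because every output coordinate depends on every input coordinate. The description is also polynomial-size in $e$ variables, not constant-size.
\item Your fallback fails as stated. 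The low bit of a raw count is in general a $\oplus\Pclass$ question, not a polynomial-time computation. Retrieving it by an oracle is exactly the nonlinear retrieval the proposition is meant to avoid.
\item $V_i=(z_i-1)/2$ is the wrong object. A raw representative is only a congruence, so it cannot be halved, and passing to $2^{a+c}$ does not change that. Expand instead $\prod_i\bigl(1+(z_i-1)\bigr)=\sum_{|S|<a}\prod_{i\in S}(z_i-1)$. This needs no division, because each $z_i-1\in2\mathcal R$.
\end{itemize}

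\textbf{The missing idea.} The paper replaces the exponential-degree correction $u/\mathrm{Teich}(u)$ by the Frobenius twist $v=u^2/\sigma(u)$.
\begin{itemize}
\item $v$ is principal: $\sigma$ reduces to squaring on the residue field, so $v\equiv1\pmod 2$.
\item $v$ has polynomial degree in the coordinates. $\sigma$ is base-linear, and $\sigma(u)^{-1}=\sigma(u^{-1})$ comes from the odd scalar norm $n(u)=\prod_{j<e}\sigma^j(u)$ via $u^{-1}=\bigl(\prod_{1\le j<e}\sigma^j(u)\bigr)\sum_{k<a}(1-n(u))^k$. This is only polynomially many products and sums of raw counts.
\item The base-ring hypothesis enters as $\sigma(P)=P$, giving $\prod_i v_i=P^2/\sigma(P)=P$.
\end{itemize}

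\textbf{Your route can also be repaired.} Modulo $2^a$, the univariate map $x\mapsto x^{2^{a-1}}$ is the indicator of oddness. So Lemma~\ref{push8:small-char} with $r=1$ gives raw representatives of each coordinate parity. Then $t_i^{-1}=\sum_{x\in\F_Q^*}\mathrm{Teich}(x)^{-1}\mathbf 1[\bar u_i=x]$ is a raw exponential sum of products of $e$ such indicators. That in turn gives raw coordinates for $z_i=u_it_i^{-1}$. None of this appears in your proposal, however.
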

\begin{proof}
Let \(\sigma\) be the Frobenius automorphism. It is a linear map
over the base ring with polynomial-time computable coefficients.
For a unit \(u\), its norm
\(n(u)=\prod_{j<e}\sigma^j(u)\) is an odd scalar. Hence
\[
 u^{-1}=\left(\prod_{1\le j<e}\sigma^j(u)\right)
           \sum_{k<a}(1-n(u))^k\pmod{2^a}.
\]
This uses polynomially many products and sums of the raw coordinate
counts and therefore retains raw GapP representatives. The degrees
and the number of counting-path copies are polynomial.

Put \(v_i=u_i^2/\sigma(u_i)\). Its coordinates have the same
representation, and \(v_i\equiv1\pmod2\). Thus
\[
 \prod_i v_i=
 \sum_{k<a}\ \sum_{i_1<\cdots<i_k}
            \prod_{j=1}^k(v_{i_j}-1)\pmod{2^a}.
\]
Each surviving subset has polynomial description length, and only
polynomially many counting computations are multiplied in a term.
The entire sum is raw GapP. Finally,
\(\prod_i v_i=P^2/\sigma(P)=P\), because \(P\) is in the
base ring. This proves the claim without extracting a root or
retrieving a canonical counting value.
\end{proof}

The base-ring condition is substantive. A homotopy norm evaluated
at an extension-valued Fourier node usually does not satisfy it.
Also, sign reconstruction uses arbitrary polynomial-bit moduli,
whose prime factors need not have polynomial numerical value.
At an odd prime the analogous Frobenius twist produces
\(P^{p-1}\) and loses the base residue-field component. The lemma
therefore does not lower the general ETR or PosSLP bound.

An independent tool in the same direction is an exact short
Mahler expansion for division by a small power of two. Let
\(m=2^b\) have polynomial value and write
\(\lfloor n/m\rfloor=\sum_k c_k\binom nk\) for \(n\ge0\).
The generating function for the finite-difference coefficients is
\[
 \sum_kc_kz^k=
 \frac{z^m}{(1+z)^m-z^m}=\frac{z^m}{1+2Q(z)},
 \qquad \deg Q=m-1.
\]
Modulo \(2^a\), truncate the geometric series after \(a\) terms.
Only degrees at most \(m+(a-1)(m-1)\) remain. Since
\(\binom{F}{k}\) is GapP for a GapP function \(F\) and
polynomial-value \(k\), this gives raw modular representatives
for the corresponding quotient. Signed inputs can first be
offset by a sufficiently large known power of two. The size
condition is on \(2^b\), not merely on \(b\); it does not
allow arbitrary bit extraction from a count for free.

\subsection{Fixed blocks of real quantifiers}
\label{push10:real-blocks}

There is a separate extension on the real side.  A fixed number of
quantifier blocks is different from a Boolean oracle computation over
$\ETR$: the parameters of a later block may themselves be algebraic
sample points.  Two additional interfaces make this distinction
manageable.  We describe them here; the supplementary block-elimination
notes give the full construction and its specialization audit.

\begin{theorem}\label{push10:rph}
For each fixed $w$, truth of integer polynomial sentences with $w$
blocks of real quantifiers belongs to a fixed level of CH.  Consequently
the discrete, constant-free real polynomial hierarchy belongs to CH.
\end{theorem}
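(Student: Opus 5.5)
We prove Theorem~\ref{push10:rph} by induction on the number of quantifier blocks. The induction produces coherent, uniformly indexed families of sample points together with CH predicates for the signs of polynomials at those points. A Boolean ETR oracle is not enough, because an inner block sees algebraic parameters. The starting point is the block-elimination scheme of Basu, Pollack, and Roy \cite[Algorithms 14.1--14.2]{BPR06}. For a block of variables $Y$ with parameters $X$, it gives a finite family of univariate representations $(q,g_0,\ldots,g_s)$ with coefficients polynomial in $X$. For every $X$ in a connected sign cell of the controlling polynomials, the real roots of these representations meet every connected sign component of the fiber. Degrees, heights, and family sizes stay singly exponential for fixed $w$. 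We keep this geometric statement unchanged and replace only its sequential arithmetic.

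\textbf{Arithmetic of the families.} Each member of a family has a polynomial-length index. Its coefficients are obtained from the previous level by resultants of explicitly indexed forms. We implement these with the homotopy and Poisson-product construction (Lemmas~\ref{lem:branches}, \ref{lem:ring-products}) and Teichm\"uller Fourier extraction over short Galois rings. This yields the ledger of Section~\ref{app:new-interfaces}: if the incoming coefficients are available at level $h$, the outgoing ones are at level $h+O(1)$.

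Specializations can drop degrees or create common factors. We handle this with the padded resultant~\eqref{hier:padded}, which records the true degree and the gcd degree without sign-dependent truncation. This metadata is what guarantees that roots continue without splitting or collisions across a connected parameter cell.

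The BPR construction also uses formal infinitesimals. We specialize all of them once, for the entire sample tree, as $\epsilon_i=\eta^{(M+1)^{i-1}}$ with $\eta=2^{-B}$, where $M$ and $B$ come from fixed-dimensional elimination bounds on the whole representation formula. Because a single specialization serves every branch, the finite tuple and the formal tuple lie in the same cell.

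\textbf{Addressing roots and testing signs.} Each selected root needs a short, reusable name. We hash its full Thom vector and compute bucket parities by resultant norms, whose signs are obtained by the method of Lemma~\ref{basic:real-select}. A hash that maximizes the number of odd buckets is injective, so a maximizing hash together with an occupied bucket names exactly one root. All later sign tests refer to that same address, which keeps an inner block's parameters coherent with the outer choices.

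A complete candidate tuple is then recoded into a single finite algebra. Sign tests of the matrix formula become guarded norm-sign questions, which are answered by integer-sign reconstruction (Lemma~\ref{push4:sign} or Lemma~\ref{lem:beta-sign}).

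\textbf{Assembling the sentence.} Existential real blocks become existential choices over valid addresses. Universal blocks become universal choices, with validity as the antecedent of an implication. The number of Boolean alternations is fixed. Each predicate sits in a fixed level of CH, and relativized Toda absorbs the alternation, giving a fixed level for each fixed $w$ (with the explicit count~\eqref{hier:ledger}).

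\textbf{Main obstacle.} The hard part is coherence: showing that sign equality on the sampled points faithfully represents truth on connected cells of the parameter space. Equality of sign vectors on disconnected sets proves nothing. The plan is to derive this from the gcd and degree metadata via upper semicontinuity of local gcd contributions, and from the single common specialization of the infinitesimals. Both must be checked against every guard in the tree at once.
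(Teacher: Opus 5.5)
Your proposal is correct and follows essentially the same route as the paper: block sampling in the style of Basu--Pollack--Roy with its sequential arithmetic replaced by counting-hierarchy resultant computations, degree and gcd metadata from the padded (deformed) resultant to keep root branches coherent on connected parameter cells, a single global specialization $\epsilon_i=\eta^{(M+1)^{i-1}}$ of the infinitesimals, Thom-vector hashing in which a maximizing hash is automatically injective and so yields a reusable root address, and a fixed Boolean alternation absorbed by relativized Toda. One step you leave implicit is that the resultant parity test counts roots with multiplicity, so each specialized defining polynomial must first be made square-free (the paper does this by specialized gcd preprocessing using the earlier sample's sign oracle) before the bucket parities mean anything.
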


Here the input sentence has polynomial description length; the number
of variables in a block is unrestricted.  The terminology for the
discrete real polynomial hierarchy is that of \cite{HMMP25}.  The
following argument establishes the structural containment;
Section~\ref{sec:real-hierarchy} gives its explicit numerical refinement.

\paragraph{Short addresses for real roots.}
Let $q$ be square-free of degree at most $D$, with positive leading
coefficient.  Its coefficients may be integers of singly exponential
bit length, or bounded rational expressions in a fixed number of
previously selected algebraic roots.  In the latter case assume a CH
oracle for signs of the resulting bounded polynomial expressions in
those earlier roots.  This is stronger than knowing the signs of the
individual input coefficients.
All degree and expression-height bounds are singly exponential, and
the indexed integer coefficient arrays belong uniformly to one fixed
level of CH.

Use the full derivative Thom vector of a real root $\alpha$ and the
integer hashes
\[
 T_{p,a}(\alpha)=\sum_{j=1}^{D}
       (a^j\bmod p)\,\sgn(q^{(j)}(\alpha)),
 \qquad D^3<p<2(D^3+1),\quad 0\le a<p.
\]
Enumerating all integer moduli in this interval is harmless.  At least
one is prime, and the usual polynomial hash argument gives an
injective member: each pair of Thom vectors excludes at most $D$
parameters modulo that prime.

For the constant-depth implementation, use the product filter
\[
 P(y)=\prod_{k=-L}^{L}
 (2^{\max(0,-k)}y+2^{\max(0,k)})^m,
 \qquad R(y)=\frac{P(y)-P(-y)}{P(y)+P(-y)}.
\]
Here $2^{-L}\le |v|\le2^L$ bounds every nonzero derivative or test
value, $P_{\max}=2(D^3+1)$, and
$m=\lceil\log_2(64DP_{\max}+64)\rceil$.  The denominator is positive
on the real line, $R(0)=0$, and the error from the sign of each
nonzero value is below $1/(16DP_{\max})$.  One dyadic scale within a
factor of two supplies a factor at most $1/3$ in $|P(-y)/P(y)|$.
The degree, height, and number of factors are singly exponential,
while the product has constant unbounded-fan-in arithmetic depth.
Thus membership in a hash bucket becomes a strict polynomial sign
at the roots of $q$.  If $h$ is negative exactly on
that bucket and positive at the other real roots, the sign of the
trailing nonzero coefficient of
\[
 \Res_X(q(X),h(X)+u)
\]
is $(-1)$ to its cardinality.  Zeros at nonreal roots occur in
conjugate pairs and do not change this sign.  If $n_t$ are the bucket
sizes and $C$ is the number of odd buckets, then
\[
 r-C=\sum_t\bigl(n_t-(n_t\bmod2)\bigr),
\]
where $r$ is the number of real roots.  Consequently the maximum of
$C$ over the hash family is $r$, and \emph{every} maximizing hash is
injective.  Choose the first such hash.  An occupied bucket is now a
short, coherent name for one real root.

To test $g(\alpha)<0$, keep this same hash and bucket and add a
rational sign filter for $g$ to the bucket predicate.  The corresponding
parity is the desired truth value.  An exponentially indexed family of
test polynomials therefore has a coherent CH-accessible sign profile;
one never guesses the whole profile.  Resultants, polynomial products,
and square-free preprocessing have a constant-depth implementation.
For the last operation, the specialized trailing- and leading-coefficient
selections in \cite[Lemma 6.13]{AGS26} are made using the earlier
sample's sign oracle.  A fixed number of algebraic extensions adds
only a fixed number of CH levels.

The domain conditions matter.  A zero specialized defining polynomial
is discarded; a nonzero constant gives no root addresses.  For rational coordinates $g_i(\alpha)/g_0(\alpha)$,
test $g_0(\alpha)\ne0$ at the addressed root; global coprimality of
$q$ and $g_0$ is unnecessary and can discard valid samples.  Clear
denominators by even powers on this domain.

\paragraph{Uniform block sampling.}
The remaining task is to supply the sample polynomials without a
sequential elimination computation of exponential length.  The block
elimination construction of Basu, Pollack, and Roy
\cite[Algorithms 14.1 and 14.2]{BPR06} provides suitable geometry.
Its algebraic operations require the following replacements.

First, let $Q$ be a nonnegative sampling polynomial with zero set
in the ball of radius $1/c$, where $0<c\le1$.  Choose an even
$d>\deg_X Q+2$ and set
\[
 F=\zeta c^d\left(X_1^d+\sum_{i=2}^k(X_i^d+X_i^2)\right)
     -\zeta(2k-1)+(1-\zeta)Q.
\]
Use
\[
 G_1=dF-\sum_{i=2}^{k}X_i\partial_iF,
 \qquad G_i=\partial_iF\quad(2\le i\le k).
\]
The leading monomials are $X_1^d,X_2^{d-1},\ldots,X_k^{d-1}$,
with a common nonzero scalar coefficient, and every remainder has
degree below $d-1$.  This simultaneous identity gives the required
special basis.  Its homogenizations have no common point at infinity.
The AGS resultant therefore gives its norm polynomial and, by
derivatives with respect to a linear projection, its coordinate
numerators.  The scalar denominator depends only on the deformation
parameters, not on the free real parameters.

Enumerate the projection choices, infinitesimal coefficient indices,
and derivative orders used for limits.  For each fixed parameter
value, a separating choice supplies the required bounded limits.
Repeated limiting roots are handled by the appropriate derivatives
and the rootwise denominator guard.  Other choices can supply extra
ambient real points; these cannot create a sign condition that is
not realized in the fiber.

Second, restricted subresultant elimination can be replaced by
coefficient lists of ordinary resultants.  For nonzero $f,g$ over a
characteristic-zero field and an independent indeterminate $\lambda$,
\[
 \deg\gcd(f,g)=
 \operatorname{ord}_t\Res_X(f(X)+t,g(X)+\lambda t).
\]
At a common root of multiplicities $m,n$, the local order is
$\min(m,n)$; the indeterminate $\lambda$ prevents cancellation when
the two orders coincide.  Include every degree truncation to handle
specialization drops.  Keeping the degrees and these gcd degrees
constant preserves the real root branches and all required signs on
each connected parameter cell.  All of these coefficient lists use
a fixed number of resultant and coefficient-selection layers.
Each admissible root branch and its rational image are continuous on
such a cell.  Their fixed input signs therefore keep the image in one
connected sign component, on which the continuation truth is constant.
This supplies the inductive invariant for subsequent blocks; equality
of disconnected fibers' sign sets alone would not suffice.

Finally, fix the number $w$ of real quantifier blocks.  All the
degree, height, and family-size bounds after $w$ block eliminations
are singly exponential in the input length.  A full sample tuple has
$w$ root variables.  Represent a node by the appropriate coordinate
prefix of a full valid tuple, and match parent and child by coordinate
equality.  Expanding the finite choices of representation indices in
the resulting truth formula still gives singly exponential size.
The formula uses at most $w^2$ root variables and $2w$ infinitesimals.
Effective complete CAD in this fixed dimension bounds all projection
degrees and coefficient heights by singly exponential quantities
$D_*,H_*$.  We use this bound without computing the CAD.

Put $r=2w$, $M=rD_*+2$, and
$B_*=H_*+\lceil r\log_2(D_*+1)\rceil+5$.  Set $\eta=2^{-B_*}$ and
\[
 \epsilon_i=\eta^{(M+1)^{i-1}}\qquad(1\le i\le r).
\]
These constants have short rational SLPs.  On the connected region
$0<\epsilon_1<2\eta$ and
$0<\epsilon_{i+1}<\epsilon_i^M$, each nonleading infinitesimal monomial
is at most $\epsilon_1$ times the leading one.  The degree and height
bounds make the leading term dominate in every nonzero cumulative
projection polynomial in the infinitesimal variables.  The region consequently lies in one open
infinitesimal-base CAD cell.  The formal infinitesimal tuple and the
displayed finite tuple give the same sampled-tree truth there, using
real-closed-field transfer for the formal tuple.
This step precedes hashing and preserves truth and validity; it does
not assert a finite real gap for genuinely infinitesimal values.

At the finite specialization, apply the root-address construction
successively along each full sample tuple.  Each tuple has polynomial
description length, and all its atom signs and rootwise validity
conditions are available in CH.  Coordinate comparison between two
tuples uses at most $2w$ successive root extensions and remains in CH.
Replace each real block by the corresponding Boolean choice of a
full valid tuple with the required coordinate prefix.  Invalid choices
are excluded by conjunction for an existential block and implication
for a universal block.  Duplicate representations do not affect
either quantifier.  The fixed number of Boolean quantifiers still
gives a language in CH, proving Theorem~\ref{push10:rph}.

This construction is intended for each \emph{fixed} number of real
blocks.  It gives no bound independent of $w$, and no improvement
to the numerical $C_4P$ bound for the existential case.  The
supplementary notes separate the root-address lemma, the block
sampling construction, and the finite-specialization argument so
that these additional interfaces can be reviewed independently.

\section{Structural affine depth and restricted integer algorithms}\label{push11:sec}

The next refinements concern the dependence on the number of real
blocks and several restricted arithmetic constructions. They do not
lower the general bounds \(\ETR\subseteq\Clevel4\) and
\(\PosSLP\subseteq\BPP^{\PP}\cap\Pclass^{\NP^{\PP}}\).

\subsection{An affine level for real quantifier blocks}

\begin{theorem}\label{push11:affine}
There are absolute constants \(A,B\) such that, for every fixed
\(w\), truth of integer polynomial sentences with \(w\) real
quantifier blocks belongs to \(\Clevel{Aw+B}\).
\end{theorem}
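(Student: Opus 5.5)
The plan is to revisit the structural proof of Theorem~\ref{push10:rph} and show that the CH level grows by a constant amount per real block, rather than by an amount that depends on earlier levels. The invariant will be an \emph{affine-depth} statement. After eliminating \(j\) blocks, every coefficient array of the sampling polynomials \(q\), the coordinate numerators \(g_i\), and the guard and degree/gcd metadata should be CH-accessible at level at most \(h_j=h_0+cj\) for an absolute constant \(c\). Moreover, accessibility should hold in the strong sense that short residues, or values at short Fourier nodes, are available in \(\FP^{\Clevel{h_j}}\). We then prove \(h_{j+1}\le h_j+c\) and add a constant terminal cost.

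For the inductive step, we first form the critical system of the block deformation \(F\) from the BPR sampling step. Its values and derivatives at short nodes are obtained from the incoming coefficient arrays by one dense monomial sum, at level \(h_j+1\). Next, we apply the AGS resultant through the flattened Poisson product and a joint Fourier extraction. By the coefficient/value interface of Section~\ref{app:new-interfaces}, this costs two further levels, \(\max(r,1)+2\). The coordinate numerators come from derivatives with respect to the projection, via the kernel \eqref{hier:kernel}, inside the same Fourier sum. For the restricted elimination producing degree and gcd data, we use the padded resultant \eqref{hier:padded}, which adds a fixed number of levels and requires no adaptive truncation. Two homogeneous coefficient compositions, needed to express the next block's parameters in terms of the new root variable, each cost a fixed number of levels. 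Every operation in this list is a fixed composition of the earlier primitives: Lemma~\ref{lem:ring-products}, Teichm\"uller extraction, and Lemma~\ref{basic:padding}. The number of such operations does not depend on \(j\), and each adds at most a constant number of levels above its input oracle. This gives \(c\).

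The terminal stage is also bounded by a constant. It consists of one common finite specialization of the infinitesimals \(\epsilon_i=\eta^{(M+1)^{i-1}}\) (one level), flat recoding into a single tuple algebra, and the Thom-hash root addresses via parity resultants. Comparing two tuples uses two such address tests, and the sign is then reconstructed via Lemma~\ref{push4:sign}. The \(w\) Boolean blocks are absorbed by relativized Toda, \(\Sigma_w^{\Clevel k}\subseteq\Pclass^{\Clevel{k+1}}\). Each of these adds an absolute number of levels, which gives \(B\).

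The main obstacle is keeping the inductive step genuinely additive. A naive implementation retrieves intermediate coefficients canonically and then feeds them into fresh counting sums at every nesting depth; worse, the sign oracle for earlier roots could itself be re-derived recursively inside each later block, giving quadratic or exponential depth. To avoid this, I would require that each block's arrays be produced only from the immediately preceding block's arrays, by a fixed-shape pipeline. All earlier root signs would be folded into those arrays as coefficients, via the guarded parametric resultants, rather than queried separately. The only genuinely recursive sign oracles, the Thom-hash addresses, would be deferred to the single terminal recoding stage, where all \(w\) root variables of a full tuple are handled together at constant extra depth. The coherence argument, consisting of the gcd metadata, the connected-cell continuation, and the single specialization of infinitesimals before hashing, is taken unchanged from Section~\ref{push10:real-blocks}. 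The new work is checking that no step of it secretly requires a sign test at intermediate depth.
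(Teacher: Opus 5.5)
Your outline is sound, but it is not the proof the paper gives for this theorem. It is essentially the explicit ledger of Section~\ref{sec:real-hierarchy} and Section~\ref{app:new-interfaces}: constant cost per block via the value/coefficient interfaces and the padded resultant \eqref{hier:padded}, then one common specialization, flat recoding, two address tests per tuple comparison, sign reconstruction, and Toda. The paper uses that ledger only to sharpen the constants to $A=9$, $B=17$.

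The structural proof under Theorem~\ref{push11:affine} measures progress differently. It keeps every expression as a division-free integer circuit in the earlier root variables and tracks its unbounded-fan-in \emph{arithmetic depth} $d$. It proves the tower recursion $E(j,d)\le E(j-1,d+\kappa)+K$, paying the CH cost of integer arithmetic once at the bottom, $E(0,d)\le h+ad+c$; this gives \eqref{push11:eval-depth}. Validating the $j$-th root may make genuine sign queries at the $(j-1)$-prefix:
\begin{itemize}
\item degrees and square-free parts are supplied as short indices, checked universally, and formed through the two GCD resultants and the division-free pseudoquotient \eqref{push11:pseudoquotient};
\item bucket parity is a single prefix sign query of $\sum_iF_i2^{v(N-i)}$;
\item a supplied maximizing hash is certified by $\forall t\ C_s\ge C_t$ at two extra levels.
\end{itemize}
Full tuples are then compared as towers of at most $2w$ roots, with no flat recoding.

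The trade-off is as follows.
\begin{itemize}
\item The depth recursion shows that your main worry is unfounded. Re-deriving earlier-root sign oracles inside later blocks costs only $\kappa$ in depth and $K$ in level per root, so the total is linear, not quadratic. It also tolerates sign-dependent choices at every prefix using only constant-depth local primitives, at the price of unspecified constants.
\item Your route forbids intermediate sign tests. It therefore depends on heavier interfaces: padded resultants carrying degree and gcd data as coefficient families, homogeneous evaluation at points at infinity, the merged derivative kernel \eqref{hier:kernel}, and recoding a full tuple into one finite algebra. In exchange it yields explicit constants.
\item The step you flag as ``the new work,'' checking that nothing needs an intermediate sign test, is exactly what those interfaces discharge in the paper.
\item If you use the root-address lemma of Section~\ref{push10:real-blocks}, which assumes a square-free defining polynomial, you also need square-free preprocessing of the recoded polynomial before Thom hashing. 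On your route this costs only a constant. After the common specialization and recoding that polynomial has integer coefficients, so its trailing-coefficient selections can be supplied and verified directly.
\end{itemize}
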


This structural proof first gives unspecified absolute constants.
The evaluator refinement and numerical ledger in
Section~\ref{sec:real-hierarchy} give the explicit choice
\(A=9,B=17\). The degree of the input-size polynomial can
still depend on \(w\).

\begin{proof}
We strengthen the representation invariant used for root addresses.
Keep every expression as a division-free integer polynomial circuit
in the earlier root variables. Its size, degrees, and coefficient
bit heights are singly exponential, and its unbounded-fan-in
arithmetic depth is \(d\). Suppose its input-bit and connection
families are available in \(\Clevel h\). Let \(E(j,d)\) denote the
level needed to validate a tower of \(j\) addressed real roots and
evaluate the sign of such an expression there. We will show
\[
 E(j,d)\le E(j-1,d+\kappa)+K
\]
for absolute constants \(\kappa,K\). The integer arithmetic
simulation gives \(E(0,d)\le h+ad+c\), with absolute \(a,c\).
The recurrence therefore gives
\begin{equation}\label{push11:eval-depth}
 E(j,d)\le h+ad+(a\kappa+K)j+c.
\end{equation}

First, coefficient extraction adds constant depth above its input
polynomial circuit. Evaluate at the integer interpolation nodes
in parallel, then apply the uniform inverse Vandermonde matrix.
The same applies to the polynomial primitives used in the root
construction, including the resultant coefficient circuits of
\cite[Corollary 5.13]{AGS26}. Their divisions involve only constants.
Convert each \emph{local} constant-depth primitive into integer
pairs, treating its input polynomial expressions as leaves with
denominator one. Do not reconvert the whole circuit at each stage.
For a coefficient array \(n_i/d_i\), use the common positive scalar
\(D=\prod_i d_i^2\), with integer outputs
\(n_i d_i\prod_{k\ne i}d_k^2\). Every coefficient in an array
receives the same scalar. The depth becomes \(d+O(1)\).

Second, the choices in specialized square-free preprocessing can
be supplied as constantly many short indices. For a defining
polynomial \(f(T,Z)\), supply its actual degree at the preceding
root prefix and validate the leading nonzero coefficient and all
higher zeros. Use the two GCD resultants
\[
 R_1=\Res_Z(z(y-Z)+f,f+uf'),\qquad
 R_2=\Res_Z(z+f,f+uf').
\]
Let their trailing \(z\)-coefficients be \(A_1(y,u)\) and
\(A_2(u)\). The quotient is a nonzero scalar multiple of the
monic specialized GCD, by \cite[Lemmas 6.12--6.13]{AGS26}.
If \(e\) is the leading \(u\)-index of \(A_2\), then
\[
 H(Z)=\sum_i [u^e y^i]A_1\,Z^i
\]
is a scalar multiple of that GCD. The same index \(e\) works for
all coefficients. Supply the two trailing orders, \(e\), the
degree of \(H\), and nonzero coefficient witnesses. Their
validity uses only a fixed number of universal short-index tests
over signs at the earlier prefix.

There is no need for a division gate in forming the quotient.
Write \(n=\deg f\), \(e_0=\deg H\), \(b=\operatorname{lc}(H)\),
and \(M=n-e_0\). Set
\[
 F^*(Z)=Z^n f(1/Z),\qquad
 H^*(Z)=Z^{e_0}H(1/Z)=b-L(Z),\quad L(0)=0.
\]
Then
\begin{equation}\label{push11:pseudoquotient}
 Q^*(Z)=[Z^{\le M}]F^*(Z)
           \sum_{r=0}^M b^{M-r}L(Z)^r
       =b^{M+1}\operatorname{rev}(f/H).
\end{equation}
This follows by multiplying the finite geometric sum by \(b-L\)
and truncating. Reverse \(Q^*\) and multiply by its own leading
coefficient. At a valid prefix the result is square-free, has
positive leading coefficient, and has exactly the roots of \(f\).
At other parameter values the displayed expression is still a
well-defined polynomial; the division identity is only needed
at valid prefixes.

Third, a parity test can use one earlier-prefix sign query. Form
the bucket polynomial and its resultant \(F(U)\), as in
Section~\ref{push10:real-blocks}. Before choosing \(U\), obtain
uniform bounds
\[
 \deg F\le N,\qquad
 2^{-L}\le |F_i|\le2^L\quad(F_i\ne0).
\]
Put \(v=2L+\lceil\log_2(N+1)\rceil+3\). The sign of
\(F(2^{-v})\) equals its trailing-coefficient sign: the first
nonzero term dominates all later terms. Equivalently, query the
sign of the integer polynomial expression
\[
 \sum_{i=0}^N F_i2^{v(N-i)}.
\]
This adds constant depth and uses constants of singly exponential
bit height. There is no circular precision choice, since \(L\)
bounds the coefficients before this substitution. Nonreal common
zeros contribute an even initial power of \(U\), exactly as in
the original parity argument.

Finally, supply a maximizing hash in each root address instead of
computing the first one. If \(C_s\) is the number of odd buckets
for hash \(s\), the assertion
\[
 \forall t\quad C_s\ge C_t
\]
adds two counting levels above the bucket-parity predicate.
Every maximizing hash is injective, so an odd bucket is one root.
Use the same hash and bucket for subsequent sign tests. Multiple
valid addresses for a root cause no problem in the guarded sample
tree. Together with the global GCD metadata checks, this proves
the recurrence and \eqref{push11:eval-depth}.

The block-elimination coefficient circuits have depth \(O(w)\).
For rational coordinate substitutions, expand the coefficient
array and homogenize once: \(g_0^E P(g_1/g_0,\ldots,g_k/g_0)\)
is a sum of monomial products. Its depth is the maximum of the
coefficient-array depth and the \(g_i\) depths, plus a constant.
Use an even clearing exponent for signs and retain the rootwise
guard \(g_0\ne0\). Finite specialization of the infinitesimals
adds explicit constants and constant depth. A comparison of full
sample tuples uses at most \(2w\) roots. Equation
\eqref{push11:eval-depth} puts all these predicates at level
\(O(w)\), and the outer \(w\) Boolean quantifiers add \(O(w)\)
levels. This proves the theorem.
\end{proof}

\subsection{Recovering sparse signed digits}

The next promise removes the unknown scale by recovering all occupied
digit positions. A nonadjacent signed binary expansion, or NAF, has
digits in \(\{-1,0,1\}\), with no two adjacent nonzero digits.
Every integer has a unique such finite expansion. Its weight is the
number of nonzero digits; the positions are encoded in binary.

\begin{theorem}\label{push11:sparse}
Given an integer SLP and a polynomial-valued bound \(t\), its complete
NAF can be recovered in deterministic polynomial time under the
promise that its weight is at most \(t\). The digit positions may be
exponentially large and need not be supplied. In particular, promised
positivity belongs to \(\Pclass\).
\end{theorem}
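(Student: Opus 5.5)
Recover the NAF digits from the least significant end, using only evaluation of the SLP modulo short integers. The tool is modular evaluation at powers of two combined with deterministic zero tests. Write \(Z=\sum_{k=1}^{w}\epsilon_k2^{e_k}\) with \(e_1<\cdots<e_w\), \(w\le t\), \(e_{k+1}\ge e_k+2\), and \(|Z|<2^H\) with \(H=(\beta+1)2^{s+1}\) as in \eqref{eq:slp-height}. Positions can be exponential, so the residue modulo \(2^{e}\) cannot be read directly for large \(e\). A residue modulo \(2^{a}\) with \(a\) of polynomial \emph{bit length} is still computable: one evaluates the SLP in \(\Z/2^a\Z\), but elements there have \(a\) bits. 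So reducing modulo a power of two is not enough, and I will use moduli of the form \(2^{e}-1\) and small primes instead.

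The first key step is to find the lowest position \(e_1\). This is the two-adic valuation of \(Z\). Test whether \(Z\equiv0\pmod{2^{j}}\) through a short circuit. Take a prime \(p\) and compute \(Z\bmod p\) together with \(2^{j}\bmod p\). Note that \(Z\cdot2^{-j}\bmod p\) is always defined, so this alone says nothing about divisibility. Instead, use the sparsity promise directly. The integer \(Z'=Z-\epsilon2^{e}\) for a guessed leading lowest digit again has weight at most \(t\), hence is either zero or has a small \emph{relative} structure that short moduli detect.

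Concretely, for a sparse integer with \(w\le t\) terms, reduction modulo \(q_r=2^{r}-1\) folds the positions to \(e_k\bmod r\). For most small \(r\) (of polynomial value), the folded residues are distinct and nonadjacent in the cyclic sense, and the folded integer \(\sum\epsilon_k2^{e_k\bmod r}\) has magnitude below \(q_r\). I will argue that among polynomially many choices \(r\le\poly(t,\log H)\), a deterministic enumeration contains such an \(r\). A collision \(e_k\equiv e_l\pmod r\) requires \(r\mid e_l-e_k\), and each nonzero difference of size at most \(H\) has at most \(\log_2H\) prime divisors. So among the first \(t^2\log H+1\) primes \(r\), some prime avoids all \(\binom t2\) differences, including differences shifted by \(\pm1\) to exclude adjacency. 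From \(Z\bmod q_r\), computed by modular SLP evaluation, decode the cyclic NAF of the balanced residue. This gives each \(e_k\bmod r\) and each sign \(\epsilon_k\).

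Doing this for enough distinct good primes \(r\), and matching digits across primes, lets the Chinese remainder theorem reconstruct each \(e_k\) below \(H\). This matching across primes is the main obstacle, since the digits are unlabeled. I would try labeling each digit by its sign together with its rank in the true order. Rank is not visible modulo \(r\), however. So instead I combine primes pairwise using the modulus \(2^{r r'}-1\), which is still polynomial bit length, and match residues \(\bmod r\) and \(\bmod r'\) by consistency of the fold modulo \(rr'\). Iterate this until the product of the moduli exceeds \(H\). Each step also uses the signed count check \(\sum\epsilon_k\equiv Z\pmod 3\) as a sanity test.

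Finally, verify the candidate NAF deterministically by checking \(Z-\sum\epsilon_k2^{e_k}\equiv0\) modulo sufficiently many short primes. The difference is bounded by \(2^{H+1}\), so the primes up to \(O(H)\) have product exceeding that bound, which would be too many. I therefore plan to verify modulo the polynomially many good Mersenne-type moduli already used, since the difference is again sparse with weight at most \(2t\) and the same folding argument shows it is zero. The sign of \(Z\) is then \(\epsilon_w\), which gives promised positivity in \(\Pclass\).
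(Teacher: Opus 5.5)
Your ingredients are the paper's: folding modulo $2^r-1$, bounding the bad primes through the prime divisors of $e_l-e_k$ and $e_l-e_k\pm1$, decoding modulo $2^{rr'}-1$ to align labels, and CRT. But the step that makes the procedure deterministic is missing. Whether $r$ is good depends on the unknown positions. A bad $r$ still yields a residue that decodes to a valid-looking cyclic expansion of weight at most $t$, and your procedure has no way to tell which primes are good. Yet ``doing this for enough distinct good primes'' and the pairwise iteration both presuppose that knowledge.

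The final verification does not rescue this as written. Moduli that are good for the positions of $Z$ need not be good for the positions of $Z-\sum_k\epsilon_k2^{e_k}$. So a test restricted to ``the good Mersenne-type moduli already used'' certifies nothing, and those moduli were never identified anyway.

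The paper resolves this by majority vote:
\begin{itemize}
\item Take $2B+1$ small odd base primes, with $B=3\binom t2\lceil\log_2(H+1)\rceil$, so more than half are good.
\item For a good base prime $p$, every modulus $pq$ is automatically good, since an integer that is $0$ or $\pm1$ modulo $pq$ is so modulo $p$.
\item So pairing $p$ with \emph{arbitrary} small odd primes $q\ne p$ whose product exceeds $H$ gives the full correct list.
\item All good $p$ return the same list, and the strict majority is the output.
\end{itemize}

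Your verification idea can instead be turned into a valid alternative. Produce a candidate from every base prime. Accept a candidate only if the difference vanishes modulo $2^r-1$ for \emph{every} $r$ in a fixed list of more than $3\binom{2t}{2}\lceil\log_2(H+3)\rceil$ odd primes. The difference has NAF weight at most $2t$, so this list is a deterministic hitting set for it.

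You also need the decoding lemma itself: for odd $r$, cyclically nonadjacent signed-digit vectors of length $r$ represent the residues modulo $2^r-1$ bijectively. Your magnitude bound on the folded integer $F$ does not determine $F$ from its residue $c$, since both $c$ and $c-(2^r-1)$ remain possible. Taking the ordinary NAF of the balanced residue can be wrong. For $r=5$, the fold $2^4+2^2=20$ has balanced residue $-11=-2^4+2^2+2^0$, whose digits at positions $4$ and $0$ are cyclically adjacent.

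The paper proves injectivity by rotating a common zero of the two supports to the top position. Both values are then below $2^r/3$ in magnitude, so congruence forces equality. It decodes by choosing whichever of the NAFs of $c$ and $c-(2^r-1)$ is cyclically nonadjacent of length at most $r$.

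Two smaller points:
\begin{itemize}
\item Each pair contributes three integers, so your $t^2\log H+1$ primes should be $3\binom t2\lceil\log_2(H+1)\rceil+1$.
\item Your mod-$3$ check should read $Z\equiv\sum_k\epsilon_k(-1)^{e_k}$, not $\sum_k\epsilon_k$.
\end{itemize}
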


\begin{proof}
Fix a uniform upper bound \(H\ge1\) on the NAF positions, with
\(\log H\) polynomial in the input length. For odd \(\ell\ge3\),
consider signed digit vectors on a cycle of length \(\ell\), again
with no adjacent nonzeros, now including the first and last positions.
They represent the residues modulo \(2^\ell-1\) bijectively.

For injectivity, any two supports have a common zero, since each has
at most \((\ell-1)/2\) positions. Rotate that zero to the highest
position. Each represented integer then has absolute value less
than \(2^\ell/3\), so their difference has absolute value less
than \(2^\ell-1\). Congruence forces equality as integers, and
ordinary NAF uniqueness forces equality of the vectors. The number
of cyclic vectors is \(2^\ell+(-1)^\ell=2^\ell-1\): it is the
trace of the \(\ell\)-th power of the transfer matrix
\(\left(\begin{smallmatrix}1&2\\1&0\end{smallmatrix}\right)\).
Thus injectivity gives bijectivity. To decode a canonical residue
\(r\), compute the ordinary NAFs of \(r\) and
\(r-(2^\ell-1)\). Exactly one has length at most \(\ell\) and
satisfies cyclic nonadjacency. This takes polynomial time in \(\ell\).

Write the unknown expansion as
\(Z=\sum_{i=1}^{s}\epsilon_i2^{e_i}\), with \(s\le t\).
Call an odd prime \(p\) good if no difference \(e_i-e_j\) is
\(0,1\), or \(-1\) modulo \(p\). There are at most
\[
 B=3\binom t2\lceil\log_2(H+1)\rceil
\]
bad primes. Indeed every bad prime divides one of the nonzero
integers \(e_i-e_j\) or \(e_i-e_j\pm1\); nonadjacency makes
these integers nonzero.

Choose \(2B+1\) small odd base primes. More than half are good.
For a good \(p\), the cyclic decoder modulo \(2^p-1\) returns
the original signs and the distinct positions \(e_i\bmod p\).
Choose distinct small odd primes \(q\ne p\) with product greater
than \(H\). The same no-collision and no-adjacency conditions hold
modulo \(pq\). Decode modulo \(2^{pq}-1\), match each term to
its position modulo \(p\), and obtain \(e_i\bmod q\). CRT
recovers every \(e_i\) exactly.

For bad \(p\), either discard an invalid reconstruction or keep
its proposed expansion after checking the weight, position bounds,
signs, and nonadjacency. Every good base prime returns the same
correct sorted list. Its strict majority therefore determines the
output, without an equality oracle to certify the bad-prime lists.
All primes and the products \(pq\) have polynomial numerical
value. The queried Mersenne moduli consequently have polynomial
bit length, and an SLP is evaluated modulo each in polynomial time.
The most significant signed digit determines the sign.
\end{proof}

The total language asserting that an SLP output is positive and
has NAF weight at most \(t\) is in \(\mathsf{coRP}\). Run the
decoder with fixed polynomial bounds, check that its proposed
expansion has the required form and sign, and compare the resulting
short SLP with the input using the usual one-sided modular equality
test. Every true instance gives the correct candidate. On a false
instance a candidate of the prescribed form cannot equal the output.
The same proof recognizes the sparsity condition without the sign
restriction.

For specified SLP coefficients, the modular values have raw
\(\GapP\) representatives instead of polynomial-time residues.
The reconstruction uses a fixed polynomial list of moduli. Pack all
positive and negative branch counts for these representatives into
one \(\SharpP\) value, using polynomially wide fields. One exact
retrieval gives the entire list, after which reconstruction is
ordinary polynomial-time computation. Thus the sparse-coefficient
promise is in \(\Pclass^{\SharpP[1]}\subseteq\Pclass^{\PP}\).
One retrieval of a count is not being identified with one PP
threshold question.

\paragraph{An unknown sparse multiple.}
The same method works when a supplied polynomial-size list of
positive sparse multipliers contains some \(D_i\) for which
\(D_iZ\) has weight at most \(t\). Decode every product to get
a list of candidate fractions \(N_i/D_i\), one of which is right.
Every pairwise cross-difference \(N_iD_j-N_jD_i\) is a sum of
polynomially many signed powers of two; normalization to NAF does
not increase their number. The small Mersenne moduli from the preceding
proof give a deterministic hitting set for every nonzero such
difference. Test \(ZD_i\equiv N_i\) at this entire set. A true
candidate survives, and two unequal candidate fractions cannot
both survive. This determines the value and its sign.

In particular, the multiplier list can be
\(D_p=2^p-1\) for \(1\le p\le P\), with \(P\) polynomially
bounded. This handles an unknown short period even when the
output itself has exponentially many nonzero signed digits.
The promise remains essential; no such sparsity theorem is asserted
for general SLP outputs or for the selector norms in the ETR proof.

\subsection{Recovering an exponentially long geometric block}

The preceding sparse-multiple method allows an unknown short period.
A different invariant handles one geometric block even when its
period is exponentially large.

\begin{theorem}\label{push11:geometric}
Suppose the output of an integer SLP is zero or has the form
\[
 Z=\sigma2^a\sum_{j=0}^{N-1}2^{jd},\qquad
 \sigma\in\{-1,1\},\quad a\ge0,\quad d,N\ge1.
\]
The sign and all these parameters can be recovered in deterministic
polynomial time, without being supplied. For \(N=1\), use the
canonical convention \(d=1\).
\end{theorem}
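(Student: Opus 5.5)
The plan is to read the geometric block off residues modulo small Mersenne numbers $2^p-1$, as in the proof of Theorem~\ref{push11:sparse}. Every parameter occurs through a cyclic exponent pattern, so each residue exposes $a,d,N$ modulo $p$. Chinese remaindering over polynomially many small primes then recovers them exactly, since the SLP height bound gives $a,d,N\le H$ with $\log H$ polynomial.

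\textbf{The key computation.} Fix an odd prime $p$ with $p\nmid d$, and write $n=N\bmod p$. The $N$ exponents $jd$ run through all residues modulo $p$ in blocks of $p$ consecutive indices $j$. Each full block contributes $\sum_{r<p}2^r=2^p-1\equiv0$. Hence
\[
 Z\equiv\sigma\sum_{j=0}^{n-1}2^{(a+jd)\bmod p}\pmod{2^p-1}.
\]
This is a signed cyclic digit pattern supported on an arithmetic progression of length $n$ in $\Z/p$. Evaluating the SLP modulo $2^p-1$ and reading the binary digits of the canonical residue gives this pattern in polynomial time. The only complication is the sign ambiguity: $r$ and $r-(2^p-1)$ give complementary patterns. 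The complement of the progression $\{a+jd\}_{j<n}$ is again a progression, namely $\{a+nd+jd\}_{j<p-n}$. A residue therefore determines the tuple $(\sigma,n,a\bmod p,d\bmod p)$ only up to two symmetries: the complement $(\sigma,n,a)\mapsto(-\sigma,\,p-n,\,a+nd)$, and the reversal $(a,d)\mapsto(a+(n-1)d,\,-d)$. For $n\notin\{0,1,p-1\}$ a progression of length $n$ in $\Z/p$ has, up to reversal, one difference. I would verify this by a short additive-combinatorics check. The cases $n\in\{0,1,p-1\}$ are degenerate and would be set aside.

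\textbf{Zero detection and bad primes.} First handle zero. If $Z\ne0$, then $n\equiv0$ occurs only when $p\mid N$, and $p\mid d$ occurs only when $p$ divides $d$. Both $N$ and $d$ are at most $H$ and have at most $\log_2H$ prime factors, as do $N\pm1$. So among $3\log_2H+O(1)$ small odd primes some residue is nonzero and nondegenerate. If every tested residue vanishes, we output $Z=0$.

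\textbf{Gluing the local data.} To glue the symmetry classes coherently across primes, I would copy the $pq$ device of Theorem~\ref{push11:sparse}. Fix one good base prime $p$ and choose a normalization there. For each further prime $q$, decode modulo $2^{pq}-1$. The reduction of that pattern modulo $p$ must agree with the chosen normalization, which forces a consistent choice at $q$. CRT then gives integers $\widetilde N,\widetilde d,\widetilde a$ modulo $M=\prod q>2^{2H+4}$. These are determined up to the single global symmetry $N\mapsto -N$ (with $\sigma\mapsto-\sigma$, $a\mapsto a+Nd$) and $d\mapsto -d$. The true values lie in $[0,H]$, whereas the alternatives lie near $M$. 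So the correct representatives are the small ones, and choosing $\widetilde N$ small fixes $\sigma$.

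\textbf{Degenerate and bad cases, and the main obstacle.} For bad primes, either discard or take a majority, as in Theorem~\ref{push11:sparse}. The case $N=1$ is recognized when all patterns have weight one, and we set $d=1$ by convention. I expect the main obstacle to be exactly the gluing step: proving that the reversal and complement symmetries can be fixed simultaneously and consistently, including the near-degenerate patterns and primes where $d\equiv\pm1$ makes the progression look contiguous. If the $pq$ matching is not robust enough, I would fall back to a final check. I would output only a candidate that reproduces every tested residue, and I would argue that two distinct geometric-block values agreeing on this Mersenne set would contradict the CRT reconstruction.
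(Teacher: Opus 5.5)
Your local computation is correct and is where the paper starts. For $p\nmid d$, full periods cancel modulo $2^p-1$. The canonical residue is then the indicator of an arithmetic progression of length $N\bmod p$, or of its complement, and complement and reversal are the only ambiguities.

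The genuine gap is the gluing step, which has no protection against auxiliary primes $q\mid d$. In Theorem~\ref{push11:sparse}, a good base prime makes \emph{every} auxiliary $q$ good, because non-collision modulo $p$ implies non-collision modulo $pq$. That is the only reason a majority over base primes suffices there. Here the clean-progression picture modulo $2^{pq}-1$ requires $\gcd(d,pq)=1$. If $q\mid d$, the exponents $(a+jd)\bmod pq$ have period $p$ in $j$, and a full period contributes the coset word $Y=\sum_{k<p}2^{(a+kd)\bmod pq}\not\equiv0$. The residue is then $\sigma\bigl(\lfloor N/p\rfloor Y+\sum_{j<N\bmod p}2^{(a+jd)\bmod pq}\bigr)$, which nothing in your argument controls. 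You have not shown that your decoder rejects it, rather than reading off a consistent-looking but wrong value of $(N,a,d)\bmod q$.

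Such $q$, up to $\log_2H$ of them, are bad for every base prime at once. So voting over base primes cannot remove them, and a single one corrupts the CRT. Your final check can only reject a wrong candidate; it cannot produce a right one. The asserted uniqueness of a geometric block given the tested residues is essentially the theorem itself. Separately, $M>2^{2H+4}$ must be $M>2H+1$: since $H$ can be exponential, your bound would require exponentially many primes.

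What is missing is error-tolerant reconstruction. Within your framework, one repair is to split the auxiliary primes into $2\lceil\log_2(H+1)\rceil+1$ disjoint groups, each with product above $2H+1$, and take a majority over groups. You would then also vote over enough base primes, since a base prime dividing $d$ cannot be certified good.

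The paper avoids normalization altogether by extracting symmetry-invariant data at each prime separately:
\begin{itemize}
\item For $p\nmid d$, the number of one bits of the canonical residue is $\equiv\sigma N\pmod p$, because the complement replaces $N\bmod p$ by $p-(N\bmod p)$.
\item After passing to $\sigma Z$, let $S_1,S_2$ be the sums of the occupied positions and of their squares, and $b_p=N\bmod p$. Then $b_pS_2-S_1^2\equiv d^2b_p^2(b_p^2-1)/12\pmod p$. This gives $d^2\bmod p$ and is blind to reversal.
\item $S_1$ then gives $a\bmod p$.
\end{itemize}
Each of these residues is wrong at only $O(\log H)$ unknown primes. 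A continued-fraction robust CRT recovers $\sigma N$, $d^2$ and $a$ without identifying the bad primes. It is valid when $\prod p_i>8WP^{2B}$, with at most $B$ wrong residues and target in $[-W,W]$. No moduli $2^{pq}-1$ are needed.
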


\begin{proof}
Take a valid height bound \(|Z|<2^H\), with \(H\ge1\) and
\(\log H\) polynomial in the input length. The nonzero canonical
parameters satisfy \(a,d,N\le H\). We first describe the small
robust CRT calculation used below.

For distinct primes \(p_i\), let \(P=\max p_i\) and
\(M=\prod p_i\). Suppose reported residues agree with an integer
\(U\in[-W,W]\) at all but at most \(B\) primes, and
\begin{equation}\label{push11:robust-crt}
 M>8WP^{2B}.
\end{equation}
CRT the reported residues to \(R\in[0,M)\). If \(E\) is the
product of the bad primes, then \(E\le P^B\) and
\(E(R-U)=kM\). After reducing \(k/E=k'/e'\),
\[
 \left|\frac RM-\frac{k'}{e'}\right|
      =\frac{|U|}{M}<\frac1{2e'^2}.
\]
Thus \(k'/e'\) occurs among the continued-fraction convergents of
\(R/M\), including the exact final convergent if \(U=0\).
Include endpoints \(0/1,1/1\). The fraction is in \([0,1]\),
since leaving that interval would require \(|U|\ge M/E>W\).
For each convergent, test whether \(R-k'M/e'\) is an integer in
\([-W,W]\) agreeing with all but \(B\) reported residues.
The true integer passes. Two survivors agree modulo a product at
least \(M/P^{2B}>2W\), and hence coincide. This recovers \(U\)
using only ordinary arithmetic on the explicit CRT integers.

Set \(b=\lceil\log_2(H+1)\rceil\), \(B=4b+1\), and
\(W=H^2\). Enumerate primes at least five until
\eqref{push11:robust-crt} holds. Their number and numerical values
are polynomial in \(b\): the \(r\)-th prime is
\(O(r\log(r+2))\), whereas \(\log M\ge r\log5\).
All the CRT integers consequently have polynomial bit length.

Query the canonical residues \(R_p=Z\bmod(2^p-1)\). If
\(p\nmid d\), the exponent positions of a full block of \(p\)
terms traverse all residues modulo \(p\), contributing
\(2^p-1\). If \(b_p=N\bmod p\), a positive block therefore
reduces to the binary indicator of
\[
 \{a,a+d,\ldots,a+(b_p-1)d\}\pmod p.
\]
These positions are distinct. Negation complements this word when
\(b_p>0\); if \(b_p=0\), both signs give residue zero. Hence
\begin{equation}\label{push11:popcount}
 \operatorname{popcount}(R_p)\equiv\sigma N\pmod p.
\end{equation}
Only primes dividing \(d\) can be wrong, and there are at most
\(b\). Robust CRT recovers \(\sigma N\), giving the sign and
number of terms. Zero gives the empty block.

For a nonzero output, change the cached residues to those of
\(\sigma Z\), so their words represent the positive block.
Let \(S_1,S_2\) be the sums of the occupied positions and their
squares. For \(N\ge2\), at primes not dividing \(d\),
\[
 b_pS_2-S_1^2
       \equiv \frac{d^2b_p^2(b_p^2-1)}{12}\pmod p.
\]
Thus these moments give \(d^2\bmod p\) whenever
\(p\nmid dN(N-1)(N+1)\). Assign arbitrary residues when the
displayed denominator vanishes. At most \(4b\) primes are bad,
so robust CRT recovers \(d^2\le H^2\); an ordinary integer square
root gives \(d\). Then
\[
 a\equiv S_1/b_p-d(b_p-1)/2\pmod p
\]
outside primes dividing \(dN\), and another robust CRT calculation
gives \(a\). For \(N=1\), every positive residue is a single
rotated bit, whose position is \(a\bmod p\); recover \(a\)
directly and set \(d=1\).

Each queried Mersenne modulus has polynomial bit length. Bit scans,
moments, square roots, and continued fractions here also concern
polynomial-bit integers, never the full binary output \(Z\).
This proves the running-time claim.
\end{proof}

This supplies a further promise family with inputs outside the earlier
polynomial-NAF-weight and polynomial-value period-multiplier promises. Take
\(N=d=2^n\), \(a=0\). Binary doubling constructs the geometric
sum with \(O(n)\) SLP gates, but its NAF has \(N\) nonzero
digits. For every \(2\le p\le d-2\),
\[
 (2^p-1)Z=\sum_{j=0}^{N-1}
              \bigl(2^{jd+p}-2^{jd}\bigr)
\]
is already nonadjacent and has \(2N\) terms; for \(p=1\) the
weight is \(N\). No polynomial-valued \(p\) makes this family
sparse. Both signs are handled, and the input circuit need not
display the geometric pattern syntactically.

As with sparse digits, total recognition of the geometric-block
language is in \(\mathsf{coRP}\): construct the recovered candidate
by binary doubling and use the one-sided SLP equality test. The
deterministic theorem itself depends on the promise. It does not
extend to arbitrary sums of geometric blocks, where carries can
destroy \eqref{push11:popcount}.

\subsection{Approximate squares with exact integral identities}

The integral four-square theorem still leaves the size of its numerator
circuits open. There is a short approximation construction that controls
both real and dyadic errors. Its exact identities also clarify the
remaining obstacle.

Retain the bounded critical objective and put \(h=J_w-T\). Define
\[
 a_0=X_1^3+X_1+1,\quad a=a_0+2M,\quad
 \Delta=4h-3a^2.
\]
The identity
\[
 a_0^2-\bigl(X_1^6+3(X_1+1)^2\bigr)
   =2\bigl(X_1^4+X_1^3-(X_1+1)^2\bigr)
\]
shows that \(K_0=(a^2-h)/2\) is an ordinary integer polynomial
with a short circuit. Thus \(\Delta=a^2-8K_0\). Modulo the
critical ideal and two, \(a^2=h=X_1+1\), a unit.
Choose \(M=B^3+B+1\), where \(B=2s+3\) is the real critical-box
bound. Then \(1\le a\le3M\) there. Increasing \(Q\) by a known
power of two of polynomial numerical value makes \(h\ge9M^2\)
on infeasible instances, so \(h\le\Delta\le4h\) there. The
earlier separation and real-box arguments remain valid.

\begin{proposition}\label{push11:approx-squares}
For binary precision parameters \(b,n\), polynomial-time construction
gives ordinary integer-polynomial SLPs \(N,D,K,V\) satisfying
\begin{align*}
 D^2-\Delta N^2&=8K,& \Delta N-aD&=4V,\\
 hD^2&=(aD+V)^2+3V^2+2\Delta K.
\end{align*}
The denominator \(D\) is nonzero at every complex critical point.
Writing \(P_0=(aD+V)/D\) and \(P_1=P_2=P_3=V/D\), the residual
\(E=h-\sum_iP_i^2\) belongs to \(2^n(R\otimes\Z_2)\).
On infeasible instances it also satisfies \(0\le E\le2^{-b}h\)
at every real critical point.
\end{proposition}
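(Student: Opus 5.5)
The third identity is formal once the first two hold. Substituting $4V=\Delta N-aD$ and $8K=D^2-\Delta N^2$ into $(aD+V)^2+3V^2+2\Delta K$ gives $\tfrac14(3a^2+\Delta)D^2=hD^2$. The same substitutions give
\[
 E=h-\sum_iP_i^2=\frac{2\Delta K}{D^2}
  =\frac{\Delta}{4}\Bigl(1-\frac{\Delta N^2}{D^2}\Bigr).
\]
So the whole proposition reduces to one task. We must build short integer SLPs $N,D$ with $D$ a unit modulo $2$ in $R$, satisfying the two divisibilities $8\mid D^2-\Delta N^2$ and $4\mid \Delta N-aD$ as polynomial identities. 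We also need $D/N$ to be a good rational approximation to $\sqrt\Delta$ in two senses. In the real sense we need $D^2\ge\Delta N^2$ and $1-\Delta N^2/D^2\le 2^{-b}\cdot 4h/\Delta$, and the latter is implied by $\le 2^{-b}$ since $\Delta\le4h$. In the $2$-adic sense we need $1-\Delta N^2/D^2\in 2^{n+2}(R\otimes\Z_2)$. The plan is a Heron iteration for $\sqrt\Delta$ in numerator--denominator form, normalized so that it stays integral at two.

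Start from $(D_0,N_0)=(a,1)$. Then $K_0=(a^2-\Delta)/8=(a^2-h)/2$ as defined, and $V_0=(\Delta-a^2)/4=-2K_0$. The basic step is the halved Heron map $D'=D^2-4K$, $N'=DN$. A direct check gives three identities:
\[
 D'^2-\Delta N'^2=(D^2-\Delta N^2)^2/4=8\cdot 2K^2,
 \qquad
 \Delta N'-aD'=4(DV+aK).
\]
So we may set $K'=2K^2$ and $V'=DV+aK$, and both divisibilities propagate. Moreover $D'\equiv D^2\pmod 4$, so $D$ remains a unit modulo two. Since $\Delta\equiv a^2\equiv h\pmod{2,I}$ is a unit, $D$ is nonzero at every complex critical point. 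Each step doubles the $2$-adic valuation of $K$ and sends the real relative error $\epsilon=1-\Delta N^2/D^2$ to roughly $\epsilon^2/4$ once $D/N\ge\sqrt\Delta$. That inequality holds after the first step by AM--GM, since at a real critical point $1\le a\le 3M\le\sqrt h\le\sqrt\Delta$. Consequently $O(\log n)$ final steps already give the dyadic bound, and $O(\log b)$ steps give the real bound, once the ratio $D/(N\sqrt\Delta)$ is within a constant factor of one.

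The main obstacle is the initial real range. The ratio $\sqrt\Delta/a$ can be as large as about $\sqrt Q$, and $\log Q$ is singly exponential. Plain Heron only halves the logarithmic ratio per step, which would take exponentially many steps. I would fix this with scaled steps modeled on the radial folds of Lemma~\ref{lem:sign-filter}. The scaled step is $D'=(s^2D^2+\Delta N^2)/2$, $N'=sDN$, where the scale is
\[
 s=1+2^{\,n+4+j}
\]
for a suitable short exponent $j$. This $s$ is odd and has a short repeated-squaring circuit. Its real size is within a factor of two of a prescribed power of two, so $O(\log\log Q)$ such steps compress the logarithmic ratio to $O(1)$, exactly as the folds do.

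For the $2$-adic side we use $s\equiv1\pmod{2^{n+4}}$. Then $s^2D^2-\Delta N^2=(s^2-1)D^2+8K$ is still divisible by $8$. The congruence $4\mid\Delta N'-aD'$ also survives, since $(s-1)$ and $(s^2-1)$ carry the factor $2^{n+4}$. The price is that $s^2-1$ adds a term of valuation at least $n+5$ to the new $K$. That term is harmless, because the required dyadic precision $2^{n+2}$ is already below it. I would verify the bookkeeping $K'=\bigl((s^2D^2-\Delta N^2)/8\bigr)^2\cdot 2+\dots$ and $V'$ explicitly, then finish with the unscaled steps above.

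Finally, all scale exponents and iteration counts are computed from the explicit short bounds on $h$, $\Delta$ and $a$ at real critical points, so the SLPs have polynomial size. The sign $E\ge0$ on infeasible instances follows from $D^2\ge\Delta N^2$ after the first step, together with $\Delta>0$ there.
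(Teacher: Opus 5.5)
Your reduction is correct and matches the paper's construction. The third identity and $E=2\Delta K/D^2=\tfrac{\Delta}{4}(1-\Delta N^2/D^2)$ follow formally from the first two. Your unscaled step $(D,N,K,V)\mapsto(D^2-4K,\,DN,\,2K^2,\,DV+aK)$ is exactly the paper's $s=1$ update, and the closing run of unscaled steps gives both precision claims. The gap is in the range compression, on which polynomial size depends.

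At a real critical point put $z=D/(\sqrt\Delta\,N)>0$. Your scaled step gives
\[
 z'=\frac{s^2D^2+\Delta N^2}{2s\sqrt\Delta\,DN}=\frac12\Bigl(sz+\frac1{sz}\Bigr),
\]
which is Heron applied to $sz$. Every step of this form leaves $z'\ge1$. So from the second step on, $sz\ge s>1$ and the scale pushes you away from the fixed point: for $z\in[1,2^J]$ the new maximum is about $s2^{J-1}$. At most the very first step benefits. Afterwards you are reduced to plain Heron, which lowers $\log_2 z$ by about one per step and needs exponentially many steps.

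The folds of Lemma~\ref{lem:sign-filter} put the scale on the other side: $D'=(D^2+s^2\Delta N^2)/2$ and $N'=sND$, giving $z'=\tfrac12(z/s+s/z)$. This is the paper's update $D'=D^2-4L$, $N'=sND$, $K'=2L^2$, $V'=vaD^2+sDV+aL$, where $L=s^2K-uD^2=K-u\Delta N^2$, $u=(s^2-1)/8$ and $v=(s-1)/4$.

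Your congruence $s\equiv1\pmod{2^{n+4}}$ is a second obstruction that survives this correction. It forces $s>2^{n+4}$. Halving a logarithmic range $j$, however, needs $s$ within a constant factor of $2^{\lceil j/2\rceil}$, for every $j$ down to a constant. Below $2^{n+4}$ no admissible scale exists. This always happens in the late stages, and it happens from the start when the binary parameter $n$ is large compared with the initial range. So $\Theta(n)$ unscaled steps remain, which is not polynomial in $\log n$.

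The congruence is also unnecessary. Your scaled steps come first, and each final unscaled step has $K'=2K^2$. After $r$ of them $v_2(K)\ge2^r-1$, however the scaled steps changed $K$. Integrality only needs $u,v\in\Z$, that is, $s\equiv1\pmod4$. The paper takes $s=2^k+1$ with $k\ge2$. For $y=1/z\in[2^{-j},1]$ it chooses $k=\lceil j/2\rceil$, which maps this range into $[2^{-(k+1)},1]$. Hence $O(\log H)$ scaled folds reach $[1/16,1]$, and the unit folds then finish.
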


\begin{proof}
Start with \((N,D,K,V)=(1,a,K_0,-2K_0)\). For \(s=1\) or
\(s=2^k+1\), \(k\ge2\), set \(u=(s^2-1)/8\), \(v=(s-1)/4\).
These are explicit integral constants with short powering circuits.
Perform the simultaneous updates
\begin{align*}
 L&=s^2K-uD^2,&N'&=sND,\\
 D'&=D^2-4L,&K'&=2L^2,\\
 V'&=vaD^2+sDV+aL.
\end{align*}
Expansion preserves the two invariants and the four-square identity.
Modulo two, \(D'=D^2\); its initial value \(a\) is a unit.
Its multiplication determinant is therefore odd, so \(D\) is
invertible in the rational critical algebra and has no complex zero.

At a real point on an infeasible instance set
\(y=\sqrt\Delta N/D\), only for the analysis. The update is
\[
 D'=(D^2+s^2\Delta N^2)/2>0,\qquad
 y'=\frac{2sy}{1+s^2y^2}.
\]
A first unit-scale fold puts \(y\) in \([2^{-H},1]\), for a
uniform \(H\) of polynomial binary length. From \([2^{-j},1]\),
choose \(k=\lceil j/2\rceil\) and \(s=2^k+1\); the image lies
in \([2^{-(k+1)},1]\). After \(O(\log H)\) folds it lies in
\([1/16,1]\). At unit scale, \(e=1-y\) satisfies
\(e'=e^2/(1+y^2)\le e^2\). A further \(r\) folds with
\(2^r\ge\max\{32(b+2),n\}\) give the real error bound.
For the same folds, \(K'=2K^2\), so the cleared residual
\(2\Delta K\) is divisible coefficientwise by \(2^{2^r}\).
Since \(D\) is a dyadic unit, this proves the second precision
claim. The number of updates is polynomial in the input length,
\(\log(b+2)\), and \(\log(n+2)\).
\end{proof}

The exactness gap is structural. Three coordinates coincide; an
exact representation in this form would require
\((\Delta N/D)^2=\Delta\). For \(h=5,a=1\), the discriminant
is \(17\), a nonsquare. At unit scales the nonzero residual is
\[
 E_t=17\,2^{2^{t+1}}/D_t^2\qquad(t\ge1).
\]
Its real value tends to zero and its dyadic valuation tends to
infinity, while its cleared numerator grows. Neither convergence
statement implies an exact certificate.

\paragraph{Free quaternion coordinates.}
Let \(q=(P_0,P_1,P_2,P_3)\) and \(n_q=\sum_iP_i^2\). At dyadic
precision four, \(n_q/h\equiv1\pmod{16}\). Thus \(n_q\) is a
unit in the rational algebra, and is strictly positive at every
real point on every instance. A purely orthogonal correction
\(q\mapsto q(1+u_1i+u_2j+u_3k)\) would require
\((h-n_q)/n_q\) to be three squares. This can fail locally. For
\(h=7,a=1\), two folds give
\[
 q=(154,57,57,57)/97,\qquad
 (h-n_q)/n_q=180^2/33463.
\]
Since \(33463\equiv7\pmod8\), this is not three rational squares.

A fixed radial change removes that obstruction. Set
\(\rho=h/n_q\) and \(r=\rho-25/81\). If the real relative
error is at most \(\epsilon\le1/2\), then on infeasible instances
\[
 \frac{56}{81}\le r\le\frac{56}{81}
        +\frac{\epsilon}{1-\epsilon},\qquad
 r\equiv\frac{56}{81}\pmod{16(R\otimes\Z_2)}.
\]
Thus \(r\) has dyadic valuation three at every dyadic place.
It is three rational squares in each number-field factor:
positivity handles real places; at odd finite places the
three-square form is isotropic; and at dyadic places the form
\(\langle1,1,1,-r\rangle\) has nonsquare determinant and is
therefore isotropic. Hasse--Minkowski gives
\(r=u_1^2+u_2^2+u_3^2\). Multiplication of \(q\) by the
quaternion \((5/9,u_1,u_2,u_3)\) changes its norm exactly to \(h\).

With \(S=(aD+V)^2+3V^2\), the corresponding certificate verifier
checks
\[
 c^2(81hD^2-25S)=81S\sum_{i=1}^3 U_i^2\pmod I.
\]
The known \(D,S\) are units, and \(S>0\) at every real point.
Short descriptions of \(c,U_1,U_2,U_3\) for this normalized
ternary problem would give \(\ETR\subseteq\mathsf{coMA}^{\PP}\),
using the established identity verifier. Their existence with
polynomial description length remains unproved.

\subsection{A limit of a single modular counting gap}

There is an exact obstruction to one direct flattening proposal.
The model hides an integer \(Z\) and supplies only its canonical
residues modulo polynomial-bit integers. A counting path makes
polynomially many such queries, possibly adaptively, and contributes
\(+1\) or \(-1\). Fix a nondeterministic string and its full
residue-answer transcript. The compatible integers form either
the empty set or one residue class, by generalized CRT. Thus the
gap has the form
\[
 g(Z)=\sum_{\tau\in\mathcal T}\epsilon_\tau
                 1_{Z\equiv a_\tau\pmod{d_\tau}},
 \qquad |\mathcal T|\le2^{\poly(n)}.
\]
Any one residue class has counts differing by at most one in
the two intervals \([1,M]\) and \([-M,-1]\). Consequently
\[
 \left|\sum_{z=1}^M g(z)-\sum_{z=-M}^{-1}g(z)\right|
       \le|\mathcal T|.
\]
If \(g(z)>0\) exactly when \(z>0\), integrality makes the
left side at least \(M\). One such gap therefore cannot decide
all signs over the doubly exponential range of possible hidden
integers: that would require \(M\le2^{\poly(n)}\).

This statement concerns the specified residue-only model. It does
not give a lower bound against an algorithm inspecting the SLP,
against multiple PP queries, or against nonlinear processing of
a retrieved count. It explains why simply summing all modular
branches and taking the resulting gap's sign cannot provide a
universal sign decoder. The randomized phase algorithm and the
restricted reconstruction theorems use operations outside that
proposal.

\section{Further root operations and certificate targets}
\label{push12:sec}

The unrestricted containments remain those of
Theorems~\ref{thm:posslp} and~\ref{thm:depth}. We can nevertheless
enlarge the class of algebraic computations handled by the integer
argument. The degrees of the new root operations have binary
encodings, and may be exponentially large. A separate certificate
refinement removes an unnecessary constant from the exact norm target.

\subsection{Binary-degree radicals}
\label{push12:binary-roots}

Consider arithmetic circuits with rational constants and operations
\(+, -,\times,\div\), together with \(N\)-th root gates whose positive
integer degrees \(N\) are written in binary. An even-degree gate takes
the nonnegative real root; an odd-degree gate takes the unique real
root. Negative even radicands and zero divisors are invalid.

\begin{theorem}\label{push12:binary-root-theorem}
Sign and validity for these circuits belong to \(\Pclass^{\PosSLP}\).
By the adaptive-query compilation theorem, they reduce many-one to
\(\PosSLP\), and hence belong to
\(\BPP^{\PP}\cap\Pclass^{\NP^{\PP}}\).
\end{theorem}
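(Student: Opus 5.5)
We follow the template of Proposition~\ref{push10:radicals}. The reduction produces, in polynomial time, a rational SLP whose value approximates every valid gate to a prescribed error, together with a priori two-sided magnitude bounds that let a separated threshold decide each sign exactly. Every sign decision and validity check becomes one PosSLP query. Theorem~\ref{push10:adaptive} then merges the polynomially many adaptive queries into one. Theorem~\ref{thm:posslp} gives the final class bound.

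\textbf{Step 1: a priori bounds.} Let the circuit have $s$ gates, input length $n$, and root degrees $N_i\le2^{\poly(n)}$. Every defined gate value is algebraic, of degree at most $\prod_i N_i\cdot 2^s$, which is $2^{\poly(n)}$. Its absolute logarithmic height is also $2^{\poly(n)}$, since taking an $N$-th root divides the height by $N$ and field operations at most double it plus a constant. The Liouville/norm argument used before then gives a short $H=2^{\poly(n)}$ with $2^{-H}\le|v|\le2^H$ for every nonzero gate value. As in Proposition~\ref{push10:radicals}, we fix an error schedule $E_j=2^{-K_j}$, where $K_j$ halves and loses a constant $c_0=O(H)$ at each gate, with $K_s=H+10$. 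All the $K_j$ have polynomial binary length.

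\textbf{Step 2: binary-degree root gadget (main obstacle).} The hard part is approximating $a^{1/N}$ with $N$ exponential, using a polynomial-size rational SLP. A direct Newton iteration on $x^N=a$ needs $\Omega(\log N)$ steps from a coarse start, and each step uses $x^{N-1}$ by repeated squaring. That cost is fine, but convergence from a bad start is slow, and $a^{1/N}$ is close to $1$ with $|a^{1/N}-1|\approx|\log a|/N$, so the start must be controlled. We would reduce to the logarithm and exponential. Write $a^{1/N}=\exp(\log a/N)$ and first compute a rational approximation of $\ell=\log a$ to absolute error $2^{-B}$ with $|\ell|\le H\ln 2$. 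For this, take $a$ to a dyadic square-root tower: $k=O(\log H+B)$ principal square roots, each via the fold gadget of Proposition~\ref{push10:radicals}, bring $a^{2^{-k}}$ within $2^{-B}$ of $1$. Then use $\ell\approx2^k\log(a^{2^{-k}})$ with a Taylor polynomial of degree $O(B)$ at $1$. Next, $\ell/N$ is an exact rational division by a binary integer, and $\exp(\ell/N)$ comes from an $O(B)$-term Taylor polynomial, since $|\ell/N|\le H$ (scale down by $2^{k'}$ and square back $k'$ times). This is a polynomial-size SLP. For odd $N$ with $a<0$, we first decide $\sgn a$ by the separated-threshold query and use $-(|a|)^{1/N}$. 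For the error analysis, $|\partial a^{1/N}/\partial a|\le a^{1/N-1}/N\le2^{2H}$ on the permitted range, so the amplification factor is at most $2^{O(H)}$. This is absorbed into $c_0$. Zero radicands are handled like the $a'=\widetilde a+2E_{j-1}$ shift of Proposition~\ref{push10:radicals}: near zero, $a^{1/N}$ is Hölder with exponent $1/N$, so we replace that shift by the separated zero test. If $|a|<2^{-H}$, the gate is exactly $0$ (by Step~1), and we output literal zero. Otherwise $|a|\ge2^{-H}$ and the derivative bound applies. This avoids a $1/N$-th power of an error, which would be fatal.

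\textbf{Step 3: validity and assembly.} Before using each divisor and each even-degree radicand, we query the sign of its approximant against $\pm2^{-H-1}$. These queries, the zero tests of Step~2, and the final output comparison are polynomially many adaptive PosSLP queries. Clearing positive denominators gives integer SLPs, and Theorem~\ref{push10:adaptive} compiles them into one PosSLP instance. This gives membership in $\Pclass^{\PosSLP}$, the many-one reduction, and, via Theorem~\ref{thm:posslp}, the bound $\BPP^{\PP}\cap\Pclass^{\NP^{\PP}}$.
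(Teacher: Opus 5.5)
\textbf{The binary-degree root gadget in your Step~2 does not give a polynomial-size SLP.} Your architecture is sound: a priori height bounds, cached rational approximants, separated threshold tests, and compilation by Theorem~\ref{push10:adaptive}. The failure is in the gadget itself. Its size is polynomial in the precision exponent $B$, not in the bit length of $B$. In your own schedule every local precision satisfies $B\ge K_j\ge K_s=H+10$, with $H=2^{\poly(n)}$. This is unavoidable, because exact sign tests need absolute error below the semantic gap $2^{-H}$.

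Two consequences follow:
\begin{itemize}
\item The tower of $k=O(\log H+B)$ principal square roots already has exponentially many gates.
\item The Taylor steps have the same defect. Degree-$O(B)$ truncations of $\sum_i(-1)^{i+1}x^i/i$ and $\sum_i z^i/i!$ have no binary-doubling recursion, unlike a geometric sum, and you give no SLP for them of size polynomial in $\log B$. The scale-and-square device only trades the Taylor degree $d$ against the number $k'$ of squarings, subject to roughly $k'd\gtrsim B$, so one of the two must be exponential.
\end{itemize}
A log/exp route would need a genuinely quadratically convergent method for $\log$ and $\exp$, such as AGM-type iterations, which you do not supply.

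\textbf{The missing idea is that the oracle can supply Newton's starting point.} You correctly note that Newton on $x^N=a$ is slow only from a bad start. In $\Pclass^{\PosSLP}$, adaptive comparisons can provide a good start:
\begin{enumerate}
\item For $2^{-L}\le a\le2^L$, binary search with $O(\log L)$ comparisons finds $k$ with $2^{Nk}\le a<2^{N(k+1)}$. The power $2^{Nk}$ has a short SLP, since $Nk$ has $O(\log N+\log L)$ bits.
\item Then $\lceil\log_2N\rceil+4$ bisections of $[2^k,2^{k+1}]$, each comparing $u^N$ with $a$ using $O(\log N)$ multiplications, give an upper endpoint $u$ with relative error $\varepsilon\le1/(16N)$.
\item For $u'=((N-1)u+a/u^{N-1})/N$, weighted AM--GM and Taylor's theorem give $0\le u'/r-1\le\frac{N-1}{2}\varepsilon^2$. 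Hence $\eta=N\varepsilon/2$ satisfies $\eta'\le\eta^2$, starting from $\eta\le1/32$.
\item So $O(\log(L+H+1))$ Newton steps reach error $2^{-H}$, independently of the values of $N$ and $H$.
\end{enumerate}

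With this gadget the nesting works with a linear schedule. On a neighbourhood of a nonzero radicand, the derivative of the selected root is at most $2M$ with $M=2^{L_0}$, independently of $N$. Arithmetic gates amplify errors by at most $4M^3$, so a local precision $B=s(4L_0+5)+L_0+20$ suffices. Your imported halving schedule is then unnecessary, though harmless once zero radicands are detected by the separated test.
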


The point is to avoid a Newton transient proportional to \(N\).
Suppose first that \(a>0\) is rational, given by an SLP, and
\(2^{-L}\le a\le2^L\). Binary search with \(\PosSLP\) comparisons
finds the integer \(k\) such that
\[
 2^{Nk}\le a<2^{N(k+1)}.
\]
This takes \(O(\log L)\) comparisons. The powers have short rational
SLPs because \(Nk\) has \(O(\log N+\log L)\) bits. Writing
\(r=a^{1/N}\), bisect \([2^k,2^{k+1}]\) another
\(\lceil\log_2N\rceil+4\) times. Its upper endpoint \(u\) then has
relative error
\[
 0\le\varepsilon=u/r-1\le\frac1{16N}.
\]
Every comparison uses \(O(\log N)\) multiplication gates.

For the exact Newton step
\[
 u'=\frac{(N-1)u+a/u^{N-1}}N,
\]
weighted AM--GM and Taylor's theorem give
\begin{equation}\label{push12:nth-newton}
 0\le\frac{u'}r-1
 =\frac{(N-1)(1+\varepsilon)+(1+\varepsilon)^{-(N-1)}}N-1
 \le\frac{N-1}{2}\varepsilon^2.
\end{equation}
Indeed the second derivative of \((1+t)^{-(N-1)}\) is at most
\(N(N-1)\) for \(t\ge0\). Thus \(\eta=N\varepsilon/2\) satisfies
\(\eta'\le\eta^2\), starting below \(1/32\). After \(t\) steps,
\[
 0\le u-r\le2^L2^{-5\cdot2^t}.
\]
Only \(O(\log(L+H+1))\) steps are needed for error below \(2^{-H}\).
Neither \(N\) nor \(H\) needs to have polynomial value. The cases
\(N=1\), \(a=0\), and negative odd radicands are immediate.

For nesting, fix bounds from the original circuit before choosing any
approximation precision. If \(s\) is its number of gates and
\(b=\sum_i\lceil\log_2N_i\rceil\), every gate belongs to a field of
degree at most \(2^b\). Absolute logarithmic height satisfies
\[
 h(x+y)\le h(x)+h(y)+1,\quad
 h(xy),h(x/y)\le h(x)+h(y),\quad
 h(\sqrt[N]{x})=h(x)/N.
\]
Consequently a computable \(L_0=2^{\poly(n)}\), with polynomial
binary length, bounds every nonzero defined gate by
\(2^{-L_0}\le |z|\le2^{L_0}\).
Let \(M=2^{L_0}\). Once cached approximations have error below
\(1/(4M)\), comparison with \(\pm1/(2M)\) identifies exact zero
and strict sign. Zero radicands are therefore returned exactly.
On a nonzero radicand's neighborhood, the derivative of its chosen
real root has magnitude at most \(2M\), independently of \(N\).

Ordinary arithmetic amplifies an error \(e\) by at most \(4M^3\).
Using local root error \(2^{-B}\), with
\[
 B=s(4L_0+5)+L_0+20,
\]
keeps the total error below \(1/(8M)\) at every gate. The preceding
domain tests and the final positivity comparison are therefore valid.
All approximation SLPs share the cached earlier gates. Their sizes and
the number of oracle calls are polynomial, proving the first assertion.
The many-one conclusion uses the earlier compilation theorem separately.

\subsection{Inverting a polynomial given by a monotone circuit}
\label{push12:monotone}

The same estimate applies to a larger family. An inverse gate supplies
a univariate polynomial \(F\) by a circuit using addition and
multiplication, \(X\), and nonnegative coefficient leaves. The
coefficient leaves and a target \(b\) may be values of earlier gates.
The gate returns the unique \(r\ge0\) satisfying \(F(r)=b\).
It is valid when \(F\) is nonconstant and \(b\ge F(0)\).
The entire internal circuit is counted in the input length; its
expanded coefficient list is not part of the input.

\begin{theorem}\label{push12:monotone-theorem}
Sign and validity for arithmetic circuits with polynomially many nested
inverse gates of this form reduce many-one to \(\PosSLP\).
\end{theorem}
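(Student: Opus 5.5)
The plan follows the binary-degree radical argument of Theorem~\ref{push12:binary-root-theorem}, with Newton's method for \(F(X)=b\) replacing the \(N\)-th root iteration. We first show the problem lies in \(\Pclass^{\PosSLP}\), then invoke Theorem~\ref{push10:adaptive} to compile the adaptive queries into a single query. Before any precision is chosen, we fix uniform bounds from the original circuit. Let \(d_F\le2^{s}\) be the syntactic degree of each monotone circuit, where \(s\) is its size. Each inverse gate root is a root of \(F(X)-b\). Its coefficient field is generated by earlier gates, so it has degree at most \(2^{\poly(n)}\). The height inequalities of Section~\ref{push12:binary-roots}, together with \(h(r)\le h(F-b)+\log(d_F+1)\) for a root of a polynomial, then give a computable \(L_0=2^{\poly(n)}\) of polynomial binary length. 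This \(L_0\) bounds every nonzero defined gate value by \(2^{-L_0}\le|z|\le2^{L_0}\). Validity is tested on cached approximants against \(\pm2^{-L_0-1}\) thresholds. Nonconstancy is tested the same way on the sum of the nonconstant coefficients, which is itself a monotone expression obtained as \(F(1)-F(0)\); the condition is \(b-F(0)\ge0\).

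For a single valid gate with \(b>F(0)\), monotonicity gives the structure we need. On \([0,\infty)\), \(F\) is increasing and convex, and \(F'\), \(F''\) have nonnegative coefficients. First, find \(k\) with \(F(2^k)\le b<F(2^{k+1})\) by binary search over \(k\in[-L_1,L_1]\). Each test \(F(2^k)\) is a short circuit because \(2^k\) comes from repeated squaring, so the search takes \(O(\log L_1)\) PosSLP comparisons. Next, bisect \(\lceil\log_2 d_F\rceil+O(1)\) more times to reach an upper approximation \(u\ge r\) with \(u/r-1\le1/(16d_F)\). Then iterate \(u'=u-(F(u)-b)/F'(u)\). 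Convexity keeps \(u'\ge r\).

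The key estimate generalizes \eqref{push12:nth-newton}. For a polynomial with nonnegative coefficients and degree \(d_F\), and \(t\) in \([r,u]\), we have \(tF''(t)\le(d_F-1)F'(t)\). We also have \(F'(u)\ge F'(r)\). Together these give \(u'-r\le\tfrac{d_F-1}{2}(u-r)^2/r\). So \(\eta=d_F\varepsilon/2\) satisfies \(\eta'\le\eta^2\), and \(O(\log(L_0+H))\) steps reach error \(2^{-H}\) even though \(d_F\) may be exponential. The case \(r\) near \(0\) needs care: when \(r\) is tiny, the relative-error argument holds as long as \(r\ge2^{-L_0}\). The case \(b=F(0)\) gives \(r=0\), which the threshold test detects exactly. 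We also need to bound the derivative of the root map so that errors in the coefficients and in \(b\) propagate controllably. This follows from \(|dr/db|=1/F'(r)\) and an algebraic lower bound on \(F'(r)>0\). That lower bound in turn comes from \(F'(r)\) being a nonzero element of controlled height, giving \(F'(r)\ge2^{-L_0}\) after enlarging \(L_0\). Substituting perturbed coefficients into the monotone circuit changes \(F\) and \(F'\) by at most \(2^{O(L_0)}\) times the perturbation, provided the perturbed coefficients stay nonnegative. We keep them nonnegative by adding a small positive shift, as in the radicand trick of Proposition~\ref{push10:radicals}.

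Error scheduling then follows Section~\ref{push12:binary-roots}. Each gate amplifies earlier errors by at most \(2^{cL_0}\), so a local precision of \(B=s(cL_0+5)+L_0+20\) suffices, and cached approximants are shared. The final sign is a comparison with \(\pm2^{-L_0-1}\). All the PosSLP queries together form a polynomial-time adaptive computation, which Theorem~\ref{push10:adaptive} turns into one many-one query. \textbf{The main obstacle} I expect is the uniform height and derivative bound for nested inverse gates. We must show that \(h(r)\) and the lower bound on \(F'(r)\) stay at \(2^{\poly(n)}\) when the coefficients are themselves earlier roots of exponential degree. First I would bound these through the resultant of \(F-b\) and \(F'\) over the coefficient field, using a Mahler-measure bound.
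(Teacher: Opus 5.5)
Your proposal is correct and follows the paper's proof essentially step for step: an exponent search plus $O(\log D)$ bisections, the Newton bound $u'-r\le\frac{D-1}{2r}(u-r)^2$ from $xF''\le(D-1)F'$, semantic bounds from coefficientwise height estimates together with $h(r)\le h(F-b)+\log_2(D+1)$, and $F'(r)$ treated as one more augmented gate, followed by a shared-cache precision schedule and compilation via Theorem~\ref{push10:adaptive}; the nested-height obstacle you flag is therefore already closed by your own first paragraph, and no resultant or Mahler-measure argument is needed. The paper is more explicit in three places: $F'$ comes from automatic differentiation; $|dr/db|=1/F'(r)$ is turned into a finite perturbation bound via $F'(x)\ge F'(r)/2$ on $[r-r/(4D),r]$ (Bernoulli, costing $\lceil\log_2D\rceil$ extra precision bits); and the leaf-Lipschitz exponent is a separate $G=2^{\poly(n)}$ that can be of order $2^{S}L_0$, so your constant $c$ must be allowed to be $2^{\poly(n)}$, which still leaves $B$ of polynomial binary length.
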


For \(F(X)=\sum_{j=0}^D a_jX^j\), \(a_j\ge0\), one has
\[
 F'(x)>0,\qquad F''(x)\ge0,\qquad
 xF''(x)\le(D-1)F'(x)\quad(x>0).
\]
If \(F(r)=b\), \(r>0\), and \(u\ge r\), then
\[
 u'=u-\frac{F(u)-b}{F'(u)}
\]
lies in \([r,u]\). Moreover,
\begin{align*}
 u'-r
 &=\frac1{F'(u)}\int_r^u\bigl(F'(u)-F'(v)\bigr)\,dv\\
 &\le\frac{D-1}{2r}(u-r)^2.
\end{align*}
Thus the relative error obeys the same bound as
\eqref{push12:nth-newton}, with \(D\) in place of \(N\).
After an exponent search and \(O(\log D)\) bisections, Newton's
method converges quadratically. Automatic differentiation supplies
\(F'\) with a polynomial-size circuit, even when \(F\) has
exponentially many terms. Nonconstancy is tested by \(F'(1)>0\).

We give the stability argument needed for nested gates. It is not enough
to bound the expanded heights of the eventual rational approximants.
An internal circuit of size \(S\) has degree at most \(2^S\).
If its coefficient leaves have height at most \(T\), ordinary
coefficientwise sum and product bounds give
\[
 \max_j h(a_j)\le(T+1)2^{3(S+1)^2}.
\]
For multiplication, each coefficient has at most \(2^S+1\)
summands, which proves this deliberately coarse recurrence.
The algebraic root-height inequality then gives
\[
 h(r)\le h([a_0:\cdots:a_D])+\log_2(D+1)
       \le(T+1)2^{5(S+2)^2},
\]
after including the target. The products of field degrees and all
height bounds remain singly exponential through the full circuit.

Augment the semantic description by the arithmetic computations of
\(F(0)\), \(F'(1)\), \(b-F(0)\), and \(F'(r)\). Its size is
still polynomial. Fix \(L=2^{\poly(n)}\) such that every nonzero
augmented gate lies in magnitude between \(2^{-L}\) and \(2^L\).
In particular this bounds \(F'(r)\) below at each positive inverse
root. This derivative value is used to prove a bound; it is not an
oracle required before computing the root.

Put \(M=2^L\). The circuit magnitude and derivative recurrence supplies
a computable \(G=2^{\poly(n)}\) such that
\[
 \sum_j|\partial_{c_j}F(x)|+1\le2^G
\]
when \(|x|\le2M\) and each coefficient leaf \(c_j\) has magnitude
at most \(2M\). These are derivatives with respect to the supplied
leaves, so no coefficient expansion is needed. Exact zero leaves are
identified from the semantic gap and replaced by zero; positive leaves
retain their signs. The perturbed addition-and-multiplication circuit
then has the same coefficient support as the true polynomial.

Suppose its leaves and target have error at most \(e\). For
\(0<t\le r/(4D)\), the nonnegative derivative coefficients and
Bernoulli's inequality imply
\[
 F'(x)\ge\tfrac12F'(r)\quad(r-t\le x\le r).
\]
Hence the two values \(F(r\pm t)-b\) have opposite signs and
absolute values at least \(t2^{-L}/2\). Their perturbations have
absolute value at most \(e2^G\). Taking
\[
 t=4e2^{G+L},\qquad
 e\le2^{-G-2L-\lceil\log_2D\rceil-5}
\]
preserves both endpoint signs. The perturbed inverse root therefore
exists, lies in \([1/(2M),2M]\), and differs from \(r\) by at most
\(4e2^{G+L}\). A true zero inverse is recognized through
\(b-F(0)=0\) and returned exactly.

Choose a common error amplification bound
\(K=2^{G+3L+10}\), and local root precision \(2^{-B}\), where
\[
 B=s(G+3L+11)+G+2L+n+30.
\]
The accumulated error after \(j\le s\) gates is at most
\((1+K+\cdots+K^{j-1})2^{-B}\). It is below both the displayed
stability requirement and \(1/(8M)\). This justifies the domain
tests, all perturbed inverse calls, and the final sign test by induction.
The root algorithms depend polynomially on the binary length of \(B\).
Sharing earlier approximations gives a polynomial-size rational DAG,
computed in \(\Pclass^{\PosSLP}\). Compilation proves the theorem.

These gates do not solve an arbitrary coupled system of monotone
equations. They invert one variable at a time in a supplied circuit.
No reduction of unrestricted ETR to this representation is known here.

\subsection{Sparse real-root gates of binary degree}
\label{sec:sparse-root-gates}

The root-operation model can be enlarged beyond polynomial-degree
polynomials.  A \emph{sparse real-root gate} specifies
\[
 p(X)=\sum_{j=1}^{k}a_jX^{e_j}
\]
and the index of a distinct real root, in increasing order.  Exponents
are nonnegative integers written in binary, and coefficients are values
of earlier arithmetic or root gates.  Thus the degree can be exponential
in the circuit description, while the displayed number of terms is
polynomial.  A zero polynomial, a nonexistent root index, or an ordinary
division by zero makes the circuit invalid.

\begin{theorem}\label{thm:sparse-root-gates}
Validity and output sign for polynomial-size circuits with nested sparse
real-root gates are in $\Pclass^{\PosSLP}$.  Consequently, using the
adaptive $\PosSLP$ compilation established above, these problems reduce
in polynomial time to $\PosSLP$.
\end{theorem}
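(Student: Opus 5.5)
We follow the pattern of Theorems~\ref{push10:root-gates} and \ref{push12:binary-root-theorem}. The plan is to process the gates in order, maintaining for each gate a cached rational SLP approximation of known precision. All sign decisions are made by $\PosSLP$ comparisons against separated thresholds. Theorem~\ref{push10:adaptive} then compiles the polynomially many adaptive queries into one.

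The first step fixes semantic bounds before choosing any precision. Each sparse gate adjoins a root of a polynomial of degree $\le 2^{\poly(n)}$, since its exponents are binary. So every gate lies in a field of degree $2^{\poly(n)}$. Absolute logarithmic height obeys $h(\alpha)\le h([a_1:\cdots:a_k])+\log_2 k$ for a root $\alpha$ of $\sum a_jX^{e_j}$, independently of the degree. Together these give a computable $L_0=2^{\poly(n)}$ with three properties: every nonzero gate value has magnitude in $[2^{-L_0},2^{L_0}]$; distinct real roots of each specified sparse polynomial are separated by at least $2^{-L_0}$; and every nonzero value of $p$, $p'$, and the relevant derivative-quotients at those roots is bounded away from zero by the same bound. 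For separation, we apply Lemma~\ref{lem:algebraic-gap}-style norm arguments to differences of conjugates, over an extension of degree at most the square of the field degree.

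The second step treats one gate and is the main obstacle: isolating real roots of a sparse polynomial of exponential degree without Sturm sequences of that degree. The plan is a Descartes/Rolle recursion on the number of terms $k$. Dividing by the lowest power of $X$ and differentiating $X^{-e_1}p$ gives a $(k-1)$-term sparse polynomial. Between consecutive positive roots of the derivative, $p$ is monotone on $(0,\infty)$, so it has at most one root there, with at most $k-1$ positive roots overall. Recursively we obtain approximations to the at most $k-2$ critical points. We then sign-test $p$ at approximations to these critical points and at $0^+,\infty$; each sparse evaluation is a short SLP by repeated squaring. This determines the monotone intervals containing roots. In each such interval, bisection followed by Newton steps isolates and refines the root. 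The Newton step is controlled by a relative-error contraction as in \eqref{push12:nth-newton}, with the quadratic phase governed by the degree-free quantity $xp''/p'$ bounded via the semantic gap. Negative roots are handled by $X\mapsto -X$, and the root $0$ by the minimal exponent. The recursion depth is $k$, so polynomially many levels arise. We require the precision at level $i$ to exceed the tolerance that level $i+1$ needs, scaling $2^{-B}$ by a factor $2^{C}$ per level, exactly as in the schedule $B_i=(r-i+1)C$ of Theorem~\ref{push10:root-gates}.

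The third step handles nesting. The coefficients $a_j$ are cached approximations. Exact zeros are detected from the gap $2^{-L_0-1}$ and set to literal zero, so the support is exact. A Rouch\'e-type argument on circles of radius smaller than half the separation then shows that perturbing the coefficients by less than $2^{-G}$ preserves the real roots and their order. Here $G$ bounds the leaf-derivative sums of $p$ on a $2^{L_0+1}$-ball, computed by the circuit derivative recurrence. Validity is checked along the way: the polynomial must be nonzero, the root index must exist by the exact count from the second step, and divisors must be nonzero. A global error-amplification bound $K=2^{\poly(n)}$ and a per-gate precision $B=s\log_2K+\poly(n)$ keep all cached errors below $1/(8\cdot2^{L_0})$ by induction. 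Since all exponents stay of polynomial binary length, the final comparison is one $\PosSLP$ instance after denominator clearing. I expect the delicate point to be proving the degree-independent quadratic Newton convergence on each monotone interval, in particular near critical points where $p'$ is small.
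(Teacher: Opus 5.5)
Your overall architecture is reasonable: semantic bounds fixed first, a Descartes/Rolle chain of fractional derivatives, semantic zero tests at critical points, a reverse precision schedule, and adaptive compilation. But the step you flag as ``delicate'' is where the proof breaks. Bisection followed by Newton does not refine a root with polynomially many operations.

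Bisection needs about $\log_2(w/\rho)$ halvings to bring a bracket of width $w$ into a fast-convergence basin of radius $\rho$. Here $\rho$ is controlled by the distance from the root to the nearest other \emph{complex} root, which can be as small as the semantic gap $2^{-2^{\poly(n)}}$. Bounding $xp''/p'$ ``via the semantic gap'' therefore only gives a doubly exponential bound. That means a doubly exponentially small relative basin in an \eqref{push12:nth-newton}-type analysis, and hence $2^{\poly(n)}$ steps.

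For instance, let earlier gates compute $\epsilon=2^{-2^m}$ by $m$ squarings, and take the four-term gate $p(X)=(3X-1)\bigl((3X-1)^2+\epsilon\bigr)$. Here $p'>0$ on $\R$, so your chain finds no critical point, and the root $1/3$ is bracketed only in an interval of constant width. With $u=3X-1$ the Newton map is $u\mapsto 2u^3/(3u^2+\epsilon)$. It contracts only by the factor $2/3$ until $|u|\lesssim\sqrt\epsilon$, and bisection needs about $2^{m-1}$ halvings to get there. Your own invariant keeps cached errors below $2^{-L_0-3}$ with $L_0\ge 2^m$. So both the number of $\PosSLP$ queries and the size of the cached rational SLP are exponential in the input size $O(m)$.

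The exponent search of Theorem~\ref{push12:binary-root-theorem} does not rescue this, because the location of the cluster is unknown. Nor is there a positivity structure bounding $xp''/p'$, as there is for the monotone and series/parallel gates.

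What is missing is an accelerated subdivision whose step count is polynomial in the \emph{logarithm} of the precision and magnitude exponents. The paper imports the Jindal--Sagraloff weak cover and Sagraloff's sparse refinement, with a bound polynomial in $k$, $\log(n+1)$ and $\log(\tau+L+1)$. It then repairs two features of that algorithm that you would also face:
\begin{itemize}
\item the Newton centre is rounded to a grid of $N=2^{2^\ell}$ points, which would write an exponentially long floor;
\item the polynomial-degree version in Theorem~\ref{push10:root-gates} tries every cluster multiplicity $1,\ldots,d$, which is impossible when a cluster of a sparse polynomial contains exponentially many roots.
\end{itemize}
The paper replaces both by the exact secant centre $\lambda$ built from two quarter points, followed by clipping, snapping and endpoint perturbation. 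It then checks the cluster-success and endpoint-movement invariants for this modified step.

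A smaller point: your Rouch\'e claim that small coefficient perturbations preserve ``the real roots and their order'' is false at multiple roots, since a double real root can become a conjugate pair. Your plan needs the claim only for simple roots, with multiple roots carried semantically from deeper stages of the chain. The paper instead splits multiple roots with the operator $(1+\epsilon D_X)^{k-1}$.
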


Multiple roots are permitted.  The theorem does not require an algorithm
that expands a polynomial of exponential degree, computes its dense
square-free part, or writes a high-precision root in binary.  It also
does not apply to arbitrary short circuits for univariate polynomials:
such a circuit need not describe a polynomial with few terms.

We use the sparse isolation algorithms
of~\cite{Sagraloff14,JindalSagraloff17}.  There are two adjustments
to make before using their arithmetic bounds with succinct numbers.
First, rounding a Newton center to a grid with exponentially many bits
cannot be treated as an elementary operation.  Second, the precision
bound for nested gates must depend on the magnitudes of their
coefficients, rather than the bit heights of previously constructed
rational approximants.  We give these adjustments explicitly.

\paragraph{Sparse weak covers.}
After collecting equal exponents and deleting zero coefficients, divide
out the lowest monomial.  Differentiating and again dividing out the
lowest monomial removes one term.  The resulting fractional-derivative
chain has length at most $k$.  On the positive axis, the union of its
root sets has fewer than $k^2$ elements, by Descartes' rule.  Accordingly,
among $k^2+1$ distinct positive grid points, one is a nonroot of every
member of this chain.  It can be selected by exact sign comparisons;
one may maximize the minimum absolute value of the chain polynomials.
All newly selected endpoints below use this entire original chain.

The weak-cover algorithm of~\cite[Algorithm~1]{JindalSagraloff17}
recursively retains small intervals covering derivative roots and refines
the intervening sign-change intervals.  It makes polynomially many
refinement calls and returns polynomially many disjoint intervals of
width at most $2^{-L}$, covering all real roots in its domain.  Some
intervals may contain no root.  For a polynomial with simple real roots
separated by more than $2^{-L+2}$, retaining exactly those intervals
with opposite endpoint signs gives genuine isolating intervals.
Root-free boundary sentinels are discarded.

If the nonzero coefficient magnitudes lie in $[2^{-\tau},2^\tau]$,
the refinement bound is polynomial in
$k,\log(n+1),\log(\tau+L+1)$, where $n$ is the degree
\cite[Theorem~11]{JindalSagraloff17}.  Here $\tau$ is an archimedean
magnitude bound, not the bit height of a rational representation.
The fractional-derivative chain satisfies the same type of bound with
$T=\tau+k\lceil\log_2(n+1)\rceil$.
Short powers of two give positive endpoints below and above all its
positive roots by the Cauchy bounds.  Alternatively, dilation $X=SY$
by a power of two above the upper root bound places the search in the
source's unit interval.  Dilation preserves sparsity and increases the
magnitude exponent by at most $n\log_2S$, which has short binary length.
The negative roots are treated by replacing $X$ with $-X$.

\paragraph{Removing the large floor operation.}
Let $I=(a,b)$ contain exactly one simple real root, put $w=b-a$, and
maintain $N=2^{2^\ell}$ with $\ell\geq1$.  At the three quarter points
$x_i=a+iw/4$, evaluate $v_i=p(x_i)/p'(x_i)$ whenever the denominator
is nonzero.  For each of the at most three pairs with $v_i\ne v_j$, use
the exact rational center
\[
 \lambda=x_i-\frac{(x_j-x_i)v_i}{v_j-v_i}.
\]
Set $r=w/(4N)$ and $\delta=r/16$.  Clip
$[\lambda-r,\lambda+r]$ to $[a,b]$ and reject it if its width is less
than $3r/4$.  An endpoint within $r/2$ of a parent endpoint is snapped
to that parent endpoint.  Otherwise perturb it by at most $\delta$,
using the nonroot grid just described.  Accept the resulting interval
$I'$ if its endpoint values have opposite signs.  Directly from the
clipping, snapping, and perturbation rules,
\begin{equation}\label{eq:sparse-width}
 \frac{5r}{8}\leq w(I')\leq\frac{25r}{8},\qquad
 \frac{w}{8N}\leq w(I')\leq\frac{w}{N}.
\end{equation}
Every changed endpoint moves at least $7r/16\geq w(I')/8$.

For completeness, the center has the cluster-success property needed
in the sparse refinement analysis.  Suppose a subinterval $J$ of width
at most $w/(8192N)$ has a one-circle disk containing $c$ roots, counted
with multiplicity, and there are no additional roots within distance
$1024nNw$ of the midpoint of $I$.  Write $m$ for the midpoint of $J$.
Two quarter points are separated by at least $w/4$ and each has distance
at least $w/8$ from $m$.  Logarithmic differentiation gives, at either,
\[
 \frac{(x-m)p'(x)}{c p(x)}=1+e_x,
 \qquad |e_x|<\frac1{256N}.
\]
Indeed, the clustered roots contribute at most $9/(16384N)$ to the
error and the remaining roots contribute at most $1/(512cN)$.
Writing $v_x=(x-m)/c+d_x$ gives $|d_x|\leq w/(255cN)$,
$|v_j-v_i|\geq w/(5c)$, and hence
\[
 |\lambda-m|\leq\frac{10w}{255N}<\frac r6.
\]
Clipping therefore leaves width at least $5r/6$, and the perturbed
interval retains $J$.  Its sign test succeeds.  In particular, no
enumeration of the possibly exponential cluster multiplicity is used.

Boundary tests use points at distance $3w/(4N)$ from either endpoint,
perturbed by at most $w/(16N)$.  They succeed when the root is within
$w/(4N)$ of that endpoint.  Otherwise split within $w/16$ of the
midpoint.  A successful accelerated step increments $\ell$; a linear
step replaces it by $\max(1,\ell-1)$.  The invariant
\[
 \log_2N\leq2\log_2(w_0/w)+2
\]
holds throughout, where $w_0$ is the initial width.  Before termination,
$w\geq2^{-L}$; if $w_0\leq2^\Gamma$, then
\[
 \ell\leq\log_2\bigl(2(\Gamma+L)+2\bigr).
\]
Thus $N$ always has a short repeated-squaring description.  If a quarter
point is exactly the root, return a small interval around it with
nonroot endpoints, rather than a singleton later lost by sign filtering.

These are the geometric invariants used in the sparse refinement
analysis~\cite[Theorem~7]{Sagraloff14}: accelerated width reduction,
fixed-fraction endpoint movement, boundary success, and cluster success.
The initial endpoint/Obreshkoff-lens argument takes
$O(\log n+\log(T+L))$ steps.  Thereafter the one-circle regions lose
roots at most $k-1$ times, and a chain with unchanged root set has that
same length bound.  Changing the endpoint-movement fraction from $1/4$
to $1/8$ only changes absolute constants.  Each sparse evaluation uses
$O(k\log(n+1))$ arithmetic operations.  The resulting weak cover
therefore uses polynomially many rational-SLP operations and
$\PosSLP$ comparisons, without a large floor or endpoint bit expansion.

\paragraph{Splitting multiple roots without a dense GCD.}
Suppose $p$ has $k\geq2$ retained terms and a nonzero constant term.
Choose $H\geq1$ such that its nonzero coefficients and root moduli lie
in $[2^{-H},2^H]$, and distinct complex roots are separated by
$\sigma=2^{-H}$.  Put $s=k-1$ and
\[
 q_\epsilon=(1+\epsilon D_X)^s p.
\]
This polynomial has at most $k^2$ terms.  Every root of $p$ has
multiplicity at most $k-1$: otherwise the first $k$ derivative equations
at a nonzero root give an invertible falling-factorial Vandermonde
system in the distinct exponents.

At a root of multiplicity $m$, the local model is
\[
 L_{m,s}(z)=(1+D_z)^s z^m
 =\sum_{j=0}^m\binom{s}{j}(m)_jz^{m-j}.
\]
For $s\geq m$ it has $m$ distinct negative real roots in $[-sm,0]$.
To see this, for a real-rooted polynomial $P$ the function
$(P+P')/P=1+\sum_i m_i/(z-r_i)$ decreases strictly in every gap.
It adds one simple root per gap and one to the left, and lowers each
old multiplicity by one.  Its leftmost new root moves by at most
$\deg P$.  Iteration starting from $z^m$ proves both assertions.

The following constants make the perturbation quantitative.  For desired
output precision $B\geq1$, set
\begin{align*}
 R&=2(k+1)^2,\\
 A&=4(k+1)^2\lceil\log_2(2k+2)\rceil,\qquad \eta=2^{-A},\\
 C&=k(A+3)+(k+1)\lceil\log_2(n+1)\rceil\\
  &\hspace{2em}+(k+1)\lceil\log_2R\rceil+20,\\
 E&=B+4H+C,\qquad \epsilon=2^{-E}.
\end{align*}
The nonzero integer discriminant and the root bound for $L_{m,s}$
give model-root separation at least $\eta$.  On a circle of radius
$\eta/8$ about a model root, $|L_{m,s}|\geq(\eta/8)^k$.
Writing
\[
 p(\alpha+\epsilon z)=c\epsilon^mz^m
       \prod_{\beta\ne\alpha}
       \left(1+\frac{\epsilon z}{\alpha-\beta}\right)
\]
with repeated roots included, the normalized local-model error on
$|z|\leq R$ is at most
\[
 R^m(n+1)^s\bigl((1+\epsilon R/\sigma)^n-1\bigr)
 \leq2n(n+1)^sR^{m+1}\epsilon/\sigma
 <\tfrac12(\eta/8)^k.
\]
Rouch\'e's theorem gives one $q_\epsilon$-root near each local-model
root.  The resulting $n$ disjoint disks account for the whole degree:
$q_\epsilon$ is square-free, its separation is at least
$\epsilon\eta/2$, and each root moves less than $\epsilon R$.
For real $\alpha$, conjugation fixes each small disk and thus its unique
root.  For nonreal $\alpha$, $|\operatorname{Im}\alpha|\geq\sigma/2$,
so no real root is created.  Each real root of multiplicity $m$ has
therefore become exactly $m$ simple real roots.

There is also a necessary coefficient bound.  At exponent $e$, the
coefficient of $q_\epsilon$ is
$\sum_j\binom{s}{j}(e+j)_j a_{e+j}\epsilon^j$.
The first nonzero summand is unique and dominates the remaining terms
by a factor greater than two.  Every retained coefficient consequently
has magnitude at least $2^{-H-1-sE}$ and at most $k2^H(n+1)^s$.
Thus the weak-cover magnitude parameter is linear in $E$, even when
the rational representation of a coefficient is much larger.

\paragraph{Rational approximation and root indices.}
Approximate the retained coefficients of $p$ to error $2^{-P}$ and
apply the same differential operator to obtain a rational-SLP polynomial
$\widetilde q$.  It suffices to take
\[
 P=H+n(H+E+A+6)+(k-1)\lceil\log_2(n+1)\rceil
                          +3\lceil\log_2(k+1)\rceil+10.
\]
Each coefficient changes by at most $k(n+1)^s2^{-P}$.
On circles of radius $\nu=\epsilon\eta/16$ about the roots of
$q_\epsilon$, its magnitude is at least $2^{-H}\nu^n$.
The displayed precision makes the coefficient perturbation smaller
than this bound throughout those circles, where $|X|\leq2^{H+2}$.
Rouch\'e's theorem again preserves all roots, their reality, and their
order.  The separation is at least $\epsilon\eta/4$.

Run the weak cover on $\widetilde q$ with $L=E+A+12$ and retain
sign-change intervals.  Midpoints belonging to one original root have
diameter less than $\sigma/8$, whereas distinct original-root clusters
are separated by more than $\sigma/2$.  Grouping consecutive midpoints
at gap $\sigma/4$ recovers the distinct-root order and multiplicities.
Any midpoint in the selected group approximates that root within
$2^{-B}$.  The removed monomial contributes zero once to the distinct
root list, with its binary multiplicity recorded separately.

\paragraph{A shared precision schedule for nested gates.}
Let $M$ be the input description length and $r$ its number of root
gates.  The product of their degrees is at most $2^{\operatorname{poly}(M)}$.
Standard algebraic-height and norm bounds give a fixed
$H_0=2^{\operatorname{poly}(M)}$ controlling all valid-prefix values,
their nonzero conjugates, coefficients, roots, and nonzero root
differences.  Include the collected coefficient expressions and the
final ordinary-arithmetic output in this bound.  Enlarge it to control
the conditioning of their rational arithmetic on neighborhoods of the
true earlier roots.  This bound concerns the input's algebraic values,
not the heights of later rational approximants.

Exact zero coefficients and zero denominators can be recognized by
approximating past this fixed nonzero gap.  The preceding construction
needs coefficient precision at most $K(B+H_0)$ to deliver root error
$2^{-B}$, for one common $K=2^{\operatorname{poly}(M)}$.
Crucially, this dependence is linear in $B$.  Choose
\[
 B_i=K^{r-i+1}(B_{\rm out}+H_0+1),\qquad 1\leq i\leq r,
\]
enlarging $K$ once for coefficient conditioning.  All these exponents
have polynomial binary length.  Compute each root approximant once
and keep it in a shared arithmetic DAG.  Each gate uses polynomially
many operations in $M$ and $\log(B_i+H_0)$, because the weak-cover
bound uses the controlled coefficient magnitudes and root separation.
It does not use the bit heights of the cached numerators.

There are consequently only polynomially many new arithmetic operations
and $\PosSLP$ queries.  Approximating the final output past its semantic
gap determines its sign, including zero.  Invalid gates are detected
when first encountered.  This proves the theorem.

The restriction remains substantive for real feasibility.  In particular,
the characteristic polynomial of the critical algebra need not have
polynomially many terms.  The theorem enlarges the root-certificate
language without supplying such certificates for unrestricted ETR.

\subsection{Inverses of series/parallel rational circuits}
\label{sec:series-parallel-inverses}

A second extension permits rational inverse operations.  Consider a
shared DAG with one variable $X$, nonnegative scalar leaves, positive
scalar multiplication, and the operations
\[
 P(F,G)=F+G,\qquad S(F,G)=\frac{FG}{F+G}.
\]
We call these parallel and series composition, respectively.  Identically
zero operands are eliminated using $S(0,G)=0$ and $P(0,G)=G$, including
$S(0,0)=0$.  All retained nodes are positive for $X>0$.  Values at zero
are defined by continuity.  An inverse gate specifies such a function
$F$ and a target $b$, and returns the unique finite $r\geq0$ with
$F(r)=b$.  Negative scalar leaves, a constant function, or an absent
finite inverse make the gate invalid.

\begin{theorem}\label{thm:series-parallel-inverses}
Sign and validity for polynomial-size arithmetic circuits with nested
series/parallel inverse gates are in $\Pclass^{\PosSLP}$, and therefore
reduce to $\PosSLP$ using the adaptive-query compiler.
\end{theorem}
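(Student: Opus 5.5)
The plan follows the template of Theorems~\ref{push12:binary-root-theorem} and \ref{push12:monotone-theorem}. Each gate is approximated by a short rational SLP whose correctness is certified by PosSLP comparisons, the precision schedule is fixed in advance, and the adaptive queries are then compiled with Theorem~\ref{push10:adaptive}.

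\textbf{Structural facts.} First I would record the properties of a retained series/parallel node $F$ on $X>0$. By induction on the DAG, $F$ is a positive rational function that is nondecreasing in $X$, since $P$ and $S$ are monotone in each positive argument. A further induction should show concavity in a logarithmic sense, namely that $x\mapsto\log F(e^t)$ is 1-Lipschitz with nonnegative slope. For $P$ the slope is a convex combination of the input slopes. For $S$, one has $1/S=1/F+1/G$, so the slope of $\log S$ is again a convex combination. The base cases are the constant ($0$) and $X$ ($1$). Hence $0\le xF'(x)/F(x)\le1$, and this replaces the bound $xF''\le(D-1)F'$ of the monotone case. I would also try to prove that $F$ is concave in $x$, which follows because sums and harmonic sums of concave positive functions are concave. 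With concavity, Newton's method started from below the root increases monotonically to $r$. If concavity fails, I fall back on bisection in $\log x$ followed by Newton with the Lipschitz bound. Validity of an inverse gate reduces to three sign tests: nonconstancy, $b\ge F(0)$, and $b<F(\infty)$. The last is tested through the limiting slope and value, which are computable by the same recursion on the DAG ($P$ adds limits, $S$ takes harmonic sums of limits, with $\infty$ handled symbolically).

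\textbf{Single gate.} For the root-finding step I would first find, by binary search with PosSLP comparisons, an integer $k$ with $F(2^k)\le b<F(2^{k+1})$. Because $F$ has polynomial degree in the circuit size, the log-slope bound confines $|k|$ to $2^{\poly}$, so only $O(\log L)$ comparisons are needed. Since $\log F$ is 1-Lipschitz in $\log x$, the point $u=2^{k+1}$ already has constant relative error. A constant number of bisections then brings the relative error below a fixed threshold, and from there Newton's method converges quadratically in the relative error. The constant comes from the second-derivative estimate, which I would bound uniformly by $2/x$ times $F'$ using the DAG induction, applying the quotient rule to $S$. Derivatives come from automatic differentiation on the DAG, which gives polynomial size even though the expanded rational function may be huge. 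All denominators $F+G$ are positive, so no modular poles or sign issues arise.

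\textbf{Nesting.} The main obstacle is stability under nesting, exactly as in \S\ref{push12:monotone}. I would fix an algebraic degree and height bound $H_0=2^{\poly}$ for all semantic gate values, including $F'(r)$ at each inverse root and the limits at $0$ and $\infty$. Exact zero leaves are identified from this gap and replaced by literal zeros, so the perturbed DAG has the same retained structure. I would then bound the sensitivity of $F$ with respect to its leaves on a $2M$-box by a recurrence giving $2^G$, and argue via the intermediate value theorem, using $F'\ge F'(r)/2$ on a window of relative width $1/4$ obtained from the log-slope bound, that the perturbed root moves by at most $e\,2^{G+L+O(1)}$. A geometric precision schedule $B_i$ as in the binary-degree radical argument then keeps accumulated errors below the semantic gap. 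This justifies every domain test and the final sign test, and Theorem~\ref{push10:adaptive} finishes the reduction.
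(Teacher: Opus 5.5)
Your architecture matches the paper's: an exponent-search bracket, degree-free Newton, validity via $F(0)$ and an $\infty$ tag, a stability window with a fixed precision schedule, then the adaptive compiler. Your structural facts (monotonicity, concavity, $0\le xF'/F\le 1$) are also correct.

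The gap is in the estimate that carries the theorem, the degree-free curvature bound $0\le -xF''\le 2F'$. You propose to prove it by DAG induction on that inequality, using the quotient rule at $S$. That inequality is not preserved by series composition. With $u=1/F$ it reads $xu''+2u'\le 2xu'^2/u$, which is nonlinear in $u$. Since $1/S=1/F+1/G$ and $(u'+v')^2/(u+v)\le u'^2/u+v'^2/v$, the inductive step runs the wrong way. Concretely, take $F(1)=1$, $F'(1)=\tfrac12$, $F''(1)=-1$ (globally, $F=\tfrac32-\tfrac1{2x}$ for $x\ge\tfrac23$ and $F=\tfrac98x$ below) and $G\equiv1$. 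Both satisfy concavity, the log-slope bound and $-xF''\le 2F'$. Yet $S=F/(F+1)$ has $S'(1)=\tfrac18$ and $S''(1)=-\tfrac5{16}$, so $-S''(1)>2S'(1)$.

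The missing idea is an invariant that is linear in $1/F$. The dual $F^*(t)=1/F(1/t)$ lies in the same grammar, because duality exchanges series and parallel composition and inverts constants and scalar multipliers. Hence $F^*$ is concave as well. That concavity is exactly $FF''-2F'^2+2FF'/x\ge 0$, i.e.\ $x(1/F)''+2(1/F)'\le 0$, which is additive under series composition. Combined with $F''\le 0$ it yields $0\le -xF''\le 2F'-2xF'^2/F\le 2F'$.

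Integrating $(\log F')'\ge -2/x$ then gives $F'(t)/F'(y)\ge (y/t)^2$ for $t\ge y$. Hence $(r-\mathcal{N}(y))/r\le\bigl((r-y)/r\bigr)^2$ for every $0<y\le r$, with no constant and no bisection phase after the factor-two bracket.

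Two other steps of your plan depend on this same lemma, or on something else you have not supplied.
\begin{itemize}
\item The window $F'\ge F'(r)/2$ in your stability argument cannot come from the log-slope bound. Concavity together with $0\le xF'/F\le 1$ allows $F'$ to collapse just to the right of $r$ (think of a smoothed $\min(x,1)$). On the left the window follows from concavity, and on the right from $F'(t)/F'(r)\ge (r/t)^2$.
\item $F$ does not have polynomial degree. The family $F_{j+1}=S(F_j,1)+S(F_j,X)$ has reduced numerator degree $2^{j+1}-1$, which is why a degree-free rate is essential. The range of your exponent search must therefore come from the semantic height bound $2^{-L}\le r\le 2^{L}$. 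The log-slope bound only gives a lower bound on $|\log r|$.
\end{itemize}
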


The degree of the reduced rational function may be exponential in the
shared-DAG size.  The convergence bound below has no degree parameter.
The theorem concerns the stated positive grammar, rather than arbitrary
rational circuits accompanied by an assertion of monotonicity.

\paragraph{Concavity and duality.}
On the positive quadrant, $H(a,b)=ab/(a+b)$ has positive first partials
and Hessian
\[
 \operatorname{Hess}H(a,b)
 =-\frac{2}{(a+b)^3}
   \begin{pmatrix}b^2&-ab\\-ab&a^2\end{pmatrix}\preceq0.
\]
Composition therefore shows that every retained function is increasing
and concave.  Its derivative is strictly positive unless it is constant:
an active variable occurrence contributes positively at every operation.
There is also a useful second concavity.  Define
\[
 F^*(t)=\frac{1}{F(1/t)}.
\]
Duality exchanges series and parallel composition, replaces a positive
constant by its reciprocal, and replaces a positive scalar multiplier
by its reciprocal.  Thus $F^*$ belongs to the same grammar and is
concave.  Differentiating twice, with $x=1/t$, gives
\[
 F(x)F''(x)-2F'(x)^2+\frac{2F(x)F'(x)}x\geq0.
\]
Together with $F''\leq0$, this proves
\begin{equation}\label{eq:series-curvature}
 0\leq-xF''(x)
 \leq2F'(x)-\frac{2xF'(x)^2}{F(x)}\leq2F'(x).
\end{equation}
Only elementary concavity and the circuit grammar are used here.

\paragraph{Newton convergence from below.}
Let $F(r)=b$ with $r>0$, and take $0<y\leq r$.  Concavity puts the
Newton update
\[
 \mathcal N(y)=y+\frac{b-F(y)}{F'(y)}
\]
in $[y,r]$.  Integrating~\eqref{eq:series-curvature} gives
$F'(t)/F'(y)\geq(y/t)^2$ for $t\geq y$.  Consequently
\[
 \mathcal N(y)-y
 =\int_y^r\frac{F'(t)}{F'(y)}\,dt
 \geq\int_y^r\frac{y^2}{t^2}\,dt
 =\frac{y(r-y)}r,
\]
and therefore
\begin{equation}\label{eq:series-newton}
 0\leq\frac{r-\mathcal N(y)}r
 \leq\left(\frac{r-y}r\right)^2.
\end{equation}
A factor-two lower bracket gives relative error at most $1/2$; after
$j$ steps it is at most $2^{-2^j}$.  If $2^{-L}\leq r\leq2^L$,
binary search on the exponent finds such a bracket using $O(\log L)$
comparisons of $F(2^k)$ with $b$.  Each $2^k$ is a short rational SLP.
Automatic differentiation and $O(\log(L+B))$ Newton steps then give
absolute error below $2^{-B}$.  No high-precision mantissa is written.

\paragraph{Endpoint validity.}
The finite value $F(0)$ is computed recursively using the continuous
series rule.  For $F(\infty)$, use an infinity tag: parallel composition
with infinity is infinite, while
\[
 S(\infty,a)=a,\qquad S(\infty,\infty)=\infty.
\]
Zero nodes have already been eliminated.  For a nonconstant function,
the exact validity condition is
\[
 F(0)\leq b<F(\infty).
\]
Equality at the lower endpoint returns zero.  Equality at a finite upper
asymptote is invalid.  Nonconstancy is detected by the active-variable
tag, or equivalently by $F(1)>F(0)$.  These rules handle removable zero
denominators such as the apparent $0/0$ in $S(X,X)$ at zero.

\paragraph{Coefficient stability and nested gates.}
We spell out the precision bound to avoid a dependence on the heights
of generated Newton endpoints.  For total description length $n$, the
cleared root equations have degree $2^{\operatorname{poly}(n)}$.
Algebraic height and norm bounds supply a fixed
$L=2^{\operatorname{poly}(n)}$ controlling exact gate values, finite
endpoints, their nonzero differences, and $F'(r)$.  Every listed nonzero
value has magnitude between $2^{-L}$ and $2^L$.

Fix the exact zero pattern.  On the positive parameter box in which
nonzero leaves and $X$ lie in $[2^{-L-1},2^{L+1}]$, positive
numerator/denominator clearing gives a uniform derivative bound $2^G$,
with $G=2^{\operatorname{poly}(n)}$, for $F(X)-b$ and finite endpoint
tests.  One positive denominator monomial supplies its lower bound.
These estimates depend on the input grammar, not the rational cache.

Suppose incoming scalar errors are at most $e$.  For $0<t\leq r/2$,
\eqref{eq:series-curvature} gives $F'(x)\geq2^{-L}/4$ on
$[r-t,r+t]$.  Choose
\[
 t=8e\,2^{G+L},\qquad e\leq2^{-G-2L-4}.
\]
Then $t\leq r/2$, and the two true endpoint residuals dominate their
perturbations.  The nearby positive grammar has a unique inverse
$\widetilde r$ in this interval, with
$|\widetilde r-r|\leq8e\,2^{G+L}$.
Its Newton computation still obeys~\eqref{eq:series-newton} exactly.
The zero-root and endpoint-equality cases are handled by separated
algebraic sign tests before this argument.

Thus a local gate has error at most $Ke+\delta$, where one may take
$K=2^{G+3L+10}$ and enlarge it for ordinary arithmetic gates.
For at most $s$ gates, a common local precision $\delta=2^{-B}$ with
\[
 B=s(\lceil\log_2K\rceil+1)+G+2L+B_{\rm out}+30
\]
makes the accumulated error smaller than both the stability threshold
and $2^{-B_{\rm out}}$.  All precision exponents have polynomial
binary length.  Each approximant is computed once in a shared cache,
using polynomially many rational operations and $\PosSLP$ comparisons.
Taking $B_{\rm out}\geq L+5$ completes the sign and validity tests.

\paragraph{An exponential reduced-degree example.}
Set $F_0=X+1$ and
\[
 F_{j+1}=S(F_j,1)+S(F_j,X).
\]
Only constantly many nodes are added per step.  If $F_j=P/Q$ is
reduced, the next pair is
\[
 P_{\rm new}=P((1+X)P+2XQ),\qquad
 Q_{\rm new}=(P+Q)(P+XQ).
\]
These polynomials are coprime.  Indeed, put $A=P+Q$, $B=P+XQ$,
and $C=B+XA$.  A common divisor of $A,B$ would divide $X-1$,
but $A(1)>0$.  Also $P(0)>0$, so $P$ is coprime to $A,B$, and
$C$ is coprime to both $A$ and $B$.  Positive coefficients persist.
The reduced degrees consequently satisfy
\[
 \deg P_j=2^{j+1}-1,\qquad \deg Q_j=2^{j+1}-2.
\]
This verifies exponential degree without relying on unreduced formal
expressions.  No representation of the general ETR critical eliminant
by this grammar is claimed.

\subsection{Monomial-weighted positive rational inverse gates}
\label{push12:rational-inverses}

Let $P,Q$ be nonzero univariate polynomial circuits using only $+$ and
$\times$, the variable $X$, and nonnegative scalar leaves from earlier
gates.  Their support and degree may be exponential.  For a binary
integer $D>\deg Q$, put
\[
 F(x)=x^D P(x)/Q(x)\quad(x>0),\qquad F(0)=0.
\]
The value at zero denotes the continuous extension, so it does not
evaluate $Q(0)^{-1}$ when $Q(0)=0$.  An inverse gate returns the unique
nonnegative solution of $F(x)=b$ for $b\ge0$, returning zero directly
when $b=0$.  Negative scalar leaves or $b$, zero $P$ or $Q$, and a failed
degree condition are invalid.

\begin{proposition}
Sign and validity for polynomial-size rational arithmetic circuits with
polynomially many nested inverse gates of this form belong to
$\mathsf P^{\mathrm{PosSLP}}$, and hence admit the previously established
many-one compilation to $\mathrm{PosSLP}$.  The gates fit the same shared
scalar pool as the preceding root and inverse constructions.
\end{proposition}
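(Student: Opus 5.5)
The plan follows the template of Theorem~\ref{push12:monotone-theorem} and Theorem~\ref{thm:series-parallel-inverses}. First I need a degree-free Newton analysis for \(F(x)=x^DP(x)/Q(x)\), then a stability and precision schedule for nested gates. For the analysis I would pass to logarithmic coordinates. Put \(x=e^u\) and \(\phi(u)=\log F(e^u)=Du+\log P(e^u)-\log Q(e^u)\). Since \(P\) and \(Q\) have nonnegative coefficients, \(u\mapsto\log P(e^u)\) is convex with slope in \([0,\deg P]\), and likewise for \(Q\). Hence
\[
 \phi'(u)\ge D-\deg Q\ge1,
\]
so \(F\) is strictly increasing on \((0,\infty)\), tends to zero at \(0\) and to infinity at infinity. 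This gives existence and uniqueness of the inverse for every \(b>0\). The validity checks are then straightforward syntactic and PosSLP tests: nonnegative leaves, \(P,Q\not\equiv0\) (evaluate at \(X=1\)), and \(D>\deg Q\) (the syntactic degree is computed by a shared recurrence, and exact zero leaves are first identified by the semantic gap).

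The main obstacle is that \(\phi\) is neither convex nor concave, being a difference of two convex terms. So neither the monotone-circuit bound nor the series/parallel duality applies directly. I would try to restore a one-sided structure as follows. Write \(\phi=\psi_1-\psi_2\) with \(\psi_1(u)=Du+\log P(e^u)\) convex and \(\psi_2=\log Q(e^u)\) convex with \(0\le\psi_2'\le D-1\). I would then use a damped Newton step in \(u\), namely \(u'=u-(\phi(u)-\log b)/\phi'(u)\), but approximated by rational operations: \(\log\) is avoided by working with the ratio \(F(x)/b\) and the multiplicative update
\[
 x'=x\,(b/F(x))^{1/\phi'(u)}.
\]
This update is a rational-exponent root, which could be handled by the binary-degree radical machinery of Theorem~\ref{push12:binary-root-theorem}. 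More simply, I would use a Newton step on \(F\) itself in the multiplicative variable, after bracketing \(r\) within a factor two. The bracket is found by exponent search comparing \(F(2^k)\) with \(b\) through PosSLP. Within the bracket I must show local quadratic convergence with constants depending only on \(\phi'\) bounded below by \(1\) and on a bound on \(|\phi''|\). That second bound is exactly what I expect to fail uniformly in the degree. I would first attempt to bound \(\psi_i''\) by \(\psi_i'\) times a quantity controlled by the degree ratio, as in \(xF''\le(D-1)F'\). If that only yields degree-dependent constants, I would fall back on bisection in \(u\) (the exponent of \(x\)) down to relative accuracy \(1/(16D')\), where \(D'=D+\deg P\) is given by a binary bound. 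This costs \(O(\log D')\) PosSLP comparisons, after which Newton is quadratic with the \(\eta'\le\eta^2\) recursion of \eqref{push12:nth-newton}. That suffices, since only \(\log D'\) needs to be polynomial.

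For nesting, I would fix, before any precision choice, a bound \(L=2^{\poly(n)}\) on the magnitudes and inverse magnitudes of all nonzero semantic values, including the values \(F'(r)\) (algebraic height and norm bounds applied to the cleared equation \(x^DP-bQ=0\)). I would also fix a derivative bound \(2^G\) with respect to the leaves on the box \([2^{-L-1},2^{L+1}]\), computed by the gatewise magnitude recurrence. The perturbation argument of Theorem~\ref{push12:monotone-theorem} then applies. The endpoint residuals \(F(r\pm t)-b\) have opposite signs with margin \(t2^{-L}/2\), since \(F'\ge F/x\ge2^{-2L}\) near \(r\) by \(\phi'\ge1\). The local error amplification is therefore \(K=2^{G+O(L)}\). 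A shared cache with precision \(B=s(\log K+1)+G+O(L)+B_{\rm out}\) gives polynomially many rational operations and PosSLP queries. Theorem~\ref{push10:adaptive} then compiles these queries into one PosSLP instance.
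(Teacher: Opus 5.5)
Your logarithmic setup is sound. From $\phi'\ge D-\deg Q\ge1$ you get existence, uniqueness and the validity tests. Your absolute-error stability argument (a margin from $F'\ge F/x$, a leaf-derivative bound $2^G$, an additive precision schedule) is a workable alternative to the paper's relative bound $|\log(\widetilde r/r)|\le2\Lambda\delta$. You still need to bound $Q$ below on the box, for instance by a single positive monomial.

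The genuine gap is the Newton analysis, which is the core of the proposition. You name the curvature bound as the main obstacle and say you expect it to fail. Your fallback then silently presupposes exactly that bound: you bisect to relative accuracy $1/(16D')$ and invoke the $\eta'\le\eta^2$ recursion of \eqref{push12:nth-newton}. That recursion does not transfer in any case. It rests on AM--GM, and its monotone-circuit analogue on $F''\ge0$ and $xF''\le(D-1)F'$; both force monotone convergence from above. For $x^DP/Q$ the second derivative can have either sign, so the iterate can cross $r$, and you need a two-sided basin estimate.

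The missing idea is already implicit in your convexity remark. At $x=e^u$ the normalized terms of $P$ and $Q$ are probability weights on exponents. So the slopes of $\log P(e^u)$ and $\log Q(e^u)$ are expected exponents, and their second derivatives are the corresponding variances. Hence $\mu:=xF'/F=\phi'\in[1,D+p]$ and $\phi''=\operatorname{Var}_P(k)-\operatorname{Var}_Q(l)$, with $p=\deg P$ and $q=\deg Q$. This gives
\[
 \left|\frac{xF''}{F'}\right|=\left|\mu-1+\frac{\phi''}{\mu}\right|
 \le K:=D+p+(p+q)^2 .
\]
This $K$ is polynomial in the degrees, so $\log K$ is polynomial.

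Integrating this bound between $x$ and $r$ controls $F'$ on the segment. The Newton remainder formula then gives $|x_{\rm new}-r|/r\le K\varepsilon^2$ whenever $\varepsilon=|x-r|/r\le1/(16K)$, on either side of the root. The algorithm is then:
\begin{enumerate}
\item find a factor-two bracket by exponent search;
\item perform $O(\log K)$ ordinary bisections of $[2^k,2^{k+1}]$ (bisecting in $u$, as you propose, would produce irrational points);
\item take $O(\log(B+H+2))$ Newton steps, since $K\varepsilon$ squares at each step.
\end{enumerate}
Without this lemma your proposal does not show that polynomially many rational operations reach precision $2^{-B}$. Your threshold $1/(16D')$ is likewise unsupported.
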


\begin{proof}
Write $p=\deg P$ and $q=\deg Q$.  The nonnegative coefficient weights
at any $x>0$ define probability distributions on their exponents.  Thus,
with differentiation in $y=\log x$,
\[
 \mu:=\frac{xF'}F=D+\mathbb E_P k-\mathbb E_Q l,
 \qquad 1\le\mu\le D+p,
 \qquad \frac{d\mu}{dy}=\operatorname{Var}_P(k)-\operatorname{Var}_Q(l).
\]
Consequently
\[
 \left|\frac{xF''}{F'}\right|
 =\left|\mu-1+\frac{d\mu/dy}{\mu}\right|
 \le K:=D+p+(p+q)^2.
\]
The orders at zero and infinity are positive because $D>q$, proving
existence and uniqueness.  Convexity is not required.  Algebraically,
the selected positive root satisfies $X^D P-bQ=0$, whose negative
coefficients occur below degree $D$ and whose positive coefficients
occur at or above it.

If $r=F^{-1}(b)>0$ and $\varepsilon=|x-r|/r\le1/(16K)$, integrating
the displayed curvature bound along the interval between $x$ and $r$
and applying the Newton remainder formula gives
\[
 \frac{|x_{
m new}-r|}{r}
 \le\frac{K\exp(K\varepsilon/(1-\varepsilon))}
               {2(1-\varepsilon)}\varepsilon^2
 \le K\varepsilon^2.
\]
This preserves positivity and the basin even if Newton crosses the
root.  A binary search on integer exponents first gives a factor-two
bracket; $O(\log K)$ ordinary bisections enter the basin.  Thereafter
$K\varepsilon$ squares at every step, so $O(\log(B+H+2))$ steps give
absolute error $2^{-B}$ when $r\in[2^{-H},2^H]$.  Every evaluation and
automatic derivative uses a polynomial-size shared rational SLP;
binary powering handles $x^D$.  Neither the degree nor the precision
exponent is expanded into a unary computation.

For nesting, let $H_0=2^{\operatorname{poly}(n)}$ be the fixed semantic
height and nonzero-separation bound for the original algebraic circuit.
After exact zero pruning, let $\lambda_P,\lambda_Q$ bound the total
degrees in the positive scalar leaves, and put
$\Lambda=1+\lambda_P+\lambda_Q$, including $b$.  Relative leaf errors
at most $\delta\le1/4$ give, at every positive argument,
\[
 (1-\delta)^{\lambda_P}
 \le\frac{\widetilde P}{P}
 \le(1+\delta)^{\lambda_P},
\]
and similarly for $Q$.  Since the perturbed logarithmic derivative
remains at least one, the corresponding roots obey
\[
 \left|\log\frac{\widetilde r}{r}\right|\le2\Lambda\delta.
\]
When the true positive leaves and root lie in
$[2^{-H_0},2^{H_0}]$, absolute leaf precision
\[
 T=B+2H_0+\lceil\log_2\Lambda\rceil+5
\]
makes the root perturbation at most $2^{-B-3}$.  Newton accuracy is
budgeted separately.  This requirement is linear in $B$ and does not
depend on the bit height of old rational approximants.  Ordinary
arithmetic conditioning and the existing scalar interfaces therefore
fit a common requirement $K_*(B+H_0+1)$ with
$K_*=2^{\operatorname{poly}(n)}$.  A reverse precision schedule with
base $2K_*$ supplies that requirement because $B\ge H_0+1$.  This and
one stored rational approximation per scalar gate keep both the total
DAG and the number of $\mathrm{PosSLP}$ comparisons polynomial.
Semantic zero gaps handle invalid inputs and the final sign.  The
adaptive compiler yields the stated many-one reduction.
\end{proof}

This is a compilation theorem for the weighted function $x^D P/Q$.
It supplies no improved upper bound for unrestricted $\mathrm{PosSLP}$ or
$\exists\mathbb R$.

There is also an exact positive-circuit certificate for this
monotonicity.  Set $H_j(x,a)=(x^j-a^j)/(x-a)$, interpreted as a
polynomial, and let $\operatorname{DD}_P=(P(x)-P(a))/(x-a)$.  Binary
recursion computes $H_j$ using $O(\log(j+2))$ positive gates, while the
usual sum and product rules compute $\operatorname{DD}_P$ with sharing.
For $q=\deg Q$, put
\[
 G_Q=\frac{x^qQ(a)-a^qQ(x)}{x-a}.
\]
If $Q=AB$ and $u=\deg A$, $v=\deg B$, then
$G_Q=x^vB(a)G_A+a^uA(x)G_B$.  At a sum gate, the contribution of
$A$ to $G_{A+B}$ is
$x^{q-u}G_A+a^uA(x)H_{q-u}$, and similarly for $B$.
The leaves have $G=0$.  These positive recurrences give a polynomial
size circuit for
\[
 T=x^D\operatorname{DD}_P Q(a)
   +P(a)\bigl[x^{D-q}G_Q+a^qQ(x)H_{D-q}\bigr].
\]
Direct expansion gives the exact polynomial identity
\[
 (x-a)T=x^DP(x)Q(a)-a^DP(a)Q(x),
\]
and hence
\[
 Q(x)Q(a)(x-a)(F(x)-F(a))=(x-a)^2T.
\]
Here $T>0$ for $x,a>0$, since $D-q\ge1$.  At an inverse root
$F(a)=b$, the cleared identity is
\[
 Q(a)(x-a)\bigl[x^DP(x)-bQ(x)\bigr]=(x-a)^2T.
\]
To interpret the rational identity literally on the whole nonnegative
axis, first strip $Q=x^vQ_0$ and replace $D$ by $D-v$; then
$Q_0(0)>0$ and $D-v>\deg Q_0$.  The order at zero is computed by
minimum at sums and addition at products, and normalization uses only
positive gates and binary powers.  This is a succinct positive-circuit
certificate, not a claim of a polynomial-size expanded sum of squares.

\subsection{Root gates with ordered coefficient blocks}
\label{sec:ordered-block-roots}

The sparse-root construction has an extension in which the number of
nonzero coefficients may be exponential.  What is supplied instead is
an ordered decomposition into positive coefficient blocks.
Write
\[
 \begin{gathered}
 f_j(X)=X^{a_j}P_j(X),\qquad \deg P_j\leq b_j-a_j,\\
 0\leq a_j\leq b_j,\qquad b_j<a_{j+1}\quad(j<s),
 \end{gathered}
\]
where each nonzero $P_j$ is given by a circuit with nonnegative scalar
leaves and the operations addition and multiplication.  The exponents
are written in binary.  Consider
\[
 p(X)=\sum_{j=0}^{s}c_jf_j(X).
\]
Zero $c_j$ are removed, and zero leaves are identified before the degree
recurrences are used.  The signs of the remaining $c_j$ are arbitrary.
If all $c_j$ are zero, the selected-positive-root gate is invalid.
After the sign tests, write $c_j=\operatorname{sgn}(c_j)|c_j|$ and
include $|c_j|$ among the shared positive scalar leaves for rational
perturbation; its sign is held fixed.
The number $s+1$ of blocks may be polynomial in the input length.

\begin{theorem}[Ordered-block root gates]
\label{thm:ordered-block-roots}
Subject to the semantic-height and adaptive-compilation lemmas used
above, sign and validity for polynomial-size arithmetic circuits with
nested selected positive roots of the supplied ordered-block
polynomials are in $\Pclass^{\PosSLP}$, and reduce to $\PosSLP$.
Either distinct-root indexing or indexing with multiplicities is
allowed.  Scalar coefficients are shared with the previously supported
root and inverse gates.
\end{theorem}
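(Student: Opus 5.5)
We follow the template of Theorem~\ref{thm:sparse-root-gates}: reduce each ordered-block gate to polynomially many rational-SLP operations and $\PosSLP$ comparisons, using a shared reverse precision schedule, and then invoke the adaptive compilation of Theorem~\ref{push10:adaptive}. The new point is that the number of monomials is no longer bounded. The role of ``$k$ terms'' is instead played by the $s+1$ blocks.

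\textbf{Step 1: root counts.} We aim for a block analogue of Descartes' rule on the positive axis. Each $P_j$ has nonnegative coefficients and the exponent ranges $[a_j,b_j]$ are disjoint and increasing. Hence the coefficient sequence of $p$ has sign changes only between consecutive retained blocks of opposite sign $c_j$. So $p$ has at most $s$ positive roots counted with multiplicity.

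\textbf{Step 2: a block derivative chain.} We need a substitute for the fractional-derivative chain. Divide by $X^{a_0}$ and apply the Euler-type operator that annihilates the lowest block; the natural candidate is $x\frac{d}{dx}$ followed by a weighted comparison. This generally does not annihilate a block whose internal degree spread is exponential, so it is the step where the sparse argument must change. What we would try first is the variation-diminishing property in multiplicative form. Write $p(x)=x^{a_0}\sum_j c_j g_j(x)$ with $g_j=x^{a_j-a_0}P_j$. The ratio $g_{j+1}/g_j$ is increasing on $(0,\infty)$, by the exponent-support ordering together with the log-variance monotonicity computed in the rational-inverse proposition of \S\ref{push12:rational-inverses}. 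Divide by $g_0$ and differentiate to remove one block: the derivative of $g_j/g_0$ is again a positive combination, with positive-circuit certificates of the type in that section. By Rolle, each step loses at most one root, and each chain member has at most $s^2$ positive roots in total. This yields the nonroot grid needed for endpoint selection, and the counts needed by the weak cover of \cite{JindalSagraloff17}. Its refinement bound depends only on the number of sign-variation pieces and on $\log(\deg)$ and archimedean magnitudes, so block count replaces term count there.

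\textbf{Step 3: isolation, multiplicities, and precision.} We reuse the floor-free accelerated Newton step and its cluster-success estimate verbatim, since they used only evaluations of $p$ and $p'$ and a multiplicity bound. Evaluations of $p$ and $p'$ have polynomial-size circuits by automatic differentiation and binary powering of $X^{a_j}$. For multiple roots we apply $(1+\epsilon D_X)^{s}$ as before. Step~1 bounds positive-root multiplicity by $s$, and the Rouch\'e/local-model argument with $L_{m,s}$ carries over. The coefficient-magnitude bounds are now phrased through the positive circuits $P_j$ rather than individual coefficients. For perturbation we use the relative-error estimate $(1\pm\delta)^{\lambda}$ for positive circuits, as in \S\ref{push12:rational-inverses}. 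This bounds root movement linearly in the leaf precision, independent of the approximants' bit heights. The resulting requirement $K_*(B+H_0+1)$ feeds the same reverse schedule. Indexing with multiplicities uses the recovered cluster sizes.

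\textbf{Main obstacle.} The main obstacle is Step~2: making the block chain rigorous without the sparse annihilation identity, and bounding its magnitude parameter, since quotients $g_j/g_0$ bring in $Q$-type denominators. If the derivative-chain route fails, the fallback is to apply the weak cover directly to $p$. In that approach, derivative-root covers would be replaced by intervals on which the ordered-block sign pattern certifies monotonicity, and the curvature bounds from \S\ref{push12:rational-inverses} would control the Newton basin.
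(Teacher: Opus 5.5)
\textbf{There is a genuine gap: the complex-root control needed for fast refinement.} Your Step~1 is correct. Your first chain step computes the right object, since $(g_j/g_0)'=W(g_0,g_j)/g_0^2$. However, the support-ordering argument does not iterate, because the functions $W(g_0,g_j)$ no longer have disjoint ordered supports. What iterates is the Wronskian structure. Every $W(f_0,\ldots,f_{k-1},f_j)$ has nonnegative coefficients, since each choice of exponents contributes a positive Vandermonde factor. With $W_k=W(f_0,\ldots,f_k)$ and $L_k=W(f_0,\ldots,f_{k-1},p)/W_k$, the Jacobi identity gives $L_k'=(W_{k-1}W_{k+1}/W_k^2)L_{k+1}$ with $L_s=c_s$.

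The gap is your claim that the refinement bound of \cite{JindalSagraloff17,Sagraloff14} depends only on sign-variation pieces. That analysis rests on bounding the number of \emph{complex} roots in the one-circle disks around the target; for a $k$-nomial this number is at most $k-1$. For the chain numerators, neither Rolle's real-root count nor the block count supplies such a bound, and these numerators can have many sign changes. For $f_0=(1+X^2)^N$, $f_1=X^{2N+1}$, $f_2=X^{2N+2}$,
\[
 W(f_0,f_1-f_2)=X^{2N}(1+X^2)^{N-1}\bigl((2N+1)-(2N+2)X+X^2-2X^3\bigr)
\]
has $2N+1$ sign changes.

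The missing ingredient is a local complex disconjugacy lemma. Suppose the initial Wronskians of $g_0,\ldots,g_{r-1}$ are nonzero polynomials with nonnegative coefficients and degree at most $D_W$. Then every nonzero complex combination of the $g_j$ has at most $r-1$ zeros in a disk about $x_0>0$ of radius $c_r x_0/(D_W+1)$, where $\log(1/c_r)$ is polynomial in $r$. The reason is that such polynomials are nearly constant on that disk, and hence so are the canonical weights $V_jV_{j-2}/V_{j-1}^2$. An iterated-integral and Cauchy--Binet estimate then keeps the confluent evaluation determinant nonzero. Only after bisection has placed the one-circle disk inside this disk does Sagraloff's refinement lemma give a polynomial iteration count.

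Your fallback meets the same obstacle. Controlling the Newton basin of a signed combination near clusters of nonreal roots requires exactly this estimate. The curvature bound of \S\ref{push12:rational-inverses} concerns the positive quotient $x^DP/Q$, not signed sums.

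\textbf{The multiplicity step also fails as stated.} Applying $(1+\epsilon D_X)^s$ moves the support of block $j$ down to $[a_j-s,b_j]$. Consecutive blocks with $a_{j+1}-b_j\le s$ then overlap, so Step~1 no longer applies to the perturbed polynomial. You would also be running a sparse weak cover on a polynomial with exponentially many terms.

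The Rolle chain makes the perturbation unnecessary. Because the prefactor is positive on $(0,\infty)$, a positive root of $L_k$ of multiplicity $m\ge2$ is a root of $L_{k+1}$ of multiplicity $m-1$, and all remaining roots of $L_k$ are simple. So work downward from $L_s$. At each stage, test $N_k(\alpha)=0$ at every already-found root $\alpha$ using the fixed semantic gap, and carry such roots with multiplicity increased by one. Find the new roots by endpoint signs on the buffered intervals between earlier roots, where $L_k$ is strictly monotone. Every refinement target is then simple. This is also what makes the precision requirement linear in $B$, and both indexing conventions follow from the carried multiplicities.
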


This is a theorem about the supplied representation.  It does not
provide a short ordered-block decomposition of an arbitrary arithmetic
circuit.  For example, the family
\[
 \sum_{j=0}^{s} c_j X^{j(N+1)}(1+(j+1)X)^N
\]
fits the representation when $N$ is binary and $s$ is polynomial,
although each block can have exponentially many coefficients.

We first construct the derivative chain, then give the complex-root
bound needed to refine its roots quickly.

\begin{lemma}[Positive Wronskians]
\label{lem:positive-block-wronskians}
Every Wronskian of an ordered subfamily of the $f_j$ has nonnegative
coefficients and is strictly positive on $(0,\infty)$.  It has an
ordinary polynomial-size arithmetic circuit.  For a fixed number of
columns, it also has a polynomial-size circuit using only nonnegative
constants, addition, and multiplication.
\end{lemma}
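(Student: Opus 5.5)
The plan is an induction on the number of columns. It is driven by the classical Wronskian reduction identity: for functions \(g_0,\ldots,g_r\) with \(g_0\) nonvanishing,
\[
 W(g_0,\ldots,g_r)=g_0^{\,r+1}\,W\bigl((g_1/g_0)',\ldots,(g_r/g_0)'\bigr).
\]
For one column the claim is immediate, since \(W(f_j)=f_j=X^{a_j}P_j\) and \(P_j\) is a nonzero positive circuit. For the inductive step, take an ordered subfamily \(f_{j_0},\ldots,f_{j_r}\) with \(j_0<\cdots<j_r\). Then \(f_{j_i}/f_{j_0}=X^{a_{j_i}-a_{j_0}}P_{j_i}/P_{j_0}\), and I would compute
\[
 \bigl(f_{j_i}/f_{j_0}\bigr)'=X^{a_{j_i}-a_{j_0}-1}\,
 \frac{(a_{j_i}-a_{j_0})P_{j_i}P_{j_0}+X\bigl(P_{j_i}'P_{j_0}-P_{j_i}P_{j_0}'\bigr)}{P_{j_0}^2}.
\]

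The key positivity step is to show that the numerator has nonnegative coefficients and is not identically zero. The ordering is what makes this work. Because \(\deg P_{j_0}\le b_{j_0}-a_{j_0}<a_{j_i}-a_{j_0}\), I expect that for monomials \(x^u\) in \(P_{j_i}\) and \(x^v\) in \(P_{j_0}\) the cross term carries the factor \((a_{j_i}-a_{j_0})+u-v\). This factor is positive since \(v<a_{j_i}-a_{j_0}\), so every monomial pair contributes nonnegatively. Strictly, the quantity \(X^{a_{j_i}}P_{j_i}/(X^{a_{j_0}}P_{j_0})\) has logarithmic derivative \(\mu_{j_i}-\mu_{j_0}\ge a_{j_i}-b_{j_0}>0\), in the notation \(\mu=xF'/F\) used for the monomial-weighted inverse gates. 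Clearing \(P_{j_0}^2\) and monomial factors therefore gives a new family \(\tilde f_i=X^{\tilde a_i}\tilde P_i\) with \(\tilde P_i\) positive circuits.

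To continue the induction I must check that the new family is again ordered, with block ranges \([\tilde a_i,\tilde b_i]\) disjoint and increasing. The degree of the new numerator is at most \((b_{j_i}-a_{j_0})+(b_{j_0}-a_{j_0})\), so after subtracting the common factor the ranges shift uniformly by block \(j_0\)'s data. I expect this shift to preserve the strict separations \(\tilde b_i<\tilde a_{i+1}\); verifying it is routine bookkeeping. Wronskians of the reduced family then give the original Wronskian times the positive factor \(f_{j_0}^{r+1}/P_{j_0}^{2r}\) up to monomials, which shows positivity on \((0,\infty)\). Nonnegative coefficients should follow because the overall factor is a monomial times a power of \(P_{j_0}\), once I confirm that division by \(P_{j_0}^{2r}\) cancels exactly against the clearing powers. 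For this I would write the recursion directly as a polynomial identity rather than a rational one.

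For circuit size, the ordinary arithmetic circuit comes from the determinant of derivatives. Automatic differentiation gives \(f_j^{(k)}\) in polynomial size, and a division-free determinant circuit handles up to polynomially many columns. For a fixed number \(r\) of columns, the positive circuit comes from unrolling the recursion: each level uses positive derivative rules for positive circuits, namely sum and product rules together with the identity \(uP_iP_0+X(P_i'P_0-P_iP_0')\). That last expression is not syntactically positive, so I would instead expand it monomial-wise as \(\sum (\Delta+u-v)c_uc_vX^{u+v}\). This expansion is realized positively by a divided-difference or shift trick, in the spirit of the \(T\) certificate for monomial-weighted inverses. I expect this syntactically positive realization of the cross term, with sizes multiplying by a constant per level so that the bound is only polynomial for fixed \(r\), to be the main obstacle.
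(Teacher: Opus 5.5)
\textbf{There is a genuine gap: your induction fails twice.}

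\emph{First failure: the reduced family is not ordered.} Write $N_i$ for your displayed numerator and $\tilde f_i=X^{a_{j_i}-a_{j_0}-1}N_i$. The support of $\tilde f_i$ is $[a_{j_i}-a_{j_0}-1,\ b_{j_i}+b_{j_0}-2a_{j_0}-1]$. The blocks do not shift uniformly: each width grows by $\deg P_{j_0}$. The new gap is $a_{j_{i+1}}-b_{j_i}-(b_{j_0}-a_{j_0})$, which can be negative, so the step you call routine bookkeeping is false. For $f_0=1+X+X^2$, $f_1=X^3$, $f_2=X^4$ one gets
\[
 \tilde f_1=3X^2+2X^3+X^4,\qquad \tilde f_2=4X^3+3X^4+2X^5 .
\]
These supports overlap, and the monomial pair $X^4$ (from $\tilde f_1$) and $X^3$ (from $\tilde f_2$) carries a negative Vandermonde factor. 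Your inductive hypothesis is therefore unavailable.

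\emph{Second failure: the powers of $P_{j_0}$ do not cancel.} Since $W(hg_1,\dots,hg_r)=h^rW(g_1,\dots,g_r)$, you get
\[
 W(f_{j_0},\dots,f_{j_r})=X^{(r+1)a_{j_0}}P_{j_0}^{\,1-r}\,W(\tilde f_1,\dots,\tilde f_r).
\]
From three columns on you must divide exactly by $P_{j_0}^{r-1}$. Exact division does not preserve nonnegative coefficients, for example $(1+X^3)/(1+X)=1-X+X^2$. It is also not a gate of a circuit built from nonnegative constants, addition and multiplication. So this route yields neither the coefficient claim nor the positive circuit beyond two columns.

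\textbf{The missing idea is multilinearity in the columns.} Expand each column into monomials $X^{e_i}$ with $e_i\in[a_{j_i},b_{j_i}]$. Then
\[
 W(X^{e_0},\dots,X^{e_{k-1}})=\prod_{i<l}(e_l-e_i)\,X^{\sum_ie_i-k(k-1)/2},
\]
and the ordered supports force $e_l>e_i$ for $i<l$. So every term of the expansion is nonnegative, its exponent is nonnegative because the $e_i$ strictly increase, and at least one term is strictly positive. This gives coefficient nonnegativity and positivity on $(0,\infty)$ at once.

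\textbf{The positive circuit.} Put $A_i=a_{j_i}$ and $B_i=b_{j_i}$, use separate variables $Y_i$, and let $E_i=Y_i\partial/\partial Y_i$. Apply $\prod_{i<l}\bigl((A_l-A_i)+E_l-E_i\bigr)$ to $\prod_iP_{j_i}(Y_i)$, set every $Y_i=X$, and multiply by $X^{\sum_iA_i-k(k-1)/2}$.
\begin{itemize}
\item Each factor becomes syntactically positive when written as $(A_l-B_i)C+E_lC+\bigl((B_i-A_i)-E_i\bigr)C$.
\item The degree complement $uC-E_iC$, with $u\ge\deg_{Y_i}C$, has a linear-size positive compiler: use the product rule with degrees added, and degree padding at sums.
\item Euler operators never enlarge supports, so the degree bounds persist through every factor.
\end{itemize}
Your two-column cross term is exactly one such factor, with its $-E_0$ part absorbed in this way. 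This is the positive realization you were looking for, and no division is ever needed.

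Your ordinary-circuit argument, via automatic differentiation and a division-free determinant, is fine as stated.
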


\begin{proof}
Take $k$ columns, abbreviate their lower and upper support bounds by
$A_i,B_i$, and introduce separate variables $Y_i$.
Starting with $C=\prod_iP_i(Y_i)$, apply
\[
 \prod_{i<j}\bigl((A_j-A_i)+E_j-E_i\bigr),
 \qquad E_i=Y_i\frac{\partial}{\partial Y_i}.
\]
After setting every $Y_i=X$, multiply by
$X^{\sum_iA_i-k(k-1)/2}$.  The result is the Wronskian:
on a choice of exponents $e_i$, both constructions multiply the
coefficient product by the Vandermonde factor
$\prod_{i<j}(e_j-e_i)$.
All these factors are positive.
The outside exponent is nonnegative because the ordered integer
supports imply $A_i\geq i-1$.

For the positive circuit claim, implement an Euler difference as
\[
 (A_j-B_i)C+E_jC+
       \bigl((B_i-A_i)-E_i\bigr)C.
\]
All three summands have nonnegative coefficients.
The degree-complement operation $uC-E_iC$ has a shared linear-size
positive compiler whenever $u\geq\deg_{Y_i}C$.
For a product, use the product rule with the degrees added.
For a sum, add the two degree complements and the nonnegative
degree-padding terms.  Euler operators do not enlarge support, so
the same bounds hold at every step.
This uses a fixed number of circuit transformations when $k$ is fixed.

For polynomial $k$, compute derivatives through order $k-1$ by shared
truncated automatic differentiation and use a division-free determinant
circuit.  This gives an ordinary polynomial-size circuit.
The Vandermonde argument still proves coefficient positivity, including
after any nonnegative perturbation of the original leaves.
\end{proof}

Set $W_k=W(f_0,\ldots,f_k)$ and $W_{-1}=1$, and define
\[
 N_k=W(f_0,\ldots,f_{k-1},p),\qquad
 L_k=\frac{N_k}{W_k},\qquad 0\leq k\leq s.
\]
The empty-prefix convention gives $L_0=p/f_0$.
Multilinearity expresses $N_k$ as a signed sum of the positive
Wronskians with final column $f_j$, $j\geq k$.
The Jacobi identity gives
\begin{equation}
 L_k'=\frac{W_{k-1}W_{k+1}}{W_k^2}L_{k+1},
 \qquad L_s=c_s.
 \label{eq:block-rolle-chain}
\end{equation}
Thus all prefactors are positive on the positive half-line.
Rolle's theorem bounds the number of zeros of $L_k$, with
multiplicities, by $s-k$.

The coefficient-sign variation of $N_k$ need not satisfy that bound.
For example, with
$f_0=(1+X^2)^N$, $f_1=X^{2N+1}$, and $f_2=X^{2N+2}$,
\[
 W(f_0,f_1-f_2)
 =X^{2N}(1+X^2)^{N-1}
   \bigl((2N+1)-(2N+2)X+X^2-2X^3\bigr)
\]
has $2N+1$ coefficient-sign changes.
We therefore use a local complex-root estimate rather than applying
Descartes' bound to every transformed numerator.

\begin{lemma}[Local complex disconjugacy]
\label{lem:block-local-disconjugacy}
Suppose the initial Wronskians of $g_0,\ldots,g_{r-1}$ are nonzero
polynomials with nonnegative coefficients and degree at most $D_W$.
For $x_0>0$, put
\[
 R_0=\frac{x_0}{100(D_W+1)},\qquad
 C_r=r!\,2^{r(r-1)/2},\qquad
 \rho=\frac1{8rC_r}.
\]
Every nonzero complex linear combination of the $g_j$ has at most
$r-1$ zeros, counted with multiplicities, in
$|z-x_0|\leq\rho R_0$.
\end{lemma}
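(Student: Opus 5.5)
We will prove the bound by a Pólya--Wronskian factorization of a complex linear combination, combined with a local argument-principle estimate on each positive Wronskian ratio. Write $V_k=W(g_0,\ldots,g_{k-1})$ for $0\le k\le r$, with $V_0=1$. For a combination $h=\sum_j\gamma_jg_j$, the standard Frobenius/Pólya factorization gives
\[
 h=\frac{V_r}{V_{r-1}}\cdot\frac{V_{r-1}V_{r-2}^{\,}}{\cdots}\,\cdots
\]
More usefully, the recursion of \eqref{eq:block-rolle-chain} applies verbatim with $g_j$ in place of $f_j$: setting $\Lambda_k=W(g_0,\ldots,g_{k-1},h)/V_k$, one has $\Lambda_k'=(V_{k-1}V_{k+1}/V_k^2)\Lambda_{k+1}$, and $\Lambda_{r-1}$ is $V_r/V_{r-1}$ times a constant. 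Over $\C$ there is no Rolle theorem. We therefore replace each Rolle step by a quantitative complex substitute on a small disk: if $\phi$ is analytic on a disk and $\phi'=u\,\psi$ with $u$ zero-free there and nearly constant in argument, then the zero count of $\phi$ exceeds that of $\psi$ by at most one on a slightly smaller disk. We will take this from a Jensen/argument estimate, or from the Grace--Walsh type fact that the derivative of a function with $m$ zeros in a small disk, whose other singular behaviour is far away, has at least $m-1$ zeros in a comparable disk.

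The first key step is the local control of the positive Wronskians. Each $V_k$ is a nonzero polynomial of degree at most $D_W$ with nonnegative coefficients. At $z=x_0+w$ with $|w|\le R_0=x_0/(100(D_W+1))$, every monomial $z^e$ satisfies $|\arg z^e|\le D_W|w|/x_0\le 1/100$ and has modulus within a factor $e^{\pm1/100}$ of $x_0^e$. Summing nonnegative coefficients therefore gives $|V_k(z)-V_k(x_0)|\le\tfrac1{50}V_k(x_0)$. Hence all $V_k$, and all ratios $V_{k-1}V_{k+1}/V_k^2$, are zero-free on the disk $|z-x_0|\le R_0$, with logarithmic derivatives bounded by $O(D_W/x_0)=O(1/R_0)$ there.

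Next, we run the complex Rolle replacement $r-1$ times, shrinking the radius at each stage. Let $\Lambda_k$ have $n_k$ zeros in the disk of radius $\rho_k R_0$. Write $\Lambda_k'=u_k\Lambda_{k+1}$ with $u_k$ zero-free and of controlled variation. The plan is to show $n_k\le n_{k+1}+1$ on radii $\rho_{k}=\rho_{k+1}/(c\,k)$ or with a factor-two loss per step. This is where the combinatorial constant $C_r=r!\,2^{r(r-1)/2}$ should come from: the factors $2^{k}$ and $k$ arise from comparing the zero-counting region of a function with that of its derivative (Gauss--Lucas within a disk costs a radius factor depending on the number of zeros). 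The base case is the final $\Lambda_{r-1}$, which is a nonzero constant times the zero-free ratio $V_r/V_{r-1}$ and so has no zeros. Induction then gives at most $r-1$ zeros of $\Lambda_0=h/g_0$, hence of $h$, on the final disk of radius $\rho R_0$, since $g_0=V_1$ is zero-free there. Degenerate combinations, where $\gamma$ makes some $\Lambda_k$ identically zero, are handled by restricting to the last nonzero coefficient index and reindexing.

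We expect the main obstacle to be the quantitative complex Rolle step: bounding the zeros of $\phi$ in a disk by one more than the zeros of $\phi'/u$ in a larger disk, uniformly and without a global analytic bound on $\phi$. We would first try the argument principle along the circle. If $\phi$ has $m$ zeros in the inner disk, then on a suitable intermediate circle $\phi'/\phi$ has winding forced by those zeros. If we can ensure that no zeros of $\phi$ lie in an annulus, via a pigeonhole choice among $m+1$ radii, then $\phi'$ has at least $m-1$ zeros inside. Dividing by $u$, whose argument varies by less than $1/50$ by the estimate above, preserves this count. The pigeonhole over radii is what should produce the factor $2^{k}$ per step and, overall, $2^{r(r-1)/2}$.
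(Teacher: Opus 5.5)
Your preliminary steps agree with the paper: the P\'olya--Frobenius chain $\Lambda_k'=u_k\Lambda_{k+1}$ and the estimate $|V(z)/V(x_0)-1|<1/50$ for nonnegative Wronskians on $|z-x_0|\le R_0$ are exactly what it uses. The gap is the step you flag as the main obstacle. It is not merely unproved; as you formulate it, it is false, so the induction on bare zero counts cannot close.

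Take $\phi(z)=e^{Mz}-1$, $u\equiv1$, $\psi=Me^{Mz}$. Here $\psi$ has no zeros and $u$ has constant argument. Yet $\phi$ has the zeros $2\pi ik/M$, roughly $Ms/\pi$ of them in $|z|\le s$. The same example defeats your argument-principle sketch. On a circle through no zero, the winding of $\phi'$ equals $m$ plus the winding of $\phi'/\phi$, and here the latter is $-m$; nothing forces it to be $\ge-1$.

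Complex substitutes for Rolle's theorem need control of the argument or growth of $\phi'$ on a larger contour. Equivalently, they need $\phi$ to factor as a near-polynomial carrying the inner zeros times a factor with small logarithmic derivative. A zero count for $\Lambda_{k+1}$ supplies neither. The hypothesis ``other singular behaviour is far away'' in the Grace--Walsh-type fact you invoke is also unavailable, because zeros of an arbitrary complex combination can sit just outside any chosen disk. Likewise, your pigeonhole over radii presupposes a zero bound on the larger disk, which is what is being proved.

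What is missing is quantitative information that survives every step: on the disk, the span of the $g_j/g_0$ is uniformly close to the polynomials of degree $<r$. The paper gets this by integrating the chain into the canonical basis
$\psi_0=1$, $\psi_j(z)=\int_{x_0}^{z}v_1(t_1)\int_{x_0}^{t_1}v_2(t_2)\cdots\int_{x_0}^{t_{j-1}}v_j(t_j)\,dt_j\cdots dt_1$,
with normalized weights $v_j=w_j/w_j(x_0)$ of modulus below $2$. The Taylor coefficients at $x_0$ then satisfy $A_{e,j}=0$ for $e<j$, $A_{j,j}=1/j!$, and $|A_{e,j}|\le 2^jR_0^{j-e}/j!$.

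It then proves unisolvence directly, with no Rolle step. The evaluation determinant at any $r$ nodes in $|z-x_0|\le\rho R_0$, divided by their Vandermonde determinant, is expanded by Cauchy--Binet. This gives the leading term $1/\prod_j j!$ plus Schur-polynomial terms of total relative size at most $C_r\sum_{L\ge1}(2r\rho)^L\le 1/3$. Coalescing nodes handles multiplicities. This is where $C_r=r!\,2^{r(r-1)/2}$ and $\rho=1/(8rC_r)$ come from, not from a per-step pigeonhole over radii.

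To keep a chain-style proof, you would have to propagate such a near-polynomial normal form through each step. At that point the determinant argument is the natural way to finish.
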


\begin{proof}
Write $V_j=W(g_0,\ldots,g_j)$.
For any nonzero nonnegative polynomial $V$ of degree at most $D_W$,
its coefficient weights at $x_0$ show that
\[
 \left|\frac{V(z)}{V(x_0)}-1\right|
 \leq (1+R_0/x_0)^{D_W}-1<\frac1{50}
 \quad (|z-x_0|\leq R_0).
\]
The canonical weights
\[
 w_0=g_0,\qquad
 w_j=\frac{V_jV_{j-2}}{V_{j-1}^2}\quad(j\geq1),
 \qquad V_{-1}=1,
\]
are consequently analytic and nonzero on this disk.
Their normalized versions $v_j=w_j/w_j(x_0)$ have modulus below $2$.
Successive division and differentiation, using the same Jacobi identity
as in \eqref{eq:block-rolle-chain}, shows that the span divided by
$g_0$ has the canonical basis $\psi_0=1$ and
\[
 \psi_j(z)=\int_{x_0}^{z}v_1(t_1)
       \int_{x_0}^{t_1}v_2(t_2)\cdots
       \int_{x_0}^{t_{j-1}}v_j(t_j)\,dt_j\cdots dt_1.
\]
These analytic integrals are used in the proof only.
No integration gate is required in the computed representation.

Writing $\psi_j(z)=\sum_e A_{e,j}(z-x_0)^e$, the iterated-integral
and Cauchy bounds give
\[
 A_{e,j}=0\ (e<j),\qquad A_{j,j}=\frac1{j!},\qquad
 |A_{e,j}|\leq \frac{2^jR_0^{j-e}}{j!}.
\]
Expand the evaluation determinant at $r$ nodes by Cauchy--Binet and
divide by their Vandermonde determinant.
Its leading term is $a_0=1/\prod_{j=0}^{r-1}j!$.
A coefficient minor with excess degree $L$ has modulus at most
$a_0C_rR_0^{-L}$.
The corresponding Schur polynomial, on the smaller disk, has modulus
at most $r^L(\rho R_0)^L$: there are at most $r^L$ fillings of its
$L$ cells.
There are at most $2^{L-1}$ partitions of $L\geq1$.
The relative tail is therefore bounded by
\[
 C_r\sum_{L\geq1}(2r\rho)^L\leq\frac13.
\]
The normalized determinant is nonzero.
The same convergent expansion at coalescing nodes gives the confluent
determinants, proving the assertion including multiplicities.
\end{proof}

For a chain stage, take
$g_j=W(f_0,\ldots,f_{k-1},f_j)$ for $j=k,\ldots,s$.
The generalized Sylvester identity expresses each of their initial
Wronskians as
\[
 W(f_0,\ldots,f_{k-1})^\ell
 W(f_0,\ldots,f_{k-1},f_{j_0},\ldots,f_{j_\ell}).
\]
Lemma~\ref{lem:positive-block-wronskians} applies.
If the original degrees are at most $n$, one may take
$D_W=2(s+1)^2n$.
The logarithm of the inverse relative radius in
Lemma~\ref{lem:block-local-disconjugacy} is polynomial in
$s+\log(n+1)$.

\begin{proof}[Proof of Theorem~\ref{thm:ordered-block-roots}]
Construct the true Rolle chain and work downward from $L_s$.
There are $O(s^2)$ internal roots.
Fix a semantic exponent $H_0=2^{\mathrm{poly}(N)}$ covering their
magnitudes, nonzero critical-value gaps, and separations between
distinct roots of all chain numerators; here $N$ is the full input
length.
This uses the same algebraic degree and norm bounds as the earlier
root-gate compiler.
Uniform one-root and two-root estimates supply the complex separation
and imaginary-part bounds.  We do not adjoin exponentially many
complex roots to the shared semantic field.

Suppose the roots $\alpha_i$ of $N_{k+1}$ have been constructed.
Test $N_k(\alpha_i)=0$ using the fixed semantic gap.
If it is zero, carry the same root into the current list.
Equation~\eqref{eq:block-rolle-chain} increases its multiplicity by one.
Every carried root is thus an alias of a simple root at a deeper
stage; it is not obtained by splitting an approximate multiple root.

Choose buffers of radius $\delta=2^{-H_0-10}$ about sufficiently
accurate approximants to the $\alpha_i$.
They contain no other current-stage root.
On each remaining interval, $L_k$ is strictly monotone, so endpoint
signs identify at most one new simple root.
Strict dyadic lower and upper bounds close the two unbounded ends.
One may take $2^{-H_0-2}=256\delta$ and $2^{H_0+1}$, and use
fixed cluster disks of radius $\delta/4$; a zero-root cluster then
stays below the searched positive interval.
A second exponent $H_1=\mathrm{poly}(n,H_0,s)$ bounds the nonzero
endpoint values and derivatives.
This distinction matters: an endpoint evaluation can cost $nH_0$
bits in its product bound.
The buffers remain fixed by $H_0$; they are not recursively shrunk
to demand a self-referential gap estimate.

For a new simple target, use one fixed rational approximation of the
original positive leaves.
In $N_k$, the largest active suffix uniquely supplies the highest-degree
term and the smallest active suffix the lowest nonzero order.
Their coefficients remain nonzero under these perturbations, so degree
and multiplicity at zero are unchanged.
Rouch\'e bounds preserve all root clusters in fixed small disks.
Any split multiple-root clusters remain in the excluded buffers,
and nonreal and negative clusters remain outside the positive
isolators.
The selected simple isolator contains exactly one perturbed root.
The rational numerator still has the Wronskian structure.

Binary search on dyadic exponents first gives a factor-two positive
bracket using $O(\log(H_0+2))$ sign tests.
A further polynomial number of midpoint bisections puts its entire
one-circle disk inside the disk of
Lemma~\ref{lem:block-local-disconjugacy}.
There are now at most $s-k$ complex roots in that initial disk.
One-circle disks of nested subintervals are nested, so their root set
can change at most $s-k$ times.
Sagraloff's refinement lemma~\cite[Lemma~6]{Sagraloff14} bounds a
consecutive portion with unchanged root set by
$O(\log n+\log(\Gamma+B+2))$ iterations.
Here $\Gamma$ controls the initial magnitude and $2^{-B}$ is the
requested width.
The floor-free secant, clipping, and boundary construction already
proved above retains the required width and endpoint-movement bounds.
It uses only ordinary circuit evaluations of the numerator and its
derivative, and PosSLP sign tests.
No coefficient variation is computed or bounded globally.

If the rational refiner encounters an exact rational root, that is
an approximation to its matched true internal root.
The semantic label is retained; no later zero test assumes that the
true algebraic-coefficient polynomial vanishes at that rational point.

For a simple target, on a circle of radius $2^{-B}$ the local
polynomial product has only one factor costing $B$ bits.
All other factors use fixed separation bounds.
The required earlier-scalar precision therefore has the form
$K(B+H_1+1)$ with $K=2^{\mathrm{poly}(N)}$.
Carried roots forward that request to their deeper simple source.
Critical-value tests use their fixed semantic gaps.
A reverse cache schedule with base $2K$ absorbs the additive term.
Each of the polynomially many internal roots is approximated once,
and all later arithmetic shares its stored rational circuit.
Degree products, height exponents, and precision encodings remain
within the previously proved semantic bounds.
Thus the number of operations and PosSLP comparisons is polynomial.
The adaptive compiler gives the stated final reduction.
\end{proof}

\subsection{A pure quaternion target}
\label{push12:pure-target}

Let \(A\) be the rational critical algebra and \(h\) its observable.
The polynomial fold compiler gives
\[
 q=(aD+V,V,V,V)/D,\qquad
 S=(aD+V)^2+3V^2,\qquad n=\operatorname{Nrd}(q)=S/D^2.
\]
Run it so that \(hD^2-S\in8R\) and, on a NO instance,
\(0\le(h-n)/h\le\varepsilon\le1/2\).
The established dyadic unit properties of \(h,D\) imply that \(S\)
is also a dyadic unit. All three are units in \(A\), and \(S>0\)
at every real critical point. Put
\begin{equation}\label{push12:rho}
 \rho=hD^2/S.
\end{equation}
On a NO instance,
\[
 1\le\rho\le1+\frac{\varepsilon}{1-\varepsilon},
 \qquad \rho\in1+8(R\otimes\mathbb Z_2).
\]

\begin{proposition}\label{push12:rho-three}
On every NO instance, \(\rho\) is a sum of three rational squares
in \(A\).
\end{proposition}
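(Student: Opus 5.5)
The plan is the local--global argument already used for the radial correction after Proposition~\ref{push11:approx-squares}. We decompose the reduced algebra $A$ into number fields and represent $\rho$ as a sum of three squares in each factor. Reducedness follows from Lemma~\ref{lem:generic}, since every critical point is simple. The componentwise representations then assemble in the product. In a number field $K$, an element $\rho\neq0$ is a sum of three squares exactly when the ternary form $\langle1,1,1\rangle$ represents $\rho$ at every place. By Hasse--Minkowski this is equivalent to isotropy of the quaternary form $\langle1,1,1,-\rho\rangle$ at every completion \cite{Lam05}. We check the places in turn.

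\emph{Archimedean places.} At a complex place, any quaternary form is isotropic. At a real place, the embedding is a real critical point. On a NO instance the displayed bound gives $\rho\ge1>0$ there, so the form is indefinite and hence isotropic.

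\emph{Odd finite places.} At a nonarchimedean place not above two, the subform $\langle1,1,1\rangle$ is already isotropic. It therefore contains a hyperbolic plane and represents every element. No control of $\rho$ at odd primes is needed. In particular, $\rho$ need not be an integral unit there, which matters because $h$ and $S$ are only dyadic units.

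\emph{Dyadic places: the main obstacle.} Here we use $\rho\in1+8(R\otimes\Z_2)$. By Lemma~\ref{lem:generic}, $R\otimes\Z_2$ is a product of unramified extensions $\mathcal O$ of $\Z_2$, so in each dyadic completion $K_v$ we have $\rho\equiv1\pmod{8\mathcal O_v}$. I would first show that $\rho$ is a square in $K_v$. Hensel's lemma applies to $X^2-\rho$ in the form $|f(1)|<|f'(1)|^2$: here $v(1-\rho)\ge3>2=2v(2)$. Hence $\rho=y^2$ with $y\in\mathcal O_v^\times$. The form then becomes $\langle1,1,1,-y^2\rangle\simeq\langle1,1,1,-1\rangle$, which is visibly isotropic via $1^2-1^2=0$ on the first and last coordinates.

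It is worth noting why this differs from the earlier $r=\rho-25/81$ argument. There, $r$ had valuation three and the determinant was a nonsquare. Here the normalization $\rho\equiv1\pmod 8$ makes the dyadic case trivial. The real care is in confirming that the compiler's congruence $hD^2-S\in8R$ together with the dyadic unit property of $S$ really yields $\rho-1=(hD^2-S)/S\in8(R\otimes\Z_2)$ in every dyadic factor, including the unramified-but-not-split ones. This is the step I would verify first.

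\emph{Assembly.} With all places checked, Hasse--Minkowski gives $\rho=u_1^2+u_2^2+u_3^2$ in each field factor. Products of fields assemble these into an identity in $A$, which proves the proposition. As in the earlier remarks, this argument yields existence only. It gives no height bound and no short description of the $u_i$.
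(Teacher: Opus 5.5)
Your proof is correct and follows the paper's own argument. Both apply Hasse--Minkowski to $\langle1,1,1,-\rho\rangle$ in each number-field factor, using positivity at real places, universality of $\langle1,1,1\rangle$ at odd places, and Hensel's lemma to make $\rho\in1+8\mathcal O$ a square at dyadic places. The only cosmetic difference is that you use the unramifiedness from Lemma~\ref{lem:generic}, whereas the paper notes that the Hensel step already works in any finite extension of $\mathbb Q_2$.
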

\begin{proof}
Work in each number-field factor. At every dyadic place \(\rho\)
is itself a square, by Hensel's lemma applied to an element of
\(1+8\mathcal O\). This observation holds in any finite extension
of \(\mathbb Q_2\). At an odd finite place the ternary form
\(\langle1,1,1\rangle\) is isotropic and represents every scalar.
At real places \(\rho>0\); complex places impose no restriction.
Hasse--Minkowski applied to
\(\langle1,1,1,-\rho\rangle\) gives the assertion. If the ternary
subform is isotropic globally it already represents every scalar;
otherwise the last coordinate of the isotropic vector is nonzero.
\end{proof}

If \(\rho=\sum_{j=1}^3u_j^2\), take the pure quaternion
\(w=u_1i+u_2j+u_3k\). Then \(\operatorname{Nrd}(qw)=h\).
This removes the former offset \(25/81\). The corresponding
short-certificate hypothesis becomes the existence of three
short numerators \(U_j\) and a nonzero scalar denominator \(c\)
satisfying
\begin{equation}\label{push12:pure-certificate}
 c^2hD^2=S\sum_{j=1}^3U_j^2\pmod I.
\end{equation}
At a real critical point this implies \(h\ge0\), and the independently
known unit property of \(h\) excludes equality. The ordinary
pure-monic verifier still checks the rational-coefficient format in
\(\mathrm{coRP}^{\PP}\). Existence in Proposition~\ref{push12:rho-three}
does not prove polynomial description length.

\subsection{An exact family and an obstruction to fixed formulas}
\label{push12:exact-family}

For the scalar fold family \(h=7,a=1,\Delta=25\), use only
scale-one folds. After \(t\ge1\) folds, with \(k=2^t\),
\[
 D=3^k+2^k,\quad N=(3^k-2^k)/5,\quad
 V=3^k-3\cdot2^{k-1},\quad 7D^2-S=25\cdot6^k.
\]
Put \(z=(2/3)^{k/2}\) and \(f(z)=7z^4-11z^2+7\).
Then \(\rho=7(1+z^2)^2/f(z)\), and the exact identity
\[
 7f(z)=\left(7-\frac{41}{7}z^2\right)^2
       +\left(z+\frac{24}{7}z^2\right)^2
       +\left(2z-\frac{12}{7}z^2\right)^2
\]
gives three squares for \(\rho\) by multiplying each displayed
coordinate by \((1+z^2)/f(z)\). Their rational SLP sizes are
\(O(t)\); quaternion multiplication then gives an exact
four-square representation of seven. No search or quotient
normal-form computation is used.

Three coordinates are necessary in this family. Since \(k\) is
two or four modulo six, \(D\equiv6\pmod7\) and \(S\equiv3\pmod7\).
Thus \(v_7(\rho)=1\), which is impossible for two rational squares.
Nevertheless \(\rho\to1\) over the reals and
\(\rho\equiv1\pmod{2^k}\) for \(t\ge2\).

There is a different obstruction even within positive rational curves.
For \(h=13,a=1,\Delta=49\), the fold targets are
\[
 \rho(z)=\frac{13(1+z^2)^2}{13z^4-23z^2+13},
 \qquad z=z_t=(3/4)^{2^{t-1}}.
\]
The denominator is positive for every real \(z\).
There are no three rational functions over \(\mathbb Q(z)\)
whose squares sum to this \(\rho(z)\): none could have a pole at
the real point \(z=1\), since the highest even-order poles in a
sum of real squares cannot cancel. Specialization would represent
\(\rho(1)=52/3\) by three rational squares, and multiplication
by \((3/2)^2\) would represent \(39\equiv7\pmod8\), a contradiction.
Nor can a finite collection of fixed rational triples cover every
fold point \(z_t\). One triple would work at infinitely many
distinct points, forcing the forbidden rational-function identity.
This does not exclude a growing, \(t\)-dependent recursive circuit.

The additional fold conic does not remove this obstruction. In fact
\[
 n=\frac{3a^2+(\Delta N/D)^2}{4},
\]
and, with \(W=4L_{\rm old}\), the next fold has
\[
 D_{\rm new}+W=D_{\rm old}^2,\qquad
 D_{\rm new}-W=s^2\Delta N_{\rm old}^2.
\]
These are the ordinary split-conic parametrization, valid for every
\(\Delta\), including negative real values. They impose no new
algebraic condition that alone supplies a rational norm point.

\subsection{A shared pool of real root constants}
\label{push12:root-coefficient-verifier}

The certificate numerators can be enlarged to polynomial circuits
with a polynomial-size shared pool of real scalar constants.
The constants are defined by the supported binary-root and monotone
inverse circuits of Theorems~\ref{push12:binary-root-theorem}
and~\ref{push12:monotone-theorem}. Their entire defining DAG counts
toward witness length. Root gates do not depend on the polynomial
variables. We also permit the original disjoint-variable flattenable
format: each numerator is one selected coefficient of a polynomial
SLP, and only the fixed-degree norm or conic residual operations are
applied before the perfect-pairing square. Fresh coefficient variables
give at most constantly many copies of each generating circuit.
Arbitrary-depth arithmetic on coefficient-extraction leaves is excluded.

\begin{proposition}\label{push12:root-coefficient-verifier-prop}
Ideal membership for this format in the fixed-degree pure-monic
critical algebra is in \(\BPP^{\PP}\). Polynomial-size norm or
conic certificates in this format would therefore still imply
\(\exists\mathbb R\subseteq\mathrm{coMA}^{\PP}\).
\end{proposition}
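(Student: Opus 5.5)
We reduce the membership question to a single integer zero test in which the root constants appear only as scalars, and then combine two existing tools. The \(\PosSLP\) side handles the root constants. The pure-monic residue pairing handles the ideal membership.

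First, we write the certificate residual \(F\) as in Theorem~\ref{push7:membership}. The perfect pairing gives \(F\in I\) iff \(U(F)=\sum_\alpha\lambda(X^\alpha F)^2=0\). By the coefficient formula (***) and the kernel identity (**), \(U(F)\) is a single specified coefficient of a polynomial circuit. We build that circuit from constantly many disjoint copies of the generating circuits. Its scalar leaves are the shared pool \(\gamma_1,\ldots,\gamma_k\) together with rational constants. Hence
\[
 U(F)=\Phi(\gamma_1,\ldots,\gamma_k),
\]
where \(\Phi\) is a real polynomial whose coefficients are specified coefficients of short integer circuits. These coefficients are indexed by the exponent vectors in the \(\gamma\)'s. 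The flattenable restriction is what keeps the number of copies constant, so that the dependence on the \(\gamma\)'s stays a short circuit and not an expanded list.

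Second, we treat the real constants the way Theorems~\ref{push12:binary-root-theorem} and~\ref{push12:monotone-theorem} do. Semantic height bounds for the pool give an \(H_0=2^{\poly}\) separating every nonzero value of \(\Phi(\gamma)\) from zero. The same bounds, applied to the field generated by the pool and the integer coefficients, give this separation. We then replace each \(\gamma_i\) by the shared rational SLP approximant \(\widetilde\gamma_i\) from those compilers, with precision \(2^{-B}\) and \(B\) linear in \(H_0\) plus a derivative bound for \(\Phi\) on a box. After clearing positive denominators, \(\widetilde U=\Phi(\widetilde\gamma)\) is again a specified coefficient of one short integer SLP, because the \(\widetilde\gamma_i\) are variable-free leaves. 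Deciding \(U(F)=0\) then reduces to comparing \(|\widetilde U|\) against the gap threshold. That comparison is the sign of a specified coefficient, and Theorem~\ref{thm:posslp} (randomized form) decides it in \(\BPP^{\PP}\).

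The main obstacle is the derivative and height bound for \(\Phi\). Its coefficients are themselves exponentially long coefficient-extracted integers, and \(\Phi\) may have exponential degree in the \(\gamma\)'s. We must choose \(B\) before constructing the approximants, to avoid circularity. We would bound \(\Phi\) and its partial derivatives gatewise on the box \(|\gamma_i|\le 2^{H_0+1}\), using the \(\ell_1\)-coefficient recurrence with all coefficients replaced by their absolute-value bounds. This needs only the syntactic circuits and gives a \(2^{2^{\poly}}\) bound with short exponent.

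The conditional consequence then follows exactly as in Proposition~\ref{push9:conditional}. Merlin sends the certificate, and Arthur runs the \(\BPP^{\PP}\) test. Amplification and the relativized inclusions give \(\mathrm{coMA}^{\PP}\).
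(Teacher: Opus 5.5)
Your proposal is correct and follows essentially the same route as the paper. The perfect residue pairing turns membership into the vanishing of one selected coefficient \(U\) of a short circuit with the shared pool as scalar leaves. A semantic gap bound and a gradient bound then fix the precision before any approximant is built, and a single \(\mathsf{CoeffPosSLP}\) threshold comparison on the approximated pool decides whether \(U=0\).

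You leave implicit a few points the paper states explicitly:
\begin{itemize}
\item The adaptive \(\PosSLP\) queries inside the root compilers, and the validity checks for the pool and its denominators, are also simulated in \(\BPP^{\PP}\) and composed with the final test after amplification.
\item Perfection of the pairing survives extension to the real field generated by the pool.
\item All summands of \(U\) are nonnegative reals, so \(U=0\) is equivalent to \(F=0\).
\end{itemize}
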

\begin{proof}
Let \(F\) be the residual and let \(L\subset\mathbb R\) be the field
generated by the shared scalar pool. The perfect residue pairing gives
\[
 U=\sum_\alpha\lambda(e_\alpha F)^2,\qquad
 F=0\text{ in }A\otimes L\ \Longleftrightarrow\ U=0.
\]
Perfection survives scalar extension. At the selected real embedding,
all summands are nonnegative. The finite Laurent formula and the
bounded coefficient extractions represent \(U\) as one selected
coefficient of a short polynomial circuit using the same scalar pool.
For a norm numerator, its two appearances in a square and the final
residual square use at most four disjoint-variable copies.

If the total description length is \(s\), the exact pool has field
degree and logarithmic heights at most \(2^{\operatorname{poly}(s)}\).
The coefficient expression has singly exponential degree,
coefficient bit height, and monomial count. The ordinary height
inequalities therefore give a computable
\(L_U=2^{\operatorname{poly}(s)}\) such that
\[
 U\ne0\quad\Longrightarrow\quad U\ge2^{-L_U}.
\]
This bound concerns the exact semantic expression, before any
approximation circuit is built. Formal differentiation and
coefficient \(\ell_1\)-norm bounds likewise give
\(G=2^{\operatorname{poly}(s)}\) such that the sum of absolute
partial derivatives of this coefficient, on a box containing the
scalar pool and its perturbations, is at most \(2^G\).

Approximate every shared scalar to error at most
\(2^{-L_U-G-3}\) using rational SLPs. The root algorithms allow this
binary precision with polynomially many operations and
\(\PosSLP\) queries; their uniform schedule is increased by the
requested final accuracy. Sharing preserves polynomial circuit size.
The resulting coefficient \(\widetilde U\) satisfies
\[
 |\widetilde U-U|\le2^{-L_U-3}.
\]
Clear scalar denominators by numerator/denominator pairs and compare
\(\widetilde U\) with \(2^{-L_U-1}\) using
\(\mathsf{CoeffPosSLP}\). This distinguishes zero from nonzero.
Validity of the scalar pool and nonzero scalar denominators is
handled by its sign procedure. The \(\BPP^\PP\) simulations of
\(\PosSLP\) and coefficient sign compose after amplification.
Two-sided bounded error is sufficient for the stated MA verifier.
\end{proof}

This is an archimedean verification argument, not modular evaluation
of algebraic root constants. It does not give short ordinary integer
SLPs for their exact values or residues. The shared-pool restriction
also matters: exponentially many independently generated radicals
can have a double-exponential compositum degree, invalidating the
polynomial-bit separation exponent used above.

\subsection{Exact conic certificates for dense convex compositions}
\label{push12:convex-compositions}

Let \(P(T)\) be given by a \(+,\times\) circuit with nonnegative
scalar leaves. For each gate polynomial \(f\), define
\[
 \operatorname{DD}_f(U,V)=\frac{f(U)-f(V)}{U-V},\qquad
 \mathcal B_f(U,V)=f(U)-f(V)-f'(V)(U-V).
\]
These are polynomial identities, including at \(U=V\).
At a constant leaf both expressions are zero; at \(T\) they are
one and zero. Addition is componentwise. For multiplication,
\begin{align*}
 \operatorname{DD}_{fg}
   &=f(U)\operatorname{DD}_g+g(V)\operatorname{DD}_f,\\
 \mathcal B_{fg}
   &=g(V)\mathcal B_f+f(V)\mathcal B_g
     +(U-V)^2\operatorname{DD}_f\operatorname{DD}_g.
\end{align*}
The second identity follows by adding
\((f(U)-f(V))(g(U)-g(V))\) to its first two terms.
Memoizing \(f(U),f(V),f'(V),\operatorname{DD}_f,\mathcal B_f\)
at every original gate adds only constantly many gates per node.
If \(U,V\) are SOS-conic, all the displayed factors other than the
explicit square are SOS-conic by nonnegative-coefficient syntax.
Thus \(\mathcal B_P(U,V)\) has a short exact conic description,
even for exponential degree and exponentially many expanded terms.

\begin{theorem}\label{push12:convex-composition-theorem}
Let
\[
 q(X)=(X-u)^TA(X-u)+d,\qquad
 h(X)=P(q(X))+b^TX+c,
\]
where \(A\) is a rational positive semidefinite matrix, \(d\ge0\),
and \(P\) has the preceding nonnegative-leaf circuit.
Global strict positivity of \(h\), and existence of \(X\) with
\(h(X)\le0\), belong to \(\Pclass^{\PosSLP}\).
Every globally strictly positive instance has a polynomial-size
exact conic certificate using a shared pool of supported real root
constants. The result also holds for sums on disjoint blocks and
invertible rational affine changes of coordinates.
\end{theorem}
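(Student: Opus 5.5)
Since \(A\succeq0\) and \(d\ge0\), the inner map \(q\) is a convex quadratic with \(q\ge d\ge0\). \(P\) has nonnegative coefficients, so it is nondecreasing and convex on \([0,\infty)\). Hence \(h=P\circ q+\text{affine}\) is convex on \(\R^n\). The plan is to reduce both decision problems to a one-dimensional convex minimization. We then certify its value with a Bregman identity built from \(\mathcal B_P\) and the divided-difference recurrences displayed before the statement.

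First split \(b=b_\parallel+b_\perp\), with \(b_\parallel\) in the range of \(A\) and \(b_\perp\) in its kernel; this uses ordinary rational linear algebra. If \(b_\perp\ne0\), then \(h\) is unbounded below along \(u+tb_\perp\), since \(q\) is constant there. A point with \(h\le0\) exists and no certificate is needed. If \(P\) is constant, \(h\) is affine and we handle it directly. Otherwise write \(b=2A\beta\) and complete the square. The problem reduces to minimizing \(\phi(s)=P(s)-\gamma\sqrt{s-d}+c'\) over \(s\ge d\), with explicit rational \(\gamma=2\|\beta\|_A\ge0\); after squaring, only \(\gamma^2\) is rational. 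Equivalently, the minimizer lies on the ray \(X=u-t\beta\). It is characterized by the stationarity equation \(2t\,\beta^TA\beta\,P'(q)=\beta^TA\beta\) along that ray, so \(t=1/(2P'(s^*))\) and \(s^*=d+t^2\beta^TA\beta\).

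This is a single monotone equation \(F(s)=\gamma^2/4\) in \(s\), with \(F(s)=(s-d)P'(s)^2\). I would pass to the variable \(r=s-d\) and solve \(r\,P'(r+d)^2=\gamma^2/4\). The left side is a \(+,\times\) circuit with nonnegative leaves in \(r\), whose value is zero at \(r=0\). So the root \(r^*\) is exactly an inverse gate of Theorem~\ref{push12:monotone-theorem}. Its approximations are computed in \(\Pclass^{\PosSLP}\), with shared precision schedules and semantic height gaps taken from that theorem's proof. Then \(\min h=h(X^*)\), with \(X^*=u-t^*\beta\) and \(t^*=1/(2P'(s^*))\). Its sign, including exact zero, is decided by a \(\PosSLP\) comparison past the semantic gap. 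Disjoint block sums are handled by minimizing each block separately and adding the minima. Invertible affine changes of variables preserve everything.

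For the certificate, let \(V=q(X^*)=s^*\) be a shared root constant. Add and subtract \(P(V)+P'(V)(q(X)-V)\) to obtain
\[
 h(X)=\mathcal B_P(q(X),V)+\Bigl[P'(V)\bigl(q(X)-V\bigr)+b^TX\Bigr]+\bigl(P(V)+c\bigr).
\]
The bracket is a convex quadratic minimized at \(X^*\), so it equals \(P'(V)(X-X^*)^TA(X-X^*)\) plus a constant. Writing \(A=LL^T\) over \(\Q\) (rational \(LDL^T\) with nonnegative \(D\) suffices) makes this a sum of squares of linear forms times the nonnegative root constant \(P'(V)\). The first term is conic by the memoized recurrence, with \(U=q(X)\) itself SOS-conic. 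The remaining constant is exactly \(\min h>0\). The whole expression is a polynomial-size conic circuit over the scalar pool \(\{V,t^*,P'(V)\}\). The main obstacle is checking that \(V\) is \emph{exactly} representable by a supported gate. I expect to resolve it using the monotone-inverse gate above, together with the stability analysis already proved for nested inverses. The certificate can then be verified by Proposition~\ref{push12:root-coefficient-verifier-prop}.
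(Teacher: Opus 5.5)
Your route is the paper's: the same split on whether \(b\) lies in the image of \(A\), and the same single monotone inverse gate built from the circuit of \(P'\). You also use the same exact identity \(h(X)=\mu+\mathcal B_P(q(X),q(\alpha))+P'(q(\alpha))(X-\alpha)^TA(X-\alpha)\) at the minimizer \(\alpha\), with rational \(LDL^T\) for the quadratic part. The only substantive difference is the parametrization of the inverse. You solve \(rP'(d+r)^2=\beta^TA\beta\), whose target is rational. The paper solves \(2rP'(d+r^2)=\sqrt{b^Tv}\) with \(Av=b\), paying one square-root constant so that \(q(\alpha)=d+r^2\) is SOS-conic by syntax. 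That is what the recurrence for \(\mathcal B_P(U,V)\) needs to make \(f(V)\), \(P'(V)\) and \(\operatorname{DD}_f(U,V)\) conic. With your \(V=d+r^*\), you must either accept the gate output \(r^*\ge0\) as a nonnegative leaf or adjoin \(\sqrt{r^*}\).

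Three steps need repair as written. First, with \(b=2A\beta\) one has \(\nabla h(u-t\beta)=2(1-tP'(q))A\beta\), so \(t^*=1/P'(s^*)\), not \(1/(2P'(s^*))\). With your value, \(q(X^*)=d+(s^*-d)/4\ne s^*\). Also the bracket \(P'(V)(q(X)-V)+b^TX\) has gradient \(A\beta\ne0\) at \(X^*\), so your completion of the square fails. Your scalar equation \(rP'(r+d)^2=\gamma^2/4\) is correct, so only \(X^*\) changes.

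Second, if \(b=0\) then \(r^*=0\) and \(s^*=d\). Then \(t^*=1/P'(d)\) is an invalid division when \(d=0\) and \(P'(0)=0\), e.g. \(P(T)=T^2\). Branch on \(b=0\) and take \(X^*=u\), as the paper does. For \(b\ne0\) one has \(r^*>0\), hence \(P'(s^*)>0\), and \(\min h=P(s^*)+b^Tu+c-2\beta^TA\beta/P'(s^*)\) needs no square root.

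Third, the leftover constant \(\mu=\min h\) is a signed expression in the pool. So your certificate is not conic by syntax until you adjoin \(\tau=\sqrt{\mu}\) (a supported binary-root gate, nonzero since \(\mu>0\)) and write \(h=\tau^2+C\). You should also say how constancy of \(P\) is detected, e.g. \(P'(1)>0\) after validating the leaves. Finally, verify \(A\succeq0\) by pivoted rational \(LDL^T\) if it is not promised.
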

\begin{proof}
For any real \(\alpha\), put \(a=q(\alpha)\). The exact identity is
\begin{align*}
 &P(q(X))-P(a)-P'(a)\nabla q(\alpha)^T(X-\alpha)\\
 &\hspace{8mm}
 =\mathcal B_P(q(X),a)+P'(a)(X-\alpha)^TA(X-\alpha).
\end{align*}
Both terms on the right are SOS-conic. Rational LDL decomposition
supplies short quadratic conic descriptions. The preceding compiler
supplies the first term with shared DAG nodes.

Suppose \(P\) is nonconstant. The polynomial
\[
 F(r)=2rP'(d+r^2)
\]
has a nonnegative-leaf circuit, obtained by automatic
differentiation, and is nonconstant and strictly increasing on
\(r>0\), with \(F(0)=0\).
If \(b\notin\operatorname{im}A\), a kernel direction of \(A\)
with nonzero pairing against \(b\) makes \(h\) unbounded below.
If \(b=0\), choose \(\alpha=u\), giving minimum \(P(d)+c\).
Otherwise solve \(Av=b\) and set
\[
 \beta=b^Tv>0,\quad \gamma=\sqrt{\beta},\quad
 r=F^{-1}(\gamma),\quad \alpha=u-(r/\gamma)v.
\]
The value \(\beta=v^TAv\) is positive and independent of the
chosen solution. These expressions use one square root and one
supported inverse. They satisfy \(q(\alpha)=d+r^2\) and
\(\nabla h(\alpha)=0\). Consequently
\[
 h(X)=\mu+C(X),\qquad
 \mu=P(d+r^2)+b^Tu-r\gamma+c,
\]
where
\[
 C=\mathcal B_P(q(X),q(\alpha))
       +P'(q(\alpha))(X-\alpha)^TA(X-\alpha)
\]
is SOS-conic and vanishes at \(\alpha\). This proves the minimum
claim directly, also for singular \(A\).

Nonconstancy is tested by \(P'(1)>0\), after validating the scalar
leaves. If \(P\) is constant, a nonzero \(b\) makes \(h\)
unbounded below, and \(b=0\) leaves a constant sign test.
If \(A=0\), the image test likewise handles all nonzero \(b\).
No formula divides by \(\gamma=0\). Symmetry, PSD, rank, image
membership, and the rational solves use polynomial-size linear
algebra with \(\PosSLP\) comparisons. PSD LDL elimination pivots
on a positive diagonal; negative diagonals reject, and a residual
block with zero diagonal must be identically zero.

The sign of \(\mu\) is decided by the root-circuit theorem.
When \(\mu>0\), one further constant \(\tau=\sqrt\mu\) gives
\(h=\tau^2+C\), with \(\tau\ne0\). Equivalently,
\(h/\tau^2=1+C/\tau^2\) is the usual conic certificate.
There are only polynomially many shared scalar gates, so
Proposition~\ref{push12:root-coefficient-verifier-prop} applies.
Block sums add the minima and conic remainders; an invertible
affine substitution preserves these identities.
\end{proof}

The theorem includes
\(P(T)=\prod_j(1+a_jT^{2^j})\), whose expanded support may be
exponential. It does not cover a general sum of overlapping
quadratic compositions: their minimizer may require a coupled
system. In particular the squared quadratic residuals of the
unrestricted ETR critical objective do not satisfy the hypotheses.
No rational-coefficient descent of the new root-constant
certificates, and no improved unrestricted ETR containment,
is asserted.

\subsection{A coupled convex family with an exact rational-center certificate}
\label{push12:coupled-convex}

The preceding constructor separates a univariate polynomial from its
quadratic argument. A different argument permits coupled quadratic terms
and constructs a certificate from a rational approximation to the
minimizer. The minimizer box in the next statement is an explicit
hypothesis; the bounded-ETR box proof uses quadratic slack residuals and
cannot be imported after deleting them.

\begin{proposition}\label{push12:coupled-convex-theorem}
Let
\[
 h(x)=\sum_iw_ix_i^6+\tfrac12x^{\mathsf T}Ax+b^{\mathsf T}x+c,
 \qquad W=\operatorname{diag}(w_i),\quad A\succeq6W,
\]
where the coefficients are rational SLP values, \(A\) is symmetric,
and \(1\le w_i\le2^{\poly(n)}\). Suppose a supplied integer
\(B_0\ge1\), of polynomial numerical value, bounds every coordinate
of the unique minimizer. Global strict positivity is in
\(\Pclass^{\PosSLP}\). Every positive instance has a short exact conic
certificate using only square roots of positive rational-SLP scalars
as additional real constants.
\end{proposition}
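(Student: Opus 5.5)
The plan is to exploit strong convexity of the sextic part: $x\mapsto w_ix_i^6$ has Bregman remainder that dominates a quadratic, so $h$ is strictly convex, and an exact conic certificate around \emph{any} rational center $\alpha$ can be written explicitly. For a rational point $\alpha$, write
\[
 h(x)=h(\alpha)+\nabla h(\alpha)^{\mathsf T}(x-\alpha)
   +\sum_iw_i\,\mathcal B_{t^6}(x_i,\alpha_i)
   +\tfrac12(x-\alpha)^{\mathsf T}A(x-\alpha),
\]
where $\mathcal B_{t^6}(u,v)=u^6-v^6-6v^5(u-v)=(u-v)^2(u^4+2u^3v+3u^2v^2+4uv^3+5v^4)$. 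The quartic factor is an explicit positive definite binary form, so it has a fixed rational SOS decomposition with coefficient weights; this gives $\mathcal B_{t^6}$ as a short conic expression. The linear term $g=\nabla h(\alpha)$ is then absorbed by completing the square against the quadratic: since $A\succeq 6W\succ0$, $A$ is invertible and
\[
 g^{\mathsf T}y+\tfrac12y^{\mathsf T}Ay
 =\tfrac12(y+A^{-1}g)^{\mathsf T}A(y+A^{-1}g)-\tfrac12g^{\mathsf T}A^{-1}g,
\]
so $h(x)=\mu_\alpha+C_\alpha(x)$ with $\mu_\alpha=h(\alpha)-\frac12 g^{\mathsf T}A^{-1}g$ a rational SLP value and $C_\alpha$ conic, using rational $LDL^{\mathsf T}$ of $A$ (pivots positive since $A\succ0$). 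Note $\mu_\alpha\le\min h$, because $C_\alpha\ge0$. So it suffices to find a short rational $\alpha$ with $\mu_\alpha>0$ whenever $\min h>0$, and then append $\tau^2=\mu_\alpha$ with $\tau=\sqrt{\mu_\alpha}$, exactly as in Theorem~\ref{push12:convex-composition-theorem}; the verifier of Proposition~\ref{push12:root-coefficient-verifier-prop} applies.

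The decision procedure: compute $\alpha$ by damped (or plain, from a good start) Newton iteration on $\nabla h=0$, entirely as a rational SLP, with the supplied box $\|x^*\|_\infty\le B_0$ providing the start and conditioning. The Hessian $\nabla^2h=30W\operatorname{diag}(x_i^4)+A\succeq A\succeq6W\succeq 6I$, and its third derivative is bounded on the box $[-2B_0,2B_0]^m$ by $O(\max_iw_iB_0^3)$, which is $2^{\poly(n)}$. Hence the Newton basin radius is $2^{-\poly(n)}$ relative to the strong convexity constant, and the gap $\min h-\mu_\alpha\le \frac12\|\nabla h(\alpha)\|^2_{A^{-1}}$ up to constants, which is quadratically small in $\|\alpha-x^*\|$. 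To enter the basin without a linear-in-precision transient, I would use a homotopy in the sextic weights (or a bounded number of damped Newton steps certified by self-concordance-type estimates) to reach error $2^{-\poly(n)}$ in polynomially many steps, then $O(\log E)$ quadratic steps to reach error $2^{-E}$, each step a polynomial-size SLP with one $m\times m$ rational solve.

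The main obstacle is the precision bound: choosing $E$ so that $\min h>0$ implies $\mu_\alpha>0$. I would bound $\min h$ away from zero by an algebraic gap: $\min h$ is a root of a univariate eliminant of the critical system, with degree $6^m$-type and height $2^{\poly(n)}$ in terms of the SLP heights, so nonzero $|\min h|\ge 2^{-2^{\poly(n)}}$; taking $E=2^{\poly(n)}$ with polynomial binary length suffices, and $\log E$ Newton steps stays polynomial. Then the sign of $\mu_\alpha$ (equivalently of $\min h$ at this separation) is one $\PosSLP$ query after clearing denominators, placing positivity in $\Pclass^{\PosSLP}$; the certificate is $h=\tau^2+C_\alpha$ with only the square root $\tau$ as a real constant (and if one prefers rational LDL weights, any square roots of positive pivots are again square roots of positive rational-SLP scalars).
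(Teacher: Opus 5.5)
Your certificate and final test are the paper's: the sextic Bregman identity (your quartic factor equals the paper's $(u+v)^2(u^2+2v^2)+3v^4$), completion of the square against $A$, the inequality $\mu_\alpha\le\min h$, an algebraic gap for a nonzero minimum, and one $\PosSLP$ sign test. The genuine gap is the step you pass over: reaching the Newton basin with polynomially many SLP operations.

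The entries of $A$ and $b$ are SLP values, so they can have doubly exponential magnitude (for example $A=6W+QL_{\rm graph}$ with huge $Q$ and $b=-Av$). Then $h(0)-\min h\ge\frac12x_*^{\mathsf T}Ax_*$ can be doubly exponential. Damped-Newton analyses, including self-concordance ones, bound the number of damped steps only linearly in $h(x_0)-\min h$. Even starting from the exact quadratic minimizer $-A^{-1}b$, this gap can be exponentially large.

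The natural homotopy $t\sum_iw_ix_i^6$ is not governed by the supplied box either. Its $t=0$ minimizer is $x_*+6A^{-1}Wx_*^5$, which can lie exponentially far outside $[-B_0,B_0]^m$ when the $w_i$ are spread out. This already happens for $m=2$, $W=\operatorname{diag}(1,w)$, $A=\bigl(\begin{smallmatrix}7&\sqrt w\\ \sqrt w&7w\end{smallmatrix}\bigr)$, $x_*=(0,1)$. The Hessian Lipschitz constants and admissible step sizes along the path then have no polynomial bound. A telltale sign is that your argument never uses that $B_0$ has polynomial numerical \emph{value}: $B_0=2^{\poly(n)}$ would serve your estimates equally well.

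The missing ingredient is a contraction that is uniform in $A$, measured in $\|v\|_W=(v^{\mathsf T}Wv)^{1/2}$. Put $B=B_0+1$ and $\lambda=30B^4$, and iterate $T(x)=\operatorname{clip}_{[-B,B]^m}\bigl(x-(A+\lambda W)^{-1}\nabla h(x)\bigr)$ from $0$.
\begin{itemize}
\item On the box, the derivative of the unclipped map is conjugate to $(\widetilde A+\lambda I)^{-1}(\lambda I-30\operatorname{diag}(x_i^4))$, where $\widetilde A=W^{-1/2}AW^{-1/2}\succeq6I$. So it contracts by $\lambda/(6+\lambda)$ however large $A$ is.
\item Clipping is nonexpansive in this diagonal norm and fixes $x_*$.
\item Hence $O(B^4)$ iterations halve the $W$-error, and polynomially many reach $\|x-x_*\|_W\le(80B^3)^{-1}$. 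This is exactly where the polynomial value of $B_0$ is needed.
\item Newton in whitened coordinates (inverse Hessian norm at most $1/6$, Lipschitz constant $120B^3$) then gives $e_{j+1}\le10B^3e_j^2$.
\end{itemize}
Separately, your ``up to constants'' in $\min h-\mu_\alpha$ hides the whitened Hessian norm $2^G$, which may itself be doubly exponential. Since $G=2^{\poly(n)}$, the target precision $2^{-L-G-4}$ still has polynomial binary length, but $G$ must enter your choice of $E$.
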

\begin{proof}
Put \(B=B_0+1\), \(\lambda=30B^4\), and iterate from zero by
\[
 T(x)=\operatorname{clip}_{[-B,B]^m}
       \left(x-(A+\lambda W)^{-1}\nabla h(x)\right).
\]
The fixed inverse uses polynomial-dimensional rational arithmetic;
clipping uses \(\PosSLP\) comparisons. Analyze the iteration in
\(\|v\|_W=(v^{\mathsf T}Wv)^{1/2}\), without constructing whitened
coordinates. If \(\widetilde A=W^{-1/2}AW^{-1/2}\), the derivative
before clipping is conjugate to
\[
 (\widetilde A+\lambda I)^{-1}
       (\lambda I-30\operatorname{diag}(x_i^4)).
\]
On the box its operator norm is at most
\(q=\lambda/(6+\lambda)<1\). This uses the product of two norm
bounds and assumes no commutation. Coordinate clipping is nonexpansive
in the diagonal weighted norm. The minimizer \(x_*\) is a fixed point,
so the error contracts by \(q\). Since
\(q^{5B^4+1}<1/2\), and the logarithm of
\(B_0\sqrt{\sum_iw_i}\) is polynomial, polynomially many iterations
give
\[
 \|x-x_*\|_W\le(80B^3)^{-1}.
\]

On the larger box the whitened Hessian has inverse norm at most
\(1/6\) and Lipschitz constant at most \(120B^3\). Exact Newton
therefore satisfies
\[
 e_{j+1}\le10B^3e_j^2,\qquad e_j=\|x_j-x_*\|_W.
\]
All subsequent points remain within distance one of the minimizer
and hence inside the box. An absolute error \(2^{-H}\), for binary
\(H\), takes only \(O(\log H)\) additional matrix operations.
Sharing the earlier iterates gives rational SLPs of polynomial size.

The gradient equations, divided by \(6w_i\), have leading terms
\(x_i^5\) and lower terms of degree one. Their quotient has dimension
\(5^m\). The usual multiplication-matrix height bound, used only
semantically, gives a computable \(L=2^{\poly(n)}\) such that
the nonzero minimum \(\mu=h(x_*)\) has \(|\mu|\ge2^{-L}\).
Choose another computable \(G=2^{\poly(n)}\) bounding the whitened
Hessian norm on the box by \(2^G\), and construct a rational center
\(z\) with error at most \(2^{-L-G-4}\).

Put \(g=\nabla h(z)\) and
\[
 \ell=h(z)-\tfrac12g^{\mathsf T}A^{-1}g.
\]
The exact sextic identity
\[
 B_6(x,z)=x^6-z^6-6z^5(x-z)
 =(x^2-z^2)^2(x^2+2z^2)+3z^4(x-z)^2
\]
and quadratic completion give
\[
 h(x)=\ell+
 \tfrac12(x-z+A^{-1}g)^{\mathsf T}A(x-z+A^{-1}g)
       +\sum_iw_iB_6(x_i,z_i).
\]
Thus \(\ell\le\mu\). Conversely, the gradient bound and
\(A^{-1}\preceq W^{-1}/6\) imply
\[
 \ell\ge\mu-\tfrac1{12}2^{2G}\|z-x_*\|_W^2.
\]
If \(\mu>0\), this is positive by the gap bound. If \(\mu\le0\),
then \(\ell\le0\). One exact sign test of \(\ell\) therefore
decides strict positivity, including the zero-minimum case.
For a positive instance, add the scalar root \(\sqrt\ell\).
A rational LDL decomposition of \(A\) and the displayed Bregman
identity give the claimed short exact conic certificate. No exact
coupled-minimizer coordinate is needed as a scalar gate.
\end{proof}

The box can be independently certified, for example, by outward
gradient inequalities on each face. A useful family with arbitrarily
large couplings is \(A=6W+QL_{\rm graph}\), where \(Q>0\) and
\(L_{\rm graph}\) is a nonnegative weighted graph Laplacian.
If \(b=-Av\) for a supplied vector \(v\), the maximum principle
at a largest or smallest coordinate of \(x_*-v\) gives
\(\|x_*\|_\infty\le2\|v\|_\infty\). A polynomial-valued bound
on \(v\) supplies the hypothesis independently of \(Q\).
The coupled nonconvex residuals of unrestricted ETR are not covered.

\subsection{A residue congruence and approximate Gram certificates}
\label{push12:approx-gram}

Consider a pure-monic quotient
\[
 R=\mathbb Z[X_1,\ldots,X_m]/(X_i^d+r_i:1\le i\le m),
 \qquad \deg r_i<d,
\]
where \(d\ge2\) is fixed and \(m\ge1\).
Set \(D=d^m\), \(M=m(d-1)\), and
\(t=(d-1,\ldots,d-1)\). The standard basis is
\(e_\alpha=X^\alpha\), \(0\le\alpha_i<d\).
Let \(\lambda\) extract the coefficient of \(X^t\) in normal form,
and define
\[
 H_{\alpha\beta}=\lambda(e_\alpha e_\beta),\qquad
 J_{\alpha\beta}=\mathbf1_{\alpha+\beta=t},\qquad
 N=HJ-I.
\]

\begin{proposition}\label{push12:residue-congruence}
The matrix \(N\) satisfies \(N^{M+1}=0\). The finite expression
\[
 S=\sum_{k=0}^{M}\binom{-1/2}{k}N^k
\]
satisfies \(SHS^\mathsf T=J\). Its entries, and those of
\(H^{-1}\), have uniform coherent coefficient-SLP descriptions
of polynomial size. The scalar \(4^M\) clears the denominators of
\(S\).
\end{proposition}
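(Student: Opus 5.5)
The plan is to read off $N$ from the block-triangular structure of the residue pairing established in the proof of Theorem~\ref{push7:membership}. After that, $S$ is handled by exact formal power-series algebra in $N$, and the coefficient-SLP descriptions come from the Laurent residue formula $(***)$ applied in polynomially many variable copies.

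\emph{Nilpotence.} Since $J$ is the involutive permutation $\beta\mapsto t-\beta$, we have $(HJ)_{\alpha\gamma}=\lambda(X^\alpha X^{t-\gamma})$. This is exactly the matrix used in the proof of Theorem~\ref{push7:membership}, with rows indexed by $X^\alpha$ and columns by $X^{t-\gamma}$. There it was shown that the entry vanishes when $|\alpha|<|\gamma|$ and equals $\delta_{\alpha\gamma}$ when $|\alpha|=|\gamma|$. Grade the basis by total degree $0,\ldots,M$. Then $N=HJ-I$ strictly lowers no degree and is zero on the diagonal blocks, so it maps the degree-$\ge j$ filtration into degree-$\ge j+1$. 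There are $M+1$ degrees, hence $N^{M+1}=0$.

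\emph{The congruence.} From $H=(I+N)J$, symmetry of $H$ gives $(I+N)J=J(I+N^{\mathsf T})$, that is, $JN^{\mathsf T}J=N$. Because $N^{M+1}=0$, the truncated series $S$ is the exact formal inverse square root $(I+N)^{-1/2}$ in the commutative ring $\Q[N]$. Moreover $S^{\mathsf T}$ is the same polynomial in $N^{\mathsf T}$, so $JS^{\mathsf T}=SJ$. Hence
\[
SHS^{\mathsf T}=S(I+N)JS^{\mathsf T}=S(I+N)SJ=J .
\]
The same argument gives $H^{-1}=J(I+N)^{-1}=J\sum_{k\le M}(-N)^k$, which has integer entries. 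Finally, $\binom{-1/2}{k}=(-1)^k\binom{2k}{k}4^{-k}$, so $4^M$ clears every denominator of $S$ for $k\le M$.

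\emph{Coefficient-SLP descriptions.} This is the main step. I would write $(HJ)^k_{\alpha\gamma}$ as a contracted product of $k$ residue functionals in independent variable copies $X^{(1)},\ldots,X^{(k)}$. Consecutive copies are linked by the kernel $\sum_\beta (X^{(j)})^{t-\beta}(X^{(j+1)})^\beta=\prod_i\sum_{a<d}(X^{(j)}_i)^{d-1-a}(X^{(j+1)}_i)^a$, which is a product of $m$ short geometric sums. Each $\lambda$ is the finite coefficient extraction $(***)$, with $E$ bounded by the fixed degrees $2M$. Since $k\le M=m(d-1)$ is polynomial, only polynomially many copies occur. The whole entry is therefore one specified coefficient of a polynomial-size circuit, with the outer indices $\alpha,\gamma$ entering as binary exponent targets.

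The entries of $N^k$, $S$ and $H^{-1}$ are fixed integer or dyadic combinations of polynomially many such coefficients. This uses $N^k=\sum_j\binom kj(-1)^{k-j}(HJ)^j$ followed by the outer sum over $k$. The step I expect to be the main obstacle is merging these into a single coherent coefficient description uniform in $(\alpha,\gamma)$. I would try a tag variable $Z$: form $\sum_{j}c_jZ^jC_j(X)\prod(\text{padding})$ and extract the coefficient whose exponents make each $j$-term land at the same target. This can be done with the geometric-sum selector $G_D$ used in the proof of Theorem~\ref{thm:rational-counting}. All degrees stay singly exponential, so the description remains of polynomial size.
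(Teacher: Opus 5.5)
Your proposal is correct and follows essentially the paper's argument. It uses the same degree-filtration proof of nilpotence from the residue-pairing computation of Theorem~\ref{push7:membership}, the identity $NJ=JN^{\mathsf T}$ with finite binomial calculus for $SHS^{\mathsf T}=J$ and $H^{-1}=J\sum_{k\le M}(-N)^k$, and the same flattening of path products of residue coefficients through the finite Laurent formula in disjoint variable copies; your kernel between consecutive copies plays the role of the paper's Boolean summation over path indices, and your final merge can be simplified. No tag variable or $G_D$ selector is needed, since only polynomially many pairs $(j,k)$ occur: multiply each term by the extraction target monomials of its unused copies and add the resulting circuits to obtain one specified coefficient.
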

\begin{proof}
Every monic reduction strictly lowers total degree. Therefore
\((HJ)_{\alpha\beta}=0\) if \(|\alpha|<|\beta|\), and equals
\(\mathbf1_{\alpha=\beta}\) if the degrees agree. Thus \(N\)
strictly raises degree, proving nilpotence. Put \(A=I+N=HJ\).
Symmetry of \(H=AJ\) gives \(AJ=JA^\mathsf T\), and hence
\(SJ=JS^\mathsf T\). Finite binomial calculus gives \(S^2A=I\), so
\[
 SHS^\mathsf T=SAJS^\mathsf T=SASJ=J.
\]
Also \(H^{-1}=J\sum_{k=0}^{M}(-N)^k\).
Each matrix entry is a sum over paths of length at most \(M\).
Each path uses polynomially many residue coefficients. The existing
finite Laurent formula and Boolean index-summation compiler flatten
these products with disjoint auxiliary variables. All path indices
have polynomial total bit length. This is a polynomial-length path
expansion, not a claim about exponentially long matrix powers.
Finally,
\(\binom{-1/2}{k}=(-1)^k\binom{2k}{k}/4^k\).
\end{proof}

The triangular inverse was already available from the residue
dualization argument; the additional conclusion is the explicit
congruence. For odd \(d\), the reversal \(J\) has signature one.
This does not determine the signatures of the trace forms.

\begin{lemma}\label{push12:residue-error-bound}
Suppose every real root \(z\) satisfies \(|z_i|\le B\), where
\(B\ge1\). Let \(C\ge1\) bound the coefficient \(\ell_1\) norm
of every \(r_i\), and put
\[
 A_0=2C^{2M},\qquad K=DB^M(1+DA_0)^M,\qquad
 U(F)=\sum_\alpha\lambda(e_\alpha F)^2.
\]
For every polynomial \(F\) with real coefficients and every real
root \(z\), one has \(|F(z)|\le K\sqrt{U(F)}\).
\end{lemma}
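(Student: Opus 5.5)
The plan is to reduce the evaluation $F(z)$ to the normal-form coefficients of $F$, and then to express those coefficients through the residue vector $y_\alpha=\lambda(e_\alpha F)$ by inverting $H$. Evaluation at a root $z$ of the system factors through the quotient, so $F(z)=\sum_\beta f_\beta z^\beta$, where $\operatorname{NF}(F)=\sum_\beta f_\beta e_\beta$ is the normal form. Linearity of $\lambda$ gives $y_\alpha=\sum_\beta H_{\alpha\beta}f_\beta$, that is $y=Hf$. By Proposition~\ref{push12:residue-congruence}, $H^{-1}=J\sum_{k=0}^{M}(-N)^k$, so
\[
 f=J\sum_{k=0}^{M}(-N)^k\,y .
\]
The argument is over the reals, since the real-coefficient case only enters through $y$.

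Next I bound the entries of $H$. The product $e_\alpha e_\beta$ is a monomial of total degree at most $2M$. I will show by induction on total degree that the normal form of a monomial $X^\gamma$ has coefficient $\ell_1$ norm at most $C^{|\gamma|}$, and in fact at most $C^{2M}$ in the range used here. If some $\gamma_i\ge d$, write $X^\gamma=X^{\gamma-de_i}\cdot(-r_i)$ modulo the ideal. The result is a combination, of total weight at most $C$, of monomials of strictly smaller total degree. Since the degree starts at most $2M$ and drops by at least one at each step, at most $2M$ reductions occur along any branch, which gives the bound $C^{2M}$ because $C\ge1$. Hence $|H_{\alpha\beta}|\le C^{2M}$, and $|N_{\alpha\beta}|\le C^{2M}+1\le A_0$. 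The maximum row $\ell_1$ norm then satisfies $\|N\|_\infty\le DA_0$. Submultiplicativity and the fact that $J$ is a permutation give
\[
 \|f\|_\infty\le\sum_{k\le M}(DA_0)^k\|y\|_\infty\le(1+DA_0)^M\|y\|_\infty\le(1+DA_0)^M\sqrt{U(F)} .
\]

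Finally, each standard monomial has exponents at most $d-1$, hence total degree at most $M$. Since $|z_i|\le B$ with $B\ge1$, we get $|z^\beta|\le B^M$. Summing over the $D$ basis monomials gives
\[
 |F(z)|\le DB^M\|f\|_\infty\le K\sqrt{U(F)} .
\]

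The only step that needs care is the structure-constant estimate. The reduction must be organized so that the number of multiplications by an $r_i$ along a branch is bounded by the degree drop; a single replacement $X_i^d\mapsto -r_i$ lowers total degree by at least one. The weights must also be tracked as $\ell_1$ norms, not coefficient maxima, so that branching does not add extra factors. The rest is routine norm bookkeeping once the formula for $H^{-1}$ from Proposition~\ref{push12:residue-congruence} is available.
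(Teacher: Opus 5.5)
Your proof is correct and follows the paper's route. Both arguments write $F(z)=e(z)^{\mathsf T}H^{-1}(Hv)$, bound $|H_{\alpha\beta}|\le C^{2M}$ through the depth-$2M$ reduction tree, and control $H^{-1}=J\sum_{k\le M}(-N)^k$ by the nilpotent expansion of Proposition~\ref{push12:residue-congruence} with $\sum_{k\le M}(DA_0)^k\le(1+DA_0)^M$. The only difference is that you use $\ell_\infty$ and row-sum norms, while the paper uses spectral norms (obtaining $\sqrt D\,B^M$ for $\|e(z)\|_2$); this does not affect the stated constant $K$.
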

\begin{proof}
Let \(v\) be the normal-form coefficient vector of \(F\). Its
residue-value vector is \(Hv\). A reduction tree starting in degree
at most \(2M\) has depth at most \(2M\), and increases coefficient
\(\ell_1\) norm by a factor at most \(C\) at each level. Hence
\(|H_{\alpha\beta}|\le C^{2M}\), and \(\|N\|_2\le DA_0\).
Proposition~\ref{push12:residue-congruence} yields
\[
 \|H^{-1}\|_2\le\sum_{k=0}^{M}(DA_0)^k
                   \le(1+DA_0)^M.
\]
Since \(\|e(z)\|_2\le\sqrt D B^M\le DB^M\), the claimed bound
follows from \(F(z)=e(z)^\mathsf T H^{-1}(Hv)\).
\end{proof}

The bound \(K\) has a short positive scalar SLP whenever the input
coefficient and real-root bounds do. Its possibly large value is
retained in the exact inequality below.

\begin{proposition}\label{push12:approx-gram-certificate}
Let
\[
 E(X)=\sum_\alpha e_\alpha(X)^2
     =\prod_{i=1}^{m}\sum_{j=0}^{d-1}X_i^{2j},\qquad
 F=h-\sum_j P_j^2-\eta E.
\]
If \(\eta>0\) and \(4K^2U(F)\le\eta^2\), then
\(h(z)\ge\eta/2>0\) at every real root.
These assertions have a \(\BPP^{\PP}\) verifier in the coherent
certificate format specified below. Polynomial-size certificates
for all NO critical instances would imply
\(\mathrm{ETR}\in\mathrm{coMA}^{\PP}\subseteq
 C_2\Pclass\cap\mathrm{PH}^{\PP}\).
\end{proposition}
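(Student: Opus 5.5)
The proof has three parts: the pointwise real-root inequality, the verifier, and the conditional containment. For the inequality, apply Lemma~\ref{push12:residue-error-bound} to the real-coefficient polynomial \(F=h-\sum_jP_j^2-\eta E\). At a real root \(z\) it gives
\[
 \Bigl|h(z)-\sum_jP_j(z)^2-\eta E(z)\Bigr|\le K\sqrt{U(F)}\le\eta/2 .
\]
We have \(\sum_jP_j(z)^2\ge0\). Also \(E(z)=\prod_i\sum_{j<d}z_i^{2j}\ge1\), because each factor contains the constant term \(1\) and its other terms are nonnegative squares. Hence \(h(z)\ge\eta-\eta/2=\eta/2\). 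No exactness of the identity in \(R\) is needed. This is the point of the construction: the approximate Gram residual is paid for by the explicit term \(\eta E\), and the error is transferred to real points through the residue-pairing bound.

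For the verifier, I would fix the certificate format as in Proposition~\ref{push12:root-coefficient-verifier-prop}. Merlin sends a coherent coefficient-defined family \(P_j\), a positive scalar \(\eta\) given by a rational SLP (possibly drawn from the shared pool of supported root constants), and nothing else. The real-root bound \(B\), and hence \(K\), come from the instance, for example \(B=2s+3\) in the bounded critical algebra. The inequality to be checked is \(\eta^2-4K^2U(F)\ge0\), together with \(\eta>0\). Here \(U(F)=\sum_\alpha\lambda(e_\alpha F)^2\) expands, exactly as in the proof of Proposition~\ref{push9:verifier}, into one selected coefficient of a short circuit. This uses the finite Laurent formula (***) and at most four disjoint-variable copies of the numerator generator, since \(F\) contains squares and is squared once more. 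The scalar \(E\) has the displayed product circuit of size \(O(m\log d)\).

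Testing the inequality \(\eta^2-4K^2U(F)\ge0\) is a single coefficient-sign question for a short circuit after scalar denominators are cleared. By Theorem~\ref{thm:posslp} and the \(\mathsf{CoeffPosSLP}\) interface of Theorem~\ref{thm:rational-counting}, it is decided in \(\BPP^{\PP}\). If \(\eta\) uses root constants, I would instead invoke the archimedean approximation argument of Proposition~\ref{push12:root-coefficient-verifier-prop}: bound the semantic gap of the exact expression first, then approximate the pool. The positivity check \(\eta>0\) is one further sign test. After amplification these tests compose, with bounded two-sided error.

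For the conditional containment, the argument runs as follows. On a NO instance Merlin supplies the hypothesized certificate, so we have completeness with bounded error. On a YES instance, a real critical point with \(h<0\) exists, which contradicts the inequality just proved; hence every purported certificate fails the test except with small probability. This puts infeasibility in \(\MA^{\PP}\). Then \(\MA\subseteq\PP\) and \(\MA\subseteq\Sigma_2^{\Pclass}\) relativize, and taking complements gives the stated bound.

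I expect the main obstacle to be making the comparison \(4K^2U(F)\le\eta^2\) a single coherent sign instance. \(K\) has exponentially many bits, and \(U(F)\) is a Boolean-indexed sum over exponentially many \(\alpha\). The task is to ensure that scalar clearing and squaring do not leave the constant-copy coefficient-extraction format. I would first try to keep \(K\) as a short positive scalar SLP factor outside the extracted coefficient, and multiply through only by squared positive denominators so that the direction of the inequality is preserved.
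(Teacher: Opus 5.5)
Your proposal is correct and follows essentially the same route as the paper. The pointwise bound comes from Lemma~\ref{push12:residue-error-bound} together with $E(z)\ge1$; the verifier expands $U(F)$ with at most four disjoint-variable copies of the coefficient descriptions, clears scalar denominators by a positive square to obtain one coefficient-sign comparison (using the shared real-pool approximation when root constants occur), and checks it together with $\eta>0$ in $\BPP^{\PP}$; guessing the certificate and relativizing $\MA\subseteq\PP$ and $\MA\subseteq\Sigma_2^{\Pclass}$ then gives the conditional $\mathrm{coMA}^{\PP}$ bound.
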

\begin{proof}
The lemma gives \(|F(z)|\le\eta/2\). Since \(E(z)\ge1\),
\[
 h(z)=\sum_jP_j(z)^2+\eta E(z)+F(z)\ge\eta/2.
\]
In particular, an approximate identity suffices even when the norm
inequality is an equality.

The index \(j\) may have polynomial bit length. A fixed polynomial-time
generator supplies each \(P_j\) as one coefficient of a polynomial
SLP, divided by one common scalar denominator whose nonvanishing
is checked. Its
size is polynomial, and its syntactic degree and coefficient
bit-height bounds are singly exponential. Ordinary polynomial SLPs
are included. Polynomially many supported real scalar root/inverse
constants may be shared by all circuits. They are independent of
\(X\) and \(j\). The scalar \(\eta\) belongs to this same pool.
Arbitrary arithmetic of polynomial depth on coefficient-extraction
leaves is not included.

Expanding \(U(F)\) uses at most four copies of a \(P_j\) coefficient
description per summand. All SOS and basis indices can be summed
with polynomially many index bits and disjoint coefficient variables.
Clearing the known scalar denominators by a positive square leaves
one coefficient-sign comparison. The shared real-pool verifier first
fixes semantic degree/height and absolute gradient bounds, then
approximates the pool to the corresponding precision. It checks
this comparison and \(\eta>0\) in \(\BPP^{\PP}\). Its error is
two-sided. Guessing a short certificate gives the stated conditional
\(\mathrm{coMA}^{\PP}\) consequence.
\end{proof}

Dense completeness does not supply the missing size bound. If
\(h\ge\delta>0\) at all real roots, choose a short \(\eta>0\) with
\(2\eta DB^{2M}\le\delta\). In the etale real algebra
\(R\otimes\mathbb R\simeq\mathbb R^r\times\mathbb C^q\), the
element \(h-\eta E\) has a square root: it is positive in the real
components, and every complex component has a square root.
Its coefficient vector can be approximated by a rational vector so
closely that even one dense rational polynomial \(P\) satisfies the
norm inequality. No short description of that vector follows.
For completeness, these dense coefficients have singly exponential
bit heights. Every complex coordinate has modulus at most \(C\),
so every basis value is bounded by \(U_b=C^M\). If \(T\) is the
real evaluation matrix, the nonzero integral discriminant gives
\(|\det T|\ge2^{-D/2}\). Cofactors imply
\(\|T^{-1}\|_2\le I_0:=D2^D(DU_b)^D\).
For a short bound \(A_{\max}\ge|h(z)-\eta E(z)|\) at every complex
root, the square-root coefficient vector is bounded entrywise by
\(V:=I_0D(A_{\max}+1)\). Put \(B_0=C^{2M}\), bounding all
multiplication structure constants. Rounding these coefficients
with error \(\epsilon\le1\) gives
\[
 \sqrt{U(h-P^2-\eta E)}
 \le D^4B_0^2(2V+1)\epsilon.
\]
Thus dyadic precision with singly exponential value suffices.
Its exponent has polynomial bit length; the common denominator
has a short scalar SLP, and the dense numerators have singly
exponential bit heights. This supplies no coherent short generator.
The positive slack plays the same role as the residual lower bound
in the convex rational-center certificate. In that setting a short
Newton computation and an explicit Bregman identity construct the
certificate; such a constructor is still missing here.

There is a useful distinction between numerical convergence and
matrix representation. For a real symmetric nonsingular matrix with
eigenvalue magnitudes in \([l,u]\), the scaled step
\[
 A\longmapsto\frac{A+luA^{-1}}{2\sqrt{lu}}
\]
preserves signs and puts the magnitudes in
\([1,(u+l)/(2\sqrt{lu})]\). Its condition ratio is at most
\(\sqrt{u/l}\). Thus polynomially many scaled steps suffice even
when the logarithm of the initial condition ratio is exponential.
Ordinary Newton steps subsequently converge quadratically.
The supported scalar root pool supplies the scales; it does not
supply coherent descriptions of the matrix inverses or products.

Indeed, if \(\delta\) is the Jacobian, the trace matrices are
\(H_1=HM_\delta\) and \(H_h=HM_{\delta h}\), with \(M_f\)
denoting quotient multiplication. The short inverse of \(H\) does
not provide a short inverse of \(M_\delta\) or its scalar denominator.
After the congruence, a weighted form is \(JM'\), while its inverse
is \(M'^{-1}J\). Newton steps therefore involve conjugation by
\(J\), which need not preserve the commutative multiplication
algebra. Ordinary matrix powers cannot be replaced by quotient powers.
Nor does real-point convergence imply residue-norm convergence:
in \(\mathbb Q[x]/(x^3+x)\), the error \(1-(1+Tx)^2\) vanishes
at the only real point for every \(T\), while its residue-square
norm grows as a nonzero degree-four polynomial in \(T\).
The outstanding step is a short coherent Gram approximation with
the global error and known scalar scale required above.

\subsection{Formal square roots and controlled continuation}
\label{push12:homotopy}

Let \(a=h-\eta E\), with \(\eta>0\) chosen sufficiently small
that \(a\) is a unit and is positive at all real points on a NO
instance. The known complex value bounds permit a short dyadic
choice of \(\eta\).

Formal series themselves have short coefficient descriptions.
For a short polynomial \(g(T,X)\) with \(g(0,X)=1\), put \(w=g-1\).
Modulo \(T^N\), the formal square root is
\[
 \sum_{k<N}\binom{1/2}{k}w^k,\qquad
 \binom{1/2}{k}=\frac{2(-1)^{k-1}}{4^k}
 \left(\binom{2k-2}{k-1}-\binom{2k-2}{k-2}\right)
 \quad(k\ge1).
\]
The binomial coefficients are coefficients of \((1+Y)^{2k-2}\).
Represent \(k\) by Boolean index bits, compile its powers by binary
selectors, pad the coefficient indices to one fixed exponent, and
use the common denominator \(4^N\). For a bit \(v\), a degree bound
\(B\) gives the explicit summation identity
\[
 P(0)+P(1)=[v^B]P(v)
       \left(\sum_{j=0}^{B-1}v^j+2v^B\right).
\]
Invalid indices are masked by the Boolean comparison circuit.
Finally,
\[
 \operatorname{trunc}_{<N}B(T)
 =[Z^{N-1}]B(TZ)\sum_{j=0}^{N-1}Z^j.
\]
Disjoint-variable flattening therefore gives one coefficient-SLP
of size polynomial in \(\log N\) and the input size. This is not
an unrestricted closure assertion for indexed circuit families.

Analytic evaluation is a separate issue. In
\[
 A=\mathbb Q[x]/(x^3+x),\qquad h=1+2x^2,
\]
the only real point has \(h=1\), and
\(h=(1+x+x^2)^2\) in \(A\). Yet \(h=-1\) at both nonreal points
\(\pm i\), so every defined real rational function of \(h\) has
the same real value at those points and cannot square to \(h\).
The straight homotopy \(1+T(h-1)\) has a branch point at \(T=1/2\).
Neither a short formal truncation nor a common complex continuation
path supplies the required conjugate imaginary endpoint values.
Since \(E=1\) in this example, even the approximate Gram certificate
is impossible when all its square coordinates are restricted to
\(\mathbb R[h]\). The displayed short exact square demonstrates
the scope of this obstruction.

\begin{proposition}\label{push12:real-homotopy}
There is a polynomial-size real polynomial homotopy from \(1\) to
\(a\) that is a unit at every complex point throughout \(0\le T\le1\)
and is positive at every real point throughout that interval.
\end{proposition}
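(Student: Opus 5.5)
The plan is to build the homotopy as a quadratic Bézier path
\[
 H_\lambda(T)=(1-T)^2+\lambda\,T(1-T)\,g+T^2a,\qquad 0\le T\le1,
\]
where \(g\) is a short real polynomial that is positive at every real critical point and nonreal at every nonreal one, and \(\lambda>0\) is a short integer chosen to avoid finitely many bad values. For \(g\) I will take \(g=J_w-m\), with \(m\) a short dyadic integer below the minimum of \(J_w\) over the real critical box. The coordinate bound for real critical points, established in the four-integral-squares subsection, or the compactness of the sublevel sets, gives such an \(m\) with a short repeated-squaring circuit. Then \(g>0\) at every real point. By Lemma~\ref{lem:generic} and the conjugation argument after \eqref{eq:projection-norms}, \(J_w(z)\) is real only when \(z\) is real, so \(g(z)\notin\R\) at every nonreal point. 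The endpoints are \(H_\lambda(0)=1\) and \(H_\lambda(1)=a\). Since \(g\), \(a=h-\eta E\), and the scalars all have polynomial-size circuits, \(H_\lambda\) is a polynomial-size real polynomial homotopy.

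The endpoints \(T=0\) and \(T=1\) are fine, because \(a\) is a unit and is positive at real points on a NO instance. For \(0<T<1\), put \(s=T/(1-T)\in(0,\infty)\). Dividing by \((1-T)^2\), the value \(H_\lambda(T)(z)\) vanishes exactly when
\[
 \phi_z(s):=1+\lambda\,g(z)\,s+a(z)\,s^2=0 .
\]
At a real point, all three terms are nonnegative and the first equals one, so \(H_\lambda(T)(z)>0\) for every \(\lambda\ge0\). This gives the positivity assertion.

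At a nonreal point \(z\), a zero forces
\[
 \lambda=-\frac{1+a(z)s^2}{g(z)\,s}.
\]
This quantity must be real, so \(s\) must satisfy
\[
 \operatorname{Im}\Bigl(\overline{g(z)}\,\bigl(1+a(z)s^2\bigr)\Bigr)=0 .
\]
This is a real polynomial equation in \(s\) of degree at most two. Its constant term is \(-\operatorname{Im}g(z)\neq0\), so the polynomial is not identically zero and has at most two positive roots \(s\). Each such root determines at most one bad \(\lambda\). Hence each nonreal point excludes at most two values of \(\lambda\), and the whole algebra excludes fewer than \(2D+1\) values, where \(D=5^m\). I will therefore take \(\lambda\) to be an integer in \(\{1,\ldots,2D+1\}\) avoiding this finite bad set. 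Such a \(\lambda\) exists and has polynomially many bits. For that choice \(H_\lambda(T)\) is nonzero at every complex point for all \(T\in[0,1]\). This gives the unit assertion, by the product formula for norms over complex points.

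The main obstacle is making sure the bad-parameter count really is uniformly finite. The danger is a nonreal point where the reality condition holds identically in \(s\); this is exactly what kills the straight homotopy in the example \(\Q[x]/(x^3+x)\). It is excluded precisely by \(\operatorname{Im}g(z)\neq0\), which is why \(g\) must be built from the separating objective rather than from \(h\). If later applications need a concrete admissible \(\lambda\) rather than bare existence, I would replace the grid by a gap-certified choice via Lemma~\ref{lem:algebraic-gap}, applied to the norm of \(H_\lambda(T)\) as a polynomial in \(\lambda\) and \(T\). That refinement is not required for the proposition as stated.
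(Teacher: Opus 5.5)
Your proof is correct, and it reaches the statement by a different mechanism from the paper's.

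The paper keeps the straight path \(1+T(a-1)\) and adds a small bump \(\epsilon T(1-T)\ell\), with \(\ell\) a random integer linear form satisfying \(\operatorname{Im}\ell(z)\ne0\) at every nonreal point. Height bounds supply a lower bound \(\nu\) for the nonzero values \(|\operatorname{Im}a(z)|\) and an upper bound \(M_\ell\) for \(|\ell|\). The condition \(\epsilon M_\ell\le\nu/2\) then keeps the imaginary part nonzero with fixed sign for \(0<T<1\), and real positivity follows from \((1-T)/2+Ta>0\).

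You use instead a quadratic B\'ezier path whose three weights are nonnegative, so real positivity holds for every \(\lambda\ge0\) without any estimate. You replace the sign control by counting: because \(\operatorname{Im}g(z)\ne0\), the reality condition \(-\operatorname{Im}g(z)+s^2\operatorname{Im}\bigl(a(z)\overline{g(z)}\bigr)=0\) is a nonzero polynomial in \(s\). In fact it is even in \(s\), so it has at most one positive root, and conjugate points forbid the same \(\lambda\); hence \(\lfloor D/2\rfloor+1\) candidates already suffice. Your separating element is deterministic, coming from Lemma~\ref{lem:generic}. Since \(J_w\ge0\) on \(\R^m\), you may simply take \(g=J_w+1\), or \(g=h\) on a NO instance, with no box bound. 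The existential choice thus moves from the paper's \(\ell\) to your \(\lambda\).

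The paper's route buys generality and explicit control of the sign of the imaginary part along the path. It needs no pre-existing separating observable, so it works in any reduced finite algebra; the paper applies the same formula in \(\Q[x]/(x^3+x)\) with \(\ell=x\). Your route needs no height estimates at all. As written it depends on the separating objective of the critical algebra, but taking \(g=C+\ell\), with \(\ell\) separating conjugate pairs and \(C\) larger than \(|\ell|\) at the real points, removes that dependence.
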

\begin{proof}
Choose \(\ell=\sum c_iX_i\) with integer \(0\le c_i\le D\)
such that \(\operatorname{Im}\ell(z)\ne0\) at each nonreal point.
Each bad conjugate pair excludes one proper hyperplane from the
grid, containing at most \((D+1)^{m-1}\) grid points. There are at
most \(D/2\) pairs, so a random choice succeeds with probability
greater than \(1/2\).
Let \(M_\ell\ge1\) bound all \(|\ell(z)|\), and let \(\nu\le1\)
be a positive lower bound for every nonzero
\(|\operatorname{Im}a(z)|\). The conjugate-pair compositum has
degree at most \(D^2\), so the existing height bounds give a
short dyadic \(\nu\). Choose a short \(\epsilon>0\) satisfying
\(\epsilon M_\ell\le\nu/2\), and set
\[
 g(T)=1+T(a-1)+\epsilon T(1-T)\ell.
\]
If \(\operatorname{Im}a(z)\ne0\), the imaginary part is
\(T[\operatorname{Im}a(z)+\epsilon(1-T)\operatorname{Im}\ell(z)]\)
and retains its sign for \(T>0\). If \(a(z)\) is real but \(z\)
is nonreal, the perturbation has nonzero imaginary part in the open
interval. At a real point,
\[
 g(T,z)\ge(1-T)/2+Ta(z)>0.
\]
Both endpoints are units. Continuation from the root \(1\) is thus
well-defined and conjugacy compatible. The minimum modulus also
has a singly exponential logarithmic bound: its square is a real
quartic in \(T\), with interior extrema at roots of its cubic
derivative and the usual algebraic degree/height bounds.
\end{proof}

\begin{proposition}\label{push12:continuation-compiler}
Suppose a short real polynomial homotopy is accompanied by
polynomially many short rational scalar knots
\(0=t_0<\cdots<t_s=1\), all unit values, satisfying
\[
 \left|\frac{g(t_{j+1},z)}{g(t_j,z)}-1\right|\le1/16
\]
at every complex point. Then a polynomial-size ordinary rational
polynomial SLP constructs an approximate Gram certificate.
\end{proposition}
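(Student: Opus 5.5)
The plan is to build the square root of \(a=g(1)\) as a telescoping product of local square roots. Put \(\rho_j=g(t_{j+1})/g(t_j)\) in the algebra \(A\). By hypothesis \(|\rho_j(z)-1|\le1/16\) at every complex point. The principal branch of \(\sqrt{\rho_j}\) is therefore analytic on the disk of radius \(1/16\) about \(1\). It is given by the binomial series
\[
 \sum_{k\ge0}\binom{1/2}{k}(\rho_j-1)^k ,
\]
whose terms decay like \(16^{-k}\). Truncating after \(N_j\) terms gives a polynomial \(Q_j\) in \(\rho_j\) with pointwise error \(O(16^{-N_j})\) at every point. Because \(g\) has real coefficients, conjugate points carry conjugate values of \(\rho_j\). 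The principal branch commutes with conjugation, so each \(Q_j\) is real at real points and conjugation-compatible elsewhere. Real positivity of \(g\) at every knot keeps \(\rho_j>0\) at real points, so at real points \(Q_j\) is the positive root. The product \(P=\prod_{j<s}Q_j\) then satisfies \(P^2\approx\prod_j\rho_j=g(1)/g(0)=a\) pointwise. The relative error is at most \(s\) times the local error, up to constants.

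To obtain ordinary polynomial SLPs, I would not invert \(g(t_j)\) in the quotient. That inverse is exactly the object the preceding discussion says has no short description. Instead I would carry \(\sqrt{g(t_j)}\) itself as the running approximant \(P_j\), together with an approximate inverse \(R_j\). The update is \(P_{j+1}=P_j\,\mathrm{trunc}\sqrt{g(t_{j+1})R_j^2}\), and \(R_j\) is refreshed by a fixed number of Newton steps \(R\mapsto R(2-P_j^2R)\), or directly \(R\mapsto R(3-gR^2)/2\) for the inverse square root. This keeps everything inside polynomially many multiplications of polynomial SLPs; normal-form reduction is never needed. Each knot step uses a truncation of degree \(O(\log(1/\epsilon))\). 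The one-step Newton corrections contract the errors by fixed ratios, since each stays inside the disk of radius \(1/16\). Errors therefore remain controlled through all \(s\) steps, provided a short precision exponent \(b\) is fixed at the outset so that all errors remain below \(2^{-b}\).

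Once pointwise control at all \(D\) complex points is in hand, set \(F=a-P^2\). I would convert the pointwise bound into the residue norm \(U(F)\) used by Proposition~\ref{push12:approx-gram-certificate}. Write \(F\)'s coefficient vector as \(T^{-1}\) applied to its vector of values, and use the discriminant-based bound \(\|T^{-1}\|_2\le I_0\) from the dense completeness paragraph together with the bound on \(H\) from Lemma~\ref{push12:residue-error-bound}. This gives \(\sqrt{U(F)}\le D^{O(1)}C^{O(M)}I_0\max_z|F(z)|\). Since \(\log I_0\) is singly exponential, choosing \(b\) with polynomial binary length (a singly exponential value) makes \(4K^2U(F)\le\eta^2\). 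Rational coefficients of the knots and of \(\eta\) become a common power-of-two denominator, and the certificate \(h-P^2-\eta E=F\) is produced by an ordinary rational SLP.

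The main obstacle I expect is the error accumulation in the chained inverse. The inverse \(R_j\) must be accurate at every complex point, including points where \(|g(t_j,z)|\) is exponentially small. Relative accuracy is then what matters, and that requires the quantitative unit bound from Proposition~\ref{push12:real-homotopy}'s minimum-modulus estimate. I would first try to run the whole iteration in relative terms: at each step the quantity \(gR^2\) must stay within \(1/16\) of \(1\) pointwise. The induction would rest on the ratio hypothesis alone, and it would never need absolute bounds on \(g\) except at the final conversion to \(U(F)\).
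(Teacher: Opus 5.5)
\textbf{Gap: the $P_j$ chain does not give a polynomial-size SLP.} The scheme in your second paragraph does not establish the size bound, and that bound is the content of the proposition. Because $P_{j+1}=P_jQ_j$ multiplies the previous approximant, its relative error is never corrected. With $\varepsilon_j=P_j^2/g(t_j)-1$ one has $1+\varepsilon_{j+1}=(1+\varepsilon_j)\bigl(Q_j^2/(g(t_{j+1})R_j^2)\bigr)\bigl(g(t_j)R_j^2\bigr)$, so the errors add over the $s$ knots. Your Newton refresh acts only on $R_j$.

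To finish with error $2^{-b}$, every truncated binomial series must already have error about $2^{-b}/s$. Here $b$ has polynomial binary length but singly exponential value, as you note, so the truncation degree is $\Theta(b)$. A dense truncation of exponential degree has no evident polynomial-size \emph{ordinary} SLP: Horner uses $\Theta(b)$ gates. The paper obtains short descriptions of such truncations only as coefficient-SLPs in Section~\ref{push12:homotopy}, which is not the format this proposition promises. Constant-degree truncations avoid this, but then the accumulated error is of order $(1+c)^s-1$, which is not small.

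Separately, the refresh $R\mapsto R(2-P_j^2R)$ converges to $1/g(t_j)$, not to $g(t_j)^{-1/2}$. With that refresh $g(t_{j+1})R_j^2$ is not near $1$, so only your alternative update $R\mapsto R(3-gR^2)/2$ is consistent.

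\textbf{What closes the gap.} The inverse-square-root iteration by itself suffices; you can drop the $P_j$ chain, the binomial series, and the branch discussion.
\begin{itemize}
\item Start with $r=1$, since $g(t_0)=1$. At each knot apply $r\leftarrow r(3-g(t_j)r^2)/2$. For $e=1-g(t_j)r^2$ this gives exactly $e_{\mathrm{new}}=e^2(3+e)/4$.
\item Passing to the next knot, with $\rho_j=g(t_{j+1})/g(t_j)$ used only in the analysis, one has $1-g(t_{j+1})r^2=(1-\rho_j)+\rho_je$. So $|e|\le1/64$ becomes at most $1/16+17/1024=81/1024$, and one Newton step restores $|e|\le1/64$.
\item At $t_s=1$, a further $O(\log L)$ steps give $|e|\le2^{-L}$ at every complex point.
\item Output $P=ar$. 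Then $a-P^2=a(1-ar^2)$, using $s+O(\log L)$ division-free updates in total.
\end{itemize}
Because $e$ is scale-free, the obstacle you anticipated does not arise. No minimum-modulus bound on $g(t_j,z)$ is needed, only an upper bound on $|a(z)|$ in the final conversion to $U(a-P^2)$. Your third paragraph handles that conversion as the paper does.
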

\begin{proof}
Start \(r=1\). At each new knot iterate
\[
 r\longleftarrow r(3-g(t_j)r^2)/2.
\]
For \(e=1-g(t_j)r^2\), the exact recurrence is
\(e_{\rm new}=e^2(3+e)/4\). Error at most \(1/64\) at the old
knot becomes at most \(81/1024<1/8\) at the new knot. Quadratic
convergence then achieves error \(2^{-L}\) at all complex points
in \(O(\log L)\) steps, where \(L\) has polynomial bit length.
At the endpoint set \(P=ar\). The error is
\(a-P^2=a(1-ar^2)\). The interpolation and residue bounds in
Section~\ref{push12:approx-gram} convert a sufficiently small
pointwise error into \(\sqrt{U(a-P^2)}\le\eta/(2K)\).
Every update is ordinary polynomial arithmetic. There are no
quotient inverses or hidden normal-form operations.
\end{proof}

Short zero-free path syntax alone does not supply such a schedule.
For \(g(T)=1+2Tx^2+\epsilon T(1-T)x\) in the preceding cubic
algebra, its value at \(i\) is \(1-2T+i\epsilon T(1-T)\).
Any increasing chain with relative changes bounded by \(c<1\)
has \(\Omega(\log(1/\epsilon))\) steps. At its first crossing of
\(T=1/2\), the preceding value \(v\) satisfies
\(\operatorname{Re}v\le c|v|\), hence
\(|v|\le\epsilon/(4\sqrt{1-c^2})\). Each earlier step decreases
modulus by at most \(1-c\). Taking \(\epsilon=2^{-2^n}\) gives
an exponential step bound despite an \(O(n)\)-size path. The
endpoint has a constant-size square, so this is only a limitation
of the chosen continuation schedule.

One explicit sufficient class bypasses continuation. Suppose
\(\operatorname{Re}a(z)\ge\delta>0\), \(|a(z)|\le M\) at all
complex points, with short bounds and \(\kappa=M/\delta\le2^{\mathrm{poly}}\).
Take \(C=M^2/\delta\) exactly. Then
\(|1-a(z)/C|^2\le1-\kappa^{-2}\). The coherent binomial polynomial
\[
 P_N=\sqrt C\sum_{k<N}\binom{1/2}{k}(a/C-1)^k
\]
has pointwise root error at most
\(\sqrt C\rho^N/(1-\rho)\), \(\rho=\sqrt{1-\kappa^{-2}}\).
A polynomial-bit \(N\) reaches the required residue accuracy,
using only the shared scalar root \(\sqrt C\). This is a
constructive sufficient condition, not an unrestricted ETR bound.
If \(M=\delta\), all point values equal \(M\), and \(N=1\)
already gives an exact square root.

More general rational reparametrizations remain open. For
\(T=P(s)/Q(s)\), clearing a quadratic homotopy gives
\(Q(s)^2g(P(s)/Q(s))\), whose endpoint is \(aQ(1)^2\).
The denominator scale must be retained in the certificate.
A global short map must also handle exponentially many conjugate
branch points while preserving real coefficients. The step-count
example does not rule out such compact reparametrizations, and no
general construction is established here.

\subsection{A generic recursive norm lift is also obstructed}
\label{push12:dynamic-norm-obstruction}

The preceding fixed-formula argument does not allow the new square
coordinates to depend on an old ternary point. A Brauer-group argument
rules out that stronger generic construction for the \(h=13\) family.
This is not a lower bound for step-specific arithmetic circuits.

We use the following standard fact, with its short proof included.
For a smooth projective quadric surface \(X\) over a characteristic-zero
field \(K\), a nonsquare discriminant implies that
\begin{equation}\label{push12:quadric-brauer-injection}
 \operatorname{Br}(K)\longrightarrow\operatorname{Br}(K(X))
 \quad\text{is injective}.
\end{equation}
Over a separable closure, the Picard group is generated by the two
rulings. The nonsquare discriminant exchanges these rulings, so the
invariant Picard group is generated by the hyperplane class \((1,1)\),
which already descends to \(X\). The Hochschild--Serre exact sequence
\[
 \operatorname{Pic}(X)\longrightarrow
 \operatorname{Pic}(\overline X)^{\operatorname{Gal}}
 \longrightarrow\operatorname{Br}(K)\longrightarrow\operatorname{Br}(X)
\]
therefore makes the last map injective. Smoothness also gives
\(\operatorname{Br}(X)\hookrightarrow\operatorname{Br}(K(X))\).
These two exact-sequence statements are recorded, for example, in
Colliot-Th\'el\`ene's Arizona Winter School 2015 lectures \cite{CT15},
Sections~4.2 and~4.4, printed pages~19--20.

\begin{proposition}\label{push12:no-generic-ternary-map}
For \(r,s\in F^\times\), put
\(Q_r=\{\langle1,1,1,-r\rangle=0\}\).
If \(-r,-s\) are distinct nontrivial square classes and \(Q_s\) is
anisotropic over \(F\), then \(Q_s\) remains anisotropic over \(F(Q_r)\).
In particular there is no rational map \(Q_r\dashrightarrow Q_s\).
\end{proposition}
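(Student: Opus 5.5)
The plan is to reduce anisotropy of \(Q_s\) over \(F(Q_r)\) to a Brauer class statement and then apply the injection \eqref{push12:quadric-brauer-injection} to the quadric \(Q_r\). Over any field \(K\) of characteristic zero, the quaternary form \(\langle1,1,1,-s\rangle\) has discriminant \(-s\) up to squares. Consider the quadratic extension \(L=K(\sqrt{-s})\). Over \(L\) the form has square discriminant, so it is similar to the reduced norm form of the quaternion algebra \(\mathcal H_L=(-1,-1)_L\), up to the scalar \(s\). A standard fact (Lam's classification of quaternary forms by discriminant and Clifford invariant) then says that \(Q_s\) is isotropic over \(K\) if and only if \(\mathcal H=(-1,-1)\) splits over \(L\).

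For \(-s\) a nontrivial square class, this means \((-1,-1)_{K(\sqrt{-s})}=0\) in \(\operatorname{Br}(K(\sqrt{-s}))\). The hypothesis that \(Q_s\) is anisotropic over \(F\) gives \((-1,-1)\ne0\) in \(\operatorname{Br}(F(\sqrt{-s}))\), with \(F(\sqrt{-s})\) a genuine quadratic field. So the target becomes: \((-1,-1)\) stays nonzero in \(\operatorname{Br}(F(\sqrt{-s})(Q_r))\).

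The next step is to base change to \(K=F(\sqrt{-s})\). The quadric \(Q_r\) stays smooth over \(K\), and its discriminant \(-r\) stays a nonsquare in \(K\). This is exactly where the distinctness of the square classes \(-r\) and \(-s\) is used: \(-r\) becomes a square in \(F(\sqrt{-s})\) only if \(-r\) equals \(-s\) or \(1\) modulo squares. Now \eqref{push12:quadric-brauer-injection} applies to \(Q_{r,K}\), so \(\operatorname{Br}(K)\to\operatorname{Br}(K(Q_r))\) is injective. Hence \((-1,-1)\) remains nonzero over \(K(Q_r)\).

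It remains to identify \(K(Q_r)\) with \(F(Q_r)(\sqrt{-s})\). This holds because \(Q_r\) is geometrically integral (smooth of dimension two), so \(F(Q_r)\otimes_F K\) is a field. Moreover \(-s\) is still a nonsquare in \(F(Q_r)\), since \(F\) is algebraically closed in the function field of the geometrically integral variety. Running the criterion of the first paragraph over \(F(Q_r)\) then gives that \(Q_s\) is anisotropic there.

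The final clause follows quickly. A rational map \(Q_r\dashrightarrow Q_s\) would give an \(F(Q_r)\)-point of \(Q_s\), i.e. a nontrivial zero over \(F(Q_r)\), which has just been excluded.

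The main obstacle is the first paragraph's criterion: that isotropy of a nonsquare-discriminant quaternary form is equivalent to splitting of its Clifford invariant after adjoining the square root of the discriminant. I would cite it from \cite{Lam05}, after checking the normalization so that the relevant class is exactly \((-1,-1)\) (up to a factor involving \(s\) that vanishes over \(L\)). The other point needing care is that the base change really preserves the nonsquare discriminant.
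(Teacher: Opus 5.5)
Your proposal is correct and follows the paper's proof. Both arguments show that $(-1,-1)$ is nonzero over $K=F(\sqrt{-s})$, use the distinctness of the square classes to keep the discriminant of $Q_r$ nonsquare over $K$, apply the Brauer injection \eqref{push12:quadric-brauer-injection} to $Q_{r,K}$, and use regularity of $F(Q_r)/F$ to identify $F(Q_r)(\sqrt{-s})$ with $K(Q_r)$. The only difference is that the paper proves the two directions of your quaternary isotropy criterion directly, via the quadratic embedding criterion and a pure quaternion of norm $s$, rather than citing the discriminant/Clifford-invariant classification from Lam.
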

\begin{proof}
Let \(H=(-1,-1)_F\) and \(K=F(\sqrt{-s})\).
The quaternion class \(H_K\) is nonzero: otherwise the quadratic
splitting/embedding criterion would embed \(K\) into \(H\), giving
a pure quaternion of norm \(s\) and an \(F\)-point of \(Q_s\).
If \(H\) were already split, the ternary norm would represent every
scalar, with the same contradiction.

Suppose \(Q_s\) becomes isotropic over \(E=F(Q_r)\). Its ternary form
then represents \(s\), so \(H\) splits over \(E(\sqrt{-s})=K(Q_r)\).
The quadric function field is regular over \(F\), justifying this
base change. The square class of \(-r\) remains nontrivial in \(K\):
the kernel of \(F^\times/F^{\times2}\to K^\times/K^{\times2}\) is
\(\{1,-s\}\). Thus the base-changed surface \(Q_r/K\) has nonsquare
discriminant. Equation~\eqref{push12:quadric-brauer-injection}
contradicts the vanishing of its nonzero quaternion class \(H_K\).
\end{proof}

Apply the proposition over \(F=\mathbb Q(z)\) with
\[
 f(z)=13z^4-23z^2+13,\quad
 r(z)=\frac{13(1+z^2)^2}{f(z)},\quad s(z)=r(z^2).
\]
Both quadrics are anisotropic by specialization at \(z=1\), as above.
Their discriminants are nontrivial, and their ratio has square class
\(f(z^2)/f(z)\). The polynomial \(f\) is squarefree and
\[
 f(z^2)\equiv-\frac{108}{13}z^4\pmod{f(z)}.
\]
Since \(f(0)\ne0\), this proves \(\gcd(f(z),f(z^2))=1\);
the ratio has odd valuation at a factor of \(f\) and is nonsquare.
There are therefore no rational functions of
\((z,u_1,u_2,u_3)\), on the generic quadric
\(\sum u_i^2=r(z)\), whose squares sum to \(r(z^2)\).

This rules out a rational update valid on the whole generic old
ternary quadric, even with all three old coordinates available.
Each individual arithmetic fold fiber can still have rational points.
An update confined to a special orbit with extra invariants,
step-dependent arithmetic choices, or growing circuits is not excluded.

\paragraph{A curve cannot carry a fixed rational squaring update.}
The obstruction can be strengthened for a state space of dimension one.
Write
\[
 \rho(z)=\frac{13(1+z^2)^2}{13z^4-23z^2+13}.
\]
\begin{proposition}\label{prop:curve-state-obstruction}
For any even $d\geq2$, there is no smooth geometrically integral
projective curve $C/\mathbb Q$ with a nonconstant $z\in\mathbb Q(C)$,
a rational self-map $\phi$ over $\mathbb Q$, and
$u_1,u_2,u_3\in\mathbb Q(C)$ such that
\[
 z\circ\phi=z^d,\qquad \sum_i u_i^2=\rho(z).
\]
\end{proposition}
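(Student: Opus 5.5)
The plan is to turn the single squaring update into infinitely many ternary representations over one fixed function field, and then contradict the finiteness of the ramification of one quaternion class. Put $L=\mathbb Q(C)$, $F=\mathbb Q(z)\subseteq L$ and $n=[L:F]<\infty$. Composing with $\phi$ is a field endomorphism of $L$ over $\mathbb Q$. Write $\phi^{k}$ for the $k$-fold composite and $u_i^{(k)}=u_i\circ\phi^{k}$. Since $z\circ\phi^{k}=z^{d^k}$, we get
\[
 \sum_{i=1}^3\bigl(u_i^{(k)}\bigr)^2=\rho_k:=\rho\bigl(z^{d^k}\bigr)\qquad(k\ge0)
\]
in $L$. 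Let $H=(-1,-1)$. Each identity says that $\rho_k$ is the reduced norm of a pure quaternion in $H_L$. So either $H_L$ is split, or $L(\sqrt{-\rho_k})$ splits $H_L$ for every $k$.

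The first step excludes $H_L=0$. I would use the odd-valuation argument of Proposition~\ref{push12:no-generic-ternary-map} in reverse. Over $L$, the class $H_L$ is a class on a geometrically integral curve, so its residues at the codimension-one points of $C$ are defined. If it were split, nothing would be excluded, so I instead propagate the $2$-adic obstruction that was used at $z=1$: there $\rho(1)=52/3$, and $39\equiv7\pmod8$. I would try to find a closed point of $C$, or a point over a finite extension in which $2$ is controlled, lying above $z=1$. At such a point the $u_i$ are regular, and the residue field has odd degree over $\mathbb Q_2$ at some dyadic place. If the $u_i$ have poles there, I would first apply the highest-even-order-pole argument from Section~\ref{push12:exact-family} to rule that out at real-type points. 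The point $z=1$ is fixed by $z\mapsto z^d$, which is why it is the natural specialization: the fiber over $z=1$ is $\phi$-stable.

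The second, main step handles $H_L\neq0$. By the splitting criterion used in the proof of Proposition~\ref{push12:no-generic-ternary-map}, every $L(\sqrt{-\rho_k})$ then splits $H_L$. Now consider a place $v$ of $F$ at a zero of $f(z^{d^k})$ that is not a zero of any $f(z^{d^j})$ with $j<k$; the $\gcd$ computation preceding Proposition~\ref{push12:no-generic-ternary-map} shows these zero sets are disjoint. At $v$, the element $\rho_k$ has valuation $-1$. Take a place $w$ of $L$ above $v$ with odd ramification index; such a $w$ exists whenever $v$ is unramified in $L$, which holds for all but finitely many $v$. At $w$, a ternary representation of an element of odd valuation forces the residue algebra $(-1,-1)_{\kappa(w)}$ to split, where $\kappa(w)\supseteq\kappa(v)$. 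The fields $\kappa(v)=\mathbb Q(\zeta)$ are generated by roots $\zeta$ with $f(\zeta^{d^k})=0$, and these lie on the unit circle. I would show that for infinitely many $k$ some such $\kappa(v)$, even after an extension of degree at most $n$, still has a dyadic place at which $(-1,-1)$ is nonsplit. This requires a local degree at $2$ that is odd and stays bounded, together with the analogue of $39\equiv7\pmod8$. Where the class is unramified at $w$, this gives infinitely many places at which $H_L$ must ramify, contradicting finiteness of the ramification of a Brauer class on a curve.

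The obstacle I expect is exactly this residue-field control: showing that the fields $\kappa(v)$ attached to $z^{d^k}=y$, with $f(y)=0$ and $d$ even, keep a dyadic completion of bounded odd degree that does not split $(-1,-1)$, uniformly in $k$ and after the degree-$n$ extension to $L$. I would try first to compute the $2$-adic behaviour of the roots $y$ of $f$, then extract $2^j$-th roots, using that $d$ is even so that $z\mapsto z^d$ factors through squaring. If that fails, the fallback is to pass to the Galois closure and count ramification through the degree bound $n$ instead.
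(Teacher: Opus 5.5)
There is a genuine gap, and your second step does not merely face a technical obstacle: it cannot work. With $f(z)=13z^4-23z^2+13$, the only places of $\mathbb Q(z)$ where $\rho_k=13(1+z^{2d^k})^2/f(z^{d^k})$ has odd valuation are the zeros of the squarefree polynomial $f(z^{d^k})$. Every odd-valuation place $w$ of $L$ lies over one of them. If $f(\zeta^{d^k})=0$, then $\zeta^{2d^k}=(23\pm7\sqrt{-3})/26$, so $\kappa(w)\supseteq\kappa(v)=\mathbb Q(\zeta)\supseteq\mathbb Q(\sqrt{-3})$. In that field $-1=\bigl(\tfrac{1+\sqrt{-3}}{2}\bigr)^2+\bigl(\tfrac{1-\sqrt{-3}}{2}\bigr)^2$. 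Thus $(-1,-1)_{\kappa(w)}$ is split at every such place, for every $k$ including $k=0$. The residue condition you extract from an odd-valuation ternary representation is therefore automatically satisfied, and the dyadic places you hope to find do not exist. Moreover, $H_L$ has unit entries, so it is unramified at every place of $L/\mathbb Q$; there is no ramification to count.

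The actual obstruction sits at the regular point $z=1$, where $\rho(1)=52/3$. Your first step points there, but it gives no way to produce an odd-degree point of $C$ above $z=1$, and the identity $\sum_iu_i^2=\rho(z)$ alone cannot supply one. For example, on the conic $X^2+Y^2+W^2=0$, $-1$ is a sum of two squares, so every function is a sum of three squares, yet there are no odd-degree points. Conversely, if you had such a point, Springer's theorem and $39\equiv7\pmod8$ would finish the proof immediately. So neither the case split on $H_L$ nor the iterates $\rho_k$ is needed.

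The missing idea is to use $\phi$ geometrically, which is what the paper does.

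From $z\circ\phi=z^d$ you get $\deg\phi=d\ge2$. Riemann--Hurwitz then forces genus at most one, and the geometric zero/pole set $S$ of $z$ satisfies $\phi^{-1}(S)=S$.

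In genus one, $\phi$ is unramified, so $|\phi^{-1}(S)|=d|S|>|S|$, which is impossible.

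In genus zero, $(d-1)|S|=\sum_{P\in S}(e_\phi(P)-1)\le2(d-1)$ forces $|S|=2$, with one point in each Galois-stable fiber. Both points are therefore rational, so $C\cong\mathbb P^1_{\mathbb Q}$ with $z=cw^m$, $\phi(w)=aw^d$, and $a^m=c^{d-1}$.

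Write $m=2^vm_0$ with $m_0$ odd. Since $d-1$ is odd, $c=b^{2^v}$ for some $b\in\mathbb Q^\times$, so $z=(bw^{m_0})^{2^v}$. An odd-degree irreducible factor of $w^{m_0}-1/b$ then gives an odd-degree point with $z=1$.

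Its residue field has a real embedding, so the $u_i$ have no pole there. They would then represent $52/3$ by three squares over an odd-degree extension of $\mathbb Q$. This contradicts Springer's theorem, because $\langle1,1,1,-52/3\rangle$ is anisotropic over $\mathbb Q_2$.
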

\begin{proof}
The self-map has degree $d$.  Riemann--Hurwitz forces the genus to be
at most one.  Let $S$ be the geometric zero and pole set of $z$.
It is finite, has at least two points, and satisfies
$\phi^{-1}(S)=S$.  In genus one the self-map is unramified, giving
$|\phi^{-1}(S)|=d|S|$, a contradiction.  In genus zero,
\[
 (d-1)|S|=\sum_{P\in S}(e_\phi(P)-1)\leq2(d-1).
\]
Thus $S$ has two points, one in each fiber.  Both are rational because
the fibers are Galois invariant.  Hence $C\cong\mathbb P^1_{\mathbb Q}$.
With a rational coordinate $w$ vanishing at the zero and having its
pole at the pole of $z$, we obtain
\[
 z=cw^m,\qquad \phi(w)=aw^d,\qquad a^m=c^{d-1}.
\]
For $d=2$, this gives $z=(aw)^m$, so the rational point $w=1/a$
has $z=1$.

More generally write $m=2^v m_0$ with $m_0$ odd.  Since $d-1$ is
odd, prime valuations in the last identity show that $c=b^{2^v}$
for some $b\in\mathbb Q^*$; when $v>0$ the sign is positive as well.
The odd-degree polynomial $w^{m_0}-1/b$ has an irreducible factor of
odd degree.  Its root gives an odd-degree point of $C$ with $z=1$.
The functions $u_i$ have no poles there, since their leading squared
pole coefficients cannot cancel in a number field with a real embedding.
They therefore represent $\rho(1)=52/3$ by three squares over an
odd-degree extension of $\mathbb Q$.

For completeness, anisotropic quadratic forms remain anisotropic over
odd-degree extensions.  In a least odd-degree counterexample
$F[X]/(f)$, choose polynomial coordinates $a_i$ of degree less than
$n=\deg f$ and divide out their common polynomial factor.
If $m=\max\deg a_i$, anisotropy over $F$ makes $q(a_i)$ have degree
exactly $2m$.  Since $f$ divides it, the quotient has smaller odd
degree.  An odd-degree irreducible factor of the quotient gives a
smaller odd-degree isotropy extension; the vector remains nonzero
there because its coordinates have gcd one.  This is a contradiction.

Finally, $\langle1,1,1,-52/3\rangle$ is anisotropic over $\mathbb Q$:
a three-square representation, scaled by $(3/2)^2$, would represent
$39$, excluded over $\mathbb Q_2$ by the integral square classes
modulo eight.  The preceding odd-degree fact supplies the contradiction.
\end{proof}

This also applies to return maps with $d=2^k$.  In particular a fixed
rational map over $\mathbb Q$, defined along an infinite rational orbit
with distinct $z$-values, cannot have a one-dimensional orbit closure
while maintaining the displayed square identity.  An eventual cycle
of curve components would return to a geometrically integral component
and give the forbidden map on its smooth projective normalization;
infinitely many rational points force that geometric integrality.
The conclusion does not exclude invariant surfaces, arithmetic branching,
nonrational constant fields, or a different update formula at each step.

\subsection{A proposed modular closure would collapse the counting hierarchy}
\label{push12:modular-closure}

Consider the following unproved hypothesis.  For every uniform
polynomial-size integer polynomial SLP $C_x$ on polynomially many raw
$\mathsf{GapP}$ inputs $g_i(x)$, there is a uniform, polynomial-output-bit
$\mathsf{GapP}$ family $R(x,m)$ satisfying
\[
 R(x,m)\equiv C_x(g_1(x),\ldots,g_\ell(x))\pmod m
\]
for every positive polynomial-bit modulus $m$.  The degree of $C_x$ and
the bit length of its integer output may be exponential.  This hypothesis
requires an ordinary raw counting representative; retrieving the $g_i$
with counting queries and then evaluating the residue is insufficient.

\begin{proposition}[Conditional consequence of modular composition]
Under this hypothesis, the magnitude-window lemma implies
\[
 \mathsf{PP}^{\mathsf{PP}}=\mathsf{MP}=\mathsf{P}^{\mathsf{PP}},
 \qquad \mathsf{CH}=\mathsf{PP}^{\mathsf{PP}}.
\]
These equalities are conditional conclusions, not assumptions about the
known relationships among these classes.
\end{proposition}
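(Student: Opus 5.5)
The plan is to show that the hypothesis turns every language in $\mathsf{PP}^{\mathsf{PP}}$ into a middle-bit question answered by Proposition~\ref{push9:window}. That proposition puts the sign of $Z$ in $\mathsf{MP}$ whenever $Z$ has raw modular $\GapP$ representatives at every short modulus and a supplied logarithmic magnitude window. Once $\mathsf{PP}^{\mathsf{PP}}\subseteq\mathsf{MP}$ is established, the known inclusion $\mathsf{MP}\subseteq\Pclass^{\PP}$ gives
\[
 \mathsf{PP}^{\mathsf{PP}}\subseteq\mathsf{MP}\subseteq\Pclass^{\PP}\subseteq\mathsf{PP}^{\mathsf{PP}},
\]
so all three classes coincide. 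Repeating the argument one level at a time, with $\Clevel{h+1}=\PP^{\Clevel h}=\PP^{\Pclass^{\PP}}=\PP^{\PP}$ inductively, yields $\CH=\mathsf{PP}^{\mathsf{PP}}$.

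The core step is to write a $\mathsf{PP}^{\mathsf{PP}}$ acceptance gap as an integer SLP applied to raw $\GapP$ inputs. Fix a $\PP^{A}$ machine, with $A\in\PP$ given by a $\GapP$ function $g_A$ so that $y\in A$ iff $g_A(y)>0$. Unroll each computation path into a uniform Boolean circuit in the oracle answers. Each oracle answer $\mathbf 1[g_A(y)>0]$ is then replaced by the rational sign filter of Lemma~\ref{lem:sign-filter}, or more exactly by the numerator--denominator folds used in Theorem~\ref{push10:adaptive}, applied to $2g_A(y)-1$. That quantity is odd and bounded by $2^{\poly}$, so polynomially many folds with short scales give error $2^{-\poly}$. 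The Boolean gates become $1-a$ and $ab$. Clearing the positive denominators gives, for each path, an integer SLP $C$ whose inputs are the polynomially many oracle counts.

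The obstacle is that the paths number $2^{\poly}$ and query different strings, so there is no single SLP on polynomially many inputs. To get around this, I would take the hypothesis family for a generic path, with the path string as part of $x$, and then sum the resulting raw $\GapP$ representatives over all paths inside one counting sum; $\GapP$ is closed under exponentially indexed sums. This sum represents $Z=\sum_{\text{paths}}\pm C_{\text{path}}$ modulo every short $m$. A common denominator is needed so the paths can be added, which means every path must clear denominators by the same positive scale. I would force this by fixing the fold schedule uniformly and padding every path to the same query count, so that the cleared denominators can be rescaled to one shared expression. This is the step I expect to need the most care: the denominators are data-dependent products, and they must either be made identical or the sign comparison must be reformulated per path with bounded relative error.

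Finally, the acceptance threshold is shifted so that the gap is at least $2^{-\poly}$ times the common denominator scale $S$. The fold analysis then gives $2^{e}\le|Z|<2^{e+w}$ with $e\approx\log_2 S$ known in binary and $w$ polynomial; the fixed fold scales determine $\log_2 S$ up to an additive polynomial. Proposition~\ref{push9:window} then decides $\sgn Z$ in $\mathsf{MP}$.
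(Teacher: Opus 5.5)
Your outer structure matches the paper's. You apply the hypothesis with the path data inside $x$, sum the representatives in one $\GapP$ sum, and shift the threshold so the target is an odd integer. You then invoke Proposition~\ref{push9:window} and close with $\mathsf{MP}\subseteq\Pclass^{\PP}\subseteq\PP^{\PP}$ and induction.

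\textbf{Main gap: the common denominator.} The step you flag is a real gap, and your proposed repair does not work. The folds of Lemma~\ref{lem:sign-filter} and Theorem~\ref{push10:adaptive} update $(N,E)\mapsto(2aNE,\,E^2+a^2N^2)$ starting from $(z,1)$. After one step the denominator is already $1+a^2z^2$, with $z=2g_A(y)-1$. It depends on the hidden count, hence on the path and the query, however you schedule the scales or pad the queries.

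Clearing each path's positive denominator separately is harmless for a single sign test, as in Theorem~\ref{push10:adaptive}. Here it fails, for three reasons:
\begin{itemize}
\item The sum $\sum_{\text{paths}}C_{\text{path}}$ of cleared numerators is then the acceptance count weighted by unknown positive factors, not the acceptance gap.
\item A true common denominator would be an exponentially indexed product of data-dependent values, which is not of the form your hypothesis covers.
\item Without a known denominator you cannot supply $\log_2 S$ for the magnitude window.
\end{itemize}

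The missing idea is that no rational filter is needed, because the oracle gaps are integers with $1\le|z|\le M=2^u$. The paper uses the division-free iteration $\beta_0=(M+z)/(2M)$, $\beta_{j+1}=3\beta_j^2-2\beta_j^3$. Its exact recurrences $A_{j+1}=3A_j^2D_j-2A_j^3$, $D_{j+1}=D_j^3$ give the fixed denominator $D_j=2^{(u+1)3^j}$, independent of $z$. The error is $2^{-v}$ after $O(u+\log v)$ steps. All path weights then share the denominator $2^{E}$, and the window $2^{E-1}\le|N|<2^{E+p+2}$ is automatic.

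\textbf{Second gap: adaptive queries.} The inner queries of a $\PP^{\PP}$ path are adaptive, so the $i$-th query string depends on earlier answers. Then $g_A$ of that string is not a raw $\GapP$ function of $(x,r)$. Unrolling the path into a Boolean circuit in the answers does not fix this. The hypothesis admits only raw $\GapP$ values of the outer input as SLP leaves, not $g_A$ evaluated at strings computed inside the circuit.

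The paper guesses the entire answer transcript $t$, so every query is polynomial-time computable from $(x,r,t)$. Transcript $t$ gets weight $W_{r,t}=\prod_i w_i$, with $w_i=\beta_j(g_i)$ if $t_i=1$ and $w_i=1-\beta_j(g_i)$ if $t_i=0$. The consistent transcript has weight at least $1-q\rho$, and every other transcript has weight at most $\rho$. So $\sum_t a(x,r,t)W_{r,t}$ approximates the true acceptance bit within $2^{q+1}\rho$. The sum over $t$ is absorbed into the same $\GapP$ sum as the sum over $r$.
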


\begin{proof}
For an odd integer $g$ with $1\le |g|\le M=2^u$, set
\[
 \beta_0=\frac{M+g}{2M},\qquad
 \beta_{j+1}=3\beta_j^2-2\beta_j^3.
\]
All iterates lie in $[0,1]$.  In the coordinate $z=2\beta-1$, the map is
$z\mapsto(3z-z^3)/2$.  Below $|z|=1/2$ its magnitude grows by at least
$4/3$; afterwards the error $e=1-|z|$ satisfies
$e'\le 3e^2/2$.  Therefore
\[
 j=3u+3+\left\lceil\log_2(3(v+2))\right\rceil
\]
iterations give error at most $2^{-v}$ from $\mathbf1_{g>0}$.
Writing $\beta_j=A_j/D_j$, the exact recurrences
\[
 A_0=M+g,\quad D_0=2M,\qquad
 A_{j+1}=3A_j^2D_j-2A_j^3,\quad D_{j+1}=D_j^3
\]
give $D_j=2^d$ with $d=(u+1)3^j$.  Thus the numerator is an integer
polynomial represented by a short SLP, and the denominator is fixed,
although its bit length can be exponential.

Let a $\mathsf{PP}^{\mathsf{PP}}$ machine use $p\ge1$ random bits and at most
$q\ge1$ oracle queries, padded with prescribed dummy answers.  Given a
random tape $r$ and a guessed answer transcript $t$, all queries and the
final acceptance bit $a(x,r,t)$ are computable in polynomial time.  If
$h$ is an ordinary $\mathsf{GapP}$ gap for the oracle, use the nonzero
odd gaps
\[
 g_i(x,r,t)=2h(\operatorname{query}_i(x,r,t))-1.
\]
They have one common bound $2^u$.  Apply the preceding filter with
$v=p+q+10$ and write its error as $\rho=2^{-v}$.  For each transcript put
\[
 W_{r,t}=\prod_{i=1}^q
 \begin{cases}
  \beta_j(g_i),&t_i=1,\\
  1-\beta_j(g_i),&t_i=0.
 \end{cases}
\]
For each $r$ there is exactly one consistent transcript, whose weight
is at least $1-q\rho$.  Every other transcript has weight at most $\rho$.
Consequently $F_r=\sum_t a(x,r,t)W_{r,t}$ approximates the actual
acceptance bit on $r$ within $(q+2^q)\rho\le2^{q+1}\rho$.

All weights have denominator $D=2^E$, where $E=qd$ has polynomial binary
length.  Write $aW_{r,t}=A_{r,t}/D$, and define
\[
 N=2\sum_{r,t} A_{r,t}-(2^p+1)D.
\]
If $A$ is the actual number of accepting outer paths, then $N/D$ differs
by at most $2^{-8}$ from the odd integer $2A-2^p-1$.  Hence its sign is
the machine's decision, including rejection on a tie, and
\[
 2^{E-1}\le |N|<2^{E+p+2}.
\]
The quantity $A$ is used only for this correctness argument; it is not
asserted to be an ordinary $\#\mathsf P$ value.

For each $(r,t)$ the numerator $A_{r,t}$ is one polynomial-size integer
polynomial SLP on ordinary raw $\mathsf{GapP}$ inputs.  One application
of the hypothesis supplies its representative modulo $m$.  Summation
over the polynomial-length indices $(r,t)$ preserves $\mathsf{GapP}$,
and $(2^p+1)2^E\bmod m$ is computable in polynomial time.  Thus $N$ has
uniform polynomial-output-bit raw representatives modulo all required
short moduli.  The magnitude-window lemma puts its sign in
$\mathsf{MP}$.  This proves
\[
 \mathsf{PP}^{\mathsf{PP}}\subseteq\mathsf{MP}
 \subseteq\mathsf P^{\mathsf{PP}}
 \subseteq\mathsf{PP}^{\mathsf{PP}}.
\]
Finally, a deterministic $\mathsf P^{\mathsf{PP}}$ subroutine can be
inlined inside an outer $\mathsf{PP}$ machine, so
$\mathsf{PP}^{\mathsf{PP}^{\mathsf{PP}}}
=\mathsf{PP}^{\mathsf P^{\mathsf{PP}}}
=\mathsf{PP}^{\mathsf{PP}}$.  Induction gives the asserted collapse.
\end{proof}

The available closure modulo $2^a$ for polynomially bounded $a$ does not
prove the hypothesis used here.  The relevant selected-bit position is
near $E$, whose value can be exponential.  Direct extraction would
require a modulus $2^{E+O(p)}$ with exponentially many bits.  Knowing
the magnitude window does not move this bit down: for $E>a$, the
opposite-sign integers $2^E+1$ and $-2^E+1$ have identical residues modulo
$2^a$ and both lie in a constant-width magnitude window.

Even a cosine-specific version of the proposed closure would be too
broad if it allowed arbitrary raw $\mathsf{GapP}$ arguments.  Small-angle
Taylor approximation with dyadic coefficients, followed by polynomially
many doublings $c\mapsto2c^2-1$, gives a short integer-polynomial numerator
for $\cos(2\pi T/Q)$ divided by a fixed known power of two.  For a nonzero
integer $|g|\le M$ and $Q=4M$, the square-wave expansion
\[
 \operatorname{sgn}\sin\theta
 =\frac4\pi\sum_{j\ge0}\frac{\sin((2j+1)\theta)}{2j+1},
 \qquad \theta=2\pi g/Q,
\]
has truncation error $O(1/(K|\sin\theta|))=O(Q/K)$ after $K$ terms.
Taking $K=2^{\operatorname{poly}}$, using integer quarter-period shifts
to replace sine by cosine, and rounding the short weights to a common
dyadic denominator gives a sign filter by one exponentially indexed raw
sum.  Polynomially many ordinary products then form the transcript
weights above.  Thus closure for those cosine numerators on arbitrary
raw counting inputs already implies the same collapse.  Any successful
phase-specific closure must use a narrower input family or a different
mechanism.

The same conditional consequence holds for a restricted Gaussian
product principle.  Suppose that whenever $\epsilon(x,w)\in\{-1,1\}$
is polynomial-time computable and $w$ has polynomial length, the real
coordinate of
\[
 \prod_w(a+i\epsilon(x,w)b),\qquad
 a^2+b^2=D^2,
\]
has uniform polynomial-output-bit raw $\mathsf{GapP}$ representatives
modulo every short positive modulus.  Here $a,b,D$ are positive short
integers, $a,D$ are odd, and $b$ is even.  This remains an unproved
closure hypothesis, even though only two conjugate factors occur.

To see its strength, write an arbitrary $\mathsf{GapP}$ value as
$T=f-h$, pad both counting machines to $2^t$ potential witnesses, and
let $A(w),B(w)$ be their acceptance predicates.  The $L=2^{t+1}$ signs
$2A(w)-1$ and $1-2B(w)$ have sum $2T$.  For $Q\ge4$ a short power of
two, choose a dyadic denominator $q=2^s$ and a positive odd $p$ with
\[
 \left|p/q-\tan\frac{\pi}{2Q}\right|\le2/q,
 \qquad q\ge16Q,
\]
and put $a=q^2-p^2$, $b=2pq$, $D=q^2+p^2$.
Then $\zeta=(a+ib)/D$ is exactly on the unit circle and its argument
differs from $\pi/Q$ by at most $4/q$.  If $|T|\le2^u$, taking
$s\ge u+h+10$ gives
\[
 \left|D^{-L}\operatorname{Re}\prod_w(a+i\epsilon(x,w)b)
       -\cos(2\pi T/Q)\right|<2^{-h}.
\]
All parameters are computable to the required polynomial precision,
and the common denominator $D^L$ is independent of the value of $T$.
Uniform padding makes it common across all harmonic and transcript
indices.  The preceding Fourier and transcript construction therefore
applies.  Its final denominator has the known form
$(2BD^L)^{3q_0}$, where $B$ is the dyadic Fourier denominator and $q_0$
is the oracle-query bound.  Approximating $\log_2D$ to
$t+\lceil\log_2q_0\rceil+O(1)$ fractional bits supplies a constant-width
window for its logarithm.  Thus it supplies the same polynomial-width
magnitude window for the final threshold numerator, and again implies
$\mathsf{CH}=\mathsf{PP}^{\mathsf{PP}}
=\mathsf{MP}=\mathsf P^{\mathsf{PP}}$.

This product principle should not be inferred from the established
two-adic closure.  Since $q$ is divisible by $4$, the negatives of the
two factors lie in $1+4\mathbb Z[i]$; the even number of factors removes
their common sign.  Finite principal-unit logarithm and exponential
polynomials work modulo $2^k$ for polynomial $k$, but their low
two-adic residue does not reveal the archimedean quotient by $D^L$.

In fact, ordinary integer products suffice for this conditional collapse.
The magnitude-window lemma can use only moduli $4r^2+1$.  Put
$k_j=2(N+1+j)$ and $m_j=k_j^2+1$.  The exact partial-fraction identity
\[
 c_N:=\sum_{j=0}^N(-1)^j\binom Nj\frac{k_j}{m_j}
 =\frac{C_N}{2}\operatorname{Re}
   \prod_{k=N+1}^{2N+1}\left(1+\frac{i}{2k}\right)^{-1}
\]
satisfies $C_N/4\le c_N\le C_N/2$.  Indeed, each factor has modulus
at most one and differs from one by at most $1/(2k)$, so the product
differs from one by at most $1/2$.  If $R_j$ represents $Z$ modulo $m_j$
and $W_j$ represents $\binom Nj$ there, then
\[
 \sum_{j=0}^N\frac{(-1)^j k_jW_jR_j}{m_j}
       \in\mathbb Z+c_N Z.
\]
For a supplied window $2^e\le|Z|<2^{e+w}$, choose $N=H+16$ and
$J=2N-e-w-5$ as before.  Then
\[
 \frac{2^{-w-7}}{2N+1}\le|2^Jc_N Z|<\frac1{64}.
\]
The same fixed-point raw sum, with two additional guard bits, has error
at most $2^{-w-12}/(2N+1)$ and therefore gives one middle-bit test.
Thus representatives at these split moduli suffice for the
magnitude-window conclusion.

At $m=k^2+1$ with even $k$, evaluation at $i=\pm k$ is valid even
when $m$ is composite.  Consequently
\[
 \operatorname{Re}\prod_w(a+i\epsilon(w)b)
 \equiv\frac12\left[
     \prod_w(a+\epsilon(w)bk)+
     \prod_w(a-\epsilon(w)bk)\right]\pmod m.
\]
The inverse of $2$ is polynomial-time computable.  Hence a uniform
raw $\mathsf{GapP}$ product rule for exponentially many short
polynomial-time computable integer factors would supply precisely the
representatives used above.  Only two factor values, and only moduli
$4r^2+1$, are required.  The factors can even be made strictly positive
by replacing each by its residue modulo $m$ plus $m$, with $m$ included
in the factor-family input.  This remains a conditional consequence:
the ordinary product rule is not established, and its known two-adic
special case does not cover these odd split moduli.

\subsection{A limit on deterministic phase samples}
\label{push12:phase-samples}

There is a simple obstacle to replacing the random magnitude test by
one fixed short list of universal multipliers. Fix integers \(T>B+2\).
Let \(K_1,\ldots,K_q\) be arbitrary integers and let nonnegative
weights \(w_i\) sum to one. Suppose
\[
 \sum_iw_i\cos(2\pi K_i\delta)\le\tfrac12
 \quad\text{for every }\delta\in[4/2^T,2^{-B}).
\]
Then
\begin{equation}\label{push12:sampler-bound}
 2^{T-B-2}\le6^q,
 \qquad q\ge\frac{T-B-2}{\log_2 6}.
\end{equation}
To see this, consider the phase vectors
\((4jK_i/2^T\bmod1)_{i=1}^q\), for
\(0\le j<2^{T-B-2}\). Partition each coordinate circle into six
half-open intervals of length \(1/6\). If two vectors occupy the
same box, their difference supplies an allowed nonzero \(\delta\)
whose every cosine is strictly greater than \(1/2\), a contradiction.

In our magnitude test \(B=32\), while \(T\) can have exponential
value. Thus a universal deterministic sampler requires exponentially
many support points. This does not rule out a sampler depending on the
input SLP: the collision index in the proof need not have a short
arithmetic description. Nor does it concern a random list that succeeds
with high probability for each fixed displacement. It excludes only
the proposed replacement by a fixed polynomial number of multipliers
valid on the entire phase interval.

\section{Remaining obstacles and unsuccessful routes}
The real and integer arguments share a way of recovering signs, but they
leave different arithmetic objects to be evaluated. For an SLP, a short
modular value is already computable in polynomial time. For the real
construction, even after the ordered-rank refinement, we have two
nested products over formal branches. A product algorithm retrieves a
count and then exponentiates by its value. We do not know how to absorb
this nonlinear operation into the same counting function as the later
sum.

There is a restricted way to remove the coefficient layer. If the factors
\(c_i\in\operatorname{GR}(p^a,e)\) of \(\prod_{i<D}(u+c_i)\)
occupy only polynomially many known residue classes modulo \(p\),
every specified coefficient is in \(\FP^{\SharpP}\).
Write \(c_i=\lambda_{v(i)}+p\beta_i\), retrieve the class
multiplicities \(N_v\), put \(n_v(S)=|\{i\in S:v(i)=v\}|\), and expand
\[
 \prod_i(u+c_i)=\sum_{j<a}p^j
  \sum_{|S|=j}\left(\prod_{i\in S}\beta_i\right)
  \prod_v(u+\lambda_v)^{N_v-n_v(S)}.
\]
Each surviving subset is polynomially describable, and its remaining
factor is a short SLP. Apply Lemma~\ref{lem:ring-coefficients} inside
the raw subset count, after retrieving the multiplicities.
If this residue-class condition held at every node, it would give
another way to remove the high-derivative layer. General Fourier
nodes do not preserve the condition. The ordered-rank proof in
Section~\ref{push6:depth} reaches \(\Clevel4\) without it.
More generally, the same expansion works when the reference product
\(\prod_i(u+\lambda_i)\) has a short SLP and the \(\lambda_i\)
are individually computable: divide out the polynomially many deleted
factors by a truncated reciprocal at infinity. This includes some
families with exponentially many residue classes, but no such
description is established for the general homotopy factors.

This identifies a concrete route to a stronger result. Suppose the
integer family \(Z_h\) in Section~\ref{push6:depth}, uniformly over
its random seeds, had \(\GapP^{\PH}\) representatives modulo every
short integer. Lemma~\ref{push4:sign} would decide each sign in
\(\BPP^{\PP}\); random isolation and repetition would then give
\(\ETR\subseteq\BPP^{\PP}\subseteq\Clevel2\).
The first-event reconstruction would also give a containment in
\(\PH^{\PP}\) after simulating the outer random choices by
relativized polynomial-hierarchy quantifiers. A reduction of those integers
to coefficients of polynomial-size SLPs would also suffice. The present
norm construction gives neither representation. This is a separate
problem from reducing a constant in a counting-level ledger.

For integer SLPs, the remaining distinction is between sampling a phase
and averaging it exactly. Each sampled phase involves nonlinear
postprocessing of a retrieved counting value. Putting that
postprocessing inside another exact count would raise the oracle level.
The present argument therefore gives \(\BPP^{\PP}\), and does not
derandomize it to \(\Pclass^{\PP}\) or replace it by a randomized
reduction to a single PP predicate.

There is a related source-side question. The original short-conjunction argument
only proves that some root can be isolated by some list. It does not
count roots, and different lists may select the same root or overlapping
subsets. The small decision bound therefore does not automatically give
the same bound for counting real solutions or connected components.
The ordered-rank refinement labels every real root, and
Theorem~\ref{app:gcr} now supplies the additional signed-Morse
experiment that gives total point counting in \(\FP^{\Clevel4}\).
This does not by itself count connected components at the same level.

The accompanying supplementary material includes the earlier general
resultant construction, unsuccessful attempts to remove these remaining
operations, and finite exact checks. The checks cover the rational
filter, nonreal common zeros, small-characteristic homotopies, Galois-ring
coefficient identities, and signed noncanonical beta representatives.
Additional checks cover the small-bias generator, its random-access
algorithm, the signed phase sums, and the scale search under arbitrary
labels in its transition interval.
They test the individual identities and boundary cases; the uniform
complexity conclusions depend on the arguments in the paper.


\begin{thebibliography}{99}
\small\raggedright

\bibitem{HAB02}
William Hesse, Eric Allender, and David A. Mix Barrington.
Uniform Constant-Depth Threshold Circuits for Division and Iterated
Multiplication.
\emph{Journal of Computer and System Sciences}, 65(4):695--716, 2002.
\href{https://doi.org/10.1016/S0022-0000(02)00025-9}{doi:10.1016/S0022-0000(02)00025-9}.

\bibitem{BC06}
Peter B\"urgisser and Felipe Cucker.
Counting Complexity Classes for Numeric Computations II:
Algebraic and Semialgebraic Sets.
\emph{Journal of Complexity}, 22(2):147--191, 2006.
\href{https://doi.org/10.1016/j.jco.2005.11.001}{doi:10.1016/j.jco.2005.11.001}.
Preprint: \href{https://arxiv.org/abs/cs/0312007}{arXiv:cs/0312007}.

\bibitem{BLS25}
Manuel Bodirsky, Georg Loho, and Mateusz Skomra.
Reducing Stochastic Games to Semidefinite Programming.
In \emph{52nd International Colloquium on Automata, Languages, and
Programming}, LIPIcs 334, pages 145:1--145:15, 2025.
\href{https://doi.org/10.4230/LIPIcs.ICALP.2025.145}{doi:10.4230/LIPIcs.ICALP.2025.145}.

\bibitem{CSW19}
Cassio P. de Campos, Georgios Stamoulis, and Dennis Weyland.
\newblock A structured view on weighted counting with relations to counting,
quantum computation and applications.
\newblock arXiv:1701.06386v2, 2019.

\bibitem{AGHP92}
Noga Alon, Oded Goldreich, Johan H\aa stad, and Ren\'e Peralta.
Simple constructions of almost \(k\)-wise independent random variables.
\emph{Random Structures \& Algorithms}, 3(3):289--304, 1992.

\bibitem{NN93}
Joseph Naor and Moni Naor.
Small-bias probability spaces: Efficient constructions and applications.
\emph{SIAM Journal on Computing}, 22(4):838--856, 1993.

\bibitem{AAM18}
Mikkel Abrahamsen, Anna Adamaszek, and Tillmann Miltzow.
The Art Gallery Problem is \(\exists\R\)-complete.
\emph{Journal of the ACM}, 69(1), 2022.
\href{https://doi.org/10.1145/3486220}{doi:10.1145/3486220}.
Preprint: \href{https://arxiv.org/abs/1704.06969}{arXiv:1704.06969}.

\bibitem{Canny88}
John Canny.
Some algebraic and geometric computations in PSPACE.
In \emph{Proceedings of the Twentieth Annual ACM Symposium on Theory
of Computing}, pages 460--469, 1988.
\href{https://doi.org/10.1145/62212.62257}{doi:10.1145/62212.62257}.

\bibitem{JP10}
Gabriela Jeronimo and Daniel Perrucci.
On the minimum of a positive polynomial over the standard simplex.
\emph{Journal of Symbolic Computation}, 45(4):434--442, 2010.
\href{https://doi.org/10.1016/j.jsc.2010.01.001}{doi:10.1016/j.jsc.2010.01.001}.

\bibitem{AGS26}
Robert Andrews, Abhibhav Garg, and \'Eric Schost.
Hilbert's Nullstellensatz is in the Counting Hierarchy.
Preprint, 2026.
\href{https://arxiv.org/abs/2602.17904}{arXiv:2602.17904}.

\bibitem{ABKPM09}
Eric Allender, Peter B\"urgisser, Johan Kjeldgaard-Pedersen, and
Peter Bro Miltersen.
On the Complexity of Numerical Analysis.
\emph{SIAM Journal on Computing}, 38(5):1987--2006, 2009.
\href{https://doi.org/10.1137/070697926}{doi:10.1137/070697926}.

\bibitem{KP07}
Pascal Koiran and Sylvain Perifel.
The Complexity of Two Problems on Arithmetic Circuits.
\emph{Theoretical Computer Science}, 389(1--2):172--181, 2007.
\href{https://doi.org/10.1016/j.tcs.2007.08.008}{doi:10.1016/j.tcs.2007.08.008}.

\bibitem{Toda91}
Seinosuke Toda.
PP is as Hard as the Polynomial-Time Hierarchy.
\emph{SIAM Journal on Computing}, 20(5):865--877, 1991.
\href{https://doi.org/10.1137/0220053}{doi:10.1137/0220053}.

\bibitem{TW92}
Seinosuke Toda and Osamu Watanabe.
Polynomial-time 1-Turing reductions from \(\SharpP\mathsf H\) to
\(\SharpP\).
\emph{Theoretical Computer Science}, 100(1):205--221, 1992.
\href{https://doi.org/10.1016/0304-3975(92)90369-Q}{doi:10.1016/0304-3975(92)90369-Q}.

\bibitem{GKRST95}
Frederic Green, Johannes K\"obler, Kenneth W.~Regan, Thomas Schwentick,
and Jacobo Tor\'an.
The Power of the Middle Bit of a \(\SharpP\) Function.
\emph{Journal of Computer and System Sciences}, 50(3):456--467, 1995.
\href{https://doi.org/10.1006/jcss.1995.1036}{doi:10.1006/jcss.1995.1036}.

\bibitem{BJ24}
Peter B\"urgisser and Gorav Jindal.
On the Hardness of PosSLP.
In \emph{Proceedings of the 2024 Annual ACM-SIAM Symposium on Discrete
Algorithms}, pages 1872--1886, 2024.
\href{https://doi.org/10.1137/1.9781611977912.75}{doi:10.1137/1.9781611977912.75}.

\bibitem{BDJ24}
Markus Bl\"aser, Julian D\"orfler, and Gorav Jindal.
PosSLP and Sum of Squares.
In \emph{44th IARCS Annual Conference on Foundations of Software
Technology and Theoretical Computer Science}, LIPIcs 323,
pages 13:1--13:18, 2024.
\href{https://doi.org/10.4230/LIPIcs.FSTTCS.2024.13}{doi:10.4230/LIPIcs.FSTTCS.2024.13}.

\bibitem{Lam05}
Tsit-Yuen Lam.
\emph{Introduction to Quadratic Forms over Fields}.
Graduate Studies in Mathematics, volume 67.
American Mathematical Society, 2005.

\bibitem{SJ12}
Thatchaphol Saranurak and Gorav Jindal.
Subtraction makes computing integers faster.
\href{https://arxiv.org/abs/1212.2549}{arXiv:1212.2549}, 2012.

\bibitem{AGHT19}
Yaroslav Alekseev, Dima Grigoriev, Edward A.~Hirsch, and Iddo Tzameret.
Semi-Algebraic Proofs, IPS Lower Bounds and the \(\tau\)-Conjecture: Can a Natural Number be Negative?
\href{https://arxiv.org/abs/1911.06738}{arXiv:1911.06738}, 2019.

\bibitem{BKM25}
Lorenzo Baldi, Teresa Krick, and Bernard Mourrain.
An Effective Positivstellensatz over the Rational Numbers for Finite
Semialgebraic Sets.
\href{https://arxiv.org/abs/2410.04845}{arXiv:2410.04845v2}, 2025.


\bibitem{SM15}
Michael Sagraloff and Kurt Mehlhorn.
Computing Real Roots of Real Polynomials.
\href{https://arxiv.org/abs/1308.4088}{arXiv:1308.4088v2}.

\bibitem{Voight21}
John Voight.
\emph{Quaternion Algebras}.
Graduate Texts in Mathematics, volume 288. Springer, 2021.
\href{https://doi.org/10.1007/978-3-030-56694-4}{doi:10.1007/978-3-030-56694-4}.

\bibitem{ESY13}
Kousha Etessami, Alistair Stewart, and Mihalis Yannakakis.
Polynomial Time Algorithms for Multi-Type Branching Processes and
Stochastic Context-Free Grammars.
\href{https://arxiv.org/abs/1201.2374}{arXiv:1201.2374v2}, 2013.

\bibitem{ESY12}
Kousha Etessami, Alistair Stewart, and Mihalis Yannakakis.
Polynomial Time Algorithms for Branching Markov Decision Processes and
Probabilistic Min(Max) Polynomial Bellman Equations.
\href{https://arxiv.org/abs/1202.4798}{arXiv:1202.4798v2}, 2012.


\bibitem{Selmer56}
Ernst S. Selmer.
On the Irreducibility of Certain Trinomials.
\emph{Mathematica Scandinavica}, 4:287--302, 1956.

\bibitem{Osada87}
Hiroyuki Osada.
The Galois Groups of the Polynomials \(X^n+aX^s+b\), II.
\emph{Tohoku Mathematical Journal}, 39(3):437--445, 1987.
\href{https://doi.org/10.2748/tmj/1178228289}{doi:10.2748/tmj/1178228289}.


\bibitem{BPR06}
Saugata Basu, Richard Pollack, and Marie-Fran\c{c}oise Roy.
\emph{Algorithms in Real Algebraic Geometry}.
Second edition. Springer, 2006.

\bibitem{HMMP25}
Thekla Hamm, Lucas Meijer, Tillmann Miltzow, and Subhasree Patro.
Oracle Separations for the Real Polynomial Hierarchy.
\href{https://arxiv.org/abs/2502.09279}{arXiv:2502.09279v3}, 2025.


\bibitem{Sagraloff14}
Michael Sagraloff.
A Near-Optimal Algorithm for Computing Real Roots of Sparse Polynomials.
\href{https://arxiv.org/abs/1401.6011}{arXiv:1401.6011}, 2014.

\bibitem{JindalSagraloff17}
Gorav Jindal and Michael Sagraloff.
Efficiently Computing Real Roots of Sparse Polynomials.
\href{https://arxiv.org/abs/1704.06979}{arXiv:1704.06979}, 2017.

\bibitem{CT15}
Jean-Louis Colliot-Th\'el\`ene.
Arizona Winter School lectures, 2015.
\href{https://www.imo.universite-paris-saclay.fr/~colliot/AWS30MAI2015.pdf}
{Author's lecture notes}, Sections 4.2 and 4.4.

\end{thebibliography}
\end{document}